\documentclass[runningheads, a4paper]{llncs}



\usepackage{amsmath}
\usepackage{amssymb}
\usepackage{stmaryrd}
\usepackage[]{algorithm2e}
\usepackage[final,colorlinks=true,linktocpage=true,hyperfootnotes]{hyperref}
\usepackage{csquotes}
\usepackage{tabularx}
\usepackage{enumerate}
\usepackage{pgf}
\usepackage{tikz}
\usepackage[version=3,arrows=pgf]{mhchem}
\usetikzlibrary{decorations.pathmorphing}
\usetikzlibrary{arrows,automata}

\usepackage{csquotes}

\allowdisplaybreaks

\newcommand{\qedhere}{\qed}
\newcommand{\TECHNICALREPORT}{\cite{technicalReport}}
\newcommand{\SEEAPPENDIXPLEASE}{
        Most proofs as well as detailed constructions
        are provided in the appendix for the readers convenience.
}

%
%
%

\usepackage{todonotes}
\usepackage{verbatim} 

\usepackage{changepage} 

\usepackage{paralist}

\newcommand{\N}{\ensuremath{\mathbb{N}}} 
\newcommand{\POSN}{\ensuremath{\mathbb{N}_{> 0}}} 
\newcommand{\DEFEQ}{\ensuremath{\triangleq}} 
\newcommand{\subst}[2]{\ensuremath{\left[#1 / #2\right]}} 
\newcommand{\remap}[2]{\ensuremath{\left[#1 \mapsto #2\right]}} 
\newcommand{\restr}[2]{\ensuremath{#1\upharpoonright_{#2}}} 

\newcommand{\PTO}{\ensuremath{\rightharpoonup}} 
\newcommand{\DOM}{\ensuremath{\textnormal{dom}}} 

\newcommand{\SUBTREE}[2]{#1|_{#2}} 

\newcommand{\EMPTYSEQ}{\ensuremath{\varepsilon}} 
\newcommand{\T}[1]{\ensuremath{\mathbf{#1}}} 
\newcommand{\SIZE}[1]{\ensuremath{\|#1\|}} 
\newcommand{\PROJ}[2]{\ensuremath{#1\!\left[#2\right]}} 

\newcommand{\TARULE}[3]{\ensuremath{#1 \xrightarrow{#2} #3}} 

\newcommand{\NIL}{\ensuremath{\textnormal{\textbf{null}}}} 
\newcommand{\EMP}{\ensuremath{\textnormal{emp}}} 
\newcommand{\PT}[2]{\ensuremath{#1 \mapsto (#2)}} 
\newcommand{\PTS}[2]{\ensuremath{#1 \mapsto #2}} 
\newcommand{\SEP}{\ensuremath{*}} 

\newcommand{\VAL}{\ensuremath{\textit{Val}}} 
\newcommand{\VAR}{\ensuremath{\textit{Var}}} 
\newcommand{\LOC}{\ensuremath{\textit{Loc}}} 
\newcommand{\PRED}{\ensuremath{\textnormal{Pred}}} 

\newcommand{\ARITY}{\ensuremath{\textnormal{ar}}} 
\newcommand{\NOFV}[1]{\ensuremath{\SIZE{\FV{0}{#1}}}} 
\newcommand{\NOCALLS}[1]{\ensuremath{\SIZE{\CALLS{#1}}}} 

\newcommand{\sh}{\ensuremath{\varphi}} 
\newcommand{\sha}{\ensuremath{\psi}} 
\newcommand{\shb}{\ensuremath{\vartheta}} 
\newcommand{\rsh}{\ensuremath{\tau}} 
\newcommand{\rsha}{\ensuremath{\sigma}} 
\newcommand{\rshb}{\ensuremath{\zeta}} 
\newcommand{\SL}[2]{\ensuremath{\textnormal{SH}^{#1}_{#2}}} 
\newcommand{\RSL}[2]{\ensuremath{\textnormal{RSH}^{#1}_{#2}}} 
\newcommand{\SPATIAL}[1]{\ensuremath{\Sigma^{#1}}} 
\newcommand{\PS}{\ensuremath{P}} 
\newcommand{\FV}[2]{\ensuremath{\T{x}_{#1}^{#2}}} 
\newcommand{\BV}[1]{\ensuremath{\T{z}^{#1}}} 
\newcommand{\PURE}[1]{\ensuremath{\Pi^{#1}}} 
\newcommand{\CALLS}[1]{\ensuremath{\Gamma^{#1}}} 
\newcommand{\CALLN}[2]{\ensuremath{\PS_{#1}^{#2}\FV{#1}{#2}}} 
\newcommand{\SYMBOLICHEAP}[1]{\ensuremath{ \exists \BV{#1} \,.\, \SPATIAL{#1} \SEP \CALLS{#1} \,:\, \PURE{#1}}}

\newcommand{\heap}{\ensuremath{h}} 
\newcommand{\stack}{\ensuremath{s}} 
\newcommand{\HEAPUNION}{\ensuremath{\uplus}} 
\newcommand{\STATES}{\ensuremath{\textit{States}}} 
\newcommand{\CALLSEM}[2]{\ensuremath{\mathbb{U}_{#2}(#1)}} 
\newcommand{\MODELS}[1]{\ensuremath{\textit{Models}({#1})}} 

\newcommand{\UTREES}[2]{\ensuremath{\mathbb{T}_{#1}(#2)}} 
\newcommand{\UNFOLD}[1]{\ensuremath{\llbracket #1 \rrbracket}} 






\newcommand{\SRDARROW}{\Leftarrow}
\newcommand{\SRDRULE}[2]{\ensuremath{#1 \SRDARROW #2}} 

\newcommand{\SRD}{\ensuremath{\Phi}} 
\newcommand{\SRDALT}{\ensuremath{\Psi}} 
\newcommand{\SETSRD}[1]{\ensuremath{\textnormal{SID}_{#1}}} 

\newcommand{\SAT}[1]{\ensuremath{\models_{#1}}} 
\newcommand{\ENTAIL}[1]{\ensuremath{\models_{#1}}} 
\newcommand{\HEAPMODELS}[2]{\ensuremath{H_{#1}^{#2}}}

\newcommand{\SRDCLASS}{\ensuremath{\mathcal{C}}}
\newcommand{\SRDCLASSFV}[1]{\ensuremath{\textrm{FV}^{\leq #1}}}
\newcommand{\CENTAIL}[1]{\ensuremath{\langle #1 \rangle}}
\newcommand{\SHCLASS}{\SL{}{\ensuremath{\mathcal{C}}}}
\newcommand{\SHCLASSFV}[1]{\SL{}{\SRDCLASSFV{#1}}}
\newcommand{\RSHCLASS}{\RSL{}{\ensuremath{\mathcal{C}}}}
\newcommand{\RSHCLASSFV}[1]{\RSL{}{\SRDCLASSFV{#1}}}
\newcommand{\SHCENTAIL}[1]{\SL{}{\ensuremath{\langle #1 \rangle}}}

\newcommand{\HA}[1]{\ensuremath{\mathfrak{#1}}} 
\newcommand{\MOVE}[4]{\ensuremath{#4 \xrightarrow{#3}_{\HA{#1}} #2}} 
\newcommand{\OMEGA}[3]{\ensuremath{\EMPTYSEQ \xrightarrow{#3}_{\HA{#1}} #2}} 

\newcommand{\ALLOC}[1]{\ensuremath{\textnormal{\textit{alloc}}(#1)}} 
\newcommand{\MPT}[1]{\ensuremath{\mapsto_{#1}}} 
\newcommand{\MEQ}[1]{\ensuremath{=_{#1}}} 
\newcommand{\MSIM}[1]{\ensuremath{\sim_{#1}}} 
\newcommand{\MNEQ}[1]{\ensuremath{\neq_{#1}}} 
\newcommand{\REACH}[3]{\ensuremath{#1 \rightsquigarrow_{#3} #2}} 
\newcommand{\CHECK}{\ensuremath{\textnormal{\textit{check}}}} 

\newcommand{\SQUEEZE}[2]{\ensuremath{\left[#1 \downarrow #2\right]}} 

\newcommand{\TRACK}{\ensuremath{\textnormal{\texttt{TRACK}}}} 
\newcommand{\SATPROP}{\ensuremath{\textnormal{\texttt{SAT}}}} 
\newcommand{\ESTPROP}{\ensuremath{\textnormal{\texttt{EST}}}} 
\newcommand{\RPROP}{\ensuremath{\textnormal{\texttt{REACH}}}} 
\newcommand{\GARBAGEPROP}{\ensuremath{\textnormal{\texttt{GFREE}}}} 
\newcommand{\CYCLEPROP}{\ensuremath{\textnormal{\texttt{ACYCLIC}}}} 

\newcommand{\USET}[3]{\ensuremath{H_{#1,#2}^{#3}}}

\DeclareMathOperator{\diff}{diff}
\DeclareMathOperator{\qeq}{eq}
\DeclareMathOperator{\qfst}{fst}
\DeclareMathOperator{\qsnd}{snd}
\DeclareMathOperator{\qrev}{rev}

\DeclareMathOperator{\fid}{id}

\newcommand{\PI}{I}
\newcommand{\EE}[1]{\cong_{#1}}

\newcommand{\IFV}[1]{\PROJ{\FV{0}{}}{#1}} 
\newcommand{\PIFV}[1]{\PROJ{\T{y}_{0}}{#1}} 
\newcommand{\APP}[2]{#1\,#2}              
\newcommand{\PCDS}[2]{\mathtt{#1}\,#2}    
\newcommand{\PERM}{\rho}
\newcommand{\PERMA}{\tilde{\rho}}
\newcommand{\PERMS}{\mathcal{S}_{\FV{}{}}}
\newcommand{\PURES}{\textnormal{Pure}(\FV{0}{})}
\newcommand{\PUREA}{\Lambda}
\newcommand{\NUMTLL}{8} 
\newcommand{\ITLLC}[1]{P_{#1}^{\PERM}} 
\newcommand{\PCTXT}[3]{\sh^{#2,#1}_{#3}}
\newcommand{\CTXT}[2]{\PCTXT{\PERM}{#1}{#2}}
\newcommand{\LTLLC}[1]{\CALLSEM{\ITLLC{#1}\,\T{x}}{}}
\newcommand{\LIFT}[1]{\textnormal{lift}^{\PERM}(\PI\, #1)}

\newcommand{\DPROBLEM}[1]{{\textnormal{\textsc{#1}}}} 
\newcommand{\DENTAIL}[2]{\ensuremath{\DPROBLEM{SL-ENTAIL}_{#1}^{#2}}} 
\newcommand{\CCLASS}[1]{\textsc{#1}} 
\newcommand{\COMPLEMENT}[1]{\ensuremath{\overline{\DPROBLEM{#1}}}} 
\newcommand{\BIGO}[1]{\ensuremath{\mathcal{O}\left(#1\right)}}

\newcommand{\SHRINK}[1]{\textnormal{\textit{compress}}(#1)}
\newcommand{\KERNEL}{\textnormal{\textit{core}}}
\newcommand{\REDUCE}[1]{\textnormal{\textit{reduce}}(#1)}
\newcommand{\SSIGMA}[2]{\textit{kernel}(#1,#2)}
\newcommand{\STRIP}[1]{\textnormal{\textit{strip}}(#1)}
\newcommand{\EC}[2]{\ensuremath{[#1]_{\EE{#2}}}}

\newcommand{\HATRACK}{\HA{A}_{\TRACK}}
\newcommand{\HASAT}{\HA{A}_{\SATPROP}}
\newcommand{\HAREACH}{\HA{A}_{\RPROP}}
\newcommand{\HACYCLE}{\HA{A}_{\CYCLEPROP}}
\newcommand{\HAEST}{\HA{A}_{\ESTPROP}}
\newcommand{\HAGARBAGE}{\HA{A}_{\GARBAGEPROP}}
\newcommand{\HASCHEME}[3]{\HA{A}_{\texttt{SCHEME}}(#1,#2,#3)}
\newcommand{\HASH}[1]{\HA{A}_{\textnormal{$#1$}}}


\begin{document}




\title{Unified Reasoning about Robustness Properties of Symbolic-Heap Separation Logic} 


\titlerunning{Unified Reasoning about Robustness Properties of Symbolic-Heaps}

\author{Christina Jansen\inst{1} \and Jens Katelaan\inst{2} \and Christoph Matheja\inst{1} \and \\ Thomas Noll\inst{1} \and Florian Zuleger\inst{2}}
\authorrunning{Jansen, Katelaan, Matheja, Noll, Zuleger}

\institute{Software Modeling and Verification Group, \\ RWTH Aachen University, Germany
\and
TU Wien, Austria
}

\maketitle

\begin{abstract}
  We introduce \emph{heap automata}, a formalism for automatic reasoning
about \emph{robustness properties} of the symbolic heap fragment of
separation logic with user-defined inductive predicates.
Robustness properties, such as satisfiability, reachability, and
acyclicity, are important for a wide range of reasoning tasks in
automated program analysis and verification based on separation logic.
Previously, such properties have appeared in many places in the
separation logic literature, but have not been studied in a systematic
manner.
In this paper, we develop an algorithmic framework based on heap
automata that allows us to derive asymptotically optimal decision
procedures for a wide range of robustness properties in a uniform way.

We implemented a protoype of our framework and obtained promising results
for all of the aforementioned robustness properties.

Further, we demonstrate the applicability of heap automata beyond
robustness properties.  We apply our algorithmic framework to the
\emph{model checking} and the \emph{entailment problem} for
symbolic-heap separation logic.

\end{abstract}





\section{Introduction}
\label{sec:introduction}
        \emph{Separation logic (SL)}~\cite{reynolds2002separation} is a popular formalism
        for Hoare-style verification of imperative, heap-manipulating programs.
        While its symbolic heap fragment originally emerged as an idiomatic
        form of assertions that occur naturally in hand-written
        proofs~\cite{o2001local,berdine2005symbolic,berdine2005smallfoot}, a
        variety of program analyses based on symbolic-heap separation logic
        have been developed~\cite{berdine2005symbolic,berdine2007shape,calcagno2009compositional,le2014shape,ohearn2007resources,brookes2007semantics,gotsman2007thread}.  Consequently, it now
        serves as formal basis for a multitude of automated verification
        tools, such as \cite{berdine2011slayer,calcagno2011infer,chin2012automated,dudka2011predator,jacobs2011verifast,qiu2013natural,magill2008thor,botincan2011corestar},
        capable of proving complex properties of a program's heap, such as
        memory safety, for large code
        bases~\cite{calcagno2009compositional,calcagno2011infer}.
        These tools typically rely on \emph{systems of inductive predicate
          definitions (SID)} to specify the shape of data structures employed
        by a program, such as trees and linked lists.  Originally, separation
        logic tools implemented highly-specialized procedures for such fixed
        SIDs.  As this limits their applicability, there is an ongoing trend
        to support custom SIDs that are either defined manually~\cite{jacobs2011verifast,chin2012automated} or even 
        automatically generated.  The latter may, for example, be
        obtained from the 
        tool
        \textsc{Caber}~\cite{brotherston2014cyclic}.

        \paragraph*{Robustness properties} Allowing for arbitrary SIDs, however, raises various questions about
        their \emph{robustness}. 
        A user-defined or auto-generated SID might, for example, be
        \emph{inconsistent}, introduce \emph{unallocated logical variables},
        specify data structures that contain undesired \emph{cycles}, or
        produce \emph{garbage}, i.e., parts of the heap that are unreachable
        from any program variable. 
        Accidentally introducing such properties into
        specifications can have a negative impact on performance,
        completeness, and even soundness of the employed verification
        algorithms: 
        \begin{itemize}
        \item Brotherston et al.~\cite{brotherston2014decision} point out that
          tools might waste time on inconsistent scenarios due to
          \emph{unsatisfiability} of specifications.
        \item The absence of unallocated logical variables, also known as
          \emph{establishment}, is required by the approach of Iosif et
          al.~\cite{iosif2013tree,iosif2014entailment} to obtain a decidable fragment of symbolic heaps.
        \item Other verification approaches, such as the one by Habermehl et
          al.~\cite{habermehl2011forest,habermehl2012forest}, assume that no
          garbage is introduced by data structure specifications.
        \item During program analysis and verification, questions such as
          \emph{reachability}, \emph{acyclicity} and \emph{garbage-freedom}
          arise depending on the properties of interest.  For example, as
          argued by Zanardini and Genaim~\cite{zanardini2014inference},
          acyclicity of the heap is crucial in automated termination
          proofs.
        \end{itemize}

        Being able to check such \emph{robustness properties} of custom SIDs
        is thus crucial (1) in debugging of separation-logic specifications
        prior to program analysis and (2) in the program analyses themselves.
        So far, however, all of the above properties have either been
        addressed individually or not systematically at all. For example,
        satisfiability is studied in detail by Brotherston et
        al.~\cite{brotherston2014decision}, whereas establishment is
        often addressed with ad-hoc solutions
        \cite{iosif2013tree,habermehl2011forest}.


        Several reasoning tasks arise in the context of robustness
        properties. As a motivation, consider the problem of acyclicity.  If our
        program analysis requires acyclicity, we would like to decide whether
        all interpretations of a symbolic heap are acyclic; if not, to find
        out how cycles can be introduced into the heap (counterexample
        generation); and, finally, to be able to generate a new SID that does
        guarantee acyclicity (called \emph{refinement} below).
        A systematic treatment of robustness properties
        should cover these reasoning tasks in general, not just for the
        problem of acyclicity.

        \paragraph*{Problem statement}
        We would like to develop a framework that enables:
        \begin{enumerate}
        \item \emph{Decision procedures for robustness properties.} In program
          analysis, we generally deal with symbolic heaps that reference
          SIDs specifying unbounded data structures and thus usually have infinitely many
          interpretations. We need to be able to decide whether all, or some, of these
          infinitely many interpretations are guaranteed to satisfy a given
          robustness property.
        %
        \item \emph{Generation of counterexamples} that violate a desired
          property.
        \item \emph{Refinement} of SIDs to automatically generate a new SID
          that respects a given robustness property.
        \item \emph{Automatic combination of decision procedures} to derive decision procedures for
          complex robustness properties from simpler ingredients.
        \end{enumerate}


        \paragraph*{Motivating example: Inductive reasoning about robustness
          properties}
        The key insight underlying our solution to the above problems is that
        many properties of symbolic heaps can be decided iteratively by
        inductive reasoning.  To motivate our approach, we illustrate this
        reasoning process with a concrete
        example.
%
        Consider an SID for acyclic singly-linked list segments with head $x$ and tail $y$:
         \begin{align*}
           \textnormal{\texttt{sll}} ~\SRDARROW~ \EMP : \{ x = y \} \qquad\qquad
           \textnormal{\texttt{sll}} ~\SRDARROW~ \exists u ~.~
           \PTS{x}{u}  \SEP \textnormal{\texttt{sll}}(u\,y) :
                                                    \{ x \neq y \}.
         \end{align*}
         The two \emph{rules} of the SID define a case distinction: A list is
         either empty, but then the first and the last element are the same;
         or, the first element has a successor $u$ (specified by the
         \emph{points-to assertion} $\PTS{x}{u}$), which in turn is at
         the head of a (shorter) singly-linked list segment,
         $\mathtt{sll}(u\,y)$.  The inequality in the second rule
         guarantees that there is no cyclic model.
        Now, consider the following symbolic heap with \emph{predicate calls} to
        \texttt{sll}:
        $
        \sh = \exists x\,y\,z ~.~ \mathtt{sll}(x\,z) \SEP \PTS{z}{y} \SEP \mathtt{sll}(y\,x)
        $, 
        which might appear as an assertion during program analysis. Say our
        program analysis depends on the acyclicity of $\sh$, so we need to
        determine whether $\sh$ is acyclic. We can do so by \emph{inductive
          reasoning} as follows.
        \begin{compactitem}
        \item We 
          analyze the call $\mathtt{sll}(x\,z)$, the first list
          segment in the symbolic heap $\sh$. 
          If it is 
          interpreted by the right-hand side of the first rule of the SID
          from 
          above, then there is no cycle in $\mathtt{sll}(x\,z)$ and $z$
          is reachable from $x$.
        \item If we already know for a call $\mathtt{sll}(u\,z)$ that all of its models are
          acyclic structures and that $z$ is reachable from $u$,
          then $z$ is also reachable from $x$ in the symbolic heap $\exists u
          ~.~ \PTS{x}{u} \SEP \textnormal{\texttt{sll}}(u\,z) : \{ x \neq z \}$
          obtained by the second rule of the SID. 
          Since our SID does not introduce dangling pointers,
          we also know that there is still no cycle.
        \item By induction, $\mathtt{sll}(x\,z)$ is thus acyclic and $z$ is
          reachable from $x$.
        \item Likewise, $\mathtt{sll}(y\,x)$ is acyclic and $x$ is reachable
          from $y$.
        \item Now, based on the information we discovered for
          $\mathtt{sll}(x\,z)$ and $\mathtt{sll}(y\,x)$, we examine $\sh$ and
          conclude that it is \emph{cyclic}, as $z$ is reachable from $x$, $y$
          is reachable from $z$, and $x$ is reachable from $y$. Crucially, we
          reason inductively and thus do \emph{not} re-examine the list
          segments to arrive at our conclusion.
        \end{compactitem}
        In summary, we examine a symbolic heap and corresponding SID
        \emph{bottom-up}, starting from the non-recursive base case. Moreover,
        at each stage of this analysis, we remember a fixed amount of
        information---namely what we discover about reachability between
        parameters and acyclicity of every symbolic heap we examine.
        Similar inductive constructions are defined explicitly for various robustness properties throughout the separation logic literature~\cite{brotherston2014decision,brotherston2016model,iosif2013tree}.
        Our aim is to generalize such manual constructions following an
        \emph{automata-theoretic approach}: We introduce automata that
        operate on symbolic heaps and store the relevant information of each
        symbolic heap they examine in their state space. Whenever such an automaton comes
        across a predicate that it has already analyzed, it can simply replace
        the predicate with the information that is encoded in the corresponding state.
        In other words, our automata \emph{recognize} robustness properties
        \emph{in a compositional way} by exploiting the \emph{inductive
          structure} inherent in the SIDs.

        \paragraph*{Systematic reasoning about robustness properties} 
        Our novel automaton model, \emph{heap automata}, works directly on the
        structure of symbolic heaps as outlined in the example, and can be
        applied to all the problems introduced before.
        In particular, heap automata enable \emph{automatic refinement} of SIDs and
        enjoy a variety of closure properties
        through which we can derive counterexample generation as well as
        decision procedures for various robustness properties---including
        satisfiability, establishment, reachability, garbage-freedom, and
        acyclicity.

        Our approach can thus be seen as an \emph{algorithmic framework}
        for deciding a wide range of robustness properties of symbolic
        heaps. Furthermore, we show asymptotically optimal complexity of our
        automata-based decision procedures in a uniform way.
        %
        By enabling this systematic approach to reasoning about robustness,
        our framework generalizes prior work that studied single robustness
        properties in isolation, such as the work by Brotherston et
        al.~\cite{brotherston2014decision,brotherston2016model}.

        As a natural byproduct of our automata-based approach, we also derive
        decision procedures for the \emph{model-checking
          problem}, which was
        recently studied, and proven to be \CCLASS{ExpTime}--complete in
        general, by Brotherston et al.~\cite{brotherston2016model}.  This makes
        it possible to apply our framework to \emph{run-time verification}---a
        setting in which robustness properties are of particular
        importance~\cite{nguyen2008runtime,jacobs2011verifast,brotherston2016model}.

        \paragraph*{Entailment checking with heap automata}

        Finally, we also address the \emph{entailment problem}.  In
        Hoare-style program analysis, decision procedures for the
        \emph{entailment problem} become essential to discharge implications
        between assertions, as required, for example, by the rule of
        consequence~\cite{hoare1969axiomatic}. Because of this central role in
        verification, there is an extensive body of research on decision
        procedures for entailment; see, for
        example~\cite{berdine2004decidable,brotherston2011automated,brotherston2012generic,iosif2013tree,iosif2014entailment,navarro2013separation,piskac2014automating,enea2014entailment}.
        Antonopoulos et al.~\cite{antonopoulos2014foundations} study the
        complexity of the entailment problem and show that it is undecidable
        in general, and already \CCLASS{ExpTime}--hard for SIDs specifying
        sets of trees.

        We use heap automata to check entailment between \emph{determined} symbolic heaps.
        Intuitively, determinedness is a strong form of the establishment property guaranteeing that two variables are either equal or unequal in every model.
        %
        %
        Unlike other decision
        procedures~\cite{iosif2013tree,iosif2014entailment,berdine2004decidable},
        our approach does not impose syntactic restrictions on the symbolic
        heap under consideration but merely requires that suitable heap
        automata for the predicates on the right-hand side of the entailment
        are provided.
        In particular, we show how to obtain \CCLASS{ExpTime} decision
        procedures from such heap automata---which exist for highly
        non-trivial SIDs.
        If desired, additional syntactic restrictions can be integrated
        seamlessly into our approach to boost our algorithms' performance.

        \paragraph{Contributions}
        Our main contributions can be summarized as follows.
        \begin{itemize}
        \item We introduce \emph{heap automata}, a novel automaton model
          operating directly on symbolic heaps. 
        %
          We prove that heap automata enjoy various useful \emph{closure
            properties}. Besides union, intersection and complement, they are
          closed under the conjunction with pure formulas, allowing the
          construction of complex heap automata from simple ones.
        \item We develop a powerful \emph{algorithmic framework} for automated
          reasoning about and debugging of symbolic heaps with inductive
          predicate definitions based on heap automata.
        \item We show that key robustness properties, such 
          as \emph{satisfiability}, \emph{establishment}, \emph{reachability},
          \emph{garbage freedom} and \emph{acyclicity}, can naturally be
          expressed as heap automata. Moreover, the upper bounds of decision
          procedures obtained from our framework are shown to be
          optimal---i.e., \CCLASS{ExpTime}--complete---in each of these cases.
          Further, they enable automated \emph{refinement} of SIDs to filter
          out (or expose) symbolic heaps with undesired properties.
        \item Additionally, we apply heap automata to tackle the
          \emph{entailment} and the \emph{model checking} problem for symbolic
          heaps.  We show that if each predicate of an
          SID can be represented by a heap automaton, then the entailment
          problem for the corresponding fragment of symbolic heaps is
          decidable in \CCLASS{2-ExpTime} in general and
          \CCLASS{ExpTime}-complete if the maximal arity of predicates and
          points-to assertions is bounded.
          For example, our framework yields an \CCLASS{ExpTime} decision procedure
          for a symbolic heap fragment 
          capable of representing trees with linked leaves---a fragment that
          is out of scope of most \CCLASS{ExpTime} decision procedures known
          so far
          (cf. \cite{berdine2004decidable,enea2014entailment,iosif2014entailment}).
      \item We implemented a prototype of our framework that yields promising results for all robustness properties considered in the paper.
        %
        \end{itemize}

        \paragraph{Organization of the paper}
        The fragment of symbolic heaps with inductive predicate definitions is briefly introduced in Section~\ref{sec:symbolic-heaps}.
        Heap automata and derived decision procedures are studied in Section~\ref{sec:compositional}.
        Section~\ref{sec:zoo} demonstrates that a variety of robustness properties can be checked by heap automata.
        We report on a prototypical implementation of our framework in Section~\ref{sec:implementation}.
        Special attention to the entailment problem is paid in Section~\ref{sec:entailment}.
        %
        Finally, Section~\ref{sec:conclusion} concludes.
        \SEEAPPENDIXPLEASE


%
%
\section{Symbolic Heaps}
\label{sec:symbolic-heaps}
This section briefly introduces the symbolic heap fragment of separation logic equipped with inductive predicate definitions.
\paragraph{Basic Notation}
$\N$ is the set of natural numbers and $2^{S}$ is the powerset of a set $S$.
$(co)\DOM(f)$ is the (co)domain of a (partial) function $f$.
We abbreviate tuples $u_1 \ldots u_n$, $n \geq 0$, by $\T{u}$
and write $\PROJ{\T{u}}{i}$, $1 \leq i \leq \SIZE{\T{u}} = n$, to
denote $u_i$, the $i$-th element of $\T{u}$.
By slight abuse of notation, the same symbol $\T{u}$ is used for the set of all elements occurring in tuple $\T{u}$.
The empty tuple is $\EMPTYSEQ$ and the set of all (non-empty) tuples [of length $n \geq 0$] over a finite set $S$ is  $S^{*}$ ($S^{+}$ [$S^{n}$]).
The concatenation of tuples $\T{u}$ and $\T{v}$
is $\T{u}\,\T{v}$.
\paragraph{Syntax}
%
%
We usually denote \emph{variables} taken from $\VAR$ (including a dedicated variable $\NIL$) by $a,b,c,x,y,z$, etc.
Moreover, let $\PRED$ be a set of \emph{predicate} \emph{symbols} and $\ARITY : \PRED \to \N$ be a function assigning each symbol its \emph{arity}.
\emph{Spatial formulas} $\SPATIAL{}$ and \emph{pure formulas} $\pi$ are given by the following grammar:
\begin{align*}
 \SPATIAL{} ~::=~ \EMP ~|~ \PTS{x}{\T{y}} ~|~ \SPATIAL{} \SEP \SPATIAL{} \qquad \pi ~::=~ x = y ~|~ x \neq y,
\end{align*}
where $\T{y}$ is a non-empty tuple of variables.
Here, $\EMP$ stands for the \emph{empty heap}, $\PTS{x}{\T{y}}$ is a \emph{points-to} \emph{assertion} and $\SEP$ is the \emph{separating} \emph{conjunction}.
Furthermore, for $\PS \in \PRED$ and a tuple of variables $\T{y}$ of length $\ARITY(\PS)$, $\PS\T{y}$ is a \emph{predicate} \emph{call}.
A \emph{symbolic} \emph{heap} $\sh$ with variables $\VAR(\sh)$ and
free variables $\FV{0}{} \subseteq \VAR(\sh)$ is a formula of the
form
~$
 \sh ~=~ \SYMBOLICHEAP{}, ~~ \CALLS{} = \CALLN{1}{} \SEP \ldots \SEP \CALLN{m}{},
$~
where $\SPATIAL{}$ is a spatial formula, $\CALLS{}$ is a sequence of predicate calls and $\PURE{}$ is a finite set of pure formulas, each with variables from $\FV{0}{}$ and $\BV{}$.
This normal form, in which predicate calls and points-to assertions are never mixed, is chosen to simplify formal constructions.
If an element of a symbolic heap is empty, we usually omit it to improve readability.
For the same reason, we fix the notation from above and write $\BV{\sh}$, $\FV{i}{\sh}$, $\SPATIAL{\sh}$ etc.\ to denote the respective component of symbolic heap $\sh$ in formal constructions.
Hence, $\NOFV{\sh}$ and $\NOCALLS{\sh}$ refer to the number of free
variables and the number of predicate calls of $\sh$,
respectively.
We omit the superscript whenever the symbolic heap under consideration is clear from the context.
If a symbolic heap $\rsh$ contains no predicate calls, i.e., $\NOCALLS{\rsh} = 0$, then $\rsh$ is called \emph{reduced}.
%
Moreover, to simplify the technical development, we tacitly assume that $\NIL$ is a free variable that is passed to every predicate call.
Thus, for each $i \in \N$, we write $\PROJ{\FV{i}{}}{0}$ as a shortcut for $\NIL$ and treat
$\PROJ{\FV{i}{}}{0}$ as if $\PROJ{\FV{i}{}}{0} \in
\FV{i}{}$.\footnote{Since $\PROJ{\FV{i}{}}{0}$ is just a shortcut and not a
  proper variable, $\SIZE{\FV{i}{}}$ refers to the number of
variables in $\FV{i}{}$ \emph{apart from} $\PROJ{\FV{i}{}}{0}$.}
%
%

\paragraph{Systems of Inductive Definitions}
Every predicate symbol is associated with one or more symbolic heaps by a \emph{system} \emph{of} \emph{inductive} \emph{definitions} (SID).
Formally, an SID is a finite set of rules of the form $\SRDRULE{\PS}{\sh}$, where $\sh$ is a symbolic heap with $\ARITY(\PS) = \NOFV{\sh}$.
The set of all predicate symbols occurring in SID $\SRD$ and their
maximal arity are denoted by $\PRED(\SRD)$ and $\ARITY(\SRD)$,
respectively.
\begin{example}\label{ex:srd}
 An SID specifying doubly-linked list segments is defined by:
 \begin{align*}
   \textnormal{\texttt{dll}} ~\SRDARROW~ & \EMP : \{ a = c, b = d \} \\
   \textnormal{\texttt{dll}} ~\SRDARROW~ & \exists u ~.~
   \PT{a}{u \, b}
   \SEP \textnormal{\texttt{dll}}(u\,a\,c\,d),
 \end{align*}
 where $a$ corresponds to the \emph{head} of the list, $b$ and $c$ represent the \emph{previous} and the \emph{next} list element and $d$ represents the \emph{tail} of the list.
 For the sake of readability, we often prefer $a$, $b$, $c$, etc. as free variables in examples instead of $\IFV{1},\IFV{2},\IFV{3}$, etc.
 Further, 
 the following rules specify binary trees with
 \emph{root} $a$, \emph{leftmost leaf}
 $b$ and \emph{successor of the rightmost leaf}
 $c$ in which all leaves are connected by a
 singly-linked list from left to right.
 \begin{align*}
   \textnormal{\texttt{tll}} ~\SRDARROW~ & \PT{a}{\NIL\,\NIL\,c} : \{ a = b \} \\
   \textnormal{\texttt{tll}} ~\SRDARROW~ &
   \begin{array}[t]{@{}r@{~}l@{}}
     \exists \ell\,r\,z ~. & \PT{a}{\ell\,r\,\NIL}
     \SEP                    \textnormal{\texttt{tll}}(\ell\,b\,z) \SEP
                             \textnormal{\texttt{tll}}(r\,z\,c).
   \end{array}
 \end{align*}
\end{example}
\begin{definition}
We write $\SL{}{}$ for the set of all symbolic
heaps and $\SL{\SRD}{}$ for the set of symbolic heaps restricted to
predicate symbols taken from SID $\SRD$.
Moreover, given a computable function $\SRDCLASS : \SL{}{} \to \{0, 1\}$,
the set of symbolic heaps $\SL{}{\SRDCLASS}$  is given by
$\SL{}{\SRDCLASS} \DEFEQ \{ \sh \in \SL{}{} \mid \SRDCLASS(\sh) = 1 \}$.
%
We collect all SIDs in which every right-hand side belongs to
$\SL{}{\SRDCLASS}$ in $\SETSRD{\SRDCLASS}$.
To refer to the set of all reduced symbolic heaps (belonging to a
set defined by $\SRDCLASS$), we write $\RSL{}{}$
$(\RSL{}{\SRDCLASS})$.
\end{definition}
\begin{example}\label{ex:shclass}
Let $\alpha \in \mathbb{N}$ and $\SRDCLASSFV{\alpha}(\sh) \DEFEQ
\begin{cases}
  1, & \NOFV{\sh} \leq \alpha \\
  0, & \text{otherwise}
\end{cases}$.

Clearly, $\SRDCLASSFV{\alpha}$ is computable. Moreover, 
$\SL{}{\SRDCLASSFV{\alpha}}$ is the set of all symbolic heaps
having at most $\alpha \geq 0$ free variables.
\end{example}
\paragraph{Semantics}
As in a typical RAM model, we assume heaps to consist of records with a finite number of fields.
Let $\VAL$ denote an infinite set of \emph{values} and $\LOC \subseteq \VAL$ an infinite set of addressable \emph{locations}.
Moreover, we assume the existence of a special non-addressable value $\NIL \in \VAL \setminus \LOC$.
A \emph{heap} is a finite partial function $\heap : \LOC \PTO \VAL^{+}$ mapping locations to non-empty tuples of values.
We write $\heap_1 \HEAPUNION \heap_2$ to denote the union of heaps $\heap_1$ and $\heap_2$ provided that $\DOM(\heap_1) \cap \DOM(\heap_2) = \emptyset$.
Otherwise, $\heap_1 \HEAPUNION \heap_2$ is undefined.
Variables are interpreted by a \emph{stack}, i.e., a partial function $\stack~:~\VAR \PTO \VAL$
with $\stack(\NIL) = \NIL$.
%
Furthermore, stacks are canonically extended to tuples of variables by componentwise application.
We call a stack--heap pair $(\stack,\heap)$ a \emph{state}.
The set of all states is $\STATES$.
The semantics of a symbolic heap with respect to an SID and a state is shown in Figure~\ref{fig:slsemantics}.
Note that the semantics of predicate calls is explained in detail next.
\begin{figure}[t]
 \begin{align*}
     \stack,\heap \,\SAT{\SRD}\,
   & x \sim y
   & \Leftrightarrow~
   & \stack(x) \sim \stack(y), ~\text{where}~\sim \,\in\, \{\,=,\neq\,\} \\
   %
   %
    \stack,\heap \,\SAT{\SRD}\,
   & \EMP
   & \Leftrightarrow~
   & \DOM(\heap) = \emptyset \\
    \stack,\heap \,\SAT{\SRD}\,
   & \PTS{x}{\T{y}}
   & \Leftrightarrow~
   & \DOM(\heap) = \{\stack(x)\} ~\text{and}~ \heap(\stack(x)) = \stack(\T{y}) \\
    \stack,\heap \,\SAT{\SRD}\,
   & \PS\T{y}
   & \Leftrightarrow~
   & \exists \rsh \in \CALLSEM{P\T{y}}{\SRD} \,.\, \stack,\heap \SAT{\emptyset} \rsh \\
    \stack,\heap \,\SAT{\SRD}\,
   & \sh \SEP \sha
   & \Leftrightarrow~
   & \exists \heap_1,\heap_2 \,.\, \heap = \heap_1 \HEAPUNION \heap_2 \\
   & & & \quad \text{and } \stack,\heap_1 \SAT{\SRD} \sh \text{ and } \stack,\heap_2 \SAT{\SRD} \sha \\
    \stack,\heap \,\SAT{\SRD}\,
   & \exists \BV{} . \SPATIAL{} \SEP \CALLS{} \!:\! \PURE{}\!\!\!\!\!\!\!\!
   & ~\Leftrightarrow~
   & \exists \T{v} \in \VAL^{\SIZE{\BV{}}} \,.\,
     \stack\remap{\BV{}}{\T{v}}, \heap \SAT{\SRD} \SPATIAL{} \SEP \CALLS{} \\
   & & & \quad \text{and } \forall \pi \in \PURE{} \,.\, \stack\remap{\BV{}}{\T{v}}, \heap \SAT{\SRD} \pi
 \end{align*}
 \caption{Semantics of the symbolic heap fragment of separation logic with respect to an SID $\SRD$ and a state $(\stack,\heap)$.}
 \label{fig:slsemantics}
\end{figure}
\paragraph{Unfoldings of Predicate Calls}
The semantics of predicate calls is defined in terms of \emph{unfolding} \emph{trees}.
Intuitively, an unfolding tree specifies how predicate calls are \emph{replaced} by symbolic heaps according to a given SID.
The resulting reduced symbolic heap obtained from an unfolding tree is consequently called an \emph{unfolding}.
Formally, let $\sh = \exists \BV{} . \SPATIAL{} \SEP \CALLN{1}{} \SEP \ldots \SEP \CALLN{m}{} : \PURE{}$.
Then a predicate call $\CALLN{i}{}$ may be \emph{replaced} by a reduced
symbolic heap $\rsh$ if $\SIZE{\FV{i}{}} = \NOFV{\rsh}$ and
$\VAR(\sh) \cap \VAR(\rsh) \subseteq \FV{0}{\rsh}$.
The result of such a replacement is
\begin{align*}
 \sh\subst{\PS_i}{\rsh} ~\DEFEQ~ & \exists \BV{}\, \BV{\rsh}
                                               \,.\, \SPATIAL{}
                                               \SEP \SPATIAL{\rsh\subst{\FV{0}{\rsh}}{\FV{i}{}}} \, \SEP \\
                                               & \CALLN{1}{} \SEP \ldots \SEP \CALLN{i-1}{} \SEP \CALLN{i+1}{} \SEP \ldots \SEP \CALLN{m}{}
                                                \,:\, \left(\PURE{} \cup \PURE{\rsh\subst{\FV{0}{\rsh}}{\FV{i}{}}}\right),
\end{align*}
where $\rsh\subst{\FV{0}{\rsh}}{\FV{i}{}}$ denotes the substitution
of each free variable of $\rsh$ by the corresponding parameter of $\PS_i^{}$.
A \emph{tree} over symbolic heaps $\SL{\SRD}{}$ is a finite partial function $t : \N^{*} \PTO \SL{\SRD}{}$ such that $\emptyset \neq \DOM(t) \subseteq \N^{*}$ is prefix-closed and for all $\T{u} \in \DOM(t)$ with $t(\T{u}) = \sh$, we have $\{1,\ldots,\NOCALLS{\sh}\} = \{ i \in \N ~|~ \T{u}\,i \in \DOM(t) \}$.
The element $\EMPTYSEQ \in \DOM(t)$ is called the \emph{root} of tree $t$.
Furthermore, the \emph{subtree} $\SUBTREE{t}{\T{u}}$ of $t$ with root $\T{u}$ is  $\SUBTREE{t}{\T{u}} \,:\,  \{ \T{v} ~|~ \T{u}\,\T{v} \in \DOM(t) \} \to \SL{\SRD}{}$ with $\SUBTREE{t}{\T{u}}(\T{v}) \DEFEQ t(\T{u} \cdot \T{v})$.
%

\begin{definition} \label{def:sl:unfolding-trees}
 Let $\SRD \in \SETSRD{}$ and $\sh \in \SL{\SRD}{}$.
 Then the set of \emph{unfolding} \emph{trees} of $\sh$ w.r.t. $\SRD$,
 written $\UTREES{\SRD}{\sh}$, is the least set that contains all
 trees $t$ that satisfy (1) $t(\EMPTYSEQ) = \sh$ and (2)
 $\SUBTREE{t}{i} \in \UTREES{\SRD}{\sha_i}$ for each
 $1 \leq i \leq \NOCALLS{\sh}$, where
 $\SRDRULE{\PS_i^{\sh}}{\sha_i} \in \SRD$.
\end{definition}
   Note that for every reduced symbolic heap $\rsh$, we have $\NOCALLS{\rsh} = 0$.
   Thus, $\UTREES{\SRD}{\rsh} = \{ t \}$, where $t : \{ \EMPTYSEQ \} \to \{ \rsh \}:
   \EMPTYSEQ \mapsto \rsh$, forms the base case in
   Definition~\ref{def:sl:unfolding-trees}.
%
%
Every unfolding tree $t$ specifies a reduced symbolic heap $\UNFOLD{t}$, which is obtained by recursively replacing predicate calls by reduced symbolic heaps:
\begin{definition} \label{def:symbolic-heaps:unfolding}
The \emph{unfolding} of an unfolding tree $t \in \UTREES{\SRD}{\sh}$ is
%
\begin{align*}
 \UNFOLD{t} ~\DEFEQ~ \begin{cases}
                        t(\EMPTYSEQ) & ,~ \NOCALLS{t(\EMPTYSEQ)} = 0 \\
                        t(\EMPTYSEQ)\left[P_1 / \UNFOLD{\SUBTREE{t}{1}}, \ldots, P_m / \UNFOLD{\SUBTREE{t}{m}}\right] & ,~ \NOCALLS{t(\EMPTYSEQ)} = m > 0~,
                     \end{cases}
\end{align*}
where we tacitly assume that the variables $\BV{t(\EMPTYSEQ)}$, i.e., the existentially quantified variables in $t(\EMPTYSEQ)$, are substituted by fresh variables.
\end{definition}
\begin{example}
  Recall from Example~\ref{ex:srd}
  the two symbolic heaps $\rsh$ (upper) and $\sh$ (lower)
  occurring on the right-hand side of the \texttt{dll} predicate.
  Then $t : \{\EMPTYSEQ,1\} \to \{\sh, \rsh\}: \EMPTYSEQ \mapsto \sh, 1 \mapsto \rsh$
  is an unfolding tree of $\sh$.
  The corresponding unfolding is 
  \begin{align*}
    \UNFOLD{t} = \sh\left[ \PS_{1}^{\sh} / \rsh \right] ~=~ &
        \exists z ~.~ \PT{
           a
        }{
          z ~
          b
        }
        \SEP \EMP
         ~:~ \{ z = c, a = d \}.
  \end{align*}
\end{example}

%
\begin{definition} \label{def:symbolic-heaps:unfoldings}
  The set of all \emph{unfoldings} of a predicate call $\CALLN{i}{}$
  w.r.t. an SID $\SRD$ is denoted by
  $\CALLSEM{\CALLN{i}{}}{\SRD}$.
  Analogously, the \emph{unfoldings} of a symbolic heap $\sh$ are
  $ \CALLSEM{\sh}{\SRD} ~\DEFEQ~ \{ \UNFOLD{t} ~|~ t \in
  \UTREES{\SRD}{\sh} \} $.
\end{definition}

Then, as already depicted in Figure~\ref{fig:slsemantics}, the semantics of predicate calls requires the existence of an unfolding satisfying a given state.
This semantics corresponds to a particular iteration of the frequently used semantics of predicate calls based on least fixed points
(cf.~\cite{brotherston2014decision}).
%
%
%
%
%
Further note that applying the SL semantics to a given symbolic heap coincides with applying them to a suitable unfolding.
\begin{lemma} \label{thm:symbolic-heaps:semantics}
 Let $\sh \in \SL{\SRD}{}$.
 Then, for every $(\stack,\heap) \in \STATES$, we have
 \begin{align*}
   \stack,\heap \SAT{\SRD} \sh ~\text{iff}~ \exists \rsh \in \CALLSEM{\sh}{\SRD} ~.~ \stack,\heap \SAT{\emptyset} \rsh.
 \end{align*}
\end{lemma}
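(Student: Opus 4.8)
The claim is that evaluating a symbolic heap directly via $\SAT{\SRD}$ agrees with first choosing one of its reduced unfoldings and evaluating that under the empty SID. Since all of the ``inductive recursion'' is already packaged into the set $\CALLSEM{\sh}{\SRD}$, the argument is essentially a definition chase: no induction on the heap or on unfolding trees is needed beyond unpacking the definitions once at the root. I would rely on three ingredients. (i) If $\rsh$ is reduced, i.e.\ $\NOCALLS{\rsh}=0$, then $\stack,\heap\SAT{\SRD}\rsh$ iff $\stack,\heap\SAT{\emptyset}\rsh$ for every state, because the SID enters the clauses of Figure~\ref{fig:slsemantics} only through the predicate-call case, which never fires on a reduced heap; in particular this applies to the spatial formula $\SPATIAL{\sh}$ of any symbolic heap $\sh$. (ii) A coincidence lemma: $\stack,\heap\SAT{\SRD}\sh$ depends only on $\heap$ and on $\stack$ restricted to $\VAR(\sh)$; in particular satisfaction is invariant under $\alpha$-renaming of bound variables and under extending the stack on fresh variables. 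This is proved by a routine structural induction on $\sh$. (iii) Unpacking Definitions~\ref{def:sl:unfolding-trees} and~\ref{def:symbolic-heaps:unfolding} at the root yields a purely syntactic compositionality of unfoldings: for $\sh=\SYMBOLICHEAP{}$ with $\CALLS{}=\CALLN{1}{}\SEP\cdots\SEP\CALLN{m}{}$,
\begin{align*}
\CALLSEM{\sh}{\SRD} ~=~ \bigl\{\, \sh\subst{\PS_1}{\rsha_1}\cdots\subst{\PS_m}{\rsha_m} ~\bigm|~ \SRDRULE{\PS_i^{\sh}}{\sha_i}\in\SRD \text{ and } \rsha_i\in\CALLSEM{\sha_i}{\SRD} \,\bigr\},
\end{align*}
the fresh renaming of the existential variables $\BV{\sh}$ being understood, and moreover $\{\,\rsha_i\subst{\FV{0}{\rsha_i}}{\FV{i}{\sh}} \mid \SRDRULE{\PS_i^{\sh}}{\sha_i}\in\SRD,\ \rsha_i\in\CALLSEM{\sha_i}{\SRD}\,\}$ is exactly $\CALLSEM{\PS_i^{\sh}\FV{i}{\sh}}{\SRD}$.

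For ``$\Leftarrow$'', take $\rsh\in\CALLSEM{\sh}{\SRD}$ with $\stack,\heap\SAT{\emptyset}\rsh$. If $\NOCALLS{\sh}=0$ we conclude by (i) and (ii). Otherwise, by (iii) write $\rsh=\sh\subst{\PS_1}{\rsha_1}\cdots\subst{\PS_m}{\rsha_m}$ with $\rsha_i\in\CALLSEM{\sha_i}{\SRD}$; its spatial part is $\SPATIAL{\sh}\SEP\SPATIAL{\rsha_1\subst{\FV{0}{\rsha_1}}{\FV{1}{\sh}}}\SEP\cdots$ and its pure part the union of $\PURE{\sh}$ with the $\PURE{\rsha_i\subst{\FV{0}{\rsha_i}}{\FV{i}{\sh}}}$. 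Reading off the rule for $\SYMBOLICHEAP{}$ in Figure~\ref{fig:slsemantics} for $\rsh$ produces a valuation of all (freshened) bound variables and a splitting $\heap=\heap_\Sigma\HEAPUNION\heap_1\HEAPUNION\cdots\HEAPUNION\heap_m$ such that, on the correspondingly extended stack $\stack'$, one has $\stack',\heap_\Sigma\SAT{\emptyset}\SPATIAL{\sh}$, each $\stack',\heap_i\SAT{\emptyset}\rsha_i\subst{\FV{0}{\rsha_i}}{\FV{i}{\sh}}$ (taking the witnesses for $\BV{\rsha_i}$ from $\stack'$), and all pure formulas of $\sh$ hold. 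Since $\rsha_i\subst{\FV{0}{\rsha_i}}{\FV{i}{\sh}}\in\CALLSEM{\PS_i^{\sh}\FV{i}{\sh}}{\SRD}$, the predicate-call clause gives $\stack',\heap_i\SAT{\SRD}\CALLN{i}{}$; combined with $\stack',\heap_\Sigma\SAT{\SRD}\SPATIAL{\sh}$ (by (i)) and the pure constraints, the rule for $\SYMBOLICHEAP{}$ yields $\stack,\heap\SAT{\SRD}\sh$, using the recovered valuation of $\BV{\sh}$ and (ii) to discard the irrelevant values on the $\BV{\rsha_i}$. For ``$\Rightarrow$'', run this in reverse: from $\stack,\heap\SAT{\SRD}\sh$ the same clause provides a valuation of $\BV{\sh}$ and a splitting $\heap=\heap_\Sigma\HEAPUNION\heap_1\HEAPUNION\cdots\HEAPUNION\heap_m$ with $\stack',\heap_\Sigma\SAT{\SRD}\SPATIAL{\sh}$, $\stack',\heap_i\SAT{\SRD}\CALLN{i}{}$, and the pure formulas of $\sh$; the predicate-call clause then supplies, for each $i$, some $\rsha_i\subst{\FV{0}{\rsha_i}}{\FV{i}{\sh}}\in\CALLSEM{\PS_i^{\sh}\FV{i}{\sh}}{\SRD}$ with $\rsha_i\in\CALLSEM{\sha_i}{\SRD}$ satisfied by $\stack',\heap_i$ under $\SAT{\emptyset}$. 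Choosing all bound variables fresh by (ii), $\rsh\DEFEQ\sh\subst{\PS_1}{\rsha_1}\cdots\subst{\PS_m}{\rsha_m}\in\CALLSEM{\sh}{\SRD}$ by (iii), and $\stack,\heap\SAT{\emptyset}\rsh$ follows by reassembling the heap splitting, the valuation of $\BV{\sh}$, and the existential witnesses of the $\heap_i$, using (i) once more for $\SPATIAL{\sh}$.

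I expect the only real difficulty to be the variable bookkeeping: keeping the existential variables $\BV{\sh}$, the fresh variables introduced when substituting an unfolding for a predicate call (Definition~\ref{def:symbolic-heaps:unfolding}), and the parameter substitutions $\subst{\FV{0}{}}{\FV{i}{}}$ mutually consistent across the two constructions, and in particular ensuring that all renamed bound variables can be chosen pairwise disjoint so that the heap splittings obtained from the two sides line up under a single extended stack. This is entirely handled by the coincidence lemma (ii) together with the freedom in the domain-disjointness side condition of $\HEAPUNION$; it is notationally heavy but conceptually routine, and it accounts for essentially all of the work in the proof.
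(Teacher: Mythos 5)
Your proof is correct, but it is organized quite differently from the paper's. The paper proves the lemma by induction on the height of unfolding trees of $\sh$, with a nested structural induction on the syntax of $\sh$ in the inductive step (cases for $\EMP$, points-to, pure formulas, predicate calls, $\SEP$, and the full $\SYMBOLICHEAP{}$ form, the last using an auxiliary identity $\CALLSEM{\sh}{\SRD} = \{\exists\BV{}\,.\,\rsh' : \PURE{} \mid \rsh' \in \CALLSEM{\SPATIAL{}\SEP\CALLS{}}{\SRD}\}$). You instead observe that the only clause of Figure~\ref{fig:slsemantics} in which $\SRD$ enters is the predicate-call clause, and that this clause is already phrased non-recursively as $\exists \rsh \in \CALLSEM{\PS\T{y}}{\SRD}\,.\,\stack,\heap\SAT{\emptyset}\rsh$; consequently the whole lemma reduces to one level of definition-unpacking at the root, namely commuting the syntactic composition of unfoldings (your identity (iii), which is a single application of Definitions~\ref{def:sl:unfolding-trees}--\ref{def:symbolic-heaps:unfoldings}) with the semantic clauses for $\exists$, $\SEP$ and the pure formulas. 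This buys you a proof with no induction on unfolding depth at all, at the cost of leaning on the particular non-least-fixed-point formulation of the predicate-call semantics and on a more delicate one-shot variable-bookkeeping step; the paper's induction distributes that bookkeeping over the layers of the tree and would survive a change to a genuinely recursive semantics of predicate calls. Both arguments need the coincidence lemma (the paper proves it separately as Lemma~\ref{thm:symbolic-heaps:fv-coincidence}), and your ingredient (i) corresponds to the paper's base case $k=0$.
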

\begin{proof}
 By induction on the height of unfolding trees of $\sh$.
 \qed
\end{proof}

\newcommand{\btllExample}{%
\definecolor{colB}{RGB}{227, 123, 64}
\definecolor{colE}{RGB}{70, 178, 157}

  \begin{tikzpicture}[scale=0.65]
     \tikzset{inner/.style={circle,draw=colB,fill=colB!50,thick,inner
         sep=2pt,minimum size=18pt}}
     \tikzset{leaf/.style={circle,draw=colE,fill=colE!50,thick,inner
         sep=2pt,minimum size=18pt}}
     \tikzset{tedge/.style={->,thick,draw=colB}}
     \tikzset{ledge/.style={->,thick,draw=colE}}
   \node[leaf] (r1) {};
   \node[leaf,right=of r1] (r2) {};

   \node[inner,above left=of r1] (l) {};
   \node[inner,right=of l] (r) {};

   \node[inner,below left=of l] (l1) {};
   \node[leaf,below=of l] (l2) {};
   \node[leaf,below left=of l1] (l11) {f};
   \node[leaf,below right=of l1] (l12) {};

   \node[inner,above right=of l] (root) {r};

    \draw (root) edge[tedge] (l);
    \draw (root) edge[tedge] (r);
    \draw (r) edge[tedge] (r1);
    \draw (r) edge[tedge] (r2);

    \draw (l) edge[tedge] (l1);
    \draw (l) edge[tedge] (l2);
    \draw (l1) edge[tedge] (l11);
    \draw (l1) edge[tedge] (l12);

    \draw (l11) edge[ledge] (l12);
    \draw (l12) edge[ledge] (l2);
    \draw (l2) edge[ledge] (r1);
    \draw (r1) edge[ledge] (r2);

   \node[leaf,dashed,below=of r2,fill=colE!20] (null) {null};
    \draw (r2) edge[ledge] (null);

  \end{tikzpicture}
}


%
\section{Heap Automata}
\label{sec:compositional}
In this section we develop a procedure to reason about robustness properties of symbolic heaps. 
This procedure relies on the notion of \emph{heap automata}; a device that assigns one of finitely many states to any given symbolic heap.
\begin{definition} 
  A \emph{heap} \emph{automaton} over $\SHCLASS$ is a tuple $\HA{A} = (Q,\SHCLASS,\Delta,F)$, where
  $Q$ is a finite set of \emph{states} and $F \subseteq Q$ is a set of \emph{final states}, respectively.
  Moreover, $\Delta \subseteq Q^{*} \times \SHCLASS \times Q$ is a decidable \emph{transition relation} such that $(\T{q},\sh,p) \in \Delta$
  implies that $\SIZE{\T{q}} = \NOCALLS{\sh}$.
  We often write $\MOVE{A}{p}{\sh}{\T{q}}$ instead of $(\T{q},\sh,p) \in \Delta$.
\end{definition}
A transition $\MOVE{A}{p}{\sh}{\T{q}}$ takes a symbolic heap $\sh$ and an input state $q_i$ for every predicate call $\PS_i$ of $\sh$---collected in the tuple $\T{q}$---and assigns an output state $p$ to $\sh$.
Thus, the intuition behind a transition 
is that $\sh$ has a property encoded by state $p$
if every predicate call $\PS_i$ of $\sh$ is replaced by a reduced symbolic heap $\rsh_i$ that has a property encoded by state $\PROJ{\T{q}}{i}$.

Note that every heap automaton $\HA{A}$ assigns a state $p$ to a reduced symbolic heap $\rsh$ within a single transition of the form $\OMEGA{A}{p}{\rsh}$.
Alternatively, $\HA{A}$ may process a corresponding unfolding tree $t$ with $\UNFOLD{t} = \rsh$.
In this case, $\HA{A}$ proceeds similarly to the compositional construction of unfoldings (see Definition~\ref{def:symbolic-heaps:unfolding}).
However, instead of replacing every predicate call $\PS_i$ of the symbolic heap $t(\EMPTYSEQ)$ at the root of $t$ by an unfolding $\UNFOLD{\SUBTREE{t}{i}}$ of a subtree of $t$,
$\HA{A}$ uses states to keep track of the properties of these unfolded subtrees.
Consequently, $\HA{A}$ assigns a state $p$ to the symbolic heap $t(\EMPTYSEQ)$ if $\MOVE{A}{p}{t(\EMPTYSEQ)}{(q_1,\ldots,q_m)}$ holds, where 
for each $1 \leq i \leq m$, $q_i$ is the state assigned to the unfolding of subtree $\SUBTREE{t}{i}$, i.e., there is a transition $\OMEGA{A}{q_i}{\UNFOLD{\SUBTREE{t}{i}}}$.
It is then natural to require that $p$ should coincide with the state assigned directly to the unfolding $\UNFOLD{t}$, i.e., $\OMEGA{A}{p}{\UNFOLD{t}}$.
Hence, we require all heap automata considered in this paper to satisfy a \emph{compositionality property}.

\begin{definition} \label{def:refinement:automaton}
    A heap automaton $\HA{A} = (Q,\SHCLASS,\Delta,F)$ is \emph{compositional} if for every $p \in Q$, every $\sh \in \SHCLASS$ with $m \geq 0$ predicate calls
    $\CALLS{\sh} = \CALLN{1}{} \SEP \ldots \SEP \CALLN{m}{}$, and all reduced symbolic heaps $\rsh_1,\ldots,\rsh_m \in \RSHCLASS$, we have:
  %
  \[
  \begin{array}{c}
     \exists \T{q} \in Q^m ~.~ (\T{q},\sh,p) \in \Delta ~\text{and}~ \bigwedge_{1 \leq i \leq m} (\EMPTYSEQ,\rsh_i,\PROJ{\T{q}}{i}) \in \Delta \\
  \text{if and only if} \\
  (\EMPTYSEQ,~ \sh\left[P_1/\rsh_1,\ldots,P_m/\rsh_m\right],~p) ~\in~ \Delta.
   \end{array}
  \]
\end{definition}

Due to the compositionality property, we can safely define the \emph{language} $L(\HA{A})$ \emph{accepted} by a heap automaton $\HA{A}$ as the set of all reduced symbolic heaps that are assigned a final state, i.e.,
 $L(\HA{A}) \,\DEFEQ\, \{ \rsh \in \RSHCLASS ~|~ \exists q \in F \,.\, \OMEGA{A}{q}{\rsh} \}$.

\begin{example} \label{ex:heap-automaton:toy}
Given a symbolic heap $\sh$, let $|\SPATIAL{\sh}|$ denote the number of points-to assertions in $\sh$.
As a running example, we consider a heap automaton $\HA{A} =
(\{0,1\},\SL{}{},\Delta,\{1\})$,
  where $\Delta$ is given by
\begin{align*}
  \MOVE{A}{p}{\sh}{\T{q}} ~\text{iff}~
  p = \begin{cases}
        1, & \text{if}~ |\SPATIAL{\sh}| + \sum_{i=1}^{\SIZE{\T{q}}} \PROJ{\T{q}}{i}   > 0 \\
        0, & \text{otherwise}.
      \end{cases}
\end{align*}
While $\HA{A}$ is a toy example, it illustrates the compositionality property:
Consider the reduced symbolic heap $\rsh = \exists z . \EMP \SEP \EMP : \{ x = z, z = y \}$.
Since $\rsh$ contains no points-to assertions, $\HA{A}$ rejects $\rsh$ in a single step, i.e., $\OMEGA{A}{0}{\rsh} \notin \{1\}$.
The compositionality property of $\HA{A}$ ensures that $\HA{A}$ yields the same result for every unfolding tree $t$ whose unfolding $\UNFOLD{t}$ is equal to $\rsh$.
For instance, $\rsh$ is a possible unfolding of the symbolic heap $\sh = \exists z . \texttt{sll}(x z) \SEP \texttt{sll}(z y)$, where $\texttt{sll}$ is a predicate specifying singly-linked list segments as in Section~\ref{sec:introduction}.
More precisely, if both predicates are replaced according to the rule $\SRDRULE{\texttt{sll}}{\EMP : \{x = y\}}$, we obtain $\rsh$ again (up to renaming of parameters as per Definition~\ref{def:symbolic-heaps:unfolding}).
In this case, $\HA{A}$ rejects as before: We have $\OMEGA{A}{0}{\EMP : \{x=y\}}$ for both base cases and $\MOVE{A}{0}{\sh}{(0,0)}$ for the symbolic heap $\sh$.
By the compositionality property, this is equivalent to $\OMEGA{A}{0}{\rsh}$.
Analogously, if an $\texttt{sll}$ predicate, say the first, is replaced according to the rule $\SRDRULE{\texttt{sll}}{\sha}$, where $\sha = \exists z . \PTS{x}{z} \SEP \texttt{sll}(z y)$, then
$\MOVE{A}{1}{\sha}{0}$, $\MOVE{A}{1}{\sha}{1}$ and $\MOVE{A}{1}{\sh}{(1,0)}$ holds, i.e., $\HA{A}$ accepts.
In general, $L(\HA{A})$ is the set of all reduced symbolic heaps that contain at least one points-to assertion.
%
%
\end{example}

While heap automata can be applied to check whether a single reduced symbolic heap has a property of interest, i.e., belongs to the language of a heap automaton,
our main application is directed towards reasoning about \emph{infinite} sets of symbolic heaps,
such as all unfoldings of a symbolic heap $\sh$.
Thus, given a heap automaton $\HA{A}$, we would like to answer the following questions:
\begin{compactenum}
    \item[1.] Does there \emph{exist} an unfolding of $\sh$ that is accepted by $\HA{A}$?
    \item[2.] Are \emph{all} unfoldings of $\sh$ accepted by $\HA{A}$?
\end{compactenum}
%
We start with a special case of the first question in which $\sh$ is a single predicate call.
The key idea behind our corresponding decision procedure is to transform the SID $\SRD$ to \emph{filter out} all unfoldings that are \emph{not} accepted by $\HA{A}$.
One of our main results is that such a refinement is always possible.
\begin{theorem}[Refinement Theorem] \label{thm:compositional:refinement}
  Let $\HA{A}$ be a heap automaton over $\SHCLASS$ and $\SRD \in \SETSRD{\SRDCLASS}$.
  Then one can effectively construct a refined $\SRDALT \in \SETSRD{\SRDCLASS}$ such that
  for each $\PS \in \PRED(\SRD)$, we have $\CALLSEM{\PS\FV{0}{}}{\SRDALT} = \CALLSEM{\PS\FV{0}{}}{\SRD} ~\cap~ L(\HA{A})$.
\end{theorem}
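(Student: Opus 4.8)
The plan is to refine the SID $\SRD$ by annotating each predicate symbol with a state of $\HA{A}$, so that the new predicate $\PS^q$ generates exactly those unfoldings of $\PS$ to which $\HA{A}$ assigns state $q$. Formally, I would set $\PRED(\SRDALT) \DEFEQ \{\, \PS^q \mid \PS \in \PRED(\SRD),\ q \in Q \,\}$, keeping the arity of $\PS^q$ equal to that of $\PS$. For each rule $\SRDRULE{\PS}{\sh} \in \SRD$ with $\sh = \exists \BV{} . \SPATIAL{} \SEP \CALLN{1}{} \SEP \ldots \SEP \CALLN{m}{} : \PURE{}$ and each tuple $\T{q} = (q_1,\ldots,q_m) \in Q^m$ and state $p \in Q$ such that $\MOVE{A}{p}{\sh}{\T{q}}$, I add to $\SRDALT$ the rule
\[
  \SRDRULE{\PS^p}{\ \exists \BV{} . \SPATIAL{} \SEP \PS_1^{q_1}\FV{1}{} \SEP \ldots \SEP \PS_m^{q_m}\FV{m}{} : \PURE{}\ }.
\]
That is, the spatial part and the pure part are copied verbatim, and each predicate call $\PS_i\FV{i}{}$ is replaced by the annotated call $\PS_i^{q_i}\FV{i}{}$, subject to the side condition that the transition $\MOVE{A}{p}{\sh}{\T{q}}$ exists in $\HA{A}$ (this is decidable, since $\Delta$ is decidable and $Q$ is finite, so $\SRDALT$ is effectively constructible). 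Note that the right-hand sides of $\SRDALT$ differ from those of $\SRD$ only in the names of predicate symbols, so $\SRDCLASS$-membership is preserved (recall $\SRDCLASS$ depends only on the symbolic-heap structure, not on which predicate symbols occur); hence $\SRDALT \in \SETSRD{\SRDCLASS}$ as required. Finally, to recover the statement exactly as phrased, I would define $\PS$ in $\SRDALT$ to have the rules $\SRDRULE{\PS}{\PS^q\FV{0}{}}$ for every final state $q \in F$, so that $\CALLSEM{\PS\FV{0}{}}{\SRDALT} = \bigcup_{q \in F} \CALLSEM{\PS^q\FV{0}{}}{\SRDALT}$.

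The core of the proof is the claim that $\CALLSEM{\PS^q\FV{0}{}}{\SRDALT} = \{\, \rsh \in \CALLSEM{\PS\FV{0}{}}{\SRD} \mid \OMEGA{A}{q}{\rsh} \,\}$. From this the theorem follows immediately: intersecting over the final states $q \in F$ on the right gives $\CALLSEM{\PS\FV{0}{}}{\SRD} \cap L(\HA{A})$, and the union over $q \in F$ on the left is precisely $\CALLSEM{\PS\FV{0}{}}{\SRDALT}$ by the construction of the $\PS$-rules just described. I would prove the claim by establishing a bijection-preserving correspondence between unfolding trees of $\PS^q\FV{0}{}$ w.r.t. $\SRDALT$ and unfolding trees of $\PS\FV{0}{}$ w.r.t. $\SRD$ on which $\HA{A}$ outputs $q$ at the root, going by induction on the height of the unfolding tree (as in Lemma~\ref{thm:symbolic-heaps:semantics}). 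Concretely: given an unfolding tree $t \in \UTREES{\SRDALT}{\PS^q\FV{0}{}}$, erasing the state annotations from every node yields a tree $t' \in \UTREES{\SRD}{\PS\FV{0}{}}$ with the same unfolding, $\UNFOLD{t'} = \UNFOLD{t}$; conversely, an unfolding tree $t' \in \UTREES{\SRD}{\PS\FV{0}{}}$ can be decorated with states—the root gets $q$, and the decoration is propagated downward consistently—iff $\HA{A}$ assigns $q$ to $\UNFOLD{t'}$, and this decoration is unique when it exists. The forward direction of the induction uses the "only if" half of the transition side condition together with the induction hypothesis applied to each subtree $\SUBTREE{t}{i}$; the backward direction uses the "if" half and the inductive construction of a decorated tree subtree by subtree.

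The main obstacle—and the place where the compositionality hypothesis on $\HA{A}$ is essential—is the inductive step linking the state $\HA{A}$ assigns directly to the full unfolding $\UNFOLD{t}$ with the states it assigns to the subunfoldings $\UNFOLD{\SUBTREE{t}{i}}$ combined via the root rule. Writing $\rsh_i \DEFEQ \UNFOLD{\SUBTREE{t}{i}}$ and $\sh \DEFEQ t'(\EMPTYSEQ)$, the unfolding is $\UNFOLD{t'} = \sh[P_1/\rsh_1,\ldots,P_m/\rsh_m]$ (up to the variable-freshness convention of Definition~\ref{def:symbolic-heaps:unfolding}), and I need exactly the equivalence
\[
  \OMEGA{A}{p}{\sh[P_1/\rsh_1,\ldots,P_m/\rsh_m]}
  \quad\Longleftrightarrow\quad
  \exists \T{q}.\ \MOVE{A}{p}{\sh}{\T{q}}\ \text{and}\ \bigwedge_{i} \OMEGA{A}{q_i}{\rsh_i},
\]
which is precisely Definition~\ref{def:refinement:automaton}. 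A subsidiary technical point to handle with care is that the compositionality property is stated for reduced $\rsh_i \in \RSHCLASS$, so I must check that each $\UNFOLD{\SUBTREE{t}{i}}$ is indeed reduced (it is, being an unfolding) and that the renaming of existentially quantified variables in Definition~\ref{def:symbolic-heaps:unfolding} does not affect which state $\HA{A}$ assigns—i.e., that $\HA{A}$'s transition relation is invariant under $\alpha$-renaming of bound variables, which should either be an implicit standing assumption on $\SHCLASS$ or follow from the concrete automata used later. Modulo this bookkeeping, the induction is routine.
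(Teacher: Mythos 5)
Your proposal is correct and follows essentially the same route as the paper: the same product construction $\langle\PS,q\rangle$ with rules guarded by transitions of $\HA{A}$, the same bridging rules $\SRDRULE{\PS}{\langle\PS,q\rangle\FV{0}{}}$ for final states $q$, and the same key lemma ($\rsh \in \CALLSEM{\langle\PS,q\rangle\FV{0}{}}{\SRDALT}$ iff $\rsh \in \CALLSEM{\PS\FV{0}{}}{\SRD}$ and $\OMEGA{A}{q}{\rsh}$) proved by induction on the height of unfolding trees with compositionality applied at exactly the step you identify. The only nits are that "intersecting over the final states" should read "taking the union over the final states," and that uniqueness of the state decoration need not hold for a nondeterministic $\HA{A}$ (nor is it needed); neither affects the argument.
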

\begin{proof}
 We construct $\SRDALT \in \SETSRD{\SRDCLASS}$ over the predicate symbols
 $\PRED(\SRDALT) = (\PRED(\SRD) \times Q_{\HA{A}}) \cup \PRED(\SRD)$
 as follows:
 If $\SRDRULE{\PS}{\sh} \in \SRD$ with
 $\CALLS{\sh} = \CALLN{1}{} \SEP \ldots \SEP \CALLN{m}{}$, $m \geq 0$,
 and $\MOVE{A}{q_0}{\sh}{q_1 \ldots q_m}$, we add a rule to
 $\SRDALT$ in which $\PS$ is substituted by $\langle\PS,q_0\rangle$ and each
 predicate call $\CALLN{i}{}$ is substituted by a call
 $\langle\PS_i,q_i\rangle\FV{i}{}$.
 %
 %
 Furthermore, for each $q \in F_{\HA{A}}$, we add a rule $\PS \SRDARROW \langle\PS,q\rangle\FV{0}{}$ to $\SRDALT$.
 %
 See Appendix~\ref{app:compositional:refinement} for details.
 \qed
\end{proof}
\begin{example} \label{ex:refinement:toy}
Applying the refinement theorem to the heap automaton from Example~\ref{ex:heap-automaton:toy}
and the SID
from Example~\ref{ex:srd} yields a refined SID given by the rules:
%
\begin{align*}
  &\quad~~\, \SRDRULE{
    \textnormal{\texttt{dll}}
  }{
    \langle\textnormal{\texttt{dll}},1\rangle(a\,b\,c\,d)
  }
  && \SRDRULE{
    \langle\textnormal{\texttt{dll}},1\rangle
  }{
   \exists z ~.~ \PT{a}{z \, b}
   \SEP \langle\textnormal{\texttt{dll}},0\rangle(z\,a\,c\,d)
  } \\
  & \SRDRULE{
    \langle\textnormal{\texttt{dll}},0\rangle
  }{
 \EMP : \{ a  = c, b = d \}
  }
  && \SRDRULE{
    \langle\textnormal{\texttt{dll}},1\rangle
  }{
   \exists z ~.~ \PT{a}{z \, b}
   \SEP \langle\textnormal{\texttt{dll}},1\rangle(z\,a\,c\,d)
  }
\end{align*}
Hence, the refined predicate $\textnormal{\texttt{dll}}$ specifies all non-empty doubly-linked lists.
%
\end{example}

To answer question (1) we then check whether the set of unfoldings of a refined SID is non-empty.
This  boils down to a simple reachability analysis.
\begin{lemma} \label{thm:symbolic-heaps:emptiness}
  Given an SID $\SRD$ and a predicate symbol $\PS \in \PRED(\SRD)$,
  it is decidable in linear time
  whether the set of unfoldings of $\PS$ is empty, i.e., $\CALLSEM{\PS\T{x}}{\SRD} = \emptyset$.
\end{lemma}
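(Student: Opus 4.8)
The plan is to reduce the emptiness question to reachability in a directed graph built from the SID. The key observation is that $\CALLSEM{\PS\T{x}}{\SRD} \neq \emptyset$ iff there exists a \emph{finite} unfolding tree rooted at some rule for $\PS$, and such a tree exists iff $\PS$ is \enquote{productive}, i.e., one can reach a reduced symbolic heap from $\PS$ by repeatedly unfolding predicate calls. First I would define a set $\textit{Prod} \subseteq \PRED(\SRD)$ of productive symbols as the least fixed point of the following rule: $\PS \in \textit{Prod}$ whenever there is a rule $\SRDRULE{\PS}{\sh} \in \SRD$ such that every predicate symbol occurring in $\CALLS{\sh}$ is already in $\textit{Prod}$. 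In particular, any rule whose right-hand side is a reduced symbolic heap (i.e.\ $\NOCALLS{\sh} = 0$) immediately witnesses productivity, which provides the base case. This is exactly a dependency/reachability computation: think of a bipartite graph with a node per predicate symbol and a node per rule, edges from a rule to the symbols it calls and from a symbol to its rules, and ask which symbol-nodes can reach \enquote{ground} via an alternating (AND over calls of a rule, OR over rules of a symbol) traversal. This is the standard emptiness test for context-free/tree grammars and is computable by a straightforward worklist iteration.

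The main steps are then: (i) show $\CALLSEM{\PS\T{x}}{\SRD} \neq \emptyset$ iff $\PS \in \textit{Prod}$; (ii) argue the fixed point $\textit{Prod}$ is computable in linear time. For (i), the \enquote{if} direction is by induction on the stage at which $\PS$ enters $\textit{Prod}$: if $\PS$ enters at stage $k+1$ via rule $\SRDRULE{\PS}{\sh}$ with calls $\PS_1^{\sh},\dots,\PS_m^{\sh}$ all in $\textit{Prod}$ at stage $k$, then by the induction hypothesis each $\CALLSEM{\PS_i^{\sh}\T{x}}{\SRD}$ is non-empty, so we may pick $\rsh_i \in \CALLSEM{\PS_i^{\sh}\T{x}}{\SRD}$ for each $i$ and, choosing an unfolding tree $t_i$ for each with $\UNFOLD{t_i} = \rsh_i$ (Definition~\ref{def:symbolic-heaps:unfolding}), assemble the tree $t$ with $t(\EMPTYSEQ) = \sh$ and $\SUBTREE{t}{i} = t_i$; by Definition~\ref{def:sl:unfolding-trees} this is an unfolding tree of $\PS\T{x}$, hence $\UNFOLD{t} \in \CALLSEM{\PS\T{x}}{\SRD}$. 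For the \enquote{only if} direction: given $t \in \UTREES{\SRD}{\PS\T{x}}$ with $\UNFOLD{t}$ defined, induct on the height of $t$; the leaves of $t$ carry reduced symbolic heaps, so their predicate symbols are productive (via the rule used at the leaf), and propagating upward through the AND/OR structure shows every symbol labeling a node of $t$ is productive, in particular $\PS$.

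For step (ii), the worklist algorithm that computes $\textit{Prod}$ runs in time linear in the size of $\SRD$: maintain for each rule a counter of how many of its called predicate symbols are not yet known productive; whenever a symbol becomes productive, decrement the counters of all rules that call it, and enqueue the head of any rule whose counter hits zero. Each rule is \enquote{fired} at most once and each predicate-symbol occurrence in the SID is touched a constant number of times, giving the linear bound (under the usual assumption that the SID is presented so that, e.g., the list of rules mentioning a given symbol is available, or after a linear-time preprocessing pass that builds these adjacency lists). I do not expect any real obstacle here; the only mild subtlety is being careful that the recursive definition of $\UNFOLD{\cdot}$ is total precisely on finite unfolding trees, so that \enquote{$\CALLSEM{\PS\T{x}}{\SRD} \neq \emptyset$} genuinely corresponds to the existence of a finite well-founded derivation — which is exactly what the least-fixed-point characterization of $\textit{Prod}$ captures.
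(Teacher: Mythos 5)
Your proposal is correct and follows essentially the same route as the paper: the paper's proof sketch defines exactly your set $\textit{Prod}$ (called $R$ there) as the least set closed under the rule ``$I \in R$ if some rule $\SRDRULE{I}{\sh}$ has all its called predicates already in $R$'' and computes it by a linear-time backward reachability analysis. You merely spell out the two correctness inductions and the counter-based worklist implementation that the paper leaves implicit.
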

\begin{proof}[sketch]
    It suffices to check whether the predicate $\PS$ lies in the least set $R$ such that
    (1) $I \in R$ if $\SRDRULE{I}{\rsh} \in \SRD$ for some $\rsh \in \RSL{}{}$, and
    (2) $I \in R$ if $\SRDRULE{I}{\sh} \in \SRD$ and for each $\CALLN{i}{\sh}$, $1 \leq i \leq \NOCALLS{\sh}$, $\PS_i^{\sh} \in R$.
The set $R$ is computable in linear time by a straightforward backward reachability analysis.
\qed
\end{proof}
As outlined before, putting the Refinement Theorem and
Lemma~\ref{thm:symbolic-heaps:emptiness} together immediately yields a
decision procedure for checking whether some unfolding of a predicate
symbol $P$ is accepted by a heap automaton:
%
Construct the refined SID 
and subsequently check whether the set of unfoldings of $P$ is non-empty.

To extend this result from unfoldings of single predicates to
unfoldings of arbitrary symbolic heaps $\sh$, we just add a rule
$\SRDRULE{\PS}{\sh}$, where $\PS$ is a fresh predicate symbol, and
proceed as before.
\begin{corollary} \label{thm:compositional:existence}
 Let $\HA{A}$ be a heap automaton over $\SHCLASS$ and $\SRD \in \SETSRD{\SRDCLASS}$.
 Then, for each $\sh \in \SL{\SRD}{\SRDCLASS}$, it is decidable whether there exists $\rsh \in \CALLSEM{\sh}{\SRD}$ such that $\rsh \in L(\HA{A})$.
\end{corollary}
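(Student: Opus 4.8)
The plan is to reduce Corollary~\ref{thm:compositional:existence} to the already-established machinery for single predicate symbols, namely the Refinement Theorem (Theorem~\ref{thm:compositional:refinement}) together with the emptiness check of Lemma~\ref{thm:symbolic-heaps:emptiness}. The key observation is that an arbitrary symbolic heap $\sh \in \SL{\SRD}{\SRDCLASS}$ can be turned into a predicate symbol without changing its set of unfoldings: introduce a fresh predicate symbol $\PS_{\sh}$ of arity $\NOFV{\sh}$, add the single rule $\SRDRULE{\PS_{\sh}}{\sh}$ to $\SRD$, and call the resulting SID $\SRD'$. Then $\SRD' \in \SETSRD{\SRDCLASS}$ because $\sh \in \SL{}{\SRDCLASS}$ by assumption, and by Definition~\ref{def:sl:unfolding-trees} the unfolding trees of $\PS_{\sh}\FV{0}{}$ w.r.t.\ $\SRD'$ are in bijection with the unfolding trees of $\sh$ w.r.t.\ $\SRD$ (the extra root node contributes nothing but the renaming of $\PS_{\sh}$ back to $\sh$), so $\CALLSEM{\PS_{\sh}\FV{0}{}}{\SRD'} = \CALLSEM{\sh}{\SRD}$.

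First I would make the above reduction precise, taking care that $\PS_{\sh}$ is genuinely fresh (so no existing rule of $\SRD$ mentions it) and that the parameter list $\FV{0}{}$ of the new predicate call matches the free variables of $\sh$. Next I would apply the Refinement Theorem to the heap automaton $\HA{A}$ and the SID $\SRD' \in \SETSRD{\SRDCLASS}$, obtaining an effectively constructible refined SID $\SRDALT \in \SETSRD{\SRDCLASS}$ with $\CALLSEM{\PS\FV{0}{}}{\SRDALT} = \CALLSEM{\PS\FV{0}{}}{\SRD'} \cap L(\HA{A})$ for every $\PS \in \PRED(\SRD')$; in particular for $\PS = \PS_{\sh}$ this gives $\CALLSEM{\PS_{\sh}\FV{0}{}}{\SRDALT} = \CALLSEM{\sh}{\SRD} \cap L(\HA{A})$. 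Finally I would invoke Lemma~\ref{thm:symbolic-heaps:emptiness} on $\SRDALT$ and the predicate symbol $\PS_{\sh}$ to decide whether $\CALLSEM{\PS_{\sh}\T{x}}{\SRDALT} = \emptyset$. By construction this set is empty precisely when no unfolding of $\sh$ lies in $L(\HA{A})$, so the negation of the emptiness test answers the question posed in the corollary. Chaining the three effective/decidable steps yields a decision procedure, proving the statement.

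Since every ingredient is already available, there is no serious obstacle; the only points requiring care are bookkeeping ones. I would need to check that the freshly added rule does not violate membership in $\SETSRD{\SRDCLASS}$ — which holds exactly because $\sh \in \SL{\SRD}{\SRDCLASS} \subseteq \SL{}{\SRDCLASS}$ — and that the bijection between unfolding trees of $\PS_{\sh}$ and unfolding trees of $\sh$ really induces equality of the unfolding \emph{sets} (not merely the tree sets), which follows from Definition~\ref{def:symbolic-heaps:unfolding} since $\UNFOLD{t}$ for the one-step tree rooted at $\PS_{\sh}$ is just $\UNFOLD{\SUBTREE{t}{1}}$ up to the renaming of existentially quantified variables, and that renaming is semantically immaterial. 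With these routine checks in place, the corollary follows immediately from Theorem~\ref{thm:compositional:refinement} and Lemma~\ref{thm:symbolic-heaps:emptiness}.
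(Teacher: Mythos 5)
Your proposal is correct and follows exactly the paper's argument: the text preceding the corollary introduces a fresh predicate symbol $\PS$ with the rule $\SRDRULE{\PS}{\sh}$, applies the Refinement Theorem to the extended SID, and then invokes Lemma~\ref{thm:symbolic-heaps:emptiness} to check non-emptiness of the refined predicate's unfoldings. The bookkeeping points you flag (freshness of the predicate, preservation of membership in $\SETSRD{\SRDCLASS}$, and equality of unfolding sets under the one-step wrapping) are exactly the routine checks the paper leaves implicit.
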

\begin{algorithm}[t]
  \SetKwInOut{Input}{Input}\SetKwInOut{Output}{Output}
  \hrule \vspace*{0.5em}
  \Input{SID $\SRD$,~ $I \in \PRED(\SRD)$,~ $\HA{A} = (Q,\SHCLASS,\Delta,F)$ }
  \Output{yes iff $\CALLSEM{I\T{x}}{\SRD} \cap L(\mathfrak{A}) = \emptyset$ }
  \vspace*{0.5em} \hrule
  \BlankLine
  {R$~\leftarrow~\emptyset$};\\
  \Repeat{R reaches a fixed point (w.r.t. all choices of rules)}{%
    \lIf{$R \cap \left(\{I\} \times F\right) \neq \emptyset$}{\Return{no}}
    {pick a state  $q$ in $Q$};
    {pick a rule $\SRDRULE{\PS}{\sh}$ in $\SRD$};\\
    s$~\leftarrow~ \EMPTYSEQ$; ~\text{// list of states of $\HA{A}$}\\
    \For{$i$ in $1$ to $\NOCALLS{\sh}$ }
    {
      pick $(\PS_i^{\sh},p) \in R$;
      append(s,$p$) ~\text{// base case if $\NOCALLS{\sh} = 0$}
    }
    \lIf{$(s,\sh,q) \in \Delta$}{
	    R$~\leftarrow~$ R $~\cup~ \{(P,q)\}$
    }
  }
  \Return{yes}
  \hrule \vspace*{0.5em}
  \caption{
     On-the-fly construction of a refined SID with
     emptiness check.
  }
  \label{alg:on-the-fly-refinement}
\end{algorithm}
The refinement and emptiness check can also be integrated:
Algorithm~\ref{alg:on-the-fly-refinement} displays a simple procedure that constructs the refined SID $\SRDALT$ from Theorem~\ref{thm:compositional:refinement} on-the-fly while checking whether its set of unfoldings is empty for a given predicate symbol.
Regarding complexity, the size of a refined SID\footnote{We assume a reasonable function $\SIZE{.}$ assigning a size to SIDs, symbolic heaps, unfolding trees, etc. For instance, the size $\SIZE{\SRD}$ of an SID $\SRD$ is given by the product of its number of rules and the size of the largest symbolic heap contained in any rule.} obtained from an SID $\SRD$
and a heap automaton $\HA{A}$ is bounded by $\SIZE{\SRD} \cdot \SIZE{Q_{\HA{A}}}^{M+1}$,
where $M$ is the maximal number of predicate calls occurring in any rule of $\SRD$.
Thus, the aforementioned algorithm runs in time
$\BIGO{\SIZE{\SRD} \cdot \SIZE{Q_{\HA{A}}}^{M+1} \cdot \SIZE{\Delta_{\HA{A}}}}$,
where $\SIZE{\Delta_{\HA{A}}}$ denotes the complexity of deciding whether the transition relation $\Delta_{\HA{A}}$ holds for a given  tuple of states and a symbolic heap occurring in a rule of $\SRD$.
\begin{example}
  Resuming our toy example, we check whether some unfolding of the doubly-linked list predicate  \texttt{dll} (see Example~\ref{ex:srd})
  contains points-to assertions. 
  Formally, we decide whether $\CALLSEM{\texttt{dll}\,\FV{0}{}}{\SRD} \cap L(\HA{A}) \neq \emptyset$, where $\HA{A}$ is the heap automaton introduced in Example~\ref{ex:heap-automaton:toy}.
  Algorithm~\ref{alg:on-the-fly-refinement} first picks the rule that
  maps \texttt{dll} to the empty list segment and consequently adds
  $\langle\texttt{dll},0\rangle$ to the set $R$ of reachable predicate--state
  pairs.  In the next iteration, it picks the rule that maps to the
  non-empty list.  Since $\langle\mathtt{dll},0\rangle \in R$, $s$ is set to $0$
  in the \textbf{do}-loop. Abbreviating the body of the rule to $\sh$,
  we have $(0,\sh,1) \in \Delta$, so the algorithm adds
  $\langle\texttt{dll},1\rangle$ to $R$.  After that, \emph{no} is returned,
  because $1$ is a final state of $\HA{A}$.  Hence, some unfolding of
  $\texttt{dll}$ is accepted by $\HA{A}$ and thus contains points-to
  assertions.
\end{example}
%
We now revisit question (2) from above--are all unfoldings accepted by a heap automaton?-- and observe that heap automata
enjoy several closure properties.
\begin{theorem}\label{thm:refinement:boolean}
  Let $\HA{A}$ and $\HA{B}$ be heap automata over $\SHCLASS$.
  Then there exist heap automat $\HA{C}_1,\HA{C}_2,\HA{C}_3$ over $\SHCLASS$ with
  $L(\HA{C}_1) = L(\HA{A}) \cup L(\HA{B})$,
  $L(\HA{C}_2) = L(\HA{A}) \cap L(\HA{B})$, and
  $L(\HA{C}_3) = \RSHCLASS \setminus L(\HA{A})$, respectively.\footnote{Formal constructions are found in Appendix~\ref{app:refinement:boolean}.}
  %
\end{theorem}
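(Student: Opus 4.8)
The plan is to prove the three closure properties in the same way one proves closure of finite tree automata under Boolean operations, suitably adapted to the compositionality requirement of heap automata. The key structural fact is that for heap automata the run on a reduced symbolic heap is essentially a single transition $\OMEGA{A}{q}{\rsh}$, but compositionality forces the transition relation on \emph{all} symbolic heaps (not just reduced ones) to be consistent with the tree-shaped evaluation; hence when I build a product or complement automaton I must define $\Delta$ on all of $\SHCLASS$ and then verify compositionality, not merely verify the language equality on reduced heaps.

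For union and intersection I would take the product construction. Given $\HA{A} = (Q_{\HA{A}}, \SHCLASS, \Delta_{\HA{A}}, F_{\HA{A}})$ and $\HA{B} = (Q_{\HA{B}}, \SHCLASS, \Delta_{\HA{B}}, F_{\HA{B}})$, set $Q = Q_{\HA{A}} \times Q_{\HA{B}}$ and declare $\MOVE{C}{(p_{\HA{A}},p_{\HA{B}})}{\sh}{((q^1_{\HA{A}},q^1_{\HA{B}})\ldots(q^m_{\HA{A}},q^m_{\HA{B}}))}$ exactly when $\MOVE{A}{p_{\HA{A}}}{\sh}{q^1_{\HA{A}}\ldots q^m_{\HA{A}}}$ and $\MOVE{B}{p_{\HA{B}}}{\sh}{q^1_{\HA{B}}\ldots q^m_{\HA{B}}}$. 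Decidability of $\Delta$ is immediate since it is a conjunction of two decidable relations, and the arity condition $\SIZE{\T{q}} = \NOCALLS{\sh}$ is inherited. Compositionality of the product follows by splitting the biconditional in Definition~\ref{def:refinement:automaton} componentwise: the $\HA{A}$-component of the required existential witness $\T{q} \in Q^m$ is supplied by compositionality of $\HA{A}$, the $\HA{B}$-component by compositionality of $\HA{B}$, and conversely a product transition on $\sh[P_1/\rsh_1,\ldots,P_m/\rsh_m]$ projects to $\HA{A}$- and $\HA{B}$-transitions that can be recombined. Finally choosing $F = F_{\HA{A}} \times F_{\HA{B}}$ gives $L(\HA{C}_2) = L(\HA{A}) \cap L(\HA{B})$, and choosing $F = (F_{\HA{A}} \times Q_{\HA{B}}) \cup (Q_{\HA{A}} \times F_{\HA{B}})$ gives $L(\HA{C}_1) = L(\HA{A}) \cup L(\HA{B})$; in both cases the verification reduces to the observation that on a reduced heap the run of $\HA{C}_i$ in a single step carries exactly the pair of states that $\HA{A}$ and $\HA{B}$ would assign.

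For complement the subtlety is that a heap automaton need not be deterministic or total: there may be a reduced heap $\rsh$ with \emph{no} transition $\OMEGA{A}{q}{\rsh}$, or with several. I would therefore first determinize and complete $\HA{A}$ by the usual subset construction lifted to heap automata: let $Q' = 2^{Q_{\HA{A}}}$, and define $\MOVE{A'}{S}{\sh}{S_1\ldots S_m}$ iff $S = \{\, p \mid \exists \T{q} \in S_1 \times \cdots \times S_m\,.\, \MOVE{A}{p}{\sh}{\T{q}} \,\}$. This $\Delta'$ is a function of $(\sh,S_1,\ldots,S_m)$, hence decidable and total, and one checks compositionality of $\HA{A}'$ directly from that of $\HA{A}$ — essentially the standard argument that subset states commute with substitution into predicate calls, using that the reachable-state set of $\sh[P_i/\rsh_i]$ is determined by $\sh$ together with the reachable-state sets of the $\rsh_i$. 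A straightforward induction (on unfolding trees, as in Lemma~\ref{thm:symbolic-heaps:semantics}) then shows $\OMEGA{A'}{S}{\rsh}$ holds for the \emph{unique} $S$ equal to $\{q \mid \OMEGA{A}{q}{\rsh}\}$, so that with $F' = \{ S \mid S \cap F_{\HA{A}} \neq \emptyset \}$ we recover $L(\HA{A}') = L(\HA{A})$; finally $\HA{C}_3$ is $\HA{A}'$ with final states complemented, $F_3 = \{ S \mid S \cap F_{\HA{A}} = \emptyset \}$, giving $L(\HA{C}_3) = \RSHCLASS \setminus L(\HA{A})$ because exactly one $S$ is reachable on each $\rsh$.

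The main obstacle I expect is the complement case, specifically verifying the compositionality property for the subset-construction automaton: unlike ordinary tree automata, here I must argue about $\Delta'$ on \emph{all} symbolic heaps, and I must make sure the set-valued transition interacts correctly with the syntactic substitution $\sh[P_1/\rsh_1,\ldots,P_m/\rsh_m]$. The clean way is to prove an auxiliary lemma that, for any $\sh$ with $m$ calls and any reduced $\rsh_1,\ldots,\rsh_m$ with $\OMEGA{A}{q_i}{\rsh_i}$ for the states in given sets $S_i$, the set of output states of $\HA{A}$ on $\sh[P_1/\rsh_1,\ldots,P_m/\rsh_m]$ equals $\{p \mid \exists \T{q}\in S_1\times\cdots\times S_m.\ \MOVE{A}{p}{\sh}{\T{q}}\}$ — but this is just the compositionality of $\HA{A}$ applied pointwise over all state combinations, so it goes through. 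The union and intersection constructions are routine once this machinery is in place. Formal constructions are deferred to Appendix~\ref{app:refinement:boolean}.
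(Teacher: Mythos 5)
Your intersection and complement constructions coincide with the paper's: the plain product with $F = F_{\HA{A}} \times F_{\HA{B}}$ for intersection, and for complement the subset construction over $2^{Q_{\HA{A}}}$ with final states $\{X \mid X \cap F_{\HA{A}} = \emptyset\}$ — your functional formulation of the transition relation, $S = \{p \mid \exists \T{q} \in S_1 \times \cdots \times S_m \,.\, \MOVE{A}{p}{\sh}{\T{q}}\}$, is exactly the conjunction of the two clauses the paper uses, and your auxiliary lemma is the pointwise application of compositionality that the paper's proof also relies on.

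There is, however, a genuine gap in your union construction. You yourself observe (in the complement case) that a heap automaton need not be total: there may be a reduced symbolic heap $\rsh$ with no state $q$ satisfying $\OMEGA{B}{q}{\rsh}$. Now take $\rsh \in L(\HA{A})$ such that $\HA{B}$ has no run on $\rsh$ at all. In your product automaton a transition $(\EMPTYSEQ, \rsh, (p,q))$ requires both $(\EMPTYSEQ,\rsh,p) \in \Delta_{\HA{A}}$ and $(\EMPTYSEQ,\rsh,q) \in \Delta_{\HA{B}}$; since the latter fails for every $q$, the product assigns no state to $\rsh$, so $\rsh \notin L(\HA{C}_1)$ even though $\rsh \in L(\HA{A}) \cup L(\HA{B})$. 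The choice $F = (F_{\HA{A}} \times Q_{\HA{B}}) \cup (Q_{\HA{A}} \times F_{\HA{B}})$ therefore only yields the union for \emph{total} automata. The paper sidesteps this by padding each state space with a fresh symbol $\bot$ and letting a $\HA{C}_1$-transition be either an $\HA{A}$-transition with the $\HA{B}$-component frozen at $\bot$ or vice versa, with $F = F_{\HA{A}} \times \{\bot\} \cup \{\bot\} \times F_{\HA{B}}$; alternatively you could first totalize both automata (e.g.\ by the subset construction you already set up for complementation, which produces a total and even deterministic automaton) and then your product argument goes through. As written, though, the union case does not hold.
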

%
%
Then, by the equivalence $X \subseteq Y \Leftrightarrow X \cap \overline{Y} = \emptyset$ and Theorem~\ref{thm:refinement:boolean}, it is also decidable whether every unfolding of a symbolic heap is accepted by a heap automaton.
\begin{corollary} \label{thm:compositional:inclusion}
 Let $\HA{A}$ be a heap automaton over $\SHCLASS$ and $\SRD \in \SETSRD{\SRDCLASS}$.
 Then, for each $\sh \in \SHCLASS$, it is decidable whether $\CALLSEM{\sh}{\SRD} \subseteq L(\HA{A})$ holds.
\end{corollary}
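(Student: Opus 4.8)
The plan is to reduce the inclusion question to the non-emptiness question that Corollary~\ref{thm:compositional:existence} already shows to be decidable, using the complement closure of heap automata. Concretely, I would start from the elementary set identity $X \subseteq Y \iff X \cap \overline{Y} = \emptyset$, instantiated with $X = \CALLSEM{\sh}{\SRD}$ and $Y = L(\HA{A})$. Every unfolding of $\sh$ is a reduced symbolic heap, and since $\SRD \in \SETSRD{\SRDCLASS}$ the relevant universe for complementation is $\RSHCLASS$, so the identity reads: $\CALLSEM{\sh}{\SRD} \subseteq L(\HA{A})$ holds if and only if $\CALLSEM{\sh}{\SRD} \cap \left(\RSHCLASS \setminus L(\HA{A})\right) = \emptyset$.

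Next I would invoke Theorem~\ref{thm:refinement:boolean} to obtain a heap automaton $\HA{C}_3$ over $\SHCLASS$ with $L(\HA{C}_3) = \RSHCLASS \setminus L(\HA{A})$. Applying the decision procedure of Corollary~\ref{thm:compositional:existence} to $\HA{C}_3$ and $\SRD$ then determines whether there exists $\rsh \in \CALLSEM{\sh}{\SRD}$ with $\rsh \in L(\HA{C}_3)$, i.e., whether $\CALLSEM{\sh}{\SRD} \cap \left(\RSHCLASS \setminus L(\HA{A})\right) \neq \emptyset$. Negating this answer decides $\CALLSEM{\sh}{\SRD} \subseteq L(\HA{A})$, which completes the argument. (Equivalently, one may run Algorithm~\ref{alg:on-the-fly-refinement} on the SID $\SRD$ extended with a fresh rule $\SRDRULE{\PS}{\sh}$, the predicate $\PS$, and the automaton $\HA{C}_3$; by its specification its output \enquote{yes} coincides exactly with the inclusion holding.)

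The proof is essentially bookkeeping once the closure and refinement machinery is available, so I do not expect a genuine obstacle. The one point that deserves care is that the complement must be taken over the correct universe: both $L(\HA{C}_3)$ and the decision procedure of Corollary~\ref{thm:compositional:existence} range over $\RSHCLASS$, so I would check that $\CALLSEM{\sh}{\SRD} \subseteq \RSHCLASS$, i.e., that unfolding stays within the class described by $\SRDCLASS$ — which is precisely why the earlier results were phrased relative to $\SETSRD{\SRDCLASS}$ and $\RSHCLASS$ rather than over all symbolic heaps. A secondary, cosmetic point is the apparent mismatch between \enquote{$\sh \in \SHCLASS$} here and \enquote{$\sh \in \SL{\SRD}{\SRDCLASS}$} in Corollary~\ref{thm:compositional:existence}; since $\CALLSEM{\sh}{\SRD}$ is only meaningful when the predicate calls of $\sh$ are defined in $\SRD$, we may harmlessly assume $\sh \in \SL{\SRD}{\SRDCLASS}$.
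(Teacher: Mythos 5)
Your proposal is correct and follows exactly the paper's own argument: the paper derives this corollary from the identity $X \subseteq Y \Leftrightarrow X \cap \overline{Y} = \emptyset$, the complementation closure of Theorem~\ref{thm:refinement:boolean}, and the decision procedure of Corollary~\ref{thm:compositional:existence}. Your additional remarks on the universe of complementation and the typing of $\sh$ are sensible but do not change the substance.
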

%
Note that complementation of heap automata
in general leads to an exponentially larger state space and
exponentially higher complexity of evaluating $\Delta$.
Thus, $\CALLSEM{\sh}{\SRD} \subseteq L(\HA{A})$ is decidable in time
$
  \BIGO{\left(\SIZE{\sh} + \SIZE{\SRD}\right) \cdot
  \SIZE{2^{Q_{\HA{A}}}}^{2(M+1)} \cdot  \SIZE{\Delta_{\HA{A}}}}
$.
In many cases it is, however, possibly to construct smaller automata
for the complement directly to obtain more efficient decision
procedures.
For example, this is the case for most heap automata considered in Section~\ref{sec:zoo}.
Apart from decision procedures, Theorem~\ref{thm:compositional:refinement} enables systematic refinement of SIDs according to heap automata in order to establish desired properties.
%
For instance, as shown in Section~\ref{sec:zoo}, an SID in which every unfolding is satisfiable can be constructed from any given SID.
Another application of Theorem~\ref{thm:compositional:refinement} is counterexample generation for
systematic debugging of SIDs that are manually written as data structure specifications
or even automatically generated.
Such counterexamples are obtained by constructing the refined SID of the complement of a given heap automaton.
%
Further applications are examined in the following. 

\begin{remark}
    While we focus on the well-established symbolic heap fragment of separation logic, we remark that the general reasoning principle underlying heap automata is also applicable to
    check robustness properties of richer fragments.
    For example, permissions~\cite{bornat2005permission} are easily integrated within our framework.
\end{remark}


%
\section{A Zoo of Robustness Properties}
\label{sec:zoo}
This section demonstrates the wide applicability of heap automata to
decide and establish robustness properties of SIDs.
In particular, the sets of symbolic heaps informally presented in the introduction can be accepted by heap automata over the set
$\SHCLASSFV{\alpha}$ of symbolic heaps with at most
$\alpha \geq 0$ free variables (cf. Example~\ref{ex:shclass}). 
%
Furthermore, we analyze the complexity of related decision problems. 
Towards a formal presentation, some terminology is needed.
\begin{definition}\label{def:symbolic-heaps:models}
 %
 The set of \emph{tight} \emph{models} of a symbolic heap $\sh \in \SL{\SRD}{}$ is defined as
 %
 $\MODELS{\sh} \DEFEQ \{ (\stack,\heap) \in \STATES \,|\, \DOM(\stack) = \FV{0}{\sh} ,\, \stack,\heap \SAT{\SRD} \sh \}$.
 %
\end{definition}
We often consider relationships between variables that
hold in every tight model of a reduced symbolic heap.
Formally, let $\rsh \DEFEQ \exists \BV{} . \SPATIAL{} : \PURE{} \in \RSL{}{}$.
Moreover, let $\STRIP{\rsh}$ be defined as $\rsh$ except that each of its variables is free,
 i.e., $\STRIP{\rsh} \DEFEQ \SPATIAL{} : \PURE{}$.
Then two variables $x,y \in \VAR(\rsh)$ are \emph{definitely} \emph{(un)equal} in $\rsh$,
written $x \MEQ{\rsh} y$ ($x \MNEQ{\rsh} y$),
 if $\stack(x) = \stack(y)$ ($\stack(x) \neq \stack(y)$) holds for every
$(s,h) \in \MODELS{\STRIP{\rsh}}$.
 %
 Analogously, a variable is \emph{definitely} \emph{allocated} if it is definitely equal to a variable occurring on the left-hand side of a points-to assertion.
 Thus the set of definitely allocated variables in
 $\rsh$ is given by
 \begin{align*}
  \ALLOC{\rsh} ~=~  \{ x \in \VAR(\rsh) ~|~ \forall (\stack,\heap) \in \MODELS{\STRIP{\rsh}} ~.~ \stack(x) \in \DOM(\heap) \}.
 \end{align*}
 Finally, a variable $x$ \emph{definitely} \emph{points-to} variable $y$ in $\rsh$,
 written $x \MPT{\rsh} y$, if for every $(\stack,\heap) \in \MODELS{\STRIP{\rsh}}$,
 we have $\stack(y) \in \heap(\stack(x))$.
\begin{example}
  Recall the symbolic heap $\rsh$ in the first rule of predicate
  $\texttt{tll}$ from Example~\ref{ex:srd}.
  Then $\ALLOC{\rsh} = \{ a, b \}$
  and neither $a \MEQ{\rsh} c$ nor $a \MNEQ{\rsh} c$ holds. 
  Further,
  \begin{align*}
    & a \MEQ{\rsh} b ~\text{is true,}~
    && a \MEQ{\rsh} c ~\text{is false,}~
    & a \MNEQ{\rsh} \NIL ~\text{is true,}~ \\
    & a \MNEQ{\rsh} c   ~\text{is false,}~
    && a \MPT{\rsh} c  ~\text{is true,}~
    & c \MPT{\rsh} a  ~\text{is false.}~ \quad
  \end{align*}
\end{example}
\begin{remark}\label{rem:closure}
%
All 
definite relationships are decidable in
polynomial time.
In fact, each of these relationships boils down
to first adding inequalities $x \neq \NIL$ and $x \neq y$ for every pair $x$,
$y$ of distinct variables occurring on the left-hand side of points-to assertions to the set of pure formulas
and then computing its (reflexive), symmetric (and transitive) closure with respect to $\neq$ (and $=$).
Furthermore, if the closure contains a contradiction, e.g., $\NIL \neq \NIL$, it is set to all pure formulas over the variables of a given reduced symbolic heap.
After that, it is straightforward to decide in polynomial time
whether variables are definitely allocated, (un)equal or pointing to each other.
%
\end{remark}
\subsection{Tracking Equalities and Allocation}\label{sec:zoo:track}
Consider the symbolic heap
$ \sh \DEFEQ \exists x\,y\,z . \PS_1(x~y) \SEP \PS_2(y~z) : \{ x = z \}$.
Clearly, $\sh$ is unsatisfiable if $x = y$ holds for every unfolding of $\PS_1(x~y)$ and $y \neq z$ holds for every unfolding of $\PS_2(y~z)$.
Analogously, $\sh$ is unsatisfiable if $x$ is allocated
in every unfolding of $\PS_1(x~y)$ and $z$ is allocated in every unfolding of $\PS_2(y~z)$,
because $\PTS{x}{\_} \SEP \PTS{z}{\_}$ implies $x \neq z$.
This illustrates that robustness properties, such as satisfiability, require detailed knowledge about the relationships between parameters of predicate calls.
Consequently, we construct a heap automaton $\HATRACK$ that keeps track of this knowledge.
More precisely, $\HATRACK$ should accept those unfoldings
in which it is guaranteed that
\begin{compactitem}
\item given a 
      set $A \subseteq \FV{0}{}$, exactly the variables in
  $A$ are definitely allocated, and
\item 
    exactly the (in)equalities in a given set of pure formulas $\PURE{}$ hold.
\end{compactitem}
%
%
Towards a formal construction, we formalize the desired set of symbolic heaps.
\begin{definition}\label{def:zoo:track}
  Let $\alpha \in \POSN$ and $\FV{0}{}$ be a tuple of variables with $\NOFV{} = \alpha$.
  Moreover, let $A \subseteq \FV{0}{}$ and $\PURE{}$ be a finite set of pure formulas over $\FV{0}{}$.
  %
  %
  The \emph{tracking} \emph{property} $\TRACK(\alpha,A,\PURE{})$ is the set
  \begin{align*}
    & \{ \rsh \in \RSHCLASSFV{\alpha} ~|~ \forall i,j ~.~ \PROJ{\FV{0}{}}{i} \in A ~\text{iff}~ \PROJ{\FV{0}{}}{i} \in \ALLOC{\rsh} \\
    & \qquad \text{and}~ \PROJ{\FV{0}{}}{i} \sim \PROJ{\FV{0}{}}{j} \in \PURE{} ~~\text{iff}~~ \PROJ{\FV{0}{\rsh}}{i} \MSIM{\rsh} \PROJ{\FV{0}{\rsh}}{j} \}.
  \end{align*}
\end{definition}
%
Intuitively, our heap automaton $\HATRACK$
stores in its state space which free variables are definitely equal, unequal and allocated.
Its transition relation then enforces that these stored information are correct, i.e.,
a transition $\MOVE{\HATRACK}{p}{\sh}{\T{q}}$ is only possible if the information stored in $p$ is consistent with
$\sh$ and with the information stored in the states $\T{q}$ for the predicate calls of $\sh$.
Formally, let $\FV{0}{}$ be a tuple of variables with $\NOFV{} = \alpha$ and
$\textnormal{Pure}(\FV{0}{})
  \DEFEQ 2^{\{ \PROJ{\FV{0}{}}{i} \sim \PROJ{\FV{0}{}}{j} ~|~ 0 \leq
  i,j \leq \alpha, \sim \in \{\,=,\,\neq\,\}  \}}$
be the powerset of all pure formulas over $\FV{0}{}$.
The information stored by our automaton consists of a set of free variables
$B \subseteq \FV{0}{}$
and a set of pure formulas $\Lambda \in \textnormal{Pure}(\FV{0}{})$.
Now, for some unfolding $\rsh$ of a symbolic heap $\sh$,
assume that $B$ is chosen as the set of all definitely allocated free variables of $\rsh$.
Moreover, assume $\Lambda$ is the set of all definite (in)equalities between free variables in $\rsh$.
We can then construct a reduced symbolic heap $\SSIGMA{\sh}{(B,\Lambda)}$ from $B$ and $\Lambda$ that precisely captures these relationships between free variables.

\begin{definition} \label{def:zoo:track:kernel}
  Let $\sh$ be a symbolic heap, $B \subseteq \FV{0}{}$ and $\Lambda \in \textnormal{Pure}(\FV{0}{})$.
  Furthermore, let $\textrm{min}(B,\Lambda) = \{ \FV{0}{i} \in B ~|~ \neg \exists \FV{0}{j} \in B . j < i ~\text{and}~ \FV{0}{i} \MEQ{\Lambda} \FV{0}{j} \}$ be the set of minimal (w.r.t. to occurrence in $\FV{0}{}$) allocated free variables.
  Then
  \begin{align*}
   \SSIGMA{\sh}{(B,\Lambda)}
   ~\DEFEQ~ \bigstar_{\PROJ{\FV{0}{}}{i} \in \textrm{min}(B,\Lambda)} ~ \PTS{\PROJ{\FV{0}{\sh}}{i}}{\NIL} ~:~ \Lambda, 
  \end{align*}
  where we write
$\bigstar_{s \in S}\,\PTS{s}{\NIL}$ for 
          $\PTS{s_1}{\NIL} \SEP \ldots \SEP \PTS{s_k}{\NIL}$, $S = \{s_1,\ldots,s_k\}$.
\end{definition}

Consequently, the relationships between free variables remain unaffected if a predicate call of $\sh$ is replaced by $\SSIGMA{\sh}{(B,\Lambda)}$ instead of $\rsh$.
Thus, $\HATRACK$ has one state per pair $(B, \Lambda)$.
In the transition relation of $\HATRACK$ it suffices to replace each predicate call $\PS\FV{0}{}$
by the corresponding 
symbolic heap
$\SSIGMA{\PS\FV{0}{}}{(B,\Lambda)}$.
and check whether the current state
is consistent with the resulting symbolic heap.
%
%
%
%
Intuitively, a potentially large unfolding of a symbolic heap $\sh$ with $m$ predicate calls is ``compressed'' into a small one that contains all necessary information about parameters of predicate calls.
Here, $\T{q}$ is a sequence of pairs $(B,\Lambda)$ as explained above.
%
Formally,
\begin{definition}\label{def:zoo:track-automaton}
 %
 $\HATRACK = (Q,\SHCLASSFV{\alpha},\Delta,F)$
 is given by:
  \begin{align*}
      Q ~\DEFEQ~ & 2^{{\FV{0}{}}} ~\times~ \textnormal{Pure}(\FV{0}{}), \qquad \quad F ~\DEFEQ~ \{ (A,\PURE{}) \}, \\
      \Delta ~~:~~ & \MOVE{\HATRACK}{(A_0,\PURE{}_0)}{\sh}{\T{q}}
    ~\text{iff}~ \forall x,y \in \FV{0}{} ~.~ \\
                 & \quad y \in A_0 \leftrightarrow y^{\sh} \in \ALLOC{\SHRINK{\sh,\T{q}}} \\
                 & \quad \text{and}~ x \sim y \in \PURE{}_0 ~\leftrightarrow~ x^{\sh} \sim_{\SHRINK{\sh,\T{q}}} y^{\sh}~, \\
     \SHRINK{\sh,\T{q}} ~\DEFEQ~ &
        \sh\left[\PS_1 / \SSIGMA{\CALLN{1}{}}{\T{q}[1]}, \ldots, \PS_m / \SSIGMA{\CALLN{m}{}}{\T{q}[m]}\right]~,
  \end{align*}
  where $m = \NOCALLS{\sh} = \SIZE{\T{q}}$ is the number of predicate calls in $\sh$ and
  $y^{\sh}$ denotes the free variable of $\sh$ corresponding to $y \in \FV{0}{}$, i.e., if $y = \PROJ{\FV{0}{}}{i}$ then $y^{\sh} = \PROJ{\FV{0}{\sh}}{i}$.
%
\end{definition}
Since $\SHRINK{\rsh,\EMPTYSEQ} = \rsh$ holds for every reduced symbolic heap $\rsh$, it is straightforward to show that
$L(\HATRACK) = \TRACK(\alpha,A,\Pi)$.
Furthermore, $\HATRACK$ satisfies the compositionality property.
A formal proof is found in Appendix~\ref{app:zoo:track:property}.
%
Hence,
\begin{lemma}\label{thm:zoo:track:property}
 For all $\alpha \in \POSN$ and all sets
 $A \subseteq \FV{0}{}$, 
 $\PURE{} \in \textnormal{Pure}(\FV{0}{})$,
there is a heap automaton over $\SHCLASSFV{\alpha}$ accepting
 $\TRACK(\alpha,A,\PURE{})$.
\end{lemma}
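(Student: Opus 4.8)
The plan is to verify that the tuple $\HATRACK = (Q,\SHCLASSFV{\alpha},\Delta,F)$ given in Definition~\ref{def:zoo:track-automaton} is indeed a heap automaton and that its accepted language equals $\TRACK(\alpha,A,\PURE{})$. There are three things to check: (i) $\HATRACK$ is a well-defined heap automaton, i.e.\ $Q$ is finite and $\Delta$ is a decidable transition relation satisfying the arity constraint $\SIZE{\T{q}} = \NOCALLS{\sh}$; (ii) $\HATRACK$ is \emph{compositional} in the sense of Definition~\ref{def:refinement:automaton}; and (iii) $L(\HATRACK) = \TRACK(\alpha,A,\PURE{})$. Point (i) is immediate: $2^{\FV{0}{}} \times \PURES$ is finite since $\alpha$ is fixed, the arity constraint is built into the definition via $m = \NOCALLS{\sh} = \SIZE{\T{q}}$, and decidability of $\Delta$ follows from Remark~\ref{rem:closure}, since evaluating a transition amounts to building the reduced symbolic heap $\SHRINK{\sh,\T{q}}$ and then computing definite allocation and definite (in)equalities, all in polynomial time.

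For point (iii), the key observation is that $\SHRINK{\rsh,\EMPTYSEQ} = \rsh$ for every reduced symbolic heap $\rsh$ (there are no predicate calls to replace), so the single-step transition $\OMEGA{\HATRACK}{(A_0,\PURE{}_0)}{\rsh}$ holds iff $A_0$ is exactly the set of definitely allocated free variables of $\rsh$ and $\PURE{}_0$ is exactly the set of definite (in)equalities between free variables of $\rsh$. Hence $\rsh \in L(\HATRACK)$ iff $(A_0,\PURE{}_0)$ can be the unique such pair equal to $(A,\PURE{})$, which is precisely the defining condition of $\TRACK(\alpha,A,\PURE{})$ in Definition~\ref{def:zoo:track}. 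I would spell this out as a short chain of iff-equivalences, taking care with the bookkeeping between the abstract parameter tuple $\FV{0}{}$ and the actual free variables $\FV{0}{\rsh}$ via the correspondence $y \mapsto y^{\sh}$.

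The main work, and the step I expect to be the principal obstacle, is point (ii), compositionality. The heart of the matter is a ``kernel correctness'' lemma: for a symbolic heap $\sh$ with predicate calls $\CALLN{1}{},\ldots,\CALLN{m}{}$ and reduced symbolic heaps $\rsh_1,\ldots,\rsh_m$, if each $(B_i,\Lambda_i)$ is the pair of definite allocation and definite (in)equalities of $\rsh_i$, then replacing $\PS_i$ by $\rsh_i$ and replacing $\PS_i$ by the kernel $\SSIGMA{\CALLN{i}{}}{(B_i,\Lambda_i)}$ yield reduced symbolic heaps with the \emph{same} definite allocation and definite (in)equality relationships among the free variables of $\sh$. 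Intuitively this holds because the kernel $\SSIGMA{\CALLN{i}{}}{(B_i,\Lambda_i)}$ keeps one representative points-to assertion $\PTS{\PROJ{\FV{0}{\sh}}{j}}{\NIL}$ for each equivalence class of allocated free variables (this is the role of $\textrm{min}(B_i,\Lambda_i)$) and records exactly the pure formulas $\Lambda_i$, so that the closure computation of Remark~\ref{rem:closure} produces the same definite relationships on the shared free variables; internal (existentially quantified) variables of the $\rsh_i$ cannot affect definite relationships among free variables of $\sh$ beyond what is already captured by allocation and (in)equalities of the interface variables. Granting this lemma, $\SHRINK{\sh,\T{q}}$ and $\sh[\PS_1/\rsh_1,\ldots,\PS_m/\rsh_m]$ have identical definite allocation and (in)equality relations among free variables, so for any candidate output state $p = (A_0,\PURE{}_0)$ we get the equivalence
\[
  \bigl(\exists \T{q}.\ (\T{q},\sh,p)\in\Delta \text{ and } \bigwedge_i (\EMPTYSEQ,\rsh_i,\PROJ{\T{q}}{i})\in\Delta\bigr)
  \iff
  (\EMPTYSEQ,\ \sh[P_1/\rsh_1,\ldots,P_m/\rsh_m],\ p)\in\Delta,
\]
because the $\PROJ{\T{q}}{i}$ forced by the right conjuncts are exactly the $(B_i,\Lambda_i)$ above, and for this unique choice the left-hand transition condition is literally the definition of the right-hand one. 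I would carry out the argument in the order: first the kernel correctness lemma (the technical core), then well-definedness, then compositionality as a corollary, then the language characterization; the delicate part throughout is making precise that "definite relationships among free variables are insensitive to the internal structure of an unfolding, once allocation and (in)equalities of the interface are fixed", which is exactly what the kernel encodes. This matches the deferral to Appendix~\ref{app:zoo:track:property}. Finally, Lemma~\ref{thm:zoo:track:property} follows by packaging these facts: $\HATRACK$ is a compositional heap automaton over $\SHCLASSFV{\alpha}$ with $L(\HATRACK) = \TRACK(\alpha,A,\PURE{})$.
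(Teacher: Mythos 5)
Your proposal is correct and follows essentially the same route as the paper: it establishes well-definedness and the language equality $L(\HATRACK) = \TRACK(\alpha,A,\PURE{})$ via $\SHRINK{\rsh,\EMPTYSEQ} = \rsh$, and reduces compositionality to exactly the kernel-congruence lemma the paper proves in its appendix (that substituting $\SSIGMA{\CALLN{i}{}}{(B_i,\Lambda_i)}$ for $\rsh_i$ preserves definite allocation and definite (in)equalities among the free variables of $\sh$), with the transition relation forcing the unique choice $\PROJ{\T{q}}{i} = (B_i,\Lambda_i)$. The only cosmetic difference is that you motivate the kernel lemma via the syntactic closure computation of Remark~\ref{rem:closure}, whereas the paper proves it semantically by explicitly constructing witnessing heaps; the decomposition and the key lemma are the same.
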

\subsection{Satisfiability} \label{sec:zoo:sat}
Tracking relationships between free variables of symbolic heaps is a
useful auxiliary construction that serves as a building block in
automata for more natural properties. 
For instance, the heap automaton $\HATRACK$ constructed in Definition~\ref{def:zoo:track-automaton} can be reused to deal with the
%
%

%
\noindent\textbf{Satisfiability problem (\DPROBLEM{SL-SAT})}:
Given $\SRD \in \SETSRD{}$ and $\sh \in \SL{\SRD}{}$,
decide whether $\sh$ is satisfiable,
i.e., 
there exists 
$(\stack,\heap) \in \STATES$
such that $s,h \SAT{\SRD} \sh$.
%
%
\begin{theorem}\label{thm:zoo:sat:property}
 For each $\alpha \in \POSN$, 
 there is a heap automaton over $\SHCLASSFV{\alpha}$ accepting
 the set
 $
   \SATPROP(\alpha) \DEFEQ \{ \rsh \in \RSHCLASSFV{\alpha} ~|~ \rsh~\text{is satisfiable} \}
 $ 
 of all satisfiable reduced symbolic heaps with at most $\alpha$ free variables.
\end{theorem}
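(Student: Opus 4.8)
My approach is to obtain $\HASAT$ as a minor variant of the tracking automaton $\HATRACK$ of Definition~\ref{def:zoo:track-automaton}. The key observation is that the state set $Q = 2^{{\FV{0}{}}} \times \textnormal{Pure}(\FV{0}{})$ and the transition relation of $\HATRACK$ do not depend on the parameters $A$ and $\PURE{}$; only its (singleton) set of final states does. So I would keep $Q$ and $\Delta$ of $\HATRACK$ unchanged and simply take as final states the set $F'$ of all $(B,\Lambda) \in Q$ whose pure component $\Lambda$ is consistent, i.e., contains no formula of the form $x \neq x$. Since $Q$ (still finite for fixed $\alpha$) and $\Delta$ (still decidable, by Remark~\ref{rem:closure}) are unchanged, and since compositionality in the sense of Definition~\ref{def:refinement:automaton} is a property of $(Q,\Delta)$ alone and does not refer to the final states, $\HASAT$ is automatically a compositional heap automaton over $\SHCLASSFV{\alpha}$, inheriting all of this from Lemma~\ref{thm:zoo:track:property} and Appendix~\ref{app:zoo:track:property}.

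It then remains to verify that $L(\HASAT) = \SATPROP(\alpha)$, i.e., that $\HATRACK$ assigns to a reduced symbolic heap $\rsh$ a state with a consistent pure component exactly when $\rsh$ is satisfiable. This rests on two simple facts. First, $\rsh$ is satisfiable if and only if $\MODELS{\STRIP{\rsh}} \neq \emptyset$: a model of $\rsh$ yields a tight model of $\STRIP{\rsh}$ once the existentially quantified variables are read off the stack, and conversely a tight model of $\STRIP{\rsh}$ witnesses satisfiability of $\rsh$. Second, by definition $x \MEQ{\rsh} y$ (resp.\ $x \MNEQ{\rsh} y$) holds precisely when $\stack(x) = \stack(y)$ (resp.\ $\stack(x) \neq \stack(y)$) in \emph{every} $(\stack,\heap) \in \MODELS{\STRIP{\rsh}}$. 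Hence, if $\rsh$ is unsatisfiable this quantification is vacuous, so \emph{all} (in)equalities between variables of $\rsh$ are definite --- in particular $\NIL \MNEQ{\rsh} \NIL$, where $\NIL = \PROJ{\FV{0}{}}{0}$ is the $0$-th parameter --- and the pure component of the state of $\rsh$ therefore contains $\NIL \neq \NIL$ and lies outside $F'$; this matches the convention of Remark~\ref{rem:closure} of collapsing the relationships of a contradictory reduced heap to all pure formulas over its variables. Conversely, if $\rsh$ is satisfiable then $\NIL = \NIL$ holds in some model, so $\NIL \neq \NIL$ is not definite, the pure component of the state of $\rsh$ is consistent, and the state lies in $F'$. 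Thus $\HASAT$ accepts exactly the satisfiable reduced symbolic heaps with at most $\alpha$ free variables.

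Since the hard part --- constructing $\HATRACK$, designing the compression operation of Definition~\ref{def:zoo:track:kernel}, and establishing compositionality --- has already been carried out for Lemma~\ref{thm:zoo:track:property}, the only genuinely new ingredient here is the satisfiability characterisation of the previous paragraph, whose only delicate point is the biconditional \enquote{$\rsh$ unsatisfiable iff every (in)equality between variables of $\rsh$ is vacuously definite}. I expect this to be the main, and rather mild, obstacle; everything else is routine bookkeeping. Should one prefer not to alter the final states, an equivalent alternative is to add a dedicated sink state for unsatisfiable heaps that is entered as soon as the compressed heap $\SHRINK{\sh,\T{q}}$ from Definition~\ref{def:zoo:track-automaton} turns out to be inconsistent --- but the variant above makes this detour unnecessary.
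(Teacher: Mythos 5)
Your proposal is correct and essentially identical to the paper's construction: the paper also obtains $\HASAT$ from $\HATRACK$ by changing only the final states (to $\{(A,\PURE{}) \mid \NIL \neq \NIL \notin \PURE{}\}$, which is equivalent to your consistency condition since the transition relation forces the pure component to record exactly the definite (in)equalities, and these collapse to all pure formulas precisely when the heap is unsatisfiable), inherits compositionality from Lemma~\ref{thm:zoo:track:compositional} exactly as you argue, and proves $L(\HASAT)=\SATPROP(\alpha)$ via the same observation that unsatisfiability is equivalent to $\NIL \MNEQ{\rsh} \NIL$ holding vacuously over an empty set of tight models.
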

\begin{proof}
 A heap automaton $\HASAT$ accepting $\SATPROP(\alpha)$ is constructed as in Definition~\ref{def:zoo:track-automaton} except for the set of final states, which is 
 $
  F \DEFEQ \{ (A,\PURE{}) ~|~ \NIL \neq \NIL\,\notin \PURE{} \}
 $.
 See Appendix~\ref{app:zoo:sat:property} for a correctness proof.
\qed
\end{proof}
A heap automaton accepting the complement of $\SATPROP(\alpha)$ is constructed analogously
by choosing $F \DEFEQ \{ (A,\PURE{}) ~|~ \NIL \neq \NIL\,\in \PURE{} \}$.
Thus, together with Corollary~\ref{thm:compositional:existence}, we obtain a decision procedure for the satisfiability problem similar to the one proposed in \cite{brotherston2014decision}.
Regarding complexity, the heap automaton $\HASAT$ from Definition~\ref{def:zoo:track-automaton} has $2^{2\alpha^2 + \alpha}$ states.
By Remark~\ref{rem:closure}, membership in $\Delta_{\HASAT}$ is decidable in polynomial time.
Thus, by Corollary~\ref{thm:compositional:existence}, our construction yields an exponential-time decision procedure for $\DPROBLEM{SL-SAT}$.
If the number of free variables $\alpha$ is bounded, an algorithm in $\CCLASS{NP}$
is easily obtained by guessing a suitable unfolding tree of height at most $\SIZE{Q_{\HASAT}}$
and running $\HASAT$ on it to check whether its unfolding is decidable  (cf. Appendix~\ref{app:zoo:sat:np}).
This is in line with the results of Brotherston et al.~\cite{brotherston2014decision}, where the satisfiability problem is shown to be $\CCLASS{ExpTime}$--complete in general and \CCLASS{NP}--complete if the number of free variables is bounded.
These complexity bounds even hold for the following
special case~\cite{brotherston2016model}:
%

%
\noindent\textbf{Restricted satisfiability problem (\DPROBLEM{SL-RSAT})}
        Given an SID $\SRD$ that contains no 
        points-to assertions,
        and a predicate symbol $\PS$,
        decide whether $\PS\T{x}$ is satisfiable
        w.r.t. $\SRD$.
%
The complement of this problem is denoted by $\COMPLEMENT{SL-RSAT}$.
\subsection{Establishment} \label{sec:zoo:establishment}
%
A symbolic heap $\sh$
is \emph{established} if every existentially quantified variable of
every unfolding of $\sh$
is definitely equal to a free variable or definitely allocated.\footnote{Sometimes this property is also defined by requiring that each existentially quantified variable is "eventually allocated"~\cite{iosif2013tree}.}
This property is natural for symbolic heaps that specify the
shape of data structures; for example, the SIDs in
Example~\ref{ex:srd} define sets of established symbolic heaps.
Further, establishment is often required 
to ensure decidability of the
entailment problem~\cite{iosif2013tree,iosif2014entailment}.
Establishment can also be checked by heap automata. 
\begin{theorem}\label{thm:zoo:establishment}
 For all $\alpha \in \POSN$, there is a heap automaton over $\SHCLASSFV{\alpha}$ accepting the set
 of all established reduced symbolic heaps with at most $\alpha$ free variables:
 \begin{align*}
   \ESTPROP(\alpha) ~\DEFEQ~
   \{ \rsh \in \RSHCLASSFV{\alpha} ~|~ & \forall y \in \VAR(\rsh) ~.~ y \in \ALLOC{\rsh}
                                               ~\text{or}~ \exists x \in \FV{0}{\rsh} ~.~ x \MEQ{\rsh} y  \}
 \end{align*}
 %
 %
\end{theorem}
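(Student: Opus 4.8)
The plan is to construct a heap automaton $\HAEST$ over $\SHCLASSFV{\alpha}$ by augmenting the tracking automaton $\HATRACK$ from Definition~\ref{def:zoo:track-automaton} with a single bit recording whether the establishment condition has been violated so far. Concretely, I would take the state space to be $Q_{\HAEST} \DEFEQ (2^{\FV{0}{}} \times \PURES) \times \{0,1\}$, where the $(B,\Lambda)$-component behaves exactly as in $\HATRACK$ (tracking the definitely allocated free variables and the definite (in)equalities among free variables), and the extra bit is $0$ as long as every existentially quantified variable encountered in the unfolding has been either definitely allocated or definitely equal to a free variable, and $1$ once this fails. The final states are those with bit $0$, i.e.\ $F \DEFEQ \{\,((A,\Lambda),0) \mid A \subseteq \FV{0}{},\ \Lambda \in \PURES\,\}$. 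The transition relation fires $\MOVE{\HAEST}{((A_0,\PURE{}_0),b_0)}{\sh}{\T{q}}$ iff the $\HATRACK$-transition $\MOVE{\HATRACK}{(A_0,\PURE{}_0)}{\sh}{\T{q}'}$ holds (where $\T{q}'$ is $\T{q}$ with the bits projected away), none of the predecessor bits in $\T{q}$ is $1$, and $b_0 = 0$ iff additionally every existentially quantified variable $z \in \BV{\sh}$ satisfies $z^{\sh} \in \ALLOC{\SHRINK{\sh,\T{q}'}}$ or $z^{\sh} \MEQ{\SHRINK{\sh,\T{q}'}} x^{\sh}$ for some $x \in \FV{0}{}$; if any predecessor bit is $1$, then we force $b_0 = 1$.

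The correctness argument has two halves. First, compositionality: this reduces to the compositionality of $\HATRACK$ (already established in Appendix~\ref{app:zoo:track:property}) for the $(B,\Lambda)$-component, together with the observation that the bit is monotone and ``memoryless modulo $\HATRACK$''. The key fact I would invoke is that replacing a predicate call $\PS_i\FV{i}{}$ by $\SSIGMA{\PS_i\FV{i}{}}{\T{q}'[i]}$ preserves all definite (in)equalities and definite-allocation facts about the free variables of the call, and hence about every variable of the surrounding symbolic heap; this is exactly why the ``compress'' operation $\SHRINK{\cdot,\cdot}$ is faithful. Consequently, whether an existentially quantified variable $z$ of $\sh$ is definitely allocated or definitely equal to a free variable is the same whether we plug in the true reduced heaps $\rsh_i$ or their kernels $\SSIGMA{\PS_i\FV{i}{}}{(B_i,\Lambda_i)}$ — note here that a quantified variable of $\sh$ can only be linked (via pure formulas or points-to edges) to a quantified variable of an $\rsh_i$ \emph{through} a shared free parameter, so the kernel retains enough information. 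The forward and backward implications of Definition~\ref{def:refinement:automaton} then follow by matching up the bit bookkeeping across the decomposition.

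Second, $L(\HAEST) = \ESTPROP(\alpha)$: since $\SHRINK{\rsh,\EMPTYSEQ} = \rsh$ for reduced $\rsh$, a single transition $\OMEGA{\HAEST}{((A,\Lambda),b)}{\rsh}$ has $(A,\Lambda)$ recording the true definite relationships of $\rsh$ and $b = 0$ iff every $z \in \BV{\rsh}$ is definitely allocated or definitely equal to a free variable of $\rsh$ — but by the compositionality property this single-step verdict agrees with running $\HAEST$ over any unfolding tree $t$ with $\UNFOLD{t} = \rsh$, so acceptance of $\rsh$ is equivalent to the establishment condition holding for \emph{all} existentially quantified variables appearing anywhere in (the unfolding yielding) $\rsh$, which is precisely membership in $\ESTPROP(\alpha)$. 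Finally I would note decidability of $\Delta_{\HAEST}$: by Remark~\ref{rem:closure}, all definite relationships are polynomial-time checkable, and $\SHRINK{\sh,\T{q}'}$ has size polynomial in $\SIZE{\sh}$, so $\Delta_{\HAEST}$ is decidable (indeed in polynomial time). The main obstacle I anticipate is the compositionality proof, specifically pinning down the claim that the kernel $\SSIGMA{\cdot}{(B,\Lambda)}$ loses no information relevant to the status of \emph{quantified} variables of the enclosing heap — this needs the explicit observation that interaction between a quantified variable of $\sh$ and the interior of a child unfolding is mediated entirely by the shared free parameters, so no ``hidden aliasing'' can arise; everything else is routine bookkeeping on top of the already-proved properties of $\HATRACK$.
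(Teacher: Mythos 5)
Your proposal is correct and follows essentially the same route as the paper: the paper's $\HAEST$ is exactly the tracking automaton $\HATRACK$ paired with a one-bit flag (with the opposite polarity convention), final states determined by the flag, the flag updated as the min/max of the predecessor flags and a local $\CHECK$ predicate evaluated on $\SHRINK{\sh,\T{p}}$, and compositionality reduced to the faithfulness of the kernel construction plus the observation that (in)equalities between variables in different parts of an unfolding must propagate through shared parameters (the paper's Lemma on equality propagation). The only cosmetic differences are the bit convention and your quantifying over $\BV{\sh}$ rather than $\VAR{(\sh)}$, which is equivalent since free variables trivially satisfy the establishment disjunct.
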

\begin{proof}
  The main idea in the construction of a heap automaton $\HAEST$
  for $\ESTPROP(\alpha)$
  is to verify that every variable is definitely allocated or equal to
  a free variable while running $\HATRACK$
  (see Definition~\ref{def:zoo:track-automaton}) in parallel to keep
  track of the relationships between free variables.
 An additional flag $q \in \{0,1\}$ is attached to each state of $\HATRACK$ to store whether the establishment condition is already violated ($q=0$) or holds so far ($q=1$).
 Formally, $\HAEST = (Q,\SHCLASSFV{\alpha},\Delta,F)$, where
\begin{align*}
   & Q ~\DEFEQ~ Q_{\HATRACK} \times \{0,1\}, \qquad F ~\DEFEQ~ Q_{\HATRACK} \times \{1\}, \\
   &  \Delta ~~:~~ \MOVE{\HAEST}{(p_0,q_0)}{\sh}{(p_1,q_1) \ldots (p_m,q_m)} \\
   & ~\text{iff}~ \MOVE{\HATRACK}{p_0}{\sh}{p_1\ldots p_m}
     ~\text{and}~ q_0 = \min \{q_1,\ldots,q_m,\CHECK(\sh,p_1 \ldots p_m)\}.
\end{align*}
Here, $\CHECK : \SHCLASSFV{\alpha} \times Q_{\HATRACK}^{*} \to \{0,1\}$ is a predicate given by
 \begin{align*}
   \CHECK(\sh,\T{p}) ~\DEFEQ~ \begin{cases}
                                   1 &, ~\text{if}~ \forall y \in \VAR(\sh) ~.~  y \in \ALLOC{\SHRINK{\sh,\T{p}}} \\
                                    & \qquad\text{or}~ \exists x \in \FV{0}{\sh} ~.~ x \MEQ{\SHRINK{\sh,\T{p}}} y \\
                                   0 &, ~\text{otherwise}~,
                                 \end{cases}
 \end{align*}
 where $\SHRINK{\sh,\T{p}}$ is the reduced symbolic heap obtained from the tracking property as in Definition~\ref{def:zoo:track-automaton}.
 Moreover, unlike in the construction of $\HATRACK$,
 we are not interested in a specific set of relationships between the
 pure formulas, so any state of $\HATRACK$
 is chosen as a final state provided that predicate $\CHECK$
 could be evaluated to $1$.
 See Appendix~\ref{app:zoo:establishment} for a correctness proof.
\qed
\end{proof}
Again, it suffices to swap the final- and non-final states of $\HAEST$ to obtain a heap automaton
$\HA{A}_{\overline{\ESTPROP}}$
accepting the complement
of $\ESTPROP(\alpha)$.
Thus, by Corollary~\ref{thm:compositional:existence} and Remark~\ref{rem:closure},
we obtain an \CCLASS{ExpTime} decision procedure for the
%

%
\noindent\textbf{Establishment problem (\DPROBLEM{SL-EST})}:
  Given an SID $\SRD$ and $\sh \in \SL{\SRD}{}$, decide whether every $\rsh \in \CALLSEM{\sh}{\SRD}$ is established.
%
\begin{lemma}\label{thm:zoo:establishment:lower}
  \COMPLEMENT{SL-RSAT} is polynomial-time reducible to \DPROBLEM{SL-EST}.
  Hence, the establishment problem $\DPROBLEM{SL-EST}$ is \CCLASS{ExpTime}--hard in general and \CCLASS{coNP}--hard if the maximal number of free variables is bounded.
\end{lemma}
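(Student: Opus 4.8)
The plan is to establish the reduction \COMPLEMENT{SL-RSAT} $\leq_p$ \DPROBLEM{SL-EST} and then invoke the known hardness of \COMPLEMENT{SL-RSAT}. Recall that \DPROBLEM{SL-RSAT} asks, given an SID $\SRD$ \emph{without points-to assertions} and a predicate symbol $\PS$, whether $\PS\T{x}$ is satisfiable. Its complement asks whether $\PS\T{x}$ is \emph{unsatisfiable}, i.e., whether $\CALLSEM{\PS\T{x}}{\SRD}$ contains no satisfiable reduced symbolic heap. The key observation is that, when $\SRD$ contains no points-to assertions, every unfolding $\rsh \in \CALLSEM{\PS\T{x}}{\SRD}$ is of the form $\exists \BV{} .\, \EMP : \PURE{}$, so $\rsh$ is unsatisfiable exactly when its pure part $\PURE{}$ (closed under symmetry and transitivity) contains a contradiction $\NIL \neq \NIL$ or $u \neq u$. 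Crucially, since no variable is ever allocated in such an $\rsh$, the establishment condition for $\rsh$ reduces to: every existentially quantified variable is definitely equal to a free variable.

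The reduction I would perform is: given $(\SRD,\PS)$ with $\PS$ of arity $k$, build $\SRD'$ over fresh variables by taking a fresh predicate symbol $\PS'$ with one extra parameter and adding, for the single top-level call, a rule $\SRDRULE{\PS'}{\exists z .\, \PS(\T{x}) : \emptyset}$ where $z$ is a fresh existentially quantified variable not constrained by any pure formula and not appearing in any call — together with all original rules of $\SRD$. Then in any unfolding of $\PS'\T{x}$, the variable $z$ is existentially quantified, never allocated (no points-to assertions anywhere), and never forced equal to anything. Hence such an unfolding is established if and only if it is unsatisfiable (an unsatisfiable reduced symbolic heap is, by the convention in Remark~\ref{rem:closure}, set to all pure formulas over its variables, which makes every variable — including $z$ — definitely equal to every free variable, so vacuously established; whereas a satisfiable unfolding leaves $z$ genuinely unconstrained and thus not established). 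Therefore every $\rsh \in \CALLSEM{\PS'\T{x}}{\SRD'}$ is established if and only if $\PS\T{x}$ is unsatisfiable w.r.t.\ $\SRD$, i.e., $(\SRD',\PS'\T{x})$ is a yes-instance of \DPROBLEM{SL-EST} iff $(\SRD,\PS)$ is a yes-instance of \COMPLEMENT{SL-RSAT}. The transformation is clearly computable in polynomial (indeed linear) time, and it preserves the bounded-arity restriction (arity grows by one).

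Given the reduction, the hardness transfer is immediate: by the complexity of restricted satisfiability cited just above (from Brotherston et al.~\cite{brotherston2016model}), \DPROBLEM{SL-RSAT} is \CCLASS{ExpTime}-complete in general and \CCLASS{NP}-complete when the number of free variables is bounded; hence \COMPLEMENT{SL-RSAT} is \CCLASS{ExpTime}-complete, respectively \CCLASS{coNP}-complete, under the same restrictions. Since \CCLASS{ExpTime} and \CCLASS{coNP} are closed under polynomial-time many-one reductions, the reduction shows \DPROBLEM{SL-EST} is \CCLASS{ExpTime}-hard in general and \CCLASS{coNP}-hard when the maximal number of free variables is bounded, as claimed. (The matching upper bounds come from the heap automaton $\HA{A}_{\overline{\ESTPROP}}$ together with Corollary~\ref{thm:compositional:existence} and Remark~\ref{rem:closure}, but the lemma only asserts hardness.)

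The main obstacle I anticipate is getting the establishment-equals-unsatisfiability equivalence exactly right, which hinges on two subtle conventions: first, that for an \emph{unsatisfiable} reduced symbolic heap the pure-formula closure is saturated to \emph{all} pure formulas over its variables (stated in Remark~\ref{rem:closure}), so that $z$ becomes vacuously definitely equal to free variables and establishment holds trivially; and second, that in a genuinely satisfiable unfolding with no points-to assertions, the fresh variable $z$ is provably \emph{not} definitely equal to any free variable and \emph{not} definitely allocated — this needs the observation that one can always pick a model assigning $z$ a value distinct from all free-variable values, which is possible because $\VAL$ is infinite and $z$ appears in no pure formula. Care is also needed to ensure the gadget rule itself lies in whatever syntactic class is relevant (it introduces no points-to assertion, so it stays within the points-to-free fragment, and it has at most one more free variable than the original).
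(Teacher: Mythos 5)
Your reduction is correct and is essentially the paper's: both plant a fresh, never-allocated existential variable into a points-to-free symbolic heap so that establishment can only hold vacuously, i.e.\ exactly when the heap is unsatisfiable. The gadgets differ slightly: the paper uses $\sh = \exists \T{z}\, z' .\, \PS\T{z} : \{ \IFV{1} = \NIL,\ z' \neq \NIL \}$, so that the direction ``established $\Rightarrow$ unsatisfiable'' falls out of a pure-formula contradiction ($z' \MEQ{\rsh} \NIL$ forced by establishment versus $z' \MNEQ{\rsh} \NIL$ from the rule), whereas you leave $z$ completely unconstrained and instead argue semantically that a satisfiable points-to-free unfolding always admits a model assigning $z$ a value distinct from every free variable — a step you correctly identify and justify via the infiniteness of $\VAL$. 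Both variants rely on the same vacuous-truth convention for unsatisfiable reduced symbolic heaps, and your hardness-transfer paragraph matches the paper's. One cosmetic slip: you say $\PS'$ has ``one extra parameter'' while simultaneously binding $z$ existentially; since $z$ is bound, $\PS'$ should simply have the same arity as $\PS$ (or, as the paper does, you can skip the fresh predicate entirely and feed the symbolic heap $\exists z .\, \PS\T{x}$ directly to \DPROBLEM{SL-EST}). This does not affect correctness.
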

\begin{proof}
 Let $(\SRD,\PS)$ be an instance of $\COMPLEMENT{SL-RSAT}$.
 Moreover, let
 $ 
   \sh ~\DEFEQ~ \exists \T{z} y ~.~ \PS\T{z} : \{ \PROJ{\FV{0}{}}{1} = \NIL, y \neq \NIL \}.
 $ 
 As $y$ is neither allocated nor occurs in $\PS\T{z}$, $\sh$ is
 established iff $\IFV{1} = y$ iff $\NIL \neq \NIL$ iff $\PS\T{x}$ is
 unsatisfiable.
 Hence, $(\SRD,\sh) \in \DPROBLEM{SL-EST}$ iff $(\SRD,\PS) \in \COMPLEMENT{SL-RSAT}$.
 A full proof is found in Appendix~\ref{app:zoo:establishment:lower}.
\qed
\end{proof}
\begin{lemma}\label{thm:zoo:establishment:conp}
  \DPROBLEM{SL-EST} is in \CCLASS{coNP}  for a bounded number of free variables $\alpha$.
\end{lemma}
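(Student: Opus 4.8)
The plan is to show that the complement of $\DPROBLEM{SL-EST}$ --- deciding whether \emph{some} unfolding of $\sh$ fails to be established --- lies in $\CCLASS{NP}$ when the number of free variables is a fixed constant $\alpha$; as usual in the bounded setting we take $\SRD \in \SETSRD{\SRDCLASSFV{\alpha}}$ and $\sh \in \SL{\SRD}{\SRDCLASSFV{\alpha}}$. The starting point is the heap automaton $\HA{A}_{\overline{\ESTPROP}}$ accepting $\RSHCLASSFV{\alpha} \setminus \ESTPROP(\alpha)$, obtained from Theorem~\ref{thm:zoo:establishment} by swapping final and non-final states. For a fixed $\alpha$ its state set $Q$ has constant size, and by Remark~\ref{rem:closure} membership in its transition relation $\Delta$ is decidable in polynomial time: by Definition~\ref{def:zoo:track-automaton} and the construction of $\HAEST$ in Theorem~\ref{thm:zoo:establishment}, each such check reduces to forming the polynomial-size reduced heap $\SHRINK{\sh,\T{q}}$ (every kernel $\SSIGMA{\cdot}{\cdot}$ contributes at most $\alpha$ points-to assertions), and then deciding definite allocation, definite (in)equality, and the predicate $\CHECK$. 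Since every $\rsh \in \CALLSEM{\sh}{\SRD}$ has at most $\alpha$ free variables, $\sh$ has a non-established unfolding iff $\CALLSEM{\sh}{\SRD} \cap L(\HA{A}_{\overline{\ESTPROP}}) \ne \emptyset$.

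Next I would turn this into an emptiness question about the refined SID. Adding a rule $\SRDRULE{\PS_{\sh}}{\sh}$ for a fresh predicate $\PS_{\sh}$ to $\SRD$ gives $\SRD' \in \SETSRD{\SRDCLASSFV{\alpha}}$ with $\CALLSEM{\PS_{\sh}\FV{0}{}}{\SRD'} = \CALLSEM{\sh}{\SRD}$. By the Refinement Theorem~\ref{thm:compositional:refinement} --- together with the intermediate rules $\PS_{\sh} \SRDARROW \langle\PS_{\sh},q\rangle\FV{0}{}$, $q$ final, that its construction adds --- we have $\CALLSEM{\sh}{\SRD} \cap L(\HA{A}_{\overline{\ESTPROP}}) \ne \emptyset$ iff $\CALLSEM{\langle\PS_{\sh},q\rangle\FV{0}{}}{\SRDALT} \ne \emptyset$ for some final state $q$, where $\SRDALT$ is the refinement of $\SRD'$. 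The crucial observation is that, although $\SRDALT$ may be of exponential size (its rules carry a $\SIZE{Q}^{M+1}$ blow-up, $M$ the maximal number of predicate calls per rule of $\SRD'$), its predicate set $\PRED(\SRDALT) \subseteq (\PRED(\SRD') \times Q) \cup \PRED(\SRD')$ has only $\mathcal{O}(\SIZE{\SRD})$ elements for a fixed $\alpha$. Hence the least productive set of $\SRDALT$ in the sense of Lemma~\ref{thm:symbolic-heaps:emptiness} is of polynomial size, and each of its elements admits a polynomial-size productivity certificate that can be verified \emph{without} ever materialising $\SRDALT$.

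Concretely, the $\CCLASS{NP}$ procedure guesses a sequence of pairwise distinct pairs $(\PS_1,q_1),\dots,(\PS_k,q_k) \in \PRED(\SRD') \times Q$ with $k \le \SIZE{\PRED(\SRD')}\cdot\SIZE{Q}$, one of which equals $(\PS_{\sh},q)$ for some final state $q$, together with, for each $j$, a rule $\SRDRULE{\PS_j}{\shb_j} \in \SRD'$ with $m_j \DEFEQ \NOCALLS{\shb_j}$ and states $r_1,\dots,r_{m_j} \in Q$ such that $(r_1\ldots r_{m_j},\,\shb_j,\,q_j) \in \Delta$ and each pair $(\PS_i^{\shb_j},r_i)$, $1 \le i \le m_j$, occurs among $(\PS_1,q_1),\dots,(\PS_{j-1},q_{j-1})$. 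This certificate has polynomial size ($k$ is polynomial, each $m_j \le M$ is polynomial, and the $q_j, r_i$ are of constant size), and it is verified in polynomial time by performing $k$ membership tests in $\Delta$ together with lookups in the already-produced prefix. By the explicit shape of the rules of $\SRDALT$ given in the proof of Theorem~\ref{thm:compositional:refinement}, such a sequence exists precisely when each $\langle\PS_j,q_j\rangle$ belongs to the least productive set of $\SRDALT$; hence the guess succeeds iff $\CALLSEM{\langle\PS_{\sh},q\rangle\FV{0}{}}{\SRDALT} \ne \emptyset$ for some final $q$, i.e.\ iff $\sh$ has a non-established unfolding. Consequently the complement of $\DPROBLEM{SL-EST}$ is in $\CCLASS{NP}$, so $\DPROBLEM{SL-EST} \in \CCLASS{coNP}$.

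The one subtle point --- and the main obstacle --- is exactly that the refined SID $\SRDALT$ is in general of exponential size, so one cannot simply construct it and invoke the linear-time emptiness test of Lemma~\ref{thm:symbolic-heaps:emptiness}. This is circumvented by observing that productivity in $\SRDALT$ depends only on its polynomially many \emph{predicates} (for a fixed $\alpha$), so it suffices to guess a polynomial-length productivity certificate and check each step directly against $\SRD'$ and the constant-state, polynomial-transition automaton $\HA{A}_{\overline{\ESTPROP}}$ --- in effect a nondeterministic, on-the-fly variant of Algorithm~\ref{alg:on-the-fly-refinement}.
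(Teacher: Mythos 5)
Your proof is correct, but it takes a genuinely different route from the paper's. The paper also reduces to non-emptiness of $\CALLSEM{\sh}{\SRD} \cap L(\HA{A}_{\overline{\ESTPROP}})$, but its $\CCLASS{NP}$ certificate for the complement is an explicit unfolding tree $t \in \UTREES{\SRD}{\sh}$ of height at most $k = \SIZE{Q_{\HA{A}_{\overline{\ESTPROP}}}}$ together with a state-labelling $\omega$ of its nodes forming an accepting run; the tree has size at most $N^{k}$, which is polynomial only because $k$ is a constant, and a pumping argument (two equally-labelled nodes on a long path can be excised) is needed to justify the height bound. Your certificate is instead a topologically ordered productivity derivation over the polynomially many predicate--state pairs of the refined SID, verified on the fly against $\SRD'$ and $\Delta_{\HA{A}_{\overline{\ESTPROP}}}$ without ever building $\SRDALT$ --- in effect the reachability set $R$ of Lemma~\ref{thm:symbolic-heaps:emptiness} applied to the (implicit) refinement, exactly as you say. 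What each buys: your certificate avoids the pumping argument entirely and has size polynomial with an exponent independent of $\alpha$ (rather than the paper's $N^{2^{2\alpha^2+\alpha+1}}$), so it is the more efficient and arguably cleaner witness; the paper's witness is a concrete counterexample unfolding, which dovetails with the counterexample-generation theme of the framework and requires no appeal to the internals of the Refinement Theorem's construction. Both arguments are sound, and both correctly place the complement in $\CCLASS{NP}$.
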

\begin{proof}
  Let $(\SRD,\sh)$ be an instance of $\DPROBLEM{SL-EST}$,
  $N = \SIZE{\SRD} + \SIZE{\sh}$, and
  %
  $M \leq N$ be the maximal number of predicate calls occurring in $\sh$ and any rule of $\SRD$.
  Moreover, let $\HA{A}_{\overline{\ESTPROP}}$ be a heap automaton
  accepting $\overline{\ESTPROP(\alpha)}$---the complement of $\ESTPROP(\alpha)$
  (cf. Theorem~\ref{thm:zoo:establishment}).
  Since $\alpha$ is bounded by a constant, so is the number of states of
  $\HA{A}_{\overline{\ESTPROP}}$, namely
  $\SIZE{Q_{\HA{A}_{\overline{\ESTPROP}}}} \leq k = 2^{2\alpha^2 + \alpha + 1}$.
  Now, let $\UTREES{\SRD}{\sh}^{\leq k}$ denote the set of all unfolding trees $t \in \UTREES{\SRD}{\sh}$ of height at most $k$.
  Clearly, each of these trees is of size $\SIZE{t} \leq M^{k} \leq N^{k}$, i.e., polynomial in $N$.
  Moreover, let $\omega: \DOM(t) \to Q_{\HA{A}_{\overline{\ESTPROP}}}$ be a function mapping each node of $t$ to a state of $\HA{A}_{\overline{\ESTPROP}}$.
  Again, $\omega$ is of size polynomial in $N$; as such $\SIZE{\omega} \leq k \cdot N^{k}$.
  Let $\Omega_{t}$ denote the set of all of these functions $\omega$ for a given unfolding tree $t$ with $\omega(\EMPTYSEQ) \in F_{\HA{A}_{\overline{\ESTPROP}}}$.
  Given an unfolding tree $t \in \UTREES{\SRD}{\sh}^{\leq k}$ and $\omega \in \Omega_{t}$,
  we can easily decide whether $\OMEGA{\HA{A}_{\overline{\ESTPROP}}}{\omega(\EMPTYSEQ)}{\UNFOLD{t}}$ holds:
  For each $u,u1,\ldots,un \in \DOM(t)$, $u(n+1) \notin \DOM(t)$,
  $n \geq 0$, it suffices to check whether $\MOVE{\HA{A}_{\overline{\ESTPROP}}}{\omega(u)}{t(u)}{\omega(u1) \ldots \omega(un)}$.
  Since, by Remark~\ref{rem:closure},
  each of these checks can be performed in time polynomial in $N$
  the whole procedure is feasible in polynomial time.
  %
  We now show that $(\SRD,\sh) \in \DPROBLEM{SL-EST}$ if and only if
  \begin{align*}
    \forall t \in \UTREES{\SRD}{\sh}^{\leq k} ~.~ \forall \omega \in \Omega_{t} ~.~  \text{not}~ \OMEGA{\HA{A}_{\overline{\ESTPROP}}}{\omega(\EMPTYSEQ)}{\UNFOLD{t}}.
  \end{align*}
  Since each $t \in \UTREES{\SRD}{\sh}$ and each $\omega \in \Omega_{t}$ is of size polynomial in $N$,
  this is equivalent to $\DPROBLEM{SL-EST}$ being in $\CCLASS{coNP}$.
  To complete the proof, note that
  $\CALLSEM{\sh}{\SRD} \subseteq \ESTPROP(\alpha)$ holds iff $\UNFOLD{t} \notin \overline{\ESTPROP(\alpha)}$ for each $t \in \UTREES{\SRD}{\sh}$.
  Furthermore, by a standard pumping argument, it suffices to consider trees in $\UTREES{\SRD}{\sh}^{\leq k}$:
  If there exists a taller tree $t$ with $\UNFOLD{t} \in \overline{\ESTPROP(\alpha)}$ then there is some path of length greater $k$ in $t$ on which two nodes are assigned the same state by
  a function $\omega \in \Omega_{t}$ proving membership of $t$ in $\overline{\ESTPROP(\alpha)}$.
  This path can be shortened to obtain a tree of smaller height.
\qed
\end{proof}
Putting upper and lower bounds together, we conclude:
\begin{theorem}\label{thm:zoo:establishment:complexity}
  $\DPROBLEM{SL-EST}$ is $\CCLASS{ExpTime}$--complete in general and $\CCLASS{coNP}$--complete if the number of free variables $\alpha$ is bounded.
\end{theorem}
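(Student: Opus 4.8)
The plan is to assemble the matching upper and lower bounds already established above, treating the general case and the bounded-free-variable case separately.

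For the upper bounds, in the general case I would take the heap automaton $\HA{A}_{\overline{\ESTPROP}}$ accepting $\overline{\ESTPROP(\alpha)}$ from Theorem~\ref{thm:zoo:establishment} and apply Corollary~\ref{thm:compositional:existence}: there is an unfolding $\rsh \in \CALLSEM{\sh}{\SRD}$ with $\rsh \in \overline{\ESTPROP(\alpha)}$ if and only if $(\SRD,\sh) \notin \DPROBLEM{SL-EST}$, so running the on-the-fly refinement with emptiness check (Algorithm~\ref{alg:on-the-fly-refinement}) decides the problem. For the complexity, $\HA{A}_{\overline{\ESTPROP}}$ has at most $2^{2\alpha^2 + \alpha + 1}$ states, and by Remark~\ref{rem:closure} its transition relation is decidable in polynomial time; substituting this into the running time $\BIGO{(\SIZE{\sh} + \SIZE{\SRD}) \cdot \SIZE{Q_{\HA{A}_{\overline{\ESTPROP}}}}^{M+1} \cdot \SIZE{\Delta_{\HA{A}_{\overline{\ESTPROP}}}}}$ of Algorithm~\ref{alg:on-the-fly-refinement}, where $M$ is the maximal number of predicate calls in a rule, and using $M \leq \SIZE{\SRD}$ and $\alpha \leq \SIZE{\sh}+\SIZE{\SRD}$, gives a bound exponential in $\SIZE{\sh}+\SIZE{\SRD}$, hence membership in $\CCLASS{ExpTime}$. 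For a bounded number of free variables, membership in $\CCLASS{coNP}$ is exactly Lemma~\ref{thm:zoo:establishment:conp}.

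For the lower bounds, both directions come from Lemma~\ref{thm:zoo:establishment:lower}, which gives a polynomial-time reduction from $\COMPLEMENT{SL-RSAT}$ to $\DPROBLEM{SL-EST}$ that leaves the SID unchanged and introduces only a single extra free variable, so it preserves any bound on the number of free variables. It then suffices to recall the known complexity of $\DPROBLEM{SL-RSAT}$: it is $\CCLASS{ExpTime}$-complete in general and $\CCLASS{NP}$-complete when the number of free variables is bounded~\cite{brotherston2014decision,brotherston2016model}. Since $\CCLASS{ExpTime}$ is closed under complement, $\COMPLEMENT{SL-RSAT}$ is $\CCLASS{ExpTime}$-hard, hence so is $\DPROBLEM{SL-EST}$; likewise $\COMPLEMENT{SL-RSAT}$ is $\CCLASS{coNP}$-hard in the bounded case, hence so is $\DPROBLEM{SL-EST}$ there. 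Together with the upper bounds this yields $\CCLASS{ExpTime}$-completeness in general and $\CCLASS{coNP}$-completeness for bounded $\alpha$.

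The substantive work is already done in the cited lemmas; here the only points needing care are verifying that the reduction of Lemma~\ref{thm:zoo:establishment:lower} does not increase the free-variable count beyond a constant (so the bounded-case hardness transfers), and checking, when the exponential state space of $\HA{A}_{\overline{\ESTPROP}}$ is plugged into the refinement algorithm, that every exponent appearing there stays polynomial in the input size so the running time remains within $\CCLASS{ExpTime}$ and does not escape to $\CCLASS{2-ExpTime}$. I expect this bookkeeping, together with the observation that $\CCLASS{ExpTime}$ being closed under complement is what makes the hardness transfer work, to be the only mild obstacle.
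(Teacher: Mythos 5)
Your proposal is correct and follows essentially the same route as the paper: the theorem is obtained by combining the \CCLASS{ExpTime} upper bound from running Algorithm~\ref{alg:on-the-fly-refinement} with the directly-constructed complement automaton $\HA{A}_{\overline{\ESTPROP}}$ (final states swapped, so no exponential blowup from generic complementation), the \CCLASS{coNP} upper bound of Lemma~\ref{thm:zoo:establishment:conp}, and the hardness transfer from $\COMPLEMENT{SL-RSAT}$ via Lemma~\ref{thm:zoo:establishment:lower}. Your two points of care — that the reduction keeps the free-variable count constant and that the exponents in the refinement algorithm's running time stay polynomial in the input — are exactly the (routine) checks the paper implicitly relies on.
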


\subsection{Reachability} \label{sec:zoo:reach}
Another family of robustness properties is based on reachability questions, e.g., ``is every location of every model of a symbolic heap reachable from the location of a program variable?'' or ``is every model of a symbolic heap acyclic?''.
For established SIDs, heap automata accepting these properties are an extension of the tracking automaton introduced in Definition~\ref{def:zoo:track-automaton}.
More precisely, a variable $y$ is \emph{definitely} \emph{reachable}
from $x$ in $\rsh \in \RSL{}{}$, written $\REACH{x}{y}{\rsh}$,
if and only if $x \MPT{\rsh} y$ or there exists a $z \in \VAR(\rsh)$ such that $x \MPT{\rsh} z$ and
$\REACH{z}{y}{\rsh}$.\footnote{The definite points-to
  relation $\MPT{\rsh}$ was defined at the beginning of
  Section~\ref{sec:zoo}.} Note that we define reachability to be
transitive, but not reflexive.
As for the other definite relationships between variables,
definite reachability is computable in polynomial time for reduced symbolic heaps,
e.g., by performing a depth-first search on the definite points-to
relation $\MPT{\rsh}$.
Note that our notion of reachability does not take variables into account
that are only reachable from one another in \emph{some} models of a reduced symbolic heap.
For example, consider the symbolic heap
$\rsh = \PTS{x}{y} \SEP \PTS{z}{\NIL}$.
Then $\REACH{x}{z}{\rsh}$ does \emph{not} hold, but there exists a model $(\stack,\heap)$ with
$\stack(z) = \stack(y) \in \heap(\stack(x))$.
Thus, reachability introduced by unallocated variables is not detected.
However, the existence (or absence) of such variables can be checked first due to Theorem~\ref{thm:zoo:establishment}.
%
%
%
%

%
%
%
\begin{theorem}\label{thm:zoo:reachability:property}
  Let $\alpha \in \POSN$ and $R \subseteq \FV{0}{} \times \FV{0}{}$ be a binary relation
  over the variables $\FV{0}{}$ with $\SIZE{\FV{0}{}} = \alpha$.
  Then the \emph{reachability} \emph{property} $\RPROP(\alpha,R)$, given by the set
  $
    \{ \rsh \in \RSHCLASSFV{\alpha} ~|~ \forall i,j ~.~ (\PROJ{\FV{0}{}}{i},\PROJ{\FV{0}{}}{j}) \in R  ~\text{iff}~ \REACH{\PROJ{\FV{0}{\rsh}}{i}}{\PROJ{\FV{0}{\rsh}}{j}}{\tau} \},
  $ 
  can be accepted by a heap automaton over $\SHCLASSFV{\alpha}$.
 %
\end{theorem}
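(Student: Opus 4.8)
The plan is to mimic the construction of $\HATRACK$ from Definition~\ref{def:zoo:track-automaton}, but enrich the state space so that it records, in addition to the set of definitely allocated free variables and the set of definite (in)equalities, the full definite-reachability relation among the free variables. Concretely, I would take $Q \DEFEQ 2^{\FV{0}{}} \times \PURES \times 2^{\FV{0}{} \times \FV{0}{}}$, with a state $(B, \Lambda, S)$ intended to mean: $B$ is exactly the set of definitely allocated free variables of the unfolding, $\Lambda$ is exactly its set of definite (in)equalities between free variables, and $S$ is exactly its definite-reachability relation on free variables. The single final state is $F \DEFEQ \{(A, \PURE{}, R)\}$ for the prescribed $A$, $\PURE{}$ (which we are free to fix, e.g.\ the closure-induced ones as in $\HATRACK$) and the given relation $R$.

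The transition relation is where the real work lies. Given a symbolic heap $\sh$ with predicate calls $\CALLN{1}{}, \ldots, \CALLN{m}{}$ and input states $\T{q} = (B_1,\Lambda_1,S_1)\ldots(B_m,\Lambda_m,S_m)$, the key idea is again \emph{compression}: replace each predicate call $\CALLN{i}{}$ by a small reduced symbolic heap that faithfully encodes the information in $(B_i,\Lambda_i,S_i)$. The kernel $\SSIGMA{\CALLN{i}{}}{(B_i,\Lambda_i)}$ from Definition~\ref{def:zoo:track:kernel} already captures allocation and (in)equalities; I would extend it by additionally adding, for every pair $(\PROJ{\FV{i}{}}{k}, \PROJ{\FV{i}{}}{\ell}) \in S_i$, a chain of points-to assertions through fresh existentially quantified variables witnessing that $\PROJ{\FV{i}{}}{\ell}$ is reachable from $\PROJ{\FV{i}{}}{k}$ (this is legitimate because the SID is \emph{established}, so every reachable variable is definitely allocated or equal to a free variable — one must be slightly careful to route the chain only through variables that the kernel already allocates). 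Call the resulting compressed reduced heap $\SHRINK{\sh,\T{q}}$. The transition $\MOVE{\HAREACH}{(B_0,\Lambda_0,S_0)}{\sh}{\T{q}}$ then fires iff, in $\SHRINK{\sh,\T{q}}$, the definitely allocated free variables are exactly $B_0$, the definite (in)equalities are exactly $\Lambda_0$, and the definite-reachability relation on free variables is exactly $S_0$ — all of which are computable in polynomial time by Remark~\ref{rem:closure} together with a depth-first search on $\MPT{}$.

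Correctness then splits into the two usual obligations. For the language characterisation one uses $\SHRINK{\rsh,\EMPTYSEQ} = \rsh$ (which holds since a reduced heap has no calls), so that $\OMEGA{\HAREACH}{(B,\Lambda,S)}{\rsh}$ holds iff $B,\Lambda,S$ are the true allocation/equality/reachability data of $\rsh$; hence $L(\HAREACH)$ is precisely $\RPROP(\alpha,R)$. For compositionality one must show that replacing a call $\PS_i$ by a reduced heap $\rsh_i$ with data $(B_i,\Lambda_i,S_i)$ yields the same allocation, (in)equality and reachability data on free variables as replacing it by the compressed kernel carrying the same $(B_i,\Lambda_i,S_i)$. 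The allocation and (in)equality parts are inherited verbatim from the proof that $\HATRACK$ is compositional (Appendix~\ref{app:zoo:track:property}); the genuinely new content is that reachability is also preserved, and this is the step I expect to be the main obstacle. The difficulty is that a reachability path in the full unfolding may pass through internal (existentially quantified) variables of $\rsh_i$ and re-enter the surrounding heap, so one needs a lemma saying that any definite-reachability fact between variables visible to $\sh$ is already witnessed by a path whose intermediate vertices, restricted to each substituted block, only enter and leave through that block's free variables — which is exactly what establishment buys us, since any intermediate allocated vertex is definitely equal to a free variable of its block. Making this routing argument precise, in both directions of the ``iff'', is the technical heart of the proof; the remaining bookkeeping (finiteness of $Q$, decidability of $\Delta$, the complexity remarks) is routine. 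I would defer the full argument to the appendix, citing it as Appendix~\ref{app:zoo:reachability}.
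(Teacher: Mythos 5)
Your overall strategy coincides with the paper's: extend the state space of $\HATRACK$ by a relation $S \subseteq \FV{0}{} \times \FV{0}{}$, compress each predicate call into a small reduced kernel realising $(B,\Lambda,S)$, let the transition relation check the definite relationships in the compressed heap, and prove a congruence lemma for reachability analogous to the one for $\HATRACK$. However, two concrete steps of your construction do not work as written. First, your set of final states $F = \{(A,\PURE{},R)\}$ is too small: $\RPROP(\alpha,R)$ constrains only the reachability data, so a reduced heap with the right reachability relation but different allocation or (in)equality data would wrongly be rejected. The paper takes $F_{\HAREACH} = Q_{\HATRACK} \times \{R\}$, i.e.\ the tracking component is unconstrained in final states.

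Second, and more seriously, encoding each pair $(x,y) \in S_i$ by a separate chain of points-to assertions is inconsistent with the separating conjunction. Every source of a definite-reachability fact is definitely allocated, so the base kernel of Definition~\ref{def:zoo:track:kernel} already contains $\PTS{x}{\NIL}$ for it; adding a further chain starting at $x$ (or two chains when $x$ reaches two targets) allocates $x$ twice, which makes the compressed heap unsatisfiable, whereupon by the convention of Remark~\ref{rem:closure} \emph{all} definite relationships hold and the encoding collapses. Your remark about routing chains ``through variables the kernel already allocates'' runs into the same double-allocation problem. The repair is not a chain at all but the paper's modification of the kernel: replace the single assertion $\PTS{\PROJ{\FV{0}{\sh}}{i}}{\NIL}$ by $\PT{\PROJ{\FV{0}{\sh}}{i}}{\T{v}_i}$ where $\PROJ{\T{v}_i}{j}$ is $\PROJ{\FV{0}{\sh}}{j}$ if $(i,j)\in S$ and a single fresh existential $z$ otherwise; since $S$ is transitive, listing all reachable free variables directly in one target tuple reproduces exactly $S$ as the kernel's reachability relation without violating separation. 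Finally, note that the theorem does not assume establishment and does not need it: the propagation-through-parameters lemma you identify as the technical heart (Lemma~\ref{obs:zoo:reachability:propagation} in the paper) holds because definite reachability is defined over \emph{all} models, not because the SID is established.
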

\begin{proof}[sketch]
A heap automaton $\HAREACH$ accepting $\RPROP(\alpha,R)$ is
constructed similarly to
the heap automaton $\HATRACK$ introduced in Definition~\ref{def:zoo:track-automaton}.
The main difference is that $\HAREACH$
additionally stores a binary relation $S \subseteq \FV{0}{} \times \FV{0}{}$
in its state space to remember which free variables are reachable from one another.
Correspondingly, we adapt Definition~\ref{def:zoo:track:kernel} as follows:
%
\begin{align*}
   \SSIGMA{\sh}{(B,\Lambda,S)}
   ~\DEFEQ~ \exists z ~.~ \bigstar_{\textrm{min}(B,\Lambda)} ~ \PT{\PROJ{\FV{0}{\sh}}{i}}{\T{v}_i} ~:~ \Lambda~,
\end{align*}
where $z$ is a fresh variable and
$\PROJ{\T{v}_i}{j} \DEFEQ  \PROJ{\FV{0}{\sh}}{j}$ if $(i,j) \in S$ and
$\PROJ{\T{v}_i}{j} \DEFEQ  z$, otherwise.
%
The other parameters $\sh,B,\Lambda$ are the same as in Definition~\ref{def:zoo:track-automaton}.
Note that the additional variable $z$ is needed to deal with allocated free variables that cannot reach
any other free variable, including $\NIL$.
Moreover, the set of final states is $F_{\HAREACH} = Q_{\HATRACK} \times \{R\}$.
Correctness of this encoding is verified in the transition relation.
Hence, the transition relation of $\HAREACH$ extends the transition relation of $\HATRACK$
by the requirement
$ 
  (x,y) \in S ~\text{iff}~ \REACH{x^{\sh}}{y^{\sh}}{\SHRINK{\sh,\T{p}}}
$ 
for every pair of free variables $x,y \in \FV{0}{}$.
Here, $\SHRINK{\sh,\T{p}}$ is defined as in Definition~\ref{def:zoo:track-automaton} except
that the new encoding $\SSIGMA{\PS_i\FV{i}{}}{\T{q}[i]}$ from above is used.
Since $\SHRINK{\rsh,\EMPTYSEQ} = \rsh$ holds for every reduced symbolic heap $\rsh$,
it is straightforward to verify that $L(\HAREACH) = \RPROP(\alpha)$.
Further details are found in
Appendix~\ref{app:zoo:reachability:property}.
\qed
\end{proof}
Furthermore, we consider the related
%

%
\noindent\textbf{Reachability problem ($\DPROBLEM{SL-REACH}$)}:
  Given an SID $\SRD$, 
  $\sh \in \SL{\SRD}{}$ with $\alpha = \NOFV{\sh}$
  and 
  variables $x,y \in \FV{0}{\sh}$,
  decide whether $\REACH{x}{y}{\rsh}$ holds
  for all $\rsh \in \CALLSEM{\sh}{\SRD}$.
%
\begin{theorem}\label{thm:zoo:reachability:complexity}
  The decision problem
  \DPROBLEM{SL-REACH} is \CCLASS{ExpTime}--complete in general and \CCLASS{coNP}--complete if the number of free variables is bounded.
\end{theorem}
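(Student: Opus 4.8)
The plan is to combine Theorem~\ref{thm:zoo:reachability:property} with the closure properties and the algorithmic framework from Section~\ref{sec:compositional}. First I would observe that asking whether $\REACH{x}{y}{\rsh}$ holds for \emph{all} $\rsh \in \CALLSEM{\sh}{\SRD}$ is the same as asking whether $\CALLSEM{\sh}{\SRD} \subseteq L(\HA{B})$, where $\HA{B}$ is a heap automaton accepting the set of all reduced symbolic heaps in which $x$ definitely reaches $y$. Such an automaton is obtained from the $\RPROP(\alpha,R)$ automata of Theorem~\ref{thm:zoo:reachability:property} by taking a finite union over all relations $R$ with $(x,y)\in R$ (there are only $2^{\alpha^2}$ such relations, a constant once $\alpha$ is fixed), using closure under union from Theorem~\ref{thm:refinement:boolean}; alternatively one simply relaxes the set of final states of $\HAREACH$ to $F = Q_{\HATRACK} \times \{ R \mid (x,y) \in R\}$. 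Then $\CALLSEM{\sh}{\SRD} \subseteq L(\HA{B})$ is decidable by Corollary~\ref{thm:compositional:inclusion}, and the complexity estimate following that corollary, together with the fact (Remark~\ref{rem:closure}) that the transition relation of $\HAREACH$ is polynomial-time decidable and that $\HAREACH$ has $2^{O(\alpha^2)}$ states, gives an \CCLASS{ExpTime} upper bound in general. For bounded $\alpha$ the number of states is constant, and one proceeds exactly as in Lemma~\ref{thm:zoo:establishment:conp}: guess an unfolding tree of height at most the number of states together with a state-labelling whose root carries a ``$x$ does not reach $y$'' state, verify the run in polynomial time via the transition relation, and invoke the standard pumping argument to bound the tree height; this places \COMPLEMENT{SL-REACH} in \CCLASS{NP}, hence \DPROBLEM{SL-REACH} in \CCLASS{coNP}.

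\textbf{Lower bound.} For hardness I would reduce from \COMPLEMENT{SL-RSAT}, mirroring the proof of Lemma~\ref{thm:zoo:establishment:lower}. Given an instance $(\SRD,\PS)$ of \COMPLEMENT{SL-RSAT}, where $\SRD$ contains no points-to assertions, build a symbolic heap that forces the two ``query'' variables to be reachable from one another exactly when $\PS\T{x}$ is unsatisfiable. The idea is to take $\sh \DEFEQ \exists \T{z}\, y ~.~ \PS\T{z} : \{ \PROJ{\FV{0}{}}{1} = \NIL \}$ (or a slight variant), pick as the reachability query a fresh free variable together with, say, $\NIL$: since $\SRD$ has no points-to assertions, \emph{no} model of any unfolding of $\PS$ allocates anything, so $\REACH{a}{b}{\rsh}$ can never hold in a genuine model --- meaning the universally-quantified reachability statement holds vacuously precisely when there are \emph{no} models, i.e.\ when $\PS\T{x}$ is unsatisfiable. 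The only subtlety, which is where I expect to spend the most care, is getting the quantifier alternation and the choice of query variables exactly right so that ``reachable in all unfoldings'' collapses to ``unsatisfiable'': one must ensure that when $\PS\T{x}$ \emph{is} satisfiable there is at least one unfolding witnessing non-reachability (trivial here, since reachability requires a points-to edge and there are none), and that when it is unsatisfiable the set $\CALLSEM{\sh}{\SRD}$ of \emph{reduced} symbolic heaps is nonetheless handled correctly by the definition of \DPROBLEM{SL-REACH} --- note the problem quantifies over $\rsh \in \CALLSEM{\sh}{\SRD}$, a syntactic set of unfoldings, not over semantic models, so one has to argue that unsatisfiability of $\PS\T{x}$ forces every such $\rsh$ to be itself unsatisfiable and therefore $\REACH{x}{y}{\rsh}$ holds vacuously for the purpose of the definite-reachability relation, exactly as establishment was argued vacuously in Lemma~\ref{thm:zoo:establishment:lower}.

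\textbf{Combining the bounds.} Since \COMPLEMENT{SL-RSAT} is \CCLASS{ExpTime}-hard in general and \CCLASS{coNP}-hard for bounded arity (the hardness results cited from Brotherston et al.~\cite{brotherston2016model} via \DPROBLEM{SL-RSAT}), the reduction transfers these lower bounds to \DPROBLEM{SL-REACH}, and together with the matching upper bounds above this yields \CCLASS{ExpTime}-completeness in general and \CCLASS{coNP}-completeness for bounded $\alpha$. Full details of the heap-automaton construction and the reduction would be deferred to the appendix.
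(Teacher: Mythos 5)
Your overall architecture matches the paper's: the \CCLASS{ExpTime} upper bound comes from running the refinement/emptiness machinery on (the complement of) the reachability automaton $\HAREACH$, the \CCLASS{coNP} upper bound is obtained exactly as in Lemma~\ref{thm:zoo:establishment:conp}, and the lower bounds come from a polynomial reduction of $\COMPLEMENT{SL-RSAT}$. Your reduction gadget differs from the paper's in an interesting but harmless way: the paper uses $\sh = \exists \BV{}\,.\,\PTS{\IFV{1}}{\NIL} \SEP \PS\BV{} : \{\IFV{2} \neq \NIL\}$, so that $\REACH{\IFV{1}}{\IFV{2}}{\rsh}$ forces $\IFV{2} \MEQ{\rsh} \NIL$ and thus a contradiction with the pure constraint, whereas you keep the symbolic heap entirely free of points-to assertions so that $\REACH{x}{y}{\rsh}$ is simply impossible in any satisfiable unfolding. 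Both collapse ``reachability in all unfoldings'' to ``all unfoldings unsatisfiable'' via the vacuous truth of the definite relations on unsatisfiable reduced symbolic heaps, and both are correct; yours is arguably the more minimal gadget, the paper's has the virtue of exercising the actual points-to machinery of $\MPT{\rsh}$.

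There is, however, one genuine slip in your upper-bound accounting. You reduce the problem to $\CALLSEM{\sh}{\SRD} \subseteq L(\HA{B})$ and then invoke ``the complexity estimate following'' Corollary~\ref{thm:compositional:inclusion}. That estimate is $\BIGO{(\SIZE{\sh}+\SIZE{\SRD}) \cdot \SIZE{2^{Q_{\HA{A}}}}^{2(M+1)} \cdot \SIZE{\Delta_{\HA{A}}}}$, because the generic complementation of Theorem~\ref{thm:refinement:boolean} goes through a subset construction. With $\SIZE{Q_{\HAREACH}} = 2^{\Theta(\alpha^2)}$ and $\alpha$ unbounded (it is part of the input), this yields a bound doubly exponential in the input, i.e.\ \CCLASS{2-ExpTime}, not \CCLASS{ExpTime}. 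The fix is the one the paper uses (and that you already employ for the union step): since every reduced symbolic heap is assigned exactly one state $(A,\Pi,S)$ by $\HAREACH$, the complement is accepted by the \emph{same} automaton with final states $Q_{\HATRACK} \times \{R \mid (x,y) \notin R\}$, and Algorithm~\ref{alg:on-the-fly-refinement} applied to this directly-complemented automaton runs in time $\BIGO{\SIZE{\SRD}\cdot \SIZE{Q_{\HAREACH}}^{M+1}\cdot \SIZE{\Delta_{\HAREACH}}}$, which is singly exponential by Remark~\ref{rem:closure}. With that correction your argument is complete.
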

\begin{proof}
  Membership in \CCLASS{ExpTime} follows from our upper bound derived for Algorithm~\ref{alg:on-the-fly-refinement},
  the size of the state space of $\HAREACH$,
  which is exponential in $\alpha$, and Remark~\ref{rem:closure}.
  If $\alpha$ is bounded, membership in $\CCLASS{coNP}$ is shown analogously to Lemma~\ref{thm:zoo:establishment:conp}.
 Lower bounds are shown by reducing \COMPLEMENT{SL-RSAT} to \DPROBLEM{SL-REACH}.
Formally, let $(\SRD,\PS)$ be an instance of \COMPLEMENT{SL-RSAT}.
Moreover, let
$ 
\sh ~\DEFEQ~ \exists \BV{} ~.~ \PTS{\PROJ{\FV{0}{}}{1}}{\NIL} \SEP \PS\BV{}
~:~ \{ \PROJ{\FV{0}{}}{2} \neq \NIL \}.
$ 
As $\IFV{2}$ is neither allocated nor $\NIL$, $\IFV{2}$ is \emph{not}
definitely reachable from $\IFV{1}$ in \emph{any} model of
$\sh$. Hence
$(\SRD,\sh,\PROJ{\FV{0}{}}{1},\PROJ{\FV{0}{}}{2}) \in
\DPROBLEM{SL-REACH}$ iff $\PS$ is unsatisfiable.
A detailed proof is found in
Appendix~\ref{app:zoo:reachability:complexity}.
\qed
\end{proof}
\subsection{Garbage-Freedom} \label{sec:zoo:garbage}
Like the tracking automaton $\HATRACK$, the automaton
$\HAREACH$ is a useful ingredient in the construction of more complex heap automata.
%
For instance, such an automaton can easily be modified to check
whether a symbolic heap is \emph{garbage-free}, i.e., whether every existentially quantified variable is reachable from some program variable.
Garbage-freedom is a natural requirement if SIDs represent data structure specifications.
For instance, the SIDs in Example~\ref{ex:srd} are garbage-free.
Furthermore, this property is needed by the approach of Habermehl et al.~\cite{habermehl2012forest}.
\begin{lemma}\label{thm:zoo:garbage:property}
 For each $\alpha \in \POSN$, the set $\GARBAGEPROP(\alpha)$, given by
 \begin{align*}
   &\{ \rsh \in \RSHCLASSFV{\alpha} ~|~ \forall y \in \VAR(\rsh) ~.~ \exists x \in \FV{0}{\rsh} ~.~ 
   x \MEQ{\rsh} y ~\text{or}~ \REACH{x}{y}{\rsh} \},
 \end{align*}
 of garbage-free symbolic heaps can be accepted by a heap automaton over $\SHCLASSFV{\alpha}$.
%
\end{lemma}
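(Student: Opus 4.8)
The plan is to follow the construction of the establishment automaton $\HAEST$ from the proof of Theorem~\ref{thm:zoo:establishment}, but to run the reachability automaton $\HAREACH$ of Theorem~\ref{thm:zoo:reachability:property} in parallel instead of $\HATRACK$, since garbage-freedom is a reachability property. Concretely, I would take $\HAGARBAGE = (Q,\SHCLASSFV{\alpha},\Delta,F)$ with $Q \DEFEQ Q_{\HAREACH} \times \{0,1\}$ and $F \DEFEQ Q_{\HAREACH} \times \{1\}$, where the extra flag records whether the garbage-freedom condition still holds ($1$) or has already been violated ($0$), and
\[
\MOVE{\HAGARBAGE}{(p_0,q_0)}{\sh}{(p_1,q_1)\ldots(p_m,q_m)}
\text{ iff }
\MOVE{\HAREACH}{p_0}{\sh}{p_1\ldots p_m}
\text{ and }
q_0 = \min\{q_1,\ldots,q_m,\CHECK(\sh,p_1\ldots p_m)\},
\]
where $\CHECK(\sh,\T{p}) = 1$ iff for every $y \in \VAR(\sh)$ there is an $x \in \FV{0}{\sh}$ with $x \MEQ{\SHRINK{\sh,\T{p}}} y$ or $\REACH{x}{y}{\SHRINK{\sh,\T{p}}}$, and $\SHRINK{\sh,\T{p}}$ is obtained from $\sh$ by replacing every predicate call by its kernel exactly as in Definition~\ref{def:zoo:track-automaton}, except that the \emph{reachability-aware} kernel $\SSIGMA{\cdot}{(B,\Lambda,S)}$ of Theorem~\ref{thm:zoo:reachability:property} is used. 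Intuitively, the flag at a node is $1$ precisely when all children already carry flag $1$ and, in addition, every variable of the current symbolic heap $\sh$---including the variables bound in $\sh$ and the parameters of its calls---is definitely equal to, or definitely reachable from, a free variable of $\sh$ once its calls have been compressed to their kernels.

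I would then verify the three conditions that make this a heap automaton for $\GARBAGEPROP(\alpha)$. Decidability of $\Delta$ and the polynomial bound for evaluating $\CHECK$ follow from Remark~\ref{rem:closure} exactly as for $\HAEST$, and $\SIZE{Q} = 2 \cdot \SIZE{Q_{\HAREACH}}$ is still exponential in $\alpha$. For the language: since $\SHRINK{\rsh,\EMPTYSEQ} = \rsh$ for every reduced symbolic heap, the lone transition $\OMEGA{\HAGARBAGE}{(p_0,q_0)}{\rsh}$ forces $q_0 = \CHECK(\rsh,\EMPTYSEQ)$, which equals $1$ exactly when $\rsh$ is garbage-free, so $L(\HAGARBAGE) = \GARBAGEPROP(\alpha)$ provided $\HAGARBAGE$ is compositional. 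Compositionality in turn reduces---using compositionality of $\HAREACH$ and the fact that $\HAREACH$ assigns a unique state $p_i$ to each $\rsh_i$, which forces $q_i = \CHECK(\rsh_i,\EMPTYSEQ)$---to the key lemma: for all $\sh$ with calls $\CALLS{\sh} = \CALLN{1}{} \SEP \ldots \SEP \CALLN{m}{}$ and all reduced $\rsh_1,\ldots,\rsh_m$,
\[
\sh\left[\PS_1/\rsh_1,\ldots,\PS_m/\rsh_m\right] \text{ is garbage-free}
\iff
\Bigl(\,\bigwedge\nolimits_i \rsh_i \text{ is garbage-free}\,\Bigr) \text{ and } \CHECK(\sh,p_1\ldots p_m) = 1 ,
\]
where $p_i$ is the state $\HAREACH$ assigns to $\rsh_i$.

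I expect this key lemma to be the main obstacle, since it is where the reachability-specific reasoning concentrates. For the direction ``$\Leftarrow$'' one glues witnesses: a variable internal to some $\rsh_i$ is, by garbage-freedom of $\rsh_i$, definitely equal to or definitely reachable from a parameter of $\PS_i$; that parameter is a variable of $\sh$, hence by $\CHECK(\sh,p_1\ldots p_m)=1$ definitely equal to or definitely reachable from a free variable of $\sh$; and since $\SSIGMA{\CALLN{i}{}}{\T{p}[i]}$ preserves exactly the definite (in)equalities, allocation, and reachabilities among the parameters of $\PS_i$---the invariant established for $\HAREACH$ in Theorem~\ref{thm:zoo:reachability:property}---these two facts combine into a definite-reachability witness in $\sh\left[\PS_1/\rsh_1,\ldots\right]$. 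For the direction ``$\Rightarrow$'' the delicate point is localization: any definite-reachability chain of $\sh\left[\PS_1/\rsh_1,\ldots\right]$ witnessing that some variable is garbage-free-reachable, restricted to the sub-heap of an $\rsh_i$, yields such a witness there (it must enter that sub-heap at a parameter), so any garbage in the full unfolding is caught either by a child's flag being $0$ or by $\CHECK$ returning $0$ at the node where the offending variable first occurs; one also has to observe that compressing a call to its kernel neither creates nor destroys garbage among the variables of $\sh$, again by Theorem~\ref{thm:zoo:reachability:property}. Once the key lemma is in place, the complement automaton $\HA{A}_{\overline{\GARBAGEPROP}}$ is obtained, as usual, by swapping final and non-final states, which together with Corollary~\ref{thm:compositional:existence} and Remark~\ref{rem:closure} yields the corresponding decision procedure.
\qed
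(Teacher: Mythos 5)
Your proposal is correct and coincides with the paper's own construction: the paper's $\HAGARBAGE$ (Definition~\ref{def:zoo:garbage:automaton}) is exactly your product of $\HAREACH$ with a $\{0,1\}$-flag updated by the same $\min$/$\CHECK$ rule, and its correctness proof likewise splits into the language equality via $\SHRINK{\rsh,\EMPTYSEQ}=\rsh$ and a compositionality lemma (Lemma~\ref{thm:zoo:garbage:compositionality}) that is precisely your ``key lemma,'' proved by propagating definite equality and reachability through call parameters (Lemmas~\ref{obs:zoo:establishment:equality-propagation} and~\ref{obs:zoo:reachability:propagation}) together with the kernel-preservation result for $\HAREACH$. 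No gaps; the two obstacles you flag are exactly the ones the paper discharges.
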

\begin{proof}[sketch]
A heap automaton $\HAGARBAGE$ accepting $\GARBAGEPROP(\alpha)$
is constructed similarly to the heap automaton $\HAEST$
introduced in the proof of Theorem~\ref{thm:zoo:establishment}.
The main difference is that heap automaton $\HAREACH$ is used
instead of $\HATRACK$.
Furthermore, the predicate $\CHECK : \SHCLASSFV{\alpha} \times Q_{\HAREACH}^{*} \to \{0,1\}$
is redefined to verify that every variable of a symbolic heap $\sh$ is established in $\SHRINK{\sh,\T{p}}$,
where $\SHRINK{\sh,\T{p}}$ is the same as in the construction of $\HAREACH$ (see Theorem~\ref{thm:zoo:reachability:property}):
%
\begin{align*}
\CHECK(\sh,\T{p}) ~\DEFEQ~
 \begin{cases}
        1 &, ~\text{if}~ \forall y \in \VAR(\sh) \,.\, \exists x \in \FV{0}{\sh} ~.~ \\
          & \qquad        x \MEQ{\SHRINK{\sh,\T{p}}} y ~\text{or}~ \REACH{x}{y}{\SHRINK{\sh,\T{p}}} \\
        0 &, ~\text{otherwise}~,
 \end{cases}
\end{align*}
Since $\SHRINK{\rsh,\EMPTYSEQ} = \rsh$ holds for every reduced symbolic heap $\rsh$,
it is straightforward that $L(\HAGARBAGE) = \GARBAGEPROP(\alpha)$.
A proof is found in Appendix~\ref{app:zoo:garbage:property}.
\qed
\end{proof}
To guarantee that symbolic heaps are garbage-free, we solve the 
%

%
\noindent\textbf{Garbage-freedom problem ($\DPROBLEM{SL-GF}$)}:
  Given an SID $\SRD$ and 
  $\sh \in \SL{\SRD}{}$,
  decide whether every $\rsh \in \CALLSEM{\sh}{\SRD}$ is garbage-free,
  i.e., $\rsh \in \GARBAGEPROP(\alpha)$ for some $\alpha \in \N$.
%
\begin{theorem}\label{thm:zoo:garbage:complexity}
  \DPROBLEM{SL-GF} is \CCLASS{ExpTime}--complete in general and $\CCLASS{coNP}$--complete if the number of free variables $\alpha$ is bounded.
\end{theorem}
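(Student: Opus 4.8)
The plan is to obtain the upper bounds from the heap automaton $\HAGARBAGE$ of Lemma~\ref{thm:zoo:garbage:property} together with the generic machinery of Section~\ref{sec:compositional}, and to establish the matching lower bounds by a polynomial-time reduction from $\COMPLEMENT{SL-RSAT}$, in the same spirit as Theorems~\ref{thm:zoo:establishment:complexity} and~\ref{thm:zoo:reachability:complexity}.

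\emph{Upper bounds.} As for $\HAEST$, I would swap the final and non-final states of $\HAGARBAGE$ to obtain a heap automaton $\HA{A}_{\overline{\GARBAGEPROP}}$ over $\SHCLASSFV{\alpha}$ with $L(\HA{A}_{\overline{\GARBAGEPROP}}) = \RSHCLASSFV{\alpha} \setminus \GARBAGEPROP(\alpha)$. Then $(\SRD,\sh) \in \DPROBLEM{SL-GF}$ holds iff $\CALLSEM{\sh}{\SRD} \cap L(\HA{A}_{\overline{\GARBAGEPROP}}) = \emptyset$, which is decidable by Corollary~\ref{thm:compositional:existence}, e.g. by running Algorithm~\ref{alg:on-the-fly-refinement}. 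Since $\HAGARBAGE$ is built from $\HAREACH$ (and thus $\HATRACK$) by attaching a single Boolean flag, the state spaces of $\HAGARBAGE$ and of $\HA{A}_{\overline{\GARBAGEPROP}}$ are exponential in $\alpha$, and membership in their transition relations is decidable in polynomial time by Remark~\ref{rem:closure} (using also that definite reachability is polynomial-time computable); plugging this into the running time of Algorithm~\ref{alg:on-the-fly-refinement} yields $\DPROBLEM{SL-GF} \in \CCLASS{ExpTime}$. For a bounded number of free variables $\alpha$, I would reuse the argument of Lemma~\ref{thm:zoo:establishment:conp} verbatim with $\HA{A}_{\overline{\ESTPROP}}$ replaced by $\HA{A}_{\overline{\GARBAGEPROP}}$: a constant bound on the number of states gives, by a pumping argument on unfolding trees, a polynomial-size witness---an unfolding tree of bounded height together with a consistent state labelling---for $\CALLSEM{\sh}{\SRD} \not\subseteq \GARBAGEPROP(\alpha)$ that is checkable in polynomial time, whence $\DPROBLEM{SL-GF} \in \CCLASS{coNP}$.

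\emph{Lower bounds.} I would reduce $\COMPLEMENT{SL-RSAT}$ to $\DPROBLEM{SL-GF}$. Given an instance $(\SRD,\PS)$ with $\SRD$ free of points-to assertions, introduce a fresh variable $y$ and set
\[
  \sh ~\DEFEQ~ \exists \BV{}\, y ~.~ \PTS{\PROJ{\FV{0}{}}{1}}{\NIL} \SEP \PS\BV{} ~:~ \{\, y \neq \NIL \,\}.
\]
Since $\SRD$ contains no points-to assertions, every unfolding of $\PS\BV{}$ has empty spatial part, so in every unfolding $\rsh$ of $\sh$ the variable $y$ is neither allocated nor reachable from any variable, and---because $y \neq \NIL$ occurs in $\rsh$ while $y$ occurs nowhere else---$y$ is definitely equal to some free variable of $\rsh$ only if the pure part of $\rsh$ is inconsistent. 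Hence, if $\PS\T{x}$ is satisfiable, then by Lemma~\ref{thm:symbolic-heaps:semantics} $\sh$ has a consistent unfolding $\rsh$ in which $y$ is garbage, so $\rsh \notin \GARBAGEPROP(\alpha)$ and $(\SRD,\sh) \notin \DPROBLEM{SL-GF}$; and if $\PS\T{x}$ is unsatisfiable, then every unfolding of $\PS\BV{}$---hence every unfolding of $\sh$---has an inconsistent pure part, so by Remark~\ref{rem:closure} every variable of such a reduced symbolic heap is definitely equal to $\PROJ{\FV{0}{}}{1} \in \FV{0}{\sh}$, making it vacuously garbage-free, so $(\SRD,\sh) \in \DPROBLEM{SL-GF}$. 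The reduction is clearly computable in polynomial time, and it leaves the number of free variables of $\sh$ and of every rule body of $\SRD$ bounded whenever $\ARITY(\SRD)$ is bounded. Since $\DPROBLEM{SL-RSAT}$ is $\CCLASS{ExpTime}$-complete in general and $\CCLASS{NP}$-complete for a bounded number of free variables~\cite{brotherston2014decision,brotherston2016model}, its complement $\COMPLEMENT{SL-RSAT}$ is $\CCLASS{ExpTime}$-hard in general and $\CCLASS{coNP}$-hard for a bounded number of free variables, so $\DPROBLEM{SL-GF}$ inherits both lower bounds; combined with the upper bounds, this proves the claim.

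\emph{Main obstacle.} The delicate step is the correctness of the reduction: one must show that the inserted witness $y$ is \emph{genuinely} garbage in \emph{every} consistent unfolding of $\sh$ (not allocated, not reachable, not forced equal to a free variable), while checking that the ``$\PS$ unsatisfiable $\Rightarrow$ $\sh$ garbage-free'' direction is not spoiled by additional garbage stemming from bound variables of $\PS$'s own unfoldings---which it is not, because unsatisfiability of $\PS$ collapses every unfolding of $\sh$ to an inconsistent, and hence (by Remark~\ref{rem:closure}) vacuously garbage-free, reduced symbolic heap. The upper bounds are, by contrast, a routine reuse of the constructions already in place.
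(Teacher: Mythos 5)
Your proposal is correct and follows essentially the same route as the paper: upper bounds via the complemented automaton $\HA{A}_{\overline{\GARBAGEPROP}}$ fed into the refinement/emptiness machinery (and the Lemma~\ref{thm:zoo:establishment:conp}-style bounded-height witness argument for \CCLASS{coNP}), and lower bounds via a polynomial-time reduction from $\COMPLEMENT{SL-RSAT}$ that plants a bound variable which is garbage exactly when the predicate is satisfiable. The only (immaterial) difference is the gadget: the paper's $\sh = \exists \BV{}\, z' \,.\, \PS\BV{} : \{x = \NIL,\ z' \neq \NIL\}$ omits the points-to assertion and forces the free variable to $\NIL$, whereas you reuse the \DPROBLEM{SL-REACH}-style gadget with $\PTS{\IFV{1}}{\NIL}$ and an existentially quantified $y$ with $y \neq \NIL$ — both make the witness variable unallocated, unreachable, and not definitely equal to a free variable unless the heap is inconsistent.
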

\begin{proof}
Similar to Theorem~\ref{thm:zoo:establishment:complexity}; see Appendix~\ref{app:zoo:garbage:complexity}.
\qed
\end{proof}
\subsection{Acyclicity} \label{sec:zoo:acyclicity}
Automatic termination proofs of programs frequently rely on
the acyclicity of the data structures used, i.e., they
assume that no variable is reachable from itself.
%
In fact, Zanardini and Genaim~\cite{zanardini2014inference} claim that ``proving termination needs acyclicity, unless program-specific or non-automated reasoning is performed.''
\begin{lemma}\label{thm:zoo:acyclicity:property}
  For each $\alpha \in \POSN$, the set
  of all weakly acyclic symbolic heaps
  \begin{align*}
          \CYCLEPROP(\alpha) ~\DEFEQ~ & \{ \rsh \in \RSHCLASSFV{\alpha} ~|~
               \NIL \MNEQ{\rsh} \NIL
            ~\text{or}~ \forall x \in \VAR(\rsh) ~.~
              \text{not}~ \REACH{x}{x}{\rsh}  \}
  \end{align*}
  can be accepted by a heap automaton over $\SHCLASSFV{\alpha}$.
\end{lemma}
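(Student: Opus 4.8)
The plan is to build $\HACYCLE$ in the same style as $\HAEST$ (Theorem~\ref{thm:zoo:establishment}) and $\HAGARBAGE$ (Lemma~\ref{thm:zoo:garbage:property}): run the reachability automaton $\HAREACH$ of Theorem~\ref{thm:zoo:reachability:property} and attach a single flag $q\in\{0,1\}$ recording whether weak acyclicity has already been violated somewhere in the part of an unfolding processed so far. The flag is genuinely needed, since a definite cycle may run entirely through existentially quantified variables (e.g.\ in $\exists w.\,\PT{w}{w}$), which are ``compressed away'' by the kernel construction $\SSIGMA{\cdot}{\cdot}$ and therefore leave no trace in the $\HAREACH$-state alone. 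Using $\HAREACH$ rather than $\HATRACK$ as the base is essential here: to detect a cycle that closes through a predicate call, one must know which of its parameters reach one another inside the call, and exactly this information is stored in $\HAREACH$'s states and materialised by its reachability-aware kernel.

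Concretely, set $\HACYCLE \DEFEQ (Q,\SHCLASSFV{\alpha},\Delta,F)$ with $Q \DEFEQ Q_{\HAREACH}\times\{0,1\}$ and $F \DEFEQ Q_{\HAREACH}\times\{1\}$, where $\MOVE{\HACYCLE}{(p_0,q_0)}{\sh}{(p_1,q_1)\ldots(p_m,q_m)}$ holds iff
\[
  \MOVE{\HAREACH}{p_0}{\sh}{p_1\ldots p_m}
  \qquad\text{and}\qquad
  q_0 = \min\{q_1,\ldots,q_m,\CHECK(\sh,p_1\ldots p_m)\},
\]
and $\CHECK(\sh,\T{p}) \DEFEQ 1$ exactly when $\SHRINK{\sh,\T{p}}$ is unsatisfiable (equivalently $\NIL\MNEQ{\SHRINK{\sh,\T{p}}}\NIL$) or no $x\in\VAR(\SHRINK{\sh,\T{p}})$ satisfies $\REACH{x}{x}{\SHRINK{\sh,\T{p}}}$; here $\SHRINK{\sh,\T{p}}$ is the reduced symbolic heap obtained by replacing each predicate call $\CALLN{i}{}$ of $\sh$ by the reachability-aware kernel $\SSIGMA{\CALLN{i}{}}{\T{p}[i]}$, exactly as in the proof of Theorem~\ref{thm:zoo:reachability:property}. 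By Remark~\ref{rem:closure} and the polynomial-time computability of definite reachability on reduced symbolic heaps, $\CHECK$ and hence $\Delta$ is decidable, and $\HACYCLE$ has finitely many states. Since $\SHRINK{\rsh,\EMPTYSEQ} = \rsh$ for every reduced symbolic heap $\rsh$, a single-step run $\OMEGA{\HACYCLE}{(p_0,q_0)}{\rsh}$ forces $q_0 = \CHECK(\rsh,\EMPTYSEQ)$, which equals $1$ precisely when $\rsh\in\CYCLEPROP(\alpha)$; hence $L(\HACYCLE) = \CYCLEPROP(\alpha)$ as soon as compositionality is established.

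It remains to verify the compositionality property of Definition~\ref{def:refinement:automaton}. Its $\HAREACH$-component is inherited from $\HAREACH$, and the tuple of child-states is then uniquely pinned down, so the only obligation left is the flag identity
\[
  \min\{\CHECK(\rsh_1,\EMPTYSEQ),\ldots,\CHECK(\rsh_m,\EMPTYSEQ),\CHECK(\sh,\T{p})\} \;=\; \CHECK\bigl(\sh\left[P_1/\rsh_1,\ldots,P_m/\rsh_m\right],\EMPTYSEQ\bigr),
\]
where $\T{p}[i]$ is the $\HAREACH$-state of $\rsh_i$. This reduces to the claim that $\sh\left[P_1/\rsh_1,\ldots,P_m/\rsh_m\right]$ is unsatisfiable or has no definitely self-reachable variable iff each $\rsh_i$ is and $\SHRINK{\sh,\T{p}}$ is. I would prove it as follows. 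Every variable on a definite cycle is definitely allocated (since $\MPT{\rsh}$ forces allocation of its source), so such a cycle is a finite chain of definite points-to edges, each of which either occurs literally in $\sh$ or is a definite reachability $\REACH{a}{b}{\rsh_i}$ between two free variables of some $\rsh_i$; assigning each edge to the innermost component containing it shows that the cycle ``closes'' either inside a single $\rsh_i$ (hence $\CHECK(\rsh_i,\EMPTYSEQ)=0$) or at the top level, where the faithfulness of the $\HAREACH$-encoding makes it manifest in $\SHRINK{\sh,\T{p}}$ (hence $\CHECK(\sh,\T{p})=0$); the fresh variable $z$ of the kernel $\SSIGMA{\sh}{(B,\Lambda,S)}$ ensures the compression introduces no spurious points-to edge. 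Conversely, the substitution machinery for unfoldings (Definition~\ref{def:symbolic-heaps:unfolding} and Lemma~\ref{thm:symbolic-heaps:semantics}) lifts any cycle or unsatisfiability witnessed in $\SHRINK{\sh,\T{p}}$ or in some $\rsh_i$ back into $\sh\left[P_1/\rsh_1,\ldots,P_m/\rsh_m\right]$. Iterating the identity over unfolding trees then yields compositionality. The main obstacle is precisely this bookkeeping, together with the careful treatment of how unsatisfiability and definite self-reachability interact as witnesses propagate through the kernel abstraction; since the analogous reachability invariant is already the crux of Theorem~\ref{thm:zoo:reachability:property}, the proof here extends that argument, and the full details are deferred to the appendix.
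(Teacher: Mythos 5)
Your overall architecture---the product state space $Q_{\HAREACH}\times\{0,1\}$ with a $\min$-propagated flag and a $\CHECK$ predicate evaluated on the reachability-aware kernel $\SHRINK{\sh,\T{p}}$---is exactly the paper's. The gap is in where you put the unsatisfiability disjunct. You fold ``$\SHRINK{\sh,\T{p}}$ is unsatisfiable'' into $\CHECK$ itself, whereas the paper lets $\CHECK(\sh,\T{p})=1$ exactly when no variable of $\SHRINK{\sh,\T{p}}$ is definitely self-reachable (so $\CHECK=0$ on unsatisfiable heaps, since definite reachability holds vacuously there) and handles unsatisfiability only at acceptance time, taking $F = Q_{\HAREACH}\times\{1\} \;\cup\; \{(A,\Pi,S)\in Q_{\HAREACH} \mid \NIL\neq\NIL\in\Pi\}\times\{0\}$.

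This is not cosmetic: your version breaks compositionality. The biconditional your flag identity rests on---``$\sh[P_1/\rsh_1,\ldots,P_m/\rsh_m]$ is unsatisfiable or has no definitely self-reachable variable iff each $\rsh_i$ is (unsatisfiable or cycle-free) and $\SHRINK{\sh,\T{p}}$ is''---is false. Take $\rsh_1 = \PT{x}{x}$ and $\sh = P_1(x) : \{x=\NIL\}$. Then $\rsh_1$ is satisfiable with $\REACH{x}{x}{\rsh_1}$, so your $\CHECK(\rsh_1,\EMPTYSEQ)=0$; but $\rsh = \sh[P_1/\rsh_1]$ is unsatisfiable (an allocated variable cannot equal $\NIL$), so your $\CHECK(\rsh,\EMPTYSEQ)=1$, while $\min\{0,\CHECK(\sh,\T{p})\}=0$. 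Hence $(\EMPTYSEQ,\rsh,(p_0,1))\in\Delta$ but no tuple of child states witnesses output $(p_0,1)$ through $\sh$, violating Definition~\ref{def:refinement:automaton}. The structural reason is that your $\CHECK$ is not monotone under the $\min$: a flag set to $0$ by a satisfiable-but-cyclic subheap can never be reset to $1$, yet an outer rule can later make the whole unfolding unsatisfiable and therefore weakly acyclic by definition. The paper's $\CHECK$ avoids this because it also returns $0$ on unsatisfiable heaps, so $0$ genuinely persists under composition, and the ``unsatisfiable $\Rightarrow$ accept'' clause is applied exactly once, at the final-state test, where the $\HAREACH$-component already records $\NIL\neq\NIL$. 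The fix is to revert to the paper's $\CHECK$ and enlarge $F$ as above; the remainder of your argument then goes through.
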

Here, the condition $\NIL \MNEQ{\rsh} \NIL$ ensures
that an unsatisfiable reduced symbolic heap is considered weakly acyclic.
Further, note that our notion of acyclicity is weak in the sense that dangling pointers may introduce cyclic models that are not considered.
For example, $\exists z . \PTS{x}{z}$ is weakly acyclic, but contains cyclic models if $x$ and $z$ are aliases.
However, weak acyclicity coincides with the absence of cyclic models for established SIDs---a property considered in Section~\ref{sec:zoo:establishment}.
\begin{proof}[sketch]
  A heap automaton $\HACYCLE$ for the set of all weakly acyclic reduced
  symbolic heaps is constructed analogously to the heap automaton
  $\HAGARBAGE$ in the proof of
  Lemma~\ref{thm:zoo:garbage:property}.
  The main difference is the predicate
  $\CHECK : \SHCLASSFV{\alpha} \times Q_{\HAREACH}^{*} \to \{0,1\}$, which now checks
  whether a symbolic heap is weakly acyclic:
\begin{align*}
\CHECK(\sh,\T{p}) ~\DEFEQ~
 \begin{cases}
        1 &, ~\text{if}~ \forall y \in \VAR(\sh) ~.~ \text{not}~\REACH{x}{x}{\SHRINK{\sh,\T{p}}} \\
        0 &, ~\text{otherwise}.
 \end{cases}
\end{align*}
  Moreover, the set of final states $F_{\HACYCLE}$ is chosen such that accepted symbolic heaps
  are unsatisfiable or $\CHECK(\sh,\T{p}) =1$. 
  See Appendix~\ref{app:zoo:acyclicity:property} for details.
  \qed
\end{proof}
For example, 
 the symbolic heap $\texttt{sll}\,\FV{0}{}$ 
is weakly acyclic, but $\texttt{dll}\,\FV{0}{}$ (cf. Example~\ref{ex:srd}) is not.
In general, we are interested in the
%

%
\noindent\textbf{Acyclicity problem ($\DPROBLEM{SL-AC}$)}:
  Given an SID $\SRD$ and 
  $\sh \in \SL{\SRD}{}$,
  decide whether every $\rsh \in \CALLSEM{\sh}{\SRD}$ is weakly acyclic,
  i.e., $\rsh \in \CYCLEPROP(\alpha)$ for some $\alpha \in \N$.
%
\begin{theorem}\label{thm:zoo:acyclicity:complexity}
  \DPROBLEM{SL-AC} is \CCLASS{ExpTime}--complete in general and $\CCLASS{coNP}$--com\-plete
  if the number of free variables $\alpha$ is bounded.
\end{theorem}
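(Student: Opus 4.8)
The plan is to follow the two-sided strategy of Theorems~\ref{thm:zoo:establishment:complexity}, \ref{thm:zoo:reachability:complexity}, and \ref{thm:zoo:garbage:complexity}: the upper bounds come from the heap automaton $\HACYCLE$ of Lemma~\ref{thm:zoo:acyclicity:property}, the lower bounds from a reduction of \COMPLEMENT{SL-RSAT}. For the upper bound, observe first that $\HACYCLE$ assigns exactly one state to every reduced symbolic heap (it is built on the essentially deterministic $\HAREACH$ together with a $\{0,1\}$-flag), so, exactly as for $\HAEST$, swapping its final and non-final states yields a heap automaton $\HA{A}_{\overline{\CYCLEPROP}}$ over $\SHCLASSFV{\alpha}$ with $L(\HA{A}_{\overline{\CYCLEPROP}}) = \RSHCLASSFV{\alpha} \setminus \CYCLEPROP(\alpha)$; its state space has the same size as that of $\HACYCLE$, which is exponential in $\alpha$, and by Remark~\ref{rem:closure} its transition relation is polynomial-time decidable. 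Since $(\SRD,\sh) \in \DPROBLEM{SL-AC}$ iff $\CALLSEM{\sh}{\SRD} \cap L(\HA{A}_{\overline{\CYCLEPROP}}) = \emptyset$, running Algorithm~\ref{alg:on-the-fly-refinement} on $\SRD$ augmented by a fresh rule for $\sh$ decides $\DPROBLEM{SL-AC}$ in time $\BIGO{(\SIZE{\SRD}+\SIZE{\sh}) \cdot \SIZE{Q_{\HA{A}_{\overline{\CYCLEPROP}}}}^{M+1} \cdot \SIZE{\Delta_{\HA{A}_{\overline{\CYCLEPROP}}}}}$, which is in \CCLASS{ExpTime} because $\alpha \le \SIZE{\SRD}+\SIZE{\sh}$.

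When $\alpha$ is bounded, $\SIZE{Q_{\HA{A}_{\overline{\CYCLEPROP}}}} \le k$ for a constant $k$, and I would establish $\COMPLEMENT{SL-AC} \in \CCLASS{NP}$ exactly as in Lemma~\ref{thm:zoo:establishment:conp}: guess an unfolding tree $t \in \UTREES{\SRD}{\sh}$ of height at most $k$ together with a node labelling $\omega$ with $\omega(\EMPTYSEQ) \in F_{\HA{A}_{\overline{\CYCLEPROP}}}$ --- both of size polynomial in $\SIZE{\SRD}+\SIZE{\sh}$ --- and verify in polynomial time (Remark~\ref{rem:closure}) that $\omega$ is a valid run witnessing $\UNFOLD{t} \in \overline{\CYCLEPROP(\alpha)}$. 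A standard pumping argument on runs of $\HA{A}_{\overline{\CYCLEPROP}}$ shows that height $k$ loses no counterexamples, so $\DPROBLEM{SL-AC} \in \CCLASS{coNP}$ for bounded $\alpha$.

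For hardness I would reduce \COMPLEMENT{SL-RSAT} --- which is \CCLASS{ExpTime}-hard in general and \CCLASS{coNP}-hard for bounded arity --- to \DPROBLEM{SL-AC}. Given an instance $(\SRD,\PS)$ of \COMPLEMENT{SL-RSAT}, where $\SRD$ has no points-to assertions, put $\sh \DEFEQ \exists \BV{} ~.~ \PTS{\PROJ{\FV{0}{}}{1}}{\PROJ{\FV{0}{}}{1}} \SEP \PS\BV{}$ and write $x = \PROJ{\FV{0}{}}{1}$; note $\sh$ has a single free variable and $\SRD$ is left unchanged. Every $\rsh \in \CALLSEM{\sh}{\SRD}$ has the form $\exists \BV{}\,.\,\PTS{x}{x} : \Lambda$ where $\Lambda$ is the pure part of a corresponding unfolding of $\PS\BV{}$ and $x \notin \VAR(\Lambda)$; since $\SRD$ contains no points-to assertions, $\rsh$ is satisfiable iff $\Lambda$ is satisfiable, i.e., iff that unfolding of $\PS$ is satisfiable. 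If $\PS$ is satisfiable, pick such an $\rsh$ with satisfiable $\Lambda$: then $\NIL \MNEQ{\rsh} \NIL$ fails while $x \MPT{\rsh} x$ gives $\REACH{x}{x}{\rsh}$, so $\rsh \notin \CYCLEPROP(1)$ and $(\SRD,\sh) \notin \DPROBLEM{SL-AC}$. If $\PS$ is unsatisfiable (in particular if $\CALLSEM{\PS\BV{}}{\SRD} = \emptyset$), every such $\rsh$ is unsatisfiable, so by the saturation convention of Remark~\ref{rem:closure} the closure of its pure part is the full set of pure formulas and hence $\NIL \MNEQ{\rsh} \NIL$ holds, giving $\rsh \in \CYCLEPROP(1)$ and $(\SRD,\sh) \in \DPROBLEM{SL-AC}$. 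Thus $(\SRD,\sh) \in \DPROBLEM{SL-AC}$ iff $(\SRD,\PS) \in \COMPLEMENT{SL-RSAT}$; the reduction is clearly polynomial-time and preserves bounded arity.

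Combining the two directions gives \CCLASS{ExpTime}-completeness in general and \CCLASS{coNP}-completeness for bounded $\alpha$. I expect the only delicate point to be the lower-bound gadget: one must check that the extra conjunct $\PTS{x}{x}$ yields a genuine definite self-loop precisely when $\PS$ has a satisfiable unfolding, while exploiting that, by Remark~\ref{rem:closure} and the $\NIL \MNEQ{\rsh} \NIL$ clause in the definition of $\CYCLEPROP$, every unsatisfiable reduced symbolic heap is (vacuously) weakly acyclic. The upper-bound parts are routine adaptations of the establishment and garbage-freedom arguments, once one notes that the complement automaton is again obtained simply by swapping final states, keeping its state space exponential (resp. constant, for bounded $\alpha$) in $\alpha$.
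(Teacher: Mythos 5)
Your proposal matches the paper's proof: the upper bounds are obtained exactly as for establishment (complementing $\HACYCLE$ by swapping final states, running Algorithm~\ref{alg:on-the-fly-refinement}, and reusing the bounded-height unfolding-tree argument of Lemma~\ref{thm:zoo:establishment:conp} for the \CCLASS{coNP} case), and the lower bound uses the identical gadget $\sh = \exists \BV{}\,.\,\PT{\IFV{1}}{\IFV{1}} \SEP \PS\BV{}$, arguing that $\IFV{1}$ is definitely reachable from itself so weak acyclicity of all unfoldings is equivalent to unsatisfiability of $\PS$. Your additional case analysis (satisfiable vs.\ unsatisfiable $\Lambda$, empty unfolding set) just spells out details the paper defers to the appendix; the approach is the same and correct.
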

\begin{proof}
Similar to the proof of Theorem~\ref{thm:zoo:establishment:complexity}.
For lower bounds, we show that $\COMPLEMENT{SL-RSAT}$ is reducible to
\DPROBLEM{SL-AC}.
Let $(\SRD,\PS)$ be an instance of \COMPLEMENT{SL-RSAT}.
Moreover, let $\sh = \exists \BV{} . \PT{\PROJ{\FV{0}{}}{1}}{\PROJ{\FV{0}{}}{1}} \SEP \PS\BV{}$.
Since $\IFV{1}$ is definitely reachable from itself, $\sh$ is weakly acyclic iff $\PS$ is unsatisfiable.
Thus, $(\SRD,\sh) \in \DPROBLEM{SL-AC}$ iff $(\SRD,\PS) \in \COMPLEMENT{SL-RSAT}$.
See Appendix~\ref{app:zoo:acyclicity:complexity} for details.
\qed
\end{proof}
%
%
%


%
\section{Implementation} \label{sec:implementation}

We developed a prototype of our framework---called \textsc{Harrsh}\footnote{Heap Automata for Reasoning about Robustness of Symbolic Heaps}---that implements Algorithm~\ref{alg:on-the-fly-refinement} as well as all heap automata constructed in the previous sections.
The code, the tool and our experiments are available online.\footnote{\url{https://bitbucket.org/jkatelaan/harrsh/}}

For our experimental results, we first considered common SIDs from the literature, such as singly- and doubly-linked lists, trees, trees with linked-leaves etc.
For each of these SIDs, we checked all robustness properties presented throughout this paper, i.e.,
the existence of points-to assertions (Example~\ref{ex:heap-automaton:toy}),
the tracking property $\TRACK(B,\Lambda)$ (Section~\ref{sec:zoo:track}),
satisfiability (Section~\ref{sec:zoo:sat}),
establishment (Section~\ref{sec:zoo:establishment}),
the reachability property $\RPROP(\alpha,R)$ (Section~\ref{sec:zoo:reach}),
garbage-freedom (Section~\ref{sec:zoo:garbage}), and
weak acyclicity (Section~\ref{sec:zoo:acyclicity}).
All in all, our implementation of Algorithm~\ref{alg:on-the-fly-refinement} takes 300ms to successfully check these properties on all 45 problem instances.
Since the SIDs under consideration are typically carefully handcrafted to be robust, the low runtime is to be expected.
Moreover, we ran heap automata on benchmarks of the tool \textsc{Cyclist}~\cite{brotherston2014decision}.
In particular, our results for the satisfiability problem---the only
robustness property checked by both tools---were within the same order of magnitude.

Further details are found in Appendix~\ref{app:implementation}.

\section{Entailment Checking with Heap Automata}
\label{sec:entailment}
So far, we have constructed heap automata for reasoning about robustness properties,
such as 
satisfiability, establishment and acyclicity.
%
This section demonstrates that our approach can also be applied to discharge \emph{entailments} for certain fragments of separation logic.
Formally, we are concerned with the
%

\noindent\textbf{Entailment problem (\DENTAIL{\SRDCLASS}{\SRD})}:
 Given symbolic heaps $\sh,\sha \in \SL{\SRD}{\SRDCLASS}$, decide whether $\sh \ENTAIL{\SRD} \sha$ holds, i.e.,
$\forall (\stack,\heap) \in \STATES ~.~ \stack,\heap \ENTAIL{\SRD} \sh ~\text{implies}~ \stack,\heap \ENTAIL{\SRD} \sha$.
%
%

%
Note that the symbolic heap fragment of separation logic is \emph{not} closed under conjunction and negation.
Thus, a decision procedure for satisfiability (cf. Theorem~\ref{thm:zoo:sat:property}) does \emph{not} yield a decision procedure for the entailment problem.
It is, however, essential to have a decision procedure for entailment,
because this problem underlies the important rule of consequence in
Hoare logic~\cite{hoare1969axiomatic}.
In the words of Brotherston et al.~\cite{brotherston2011automated}, ``effective procedures for establishing entailments are at the foundation of automatic verification based on separation logic''.
%

%
We show how our approach to decide robustness properties,
is applicable to discharge
entailments for certain fragments of symbolic heaps.
This results in an algorithm deciding entailments between so-called
\emph{determined symbolic heaps} for SIDs whose predicates
can be characterized by heap automata.
%
%
\begin{definition} \label{def:entailment:determined}
 A reduced symbolic heap $\rsh$ is \emph{determined} if all tight
 models of $\rsh$ are isomorphic.\footnote{A formal definition of model isomorphism is found in Appendix~\ref{app:supplement}.}
 %
 If $\rsh$ is also satisfiable then we call $\rsh$ \emph{well-determined}.
 Moreover, for some SID $\SRD$, a symbolic heap $\sh \in \SL{\SRD}{}$ is (well-)determined if all of its unfoldings $\rsh \in \CALLSEM{\sh}{\SRD}$ are (well-)determined.
 Consequently, an SID $\SRD$ is (well-)determined if $\PS\T{x}$ is (well-)determined for each predicate symbol $\PS$ in $\SRD$.
\end{definition}
  We present two sufficient conditions for determinedness of symbolic heaps.
  First, a reduced symbolic heap $\rsh$ is determined if
  all equalities and inequalities between variables are explicit, i.e., 
    $\forall x,y \in \VAR(\rsh) \,.\, x = y \in \PURE{\rsh}$
    or $x \neq y \in \PURE{\rsh}$
  ~\TECHNICALREPORT.
  Furthermore, a reduced symbolic heap $\rsh$ is determined if every variable is definitely allocated
  or definitely equal to $\NIL$, i.e.,
    $\forall x \in \VAR(\rsh) \,.\, x \in \ALLOC{\rsh}$
    or $x \MEQ{\rsh} \NIL$.
  These two notions can also be combined: A symbolic heap is determined if every variable $x$ is definitely allocated or definitely equal to $\NIL$ or there is an explicit pure formula $x \sim y$ between $x$ and each other variable $y$.
  %
%
\begin{example}
  By the previous remark, the SID generating acyclic singly-linked lists from Section~\ref{sec:introduction}
  is well-determined.
  Furthermore, although the predicate $\texttt{dll}$ from Example~\ref{ex:srd} 
  is not determined, the following symbolic heap is well-determined:
  $\PTS{\IFV{4}}{\NIL} \SEP \texttt{dll}\,\FV{0}{} : \{ \IFV{1} \neq \IFV{3} \}$.
\end{example}
\subsection{Entailment between predicate calls}
%
We start by considering entailments between predicate calls of well-determined SIDs.
By definition, an entailment $\sh \ENTAIL{\SRD} \sha$ holds if for
every stack--heap pair $(\stack,\heap)$ that satisfies an unfolding of $\sh$,
there exists an unfolding of $\sha$ that is satisfied by $(\stack,\heap)$ as well.
Our first observation is that, for well-determined unfoldings, two quantifiers can be switched:
It suffices for each unfolding $\rsha$ of $\sh$ to find \emph{one} unfolding $\rsh$ of $\sha$ such that every model of $\rsha$ is also a model
of $\rsh$.
\begin{lemma} \label{thm:entailment:predicates}
 Let $\SRD \in \SETSRD{}$ and $\PS_1,\PS_2$ be predicate symbols with $\ARITY(\PS_1) = \ARITY(\PS_2)$.
 Moreover, let $\CALLSEM{\PS_1\T{x}}{\SRD}$ be well-determined.
 Then
 \begin{align*}
 &  \PS_1\T{x} \ENTAIL{\SRD} \PS_2\T{x} ~~\text{iff}~~ \forall \rsha \in \CALLSEM{\PS_1\T{x}}{\SRD} \,.\, \exists \rsh \in \CALLSEM{\PS_2\T{x}}{\SRD} \,.\, \rsha \ENTAIL{\emptyset} \rsh.
 \end{align*}
\end{lemma}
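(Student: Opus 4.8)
The plan is to prove the two implications separately; only the left-to-right direction will use well-determinedness.

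For the direction ``$\Leftarrow$'' I would argue directly from the semantics of predicate calls. Suppose that for every $\rsha \in \CALLSEM{\PS_1\T{x}}{\SRD}$ there is some $\rsh \in \CALLSEM{\PS_2\T{x}}{\SRD}$ with $\rsha \ENTAIL{\emptyset} \rsh$. Given a state $(\stack,\heap)$ with $\stack,\heap \SAT{\SRD} \PS_1\T{x}$, Lemma~\ref{thm:symbolic-heaps:semantics} (equivalently, the clause for predicate calls in Figure~\ref{fig:slsemantics}) provides an unfolding $\rsha \in \CALLSEM{\PS_1\T{x}}{\SRD}$ with $\stack,\heap \SAT{\emptyset} \rsha$; applying the assumed entailment to the matching $\rsh$ gives $\stack,\heap \SAT{\emptyset} \rsh$, and Lemma~\ref{thm:symbolic-heaps:semantics} once more yields $\stack,\heap \SAT{\SRD} \PS_2\T{x}$. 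Hence $\PS_1\T{x} \ENTAIL{\SRD} \PS_2\T{x}$. Note that this direction does not use determinedness at all.

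For the direction ``$\Rightarrow$'', I would fix an arbitrary $\rsha \in \CALLSEM{\PS_1\T{x}}{\SRD}$ (so that $\FV{0}{\rsha} = \T{x}$) and construct the required $\rsh$. Since $\CALLSEM{\PS_1\T{x}}{\SRD}$ is well-determined, $\rsha$ is satisfiable, and restricting the stack of any of its models to $\FV{0}{\rsha}$ yields a tight model $(\stack,\heap) \in \MODELS{\rsha}$ --- this restriction does not affect satisfaction because $\rsha$ is reduced and mentions only variables in $\FV{0}{\rsha} \cup \BV{\rsha}$. As $\rsha$ is reduced, $\stack,\heap \SAT{\emptyset} \rsha$ coincides with $\stack,\heap \SAT{\SRD} \rsha$, so the predicate-call clause gives $\stack,\heap \SAT{\SRD} \PS_1\T{x}$; the entailment hypothesis then gives $\stack,\heap \SAT{\SRD} \PS_2\T{x}$, hence an unfolding $\rsh \in \CALLSEM{\PS_2\T{x}}{\SRD}$ with $\stack,\heap \SAT{\emptyset} \rsh$ and again $\FV{0}{\rsh} = \T{x} = \FV{0}{\rsha}$. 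It then remains to check $\rsha \ENTAIL{\emptyset} \rsh$: for any $(\stack',\heap')$ with $\stack',\heap' \SAT{\emptyset} \rsha$, restrict $\stack'$ to $\FV{0}{\rsha}$ to obtain a tight model $(\stack_0,\heap') \in \MODELS{\rsha}$; by well-determinedness $(\stack_0,\heap')$ is isomorphic to $(\stack,\heap)$, so invariance of the semantics of reduced symbolic heaps under model isomorphism (Appendix~\ref{app:supplement}) transports $\stack,\heap \SAT{\emptyset} \rsh$ to $\stack_0,\heap' \SAT{\emptyset} \rsh$; finally, since $\rsh$ mentions no variables outside $\FV{0}{\rsh} \cup \BV{\rsh}$ and $\stack'$ agrees with $\stack_0$ on $\FV{0}{\rsh}$, I obtain $\stack',\heap' \SAT{\emptyset} \rsh$, as needed.

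I expect the main obstacle to be the step that upgrades ``some model of $\rsha$ satisfies $\rsh$'' to ``every model of $\rsha$ satisfies $\rsh$'': this is precisely where well-determinedness enters, through the fact that all tight models of $\rsha$ are isomorphic together with isomorphism-invariance of the semantics. The remaining work is routine bookkeeping --- passing between tight and arbitrary models by restricting or extending stacks without changing satisfaction of a reduced symbolic heap, and verifying that the free-variable tuples of $\rsha$ and $\rsh$ both equal $\T{x}$, which is what allows a single isomorphism between two models of $\rsha$ to also be applied to $\rsh$.
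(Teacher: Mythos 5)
Your proof is correct and rests on the same key fact as the paper's: well-determinedness gives each $\rsha \in \CALLSEM{\PS_1\T{x}}{\SRD}$ exactly one tight model up to isomorphism, which is what licenses upgrading ``some model of $\rsha$ satisfies $\rsh$'' to ``every model of $\rsha$ satisfies $\rsh$.'' The paper packages this as a quantifier swap over the singleton $\MODELS{\rsha}$ inside one long chain of equivalences, whereas you split the two implications and make the isomorphism-invariance of satisfaction explicit; the mathematical content is the same.
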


\begin{proof}
 See Appendix~\ref{app:entailment:predicates} for a detailed proof.
 \qed
\end{proof}
Note that, even if only well-determined predicate calls are taken into
account, it is undecidable in general whether an entailment $\PS_1\FV{0}{} \ENTAIL{\SRD} \PS_2\FV{0}{}$ holds~\cite[Theorem 3]{antonopoulos2014foundations}.
To obtain decidability, we additionally require the set
of reduced symbolic heaps entailing a given predicate call to be accepted by a heap automaton. 
%
%
%
\begin{definition} \label{def:entailment:heapmodels}
 Let $\SRD \in \SETSRD{\SRDCLASS}$ and $\sh \in \SL{\SRD}{\SRDCLASS}$.
 Then
 \begin{align*}
  & \USET{\sh}{\SRD}{\SRDCLASS}
  \DEFEQ \{ \rsha \in \RSL{}{\SRDCLASS} ~|~ \NOFV{\rsha} = \NOFV{\sh} ~\text{and}~ \exists \rsh \in \CALLSEM{\sh}{\SRD} \,.\, \rsha \ENTAIL{\emptyset} \rsh \}
 \end{align*}
 is the set of all reduced symbolic heaps in $\SHCLASS$ over the same free variables as $\sh$ that entail an unfolding of $\sh$.
\end{definition}
\begin{example} \label{ex:uset}
  Let $\sh = \texttt{tll}\,\FV{0}{} : \{ \IFV{1} \neq \IFV{2} \}$, where $\texttt{tll}$ is a
  predicate of SID $\SRD$ introduced in Example~\ref{ex:srd}.
  Then $\USET{\sh}{\SRD}{\SRDCLASSFV{3}}$ consists of all reduced symbolic heaps with three free variables
  representing non-empty trees with linked leaves.
  In particular, note that these symbolic heaps do not have to be derived using the SID $\SRD$. For instance, they might contain additional pure formulas.
\end{example}
  In particular, $\USET{\PS\T{x}}{\SRD}{\SRDCLASS}$ can be accepted by a heap automaton 
for common predicates specifying data structures
such as lists, trees, and trees with linked leaves.
  %
%
We are now in a position to decide entailments between predicate calls.
\begin{lemma} \label{thm:entailment:pred-entail}
 Let $\SRD \in \SETSRD{\SRDCLASS}$ and $\PS_1,\PS_2 \in \PRED(\SRD)$ be predicate symbols having the same arity. 
 Moreover, let $\CALLSEM{\PS_1\T{x}}{\SRD}$ be well-determined
 and $\USET{\PS_2\T{x}}{\SRD}{\SRDCLASS}$ be accepted by a heap automaton over $\SHCLASS$. 
 Then the entailment $\PS_1\T{x} \ENTAIL{\SRD} \PS_2\T{x}$ is decidable.
\end{lemma}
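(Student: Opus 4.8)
The plan is to reduce the entailment to a language-inclusion check between the unfoldings of $\PS_1\T{x}$ and a heap automaton, combining Lemma~\ref{thm:entailment:predicates} with Corollary~\ref{thm:compositional:inclusion}. Let $\HA{A}$ be the heap automaton over $\SHCLASS$ with $L(\HA{A}) = \USET{\PS_2\T{x}}{\SRD}{\SRDCLASS}$ provided by the hypothesis. First I would record the bookkeeping needed to identify unfoldings of $\PS_1\T{x}$ with elements of $\USET{\PS_2\T{x}}{\SRD}{\SRDCLASS}$: because $\ARITY(\PS_1) = \ARITY(\PS_2)$, every $\rsha \in \CALLSEM{\PS_1\T{x}}{\SRD}$ has $\NOFV{\rsha} = \ARITY(\PS_1) = \ARITY(\PS_2) = \NOFV{\PS_2\T{x}}$, and because $\SRD \in \SETSRD{\SRDCLASS}$ every such $\rsha$ lies in $\RSL{}{\SRDCLASS} \subseteq \SHCLASS$. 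Hence, by Definition~\ref{def:entailment:heapmodels}, for $\rsha \in \CALLSEM{\PS_1\T{x}}{\SRD}$ we have $\rsha \in \USET{\PS_2\T{x}}{\SRD}{\SRDCLASS}$ if and only if $\rsha \ENTAIL{\emptyset} \rsh$ for some $\rsh \in \CALLSEM{\PS_2\T{x}}{\SRD}$.

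Next I would invoke Lemma~\ref{thm:entailment:predicates}, which applies since $\CALLSEM{\PS_1\T{x}}{\SRD}$ is well-determined, to obtain
\[
  \PS_1\T{x} \ENTAIL{\SRD} \PS_2\T{x}
  \quad\Longleftrightarrow\quad
  \forall \rsha \in \CALLSEM{\PS_1\T{x}}{\SRD}\,.\,\exists \rsh \in \CALLSEM{\PS_2\T{x}}{\SRD}\,.\,\rsha \ENTAIL{\emptyset} \rsh .
\]
By the equivalence established in the first paragraph, the right-hand side is precisely the inclusion $\CALLSEM{\PS_1\T{x}}{\SRD} \subseteq \USET{\PS_2\T{x}}{\SRD}{\SRDCLASS} = L(\HA{A})$. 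Regarding $\PS_1\T{x}$ as a symbolic heap with a single predicate call---it lies in $\SHCLASS$, since $\NOFV{\PS_1\T{x}} = \ARITY(\PS_1) = \NOFV{\sh}$ for any rule $\PS_1 \SRDARROW \sh$ of $\SRD$, and e.g.\ $\NOFV{\sh} \le \alpha$ when $\SHCLASS = \SHCLASSFV{\alpha}$---Corollary~\ref{thm:compositional:inclusion} applies with this symbolic heap and shows that $\CALLSEM{\PS_1\T{x}}{\SRD} \subseteq L(\HA{A})$ is decidable. Chaining the two equivalences yields decidability of $\PS_1\T{x} \ENTAIL{\SRD} \PS_2\T{x}$.

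All substantive content is imported: Lemma~\ref{thm:entailment:predicates} performs the quantifier switch (this is where well-determinedness enters) and Corollary~\ref{thm:compositional:inclusion} supplies the decision procedure (this is where the heap-automaton hypothesis enters). I expect the only delicate point to be the identification carried out in the first paragraph: one must check that the existential over unfoldings of $\PS_2\T{x}$ appearing in Lemma~\ref{thm:entailment:predicates} lines up exactly with membership in $\USET{\PS_2\T{x}}{\SRD}{\SRDCLASS}$---that is, that the free-variable counts agree and that no unfolding of $\PS_1\T{x}$ leaves $\RSL{}{\SRDCLASS}$---which is routine bookkeeping rather than a genuine obstacle.
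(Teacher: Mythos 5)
Your proposal is correct and follows exactly the paper's own argument: apply Lemma~\ref{thm:entailment:predicates} to switch the quantifiers, rewrite the resulting condition as the inclusion $\CALLSEM{\PS_1\T{x}}{\SRD} \subseteq \USET{\PS_2\T{x}}{\SRD}{\SRDCLASS} = L(\HA{A})$ via Definition~\ref{def:entailment:heapmodels}, and decide it with Corollary~\ref{thm:compositional:inclusion}. The extra bookkeeping you flag (arity and membership in $\RSL{}{\SRDCLASS}$) is left implicit in the paper but is indeed routine.
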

\begin{proof}
Let $\HASH{\PS_2\T{x}}$ be a heap automaton over $\SHCLASS$ accepting $\USET{\PS_2\T{x}}{\SRD}{\SRDCLASS}$.
  Then
  \begin{align*}
                      & \PS_1\T{x} \ENTAIL{\SRD} \PS_2\T{x} \\
    ~\Leftrightarrow~ & \forall \rsha \in \CALLSEM{\PS_1\T{x}}{\SRD} \,.\, \exists \rsh \in \CALLSEM{\PS_2\T{x}}{\SRD} \,.\, \rsha \ENTAIL{\emptyset} \rsh 
                        \tag{Lemma~\ref{thm:entailment:predicates}} \\
   ~\Leftrightarrow~  & \forall \rsha \in \CALLSEM{P_1\T{x}}{\SRD} \,.\, \rsha \in \USET{\PS_2\T{x}}{\SRD}{\SRDCLASS} 
                        \tag{Definition~\ref{def:entailment:heapmodels}}\\
    ~\Leftrightarrow~  & \CALLSEM{P_1\T{x}}{\SRD} \subseteq L(\HASH{\PS_2\T{x}}).
    \tag{$L(\HASH{\PS_2\T{x}}) = \USET{\PS_2\T{x}}{\SRD}{\SRDCLASS}$}
  \end{align*}
  where the last inclusion is decidable by Corollary~\ref{thm:compositional:inclusion}.
\qed
\end{proof}
%
%
%
\subsection{Entailment between symbolic heaps}
Our next step is to generalize 
Lemma~\ref{thm:entailment:pred-entail}
to arbitrary determined symbolic heaps $\sh$ instead of single predicate calls.
This requires the construction of heap automata $\HASH{\sh}$ accepting $\USET{\sh}{\SRD}{\SRDCLASS}$.
%
W.l.o.g. we assume 
SIDs and symbolic heaps to be \emph{well}-determined instead of determined only.
Otherwise, we apply Theorem~\ref{thm:compositional:refinement} with the heap automaton $\HASAT$ (cf. Theorem~\ref{thm:zoo:sat:property}) 
to obtain a \emph{well-}determined SID.
Thus, we restrict our attention to the following set. 
\begin{definition} \label{def:entailment:centail}
  The set $\SHCENTAIL{\alpha}$ is given by $\CENTAIL{\alpha} : \SL{}{} \to \{0,1\}$, where
  $\CENTAIL{\alpha}(\sh) = 1$ iff
  $\sh$ is well-determined and every predicate call of $\sh$ has 
  $\leq \alpha \in \N$ parameters.
\end{definition}
Clearly, $\CENTAIL{\alpha}$ is decidable, 
because satisfiability is decidable (cf. Theorem~\ref{thm:zoo:sat:property})
and verifying that a symbolic heap has at most $\alpha$ parameters amounts to a simple syntactic check.
Note that, although the number of parameters in predicate calls is bounded by $\alpha$,
the number of free variables of a symbolic heap $\sh \in \SHCENTAIL{\alpha}$ is not. 
We then construct heap automata for well-determined symbolic heaps.
%
%
\begin{theorem} \label{thm:entailment:top-level}
 Let $\alpha \in \N$ and $\SRD \in \SETSRD{\SRDCLASSFV{\alpha}}$ be established.
 Moreover, for each predicate symbol $\PS \in \PRED(\SRD)$, 
 let there be a heap automaton over $\SHCENTAIL{\alpha}$ accepting 
 $\USET{\PS\T{x}}{\SRD}{\CENTAIL{\alpha}}$.
 %
 Then, for every well-determined symbolic heap $\sh \in \SL{\SRD}{}$, 
 there is a heap automaton over $\SHCENTAIL{\alpha}$ accepting
 $\USET{\sh}{\SRD}{\CENTAIL{\alpha}}$.
 %
\end{theorem}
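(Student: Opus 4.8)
The plan is to build $\HASH{\sh}$ as a product of the given automata $\HASH{\PS_i\FV{i}{}}$, a copy of the tracking automaton $\HATRACK$ (Definition~\ref{def:zoo:track-automaton}), and a bounded amount of extra bookkeeping, all driven by a \emph{decomposition} of the input. Write $\sh = \exists \BV{} \,.\, \SPATIAL{} \SEP \CALLN{1}{} \SEP \ldots \SEP \CALLN{m}{} \,:\, \PURE{}$ with $m = \NOCALLS{\sh}$. By Definition~\ref{def:symbolic-heaps:unfoldings} the unfoldings of $\sh$ are exactly the reduced symbolic heaps $\sh[\PS_1/\rsh_1,\ldots,\PS_m/\rsh_m]$ with $\rsh_i \in \CALLSEM{\PS_i\FV{i}{}}{\SRD}$, so a reduced symbolic heap $\rsha$ with $\NOFV{\rsha} = \NOFV{\sh}$ lies in $\USET{\sh}{\SRD}{\CENTAIL{\alpha}}$ precisely when $\rsha$ is well-determined and $\rsha \ENTAIL{\emptyset} \sh[\PS_1/\rsh_1,\ldots,\PS_m/\rsh_m]$ for some choice of unfoldings $\rsh_i$. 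First I would reduce this inner condition to a purely combinatorial statement about decompositions of $\rsha$, and only then assemble the automaton.

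For the decomposition lemma I would use the hypotheses as follows. Since $\SRD$ is established, every unfolding of $\sh$ has its set of heap cells and their linkage completely determined by the equality pattern on its variables; and since $\sh$ is well-determined, this pattern is itself fully determined (every pair of variables is definitely equal or definitely unequal), features that $\rsha$ inherits from $\rsha \ENTAIL{\emptyset} \rsh$ with $\rsh$ well-determined. Applying Lemma~\ref{thm:entailment:predicates} to each call $\PS_i\FV{i}{}$ to exchange the $\forall\exists$ over models for an $\exists\forall$, I expect to obtain: $\rsha \ENTAIL{\emptyset} \sh[\PS_1/\rsh_1,\ldots,\PS_m/\rsh_m]$ for suitable $\rsh_i$ if and only if the points-to assertions of $\rsha$ can be partitioned into spatial formulas $\Sigma_0,\Sigma_1,\ldots,\Sigma_m$, together with a placement of the existential variables of $\rsha$ among the components, such that (i) the reduced symbolic heap formed from $\Sigma_0$ and the pure part of $\rsha$ matches, under the variable identification induced by $\sh$, the fixed part $\SPATIAL{} \,:\, \PURE{}$ of $\sh$ --- a constant-size syntactic check, since $|\SPATIAL{\sh}|$ is fixed --- and (ii) for each $1 \le i \le m$ the reduced symbolic heap $\rsha_i$ formed from $\Sigma_i$, its existential variables and the closure of $\rsha$'s pure part, with free variables renamed to the formal parameters of $\PS_i$, lies in $\USET{\PS_i\T{x}}{\SRD}{\CENTAIL{\alpha}}$. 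The ``if'' direction recombines the components with $\SEP$ and uses that $\USET{\PS_i\T{x}}{\SRD}{\CENTAIL{\alpha}}$ collects precisely the reduced symbolic heaps entailing an unfolding of $\PS_i\T{x}$; the ``only if'' direction uses establishedness and determinedness to see that the model-level splitting $\heap = \heap_0 \HEAPUNION \heap_1 \HEAPUNION \ldots \HEAPUNION \heap_m$ witnessing the entailment is induced by such a partition of the points-to assertions of $\rsha$.

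With the lemma in hand I would define $\HASH{\sh}$ over $\SHCENTAIL{\alpha}$ so that the state of a reduced symbolic heap records its tracking datum in $Q_{\HATRACK}$ (the equality, inequality and allocation pattern on the free variables) together with, for each of the at most $(m+1)^{\alpha}$ placements of the $\le \alpha$ free variables into the $m+1$ components of $\sh$, a flag telling whether the heap admits a partition compatible with that placement whose $i$-th piece is accepted by $\HASH{\PS_i\FV{i}{}}$ in the relevant sub-part sense. On a reduced input the transition just runs the check of the decomposition lemma, invoking each $\HASH{\PS_i\FV{i}{}}$ on the $i$-th piece; the final states are those admitting a \emph{complete} valid partition with the free variables placed as $\sh$ prescribes, so that $L(\HASH{\sh}) = \USET{\sh}{\SRD}{\CENTAIL{\alpha}}$. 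On an input carrying predicate calls the transition composes the incoming states with the points-to assertions, existential variables and pure formulas contributed by the context, reusing compositionality of the $\HASH{\PS_i\FV{i}{}}$ and of $\HATRACK$; compositionality of $\HASH{\sh}$ then follows because, by the lemma, a valid partition of $\sh[\PS_1/\rsh_1,\ldots]$ restricts to valid partitions of the individual $\rsh_i$ and of the context, and the way these glue together is fixed by the equality pattern, which the state already carries.

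The hard part will be this ``sub-part sense'': a $\SEP$-sub-part of an unfolding of $\sh$ need not itself be an unfolding of any predicate of $\SRD$, so one cannot simply recurse with the given automata on arbitrary pieces cut out of the input by a context, and the state must therefore remember enough to tell, after a piece has been plugged in, which of its further splittings still extend to a valid global partition. What makes this feasible --- and keeps the state space finite --- is exactly the combination of hypotheses: establishedness together with well-determinedness forces all (in)equalities among the at most $\alpha$ interface variables, so the information about a piece that matters is a bounded table indexed by placements of its free variables, which can be carried in the state and shown to compose correctly. Turning this into a decidable, compositional transition relation and verifying the language equality is the technical core; the remaining details are deferred to the appendix.
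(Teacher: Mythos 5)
Your construction is correct in substance and rests on the same ingredients as the paper's, but you package them differently. The paper proves Theorem~\ref{thm:entailment:top-level} by structural induction on the syntax of $\sh$: it builds separate heap automata for $\EMP$, for a single points-to assertion, for the separating conjunction of two symbolic heaps, for adjoining a set of pure formulas, and for a single existential quantifier, and then composes these (Appendix~\ref{app:entailment:top-level}); you instead prove one global decomposition lemma for the full normal form $\exists \BV{}\,.\,\SPATIAL{}\SEP\CALLN{1}{}\SEP\ldots\SEP\CALLN{m}{}:\PURE{}$ and assemble a single product automaton. The underlying mechanism is the same in both: the transition relation guesses a $\uplus$-splitting of (the kernel of) the symbolic heap at each node of the unfolding tree, the state records a bounded interface --- which of the at most $\alpha$ parameters of each predicate call land in which piece and which are allocated there, together with the $\HATRACK$ datum --- and compositionality of the component automata is what lets this node-by-node splitting simulate a splitting of the final unfolding; your ``sub-part'' bookkeeping is exactly the paper's sets $U,V,H$ in the $\SEP$-case. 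Your one-shot organization avoids iterating the product construction (once per $\SEP$, per pure formula set, per quantifier) and so can yield a smaller automaton, at the price of a bulkier correctness lemma; the paper's inductive organization keeps each correctness argument local to one connective. The one point you compress too much is the instantiation of $\sh$'s own bound variables $\BV{}$: the phrase ``the variable identification induced by $\sh$'' hides the claim that witnesses for these existentials can always be chosen among the variables of the reduced input, which is precisely where establishment and well-determinedness are used and which the paper isolates as a separate structural lemma (Lemma~\ref{thm:entailment:existential:structural-auxiliary}), together with extra state to locate a witness that lies inside one of the pieces $\Sigma_i$ rather than in the context. With that fact stated and proved, and the corresponding guess made part of your transition relation, your argument goes through.
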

\begin{proof}
 By structural induction on the syntax of symbolic heaps.
 For each case a suitable heap automaton has to be constructed.
 See Appendix~\ref{app:entailment:top-level} fo details.
\qed
\end{proof}
%
%
%
\begin{remark}\label{rem:model-checking}
  Brotherston et al.~\cite{brotherston2016model}
  studied the \emph{model-checking} \emph{problem} for symbolic heaps,
  i.e., the question whether $\stack,\heap \SAT{\SRD} \sh$ holds for a
  given stack--heap pair $(\stack,\heap)$, an SID $\SRD$, and a
  symbolic heap $\sh \in \SL{}{\SRD}$. They showed that this problem
  is \CCLASS{ExpTime}--complete in general and \CCLASS{NP}--complete
  if the number of free variables is
  bounded.  
  We obtain these results for \emph{determined} symbolic heaps %
  in a natural
  way: Observe that every stack--heap pair $(\stack,\heap)$ is  
  characterized by an established, well-determined, reduced symbolic
  heap, say $\rsh$, that has exactly $(\stack,\heap)$ as a tight model up
  to isomorphism.
  Then Theorem~\ref{thm:entailment:top-level} yields a 
  heap automaton $\HASH{\rsh}$ accepting $\USET{\rsh}{\SRD}{\CENTAIL{\alpha}}$, where $\alpha$ is 
  the maximal arity of any predicate in $\SRD$.
  Thus, $\stack,\heap \SAT{\SRD} \sh$ iff $L(\HASH{\rsh}) \cap \CALLSEM{\sh}{\SRD} \neq \emptyset$,
  which is decidable by
  Corollary~\ref{thm:compositional:existence}.
  Further, note that the general model-checking problem is within the scope of heap automata.
  A suitable statespace is the set of all subformulas of the symbolic heap $\rsh$.
\end{remark}
Coming back to the entailment problem,  it remains to put  
our results together.
Algorithm~\ref{alg:entailment:decision-procedure} depicts a decision
procedure for the entailment problem that, given an entailment $\sh \ENTAIL{\SRD} \sha$,
first removes all unsatisfiable unfoldings of $\sh$, i.e. $\sh$ becomes well-determined.
After that, our previous reasoning techniques for heap automata and SIDs from
Section~\ref{sec:compositional} are applied
to decide whether $\sh \ENTAIL{\SRD} \sha$ holds.
Correctness of Algorithm~\ref{alg:entailment:decision-procedure}
is formalized in 
\begin{theorem}\label{thm:entailment:main}
 Let $\alpha \in \N$ and $\SRD \in \SETSRD{\SRDCLASSFV{\alpha}}$ be established.
 Moreover, for every $\PS \in \PRED(\SRD)$, 
 let $\USET{\PS\T{x}}{\SRD}{\CENTAIL{\alpha}}$ be accepted by a heap automaton over $\SHCENTAIL{\alpha}$.
%
 %
 Then $\sh \ENTAIL{\SRD} \sha$ is decidable for determined $\sh,\sha \in \SL{\SRD}{}$ 
 with $\FV{0}{\sh} = \FV{0}{\sha}$.
\end{theorem}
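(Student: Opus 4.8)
The plan is to reduce the entailment $\sh \ENTAIL{\SRD} \sha$ to a language inclusion of the form $\CALLSEM{\cdot}{\cdot} \subseteq L(\HASH{\sha})$ for a suitable heap automaton $\HASH{\sha}$, and then feed it into the machinery of Section~\ref{sec:compositional}; this is exactly what Algorithm~\ref{alg:entailment:decision-procedure} does. The first step is to make the antecedent \emph{well}-determined. To this end I would adjoin $\sh$ as the body of a fresh predicate symbol $\PS'$ and apply the Refinement Theorem (Theorem~\ref{thm:compositional:refinement}) with the satisfiability automaton $\HASAT$ of Theorem~\ref{thm:zoo:sat:property} (instantiated for enough free variables to cover $\sh$), obtaining an SID $\SRDALT$ with $\CALLSEM{\PS'\FV{0}{}}{\SRDALT} = \CALLSEM{\sh}{\SRD} \cap L(\HASAT)$, i.e., the set of satisfiable unfoldings of $\sh$. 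Since an unsatisfiable reduced symbolic heap entails every reduced symbolic heap and is satisfied by no state, discarding these unsatisfiable unfoldings does not affect the entailment: $\sh \ENTAIL{\SRD} \sha$ holds iff $\PS'\FV{0}{} \ENTAIL{\SRDALT} \sha$. Moreover, as $\sh$ is determined, every member of $\CALLSEM{\PS'\FV{0}{}}{\SRDALT}$ is now well-determined; and one checks the routine bookkeeping that refinement preserves establishment and the arity bound and that $\USET{\PS\T{x}}{\SRDALT}{\CENTAIL{\alpha}} = \USET{\PS\T{x}}{\SRD}{\CENTAIL{\alpha}}$ (the reduced heaps in $\RSL{}{\CENTAIL{\alpha}}$ being satisfiable), so the hypotheses of the theorem transfer to $\SRDALT$.

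From here I would generalise Lemma~\ref{thm:entailment:predicates} from a single predicate call to the well-determined heap $\PS'\FV{0}{}$:
\[
  \PS'\FV{0}{} \ENTAIL{\SRDALT} \sha
  \quad\text{iff}\quad
  \forall \rsha \in \CALLSEM{\PS'\FV{0}{}}{\SRDALT} \,.\, \exists \rsh \in \CALLSEM{\sha}{\SRDALT} \,.\, \rsha \ENTAIL{\emptyset} \rsh .
\]
The direction ``$\Leftarrow$'' is immediate. For ``$\Rightarrow$'', a well-determined $\rsha$ has, up to isomorphism, a unique tight model $(\stack,\heap)$; from $\stack,\heap \SAT{\SRDALT} \PS'\FV{0}{}$ and the assumed entailment one obtains an unfolding $\rsh \in \CALLSEM{\sha}{\SRDALT}$ with $\stack,\heap \SAT{\emptyset} \rsh$, and since every model of $\rsha$ is isomorphic to $(\stack,\heap)$ while $\FV{0}{\rsha} = \FV{0}{\sh} = \FV{0}{\sha} = \FV{0}{\rsh}$, invariance of the separation-logic semantics under isomorphism gives $\rsha \ENTAIL{\emptyset} \rsh$ --- this is the argument behind Lemma~\ref{thm:entailment:predicates}, applied one level up. Using Definition~\ref{def:entailment:heapmodels}, the equality of free-variable tuples, and the fact that every member of $\CALLSEM{\PS'\FV{0}{}}{\SRDALT}$ is a reduced, well-determined heap (hence lies in $\RSL{}{\CENTAIL{\alpha}}$), the right-hand side above is precisely $\CALLSEM{\PS'\FV{0}{}}{\SRDALT} \subseteq \USET{\sha}{\SRDALT}{\CENTAIL{\alpha}}$.

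It then remains to decide this inclusion. Because $\SRDALT$ is established, lies in $\SETSRD{\SRDCLASSFV{\alpha}}$, and meets the automaton hypothesis of the theorem, Theorem~\ref{thm:entailment:top-level} applied to the well-determined $\sha$ yields a heap automaton $\HASH{\sha}$ over $\SHCENTAIL{\alpha}$ with $L(\HASH{\sha}) = \USET{\sha}{\SRDALT}{\CENTAIL{\alpha}}$, and Corollary~\ref{thm:compositional:inclusion} then makes $\CALLSEM{\PS'\FV{0}{}}{\SRDALT} \subseteq L(\HASH{\sha})$ decidable. Chaining the equivalences established above shows that $\sh \ENTAIL{\SRD} \sha$ is decidable, which is the correctness of Algorithm~\ref{alg:entailment:decision-procedure}. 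I expect two points to need the most care. The first is the quantifier switch in the generalised Lemma~\ref{thm:entailment:predicates}: it rests on a precise definition of model isomorphism and on the fact that satisfaction of a reduced symbolic heap is preserved under isomorphisms respecting the free variables. The second is the bookkeeping that guarantees, after the $\HASAT$-refinement, that all symbolic heaps and SIDs occurring in the argument still live in the classes $\SHCENTAIL{\alpha}$ and $\SETSRD{\SRDCLASSFV{\alpha}}$ on which Theorem~\ref{thm:entailment:top-level} and Corollary~\ref{thm:compositional:inclusion} are stated. The genuinely hard, inductive ingredient --- the construction of $\HASH{\sha}$ --- has already been discharged in Theorem~\ref{thm:entailment:top-level}.
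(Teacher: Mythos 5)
Your proposal is correct and follows essentially the same route as the paper's proof: adjoin $\sh$ as the body of a fresh predicate, refine with $\HASAT$ (Theorem~\ref{thm:compositional:refinement} and Theorem~\ref{thm:zoo:sat:property}) to obtain a well-determined SID, reduce the entailment to the inclusion $\CALLSEM{\PS'\FV{0}{}}{\SRDALT} \subseteq \USET{\sha}{\SRDALT}{\CENTAIL{\alpha}}$ via the quantifier-switch argument of Lemma~\ref{thm:entailment:predicates}, and discharge it with Theorem~\ref{thm:entailment:top-level} and Corollary~\ref{thm:compositional:inclusion}. You in fact spell out two steps the paper leaves implicit (why discarding unsatisfiable unfoldings is harmless, and the lifting of Lemma~\ref{thm:entailment:predicates} from predicate pairs to $\PS'\FV{0}{} \ENTAIL{\SRDALT} \sha$), which the paper covers only with the phrase ``analogously to the proof of Lemma~\ref{thm:entailment:pred-entail}''.
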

\begin{proof}
We define a new SID $\Omega \DEFEQ \SRD \cup \{ \SRDRULE{\PS}{\sh} \}$,
where $\PS$ is a fresh predicate symbol of arity $\NOFV{\sh}$.
Clearly, $\sh \ENTAIL{\SRD} \sha$ iff $\PS\FV{0}{\sh} \ENTAIL{\Omega} \sha$.
Furthermore, since $\sh$ and $\SRD$ are established, so is $\Omega$.
Then applying the Refinement Theorem (Theorem~\ref{thm:compositional:refinement})
to $\Omega$ and $\HASAT$ (cf. Theorem~\ref{thm:zoo:sat:property}),
we obtain a well-determined SID $\SRDALT \in \SETSRD{\CENTAIL{\alpha}}$
where none of the remaining unfoldings of $\Omega$ is changed, i.e., for each $\PS \in \PRED(\Omega)$, we have $\CALLSEM{\PS\T{x}}{\SRDALT} \subseteq \CALLSEM{\PS\T{x}}{\Omega}$.
%
By Theorem~\ref{thm:entailment:top-level}, the set
$\USET{\sha}{\SRD}{\CENTAIL{\alpha}} = \USET{\sha}{\SRDALT}{\CENTAIL{\alpha}}$
can be accepted by a heap automaton over $\SHCENTAIL{\alpha}$.
%
Then, analogously to the proof of Lemma~\ref{thm:entailment:pred-entail},
\begin{align*}
  \sh \ENTAIL{\SRD} \sha
  ~\text{iff}~   \PS\FV{0}{\sh} \ENTAIL{\SRDALT} \sha
  ~~\text{iff}~~
  \CALLSEM{\PS\FV{0}{\sh}}{\SRDALT} \subseteq \USET{\sha}{\SRDALT}{\CENTAIL{\alpha}}~,
\end{align*}
where the last inclusion is decidable by Corollary~\ref{thm:compositional:inclusion}.
%
\qed
\end{proof}
\begin{algorithm}[t]
  \footnotesize
  \SetKwData{LHS}{$\Omega$}
  \SetKwData{LHSP}{$\SRDALT$}
  \SetKwData{RHS}{$\HASH{\sha}$}
  \SetKwData{RHSP}{$\overline{\HASH{\sha}}$}
  \SetKwInOut{Input}{Input}\SetKwInOut{Output}{Output}
  \hrule \vspace*{0.5em}
  \Input{
    established SID $\SRD$,~ $\sh,\sha \in \SL{\SRD}{}$ determined, \\
    heap automaton $\HASH{\PS_i}$ for each $\PS_i \in \PRED(\SRD)$
  }
  \Output{yes iff $\sh \ENTAIL{\SRD} \sha$ holds }
  \vspace*{0.5em} \hrule
  \BlankLine
   {
     \LHS$~\leftarrow~\{ \SRDRULE{\PS}{\sh} \} ~\cup~ \SRD$
     \tcp*{$\PS$ fresh predicate symbol}
   }
   {
     \LHSP$~\leftarrow~ \texttt{removeUnsat}(\LHS)$
     \tcp*{Theorem~\ref{thm:zoo:sat:property}}
   }
   {
     \RHS$~\leftarrow~ \texttt{automaton}(\sha,\HASH{\PS_1},\HASH{\PS_2}, \ldots)$ 
     \tcp*{Theorem~\ref{thm:entailment:top-level}}
   }
   {
     \RHSP $~\leftarrow~ \texttt{complement}(\RHS)$
     \tcp*{Lemma~\ref{thm:refinement:boolean}}
   }
   {
     \Return{
       yes iff
       $\CALLSEM{\PS\T{x}}{\LHSP} ~\cap L(\RHSP) = \emptyset$
     }
     \tcp*{Algorithm~\ref{alg:on-the-fly-refinement}}
   }
  \hrule \vspace*{0.5em}
  \caption{Decision procedure for $\sh \ENTAIL{\SRD} \sha$.}\label{alg:entailment:decision-procedure}
\end{algorithm}
%
%
%

\subsection{Complexity}

Algorithm~\ref{alg:entailment:decision-procedure} 
may be fed with arbitrarily large heap automata.
For a meaningful complexity analysis, we thus consider heap automata of bounded size only.
\begin{definition} \label{def:alpha-bounded}
  An SID $\SRD$ is 
  $\alpha$--\emph{bounded} if for each $\PS \in \PRED(\SRD)$
  there exists a heap automaton $\HA{A}_{P}$  over $\SHCENTAIL{\alpha}$ 
  accepting $\USET{\PS\T{x}}{\SRD}{\CENTAIL{\alpha}}$
  such that
  $\Delta_{\HA{A}_P}$ is decidable in $\BIGO{2^{\text{poly}(\SIZE{\SRD})}}$
  and $\SIZE{Q_{\HA{A}_P}} \leq 2^{\text{poly}(\alpha)}$.
\end{definition}
The bounds from above are natural for a large class of heap automata.
In particular, all heap automata constructed in Section~\ref{sec:zoo}
stay within these bounds.
Then a close analysis of Algorithm~\ref{alg:entailment:decision-procedure} for $\alpha$--bounded SIDs yields the following complexity results.
A detailed analysis is provided in Appendix~\ref{app:complexity}.
\begin{theorem} \label{thm:entailment:complexity}
  \DENTAIL{\CENTAIL{\alpha}}{\SRD} is decidable in \CCLASS{2-ExpTime}
  for every $\alpha$--bounded SID $\SRD$.
  If $\alpha \geq 1$ is a constant then
  \DENTAIL{\CENTAIL{\alpha}}{\SRD} is \CCLASS{ExpTime}-complete.
\end{theorem}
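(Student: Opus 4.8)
The plan is to establish the upper bounds by a line-by-line size analysis of Algorithm~\ref{alg:entailment:decision-procedure} and then substitute the resulting sizes into the running-time bound $\BIGO{\SIZE{\SRD}\cdot\SIZE{Q_{\HA{A}}}^{M+1}\cdot\SIZE{\Delta_{\HA{A}}}}$ for Algorithm~\ref{alg:on-the-fly-refinement}. Write $n$ for the input size (we may assume $\alpha \leq n$). Line~1 yields $\Omega = \SRD \cup \{\SRDRULE{\PS}{\sh}\}$ of size $\BIGO{n}$, whose rules contain at most $M = \BIGO{n}$ predicate calls. Line~2 applies the Refinement Theorem (Theorem~\ref{thm:compositional:refinement}) with the satisfiability automaton $\HASAT$ of Theorem~\ref{thm:zoo:sat:property}; since $\SIZE{Q_{\HASAT}} = 2^{2\alpha^{2}+\alpha}$ and $\Delta_{\HASAT}$ is polynomial-time decidable (Remark~\ref{rem:closure}), the refined SID $\SRDALT$ has size at most $\SIZE{\Omega}\cdot\SIZE{Q_{\HASAT}}^{M+1}$, i.e.\ singly exponential in $n$. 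Crucially, refinement relabels but does not add predicate calls, so the maximal rule width $M'$ of $\SRDALT$ is still $\BIGO{n}$, and $\SRDALT \in \SETSRD{\CENTAIL{\alpha}}$ is well-determined with $\CALLSEM{\PS\T{x}}{\SRDALT} \subseteq \CALLSEM{\PS\T{x}}{\Omega}$.

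The core of the argument is line~3, the construction of $\HASH{\sha}$ via Theorem~\ref{thm:entailment:top-level}. I would unfold that structural induction on $\sha$, of depth $\BIGO{\SIZE{\sha}}$, and bound the blow-up in each case; the key point is that assembling the given predicate automata costs a \emph{single} exponential over their state spaces, so that from the $\alpha$-boundedness hypothesis $\SIZE{Q_{\HA{A}_{\PS}}} \leq 2^{\mathrm{poly}(\alpha)}$ one obtains $\SIZE{Q_{\HASH{\sha}}} \leq 2^{\mathrm{poly}(n)\cdot 2^{\mathrm{poly}(\alpha)}}$ with a transition relation decidable within the same time bound. I also expect to arrange the construction so that $\HASH{\sha}$ is \emph{deterministic}; then the complement $\overline{\HASH{\sha}}$ taken in line~4 (via Theorem~\ref{thm:refinement:boolean}) is obtained merely by swapping final states and has the \emph{same} size, which avoids the exponential complementation blow-up noted after Corollary~\ref{thm:compositional:inclusion}.

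With these sizes in hand, line~5 runs Algorithm~\ref{alg:on-the-fly-refinement} on $(\SRDALT,\PS,\overline{\HASH{\sha}})$ with rule width $M' = \BIGO{n}$, hence in time $\BIGO{\SIZE{\SRDALT}\cdot\SIZE{Q_{\overline{\HASH{\sha}}}}^{M'+1}\cdot\SIZE{\Delta_{\overline{\HASH{\sha}}}}} = 2^{2^{\mathrm{poly}(\alpha)}\cdot\mathrm{poly}(n)}$. For unbounded $\alpha$, using $\alpha \leq n$, this is $2^{2^{\mathrm{poly}(n)}}$, so the problem lies in \CCLASS{2-ExpTime}; for constant $\alpha$ the inner exponential $2^{\mathrm{poly}(\alpha)}$ becomes a constant and the bound collapses to $2^{\mathrm{poly}(n)}$, giving membership in \CCLASS{ExpTime}. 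For the matching \CCLASS{ExpTime} lower bound I would fix a suitable constant $\alpha$ and reduce a known \CCLASS{ExpTime}-hard problem to $\DENTAIL{\CENTAIL{\alpha}}{\SRD}$: the entailment problem for SIDs defining sets of trees is \CCLASS{ExpTime}-hard~\cite{antonopoulos2014foundations}, such SIDs have bounded predicate and points-to arity, can be taken well-determined by the sufficient condition that every variable is definitely allocated or definitely equal to $\NIL$, and their predicates admit heap automata for the relevant $\USET{\cdot}{\SRD}{\CENTAIL{\alpha}}$-sets satisfying the requirements of Definition~\ref{def:alpha-bounded} (cf.\ the discussion after Example~\ref{ex:uset}); a direct reduction in the style of the lower bounds of Section~\ref{sec:zoo} is an alternative.

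The hard part is the blow-up bookkeeping for line~3: I must check carefully that the structural-induction construction of Theorem~\ref{thm:entailment:top-level} incurs \emph{exactly} one exponential over the sizes of the predicate automata (and none once $\alpha$ is fixed) and, more delicately, that it can be made to produce automata whose complementation is free, so that lines~4--5 do not add a further exponential level. A secondary subtlety is confirming that no intermediate object---in particular the rule width $M'$ of $\SRDALT$ that enters the $\SIZE{Q_{\overline{\HASH{\sha}}}}^{M'+1}$ factor---is itself exponential; this rests on refinement not introducing new predicate calls. The full calculation is carried out in Appendix~\ref{app:complexity}.
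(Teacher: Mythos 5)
Your overall skeleton matches the paper's: a line-by-line cost analysis of Algorithm~\ref{alg:entailment:decision-procedure} plugged into the $\BIGO{\SIZE{\SRDALT}\cdot\SIZE{Q}^{M+1}\cdot\SIZE{\Delta}}$ running time of Algorithm~\ref{alg:on-the-fly-refinement} (this is Lemma~\ref{thm:entailment:general-upper-bound}), the observation that refinement does not increase the rule width $M$, and the \CCLASS{ExpTime} lower bound via the tree-automata inclusion reduction of Antonopoulos et al., which the paper carries out in Appendix~\ref{app:entailment:lower} with arity-$3$ points-to assertions and $\alpha=1$. (Your fallback of an ``SL-RSAT-style'' reduction would only give \CCLASS{coNP}-hardness for constant $\alpha$, so the tree-automata route is the one that works.)

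The genuine gap is in how you control the complementation in line~4. You bound $\SIZE{Q_{\HASH{\sha}}}$ by $2^{\text{poly}(n)\cdot 2^{\text{poly}(\alpha)}}$ --- note the $\text{poly}(n)$ in the exponent --- and then rely on arranging $\HASH{\sha}$ to be \emph{deterministic} so that complementation is free. But the composite constructions of Theorem~\ref{thm:entailment:top-level} are inherently nondeterministic: the separating-conjunction automaton guesses a split of the symbolic heap, and the existential-quantification automaton guesses the witnessing variable (the paper itself calls it ``highly non-deterministic''). Determinizing the leaf predicate automata does not help, since these composite steps reintroduce nondeterminism; and determinizing the assembled automaton by a subset construction exponentiates your $n$-dependent state count, giving $2^{2^{\text{poly}(n)}}$ states even for constant $\alpha$ and thereby destroying the \CCLASS{ExpTime} claim. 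The paper resolves this differently: Lemma~\ref{thm:entailment:top-level-complexity} shows that the (nondeterministic) $\HASH{\sha}$ has $\SIZE{Q_{\HASH{\sha}}}\leq 2^{\text{poly}(\alpha)}$ with \emph{no dependence on $n$} --- each constructor only multiplies state spaces that are functions of $\alpha$ and adds factors such as $2^{\alpha}$ or $\alpha+1$. Then the generic complementation of Lemma~\ref{thm:refinement:boolean} costs $\SIZE{Q_{\overline{\HASH{\sha}}}}\leq 2^{2^{\text{poly}(\alpha)}}$, which is a constant for constant $\alpha$ and at most doubly exponential in $\alpha\leq n$ in general, so that raising it to the power $M+1=\BIGO{n}$ stays within \CCLASS{ExpTime} and \CCLASS{2-ExpTime} respectively. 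The missing ingredient in your argument is precisely this $n$-independence of $\SIZE{Q_{\HASH{\sha}}}$; you flag the blow-up bookkeeping as ``the hard part,'' but without that independence neither the determinization route you propose nor direct complementation of your $n$-dependent automaton yields the stated bounds.
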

Note that lower complexity bounds depend on the 
SIDs under consideration.
Antonopoulos et al.~\cite[Theorem 6]{antonopoulos2014foundations} showed that
the entailment problem is already $\Pi^{P}_{2}$--complete\footnote{$\Pi^{P}_{2}$ denotes the second level of the polynomial hierarchy.} for the base fragment, i.e., $\SRD = \emptyset$.
Thus, under common complexity assumptions, the exponential time upper bound derived in Theorem~\ref{thm:entailment:complexity} is asymptotically optimal for a deterministic algorithm.
Since the entailment problem is already \CCLASS{ExpTime}--hard for points-to assertions of arity $3$ and SIDs specifying regular sets of trees
(cf. \cite[Theorem 5]{antonopoulos2014foundations} and Appendix~\ref{app:entailment:lower} for details),
exponential time is actually needed for certain SIDs.
\subsection{Expressiveness}

We conclude this section with a brief remark regarding the expressiveness of heap automata.
In particular, SIDs specifying common data structures,
such as lists, trees, trees with linked leaves and combinations thereof can be encoded by heap automata.\footnote{Details on the construction of such automata are provided in Appendix~\ref{app:case-study}.} 
In general, the close relationship between established SIDs and context-free graph languages
studied by Dodds~\cite[Theorem 1]{dodds2008separation} 
and Courcelle's work on recognizable graph languages~\cite[Theorems 4.34 and 5.68]{courcelle2012graph},
suggest that heap automata exist for every set of reduced symbolic heaps that can be specified in monadic second-order logic over graphs~\cite{courcelle2012graph}.


%
%
%
%
%
\section{Conclusion}
\label{sec:conclusion}
We developed an algorithmic framework for automatic reasoning about and debugging of the
symbolic heap fragment of separation logic.
%
Our approach is centered around a new automaton model, \emph{heap
automata}, that is specifically tailored to symbolic heaps.
We show that many common robustness properties as well as certain types of entailments
are naturally covered by our framework---often with optimal asymptotic complexity.
%
%
%
There are several directions for future work including automated
learning of heap automata accepting common data structures
and applying heap automata to the abduction problem~\cite{calcagno2009compositional}.


%

%

\bibliographystyle{splncs03}
\bibliography{biblio}
\appendix

\renewcommand\thesection{A.\arabic{section}}

\newpage
\section{Supplementary Material} \label{app:supplement}

\begin{definition}[Isomorphic states]
    Two states $(\stack_1,\heap_1), (\stack_2,\heap_2)$ are \emph{isomorphic} if and only if 
    there exist bijective functions $f : \DOM(\stack_1) \to \DOM(\stack_2)$
    $g : \DOM(\heap_1) \to \DOM(\heap_2)$ such that for all $\ell \in \DOM(\heap_1)$, we have
    $g(\heap_1(\ell)) = \heap_2(g(\ell))$, 
    where $g$ is lifted to tuples by componentwise application.
\end{definition}

\section{Implemenation and Experimental Results} \label{app:implementation}

Overall, the implementation of our tool \textsc{Harrsh} consists of about 1500 lines of Scala
code, not counting test classes, comments and blank lines.
Since our tool is---to our best knowledge---the first one to
systematically reason about robustness properties, we cannot compare
the results of our tool against other implementations.

A notable exception is \textsc{Cyclist}~\cite{brotherston2014decision}, which is capable of proving satisfiability of SIDs.
We evaluated our tool against the large collection of benchmarks that
is distributed with \textsc{Cyclist}.
In particular, this collection includes the following sets:
\begin{enumerate}
  \item A set of handwritten standard predicates from the separation logic literature.
  \item 45945 problem instances that have been automatically generated by the inference tool \textsc{Caber}~\cite{brotherston2014cyclic}.
  \item A set of particularly hard problem instances that are derived from the SIDs used to prove lower complexity bounds for satisfiability.
      These benchmarks have been used to test the scalability of \textsc{Cyclist}.
\end{enumerate}

Experiments were performed on an Intel Core i5-3317U at 1.70GHz with 4GB of RAM.

For the standard predicates in the first set, our implementation runs in total approximately 300ms to check \emph{all} robustness properties on all standard predicates, i.e., a total of 45 problem instances.
As already reported in the paper, this low analysis time is not a surprise, because the standard data structure predicates are generally very well-behaved.

To evaluate the performance of \textsc{Harrsh} on a realistic set of benchmarks, we ran both \textsc{Harrsh} and \textsc{Cyclist} on all 45945 benchmarks generated by \textsc{Caber}. For \textsc{Cyclist}, we only checked satisfiability---the only of the robustness properties supported by \textsc{Cyclist}; for \textsc{Harrsh}, we checked all robustness properties introduced in Section~\ref{sec:zoo}.

Both tools were capable of proving (un)satisfiability of all of these problem instances within a set timeout of $30$ seconds.
All in all, the \emph{accumulated} analysis time of \textsc{Harrsh} for these instances was 12460ms, while \textsc{Cyclist} required 44856ms.\footnote{For both tools we added up the analysis times of individual tasks, reported with millisecond precision. Consequently, we expect that rounding errors influence the accumulated time to a similar degree for both tools.}
For all other properties, \textsc{Harrsh} also achieved \emph{accumulated} analysis time below 20 seconds; see Table~\ref{tbl:robustness-benchmarks}.  These numbers demonstrate the applicability of our tool to problem instances that occur in practice.

\begin{table}
    \begin{center}
    \setlength{\tabcolsep}{10pt}
    \begin{tabular}{l|r}
        \textbf{Robustness Property} & \textbf{Analysis Time (ms)} \\ 
        \hline
        No points-to assertions (Example~\ref{ex:heap-automaton:toy}) & 7230 \\ 
        Tracking property (Section~\ref{sec:zoo:track}) & 11459 \\ 
        Satisfiability (Section~\ref{sec:zoo:sat}) & 12460 \\ 
        Complement of Satisfiability (Section~\ref{sec:zoo:sat}) & 11980 \\ 
        Establishment (Section~\ref{sec:zoo:establishment}) & 18055 \\ 
        Complement of Establishment (Section~\ref{sec:zoo:establishment}) & 17272 \\ 
        Reachability (Section~\ref{sec:zoo:reach}) & 14897 \\ 
        Garbage-Freedom (Section~\ref{sec:zoo:garbage}) & 18192 \\ 
        Weak Acyclicity (Section~\ref{sec:zoo:acyclicity}) & 18505 \\ 
    \end{tabular}
    \end{center}
    \caption{Total analysis time for robustness properties presented throughout the paper on the second set of benchmarks, i.e., 45945 automatically inferred SIDs. }
    \label{tbl:robustness-benchmarks}
\end{table}

Moreover, we ran \textsc{Harrsh} and \textsc{Cyclist} to check satisfiability of the third set and additional handwritten benchmarks distributed with \textsc{Cyclist}.
For both tools, we chose a timeout of 5 minutes.
The measured analysis times for this set are shown in Table~\ref{tbl:hard-benchmarks}.

Notably, the tools yield different results for the SID contained in the file \texttt{inconsistent-ls-of-ls.def}.
While \textsc{Harrsh} states that this SID is satisfiable, \textsc{Cyclist} states that it is not.
Despite the benchmark's name, however, the underlying SID
\begin{align*}
    & \SRDRULE{P}{x = \NIL} \\
    & \SRDRULE{P}{Q(x x) : \{ x \neq \NIL\}} \\
    & \SRDRULE{Q}{\exists c,d . \PT{x}{d,c} \SEP P(d) : \{ y = \NIL, x \neq \NIL \}} \\
    & \SRDRULE{Q}{\exists c,d . \PT{x}{d,c} : \{y \neq \NIL\}}
\end{align*}
is satisfiable:
Clearly $x = \NIL$ is a satisfiable unfolding of $Px$.
Using this unfolding to replace the predicate call $P(d)$ in the third rule, we also obtain a a satisfiable unfolding of $Q(x,y)$:
\[
    \exists c,d . \PT{x}{d,c} : \{ d = \NIL, y = \NIL, x \neq \NIL \}
\]

\begin{table}[!htbp]
    \begin{center}
    \setlength{\tabcolsep}{10pt}
    \begin{tabular}{l|r|r}
        \textbf{Benchmark} & \textsc{Harrsh} & \textsc{Cyclist} \\
        \hline
        inconsistent-ls-of-ls.defs &  1 & 4 (not correct) \\
        succ-rec01.defs &  3 & 0 \\
        succ-rec02.defs &  10 & 8 \\
        succ-rec03.defs &  24 & 12 \\
        succ-rec04.defs &  106 & 20 \\
        succ-rec05.defs &  496 & 128 \\
        succ-rec06.defs &  2175 & 792 \\
        succ-rec07.defs &  9692 & 4900 \\
        succ-rec08.defs &  39408 & 31144 \\
        succ-rec09.defs &  169129 & 164464 \\
        succ-rec10.defs &  TO & TO \\
        succ-circuit01.defs &  80 & 4\\
        succ-circuit02.defs &  142 & 8\\
        succ-circuit03.defs &  699 & 48 \\
        succ-circuit04.defs &  4059 & 832\\
        succ-circuit05.defs &  75110 & 28800\\
        succ-circuit06.defs &  TO & TO \\
    \end{tabular}
    \end{center}
    \caption{Comparison of \textsc{Harrsh} and \textsc{Cyclist} for hard instances of the satisfiability problem. Provided times are in milliseconds. Timeouts (TO) are set to 5 minutes.}
    \label{tbl:hard-benchmarks}
\end{table}


%
%
\section{Proof of Lemma~\ref{thm:symbolic-heaps:semantics}} \label{app:symbolic-heaps:semantics}
 By induction on the height $k$ of unfolding trees of $\sh$.
 \paragraph{I.B.}
 If $k = 0$ then $\NOCALLS{\sh} = 0$, i.e., $\sh$ contains no predicate calls.
 Thus $\UNFOLD{t} = \sh$.
 Then, for each $(\stack,\heap) \in \STATES$, we have
 \begin{align*}
                     & \stack,\heap \SAT{\SRD} \sh \\
   ~\Leftrightarrow~ & \left[ \NOCALLS{\sh} = 0 \right] \\
                     & \stack,\heap \SAT{\emptyset} \sh \\
   ~\Leftrightarrow~ & \left[ \CALLSEM{\sh}{\SRD} = \{ \sh \} \right] \\
                     & \exists \rsh \in \CALLSEM{\sh}{\SRD} ~.~ \stack,\heap \SAT{\emptyset} \rsh.
 \end{align*}
 \paragraph{I.H.}
 Assume for an arbitrary, but fixed, natural number $k$ that for each $\SRD \in \SETSRD{}$, $\sh \in \SL{\SRD}{}$, where each $t \in \UTREES{\SRD}{\sh}$ is of height at most $k$, it holds for each $(\stack,\heap) \in \STATES$ that
 \begin{align*}
   \stack,\heap \SAT{\SRD} \sh \quad\text{iff}\quad \exists \rsh \in \CALLSEM{\sh}{\SRD} ~.~ \stack,\heap \SAT{\emptyset} \rsh.
 \end{align*}
 \paragraph{I.S.}
 Let $\SRD \in \SETSRD{}$ and $\sh \in \SL{\SRD}{}$ such that each $t \in \UTREES{\SRD}{\sh}$ is of height at most $k+1$.
 We proceed by structural induction on the syntax of $\sh$.
 For $\sh = \EMP$, $\sh = \PT{x}{\T{a}}$, $\sh = (a=b)$, $\sh = (a \neq b)$, the height of all unfolding trees is $0 < k+1$, i.e., there is nothing to show.
 If $\sh = \PS\T{a}$ then $\UNFOLD{t} = \UNFOLD{\SUBTREE{t}{1}}$ holds for each unfolding of $\sh$.
 Since $t$ is of height at most $k+1$,  $\SUBTREE{t}{1}$ is of height at most $k$.
 By I.H. we obtain for each $(\stack,\heap) \in \STATES$ that
 \begin{align*}
   \stack,\heap \SAT{\SRD} \PS\T{a} \quad\text{iff}\quad \exists \rsh \in \CALLSEM{\PS\T{a}}{\SRD} ~.~ \stack,\heap \SAT{\emptyset} \rsh.
 \end{align*}
 If $\sh = \sh_1 \SEP \sh_2$, we have for each $(\stack,\heap) \in \STATES$:
 \begin{align*}
                     & \stack,\heap \SAT{\SRD} \sh_1 \SEP \sh_2 \\
   ~\Leftrightarrow~ & \left[ \text{Semantics of $\SEP$} \right] \\
                     & \exists \heap_1,\heap_2 . \heap = \heap_1 \uplus \heap_2 ~\text{and}~ \stack,\heap_1 \SAT{\SRD} \sh_1 ~\text{and}~ \stack,\heap_2 \SAT{\SRD} \sh_2 \\
   ~\Leftrightarrow~ & \left[ \text{I.H. on}~\sh_1 \right] \\
                     & \exists \heap_1,\heap_2 . \heap = \heap_1 \uplus \heap_2 \\
                     & \quad\text{and}~ \exists \rsh_1 \in \CALLSEM{\sh_1}{\SRD} . \stack,\heap_1 \SAT{\SRD} \rsh_1 \\
                     & \quad\text{and}~ \stack,\heap_2 \SAT{\SRD} \sh_2 \\
   ~\Leftrightarrow~ & \left[ \text{I.H. on}~\sh_2 \right] \\
                     & \exists \heap_1,\heap_2 . \heap = \heap_1 \uplus \heap_2 \\
                     & \quad\text{and}~ \exists \rsh_1 \in \CALLSEM{\sh_1}{\SRD} . \stack,\heap_1 \SAT{\emptyset} \rsh_1 \\
                     & \quad\text{and}~ \exists \rsh_2 \in \CALLSEM{\sh_2}{\SRD} . \stack,\heap_2 \SAT{\emptyset} \rsh_2 \\
   ~\Leftrightarrow~ & \big[ \CALLSEM{\sh_1 \SEP \sh_2}{\SRD} = \{ (\sh_1 \SEP \sh_2)[\CALLS{\sh_1} / \rsh_1, \CALLS{\sh_2} / \rsh_2] ~|~ \rsh_1 \in \CALLSEM{\sh_1}{\SRD}, \\
                     & \qquad \qquad \qquad \qquad \qquad \qquad \qquad \qquad \qquad \rsh_2 \in \CALLSEM{\sh_2}{\SRD} \} \big] \\
                     & \exists \rsh \in \CALLSEM{\sh_1 \SEP \sh_2}{\SRD} ~.~ \stack,\heap \SAT{\emptyset} \rsh
 \end{align*}
 Finally, we consider the case $\sh = \SYMBOLICHEAP{}$.
 The crux of the proof relies on the observation that for each $t \in \UTREES{\SRD}{\sh}$ and each $t' \in \UTREES{\SRD}{\SPATIAL{} \SEP \CALLS{}}$, we have
 $t(\EMPTYSEQ) = \exists \BV{} . t'(\EMPTYSEQ) : \PURE{}$. Thus
 \begin{align*}
  \CALLSEM{\sh}{\SRD} = \{ \exists \BV{} . \rsh' : \PURE{} ~|~ \rsh' \in \CALLSEM{\SPATIAL{} \SEP \CALLS{}}{\SRD} \}. \tag{$\dag$}
 \end{align*}

 Then we have for each $(\stack,\heap) \in \STATES$:
 \begin{align*}
                     & \stack,\heap \SAT{\SRD} \exists \BV{} . \SPATIAL{} \SEP \CALLS{} : \PURE{} \\
   ~\Leftrightarrow~ & \left[ \text{SL semantics} \right] \\
                     & \exists \T{v} \in \VAL^{\SIZE{\BV{}}} \,.\, \stack\remap{\BV{}}{\T{v}}, \heap \SAT{\SRD} \SPATIAL{} \SEP \CALLS{} \\
                     & \quad \text{and}~ \forall \pi \in \PURE{} \,.\, \stack\remap{\BV{}}{\T{v}}, \heap \SAT{\SRD} \pi \\
   ~\Leftrightarrow~ & \left[ \text{I.H. on}~\SPATIAL{} \SEP \CALLS{} \right] \\
                     & \exists \T{v} \in \VAL^{\SIZE{\BV{}}} \,.\, \exists \rsh' \in \CALLSEM{\SPATIAL{} \SEP \CALLS{}}{\SRD} \,.\, \stack\remap{\BV{}}{\T{v}}, \heap \SAT{\emptyset} \rsh' \\
                     & \quad \text{and}~ \forall \pi \in \PURE{} \,.\, \stack\remap{\BV{}}{\T{v}}, \heap \SAT{\SRD} \pi \\
   ~\Leftrightarrow~ & \left[  \exists x \exists y \equiv \exists y \exists x \right] \\
                     & \exists \rsh' \in \CALLSEM{\SPATIAL{} \SEP \CALLS{}}{\SRD} \,.\, \exists \T{v} \in \VAL^{\SIZE{\BV{}}} \,.\, \stack\remap{\BV{}}{\T{v}}, \heap \SAT{\emptyset} \rsh' \\
                     & \quad \text{and}~ \forall \pi \in \PURE{} \,.\, \stack\remap{\BV{}}{\T{v}}, \heap \SAT{\SRD} \pi \\
   ~\Leftrightarrow~ & \left[ \text{apply}~(\dag) \right] \\
                     & \exists \rsh \in \CALLSEM{\sh}{\SRD} \,.\, \stack,\heap \SAT{\emptyset} \rsh. 
 \end{align*}
 \qed
%
%
%
\section{Coincidence Lemma for Symbolic Heaps} \label{app:symbolic-heaps:fv-coincidence}
\begin{lemma} \label{thm:symbolic-heaps:fv-coincidence}
 Let $\SRD \in \SETSRD{}$ and $\sh \in \SL{\SRD}{}$.
 Moreover, let $(\stack,\heap) \in \STATES$.
 Then
 $
    \stack,\heap \SAT{\SRD} \sh ~\text{iff}~ (\restr{\stack}{\FV{0}{\sh}}), \heap \SAT{\SRD} \sh,~
 $
 where $\restr{\stack}{\FV{0}{\sh}}$ denotes the restriction of the domain of $\stack$ to the free variables of $\sh$.
\end{lemma}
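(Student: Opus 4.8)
The plan is to reduce the claim about an arbitrary symbolic heap $\sh$—which may contain predicate calls—to the special case of reduced symbolic heaps, for which it follows by a routine structural induction. The point worth keeping in mind is \emph{not} to attempt a structural induction on $\sh$ directly, since the semantics of a predicate call $\PS\T{y}$ is given through unfoldings, and the bookkeeping of free variables across unfoldings is exactly what has already been encapsulated in Lemma~\ref{thm:symbolic-heaps:semantics}.

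First I would invoke Lemma~\ref{thm:symbolic-heaps:semantics}: for every $(\stack,\heap) \in \STATES$ we have $\stack,\heap \SAT{\SRD} \sh$ iff there exists $\rsh \in \CALLSEM{\sh}{\SRD}$ with $\stack,\heap \SAT{\emptyset} \rsh$, and likewise for $\restr{\stack}{\FV{0}{\sh}}$. By Definitions~\ref{def:symbolic-heaps:unfolding} and~\ref{def:symbolic-heaps:unfoldings} the set $\CALLSEM{\sh}{\SRD}$ is defined purely syntactically via unfolding trees and hence does not depend on the stack; moreover a straightforward induction on the height of unfolding trees shows that every unfolding $\rsh$ of $\sh$ has exactly the free variables of $\sh$, i.e.\ $\FV{0}{\rsh} = \FV{0}{\sh}$ (a replacement $\sh\subst{\PS_i}{\rsh}$ substitutes the free variables of the inserted heap by call parameters, which are variables of $\sh$, and renames existentials fresh). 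Therefore the two sides of the claimed equivalence are obtained by quantifying the \emph{same} family of reduced heaps $\rsh$, and it suffices to prove the lemma for reduced symbolic heaps—for which we may even take $\SRD = \emptyset$.

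Second, for a reduced symbolic heap $\rsh = \exists \BV{} . \SPATIAL{} : \PURE{}$ I would unfold the semantics of the leading existential: $\stack,\heap \SAT{\emptyset} \rsh$ holds iff there is $\T{v} \in \VAL^{\SIZE{\BV{}}}$ with $\stack\remap{\BV{}}{\T{v}}, \heap \SAT{\emptyset} \SPATIAL{}$ and $\stack\remap{\BV{}}{\T{v}}, \heap \SAT{\emptyset} \pi$ for all $\pi \in \PURE{}$, and analogously for $\restr{\stack}{\FV{0}{\rsh}}$. Since, by the syntax of symbolic heaps, every variable of $\rsh$ lies in $\FV{0}{\rsh} \cup \BV{}$, the stacks $\stack\remap{\BV{}}{\T{v}}$ and $(\restr{\stack}{\FV{0}{\rsh}})\remap{\BV{}}{\T{v}}$ agree on all of $\VAR(\rsh)$ (in particular both send $\NIL$ to $\NIL$). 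It then remains to observe that, for a fixed heap $\heap$, satisfaction of a quantifier-free spatial formula $\SPATIAL{}$, respectively of a pure formula $\pi$, depends only on the values the stack assigns to the variables occurring in that formula. This is an easy induction on the structure of $\SPATIAL{}$: the base cases $\EMP$ (no variables), $\PTS{x}{\T{y}}$ (only $\stack(x)$ and $\stack(\T{y})$ are inspected), and $x = y$, $x \neq y$ (only $\stack(x)$, $\stack(y)$ are inspected) are immediate from Figure~\ref{fig:slsemantics}, and the case $\SPATIAL{}_1 \SEP \SPATIAL{}_2$ follows from the induction hypotheses applied to the two sub-heaps in any splitting $\heap = \heap_1 \HEAPUNION \heap_2$. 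Combining this with the preceding paragraph yields the lemma.

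There is essentially no obstacle; the only subtlety is the one already noted—handling predicate calls—which is dispatched by the reduction through Lemma~\ref{thm:symbolic-heaps:semantics}, leaving only the quantifier-free fragment, where the argument is entirely routine. (One could instead prove the statement directly by induction on the height of unfolding trees of $\sh$, mirroring the proof of Lemma~\ref{thm:symbolic-heaps:semantics}, but the reduction above is shorter and reuses that lemma verbatim.)
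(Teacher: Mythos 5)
Your proof is correct, but it takes a genuinely different route from the paper's. The paper proves the lemma by a direct structural induction on the syntax of $\sh$, with cases for $\EMP$, $\PT{x}{\T{a}}$, $a \sim b$, $\SEP$, $\PS\T{a}$, and the quantified form $\SYMBOLICHEAP{}$; in the predicate-call case it appeals to the induction hypothesis on the unfoldings $\rsh \in \CALLSEM{\PS\T{a}}{\SRD}$, which are not structural subterms, so the induction there is really grounded in unfolding-tree height rather than syntax. You instead factor the whole predicate-call difficulty out up front: Lemma~\ref{thm:symbolic-heaps:semantics} reduces satisfaction of $\sh$ to satisfaction of its (stack-independent) unfoldings, the observation $\FV{0}{\rsh} = \FV{0}{\sh}$ lets you replace $\sh$ by a reduced symbolic heap, and what remains is a routine coincidence argument for the quantifier-free fragment. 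Each approach buys something: the paper's is self-contained and proves the general statement in one sweep (and its auxiliary cases are reused verbatim elsewhere, e.g.\ in the $\SEP$ and quantifier steps of other proofs), whereas yours is shorter, reuses Lemma~\ref{thm:symbolic-heaps:semantics} as a black box, and makes the well-foundedness of the argument at predicate calls explicit rather than implicit. The only points worth being slightly more careful about are (i) that $\restr{\stack}{\FV{0}{\sh}}$ must still map $\NIL$ to $\NIL$ to be a stack (the paper's convention of treating $\NIL$ as an implicit free variable handles this, and the paper's own proof is equally silent on it), and (ii) the claim $\FV{0}{\rsh} = \FV{0}{\sh}$, which you correctly justify from the definition of replacement but which deserves the one-line induction you sketch. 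Neither is a gap.
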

%
\begin{proof}
  By structural induction on the syntax of symbolic heaps $\sh$.
  \paragraph{$\sh = \EMP$}
  \begin{align*}
                      & \stack,\heap \SAT{\SRD} \EMP \\
    ~\Leftrightarrow~ & \left[ \text{SL semantics} \right] \\
                      & \DOM(\heap) = \emptyset \\
    ~\Leftrightarrow~ & \left[ \FV{0}{\EMP} = \emptyset \right] \\
                      & (\restr{\stack}{\FV{0}{\EMP}}),\heap \SAT{\SRD} \EMP.
  \end{align*}
  \paragraph{$\sh = \PT{x}{\T{a}}$}
  \begin{align*}
                      & \stack,\heap \SAT{\SRD} \PT{x}{\T{a}} \\
    ~\Leftrightarrow~ & \left[ \text{SL semantics} \right] \\
                      & \DOM(\heap) = \{\stack(x)\} ~\text{and}~ h(\stack(x)) = \stack(\T{a}) \\
    ~\Leftrightarrow~ & \left[ \FV{0}{\PT{x}{\T{a}}} = \{x\} \cup \T{a} \right] \\
                      & (\restr{\stack}{\FV{0}{\PT{x}{\T{a}}}}),\heap \SAT{\SRD} \PT{x}{\T{a}}.
  \end{align*}
  \paragraph{$\sh = (a \sim b),~ \sim \in \{=,\neq\}$}
  \begin{align*}
                      & \stack,\heap \SAT{\SRD} a \sim b \\
    ~\Leftrightarrow~ & \left[ \text{SL semantics} \right] \\
                      & \stack(a) \sim \stack(b) \\
    ~\Leftrightarrow~ & \left[ \FV{0}{a \sim b} = \{a,b\} \right] \\
                      & (\restr{\stack}{\FV{0}{a \sim b}}),\heap \SAT{\SRD} a \sim b.
  \end{align*}
  \paragraph{$\sh = \rsha \SEP \rsh$}
    \begin{align*}
                      & \stack,\heap \SAT{\SRD} \rsha \SEP \rsh \\
    ~\Leftrightarrow~ & \left[ \text{SL semantics} \right] \\
                      & \exists h_1,\heap_2 \,.\, h = h_1 \uplus h_2 ~\text{and}~ \stack,\heap_1 \SAT{\SRD} \rsha ~\text{and}~ \stack,\heap_2 \SAT{\SRD} \rsh \\
    ~\Leftrightarrow~ & \left[ \text{I.H.} \right] \\
                      & \exists h_1,\heap_2 \,.\, h = h_1 \uplus h_2 \\
                      & \quad \text{and}~ (\restr{\stack}{\FV{0}{\rsha}},\heap_1) \SAT{\SRD} \rsha ~\text{and}~ (\restr{\stack}{\FV{0}{\rsh}},\heap_2) \SAT{\SRD} \rsh \\
    ~\Leftrightarrow~ & \left[ \FV{0}{\rsh},\FV{0}{\rsha} \,\subseteq\, \DOM(s),~\text{I.H.} \right] \\
                      & \exists h_1,\heap_2 \,.\, h = h_1 \uplus h_2 \\
                      & \quad \text{and}~ (\restr{\stack}{\FV{0}{\rsha} \cup \FV{0}{\rsh}},\heap_1) \SAT{\SRD} \rsha ~\text{and}~ (\restr{\stack}{\FV{0}{\rsha} \cup \FV{0}{\rsh}},\heap_2) \SAT{\SRD} \rsh \\
    ~\Leftrightarrow~ & \left[ \FV{0}{\rsha \SEP \rsh} = \FV{0}{\rsha} \cup \FV{0}{\rsh},~ \text{SL semantics} \right] \\
                      & (\restr{\stack}{\FV{0}{\rsha \SEP \rsh}}),\heap \SAT{\SRD} \rsha \SEP \rsh.
  \end{align*}
  \paragraph{$\sh = \PS\T{a}$}
  \begin{align*}
                      & \stack,\heap \SAT{\SRD} \PS\T{a} \\
    ~\Leftrightarrow~ & \left[ \text{SL semantics} \right] \\
                      & \exists \rsh \in \CALLSEM{\PS\T{a}}{\SRD} \,\, \stack,\heap \SAT{\emptyset} \rsh \\
    ~\Leftrightarrow~ & \left[ \text{I.H.} \right] \\
                      & \exists \rsh \in \CALLSEM{\PS\T{a}}{\SRD} \,\, (\restr{\stack}{\FV{0}{\rsh}}),\heap \SAT{\SRD} \rsh \\
    ~\Leftrightarrow~ & \left[ \FV{0}{\PS\T{a}} = \FV{0}{\rsh} \right] \\
                      & (\restr{\stack}{\FV{0}{\PS\T{a}}}),\heap \SAT{\SRD} \PS\T{a}.
  \end{align*}
  \paragraph{$\sh = \SYMBOLICHEAP{}$}
  \begin{align*}
                      & \stack,\heap \SAT{\SRD} \SYMBOLICHEAP{} \\
    ~\Leftrightarrow~ & \left[ \text{SL semantics} \right] \\
                      & \exists \T{v} \in \VAL^{\SIZE{\BV{}}} \,.\, \stack\remap{\BV{}}{\T{v}}, h \SAT{\SRD} \SPATIAL{} \SEP \CALLS{} \\
                      & \quad \text{and}~ \forall \pi \in \PURE{} \,.\, \stack\remap{\BV{}}{\T{v}}, h \SAT{\SRD} \pi \\
    ~\Leftrightarrow~ & \left[ \text{I.H. on}~\SPATIAL{}\SEP\CALLS{} ~\text{and}~ \pi \right] \\
                      & \exists \T{v} \in \VAL^{\SIZE{\BV{}}} \,.\, (\stack\remap{\BV{}}{\T{v}}\upharpoonright_{\FV{0}{\SPATIAL{} \SEP \CALLS{}}}),\heap \SAT{\SRD} \SPATIAL{} \SEP \CALLS{} \\
                      & \quad \text{and}~ \forall \pi \in \PURE{} \,.\, (\stack\remap{\BV{}}{\T{v}}\upharpoonright_{\FV{0}{\pi}}),\heap \SAT{\SRD} \pi \\
    ~\Leftrightarrow~ & \left[ \DOM(\stack\remap{\BV{}}{\T{v}}\upharpoonright_{\FV{0}{\pi}}) \subseteq \DOM((\restr{\stack}{\FV{0}{\pi}})[\BV{} \mapsto \T{v}]) \right] \\
                      &  \exists \T{v} \in \VAL^{\SIZE{\BV{}}} \,.\, (\restr{\stack}{\FV{0}{\SPATIAL{} \SEP \CALLS{}}}[\BV{} \mapsto \T{v}]),\heap \SAT{\SRD} \SPATIAL{} \SEP \CALLS{} \\
                      & \quad \text{and}~ \forall \pi \in \PURE{} \,.\, (\restr{\stack}{\FV{0}{\pi}}[\BV{} \mapsto \T{v}]),\heap \SAT{\SRD} \pi \\
    ~\Leftrightarrow~ & \left[ \FV{0}{\sh} = \left(\FV{0}{\SPATIAL{} \SEP \CALLS{}} \cup \bigcup_{\pi \in \PURE{}} \FV{0}{\pi} \right) \setminus \BV{} \right] \\
                      &  \exists \T{v} \in \VAL^{\SIZE{\BV{}}} \,.\, (\restr{\stack}{\FV{0}{\sh}}[\BV{} \mapsto \T{v}]),\heap \SAT{\SRD} \SPATIAL{} \SEP \CALLS{} \\
                      & \quad \text{and}~ \forall \pi \in \PURE{} \,.\, (\restr{\stack}{\FV{0}{\sh}}[\BV{} \mapsto \T{v}]),\heap \SAT{\SRD} \pi \\
    ~\Leftrightarrow~ & \left[ \text{SL semantics} \right] \\
                      & (\restr{\stack}{\FV{0}{\sh}}), h \SAT{\SRD} \SYMBOLICHEAP{}. 
  \end{align*}
  \qed
\end{proof}
%
%
%
\section{The emptiness problem for sets of unfolding trees} \label{app:symbolic-heaps:emptiness}
The set of unfolding trees of a given symbolic heap $\sh$ with predicates taken from an SID $\SRD$ can be accepted by a bottom-up tree automaton $\HA{A}$.
Then, the set of unfolding trees of $\sh$ is empty if and only if $\HA{A}$ accepts the empty language (of trees).
Since the emptiness problem of tree automata is $\CCLASS{PTime}$--complete (cf. \cite[Theorem 1.7.4]{comon2007tree}), so is the question whether the set of unfolding trees of a given symbolic heap is empty.

Formally, let $\SRD \in \SETSRD{}$ and $\sh \in \SL{\SRD}{}$.
 Then the set of unfolding trees of $\sh$ w.r.t. $\SRD$ is the set of all trees accepted by the tree automaton $\mathfrak{A} = (Q,A,\Delta,F)$, where
 \begin{align*}
   Q ~=~ & \{ \PS ~|~ \SRDRULE{\PS}{\sha} \in \SRD \} \HEAPUNION \{ S \}, \quad F = \{ S \}, \\
   A ~=~ & \{ \sha ~|~ \SRDRULE{\PS}{\sha} \in \SRD \} \cup \{ \sh \}, \\
   \TARULE{\PS_1 \ldots \PS_m}{\psi}{\PS_0} ~\text{iff}~ & \SRDRULE{\PS_0}{\sha} \in \SRD ~\text{or}~ \sha = \sh ~\text{and}~ \PS_0 = S,~
 \end{align*}
 where $\CALLS{\sha} = \CALLN{1}{} \SEP \ldots \SEP \CALLN{m}{}$.
%


%
%
%
\section{Proof of Theorem~\ref{thm:refinement:boolean}} \label{app:refinement:boolean}
\begingroup
\def\thetheorem{\ref{thm:refinement:boolean}}
\begin{theorem}
  Let $\HA{A}$ and $\HA{B}$ be heap automata over $\SHCLASS$.
  Then there exists a heap automata $\HA{C}_1,\HA{C}_2,\HA{C}_3$ over $\SHCLASS$ with
  $L(\HA{C}_1) = L(\HA{A}) \cup L(\HA{B})$,
  $L(\HA{C}_2) = L(\HA{A}) \cap L(\HA{B})$, and
  $L(\HA{C}_3) = \RSHCLASS \setminus L(\HA{A})$, respectively.
\end{theorem}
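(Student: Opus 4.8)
The plan is to construct the three automata by the standard product/subset constructions used for finite automata on trees, adapting them to the setting of heap automata, and in each case to verify that the result is (i) a well-formed heap automaton (finite states, decidable transition relation respecting $\SIZE{\T{q}} = \NOCALLS{\sh}$), (ii) compositional in the sense of Definition~\ref{def:refinement:automaton}, and (iii) has the claimed language. Throughout, write $\HA{A} = (Q_{\HA{A}}, \SHCLASS, \Delta_{\HA{A}}, F_{\HA{A}})$ and $\HA{B} = (Q_{\HA{B}}, \SHCLASS, \Delta_{\HA{B}}, F_{\HA{B}})$.

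For $\HA{C}_1$ and $\HA{C}_2$ (union and intersection), take the product automaton with state set $Q_{\HA{A}} \times Q_{\HA{B}}$, transition relation $\MOVE{C}{(p,p')}{\sh}{(q_1,q_1')\ldots(q_m,q_m')}$ iff $\MOVE{A}{p}{\sh}{q_1\ldots q_m}$ and $\MOVE{B}{p'}{\sh}{q_1'\ldots q_m'}$, and final states $F_{\HA{A}} \times Q_{\HA{B}} \cup Q_{\HA{A}} \times F_{\HA{B}}$ for $\HA{C}_1$, respectively $F_{\HA{A}} \times F_{\HA{B}}$ for $\HA{C}_2$. Decidability of $\Delta_{\HA{C}_i}$ is immediate from decidability of $\Delta_{\HA{A}}, \Delta_{\HA{B}}$; the arity condition is inherited componentwise. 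Compositionality follows directly: the left-hand side of the biconditional in Definition~\ref{def:refinement:automaton} for $\HA{C}_i$ decomposes into the conjunction of the corresponding conditions for $\HA{A}$ and $\HA{B}$ (picking $\T{q}$ componentwise as a pair of witnesses), and likewise for the right-hand side, so the two biconditionals for $\HA{A}$ and $\HA{B}$ combine to give the one for $\HA{C}_i$. A short induction on unfolding trees (or an appeal to compositionality applied to $\UNFOLD{t}$) shows $\OMEGA{C}{(p,p')}{\rsh}$ iff $\OMEGA{A}{p}{\rsh}$ and $\OMEGA{B}{p'}{\rsh}$, whence the language identities.

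For $\HA{C}_3$ (complement) the subtlety is that heap automata need not be deterministic: a reduced symbolic heap $\rsh$ may have $\OMEGA{A}{q}{\rsh}$ for several $q$, so one first determinizes $\HA{A}$ by the subset construction. Set $Q_{\HA{C}_3} = 2^{Q_{\HA{A}}}$ and define $\MOVE{C_3}{P_0}{\sh}{P_1\ldots P_m}$ iff $P_0 = \{\, p \in Q_{\HA{A}} \mid \exists q_1 \in P_1, \ldots, \exists q_m \in P_m ~.~ \MOVE{A}{p}{\sh}{q_1\ldots q_m}\,\}$; this $\Delta_{\HA{C}_3}$ is decidable since $\Delta_{\HA{A}}$ is and the quantifiers range over finite sets, and the arity condition is inherited. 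Take $F_{\HA{C}_3} = \{\, P \subseteq Q_{\HA{A}} \mid P \cap F_{\HA{A}} = \emptyset \,\}$. The key lemma, proved by induction on unfolding trees, is that $\OMEGA{C_3}{P}{\rsh}$ holds for exactly one $P$, namely $P = \{\, q \in Q_{\HA{A}} \mid \OMEGA{A}{q}{\rsh}\,\}$; the induction step uses compositionality of $\HA{A}$ to relate the states reachable on $\UNFOLD{t}$ to those reachable on the immediate subtrees. Given this, $\rsh \in L(\HA{C}_3)$ iff the (unique) reachable subset is disjoint from $F_{\HA{A}}$ iff $\rsh \notin L(\HA{A})$. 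It remains to check that $\HA{C}_3$ is itself compositional: since the transition function is now deterministic and "additive" in the sense that the reachable subset on $\sh\left[P_1/\rsh_1,\ldots\right]$ is exactly the image under the transition of the reachable subsets on the $\rsh_i$, both directions of the Definition~\ref{def:refinement:automaton} biconditional reduce to this determinism-plus-additivity observation.

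The main obstacle I anticipate is the complement case, specifically making the subset-construction argument precise: one must be careful that the "reachable subset" semantics of $\HA{C}_3$ agrees on unfoldings with the nondeterministic semantics of $\HA{A}$ at every level of an unfolding tree — not merely at the root — which is exactly where compositionality of $\HA{A}$ is needed, and one must then re-derive compositionality for $\HA{C}_3$ so that $L(\HA{C}_3)$ is even well-defined in the sense of the paragraph following Definition~\ref{def:refinement:automaton}. The union and intersection cases are routine by comparison. Full constructions and the inductive verifications are deferred to Appendix~\ref{app:refinement:boolean}.
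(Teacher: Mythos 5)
Your intersection and complement constructions coincide with the paper's: the full product with final states $F_{\HA{A}} \times F_{\HA{B}}$ for intersection, and the subset construction with final states $\{ X \subseteq Q_{\HA{A}} \mid X \cap F_{\HA{A}} = \emptyset \}$ for complement (the paper's Lemma~\ref{thm:refinement:complement} phrases the transition relation as a two-sided condition characterizing $X_0$ as exactly the set of reachable states, which is your ``determinism-plus-additivity'' observation). Your remarks on where compositionality of $\HA{A}$ enters the complement argument are accurate.

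There is, however, a genuine gap in your union construction. Heap automata are not required to be \emph{total}: nothing in the definition guarantees that for every reduced $\rsh \in \RSHCLASS$ there exists some $q$ with $\OMEGA{B}{q}{\rsh}$. Your product automaton with final states $F_{\HA{A}} \times Q_{\HA{B}} \cup Q_{\HA{A}} \times F_{\HA{B}}$ therefore accepts
\[
  \bigl(L(\HA{A}) \cap D_{\HA{B}}\bigr) \,\cup\, \bigl(D_{\HA{A}} \cap L(\HA{B})\bigr),
  \quad\text{where } D_{\HA{X}} = \{ \rsh \mid \exists q \,.\, \OMEGA{X}{q}{\rsh} \},
\]
which is a strict subset of $L(\HA{A}) \cup L(\HA{B})$ whenever, say, some $\rsh \in L(\HA{A})$ admits no run of $\HA{B}$ at all. (Your correctly proved equivalence $\OMEGA{C}{(p,p')}{\rsh}$ iff $\OMEGA{A}{p}{\rsh}$ and $\OMEGA{B}{p'}{\rsh}$ makes this explicit: membership in $L(\HA{C}_1)$ always demands a witness run in \emph{both} automata.) The paper avoids this by taking the state set $(Q_{\HA{A}} \cup \{\bot\}) \times (Q_{\HA{B}} \cup \{\bot\})$ with final states $F_{\HA{A}} \times \{\bot\} \cup \{\bot\} \times F_{\HA{B}}$, so that each run commits to simulating exactly one of the two automata and never needs a run of the other. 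To repair your version you would either have to adopt that disjoint-union-style construction, or first complete both automata (adding a sink reachable from every configuration) while re-verifying compositionality of the completed automata --- which is not automatic and is precisely the step your proposal omits.
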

\addtocounter{theorem}{-1}
\endgroup
\begin{proof}
 We split the proof across three lemmas that are proven subsequently.
 Closure under union and intersection are shown in Lemma~\ref{thm:refinement:union} and Lemma~\ref{thm:refinement:intersection}, respectively.
 Finally, closure under complement with respect to $\RSL{}{\SRDCLASS}$ is proven in Lemma~\ref{thm:refinement:complement}.
 \qed
\end{proof}

\begin{lemma} \label{thm:refinement:union}
  Let $\HA{A},\HA{B}$ be heap automata over $\SHCLASS$. 
  Then there exists a heap automaton $\HA{C}$ over $\SHCLASS$ such that $L(\HA{C}) = L(\HA{A}) \cup L(\HA{B})$.
\end{lemma}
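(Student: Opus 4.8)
The plan is to adapt the classical product construction for the union of nondeterministic automata. Before doing so I would perform one small normalisation: assume without loss of generality that $\HA{A}$ and $\HA{B}$ are \emph{complete}, i.e., that for every $\sh \in \SHCLASS$ and every $\T{q} \in Q_{\HA{A}}^{\NOCALLS{\sh}}$ there is at least one $p$ with $\MOVE{A}{p}{\sh}{\T{q}}$, and symmetrically for $\HA{B}$. Completion is achieved by adjoining a fresh non-final sink state $q_{\bot}$ together with all transitions $(\T{q},\sh,q_{\bot})$ for arbitrary $\sh$ and arbitrary matching $\T{q}$; one checks that the resulting relation is still decidable and respects the arity condition $\SIZE{\T{q}} = \NOCALLS{\sh}$, that the accepted language is unchanged (since $q_{\bot}$ is non-final and no transitions into old states are added), and that compositionality is preserved — for an old output state the equivalence is literally the old one, and for the output state $q_{\bot}$ both sides hold trivially.

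Next I would set $\HA{C}$ to have states $Q_{\HA{A}} \times Q_{\HA{B}}$, final states $(F_{\HA{A}} \times Q_{\HA{B}}) \cup (Q_{\HA{A}} \times F_{\HA{B}})$, and transitions $\MOVE{C}{(p,p')}{\sh}{(q_1,q_1')\ldots(q_m,q_m')}$ exactly when $\MOVE{A}{p}{\sh}{q_1\ldots q_m}$ and $\MOVE{B}{p'}{\sh}{q_1'\ldots q_m'}$; decidability of $\Delta_{\HA{C}}$ and the arity condition are immediate. The crucial observation for the compositionality of $\HA{C}$ is that, in the defining equivalence, the existentially quantified input tuple over $Q_{\HA{A}} \times Q_{\HA{B}}$ splits into an $A$-part and a $B$-part occurring in separate conjuncts, so the left-hand side factors into the conjunction of the compositionality left-hand sides of $\HA{A}$ (with output $p$) and of $\HA{B}$ (with output $p'$). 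Applying compositionality of $\HA{A}$ and of $\HA{B}$ and recombining gives $(\EMPTYSEQ,\sh[P_1/\rsh_1,\ldots,P_m/\rsh_m],(p,p')) \in \Delta_{\HA{C}}$, which is the right-hand side. For the language: since a reduced $\rsh \in \RSHCLASS$ admits only single-step runs, $\OMEGA{C}{(p,p')}{\rsh}$ holds iff $\OMEGA{A}{p}{\rsh}$ and $\OMEGA{B}{p'}{\rsh}$, and combining this with the choice of $F_{\HA{C}}$ yields $\rsh \in L(\HA{C})$ iff $\rsh \in L(\HA{A})$ or $\rsh \in L(\HA{B})$.

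The step that really needs care — and the only genuine obstacle — is the completeness normalisation, which is precisely what makes the inclusion $L(\HA{A}) \cup L(\HA{B}) \subseteq L(\HA{C})$ work: from an accepting $\HA{A}$-run on $\rsh$ ending in $p \in F_{\HA{A}}$ one must still be able to choose \emph{some} $\HA{B}$-run on $\rsh$, ending in an arbitrary $p' \in Q_{\HA{B}}$, in order to reach the product final state $(p,p') \in F_{\HA{A}} \times Q_{\HA{B}}$; without completeness a reduced heap lying in $L(\HA{A}) \setminus L(\HA{B})$ on which $\HA{B}$ has no run at all would be lost. All remaining verifications are routine bookkeeping.
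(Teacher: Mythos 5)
Your proof is correct, but it takes a genuinely different route from the paper's. You use the classical synchronous product $Q_{\HA{A}} \times Q_{\HA{B}}$ with final states $(F_{\HA{A}} \times Q_{\HA{B}}) \cup (Q_{\HA{A}} \times F_{\HA{B}})$, which -- as you rightly identify -- only works after completing both automata with a non-final sink $q_{\bot}$, and your check that completion preserves decidability, the language, and compositionality is sound (the key point being that a transition into an \emph{old} output state can never have $q_{\bot}$ among its inputs, so the old compositionality equivalence is untouched). The paper instead builds a tagged disjoint union: its state space is $(Q_{\HA{A}} \cup \{\bot\}) \times (Q_{\HA{B}} \cup \{\bot\})$ with final states $F_{\HA{A}} \times \{\bot\} \,\cup\, \{\bot\} \times F_{\HA{B}}$, and every transition simulates \emph{exactly one} of the two automata while forcing the other component to $\bot$; since a run never has to produce a $\HA{B}$-run on a heap that only $\HA{A}$ accepts, no completeness assumption is needed and the normalisation step disappears. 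What each approach buys: the paper's construction is self-contained and has a sparser transition relation, while yours reuses verbatim the product already needed for intersection (only the final states change), at the price of the extra completion lemma -- a trade-off that is standard and entirely legitimate here. No gap.
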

\begin{proof}
  We construct a heap automaton $\HA{C} = (Q,\SHCLASS,\Delta,F)$ as follows:
 \begin{align*}
   & Q ~\DEFEQ~ \left(Q_{\HA{A}} \cup \{\bot\}\right) \times \left(Q_{\HA{B}} \cup \{\bot\}\right) \\
   & F ~\DEFEQ~ F_{\HA{A}} \times \{\bot\} ~\cup~ \{\bot\} \times F_{\HA{B}} \\
   & \Delta ~:~ \MOVE{C}{(p_0,q_0)}{\sh}{(p_1,q_1)\ldots(p_m,q_m)} ~\Leftrightarrow~ \\
   & \qquad \qquad  \MOVE{A}{p_0}{\sh}{p_1 \ldots p_m} ~\text{and}~ p_2 = \bot ~\text{or} \\
   & \qquad \qquad  \MOVE{B}{q_0}{\sh}{q_1 \ldots q_m} ~\text{and}~ p_1 = \bot~,
 \end{align*}
 where $\sh \in \SL{}{\SRDCLASS}$ with $\NOCALLS{\sh} = m$.
 Moreover, $\bot$ is assumed to be a fresh state.
 Clearly, $\Delta$ is decidable, because $\Delta_{\HA{A}}$ and $\Delta_{\HA{B}}$ are.
 Assuming $\HA{C}$ satisfies the compositionality property, we first show that the language of $\HA{C}$ in fact accepts $L(\HA{A}) \cup L(\HA{B})$: 
 \begin{align*}
     & \rsh \in L(\HA{A}) \cup L(\HA{B}) \\
     ~\Leftrightarrow~ & \left[ \text{Definition of}~\cup \right] \\
     & \rsh \in L(\HA{A}) ~\text{or}~ \rsh \in L(\HA{B}) \\
     ~\Leftrightarrow~ & \left[ \text{Definition of}~L(\HA{A}), L(\HA{B}) \right] \\
                     & \exists p \in F_{\HA{A}} \,.\, \OMEGA{A}{p}{\rsh} ~\text{or}~
                       \exists q \in F_{\HA{B}} \,.\, \OMEGA{B}{q}{\rsh} \\
   ~\Leftrightarrow~ & \left[ \text{construction of}~\HA{C} \right] \\
                     & \exists p \in F_{\HA{A}} \,.\, \OMEGA{C}{(p,\bot)}{\rsh} ~\text{or}~
                       \exists q \in F_{\HA{B}} \,.\, \OMEGA{C}{(\bot,q)}{\rsh} \\
   ~\Leftrightarrow~ & \left[ \text{Definition of}~F \right] \\
                     & \exists (p,q) \in F \,.\, \OMEGA{C}{(p,q)}{\rsh}.
 \end{align*}
 It remains to show the compositionality property, i.e., for each $p \in Q$, $\sh \in \SL{}{\SRDCLASS}$ with $\NOCALLS{\sh} = m$ and $\rsh_1,\ldots, \rsh_m \in \RSL{}{\SRDCLASS}$ with
 $\rsha = \sh[P_1/\rsh_1,\ldots,P_m/\rsh_m]$, we have
  \begin{align*}
    & \OMEGA{C}{p}{\rsha}
    ~\Leftrightarrow~
    \exists \T{q} \in Q^m \,.\, \MOVE{C}{p}{\sh}{\T{q}} ~\text{and}~ \forall 1 \leq i \leq m \,.\, \OMEGA{C}{\PROJ{\T{q}}{i}}{\rsh_i}.
  \end{align*}
 In the following proof, we write $p_1,p_2$ to denote the first and second component of $p \in Q$, respectively.
 \begin{align*}
                      & \OMEGA{C}{p}{\rsha} \\
    ~\Leftrightarrow~ & \left[ \text{construction of}~\HA{C} \right] \\
                      & p_2 = \bot ~\text{and}~ \OMEGA{A}{p_1}{\rsha} ~~\text{or}~~ p_1 = \bot ~\text{and}~ \OMEGA{B}{p_2}{\rsha} \\
    ~\Leftrightarrow~ & \left[ \text{compositionality of}~\HA{A},\HA{B} \right] \\
                      & p_2 = \bot ~\text{and}~ \exists \T{q_1} \in Q_{\HA{A}}^{m} \,.\, \MOVE{A}{p_1}{\sh}{\T{q_1}} ~\text{and}~ \\
                      & \forall 1 \leq i \leq m \,.\, \OMEGA{A}{\PROJ{\T{q_1}}{i}}{\rsh_i} \\
                      & ~~\text{or}~~ \\
                      & p_1 = \bot ~\text{and}~ \exists \T{q_2} \in Q_{\HA{B}}^{m} \,.\, \MOVE{B}{p_2}{\sh}{\T{q_2}} ~\text{and}~ \\
                      & \forall 1 \leq i \leq m \,.\, \OMEGA{B}{\PROJ{\T{q_2}}{i}}{\rsh_i} \\
    ~\Leftrightarrow~ & \big[ \text{setting}~\T{q} = (\PROJ{\T{q_1}}{1},\PROJ{\T{q_2}}{1}) \ldots (\PROJ{\T{q_1}}{m},\PROJ{\T{q_2}}{m}),~\text{construction of}~\HA{C} \big] \\
                      & \exists \T{q} \in Q^{m} \,.\, \MOVE{C}{p}{\sh}{\T{q}} ~\text{and}~ \forall 1 \leq i \leq m \,.\, \OMEGA{C}{\PROJ{\T{q}}{i}}{\rsh_i}. 
 \end{align*}
 \qed
\end{proof}
\begin{lemma} \label{thm:refinement:intersection}
  Let $\HA{A},\HA{B}$ be heap automata over $\SHCLASS$. 
  Then there exists a heap automaton $\HA{C}$ over $\SHCLASS$ such that $L(\HA{C}) = L(\HA{A}) \cap L(\HA{B})$.
\end{lemma}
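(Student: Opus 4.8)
The plan is to carry out the standard product construction, exactly parallel to the proof of Lemma~\ref{thm:refinement:union} but without the auxiliary state $\bot$. I would define $\HA{C} = (Q, \SHCLASS, \Delta, F)$ by setting $Q \DEFEQ Q_{\HA{A}} \times Q_{\HA{B}}$ and $F \DEFEQ F_{\HA{A}} \times F_{\HA{B}}$, and by letting $\Delta$ run both automata in lockstep:
\[
\MOVE{C}{(p_0,q_0)}{\sh}{(p_1,q_1)\ldots(p_m,q_m)} \quad\Leftrightarrow\quad \MOVE{A}{p_0}{\sh}{p_1 \ldots p_m} \text{ and } \MOVE{B}{q_0}{\sh}{q_1 \ldots q_m},
\]
where $\sh \in \SL{}{\SRDCLASS}$ with $\NOCALLS{\sh} = m$. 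Since $\Delta_{\HA{A}}$ and $\Delta_{\HA{B}}$ are decidable, so is $\Delta$, and every transition respects the arity constraint $\SIZE{\T{q}} = \NOCALLS{\sh}$ because both component transitions do.

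The only step requiring care is verifying the compositionality property for $\HA{C}$. For $p = (p_1,p_2) \in Q$, $\sh \in \SL{}{\SRDCLASS}$ with $\NOCALLS{\sh} = m$, and reduced symbolic heaps $\rsh_1,\ldots,\rsh_m \in \RSL{}{\SRDCLASS}$, I would chain equivalences on $\rsh_0 \DEFEQ \sh[P_1/\rsh_1,\ldots,P_m/\rsh_m]$: by construction of $\HA{C}$, $\OMEGA{C}{p}{\rsh_0}$ is equivalent to $\OMEGA{A}{p_1}{\rsh_0}$ together with $\OMEGA{B}{p_2}{\rsh_0}$; applying the compositionality of $\HA{A}$ and of $\HA{B}$ separately yields witnessing tuples $\T{q_1} \in Q_{\HA{A}}^{m}$ and $\T{q_2} \in Q_{\HA{B}}^{m}$ with $\MOVE{A}{p_1}{\sh}{\T{q_1}}$, $\MOVE{B}{p_2}{\sh}{\T{q_2}}$ and $\OMEGA{A}{\PROJ{\T{q_1}}{i}}{\rsh_i}$, $\OMEGA{B}{\PROJ{\T{q_2}}{i}}{\rsh_i}$ for all $1 \le i \le m$; pairing these componentwise into $\T{q} = (\PROJ{\T{q_1}}{1},\PROJ{\T{q_2}}{1}) \ldots (\PROJ{\T{q_1}}{m},\PROJ{\T{q_2}}{m})$ and reading the definition of $\Delta$ backwards gives $\MOVE{C}{p}{\sh}{\T{q}}$ and $\OMEGA{C}{\PROJ{\T{q}}{i}}{\rsh_i}$ for all $i$; all steps are reversible. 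The one subtlety is that the two existential witnesses must be merged into a single tuple of pairs, which works precisely because both component conditions range over the same index set $1 \le i \le m$; this is the only place where an off-by-one or mismatched-index slip could creep in.

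Once compositionality is established, $L(\HA{C})$ is well-defined, and I would conclude with the short chain: $\rsh \in L(\HA{C})$ iff there is $(p,q) \in F$ with $\OMEGA{C}{(p,q)}{\rsh}$, iff there are $p \in F_{\HA{A}}$ and $q \in F_{\HA{B}}$ with $\OMEGA{A}{p}{\rsh}$ and $\OMEGA{B}{q}{\rsh}$, iff $\rsh \in L(\HA{A})$ and $\rsh \in L(\HA{B})$. I expect no real obstacle: the argument is a routine adaptation of the union construction, and the most error-prone point is merely keeping the componentwise pairing of state tuples straight in the compositionality proof. Note in passing that $\SIZE{Q} = \SIZE{Q_{\HA{A}}} \cdot \SIZE{Q_{\HA{B}}}$, so, unlike complementation, intersection incurs no exponential blow-up.
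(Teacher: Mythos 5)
Your proposal is correct and matches the paper's own proof essentially verbatim: the same product state space $Q_{\HA{A}} \times Q_{\HA{B}}$ with final states $F_{\HA{A}} \times F_{\HA{B}}$, the same lockstep transition relation, the same componentwise pairing of the two witnessing tuples in the compositionality argument, and the same final chain of equivalences for $L(\HA{C})$. No gaps.
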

\begin{proof}
  Let $\HA{A},\HA{B}$ be heap automata over $\SHCLASS$ accepting $H$ and $K$.
  We construct a heap automaton $\HA{C} = (Q,\SHCLASS,\Delta,F)$ as follows: 
  \begin{align*}
   & Q ~\DEFEQ~ Q_{\HA{A}} \times Q_{\HA{B}} \qquad F ~\DEFEQ~ F_{\HA{A}} \times F_{\HA{B}} \\
   & \Delta ~:~ \MOVE{C}{(p_0,q_0)}{\sh}{(p_1,q_1)\ldots(p_m,q_m)} ~\Leftrightarrow~ \\
   & \qquad \MOVE{A}{p_0}{\sh}{p_1 \ldots p_m} ~\text{and}~ \MOVE{B}{q_0}{\sh}{q_1 \ldots q_m}~,
 \end{align*}
 where $\sh \in \SL{}{\SRDCLASS}$ with $\NOCALLS{\sh} = m$.
Decidability of $\Delta$ follows immediately from decidability of $\Delta_{\HA{A}}$ and $\Delta_{\HA{B}}$.
Assuming $\HA{C}$ satisfies the compositionality property, we first show that $L(\HA{C}) = L(\HA{A}) \cap L(\HA{B})$: 
\begin{align*}
    & \rsh \in L(\HA{A}) \cap L(\HA{B}) \\
  ~\Leftrightarrow~ & \left[ \text{Definition of}~\cap \right] \\
    & \rsh \in L(\HA{A}) ~\text{and}~ \rsh \in L(\HA{B}) \\
    ~\Leftrightarrow~ & \left[ \text{Definition of}~L(\HA{A}),~L(\HA{B}) \right] \\
                    & \exists p \in F_{\HA{A}} \,.\, \OMEGA{A}{p}{\rsh}
                      ~\text{and}~
                      \exists q \in F_{\HA{B}} \,.\, \OMEGA{B}{q}{\rsh} \\
  ~\Leftrightarrow~ & \left[ \text{construction of}~\HA{C} \right] \\
                    & \exists (p,q) \in F \,.\, \OMEGA{C}{(p,q)}{\rsh}
\end{align*}
 It remains to show the compositionality property, i.e., for each $p \in Q$, $\sh \in \SL{}{\SRDCLASS}$ with $\NOCALLS{\sh} = m$ and $\rsh_1,\ldots, \rsh_m \in \RSL{}{\SRDCLASS}$ with
 $\rsha = \sh[P_1/\rsh_1,\ldots,P_m/\rsh_m]$, we have
  \begin{align*}
    & \OMEGA{C}{p}{\rsha}
    ~\Leftrightarrow~
    \exists \T{q} \in Q^m \,.\, \MOVE{C}{p}{\sh}{\T{q}} ~\text{and}~ \forall 1 \leq i \leq m \,.\, \OMEGA{C}{\PROJ{\T{q}}{i}}{\rsh_i}.
  \end{align*}
 In the following proof, we write $p_1,p_2$ to denote the first and second component of $p \in Q$, respectively.
\begin{align*}
                   & \OMEGA{C}{p}{\rsha} \\
 ~\Leftrightarrow~ & \left[ \text{construction of}~\HA{C} \right] \\
                   & \OMEGA{A}{p_1}{\rsha} ~\text{and}~ \OMEGA{B}{p_2}{\rsha} \\
 ~\Leftrightarrow~ & \left[ \text{compositionality of}~\HA{A},\HA{B} \right] \\
                   & \exists \T{q_1} \in Q_{\HA{A}}^m \,.\, \MOVE{A}{p_1}{\sh}{\T{q_1}} ~\text{and} \\
                   & \forall 1 \leq i \leq m \,.\, \OMEGA{A}{\PROJ{\T{q_1}}{i}}{\rsh_i} \\
                   & ~\text{and}~ \\
                   & \exists \T{q_2} \in Q_{\HA{B}}^m \,.\, \MOVE{B}{p_2}{\sh}{\T{q_2}} ~\text{and} \\
                   & \forall 1 \leq i \leq m \,.\, \OMEGA{B}{\PROJ{\T{q_2}}{i}}{\rsh_i} \\
 ~\Leftrightarrow~ & \big[ \text{setting}~\T{q} = (\PROJ{\T{q_1}}{1},\PROJ{\T{q_2}}{1}) \ldots (\PROJ{\T{q_1}}{m},\PROJ{\T{q_2}}{m}), \\
                   & \quad \text{construction of}~\HA{C} \big] \\
                   & \exists \T{q} \in Q^{m} \,.\, \MOVE{C}{p}{\sh}{\T{q}} ~\text{and}~ \forall 1 \leq i \leq m \,.\, \OMEGA{C}{\PROJ{\T{q}}{i}}{\rsh_i}. \qedhere
\end{align*}
\end{proof}
\begin{lemma} \label{thm:refinement:complement}
  Let $\HA{A}$ be a heap automaton over $\SHCLASS$.
  Then there exists a heap automaton $\HA{C}$ over $\SHCLASS$ such that 
  $L(\HA{C}) = \RSL{}{\SRDCLASS} \setminus L(\HA{A})$.
\end{lemma}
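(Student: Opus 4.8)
The plan is to apply the classical subset (powerset) construction to determinise $\HA{A}$ and then to invert the acceptance condition. Concretely, I would take $\HA{C} = (2^{Q_{\HA{A}}}, \SHCLASS, \Delta_{\HA{C}}, F_{\HA{C}})$ with $F_{\HA{C}} \DEFEQ \{\, S \subseteq Q_{\HA{A}} \mid S \cap F_{\HA{A}} = \emptyset \,\}$, and, for every $\sh \in \SHCLASS$ with $m = \NOCALLS{\sh}$ predicate calls and all $S_1, \ldots, S_m \subseteq Q_{\HA{A}}$,
\[
\MOVE{C}{S_0}{\sh}{S_1 \ldots S_m}
\quad\text{iff}\quad
S_0 = \{\, q_0 \in Q_{\HA{A}} \mid \exists q_1 \in S_1, \ldots, \exists q_m \in S_m ~.~ \MOVE{A}{q_0}{\sh}{q_1 \ldots q_m} \,\}.
\]
Since $Q_{\HA{A}}$ is finite and $\Delta_{\HA{A}}$ is decidable, the right-hand set is computable by inspecting finitely many tuples, so $\Delta_{\HA{C}}$ is decidable, and by construction $\SIZE{S_1 \ldots S_m} = m = \NOCALLS{\sh}$, so $\HA{C}$ is a well-formed heap automaton over $\SHCLASS$. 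The structural observation I would record first is that $\HA{C}$ is \emph{deterministic}: for every $\sh$ and every input tuple there is exactly one admissible output state. In particular, for a reduced symbolic heap $\rsh$ (so $m = 0$) the unique $S$ with $\OMEGA{C}{S}{\rsh}$ is $S^{*}(\rsh) \DEFEQ \{\, q \in Q_{\HA{A}} \mid \OMEGA{A}{q}{\rsh} \,\}$.

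Next I would verify the compositionality property of $\HA{C}$. Fix $S_0 \in 2^{Q_{\HA{A}}}$, a symbolic heap $\sh$ with predicate calls $P_1, \ldots, P_m$ and reduced symbolic heaps $\rsh_1, \ldots, \rsh_m \in \RSHCLASS$, and write $\rsha = \sh[P_1/\rsh_1, \ldots, P_m/\rsh_m]$. Put $S_i \DEFEQ S^{*}(\rsh_i)$, the unique state with $\OMEGA{C}{S_i}{\rsh_i}$. The heart of the argument is the following chain of equivalences, valid for every $q_0 \in Q_{\HA{A}}$:
\begin{align*}
\OMEGA{A}{q_0}{\rsha}
\quad &\text{iff}\quad \exists \T{q} \in Q_{\HA{A}}^{m} ~.~ \MOVE{A}{q_0}{\sh}{\T{q}} ~\text{and}~ \bigwedge_{1 \leq i \leq m} \OMEGA{A}{\PROJ{\T{q}}{i}}{\rsh_i} \\
 &\text{iff}\quad \exists q_1 \in S_1, \ldots, \exists q_m \in S_m ~.~ \MOVE{A}{q_0}{\sh}{q_1 \ldots q_m},
\end{align*}
where the first equivalence is exactly the compositionality of $\HA{A}$ and the second unfolds the definition of the $S_i$. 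Hence $S^{*}(\rsha)$ is precisely the unique set $S$ with $\MOVE{C}{S}{\sh}{S_1 \ldots S_m}$, and both directions of the compositionality biconditional for $\HA{C}$ follow: if $\OMEGA{C}{S_0}{\rsha}$ then determinism forces $S_0 = S^{*}(\rsha)$, and $S_1 \ldots S_m$ is a witnessing input tuple; conversely, any witnessing tuple $\T{S}$ with $\MOVE{C}{S_0}{\sh}{\T{S}}$ and $\bigwedge_i \OMEGA{C}{\PROJ{\T{S}}{i}}{\rsh_i}$ must satisfy $\PROJ{\T{S}}{i} = S_i$ (determinism on reduced heaps) and $S_0 = S^{*}(\rsha)$, i.e.\ $\OMEGA{C}{S_0}{\rsha}$.

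Finally I would read off the language. By determinism, $\OMEGA{C}{S}{\rsh}$ holds exactly for $S = S^{*}(\rsh)$, so for $\rsh \in \RSHCLASS$ we obtain $\rsh \in L(\HA{C})$ iff $S^{*}(\rsh) \in F_{\HA{C}}$ iff $S^{*}(\rsh) \cap F_{\HA{A}} = \emptyset$ iff there is no $q \in F_{\HA{A}}$ with $\OMEGA{A}{q}{\rsh}$ iff $\rsh \notin L(\HA{A})$; therefore $L(\HA{C}) = \RSHCLASS \setminus L(\HA{A})$, as claimed. The one genuinely delicate step is the compositionality check for $\HA{C}$: one has to make sure that $\Delta_{\HA{C}}$ is functional and that it tracks \emph{exactly} the set of $\HA{A}$-states reachable on the corresponding unfolding — which is precisely where compositionality of $\HA{A}$ enters — while the remaining steps are routine bookkeeping. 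I would also note the unavoidable exponential blow-up $\SIZE{Q_{\HA{C}}} = 2^{\SIZE{Q_{\HA{A}}}}$, in line with the complexity remarks following Corollary~\ref{thm:compositional:inclusion}.
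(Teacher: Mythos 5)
Your proposal is correct and follows essentially the same route as the paper: the paper's $\HA{C}$ uses the identical powerset state space, the identical complemented final states, and a transition relation that (stated as two implications, one for members of $X_0$ and one for non-members) is exactly your functional characterization $S_0 = \{q_0 \mid \exists q_i \in S_i .\ \MOVE{A}{q_0}{\sh}{q_1\ldots q_m}\}$. The only presentational difference is that you make determinism of $\HA{C}$ explicit and exploit it, whereas the paper proves the compositionality biconditional by choosing the witness tuple $\PROJ{\T{Y}}{i} = \{q \mid \OMEGA{A}{q}{\rsh_i}\}$ and doing a case split on $p \in X$ versus $p \notin X$ — the same argument in slightly different clothing.
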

\begin{proof}
%
  %
    We construct a heap automaton $\HA{C} = (Q,\SHCLASS,\Delta,F)$ accepting $\RSL{}{\SRDCLASS} \setminus L(\HA{A})$ as follows:
 \begin{align*}
   & Q ~\DEFEQ~ 2^{Q_{\HA{A}}} \qquad F ~\DEFEQ~ \{ X \subseteq Q_{\HA{A}} ~|~ X \cap F_{\HA{A}} = \emptyset \} \\
   & \Delta ~:~ \MOVE{C}{X_0}{\sh}{X_1 \ldots X_m} ~\Leftrightarrow~ \\
   & \quad \forall q_0 \in X_0 . \forall 1 \leq i \leq m . \exists q_i \in X_i \,.\, \MOVE{A}{q_0}{\sh}{q_1 \ldots q_m} \\
   & \quad \text{and}~ \\
   & \quad \forall q_0 \notin X_0 . \forall 1 \leq i \leq m . \forall q_i \in X_i \,.\, \text{not}~ \MOVE{A}{q_0}{\sh}{q_1 \ldots q_m}~,
 \end{align*}
 where $\sh \in \SL{}{\SRDCLASS}$ with $\NOCALLS{\sh} = m$.
 Decidability of $\Delta$ follows immediately from decidability of $\Delta_{\HA{A}}$.
 Assuming $\HA{C}$ satisfies the compositionality property, we first show that $L(\HA{C}) = \RSL{}{\SRDCLASS} \setminus L(\HA{A})$:
 \begin{align*}
     & \rsh \in \RSL{}{\SRDCLASS} \setminus L(\HA{A}) \\
   ~\Leftrightarrow~ & \left[ \text{Definition of}~\RSL{}{\SRDCLASS} \setminus L(\HA{A}) \right] \\
                     & \forall p \in F_{\HA{A}} \,.\, \text{not}~ \OMEGA{A}{p}{\rsh} \\
   ~\Leftrightarrow~ & \left[ \text{Definition of}~F \right] \\
   ~\Leftrightarrow~ & \exists X \in F \,.\, \forall p \in X \,.\, \OMEGA{A}{p}{\rsh} ~\text{and}~ \forall p \notin X \,.\, \text{and}~ \OMEGA{A}{p}{\rsh} \\
   ~\Leftrightarrow~ & \left[ \text{construction of}~\HA{C} \right] \\
   ~\Leftrightarrow~ & \exists X \in F \,.\, \OMEGA{C}{X}{\rsh}
 \end{align*}
 It remains to show the compositionality property, i.e., for each $X \in Q$, $\sh \in \SL{}{\SRDCLASS}$ with $\NOCALLS{\sh} = m$ and $\rsh_1,\ldots, \rsh_m \in \RSL{}{\SRDCLASS}$ with
 $\rsha = \sh[P_1/\rsh_1,\ldots,P_m/\rsh_m]$, we have
  \begin{align*}
    & \OMEGA{C}{p}{\rsha}
    ~\Leftrightarrow~
    \exists \T{Y} \in Q^m \,.\, \MOVE{C}{X}{\sh}{\T{Y}} ~\text{and}~ \forall 1 \leq i \leq m \,.\, \OMEGA{C}{\PROJ{\T{Y}}{i}}{\rsh_i}
  \end{align*}
  Assume $\OMEGA{C}{X}{\rsha}$.
  We choose $\T{Y}$ such that for each $1 \leq i \leq m$, we have
  \begin{align*}
   \PROJ{\T{Y}}{i} ~\DEFEQ~ \{ q \in Q ~|~ \OMEGA{A}{q}{\rsh_i} \}.
  \end{align*}
  Then $\OMEGA{C}{\PROJ{\T{Y}}{i}}{\rsh_i}$ and $\MOVE{C}{X}{\sh}{\T{Y}}$ hold immediately by construction and our choice of $\T{Y}$.
  For the converse direction assume there exists $\T{Y} \in Q^{m}$ such that $\MOVE{C}{X}{\sh}{\T{Y}}$ and for each $1 \leq i \leq m$, we have $\OMEGA{C}{\PROJ{\T{Y}}{i}}{\rsh_i}$.
  Two cases arise for each $p \in Q_{\HA{A}}$.
  First, assume $p \in X$.
  By construction, there exists a $q_i \in \PROJ{\T{Y}}{i}$ for each $1 \leq i \leq m$ such that
  $\MOVE{A}{p}{\sh}{q_1 \ldots q_m}$ and $\OMEGA{A}{q_i}{\rsh_i}$, $1 \leq i \leq m$.
  Since $\HA{A}$ is a heap automaton this implies $\OMEGA{A}{p}{\rsha}$.

  Second, assume $p \notin X$.
  Then for each choice of $q_i \in q_i \in \PROJ{\T{Y}}{i}$, $1 \leq i \leq m$, we know by construction of $\Delta$ that
  $\MOVE{A}{p}{\sh}{q_1 \ldots q_m}$ as well as $\OMEGA{A}{q_k}{\rsh_i}$ \emph{do} \emph{not} hold.
  Since $\HA{A}$ is a heap automaton this implies that $\OMEGA{A}{p}{\rsha}$ does not hold.
  Putting both cases together yields $\OMEGA{C}{X}{\rsha}$.
  \qed
\end{proof}
%
%
%
\section{Proof of Theorem~\ref{thm:compositional:refinement}} \label{app:compositional:refinement}
The following lemma is essential to show that our construction of $\SRDALT$ is correct.
\begin{lemma} \label{thm:compositional:psi-correctness}
 Let $\SRD$, $\SRDALT$ and $\HA{A}$ be as in Theorem~\ref{thm:compositional:refinement}.
 Then, for each $\PS \in \PRED(\SRD)$ and $q \in Q_{\HA{A}}$, we have
 \begin{align*}
   & \rsh \in \CALLSEM{(\PS,q)\FV{0}{}}{\SRDALT} \quad\text{iff}\quad \rsh \in \CALLSEM{\PS\FV{0}{}}{\SRD} ~\text{and}~ \OMEGA{A}{q}{\rsh}. 
 \end{align*}
\end{lemma}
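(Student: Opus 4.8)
The plan is to prove both directions by induction on the height of unfolding trees, exploiting the syntactic correspondence between rules of $\SRD$ and rules of $\SRDALT$ built into the construction. Recall the construction: for every rule $\SRDRULE{\PS}{\sh} \in \SRD$ with $\CALLS{\sh} = \CALLN{1}{} \SEP \ldots \SEP \CALLN{m}{}$ and every transition $\MOVE{A}{q_0}{\sh}{q_1 \ldots q_m}$, the SID $\SRDALT$ contains the rule in which $\PS$ becomes $\langle\PS,q_0\rangle$ and each call $\CALLN{i}{}$ becomes $\langle\PS_i,q_i\rangle\FV{i}{}$; additionally $\SRDALT$ has the ``projection'' rules $\PS \SRDARROW \langle\PS,q\rangle\FV{0}{}$ for $q \in F_{\HA{A}}$. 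The key book-keeping observation is that unfolding trees of $\langle\PS,q\rangle\FV{0}{}$ w.r.t.\ $\SRDALT$ are in bijection with pairs $(t,\omega)$ where $t$ is an unfolding tree of $\PS\FV{0}{}$ w.r.t.\ $\SRD$ and $\omega$ is a ``consistent labelling'' of $t$ by states of $\HA{A}$ with $\omega(\EMPTYSEQ) = q$; here consistency means $\MOVE{A}{\omega(\T{u})}{t(\T{u})}{\omega(\T{u}\,1) \ldots \omega(\T{u}\,k)}$ at every node $\T{u}$ with $k$ children. Moreover, forgetting the state annotations, the unfolding $\UNFOLD{t'}$ of such a tree $t'$ in $\SRDALT$ equals $\UNFOLD{t}$ for the corresponding $t$, since annotations only rename predicate symbols and the base cases (reduced symbolic heaps) are untouched.

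For the ``if'' direction, suppose $\rsh \in \CALLSEM{\PS\FV{0}{}}{\SRD}$ with witnessing tree $t$, and $\OMEGA{A}{q}{\rsh}$. By the compositionality property of $\HA{A}$ (applied inductively along $t$, i.e., an induction on the height of $t$), there is a consistent labelling $\omega$ of $t$ with $\omega(\EMPTYSEQ) = q$ such that at each leaf, the state assigned to the corresponding reduced sub-unfolding matches. Translating $(t,\omega)$ into an unfolding tree $t'$ of $\langle\PS,q\rangle\FV{0}{}$ in $\SRDALT$ (which is well-defined precisely because every step $\MOVE{A}{\omega(\T{u})}{t(\T{u})}{\cdots}$ is a transition of $\HA{A}$, hence yields a rule of $\SRDALT$) gives $\UNFOLD{t'} = \UNFOLD{t} = \rsh$, so $\rsh \in \CALLSEM{\langle\PS,q\rangle\FV{0}{}}{\SRDALT}$. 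For the ``only if'' direction, any unfolding tree $t'$ of $\langle\PS,q\rangle\FV{0}{}$ in $\SRDALT$ decomposes uniquely into $(t,\omega)$ as above; then $\UNFOLD{t'} = \UNFOLD{t}$ shows $\rsh \in \CALLSEM{\PS\FV{0}{}}{\SRD}$, and the consistency of $\omega$ together with the compositionality property (again by induction on height) shows $\OMEGA{A}{q}{\UNFOLD{t}}$, i.e., $\OMEGA{A}{q}{\rsh}$.

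The main obstacle is the careful handling of variable substitutions in Definition~\ref{def:symbolic-heaps:unfolding}: when a predicate call $\PS_i\FV{i}{}$ is unfolded, free variables of the callee are renamed to the actual parameters $\FV{i}{}$ and existentials are renamed fresh, and one must check that this commutes with the state annotation (the annotation $\langle\PS_i,q_i\rangle$ carries the same arity and parameters, so it does). One must also verify that the compositionality property is being invoked with the right instantiation at each level---precisely, at node $\T{u}$ with children carrying reduced unfoldings $\rsh_1,\ldots,\rsh_k$, the compositionality biconditional links $\OMEGA{A}{\omega(\T{u})}{t(\T{u})[\PS_1/\rsh_1,\ldots,\PS_k/\rsh_k]}$ with the existence of the tuple $(\omega(\T{u}\,1),\ldots,\omega(\T{u}\,k))$ realizing both $\MOVE{A}{\omega(\T{u})}{t(\T{u})}{\cdots}$ and $\OMEGA{A}{\omega(\T{u}\,i)}{\rsh_i}$. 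Once this bijection-plus-compositionality scaffolding is set up, both directions follow by a routine induction on tree height, with the base case ($\NOCALLS{\sh}=0$, a reduced symbolic heap) handled directly since $\UTREES{\SRD}{\rsh}$ is a singleton and the only annotated rules available are the projection rules $\PS \SRDARROW \langle\PS,q\rangle\FV{0}{}$ for $q \in F_{\HA{A}}$ composed with the (unique) annotated base rule.
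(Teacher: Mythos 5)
Your proposal is correct and follows essentially the same route as the paper: an induction on the height of unfolding trees that pairs each annotated tree in $\SRDALT$ with an unannotated tree in $\SRD$ carrying a consistent state labelling, invoking the compositionality property at each level to transfer $\OMEGA{A}{q}{\cdot}$ between a node's unfolding and its children's unfoldings. One minor slip: in the base case for $\CALLSEM{\langle\PS,q\rangle\FV{0}{}}{\SRDALT}$ the projection rules $\PS \SRDARROW \langle\PS,q\rangle\FV{0}{}$ play no role (they concern the unannotated predicate $\PS$ and are only needed later in the proof of the Refinement Theorem itself); the base case is simply that $\SRDRULE{\langle\PS,q\rangle}{\rsh} \in \SRDALT$ for reduced $\rsh$ iff $\SRDRULE{\PS}{\rsh} \in \SRD$ and $\OMEGA{A}{q}{\rsh}$.
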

\begin{proof}
  Let $\PS \in \PRED(\SRD)$ and $q \in Q_{\HA{A}}$.
  By induction on the height $k$ of unfolding trees $t$, we show
  \begin{align*}
    t \in \UTREES{\SRDALT}{(\PS,q)\FV{0}{}} ~\Leftrightarrow~ & \exists t' \in \UTREES{\SRD}{\PS\FV{0}{}} ~.~ \UNFOLD{t'} = \UNFOLD{t}\\
                                                       & \qquad \text{and}~ \OMEGA{A}{q}{\UNFOLD{t}}.
  \end{align*}
  \paragraph{I.B.}
  Note that, by Definition of unfolding trees, $t(\EMPTYSEQ) = (\PS,q)\FV{0}{}$.
  If $k = 0$ there is nothing to show.
  Thus assume $k = 1$.
  Then
  \begin{align*}
         & \UNFOLD{t} \\
     ~=~ & \left[ \text{Definition of}~\UNFOLD{t} \right] \\
         & (\PS,q)\FV{0}{}\left[(\PS,q) / \UNFOLD{\SUBTREE{t}{1}}\right] \\
     ~=~ & \left[ \text{Definition of predicate replacement} \right] \\
     ~=~ & \UNFOLD{\SUBTREE{t}{1}} \\
     ~=~ & \left[ k = 1 ~\text{implies}~ \UNFOLD{\SUBTREE{t}{1}} = t(1) \in \RSL{}{}  \right] \\
     ~=~ & t(1).
  \end{align*}
  Hence
  \begin{align*}
                       & t \in \UTREES{\SRDALT}{(\PS,q)\FV{0}{}} \\
     ~\Leftrightarrow~ & \left[ \text{Definition}~\ref{def:sl:unfolding-trees},~ k = 1 \right] \\
                       & \SRDRULE{(\PS,q)}{t(1)} \in \SRDALT \\
     ~\Leftrightarrow~ & \left[ \text{Construction of}~\SRDALT \right] \\
                       & \SRDRULE{\PS}{t(1)} \in \SRD ~\text{and}~ \OMEGA{A}{t(1)}{q} \\
     ~\Leftrightarrow~ & \left[ \text{Definition}~\ref{def:sl:unfolding-trees} \right] \\
                       & \exists t' \in \UTREES{\SRD}{\PS\FV{0}{}} \,.\, \UNFOLD{t} = t(1) ~\text{and}~ \OMEGA{A}{t(1)}{q} \\
     ~\Leftrightarrow~ & \left[ \UNFOLD{t} = t(1) \right] \\
                       & \exists t' \in \UTREES{\SRD}{\PS\FV{0}{}} \,.\, \UNFOLD{t'} = \UNFOLD{t} ~\text{and}~ \OMEGA{A}{\UNFOLD{t}}{q}.
  \end{align*}
  \paragraph{I.H.}
  Assume for an arbitrary, but fixed natural number $k$ that for each predicate symbol $\PS \in \PRED(\SRD)$, each state $q \in \HA{A}$ and each unfolding tree $t \in \UTREES{\SRDALT}{(\PS,q)\FV{0}{}}$ of height at most $k$, we have
  \begin{align*}
    t \in \UTREES{\SRDALT}{(\PS,q)\FV{0}{}} ~\Leftrightarrow~ & \exists t' \in \UTREES{\SRD}{\PS\FV{0}{}} ~.~ \UNFOLD{t'} = \UNFOLD{t}\\
                                                       & \qquad \text{and}~ \OMEGA{A}{q}{\UNFOLD{t}}.
  \end{align*}
  \paragraph{I.S.}
  Let $t \in \UTREES{\SRDALT}{(\PS,q)\FV{0}{}}$ be an unfolding tree of height $k+1$.
  Then $t(1)$ is not a reduced symbolic heap (otherwise the height of $t$ would be $1$).
  Thus, given some $m \geq 1$, we may assume that
  \begin{align*}
   t(1) = \exists \BV{} . \SPATIAL{} \SEP (\PS_1,q_1)\FV{1}{} \SEP \ldots \SEP (\PS_m,q_m) \FV{m}{} : \PURE{}
  \end{align*}
  Then
  \begin{align*}
         & \UNFOLD{t} \\
     ~=~ & \left[ \text{Definition of}~\UNFOLD{t} \right] \\
         & t(\EMPTYSEQ)[(\PS,q) / \UNFOLD{\SUBTREE{t}{1}}] \\
     ~=~ & \left[ t(\EMPTYSEQ) = (\PS,q)\FV{0}{},~\text{definition of predicate replacement} \right] \\
         & \UNFOLD{\SUBTREE{t}{1}} \\
     ~=~ & \left[ \text{Definition of}~\UNFOLD{t} \right] \\
         & t(1)[(\PS_1,q_1) / \UNFOLD{\SUBTREE{t}{1\cdot 1}}, \ldots (\PS_m,q_m) / \UNFOLD{\SUBTREE{t}{1\cdot m}}] \tag{$\spadesuit$}.
  \end{align*}
  Now, for each $1 \leq i \leq m$, $\SUBTREE{t}{1\cdot i}$ is an unfolding tree of $(\PS_i,q_i)\FV{0}{}$ of height at most $k$.
  Thus, by I.H. we know that
  \begin{align*}
    \SUBTREE{t}{1\cdot i} \in \UTREES{\SRDALT}{(\PS_i,q_i)\FV{0}{}} ~\Leftrightarrow~ & \exists t_i \in \UTREES{\SRD}{\PS_i\FV{0}{}} ~.~ \UNFOLD{t_i} = \UNFOLD{\SUBTREE{t}{1\cdot i}}\\
                                                       & \qquad \text{and}~ \OMEGA{A}{q_i}{\UNFOLD{\SUBTREE{t}{1\cdot i}}}.
  \end{align*}
  Then
  \begin{align*}
                       & t \in \UTREES{\SRDALT}{(\PS,q)\FV{0}{}} \\
     ~\Leftrightarrow~ & \left[ \text{Definition}~\ref{def:sl:unfolding-trees},~ k > 1 \right] \\
                       & \SRDRULE{(\PS,q)}{t(1)} \in \SRDALT \\
                       & \qquad \text{and}~ \forall 1 \leq i \leq m ~.~ \SUBTREE{t}{i} \in \UTREES{\SRDALT}{(\PS_i,q_i)\FV{0}{}} \\
     ~\Leftrightarrow~ & \left[ \text{Construction of}~\SRDALT \right] \\
                       & \SRDRULE{\PS}{\left(\exists \BV{} . \SPATIAL{} \SEP \CALLN{1}{} \SEP \ldots \SEP \CALLN{m}{} : \PURE{}\right)} \in \SRD \\
                       & \qquad \text{and}~ \MOVE{A}{q}{t(1)}{q_1 \ldots q_m} \\
                       & \qquad \text{and}~ \forall 1 \leq i \leq m ~.~ \SUBTREE{t}{i} \in \UTREES{\SRDALT}{(\PS_i,q_i)\FV{0}{}} \\
     ~\Leftrightarrow~ & \left[ \text{I.H.} \right] \\
                       & \SRDRULE{\PS}{\left(\exists \BV{} . \SPATIAL{} \SEP \CALLN{1}{} \SEP \ldots \SEP \CALLN{m}{} : \PURE{}\right)} \in \SRD \\
                       & \qquad \text{and}~ \MOVE{A}{q}{t(1)}{q_1 \ldots q_m} \\
                       & \qquad \text{and}~ \forall 1 \leq i \leq m ~.~ \exists t_i \in \UTREES{\SRD}{\PS_i\FV{0}{}} ~.~ \UNFOLD{t_i} = \UNFOLD{\SUBTREE{t}{1\cdot i}}\\
                       & \qquad \qquad \text{and}~ \OMEGA{A}{q_i}{\UNFOLD{\SUBTREE{t}{1\cdot i}}}. \\
     ~\Leftrightarrow~ & \left[ \text{Definition}~\ref{def:sl:unfolding-trees},~t'(1) \DEFEQ \exists \BV{} . \SPATIAL{} \SEP \CALLN{1}{} \SEP \ldots \SEP \CALLN{m}{} : \PURE{} \right] \\
                       & \exists t' \in \UTREES{\SRD}{\PS\FV{0}{}} ~.~ \UNFOLD{t'} = t'(1)[\PS_1 / \UNFOLD{\SUBTREE{t}{1\cdot 1}}, \ldots, \PS_m / \UNFOLD{\SUBTREE{t}{1\cdot m}}] \\
                       & \qquad \text{and}~ \MOVE{A}{q}{t(1)}{q_1 \ldots q_m} \\
                       & \qquad \text{and}~ \forall 1 \leq i \leq m ~.~ \OMEGA{A}{q_i}{\UNFOLD{\SUBTREE{t}{1\cdot i}}} \\
     ~\Leftrightarrow~ & \left[ \text{Definition}~\ref{def:refinement:automaton} \right] \\
                       & \exists t' \in \UTREES{\SRD}{\PS\FV{0}{}} ~.~ \UNFOLD{t'} = t'(1)[\PS_1 / \UNFOLD{\SUBTREE{t}{1\cdot 1}}, \ldots, \PS_m / \UNFOLD{\SUBTREE{t}{1\cdot m}}] \\
                       & \qquad \text{and}~ \OMEGA{A}{q}{\UNFOLD{\SUBTREE{t}{1}}} \\
     ~\Leftrightarrow~ & \left[ \text{applying}~(\spadesuit) \right] \\
                       & \exists t' \in \UTREES{\SRD}{\PS\FV{0}{}} ~.~ \UNFOLD{t'} = \UNFOLD{t} ~\text{and}~ \OMEGA{A}{q}{\UNFOLD{t}}.
  \end{align*}
  To finish the proof, note that
  \begin{align*}
                      & \rsh \in \CALLSEM{(\PS,q)\FV{0}{}}{\SRDALT} \\
    ~\Leftrightarrow~ & \left[ \text{Definition}~\ref{def:symbolic-heaps:unfoldings} \right] \\
                      & \exists t \in \UTREES{\SRDALT}{(\PS,q)\FV{0}{}} ~.~ \UNFOLD{t} = \rsh \\
    ~\Leftrightarrow~ & \left[ \text{previously shown proposition} \right] \\
                      & \exists t' \in \UTREES{\SRD}{\PS\FV{0}{}} ~.~ \UNFOLD{t'} = \rsh ~\text{and}~ \OMEGA{A}{q}{\rsh} \\
    ~\Leftrightarrow~ & \left[ \text{Definition}~\ref{def:symbolic-heaps:unfoldings} \right] \\
                      & \rsh \in \CALLSEM{\PS\FV{0}{}}{\SRD} ~\text{and}~ \OMEGA{A}{q}{\rsh}. \qedhere
  \end{align*}
\end{proof}
We are now in a position to prove Theorem~\ref{thm:compositional:refinement}.
\begin{proof}[Proof of Theorem~\ref{thm:compositional:refinement}]
 It remains to show that for each $\PS \in \PRED(\SRD)$, we have $\CALLSEM{\PS\FV{0}{}}{\SRDALT} = \CALLSEM{P\FV{0}{}}{\SRD} \cap L(\HA{A})$.
  By Lemma~\ref{thm:compositional:psi-correctness}, we know that for each $\PS \in \PRED(\SRD)$ and $q \in Q_{\HA{A}}$, we have
 \begin{align*}
   \rsh \in \CALLSEM{(\PS,q)\FV{0}{}}{\SRDALT} \quad\text{iff}\quad \rsh \in \CALLSEM{\PS\FV{0}{}}{\SRD} ~\text{and}~ \OMEGA{A}{q}{\rsh}. \tag{$\bigstar$}
 \end{align*}
 Then
 \begin{align*}
                     & \rsh \in \CALLSEM{\PS\FV{0}{}}{\SRDALT} \\
   ~\Leftrightarrow~ & \left[ \text{Definition}~\ref{def:symbolic-heaps:unfoldings} \right] \\
                     & \exists t \in \UTREES{\SRDALT}{\PS\FV{0}{}} ~.~ \UNFOLD{t} = \rsh \\
   ~\Leftrightarrow~ & \left[ \UNFOLD{t} = \PS\FV{0}{}[\PS\FV{0}{} / \UNFOLD{\SUBTREE{t}{1}}] = \UNFOLD{\SUBTREE{t}{1}} \right] \\
                     & \exists t \in \UTREES{\SRDALT}{\PS\FV{0}{}} ~.~ \UNFOLD{\SUBTREE{t}{1}} = \rsh \\
   ~\Leftrightarrow~ & \big[ \PS \SRDARROW (\PS,q) \in \SRDALT \Leftrightarrow q \in F_{\HA{A}} ~\text{and}~ t(\EMPTYSEQ) = \PS\FV{0}{} \\
                     & \quad\text{implies}~ \exists q \in F_{\HA{A}} . t(1) = (\PS,q)\FV{0}{} \big] \\
                     & \exists q \in F_{\HA{A}} . \exists t' \in \UTREES{\SRDALT}{(\PS,q)\FV{0}{}} ~.~ \UNFOLD{t'} = \rsh  \\
   ~\Leftrightarrow~ & \left[ \text{Definition}~\ref{def:symbolic-heaps:unfoldings} \right] \\
                     & \exists q \in F_{\HA{A}} ~.~ \rsh \in \CALLSEM{(\PS,q)\FV{0}{}}{\SRDALT} \\
   ~\Leftrightarrow~ & \left[ \text{applying}~(\bigstar) \right] \\
                     & \exists q \in F_{\HA{A}} ~.~ \rsh \in \CALLSEM{P\FV{0}{}}{\SRD} ~\text{and}~ \OMEGA{A}{q}{\rsh}. \\
   ~\Leftrightarrow~ & \left[ \rsh \in L(\HA{A}) \Leftrightarrow \exists q \in F_{\HA{A}} ~.~ \OMEGA{A}{q}{\rsh} \right] \\
                     & \rsh \in \CALLSEM{P\FV{0}{}}{\SRD} ~\text{and}~ \rsh \in L(\HA{A}) \\
   ~\Leftrightarrow~ & \left[ x \in A \wedge x \in B \Leftrightarrow x \in A \cap B \right] \\
                     & \rsh \in \CALLSEM{P\FV{0}{}}{\SRD} ~\cap~ \rsh \in L(\HA{A}). 
 \end{align*}
 \qed

\end{proof}

%
%
\section{Details on definite relationships} \label{app:zoo:definite}
This section briefly shows how to compute the definite relationships $\MEQ{\rsh}, \MNEQ{\rsh}$,
$\ALLOC{\rsh}$ and $x \MPT{\rsh} y$ introduced at the beginning of Section~\ref{sec:zoo}.
\begin{definition} \label{def:zoo:completion}
  Let $\rsh = \exists \BV{} \,.\, \SPATIAL{}{} : \PURE{} \in \RSL{}{}$.
  Then the \emph{completion} of $\rsh$ is given by the symbolic heap
  $
    \textrm{complete}(\rsh) ~\DEFEQ~ \exists \BV{} \,.\, \SPATIAL{}{} : \textrm{closure}(\Lambda)
  $, 
  where $\textrm{closure}(\Lambda)$ denotes the reflexive, symmetric, and transitive closure w.r.t. $=$ and the
  symmetric closure w.r.t. $\neq$ of the set of pure formulas $\Lambda$ given by
  \begin{align*}
           \PURE{}{}
    ~\cup~ & \{ x \neq \NIL ~|~ \PT{x}{\_} ~\text{occurs in}~ \SPATIAL{}{} \} \\
    ~\cup~ & \{ x \neq y ~|~ \PT{x}{\_},\PT{y}{\_} ~\text{distinct points-to assertions in}~ \SPATIAL{}{} \}.
  \end{align*}
  Moreover, if $\textrm{closure}(\Lambda)$ is inconsistent, i.e.,
  $x \neq x$ or $\NIL \neq \NIL$ is contained in $\textrm{closure}(\Lambda)$,
  we define $\textrm{closure}(\Lambda)$ to be the set of all pure formulas over $\VAR(\rsh)$.
\end{definition}
Clearly, $\Lambda$ is computable in at most quadratic time.
Then computing $\textrm{closure}(\Lambda)$ is possible in at most cubic time.
Finally, checking for inconsistencies can be performed in linear time.
Hence, the completion of a reduced symbolic heap is computable in polynomial time.
The following lemma provides a characterization of each of the definite relationships
between variables introduced in Section~\ref{sec:zoo}.
\begin{lemma} \label{thm:zoo:relationships:characterizations}
  Let $\rsh \in \RSL{}{}$ and $x,y \in \VAR(\rsh)$.
  Moreover, let $\rsha = \textrm{complete}(\rsh)$.
  Then
  \begin{itemize}
    \item $x \MEQ{\rsh} y$ iff $x = y \in \PURE{\rsha}$,
    \item $x \MNEQ{\rsh} y$ iff $x \neq y \in \PURE{\rsha}$,
    \item $x \in \ALLOC{\rsh}$ iff
          $x \neq x \in \PURE{\rsha}$
          or there exists $\PT{z}{\_}$ in $\SPATIAL{\rsha}$ such that $x = z \in \PURE{\rsha}$,
    \item $x \MPT{\rsh} y$ iff
          $x \neq x \in \PURE{\rsha}$
          or there exists $\PT{z_1}{(\_,z_2,\_)}$  in $\SPATIAL{\rsha}$
          such that $x = z_1 \in \PURE{\rsha}$ and $y = z_2 \in \PURE{\rsha}$. \qedhere
  \end{itemize}
\end{lemma}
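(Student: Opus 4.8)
The plan is to reduce all four characterisations to properties of the completion $\rsha = \textrm{complete}(\rsh)$ (Definition~\ref{def:zoo:completion}), first disposing of the inconsistent case and then, in the consistent case, proving each biconditional by a soundness ($\Leftarrow$) argument and a completeness ($\Rightarrow$) argument built on a single canonical model. Throughout I read $\textrm{closure}(\Lambda)$ as a congruence closure, i.e.\ closed not only under reflexivity, symmetry and transitivity of $=$ but also under substitution of equals (so from $x=y$ and $x\neq z$ one obtains $y\neq z$); this is forced by the statement being a biconditional, and matches the intended reading of Remark~\ref{rem:closure}.

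\emph{Degenerate case.} If $\textrm{closure}(\Lambda)$ is inconsistent, then $\STRIP{\rsh}$ has no tight model: by the semantics in Figure~\ref{fig:slsemantics}, every model of $\STRIP{\rsh}$ validates $\PURE{\rsh}$ together with all the inequalities added in $\Lambda$ (see the next paragraph for why these are valid), hence an inconsistent set of pure formulas. So $\MODELS{\STRIP{\rsh}} = \emptyset$, every left-hand side holds vacuously, and by definition $\PURE{\rsha}$ is the set of all pure formulas over $\VAR(\rsh)$, so every right-hand side holds as well; thus all four biconditionals are trivially true.

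\emph{Consistent case, soundness.} Two facts drive the $\Leftarrow$ directions. First, the added inequalities of $\Lambda$ are valid in every $(\stack,\heap)\in\MODELS{\STRIP{\rsh}}$: if $\PT{x}{\T{a}}$ occurs in $\SPATIAL{\rsh}$ then $\stack(x)\in\DOM(\heap)\subseteq\LOC$, whence $\stack(x)\neq\NIL=\stack(\NIL)$; and if $\PT{x}{\_}$, $\PT{y}{\_}$ are distinct conjuncts then the semantics of $\SEP$ forces their singleton subheaps to be disjoint, so $\stack(x)\neq\stack(y)$. Second, the congruence closure only derives consequences of $\Lambda$ under equational reasoning, which is validity-preserving; hence every formula in $\PURE{\rsha}$ holds in every model of $\STRIP{\rsh}$. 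This gives $\Leftarrow$ for the first two items directly, and, combined with the semantics of $\mapsto$ and $\SEP$, also for allocation (a variable equal to a head is allocated) and for points-to.

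\emph{Consistent case, completeness.} Here I construct a canonical tight model $(\stack^{*},\heap^{*})$ of $\STRIP{\rsh}$: partition $\VAR(\rsh)$ into the classes of the equivalence induced by $\{x=y\mid x=y\in\PURE{\rsha}\}$, send the class of $\NIL$ to $\NIL$ and every other class to a distinct fresh location, and let $\DOM(\heap^{*})$ be exactly $\{\stack^{*}(z)\mid z$ is the head of a points-to assertion of $\SPATIAL{\rsh}\}$, with $\heap^{*}(\stack^{*}(z))=\stack^{*}(\T{a})$ for the corresponding $\PT{z}{\T{a}}$. Consistency of the closure guarantees that this is well-defined and a model: distinct points-to assertions have heads in distinct classes (so $\heap^{*}$ is single-valued and the subheaps are disjoint), no head lies in the class of $\NIL$ (so each head value is a genuine location), $\stack^{*}$ separates exactly the classes of $\PURE{\rsha}$, and $\PURE{\rsh}\subseteq\PURE{\rsha}$, so all pure formulas and all points-to conjuncts of $\rsh$ hold. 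Each $\Rightarrow$ now follows by contraposition, reading off the relevant value in a suitable canonical model: if $x=y\notin\PURE{\rsha}$ then $\stack^{*}(x)\neq\stack^{*}(y)$; if $x$ lies in no head's class then $\stack^{*}(x)\notin\DOM(\heap^{*})$; and if moreover $y$ lies in no class of a target of the (unique) head equal to $x$, then $\stack^{*}(y)\notin\heap^{*}(\stack^{*}(x))$. For the inequality item one instead builds the canonical model over the equivalence additionally merging $x$ and $y$, and checks that $x\neq y\notin\PURE{\rsha}$ (together with congruence) is exactly what is needed for this merge not to identify two distinct heads, not to merge a head into the class of $\NIL$, and not to contradict any inequality of $\PURE{\rsh}$.

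\textbf{Main obstacle.} The delicate part is the inequality item in the completeness direction: one must use a \emph{second} canonical model (the one merging $x$ and $y$) and argue that $x\neq y\notin\PURE{\rsha}$ precisely characterises when this merge is compatible with every spatial and pure constraint; this is where the congruence-closure reading of $\textrm{closure}$ is essential. The remaining verifications—that the canonical model satisfies $\STRIP{\rsh}$ and that the negations of the semantic conditions can be read off it—are routine against Figure~\ref{fig:slsemantics}.
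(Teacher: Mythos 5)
Your proof is correct and follows the same overall strategy as the paper's: pass to the completion $\rsha$, observe that every formula added by $\textrm{closure}(\Lambda)$ is semantically valid in all models of $\STRIP{\rsh}$ (so the models of $\rsh$ and $\rsha$ coincide), and then read the four characterisations off the closure. The difference is one of rigour rather than route: the paper's appendix proof stops at ``by construction of $\PURE{\rsha}$ \dots\ each of the properties is easy to verify,'' whereas you actually supply the nontrivial completeness direction via the canonical quotient model (and the second canonical model merging $x$ and $y$ for the $\MNEQ{\rsh}$ item), which is exactly the argument the paper leaves implicit. Your preliminary remark that $\textrm{closure}$ must be read as a \emph{congruence} closure (propagating equalities into inequalities, e.g.\ deriving $y\neq z$ from $x=y$ and $x\neq z$) is also well taken: under the literal wording of Definition~\ref{def:zoo:completion} the biconditionals for $\MNEQ{\rsh}$ and the inconsistency check would fail on examples like $\EMP:\{x=y,\,x\neq z\}$, so your reading is the one the lemma (and Remark~\ref{rem:closure}) actually requires. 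In short, your write-up is a faithful but substantially more complete version of the paper's argument.
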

In particular, the right-hand side of each of these characterizations is easily computable in
polynomial time once a completion of a symbolic heap has been computed.
\begin{proof}
Let $\rsh = \exists \BV{} ~.~ \SPATIAL{} : \PURE{}$.
Then, for each $(\stack,\heap) \in \MODELS{\rsh}$, we know by the SL semantics (cf. Figure~\ref{fig:slsemantics})
that there exists $\T{v} \in \VAL^{\SIZE{\BV{}}}$ such that $\stack' = \stack[\BV{} \mapsto \T{v}]$,
$\DOM(\heap) = \{ \stack'(x) ~|~ \PT{x}{\T{y}} ~\text{in}~ \SPATIAL{} \}$,
$\heap(\stack'(x)) = \stack'(\T{y})$ for each $\PT{x}{\T{y}}$ in $\SPATIAL{}$, and for each $x \sim y \in \PURE{}$,
$\stack'(x) \sim \stack'(y)$.
The same holds for $\rsha$ except that $\PURE{}$ is substituted by $\textrm{closure}(\Lambda)$
 (although each additional pure formula is already implied by the conditions from above).
Thus, $\MODELS{\rsh} = \MODELS{\rsha}$.
Furthermore, by construction of $\PURE{\rsha} = \textrm{closure}(\Lambda)$, we have $x \MEQ{\rsha} y$ iff $x = y \in \PURE{\rsha}$
and $x \MNEQ{\rsha}$ iff $x \neq y \in \PURE{\rsha}$.
Then, each of the properties in Lemma~\ref{thm:zoo:relationships:characterizations} is easy to verify.
\qed
\end{proof}
\section{Appendix to Section~\ref{sec:zoo}}
%
\subsection{Proof of Lemma~\ref{thm:zoo:track:property}}
\label{app:zoo:track:property}
Note that some statements are broken down into separate lemmas which are
shown immediately afterwards.
\begingroup
\def\thelemma{\ref{thm:zoo:track:property}}
\begin{lemma}
 For all $\alpha \in \POSN$ and all sets
 $A \subseteq \FV{0}{}$, 
 $\PURE{} \in \textnormal{Pure}(\FV{0}{})$,
there is a heap automaton over $\SHCLASSFV{\alpha}$ accepting
 $\TRACK(\alpha,A,\PURE{})$.
\end{lemma}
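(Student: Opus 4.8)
The plan is to exhibit the heap automaton $\HATRACK$ already described in Definition~\ref{def:zoo:track-automaton} and verify the two outstanding obligations: that it is in fact a heap automaton (i.e.\ $\Delta$ is decidable and satisfies the compositionality property of Definition~\ref{def:refinement:automaton}), and that $L(\HATRACK) = \TRACK(\alpha,A,\PURE{})$. Decidability of $\Delta$ is immediate: by Remark~\ref{rem:closure} (and the details in Appendix~\ref{app:zoo:definite}) the definite relationships $\ALLOC{\cdot}$ and $\MSIM{\cdot}$ are computable in polynomial time on reduced symbolic heaps, and $\SHRINK{\sh,\T{q}}$ is a reduced symbolic heap obtained from $\sh$ by a syntactic substitution of each predicate call $\CALLN{i}{}$ by the kernel $\SSIGMA{\CALLN{i}{}}{\T{q}[i]}$ of Definition~\ref{def:zoo:track:kernel}; so for any fixed $\T{q}$ and $\sh$ both biconditionals defining membership in $\Delta$ can be checked effectively. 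The state space $Q = 2^{\FV{0}{}} \times \textnormal{Pure}(\FV{0}{})$ is finite, and $\SIZE{\T{q}} = \NOCALLS{\sh}$ holds by construction.

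The core of the argument is a single lemma about the kernel construction, which I would isolate and prove first: for every symbolic heap $\sh$ with $m$ predicate calls and all reduced symbolic heaps $\rsh_1,\dots,\rsh_m$, if for each $i$ the pair $(B_i,\Lambda_i)$ records exactly the definitely allocated free variables and the definite (in)equalities of $\rsh_i$, then the reduced symbolic heap $\sh[\PS_1/\rsh_1,\dots,\PS_m/\rsh_m]$ and the ``compressed'' heap $\sh[\PS_1/\SSIGMA{\CALLN{1}{}}{(B_1,\Lambda_1)},\dots]$ agree on all definite (in)equalities between, and on definite allocation of, the free variables of $\sh$. This is the substance behind the informal claim that ``the relationships between free variables remain unaffected'' when a call is replaced by its kernel. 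The proof uses the characterizations of the definite relationships via the completion (Lemma~\ref{thm:zoo:relationships:characterizations}): replacing $\rsh_i$ by its kernel preserves the set of pure formulas the completion derives on the shared variables $\FV{i}{}$ (because $\Lambda_i$ by assumption is exactly the closure restricted to those variables, and the kernel adds $\PTS{s}{\NIL}$ exactly for a system of representatives of the definitely-allocated free variables, which induces the same $\neq\NIL$ and pairwise-$\neq$ facts as the original $\rsh_i$ would), and it preserves which of the shared variables are definitely allocated. One must also check that fresh internal variables of $\rsh_i$ versus of the kernel cannot leak any new equalities to $\VAR(\sh)$, which follows from the renaming convention in Definition~\ref{def:symbolic-heaps:unfolding} and the side-condition $\VAR(\sh)\cap\VAR(\rsh)\subseteq\FV{0}{\rsh}$ on replacements.

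Granting this kernel lemma, compositionality follows by a direct unfolding of the definition of $\Delta$: a transition $\MOVE{\HATRACK}{(A_0,\PURE{}_0)}{\sh}{\T{q}}$ holds iff $(A_0,\PURE{}_0)$ agrees with $\SHRINK{\sh,\T{q}}$ on allocation and (in)equalities of free variables; and the base-case transitions $(\EMPTYSEQ,\rsh_i,\T{q}[i])\in\Delta$ hold iff $\T{q}[i] = (\ALLOC{\rsh_i}\cap\FV{0}{\rsh_i},\ (\MSIM{\rsh_i}\text{-facts of }\rsh_i))$, using that $\SHRINK{\rsh_i,\EMPTYSEQ} = \rsh_i$. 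Plugging the latter identification into the former and invoking the kernel lemma shows the left side of the compositionality biconditional is equivalent to $(\EMPTYSEQ,\sh[\PS_1/\rsh_1,\dots,\PS_m/\rsh_m],(A_0,\PURE{}_0))\in\Delta$, which is its right side. Finally, $L(\HATRACK)=\TRACK(\alpha,A,\PURE{})$ is read off from $F = \{(A,\PURE{})\}$ together with $\SHRINK{\rsh,\EMPTYSEQ}=\rsh$: a reduced $\rsh$ is accepted iff $\OMEGA{\HATRACK}{(A,\PURE{})}{\rsh}$ iff $A$ and $\PURE{}$ coincide with the definitely allocated variables and definite (in)equalities of $\rsh$, which is exactly the defining condition of $\TRACK(\alpha,A,\PURE{})$.

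I expect the main obstacle to be the kernel lemma, specifically the careful bookkeeping that replacing a call's unfolding by its kernel neither loses nor spuriously introduces inequalities among the free variables of $\sh$ once everything is substituted together — in particular handling the interaction between the $x\neq\NIL$ and pairwise-$\neq$ facts that allocated points-to assertions contribute (these must match between $\rsh_i$ and $\SSIGMA{\CALLN{i}{}}{(B_i,\Lambda_i)}$, which is why the kernel allocates exactly a minimal system of representatives $\textrm{min}(B,\Lambda)$) and the transitive closure across the boundary between $\FV{i}{}$-parameters and the rest of $\sh$. The remaining steps are routine once this is in place.
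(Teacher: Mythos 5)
Your proposal is correct and follows the same overall architecture as the paper's proof: exhibit $\HATRACK$ from Definition~\ref{def:zoo:track-automaton}, read off $L(\HATRACK)=\TRACK(\alpha,A,\PURE{})$ from $F=\{(A,\PURE{})\}$ and $\SHRINK{\rsh,\EMPTYSEQ}=\rsh$, and reduce compositionality to a congruence lemma stating that $\sh[\PS_1/\rsh_1,\dots,\PS_m/\rsh_m]$ and $\SHRINK{\sh,\T{q}}$ agree on definite allocation and definite (in)equalities of the variables of $\sh$ (the paper's Lemmas~\ref{thm:zoo:track:compositional} and~\ref{thm:zoo:track:congruence}). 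The one genuine divergence is in how you propose to establish that congruence: you argue syntactically, via the completion/closure characterization of the definite relationships (Lemma~\ref{thm:zoo:relationships:characterizations}), showing that the kernel $\SSIGMA{\CALLN{i}{}}{(B_i,\Lambda_i)}$ contributes exactly the same pure-formula facts on the shared parameters as $\rsh_i$ does. The paper instead proves it semantically (Lemmas~\ref{thm:zoo:track:congruence:auxiliary}, \ref{thm:zoo:track-congruence-right}, \ref{thm:zoo:track-congruence-left}): for every model of the compressed heap one constructs a model of the uncompressed heap with the same stack and the same allocation pattern on the stack's image, and vice versa, then iterates over the $m$ calls one at a time. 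Both routes work; the semantic one avoids reasoning about closure propagation across the call boundary (the point you correctly flag as the delicate step), at the cost of an explicit heap-surgery construction, while your syntactic route stays closer to the polynomial-time decision procedure of Remark~\ref{rem:closure} and arguably makes the decidability of $\Delta$ and the congruence argument two instances of the same computation. Your identification of $\textrm{min}(B,\Lambda)$ as the reason the kernel reproduces exactly the right pairwise-$\neq$ and $\neq\NIL$ facts is precisely the right observation.
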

\addtocounter{theorem}{-1}
\endgroup
\begin{proof}
  It suffices to show that the heap automaton $\HATRACK$ constructed in
  Definition~\ref{def:zoo:track-automaton} satisfies the compositionality
  property and accepts $\TRACK(\alpha,A,\PURE{})$.
  We first show the latter, i.e., $L(\HATRACK) = \TRACK(\alpha,A,\PURE{})$.
  Let $\rsh \in \RSL{}{\SRDCLASSFV{\alpha}}$. Then
  \begin{align*}
                       & \rsh \in L(\HATRACK) \\
     ~\Leftrightarrow~ & \left[ \text{Definition}~L(\HATRACK) \right] \\
                       & \exists (A',\Pi') \in F_{\HATRACK} ~.~ \OMEGA{\HATRACK}{(A',\Pi')}{\rsh} \\
     ~\Leftrightarrow~ & \left[ F_{\HATRACK} = \{(A,\Pi)\} \right] \\
                       & \OMEGA{\HATRACK}{(A,\Pi)}{\rsh} \\
     ~\Leftrightarrow~ & \left[ \text{Definition}~\Delta \right] \\
                       & \forall x,y \in \FV{0}{\rsh} ~.~
                           x \in A \leftrightarrow x \in \ALLOC{\SHRINK{\rsh,\EMPTYSEQ}} \\
                       & \qquad \text{and}~ (x \sim y) \in \Pi \leftrightarrow x \sim_{\SHRINK{\rsh,\EMPTYSEQ}} y \\
     ~\Leftrightarrow~ & \left[ \text{Definition}~\ref{def:zoo:track-automaton}:~\SHRINK{\rsh,\EMPTYSEQ} = \rsh  \right] \\
                       & \forall x,y \in \FV{0}{\rsh} ~.~
                           x \in A \leftrightarrow x \in \ALLOC{\rsh} \\
                       & \qquad \text{and}~ (x \sim y) \in \Pi \leftrightarrow x \sim_{\rsh} y \\
     ~\Leftrightarrow~ & \left[ \text{Definition}~\TRACK(\alpha,A,\Pi) \right] \\
                       & \rsh \in \TRACK(\alpha,A,\Pi).
  \end{align*}
  The compositionality property of $\HATRACK$ is verified separately in
  Lemma~\ref{thm:zoo:track:compositional},
  which is proven in the remainder of this section.
  \qed
\end{proof}
\begin{lemma} \label{thm:zoo:track:compositional}
 Let $\sh \in \SL{}{\SRDCLASSFV{\alpha}}$ with $\NOCALLS{\sh} = m \geq 0$.
 Moreover, for each $1 \leq i \leq m$, let $\rsh_i \in \RSL{}{\SRDCLASSFV{\alpha}}$ with $\SIZE{\FV{0}{\rsh_i}} = \SIZE{\FV{i}{\sh}}$.
 Then, for $\rsh = \sh[\PS_1 / \rsh_1,\ldots \PS_m / \rsh_m]$, we have $\OMEGA{\HATRACK}{(A_0,\Pi_0)}{\rsh}$ if and only if there exist
 $(A_1,\Pi_1), \ldots, (A_m,\Pi_m) \in Q_{\HATRACK}$ such that
  \begin{align*}
  \MOVE{\HATRACK}{(A_0,\Pi_0)}{\sh}{(A_1,\Pi_1) \ldots (A_m,\Pi_m)}
  \end{align*}
 and, for each $1 \leq i \leq m$, we have $\OMEGA{\HATRACK}{(A_i,\Pi_i)}{\rsh_i}$.
\end{lemma}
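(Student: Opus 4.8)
The plan is to reduce the compositionality property of $\HATRACK$ to a single combinatorial fact about the \emph{kernel} construction of Definition~\ref{def:zoo:track:kernel}: as far as the definite relationships among the \emph{free} variables of the surrounding symbolic heap are concerned, replacing a predicate call by a reduced symbolic heap has the same effect as replacing it by the kernel heap that records that reduced heap's definitely allocated free variables and its definite (in)equalities. Concretely, I would first establish the claim $(\star)$: for reduced symbolic heaps $\rsh'$ and $\rsh''$ of the same arity that have the same definitely allocated free variables \emph{and} the same definite (in)equalities among their free variables, and for every symbolic heap $\sh$ with exactly one predicate call $\CALLN{1}{}$ of that arity, the reduced symbolic heaps $\sh\subst{\PS_1}{\rsh'}$ and $\sh\subst{\PS_1}{\rsh''}$ induce the same $\MEQ{\cdot}$, $\MNEQ{\cdot}$ and $\ALLOC{\cdot}$ on $\FV{0}{\sh}$. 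Since substitutions into distinct predicate calls commute, a telescoping argument (replacing the $m$ calls of an arbitrary $\sh$ one at a time) lifts $(\star)$ to the simultaneous replacement of all calls. Together with the small \emph{idempotence} observation that $\SSIGMA{\CALLN{i}{}}{(A_i,\Pi_i)}$ again has $A_i$ as its set of definitely allocated free variables and $\Pi_i$ as its set of definite (in)equalities --- because $\textrm{min}(A_i,\Pi_i)$ is a transversal of $A_i$ modulo $\Pi_i$, and $\Pi_i$, being \emph{all} of $\rsh_i$'s definite relationships, already contains the inequalities those representatives force --- this yields that $\rsh = \sh[\PS_1/\rsh_1,\ldots,\PS_m/\rsh_m]$ and $\SHRINK{\sh,\T{q}}$, with $\T{q}[i] = (A_i,\Pi_i)$ where $A_i$ and $\Pi_i$ are the definitely allocated free variables and the definite (in)equalities of $\rsh_i$, agree on all definite relationships between the free variables of $\sh$.

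Granting this, the lemma is bookkeeping against the transition relation of Definition~\ref{def:zoo:track-automaton}. For the ``only if'' direction, assume $\OMEGA{\HATRACK}{(A_0,\Pi_0)}{\rsh}$ and let $(A_i,\Pi_i)$ be the definitely allocated free variables and the definite (in)equalities of $\rsh_i$; since $\SHRINK{\rsh_i,\EMPTYSEQ} = \rsh_i$, the definition of $\Delta$ immediately gives $\OMEGA{\HATRACK}{(A_i,\Pi_i)}{\rsh_i}$. Unwinding $\OMEGA{\HATRACK}{(A_0,\Pi_0)}{\rsh}$ likewise (using $\SHRINK{\rsh,\EMPTYSEQ} = \rsh$) shows that $(A_0,\Pi_0)$ records exactly the definite relationships of $\rsh$ over $\FV{0}{\sh}$; by the lifted $(\star)$ these coincide with those of $\SHRINK{\sh,\T{q}}$, which is precisely the content of $\MOVE{\HATRACK}{(A_0,\Pi_0)}{\sh}{(A_1,\Pi_1)\ldots(A_m,\Pi_m)}$. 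The ``if'' direction is the mirror image: from $\OMEGA{\HATRACK}{(A_i,\Pi_i)}{\rsh_i}$ the transition relation (again using $\SHRINK{\rsh_i,\EMPTYSEQ}=\rsh_i$) forces each $(A_i,\Pi_i)$ to be exactly those canonical values, since the automaton assigns at most one state to a reduced heap; the assumed transition $\MOVE{\HATRACK}{(A_0,\Pi_0)}{\sh}{(A_1,\Pi_1)\ldots(A_m,\Pi_m)}$ then says that $(A_0,\Pi_0)$ matches the definite relationships of $\SHRINK{\sh,\T{q}}$ on $\FV{0}{\sh}$, hence, by the lifted $(\star)$, also those of $\rsh$ --- that is, $\OMEGA{\HATRACK}{(A_0,\Pi_0)}{\rsh}$.

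To prove $(\star)$ it suffices, by transitivity, to treat the case where $\rsh''$ is the kernel heap $\SSIGMA{\CALLN{1}{}}{(A',\Pi')}$ built from the common data $A',\Pi'$ (which has the required data by the idempotence observation); so fix a reduced heap $\rsh'$ with definitely allocated free variables $A'$ and definite (in)equalities $\Pi'$, and compare $\sh\subst{\PS_1}{\rsh'}$ with $\sh\subst{\PS_1}{\SSIGMA{\CALLN{1}{}}{(A',\Pi')}}$ using the characterisation of definite relations via \emph{completions} (Remark~\ref{rem:closure} and Lemma~\ref{thm:zoo:relationships:characterizations}): $\MEQ{\cdot}$, $\MNEQ{\cdot}$ and $\ALLOC{\cdot}$ are read off from the reflexive, symmetric, transitive closure of the pure part after adjoining $x \neq \NIL$ for every points-to left-hand side $x$ and $x \neq y$ for every pair of distinct points-to left-hand sides. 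The three ingredients are: (i) in $\rsh'$ a free variable is definitely allocated iff it is $\Pi'$-equivalent to some element of $\textrm{min}(A',\Pi')$, and in $\SSIGMA{\CALLN{1}{}}{(A',\Pi')}$ these representatives are exactly the points-to left-hand sides, so $\rsh'$ and the kernel heap force the same free variables of $\sh$ to be allocated (hence the same ones to be $\neq \NIL$); (ii) $\Pi'$ is already closed and already contains every inequality the completion of the kernel heap would add, so the pure formulas the kernel heap contributes are exactly $\Pi'$; (iii) hence the completions of $\sh\subst{\PS_1}{\rsh'}$ and $\sh\subst{\PS_1}{\SSIGMA{\CALLN{1}{}}{(A',\Pi')}}$ are, over $\FV{0}{\sh}$, generated from the same data --- the pure formulas and points-to of $\SPATIAL{\sh}$, the pure formulas of $\sh$, and $\Pi'$ with its induced allocation inequalities --- and therefore have equal restrictions to $\FV{0}{\sh}$; the existentially quantified variables of $\rsh'$ add nothing, because every definite (in)equality they force between free variables of $\rsh'$ already lies in $\Pi'$. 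The degenerate case where $\rsh'$ is unsatisfiable (so $\Pi'$ is the full, contradictory set of pure formulas) is handled separately --- both completions then collapse to all pure formulas over their variables and the claim is immediate. The main obstacle is precisely step (iii): one has to verify that the two closures really agree once it is tracked how aliasing of the existential variables inside $\rsh'$ propagates to $\FV{0}{\sh}$; everything else is unwinding definitions.
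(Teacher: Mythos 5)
Your proposal is correct and shares the paper's overall architecture --- reduce compositionality to a congruence statement for a single predicate-call replacement, lift it to all $m$ calls by a telescoping argument (substitutions into distinct calls commute), and then do the bookkeeping against $\Delta_{\HATRACK}$ using $\SHRINK{\rsh_i,\EMPTYSEQ}=\rsh_i$ and the fact that the automaton assigns a unique state to each reduced heap. Where you genuinely diverge is in how the single-call congruence is established. The paper proves its Lemma~\ref{thm:zoo:track:congruence:auxiliary} \emph{semantically}: by structural induction on the reduced symbolic heap it constructs, for every model of $\sh\subst{\PS_1}{\SSIGMA{\CALLN{1}{}}{(A,\Lambda)}}$, a model of $\sh\subst{\PS_1}{\rsh}$ with the same stack and matching allocation status on the stack's domain (Lemmas~\ref{thm:zoo:track-congruence-right}, \ref{thm:zoo:track-congruence-left}, \ref{thm:zoo:track:heap-existence}), and then reads off the universally-quantified definite relations. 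You instead argue \emph{syntactically} via the completion/closure characterization of Remark~\ref{rem:closure} and Lemma~\ref{thm:zoo:relationships:characterizations}, showing the two closures restrict identically to $\FV{0}{\sh}$; you also state the claim more generally for any two reduced heaps with the same tracking data and specialize via idempotence of the kernel. Your route is more elementary in that it avoids constructing witness heaps, but it shifts the burden onto exactly the step you flag: an interpolation-style argument that any closure chain between variables of $\sh$ entering the plugged-in heap must cross through its parameters, so that the bound variables of $\rsh'$ contribute nothing beyond $\Pi'$ (this holds because $\VAR(\sh)\cap\VAR(\rsh')\subseteq\FV{0}{\rsh'}$, and it is the syntactic analogue of what the paper proves as equality propagation in Lemma~\ref{obs:zoo:establishment:equality-propagation} for the establishment automaton). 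The paper's semantic route gets this for free from the coincidence lemma; yours would need that propagation argument spelled out, but it is true and the resulting proof would be sound.
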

\begin{proof}
  Recall from Definition~\ref{def:zoo:track-automaton} that
  \begin{align*}
     \SHRINK{\sh,\T{q}} ~\DEFEQ~ &
        \sh\left[\PS_1 / \SSIGMA{\CALLN{1}{}}{\T{q}[1]}, \ldots, \PS_m / \SSIGMA{\CALLN{m}{}}{\T{q}[m]}\right].
  \end{align*}
  Moreover, for each $1 \leq i \leq m$, let
  \begin{align*}
    A_i ~\DEFEQ~ & \{ \PROJ{\FV{0}{}}{\ell} ~|~ \PROJ{\FV{0}{\rsh_i}}{\ell} \in \ALLOC{\rsh_i} \},~\text{and} \\
    \Pi_i ~\DEFEQ~ & \{ \PROJ{\FV{0}{}}{\ell} \sim \PROJ{\FV{0}{}}{k} ~|~ \PROJ{\FV{0}{\rsh_i}}{\ell} \MSIM{\rsh_i} \PROJ{\FV{0}{\rsh_i}}{k} \}.
  \end{align*}
  Then
  \begin{align*}
                      & \OMEGA{\HATRACK}{(A_0,\Pi_0)}{\rsh} \\
    ~\Leftrightarrow~ & \left[ \text{Definition of}~\Delta~\text{and}~ \SIZE{\CALLS{\rsh}} = 0  \right] \\
                      & \forall x,y \in \FV{0} ~.~ x \in A_0 ~\text{iff}~ x^{\sh} \in \ALLOC{\rsh} \\
                      & \qquad \text{and}~ x \sim y \in \Pi_0 ~\text{iff}~ x^{\sh} \MSIM{\rsh} y^{\sh} \\
    ~\Leftrightarrow~ & \left[ \text{choice of}~A_i~\text{and}~\Pi_i~\text{for each}~1 \leq i \leq m \right] \\
                      & \exists \T{q} = (A_1,\Pi_1) \ldots (A_m,\Pi_m) \in Q_{\HATRACK}^{m} ~.~ \forall x,y \in \FV{0} ~.~ \\
                      & \qquad x \in A_0 ~\text{iff}~ x^{\sh} \in \ALLOC{\rsh} \\
                      & \qquad \text{and}~ x \sim y \in \Pi_0 ~\text{iff}~ x^{\sh} \MSIM{\rsh} y^{\sh} \\
                      & \qquad \text{and}~ \forall 1 \leq i \leq m ~.~ \OMEGA{\HATRACK}{(A_i,\Pi_i)}{\rsh_i} \\
    ~\Leftrightarrow~ & \big[ \text{Lemma}~\ref{thm:zoo:track:congruence} \big] \\
                      & \exists \T{q} = (A_1,\Pi_1) \ldots (A_m,\Pi_m) \in Q_{\HATRACK}^{m} ~.~ \forall x,y \in \FV{0} ~.~ \\
                      & \qquad x \in A_0 ~\text{iff}~ x^{\sh} \in \ALLOC{\SHRINK{\sh,\T{q}}} \\
                      & \qquad \text{and}~  x \sim y \in \Pi_0 ~\text{iff}~ x^{\sh} \MSIM{\SHRINK{\sh,\T{q}}} y^{\sh} \\
                      & \qquad \text{and}~ \forall 1 \leq i \leq m ~.~ \OMEGA{\HATRACK}{(A_i,\Pi_i)}{\rsh_i} \\
    ~\Leftrightarrow~ & \left[ \text{Definition of}~\Delta  \right] \\
                      & \exists \T{q} = (A_1,\Pi_1) \ldots (A_m,\Pi_m) \in Q_{\HATRACK}^{m} ~.~ \\
                      & \qquad \MOVE{A}{(A_0,\Pi_0)}{\sh}{(A_1,\Pi_1) \ldots (A_m,\Pi_m)} \\
                      & \qquad \text{and}~ \forall 1 \leq i \leq m ~.~ \OMEGA{\HATRACK}{(A_i,\Pi_i)}{\rsh_i}. 
  \end{align*}
  \qed
\end{proof}
It remains to show that Lemma~\ref{thm:zoo:track:congruence}.
The crux of the proof of Lemma~\ref{thm:zoo:track:congruence} is the following observation.
\begin{lemma} \label{thm:zoo:track:congruence:auxiliary}
 Let $\sh = \exists \BV{} . \SPATIAL{} \SEP \CALLN{1}{} : \PURE{} \in \SL{}{\SRDCLASSFV{\alpha}}$.
 Moreover, let $\rsh \in \RSL{}{}$ with $\SIZE{\FV{0}{\rsh}} = \SIZE{\FV{1}{\sh}}$, $A = \{ y \in \FV{0}{} ~|~ y^{\rsh} \in \ALLOC{\rsh} \}$ and $\Lambda = \{ x \sim y ~|~ x^{\rsh} \MSIM{\rsh} y^{\rsh} \}$.
 Then, for each $x,y \in \VAR(\sh)$, we have
 \begin{align*}
    x \MSIM{\sh[\PS_1 / \rsh]} y \quad\text{iff}&\quad x \MSIM{\sh\left[\PS_1 / \SSIGMA{\CALLN{1}{}}{(A,\Lambda)}\right]} y, \quad \text{and}~  \\
    x \in \ALLOC{\sh[\PS_1 / \rsh]} \quad\text{iff}&\quad x \in \ALLOC{\sh\left[\PS_1 / \SSIGMA{\CALLN{1}{}}{(A,\Lambda)}\right]}~,
 \end{align*}
 where $\SSIGMA{\CALLN{1}{}}{(A,\Lambda)}$ is defined in Section~\ref{sec:zoo} (above Definition~\ref{def:zoo:track-automaton}).
\end{lemma}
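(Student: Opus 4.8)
The plan is to show that replacing a single predicate call $\CALLN{1}{}$ either by an arbitrary reduced symbolic heap $\rsh$ or by its ``kernel'' $\SSIGMA{\CALLN{1}{}}{(A,\Lambda)}$ yields reduced symbolic heaps with the same definite equalities, inequalities and allocation facts on all variables of $\sh$. The key conceptual point is that $A$ and $\Lambda$ record \emph{exactly} the definite allocation and the definite (in)equalities among the free variables of $\rsh$, and that these are the only aspects of $\rsh$ that can influence definite relationships of variables \emph{outside} $\rsh$'s existentially quantified part after substitution. First I would unfold the definition of predicate replacement $\sh\subst{\PS_1}{\rsh}$ (from the Unfoldings paragraph in Section~\ref{sec:symbolic-heaps}): both $\sh\subst{\PS_1}{\rsh}$ and $\sh\bigl[\PS_1/\SSIGMA{\CALLN{1}{}}{(A,\Lambda)}\bigr]$ consist of the spatial part $\SPATIAL{\sh}$ together with the (renamed) spatial part of the substituted heap, plus the union of the pure sets, with the free variables of the substituted heap identified with the parameters $\FV{1}{\sh}$ of the call.

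The main technical step is to use the characterizations of definite relationships via the \emph{completion} operation (Lemma~\ref{thm:zoo:relationships:characterizations} in Appendix~\ref{app:zoo:definite}): $x \MEQ{\rho} y$ iff $x = y$ lies in the closure of the pure formulas of $\rho$ augmented with the canonical inequalities between allocated left-hand sides, and analogously for $\MNEQ{}$ and $\ALLOC{}$. So I would compute the completion of both sides. On the $\SSIGMA{}$-side, by Definition~\ref{def:zoo:track:kernel}, the only points-to assertions coming from the substituted heap are $\PTS{\PROJ{\FV{0}{\sh}}{i}}{\NIL}$ for $i$ ranging over $\textrm{min}(A,\Lambda)$, i.e.\ exactly one representative per $\Lambda$-equivalence class of allocated free variables of $\rsh$; and the added pure set is $\Lambda$ itself. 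On the $\rsh$-side, the substituted heap contributes its own (renamed) spatial formula and pure set. The claim is that, modulo the closure operation, these two contributions are interchangeable as far as variables in $\VAR(\sh)$ are concerned: a definite equality/inequality/allocation of a variable $x\in\VAR(\sh)$ survives into $\sh\subst{\PS_1}{\rsh}$ precisely when it is already forced by $\SPATIAL{\sh}$, $\PURE{\sh}$, $A$, and $\Lambda$ — because any new allocation fact about a parameter $\PROJ{\FV{1}{\sh}}{i}$ is recorded in $A$, any new definite (in)equality among parameters is recorded in $\Lambda$, and the existentially quantified variables $\BV{\rsh}$ are, by the variable-disjointness condition on replacement, disjoint from $\VAR(\sh)$ and hence cannot be identified with any $x\in\VAR(\sh)$ by closure unless such an identification is already derivable from $\Lambda$ and the parameters. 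This is where I expect to do the real work: a careful argument that closure on the $\rsh$-side adds no fact about $\VAR(\sh)$-variables that is not added by closure on the $\SSIGMA{}$-side, and vice versa. The ``vice versa'' direction is essentially free, since $\SSIGMA{}$ only introduces facts that are literally present in $A$ and $\Lambda$, which by hypothesis hold in $\rsh$ and therefore (after renaming) in $\sh\subst{\PS_1}{\rsh}$ as well.

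The main obstacle will be handling the interaction between the fresh allocation assertions $\PTS{\PROJ{\FV{0}{\sh}}{i}}{\NIL}$ introduced by the kernel and the canonical inequalities that the completion adds between distinct allocated locations. One must check that choosing one minimal representative per $\Lambda$-class (rather than allocating every variable in $A$) does not lose any inequality: if two allocated free variables of $\rsh$ are definitely unequal, this is already recorded in $\Lambda$, so it is not lost; and if two $\FV{1}{\sh}$-parameters are identified in $\sh\subst{\PS_1}{\rsh}$ with allocated variables that $\rsh$ keeps separate but $\sh$'s own pure part forces equal, the kernel still behaves correctly because $\textrm{min}(B,\Lambda)$ is taken with respect to $\Lambda$, which subsumes those $\rsh$-internal equalities. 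I would organize the proof as a chain of equivalences between the closure on each side, and, once the congruence at the level of a single call is established, Lemma~\ref{thm:zoo:track:congruence} follows by a straightforward induction over the $m$ predicate calls of $\sh$, replacing them one at a time.
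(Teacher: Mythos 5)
Your proposal is sound but takes a genuinely different route from the paper. The paper proves this lemma \emph{semantically}: it unfolds the definitions of $\MSIM{\cdot}$ and $\ALLOC{\cdot}$ as universally quantified statements over models and then establishes, by structural induction on $\rsh$ (Lemmas~\ref{thm:zoo:track-congruence-right} and~\ref{thm:zoo:track-congruence-left}, packaged via Lemma~\ref{thm:zoo:track:heap-existence}), that every state satisfying $\sh\left[\PS_1/\SSIGMA{\CALLN{1}{}}{(A,\Lambda)}\right]$ can be converted into a state with the same stack satisfying $\sh[\PS_1/\rsh]$, preserving which stack-visible locations are allocated, and vice versa; the definite relationships then transfer by pure logic. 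You instead work \emph{syntactically} through the completion characterization of Lemma~\ref{thm:zoo:relationships:characterizations}, reducing the claim to the statement that the closures of the two substituted formulas agree on all facts over $\VAR(\sh)$. This is legitimate and arguably more elementary and closer to the algorithmic content of Remark~\ref{rem:closure}, but it shifts the real work onto an interpolation property of the closure: any equality chain or derived inequality between $\VAR(\sh)$-variables that passes through the substituted heap must be mediated by facts about the parameters, which are exactly what $A$ and $\Lambda$ record. That property is essentially Lemma~\ref{obs:zoo:establishment:equality-propagation} (which the paper itself only sketches), and your plan correctly identifies it as the crux; to make the argument complete you would also need to handle the inconsistency corner case of Definition~\ref{def:zoo:completion} explicitly (both closures must become contradictory simultaneously --- which holds because a contradiction arising from the interaction of $\PURE{\sh}$ with allocation inside $\rsh$ is reflected in a disequality already present in $\Lambda$). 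The paper's model-based route has the advantage that it generalizes with little change to the reachability version (Lemma~\ref{thm:zoo:reachability:congruence:auxiliary}), where a purely closure-based argument would be more delicate. Your reduction of Lemma~\ref{thm:zoo:track:congruence} to the single-call case by replacing calls one at a time coincides with the paper's.
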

Proving this lemma in turn relies on some auxiliary results, in particular, we need 
Lemma~\ref{thm:zoo:track-congruence-right} and Lemma~\ref{thm:zoo:track-congruence-left}, which are proven at the end of this section.
\begin{proof}
 \begin{align*}
                     & x \MSIM{\sh[\PS_1 / \rsh]} y \\
   ~\Leftrightarrow~ & \left[ \text{Definition of}~\MSIM{\sh[\PS_1 / \rsh]} \right] \\
                     & \forall (\stack,\heap) ~.~ \stack,\heap \SAT{\emptyset} \sh[\PS_1 / \rsh] ~\text{implies}~ \stack(x) \sim \stack(y) \\
   ~\Leftrightarrow~ & \left[ \text{logic} \right] \\
                     & \forall (\stack,\heap) ~.~ (\exists \heap' ~.~ \stack,\heap' \SAT{\emptyset} \sh[\PS_1 / \rsh]) ~\text{implies}~ \stack(x) \sim \stack(y) \\
   ~\Rightarrow~ & \left[ \text{Lemma}~\ref{thm:zoo:track:heap-existence} \right] \\
                     & \forall (\stack,\heap) ~.~ (\exists \heap' ~.~ \stack,\heap' \SAT{\emptyset} \sh[\PS_1 / \rsh]) ~\text{implies}~ \stack(x) \sim \stack(y) \\
                     & ~\text{and}~ \stack,\heap \SAT{\emptyset} \sh\left[\PS_1 / \SSIGMA{\CALLN{1}{}}{(A,\Lambda)}\right] \\
                     & \qquad ~\text{implies}~ (\exists \heap' ~.~ \stack,\heap' \SAT{\emptyset} \sh[\PS_1 / \rsh]) \\
       ~\Rightarrow~ & \left[ B \rightarrow C ~\wedge~ A \rightarrow B ~\text{implies}~ A \rightarrow C \right] \\
                     & \forall (\stack,\heap) ~.~ \stack,\heap \SAT{\emptyset} \sh\left[\PS_1 / \SSIGMA{\CALLN{1}{}}{(A,\Lambda)}\right] ~\text{implies}~ \stack(x) \sim \stack(y) \\
       ~\Leftrightarrow~ & \left[ \text{Definition of}~\MSIM{\sh\left[\PS_1 / \SSIGMA{\CALLN{1}{}}{(A,\Lambda)}\right]} \right] \\
                     & x \MSIM{\sh\left[\PS_1 / \SSIGMA{\CALLN{1}{}}{(A,\Lambda)}\right]} y.
 \end{align*}
 Furthermore, we obtain
 \begin{align*}
                     & x \in \ALLOC{\sh[\PS_1 / \rsh]} \\
   ~\Leftrightarrow~ & \left[ \text{Definition of}~\ALLOC{\sh[\PS_1 / \rsh]} \right] \\
                     & \forall (\stack,\heap) ~.~ \stack,\heap \SAT{\emptyset} \sh[\PS_1 / \rsh] ~\text{implies}~ \stack(x) \in \DOM(\heap) \\
   ~\Leftrightarrow~ & \left[ \text{logic} \right] \\
                     & \forall (\stack,\heap) ~.~ (\exists \heap' ~.~ \stack,\heap' \SAT{\emptyset} \sh[\PS_1 / \rsh] \\
                     & \qquad \text{and}~ \forall y \in \DOM(\stack) ~.~ \stack(y) \in \DOM(\heap) ~\text{iff}~ \stack(y) \in \DOM(\heap')) \\
                     & \quad \text{implies}~ \stack(x) \in \DOM(\heap) \\
   ~\Rightarrow~ & \left[ \text{Lemma}~\ref{thm:zoo:track:heap-existence} \right] \\
                     & \forall (\stack,\heap) ~.~ (\exists \heap' ~.~ \stack,\heap' \SAT{\emptyset} \sh[\PS_1 / \rsh] \\
                     & \qquad \text{and}~ \forall y \in \DOM(\stack) ~.~ \stack(y) \in \DOM(\heap) ~\text{iff}~ \stack(y) \in \DOM(\heap')) \\
                     & \qquad \quad \text{implies}~ \stack(x) \in \DOM(\heap) \\
                     & ~\text{and}~ \stack,\heap \SAT{\emptyset} \sh\left[\PS_1 / \SSIGMA{\CALLN{1}{}}{(A,\Lambda)}\right] \\
                     & \quad \text{implies}~ (\exists \heap' ~.~ \stack,\heap' \SAT{\emptyset} \sh[\PS_1 / \rsh] \\
                     & \qquad \text{and}~ \forall y \in \DOM(\stack) ~.~ \stack(y) \in \DOM(\heap) ~\text{iff}~ \stack(y) \in \DOM(\heap')) \\
       ~\Rightarrow~ & \left[ Y \rightarrow Z ~\wedge~ X \rightarrow Y ~\text{implies}~ X \rightarrow Z \right] \\
                     & \forall (\stack,\heap) ~.~ \stack,\heap \SAT{\emptyset} \sh\left[\PS_1 / \SSIGMA{\CALLN{1}{}}{(A,\Lambda)}\right] \\
                     & \qquad \text{implies}~ \stack(x) \in \DOM(\heap) \\
       ~\Leftrightarrow~ & \left[ \text{Definition of}~\ALLOC{\sh\left[\PS_1 / \SSIGMA{\CALLN{1}{}}{(A,\Lambda)}\right]} \right] \\
                     & x \in \ALLOC{\sh\left[\PS_1 / \SSIGMA{\CALLN{1}{}}{(A,\Lambda)}\right]}.
 \end{align*}
 The proof of the converse direction is analogous, where the main difference is consists of using the property
 \begin{align*}
         & \stack,\heap ~\SAT{\emptyset}~ \sh[\PS_1 / \rsh] ~\text{implies}~\exists \heap' ~.~ \stack,\heap' \SAT{\emptyset}
           \sh\left[\PS_1 / \SSIGMA{\CALLN{1}{}}{(A,\Lambda)}\right] \\
         & \qquad ~\text{and}~ \forall y \in \DOM(\stack) ~.~
           \stack(y) \in \DOM(h) ~\text{iff}~ \stack(y) \in \DOM(\heap')
 \end{align*}
 instead of Lemma~\ref{thm:zoo:track:heap-existence}.
 This property is shown analogous to Lemma~\ref{thm:zoo:track:heap-existence} except that we apply Lemma~\ref{thm:zoo:track-congruence-left}
 instead of Lemma~\ref{thm:zoo:track-congruence-right}.
 \qed
\end{proof}
\begin{lemma} \label{thm:zoo:track:congruence}
 Let $\sh \in \SHCLASSFV{\alpha}$ with $\NOCALLS{\sh} = m \geq 0$.
 For each $1 \leq i \leq m$, let $\rsh_i \in \RSHCLASSFV{\alpha}$ with $\NOFV{\rsh_i} = \SIZE{\FV{i}{\sh}}$, $A_i = \{ y \in \FV{0} ~|~ y^{\rsh_i} \in \ALLOC{\rsh_i} \}$ and $\PURE{}_i = \{ x \sim y ~|~ x^{\rsh_i} \MSIM{\rsh_i} y^{\rsh_i} \}$.
 Moreover, let
 \begin{align*}
   \T{q} ~\DEFEQ~ & (A_1,\PURE{}_1) \ldots (A_m,\PURE{}_m), \\
   \rsh ~\DEFEQ~ & \sh\left[\PS_1 / \rsh_1, \ldots, \PS_m / \rsh_m\right],~\text{and} \\
   \SHRINK{\sh,\T{q}} ~\DEFEQ~ & \sh\left[\PS_1 / \SSIGMA{\CALLN{1}{}}{\T{q}[1]}, \ldots, \PS_m / \SSIGMA{\CALLN{m}{}}{\T{q}[m]}\right].
 \end{align*}
 Then, for each $x,y \in \VAR(\sh)$, we have
 \begin{align*}
   & x \in \ALLOC{\rsh} ~\text{iff}~ x \in \ALLOC{\SHRINK{\sh,\T{q}}}, ~\text{and} \\
   & x \MSIM{\rsh} y ~\text{iff}~ x \MSIM{\SHRINK{\sh,\T{q}}} y. 
 \end{align*}
\end{lemma}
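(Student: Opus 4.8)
The plan is to prove Lemma~\ref{thm:zoo:track:congruence} by induction on the number $m$ of predicate calls of $\sh$, letting the single-call statement Lemma~\ref{thm:zoo:track:congruence:auxiliary} do the actual work in the inductive step. First I would record a routine auxiliary fact: simultaneous replacement of all predicate calls coincides---up to the renaming of existential variables to fresh ones adopted in Definition~\ref{def:symbolic-heaps:unfolding}---with iterated replacement performed in any order. This holds because distinct calls occupy distinct, disjoint positions, and each replacement only substitutes the free variables of an inner heap by parameters while introducing fresh bound variables, so no capture can occur. In particular, for any split of $\{1,\ldots,m\}$, first substituting one group of calls and then the remaining ones yields a heap equal to $\sh[\PS_1/\cdot,\ldots,\PS_m/\cdot]$, and every intermediate heap arising this way contains $\VAR(\sh)$ among its variables.

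The base case $m=0$ is immediate: then $\sh$ is reduced, $\T{q}=\EMPTYSEQ$, and $\rsh = \sh = \SHRINK{\sh,\EMPTYSEQ}$, so both equivalences are trivial.

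For the inductive step I would assume the claim for heaps with fewer than $m\geq 1$ calls and let $\sh$ have calls $\CALLN{1}{},\ldots,\CALLN{m}{}$. Put $\hat{\sh} \DEFEQ \sh[\PS_2/\rsh_2,\ldots,\PS_m/\rsh_m]$; since $\rsh_2,\ldots,\rsh_m$ are reduced, $\hat{\sh}$ has exactly one predicate call, namely $\PS_1$. By the auxiliary fact, $\hat{\sh}[\PS_1/\rsh_1] = \rsh$ and $\hat{\sh}[\PS_1/\SSIGMA{\CALLN{1}{}}{\T{q}[1]}] = \sh[\PS_1/\SSIGMA{\CALLN{1}{}}{\T{q}[1]},\PS_2/\rsh_2,\ldots,\PS_m/\rsh_m]$. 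Since $A_1 = \{ y\in\FV{0}{} \mid y^{\rsh_1}\in\ALLOC{\rsh_1}\}$ and $\PURE{}_1 = \{ x\sim y \mid x^{\rsh_1}\MSIM{\rsh_1}y^{\rsh_1}\}$, the hypotheses of Lemma~\ref{thm:zoo:track:congruence:auxiliary} hold for $\hat{\sh}$ and $\rsh_1$, so for all $x,y\in\VAR(\hat{\sh})$---hence in particular for all $x,y\in\VAR(\sh)$---we get $x\MSIM{\rsh}y$ iff $x\MSIM{\sh[\PS_1/\SSIGMA{\CALLN{1}{}}{\T{q}[1]},\PS_2/\rsh_2,\ldots,\PS_m/\rsh_m]}y$, and likewise $x\in\ALLOC{\rsh}$ iff $x\in\ALLOC{\sh[\PS_1/\SSIGMA{\CALLN{1}{}}{\T{q}[1]},\PS_2/\rsh_2,\ldots,\PS_m/\rsh_m]}$.

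Finally I would set $\sh^{\dagger}\DEFEQ\sh[\PS_1/\SSIGMA{\CALLN{1}{}}{\T{q}[1]}]$, which has the $m-1$ calls $\PS_2,\ldots,\PS_m$: replacing them by $\rsh_2,\ldots,\rsh_m$ gives the heap from the previous paragraph, while replacing them by the kernels $\SSIGMA{\CALLN{i}{}}{\T{q}[i]}$ gives $\SHRINK{\sh,\T{q}}$. Applying the induction hypothesis to $\sh^{\dagger}$ with $\rsh_2,\ldots,\rsh_m$ (whose associated pairs $\T{q}[2]\ldots\T{q}[m]$ are exactly the sets $A_i,\PURE{}_i$ demanded by the hypothesis) shows that these two heaps agree, for all variables in $\VAR(\sh^{\dagger})\supseteq\VAR(\sh)$, on definite equality, definite inequality and definite allocation. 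Chaining this with the equivalence from the previous paragraph yields $x\MSIM{\rsh}y$ iff $x\MSIM{\SHRINK{\sh,\T{q}}}y$ and $x\in\ALLOC{\rsh}$ iff $x\in\ALLOC{\SHRINK{\sh,\T{q}}}$ for all $x,y\in\VAR(\sh)$, completing the induction. The main obstacle here is purely the substitution bookkeeping---making precise that simultaneous and iterated call replacement coincide up to fresh renaming, and tracking variable sets carefully so that the conclusions of Lemma~\ref{thm:zoo:track:congruence:auxiliary} and of the induction hypothesis, each stated only for the variables of their own context heap, genuinely cover every $x,y\in\VAR(\sh)$. All the semantic content already sits inside the single-call lemma; the rest is accounting.
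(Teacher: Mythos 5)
Your proof is correct and takes essentially the same route as the paper: the paper also reduces everything to the single-call Lemma~\ref{thm:zoo:track:congruence:auxiliary}, applying it iteratively along a chain of hybrid heaps $\psi_0,\ldots,\psi_m$ that swap one $\rsh_i$ for its kernel at a time, which is exactly your induction on $m$ unrolled. The substitution bookkeeping you flag (simultaneous versus iterated replacement, and variable-set containment) is handled in the paper with the same level of informality.
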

\begin{proof}
  We apply Lemma~\ref{thm:zoo:track:congruence:auxiliary} iteratively to
  \begin{align*}
   \psi_0 ~=~ & \sh[\mathbf{P_1 / \rsh_1}, \PS_2 / \rsh_2, \ldots, \PS_m / \rsh_m] ~=~ \rsh \\
   \psi_1 ~=~ & \sh[\PS_1 / \SSIGMA{\CALLN{1}{}}{\T{q}[1]}, \mathbf{\PS_2 / \rsh_2}, \ldots, \PS_m / \rsh_m] \\
   \psi_2 ~=~ & \sh[\PS_1 / \SSIGMA{\CALLN{1}{}}{\T{q}[1]}, \PS_2 / \SSIGMA{\CALLN{2}{}}{\T{q}[2]}, \mathbf{\PS_3 / \rsh_3}, \ldots, \PS_m / \rsh_m] \\
              & \ldots \\
   \psi_i ~=~ & \sh[\PS_1 / \SSIGMA{\CALLN{1}{}}{\T{q}[1]}, \ldots, \PS_i / \SSIGMA{\CALLN{i}{}}{\T{q}[i]}, \\
              & \qquad  \mathbf{\PS_{i+1} / \rsh_{i+1}},\ldots \PS_m / \rsh_m] \big] \\
              & \ldots \\
  \psi_m ~=~ & \sh[\PS_1 / \SSIGMA{\CALLN{1}{}}{\T{q}[1]}, \ldots, \PS_m / \SSIGMA{\CALLN{m}{}}{\T{q}[m]}] = \SHRINK{\sh,\T{q}}.
  \end{align*}
  Here, the replacement written in bold is the single predicate call to which Lemma~\ref{thm:zoo:track:congruence:auxiliary} is applied.
  As a result, we obtain for each $0 \leq i,j \leq m$ and each pair of variables $x,y \in \VAR(\sh)$ that
  \begin{align*}
    x \MSIM{\psi_i} y ~\text{iff}~ x \MSIM{\psi_j} y ~\text{and}~ x \in \ALLOC{\psi_i} ~\text{iff}~ x \in \ALLOC{\psi_j}.
  \end{align*}
  In particular, for $i = 0$ and $j = m$, this yields
 \begin{align*}
   & x \in \ALLOC{\rsh} ~\text{iff}~ x \in \ALLOC{\SHRINK{\sh,\T{q}}}, ~\text{and} \\
   & x \MSIM{\rsh} y ~\text{iff}~ x \MSIM{\SHRINK{\sh,\T{q}}} y. 
 \end{align*}
 \qed
\end{proof}
It remains to prove the missing Lemmas.
\begin{lemma} \label{thm:zoo:track:heap-existence}
  Given the setting of Lemma~\ref{thm:zoo:track:congruence:auxiliary}, it holds that
  \begin{align*}
      & \stack,\heap \SAT{\emptyset} \sh\left[\PS_1 / \SSIGMA{\CALLN{1}{}}{(A,\Lambda)}\right] ~\text{implies} \\
    & \exists \heap' ~.~ \stack,\heap' \SAT{\emptyset} \sh[ \PS_1 / \rsh] ~\text{and}~ \forall y \in \DOM(\stack) ~.~ \\
    & \qquad \qquad \qquad \stack(y) \in \DOM(\heap) ~\text{iff} ~\stack(y) \in \DOM(\heap'). 
  \end{align*}
\end{lemma}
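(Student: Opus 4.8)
The proof splits into a trivial reduction followed by an explicit model construction. First I would dispose of the case where $\rsh$ is unsatisfiable: by the convention that fixes $\textrm{closure}(\cdot)$ to the set of all pure formulas whenever a contradiction is derived, the set $\Lambda = \{x \sim y \mid x^{\rsh} \MSIM{\rsh} y^{\rsh}\}$ then contains $\NIL \neq \NIL$, so $\SSIGMA{\CALLN{1}{}}{(A,\Lambda)}$ is unsatisfiable and hence so is $\sh[\PS_1 / \SSIGMA{\CALLN{1}{}}{(A,\Lambda)}]$; the premise of the implication is never met and there is nothing to prove. So assume $\rsh$ is satisfiable. Next I would unpack the hypothesis via the semantics in Figure~\ref{fig:slsemantics}: from $\stack,\heap \SAT{\emptyset} \sh[\PS_1 / \SSIGMA{\CALLN{1}{}}{(A,\Lambda)}]$ we obtain a valuation $\T{v}$ of $\BV{\sh}$ (write $\stack_1 = \stack\remap{\BV{\sh}}{\T{v}}$) and a split $\heap = \heap_0 \HEAPUNION \heap_1$ with $\stack_1,\heap_0 \SAT{\emptyset} \SPATIAL{\sh}$, $\stack_1 \models \PURE{\sh}$, and $\stack_1,\heap_1 \SAT{\emptyset} \SSIGMA{\CALLN{1}{}}{(A,\Lambda)}$.

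Before building $\heap_1'$, I would record two consequences. Since the kernel allocates exactly the minimal representatives of $A$ and $\stack_1 \models \Lambda$ (so definitely-equal free variables of $\rsh$ get equal $\stack_1$-values), the value set $\DOM(\heap_1)$ equals $\{\stack_1(\PROJ{\FV{1}{\sh}}{i}) \mid \PROJ{\FV{0}{\rsh}}{i} \in \ALLOC{\rsh}\}$ (up to the harmless handling of $\NIL$ and repeated parameters). Moreover every formula in $\PURE{\rsh}$ is forced in all tight models of $\STRIP{\rsh}$, hence is a definite (in)equality and, via the index correspondence $\FV{0}{} \leftrightarrow \FV{0}{\rsh}$, lies in $\Lambda$; so $\stack_1$ restricted to the parameters $\FV{1}{\sh}$ is consistent with the completion $\textrm{complete}(\rsh)$ (cf.\ Lemma~\ref{thm:zoo:relationships:characterizations}).

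The heart of the proof is the construction of $\heap_1'$. I would extend $\stack_1$ to the (freshly renamed) bound variables $\BV{\rsh}$ as follows: a bound variable forced equal in $\textrm{complete}(\rsh)$ to a free variable of $\rsh$ is sent to that parameter's $\stack_1$-value; bound variables forced equal to one another are identified; every remaining bound variable is sent to a pairwise-fresh value avoiding $\mathrm{ran}(\stack_1)$ and $\DOM(\heap)$. Call the result $\stack_1'$. Because $\textrm{complete}(\rsh)$ is consistent and any two syntactically distinct points-to assertions of $\SPATIAL{\rsh}$ have completion-distinct heads, allocating each head of $\SPATIAL{\rsh}$ under $\stack_1'$ with its prescribed tuple yields a well-defined heap $\heap_1'$ with $\stack_1',\heap_1' \SAT{\emptyset} \SPATIAL{\rsh[\ldots]} : \PURE{\rsh[\ldots]}$ (the pure obligations hold because $\PURE{\rsh}$ embeds into $\Lambda$, and bound-variable (in)equalities were respected by the choice of $\stack_1'$). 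Set $\heap' = \heap_0 \HEAPUNION \heap_1'$; this union is defined since each parameter-head of $\SPATIAL{\rsh}$ is definitely allocated in $\rsh$, hence lies in $\DOM(\heap_1) = \DOM(\heap)\setminus\DOM(\heap_0)$, and each bound-variable-head value was chosen outside $\DOM(\heap)$. Then $\stack_1,\heap' \SAT{\emptyset} \sh[\PS_1 / \rsh]$, and therefore $\stack,\heap' \SAT{\emptyset} \sh[\PS_1/\rsh]$ by the coincidence lemma for free variables (Lemma~\ref{thm:symbolic-heaps:fv-coincidence}).

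Finally I would verify the domain condition. For $y \in \DOM(\stack)$, since $\DOM(\heap) = \DOM(\heap_0)\cup\DOM(\heap_1)$ and $\DOM(\heap') = \DOM(\heap_0)\cup\DOM(\heap_1')$ share the $\heap_0$-part, it suffices to show $\stack_1(y) \in \DOM(\heap_1) \iff \stack_1(y) \in \DOM(\heap_1')$ whenever $\stack_1(y)\notin\DOM(\heap_0)$. Forward: a definitely-allocated free variable of $\rsh$ is $\Lambda$-equal to some head of $\SPATIAL{\rsh}$, so its $\stack_1$-value occurs in $\DOM(\heap_1')$ as well. Backward: every value in $\DOM(\heap_1')$ is either the $\stack_1$-value of a definitely-allocated free variable of $\rsh$ (a head, or a bound variable forced equal to a parameter that is then itself definitely allocated), in which case it lies in $\DOM(\heap_1)$, or a fresh value outside $\mathrm{ran}(\stack_1)$, which cannot equal $\stack_1(y)$. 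I expect this last bookkeeping to be the main obstacle: one must keep the correspondence between ``definitely allocated free variable of $\rsh$'', ``head of $\SPATIAL{\rsh}$'' and ``allocated location of $\heap_1$'' at the level of \emph{values} (modulo $\Lambda$) rather than syntax, and must ensure the freshly chosen bound-variable values never collide with stack values. Tempting shortcuts via a ``most general'' model of $\rsh$ fail, because the given $\stack$ may identify parameters that $\rsh$ does not force equal, so the model genuinely has to be built on top of $\stack_1$ itself.
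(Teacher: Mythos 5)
Your proof is correct, but it takes a different route from the paper's. The paper treats this lemma as a short semantic calculation: it unfolds the separating conjunction and the existential of $\sh$ to isolate the sub-heap $\heap_2$ satisfying $\SSIGMA{\CALLN{1}{}}{(A,\Lambda)}$, invokes the coincidence property (Lemma~\ref{thm:symbolic-heaps:fv-coincidence}) to restrict the stack to the kernel's free variables, and then delegates the entire model-construction step to Lemma~\ref{thm:zoo:track-congruence-right}, which is proved separately by structural induction on $\rsh$ (cases $\EMP$, pure formula, points-to, $\SEP$, $\exists$), carrying the invariant ``allocation status of stack values is preserved'' through each case. You instead inline a single global construction: extend the stack on $\BV{\rsh}$ using the completion of $\rsh$ to decide which bound variables collapse onto parameters or onto each other, give the rest pairwise-fresh values outside $\DOM(\heap)$ and the stack's range, and allocate the heads of $\SPATIAL{\rsh}$ in one shot. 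Both arguments rest on the same two facts --- that definite (in)equalities among $\rsh$'s free variables are mirrored in $\Lambda$ and hence respected by the given stack, and that the allocated values of the kernel and of $\rsh$ coincide on parameter values --- but your version makes explicit something the paper's recombination step $\heap' = \heap_1 \uplus \heap_2'$ leaves implicit, namely that the freshly allocated locations of the reconstructed sub-heap must avoid $\DOM(\heap)$ for the disjoint union to be defined. Two small points of looseness, both repairable: your dismissal of the unsatisfiable case appeals to the $\textrm{closure}$ convention, whereas $\Lambda$ is defined semantically via $\MODELS{\STRIP{\rsh}}$ (the conclusion $\NIL \neq \NIL \in \Lambda$ holds either way, vacuously); and the claim that $\PURE{\rsh}$ ``embeds into $\Lambda$'' is literally true only for pure formulas between free variables of $\rsh$ --- those mentioning $\BV{\rsh}$ are instead discharged by your choice of stack extension, which your construction paragraph does account for.
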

\begin{proof}
   Assume $\stack,\heap \SAT{\emptyset} \sh\left[\PS_1 / \SSIGMA{\CALLN{1}{}}{(A,\Lambda)}\right]$.
   Then
 \begin{align*}
                    & \stack,\heap \SAT{\emptyset} \sh\left[\PS_1 / \SSIGMA{\CALLN{1}{}}{(A,\Lambda)}\right] \\
  ~\Leftrightarrow~ & \left[ \text{SL semantics} \right] \\
                    & \exists \T{v} \in \VAL^{\SIZE{\BV{}}} ~.~ \exists \heap_1, \heap_2 ~.~ \heap = \heap_1 \uplus \heap_2 \\
                    & ~\text{and}~ \stack[\BV{} \mapsto \T{v}],\heap_1 \SAT{\emptyset} \SPATIAL{} \\
                    & ~\text{and}~ \stack[\BV{} \mapsto \T{v}],\heap_2 \SAT{\emptyset} \SSIGMA{\CALLN{1}{}}{(A,\Lambda)} \\
                    & ~\text{and}~ \forall a \sim b \in \Pi ~.~ \stack[\BV{} \mapsto \T{v}](a) \sim \stack[\BV{} \mapsto \T{v}](b) \\
  ~\Leftrightarrow~ & \left[ \text{Lemma}~\ref{thm:symbolic-heaps:fv-coincidence} \right] \\
                    &\exists \T{v} \in \VAL^{\SIZE{\BV{}}} ~.~ \exists \heap_1, \heap_2 ~.~ \heap = \heap_1 \uplus \heap_2 \\
                    & ~\text{and}~ \stack[\BV{} \mapsto \T{v}],\heap_1 \SAT{\emptyset} \SPATIAL{} \\
                    & ~\text{and}~ (\stack[\BV{} \mapsto \T{v}]\upharpoonright\FV{0}{\SSIGMA{\CALLN{1}{}}{(A,\Lambda)}}),\heap_2 \SAT{\emptyset} \SSIGMA{\CALLN{1}{}}{(A,\Lambda)} \\
                    & ~\text{and}~ \forall a \sim b \in \Pi ~.~ \stack[\BV{} \mapsto \T{v}](a) \sim \stack[\BV{} \mapsto \T{v}](b) \\
      ~\Rightarrow~ & \left[ \text{Lemma}~\ref{thm:zoo:track-congruence-right},~\FV{0}{\SSIGMA{\CALLN{1}{}}{(A,\Lambda)}} = \FV{0}{\rsh} \right] \\
                    & \exists \T{v} \in \VAL^{\SIZE{\BV{}}} ~.~ \exists \heap_1, \heap_2 ~.~ \heap = \heap_1 \uplus \heap_2 \\
                    & ~\text{and}~ \stack[\BV{} \mapsto \T{v}],\heap_1 \SAT{\emptyset} \SPATIAL{} \\
                    & ~\text{and}~ \exists \heap_2' . (\stack[\BV{} \mapsto \T{v}]\upharpoonright\FV{0}{\rsh}),\heap_2' \SAT{\emptyset} \rsh \\
                    & ~\text{and}~ \forall y \in \DOM(\stack[\BV{} \mapsto \T{v}]\upharpoonright\FV{0}{\rsh}) ~.~ \\
                    & \qquad \quad \stack(y) \in \DOM(\heap_2) ~\text{iff} ~\stack(y) \in \DOM(\heap_2') \\
                    & ~\text{and}~ \forall a \sim b \in \Pi ~.~ \stack[\BV{} \mapsto \T{v}](a) \sim \stack[\BV{} \mapsto \T{v}](b) \\
      ~\Rightarrow~ & \big[ \text{SL semantics}~,~\DOM(h_2) \subseteq (\stack[\BV{} \mapsto \T{v}]\upharpoonright\FV{0}{\rsh})~,\\
                    & \qquad \qquad \text{set}~\heap' = \heap_1 \uplus \heap_2' \big] \\
                    & \exists \heap' ~.~ \stack,\heap' \SAT{\emptyset} \sh[\PS_1 / \rsh] ~\text{and}~ \forall y \in \DOM(\stack) ~.~ \\
                    & \qquad \qquad \qquad \stack(y) \in \DOM(\heap) ~\text{iff} ~\stack(y) \in \DOM(\heap'). 
 \end{align*}
 \qed
\end{proof}
\begin{lemma} \label{thm:zoo:track-congruence-semantics}
 Let $\rsh \in \RSL{}{}$, $A = \{ x \in \FV{0} ~|~ x^{\rsh} \in \ALLOC{\rsh} \}$ and $\Lambda = \{ x \sim y ~|~ x^{\rsh} \MSIM{\rsh} y^{\rsh} \}$.
 Moreover, let $\SSIGMA{\CALLN{1}{}}{(A,\Lambda)}$ be defined as in Section~\ref{sec:zoo}.
 Then, for each $(\stack,\heap) \in \STATES$ with $\DOM(s) = \FV{0}{\rsh}$, $\stack,\heap \SAT{\emptyset} \SSIGMA{\CALLN{1}{}}{(A,\Lambda)}$ holds if and only if~\footnote{to be strict: $\heap = \{\ldots\}$ is a shortcut for $\DOM(\heap) = \{ \stack(x) ~|~ x \in A\}$, $\heap(\stack(x)) = \NIL$ for each $x \in A$}
 \begin{align*}
   & \heap = \{ \stack(x) \mapsto \NIL ~|~ x \in A \}
   ~\text{and}~ \bigwedge_{x \sim y \in \Lambda} \stack(x) \sim \stack(y)  \\
   & \text{and}~ \bigwedge_{x \in A, y \in (A \setminus \{x\}) \cup \{\NIL\}} \stack(x) \neq \stack(y). 
 \end{align*}
\end{lemma}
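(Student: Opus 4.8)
The plan is a direct unfolding of the separation-logic semantics of Figure~\ref{fig:slsemantics} applied to the reduced symbolic heap $\SSIGMA{\CALLN{1}{}}{(A,\Lambda)}$, which by Definition~\ref{def:zoo:track:kernel} equals $\bigstar_{\PROJ{\FV{0}{}}{i}\in\textrm{min}(A,\Lambda)}\PTS{\PROJ{\FV{0}{}}{i}}{\NIL}:\Lambda$, i.e.\ a separating conjunction of points-to assertions into $\NIL$ over the minimal allocated free variables, conjoined with the pure part $\Lambda$. Since $\DOM(\stack)=\FV{0}{\rsh}$ and $\SSIGMA{\CALLN{1}{}}{(A,\Lambda)}$ contains no existentially quantified variables, Lemma~\ref{thm:symbolic-heaps:fv-coincidence} lets me reason directly with this stack and heap, with no valuation to guess.

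First I would peel off the outer ``$:\Lambda$'': by the semantics, $\stack,\heap\SAT{\emptyset}\SSIGMA{\CALLN{1}{}}{(A,\Lambda)}$ holds iff $\stack(x)\sim\stack(y)$ for every $x\sim y\in\Lambda$ and $\stack,\heap\SAT{\emptyset}\bigstar_{s\in\textrm{min}(A,\Lambda)}\PTS{s}{\NIL}$. Then, by a routine induction on $\SIZE{\textrm{min}(A,\Lambda)}$ using the semantics of $\SEP$ and the fact that $\HEAPUNION$ is defined only on heaps with disjoint domains (together with the semantics of a single points-to assertion, which forces $\DOM(\heap)\subseteq\LOC$ and hence the allocated location to differ from $\NIL$), the second conjunct is equivalent to: $\heap=\{\stack(s)\mapsto\NIL\mid s\in\textrm{min}(A,\Lambda)\}$, the locations $\stack(s)$ for $s\in\textrm{min}(A,\Lambda)$ are pairwise distinct, and each $\stack(s)\neq\NIL$.

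The only remaining step is to transport these conditions from $\textrm{min}(A,\Lambda)$ back to all of $A$, which is legitimate precisely because $\Lambda$ is assumed to hold: $\textrm{min}(A,\Lambda)$ contains exactly one representative per $\MEQ{\Lambda}$-class meeting $A$, and any $x\sim y\in\Lambda$ with $\sim$ an equality forces $\stack(x)=\stack(y)$, so $\{\stack(x)\mid x\in A\}=\{\stack(s)\mid s\in\textrm{min}(A,\Lambda)\}$ and $\heap$ maps every $\stack(x)$, $x\in A$, to $\NIL$; symmetrically, the pairwise distinctness and $\NIL$-freeness of the representatives, read back along the equalities of $\Lambda$, amount to $\stack(x)\neq\stack(y)$ for all $x\in A$ and $y\in(A\setminus\{x\})\cup\{\NIL\}$. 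Collecting these equivalences gives the claimed biconditional. I expect the main (though still entirely routine) obstacle to be this last bookkeeping: keeping precise track of how the equalities in $\Lambda$ let one pass between the ``compressed'' spatial part indexed by $\textrm{min}(A,\Lambda)$ and the full index set $A$ in both the domain description and the disjointness/non-$\NIL$ constraints. No idea beyond the definitions of $\ALLOC{\cdot}$, the definite (in)equality relations, and the SL semantics is required.
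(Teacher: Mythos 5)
Your proposal is correct and follows the paper's approach exactly: the paper's entire proof of this lemma is the single sentence ``Follows immediately from the SL semantics (cf.\ Figure~\ref{fig:slsemantics})'', and your argument is precisely a careful unfolding of that semantics for the pure part and for the iterated separating conjunction of points-to assertions $\PTS{s}{\NIL}$. The only substantive bookkeeping --- transporting the domain description, the pairwise-distinctness, and the non-$\NIL$ constraints between the index set $\textrm{min}(A,\Lambda)$ and the full set $A$ via the equalities recorded in $\Lambda$ --- is exactly the step you single out, and it is left entirely implicit in the paper's own proof.
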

\begin{proof}
 Follows immediately from the SL semantics (cf. Figure~\ref{fig:slsemantics}).
\end{proof}
\begin{lemma} \label{thm:zoo:track-congruence-right}
 Let $\rsh \in \RSL{}{}$, $A = \{ x \in \FV{0} ~|~ x^{\rsh} \in \ALLOC{\rsh} \}$ and $\Lambda = \{ x \sim y ~|~ x^{\rsh} \MSIM{\rsh} y^{\rsh} \}$.
 Moreover, let $\SSIGMA{\CALLN{1}{}}{(A,\Lambda)}$ be defined as in Section~\ref{sec:zoo}.
 Then, for each $(\stack,\heap) \in \STATES$ with $\DOM(s) = \FV{0}{\rsh}$, we have
 \begin{align*}
   \stack,\heap \SAT{\emptyset} \SSIGMA{\CALLN{1}{}}{(A,\Lambda)} \quad\text{implies}\quad \exists h' ~.~ \stack,\heap' \SAT{\emptyset} \rsh.
 \end{align*}
 Furthermore, for each $x \in \DOM(s)$, we have $\stack(x) \in \DOM(\heap)$ iff $\stack(x) \in \DOM(\heap')$.
\end{lemma}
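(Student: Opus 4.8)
The plan is to first reduce to the case that $\rsh$ is satisfiable, then construct $\heap'$ explicitly from $\stack$ and the equivalence $\MEQ{\rsh}$ on $\VAR(\rsh)\cup\{\NIL\}$, and finally verify the required properties. If $\rsh$ is unsatisfiable then $\MODELS{\STRIP{\rsh}}=\emptyset$, so every definite relationship holds vacuously; in particular $\NIL\MNEQ{\rsh}\NIL$, hence $(\NIL\neq\NIL)\in\Lambda$. Then, by Lemma~\ref{thm:zoo:track-congruence-semantics}, no state satisfies $\SSIGMA{\CALLN{1}{}}{(A,\Lambda)}$ (the conjunction over $\Lambda$ would require $\NIL\neq\NIL$), so the implication is vacuously true. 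From now on assume $\rsh$ is satisfiable; then, by Definition~\ref{def:zoo:completion} and the argument in the proof of Lemma~\ref{thm:zoo:relationships:characterizations}, the completion of $\rsh$ has a consistent pure part, so $\MEQ{\rsh}$ is a genuine equivalence relation (here $\NIL$ plays the role of position $0$, and $A$ is the set of definitely allocated free variables of $\rsh$). Two facts, both following from the characterisations in Lemma~\ref{thm:zoo:relationships:characterizations} and the meaning of $\SEP$, will be used repeatedly: no class containing an allocated variable contains $\NIL$, and the left-hand sides of distinct points-to atoms of $\SPATIAL{\rsh}$ lie in pairwise distinct classes and are all definitely allocated.

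By Lemma~\ref{thm:zoo:track-congruence-semantics}, the hypothesis $\stack,\heap\SAT{\emptyset}\SSIGMA{\CALLN{1}{}}{(A,\Lambda)}$ gives $\heap=\{\stack(x)\mapsto\NIL\mid x\in A\}$, that $\stack$ satisfies every pure formula in $\Lambda$, and that the values $\stack(x)$ for $x\in A$ are pairwise distinct and distinct from $\NIL$. I would then define an extension $\stack^{*}$ of $\stack$ to $\VAR(\rsh)$ class by class: map the class of $\NIL$ to $\NIL$; map any other class that contains a free variable $x$ to $\stack(x)$ (well defined, since two free variables in the same class are linked by an equality of $\Lambda$); and map every remaining class, which then consists solely of existential variables, to a location not occurring among $\{\stack(x)\mid x\in\FV{0}{\rsh}\}$, choosing pairwise distinct such locations for distinct classes (possible as $\LOC$ is infinite). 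I would then set $\DOM(\heap')\DEFEQ\{\stack^{*}(x)\mid\PTS{x}{\T{y}}\text{ occurs in }\SPATIAL{\rsh}\}$ and $\heap'(\stack^{*}(x))\DEFEQ\stack^{*}(\T{y})$ for each atom $\PTS{x}{\T{y}}$; this is a well-defined partial function because distinct atoms have left-hand sides of distinct $\stack^{*}$-value, and it maps into $\LOC$ because every such left-hand side is definitely allocated, so its class contains either a variable of $A$ (whose $\stack$-value lies in $\LOC$ since $\heap$ allocates it) or only existentials (assigned a location by construction).

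The remaining verification has three parts: (a) $\stack^{*}$ satisfies every $\pi\in\PURE{\rsh}$; (b) $\stack^{*},\heap'\SAT{\emptyset}\SPATIAL{\rsh}$, which is a routine induction on $\SPATIAL{\rsh}$ using the definition of $\heap'$; and (c) for $x\in\FV{0}{\rsh}$, $\stack(x)\in\DOM(\heap)$ iff $\stack(x)\in\DOM(\heap')$. Parts (a) and (b) together yield $\stack,\heap'\SAT{\emptyset}\rsh$ by taking $\restr{\stack^{*}}{\BV{\rsh}}$ as the witness tuple for the existentials. For (a), an equality of $\PURE{\rsh}$ puts its two variables in one class, so $\stack^{*}$ identifies them; an inequality $x\neq y$ of $\PURE{\rsh}$ (likewise the distinctness of left-hand sides of distinct atoms) puts $x,y$ in different classes, and one argues $\stack^{*}(x)\neq\stack^{*}(y)$ by cases: if both classes contain free variables, use transitivity of $\MEQ{\rsh}$ and $\MNEQ{\rsh}$ together with the fact that $\stack$ respects $\Lambda$ and that distinct allocated free variables get distinct $\stack$-values; otherwise one of the values is a freshly chosen location (distinct by construction) or $\NIL$ (distinct again via $\Lambda$). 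For (c), $\stack(x)\in\DOM(\heap)$ means $\stack(x)=\stack(x')$ for some $x'\in A$; since $x'$ is definitely allocated there is a points-to atom whose left-hand side lies in the class of $x'$, so its $\stack^{*}$-value is $\stack(x')=\stack(x)$, whence $\stack(x)\in\DOM(\heap')$; conversely, if $\stack(x)\in\DOM(\heap')$ then $\stack^{*}(w)=\stack(x)$ for some atom left-hand side $w$, and since $\stack^{*}(w)$ coincides with a $\stack$-value it cannot come from an existential-only class, so the class of $w$ contains a free variable of $A$ that is $\stack$-equal to $x$, giving $\stack(x)\in\DOM(\heap)$.

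The genuinely delicate point is none of these individual steps but the bookkeeping behind them: $A$ and $\Lambda$ talk about free variables only, so the argument must carefully track how these data propagate through the equivalence $\MEQ{\rsh}$ to existential variables, and it must ensure the freshly chosen locations never collide with a value already forced by $\stack$; once this is set up, everything is mechanical. (The companion statement, in which $\rsh$ and $\SSIGMA{\CALLN{1}{}}{(A,\Lambda)}$ exchange roles, is proved symmetrically, using a model of $\rsh$ in place of the explicit construction.)
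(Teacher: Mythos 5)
Your proof is correct, but it takes a genuinely different route from the paper. The paper proves this lemma by structural induction on the syntax of $\rsh$ (cases $\EMP$, $a\sim b$, $\PT{x}{\T{y}}$, $\Sigma_1\SEP\Sigma_2$, $\exists z.\,\SPATIAL{}:\Pi$), assembling $\heap'$ compositionally — e.g.\ in the $\SEP$ case it splits $\heap$ along the partition of $A$ and invokes the induction hypothesis on each half, threading the domain-coincidence invariant through every case. You instead dispatch the unsatisfiable case via $\NIL\neq\NIL\in\Lambda$ and Lemma~\ref{thm:zoo:track-congruence-semantics}, and in the satisfiable case build $\heap'$ in one shot from a global stack extension $\stack^{*}$ defined on the quotient of $\VAR(\rsh)\cup\{\NIL\}$ by $\MEQ{\rsh}$. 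What your approach buys is a single explicit witness and a uniform treatment of the ``$\stack(x)\in\DOM(\heap)$ iff $\stack(x)\in\DOM(\heap')$'' claim, essentially re-deriving the canonical-model structure that the paper only exploits later (for determinedness); it also makes the companion Lemma~\ref{thm:zoo:track-congruence-left} nearly symmetric, whereas the paper proves that one by a separate, shorter argument. What it costs is exactly the bookkeeping you flag: all the distinctness obligations (distinct classes get distinct values, fresh locations avoid $\stack$-values and each other, allocated classes exclude $\NIL$) must be discharged globally from the closure properties of $\MEQ{\rsh}$/$\MNEQ{\rsh}$ and from $\stack\models\Lambda$, rather than falling out of the semantics of $\SEP$ case by case. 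Both arguments rest on the same two pillars — Lemma~\ref{thm:zoo:track-congruence-semantics} for the shape of $\heap$, and the polynomial-time closure of the pure part from Remark~\ref{rem:closure} — so I consider your proposal a valid alternative proof.
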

\begin{proof}
 By structural induction on $\rsh$.
 For readability, we omit equalities of the form $x = x$ in pure formulas
 although they are contained in pure formulas of $\SSIGMA{\CALLN{1}{}}{(A,\Lambda)}$ by definition.
 \paragraph{The case $\rsh = a \sim b$}
 is trivial, because $\SSIGMA{\CALLN{1}{}}{(A,\Lambda)}$ contains $a \sim b$.
 \paragraph{The case $\rsh = \EMP$}
 is trivial, because $A = \Lambda = \emptyset$ and $\SSIGMA{\CALLN{1}{}}{(A,\Lambda)} = \EMP$.
 \paragraph{The case $\rsh = \PT{x}{\T{y}}$}
 Then $A = \{x\}$, $\Lambda = \{ x \neq \NIL \}$. Moreover, we know that $\SSIGMA{\CALLN{1}{}}{(A,\Lambda)} = \PT{x}{\NIL} : \{ x \neq \NIL \}$.
 Hence,
 \begin{align*}
                & \stack,\heap \SAT{\emptyset} \SSIGMA{\CALLN{1}{}}{(A,\Lambda)} \\
  ~\Rightarrow~ &  \left[ \text{Definition of}~\SSIGMA{\CALLN{1}{}}{(A,\Lambda)} \right] \\
                & \stack,\heap \SAT{\emptyset} \PT{x}{\NIL} : \{ x \neq \NIL \} \\
  ~\Rightarrow~ &  \left[ \text{SL semantics} \right] \\
                & \stack(x) \neq \NIL ~\text{and}~ h = \{ \stack(x) \mapsto \NIL \} \\
  ~\Rightarrow~ & \left[ \text{choose}~h' = \{ \stack(x) \mapsto \stack(\T{y}) \} \right] \\
                & \stack(x) \neq \NIL ~\text{and}~ \DOM(\heap') = \{\stack(x)\} ~\text{and}~ h(\stack(x)) = \stack(\T{y}) \\
  ~\Rightarrow~ &  \left[ \text{SL semantics} \right] \\
                & \stack,\heap' \SAT{\emptyset} \PT{x}{\T{y}}.
 \end{align*}
 Moreover, by our choice of $h'$, we have $\DOM(\heap) = \DOM(\heap')$.
 \paragraph{The case $\rsh = \Sigma_1 \SEP \Sigma_2$}
 Let $A_1,A_2$ and $\Lambda_1$, $\Lambda_2$ denote the sets corresponding to $\Sigma_1$ and $\Sigma_2$, respectively.
 \begin{align*}
                & \stack,\heap \SAT{\emptyset} \SSIGMA{\CALLN{1}{}}{(A,\Lambda)} \\
  ~\Rightarrow~ &  \left[ \text{Definition}~\SSIGMA{\CALLN{1}{}}{(A,\Lambda)} \right] \\
                & \stack,\heap \SAT{\emptyset} \SSIGMA{\CALLN{1}{}}{(A,\Lambda)} ~\text{and}~h = \{\stack(x) \mapsto \NIL ~|~ x \in A \} \\
  ~\Leftrightarrow~ &  \left[ \text{Lemma}~\ref{thm:zoo:track-congruence-semantics} \right] \\
                &  \bigwedge_{x \sim y \in \Lambda} \stack(x) \sim \stack(y) ~\wedge~ \bigwedge_{x \in A, y \in (A \setminus \{x\}) \cup \{\NIL\}} \stack(x) \neq \stack(y) \\
                & \qquad\text{and}~h = \{\stack(x) \mapsto \NIL ~|~ x \in A \} \\
  ~\Rightarrow~ &  \left[ A = A_1 \uplus A_2 \right] \\
                & \bigwedge_{x \sim y \in \Lambda} \stack(x) \sim \stack(y) ~\wedge~ \bigwedge_{x \in A, y \in (A \setminus \{x\}) \cup \{\NIL\}} \stack(x) \neq \stack(y) \\
                & \wedge~h = \{\stack(x) \mapsto \NIL ~|~ x \in A_1 \} \uplus \{\stack(x) \mapsto \NIL ~|~ x \in A_1 \} \\
  ~\Rightarrow~ &  \left[ \text{for}~i=1,2~\text{choose}~h_i = \{ \stack(x) \mapsto \NIL ~|~ \PT{x}{\T{y}} \in \Sigma_i \} \right] \\
                & \bigwedge_{x \sim y \in \Lambda} \stack(x) \sim \stack(y) ~\wedge~ \bigwedge_{x \in A, y \in (A \setminus \{x\}) \cup \{\NIL\}} \stack(x) \neq \stack(y) \\
                & \wedge~h = h_1 \uplus h_2 \\
  ~\Rightarrow~ &  \left[ \Lambda_1,\Lambda_2 \subseteq \Lambda \right] \\
                & \bigwedge_{x \sim y \in \Lambda_1} \stack(x) \sim \stack(y) ~\wedge~ \bigwedge_{x \sim y \in \Lambda_2} \stack(x) \sim \stack(y)\\
                & \wedge~ \bigwedge_{x \in A, y \in (A \setminus \{x\}) \cup \{\NIL\}} \stack(x) \neq \stack(y) ~\wedge~h = h_1 \uplus h_2 \\
  ~\Rightarrow~ & \left[ \text{Lemma}~\ref{thm:zoo:track-congruence-semantics} \right] \\
                & \stack,\heap_1 \SAT{\emptyset} \SSIGMA{\CALLN{1}{}}{(A_1,\Lambda_1)} ~\text{and}~ \stack,\heap_2 \SAT{\emptyset} \SSIGMA{\CALLN{1}{}}{(A_2,\Lambda_2)} \\
                & \wedge~ \bigwedge_{x \in A, y \in (A \setminus \{x\}) \cup \{\NIL\}} \stack(x) \neq \stack(y) ~\wedge~h = h_1 \uplus h_2 \\
  ~\Rightarrow~ & \left[ \text{I.H.} \right] \\
                & \exists h_1',h_2' ~.~ \stack,\heap_1' \SAT{\emptyset} \Sigma_1 ~\text{and}~ \stack,\heap_2' \SAT{\emptyset} \Sigma_2 \\
                & \wedge~\forall x \in \DOM(s) ~.~ \stack(x) \in \DOM(\heap_1) ~\text{iff}~ \stack(x) \in \DOM(\heap_1') \\
                & \wedge~\forall x \in \DOM(s) ~.~ \stack(x) \in \DOM(\heap_2) ~\text{iff}~ \stack(x) \in \DOM(\heap_2') \\
                & \wedge~ \bigwedge_{x \in A, y \in (A \setminus \{x\}) \cup \{\NIL\}} \stack(x) \neq \stack(y) ~\wedge~h = h_1 \uplus h_2 \\
  ~\Rightarrow~ & \left[ \DOM(\heap_1) \cap \DOM(\heap_2) = \emptyset,~h' = h_1' \uplus h_2' \right] \\
                & \exists h' = h_1' \uplus h_2' ~.~ \stack,\heap_1' \SAT{\emptyset} \Sigma_1 ~\text{and}~ \stack,\heap_2' \SAT{\emptyset} \Sigma_2 \\
                & ~\text{and}~ \forall x \in \DOM(s) ~.~ \stack(x) \in \DOM(\heap) ~\text{iff}~ \stack(x) \in \DOM(\heap') \\
  ~\Rightarrow~ & \left[ \text{SL semantics} \right] \\
                & \stack,\heap' \SAT{\emptyset} \Sigma_1 \SEP \Sigma_2 \\
                & ~\text{and}~ \forall x \in \DOM(s) ~.~ \stack(x) \in \DOM(\heap) ~\text{iff}~ \stack(x) \in \DOM(\heap')
 \end{align*}
 \paragraph{The case $\rsh = \exists z \,.\, \SPATIAL{} \,:\, \Pi$}
 Let $\rsh' = \STRIP{\rsh} = \SPATIAL{} \,:\, \Pi$ be as $\rsh$ except that $z$ is a free variable.
 Moreover, let $A'$ and $\Lambda'$ be the corresponding sets of allocated variables and pure formulas between free variables of $\rsh'$.
 Then
 \begin{align*}
                 & \stack,\heap \SAT{\emptyset} \SSIGMA{\CALLN{1}{}}{(A,\Lambda)} \\
   ~\Rightarrow~ & \left[ \text{Lemma}~\ref{thm:zoo:track-congruence-semantics} \right] \\
                 & h = \{ \stack(x) \mapsto \NIL ~|~ x \in A\} ~\text{and}~ \\
                 & \bigwedge_{x \sim y \in \Lambda} \stack(x) \sim \stack(y) ~\wedge~ \bigwedge_{x \in A, y \in (A \setminus \{x\}) \cup \{\NIL\}} \stack(x) \neq \stack(y) \\
   ~\Rightarrow~ & \left[ \ALLOC{\rsh} = \ALLOC{\rsh'},~x\MSIM{\rsh}y \Leftrightarrow x\MSIM{\rsh'}y,~\DOM(s) = \FV{0}{\rsh} \right] \\
                 & h = \{ \stack(x) \mapsto \NIL ~|~ x \in A\} ~\text{and}~ \\
                 & \exists a \in \VAL ~.~ \bigwedge_{x \sim y \in \Lambda'} s[z \mapsto a](x) \sim s[z \mapsto a](y) \\
                 & \wedge~ \bigwedge_{x \in A', y \in (A' \setminus \{x\}) \cup \{\NIL\}} s[z \mapsto a](x) \neq s[z \mapsto a](y) \\
   ~\Rightarrow~ & \big[ \text{Lemma}~\ref{thm:zoo:track-congruence-semantics},\\
                 & \qquad \qquad ~\text{set}~ h' = h \uplus \{ \stack(z) \mapsto \NIL \} ~\text{if}~ z \in \ALLOC{\rsh} \} \big] \\
                 & \exists a \in \VAL ~.~ s[z \mapsto a],h' \SAT{\emptyset} \SSIGMA{\CALLN{1}{}}{(A',\Lambda')} \\
   ~\Rightarrow~ & \left[ \text{I.H.} \right] \\
                 & \exists a \in \VAL ~.~ \exists h'' ~.~ s[z \mapsto a],h'' \SAT{\emptyset} \rsh' \\
                 & ~\text{and}~ \forall x \in \DOM(s[z \mapsto a]) ~.~ s[z \mapsto a](x) \in \DOM(\heap') \\
                 & \qquad \text{iff}~ s[z \mapsto a](x) \in \DOM(\heap'') \\
   ~\Rightarrow~ & \left[ \exists x \exists y \cong \exists y \exists x \right] \\
                 & \exists h'' ~.~ \exists a \in \VAL ~.~ s[z \mapsto a],h'' \SAT{\emptyset} \rsh' \\
                 & ~\text{and}~ \forall x \in \DOM(s[z \mapsto a]) ~.~ s[z \mapsto a](x) \in \DOM(\heap') \\
                 & \qquad \text{iff}~ s[z \mapsto a](x) \in \DOM(\heap'') \\
   ~\Rightarrow~ & \left[ \text{SL semantics} \right] \\
                 & \exists h'' ~.~ \stack,\heap'' \SAT{\emptyset} \rsh \\
                 & ~\text{and}~ \forall x \in \DOM(s) ~.~ \stack(x) \in \DOM(\heap) \\
                 & \qquad \text{iff}~\stack(x) \in \DOM(\heap''). 
 \end{align*}
 \qed
\end{proof}
We also need the (simpler) converse direction.
\begin{lemma} \label{thm:zoo:track-congruence-left}
 Let $\rsh \in \RSL{}{}$, $A = \{ x \in \FV{0} ~|~ x^{\rsh} \in \ALLOC{\rsh} \}$ and $\Lambda = \{ x \sim y ~|~ x^{\rsh} \MSIM{\rsh} y^{\rsh} \}$.
 Moreover, let $\SSIGMA{\CALLN{1}{}}{(A,\Lambda)}$ be defined as in Section~\ref{sec:zoo}.
 Then, for each $(\stack,\heap) \in \STATES$ with $\DOM(s) = \FV{0}{\rsh}$, we have
 \begin{align*}
   \stack,\heap \SAT{\emptyset} \rsh \quad\text{implies}\quad \exists h' ~.~ \stack,\heap' \SAT{\emptyset} \SSIGMA{\CALLN{1}{}}{(A,\Lambda)}.
 \end{align*}
 Furthermore, for each $x \in \DOM(s)$, we have $\stack(x) \in \DOM(\heap)$ iff $\stack(x) \in \DOM(\heap')$.
\end{lemma}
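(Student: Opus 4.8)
The plan is to prove the claim by structural induction on $\rsh$, running in parallel to --- but being simpler than --- the proof of Lemma~\ref{thm:zoo:track-congruence-right}, with the roles of $\rsh$ and $\SSIGMA{\CALLN{1}{}}{(A,\Lambda)}$ interchanged. Throughout I would lean on Lemma~\ref{thm:zoo:track-congruence-semantics}, which pins down the states $(\stack,\heap')$ with $\DOM(\stack)=\FV{0}{\rsh}$ that satisfy $\SSIGMA{\CALLN{1}{}}{(A,\Lambda)}$: $\DOM(\heap')=\{\stack(x)\mid x\in A\}$ with every cell holding $\NIL$, all pure atoms of $\Lambda$ hold under $\stack$, and the minimal (w.r.t.\ $\Lambda$) allocated free variables receive pairwise distinct, non-$\NIL$ values. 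So, starting from a model $(\stack,\heap)$ of $\rsh$ --- equivalently, from a witness $\T{v}$ for the existentials $\BV{\rsh}$ with $(\stack[\BV{\rsh}\mapsto\T{v}],\heap)\in\MODELS{\STRIP{\rsh}}$ --- the only thing to exhibit is a heap $\heap'$ of that prescribed shape whose domain agrees with that of $\heap$ on $\{\stack(x)\mid x\in\DOM(\stack)\}$.

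For the base cases $\rsh=(a\sim b)$ and $\rsh=\EMP$ there is nothing to do: $A$ is empty, $\Lambda$ records only pure atoms that already hold in the model, and $\heap'$ may be taken to be the empty heap. For $\rsh=\PT{x}{\T{y}}$ we have $A=\{x\}$ and $\Lambda$ forces $x\neq\NIL$, so $\heap'\DEFEQ\{\stack(x)\mapsto\NIL\}$ works, and $\DOM(\heap')=\DOM(\heap)=\{\stack(x)\}$ makes domain-coincidence immediate. For $\rsh=\Sigma_1\SEP\Sigma_2$ I would split $A$ and $\Lambda$ along the two conjuncts, obtain heaps $\heap'_1,\heap'_2$ from the induction hypothesis, note that they are disjoint because the two parts of $\heap$ are, and set $\heap'\DEFEQ\heap'_1\HEAPUNION\heap'_2$; the pairwise-distinctness requirement on the minimal allocated variables is precisely what disjointness of the two sub-heaps provides, and it is already recorded in $\Lambda$ since $\textrm{complete}(\cdot)$ (Definition~\ref{def:zoo:completion}) puts an inequality between every pair of distinct points-to left-hand sides into the closure. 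For $\rsh=\exists z.\SPATIAL{}:\PURE{}$ I would pass to $\STRIP{\rsh}$, apply the induction hypothesis to the instantiated state $\stack[z\mapsto a]$ for a witness $a\in\VAL$ of the existential, obtain a heap $\heap''$ for the corresponding kernel, drop the cell at $\stack(z)$ if $z$ happens to be allocated so that $\heap'$ has domain $\{\stack(x)\mid x\in A\}$ and not $\{\stack(x)\mid x\in A\}\cup\{\stack(z)\}$, and re-quantify, using $\exists x\,\exists y\cong\exists y\,\exists x$ to pull the heap quantifier back outside; since $z\notin\FV{0}{\rsh}=\DOM(\stack)$, the domain-coincidence claim, which only ranges over $\DOM(\stack)$, is insensitive to $z$'s cell.

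The one piece of genuinely delicate bookkeeping --- and what I expect to be the main obstacle --- is the interplay of $A$, $\min(A,\Lambda)$ and $\Lambda$ in the $\SEP$- and $\exists$-steps: one must verify that collapsing $\rsh$ to its kernel neither drops an inequality needed for the separating conjunction to be well-defined (handled by the inequalities between distinct allocated left-hand sides that $\textrm{complete}$ adds, using Lemma~\ref{thm:zoo:relationships:characterizations} to see that a definitely-allocated variable is definitely equal to such a left-hand side) nor leaves an allocated free variable's location out of $\DOM(\heap')$ while it sits in $\DOM(\heap)$; the latter is exactly what the ``furthermore'' clause requires, and it rests on definitely-allocated free variables being allocated in the given model $(\stack,\heap)$ (one inclusion) together with the prescribed shape of $\heap'$ (the other). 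Everything else is a routine transcription of the matching case of Lemma~\ref{thm:zoo:track-congruence-right}. As an alternative to the induction, one could argue directly: put $\heap'\DEFEQ\{\stack(x)\mapsto\NIL\mid x\in A\}$ and check the three clauses of Lemma~\ref{thm:zoo:track-congruence-semantics} one by one from $(\stack[\BV{\rsh}\mapsto\T{v}],\heap)\in\MODELS{\STRIP{\rsh}}$ and the definitions of $A$ and $\Lambda$; I would keep whichever version reads more cleanly.
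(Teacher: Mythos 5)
Your proposal is correct, and in fact your closing ``alternative'' is exactly the paper's proof: the paper does not induct at all. It simply observes that, by definition of $\MSIM{\rsh}$ and $\ALLOC{\rsh}$, the given model $(\stack,\heap)$ already validates every atom of $\Lambda$ and allocates precisely the variables in $A$ (with the pairwise inequalities and non-nullity of allocated variables following from $\Lambda$ containing all definite relationships), and then invokes Lemma~\ref{thm:zoo:track-congruence-semantics} to conclude that $\stack$ together with $\heap' \DEFEQ \{\stack(x)\mapsto\NIL \mid x\in A\}$ satisfies $\SSIGMA{\CALLN{1}{}}{(A,\Lambda)}$; the domain-coincidence clause is immediate from $\stack(x)\in\DOM(\heap)$ iff $x\in\ALLOC{\rsh}$ iff $x\in A$ iff $\stack(x)\in\DOM(\heap')$. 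The structural induction you sketch as your main route would also go through (it is the mirror image of the paper's proof of Lemma~\ref{thm:zoo:track-congruence-right}, and your handling of the $\SEP$ and $\exists$ cases, including dropping the cell for a bound allocated $z$ not identified with a free variable, matches the bookkeeping there), but it is unnecessary effort in this direction: the asymmetry is that the kernel $\SSIGMA{\CALLN{1}{}}{(A,\Lambda)}$ has an essentially unique model shape, so going \emph{from} $\rsh$ \emph{to} the kernel requires no case analysis on the syntax of $\rsh$, whereas the converse (Lemma~\ref{thm:zoo:track-congruence-right}) must rebuild a model of the richer formula $\rsh$ and genuinely needs the induction. I would keep the direct version.
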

\begin{proof}
  Assume $\stack,\heap \SAT{\emptyset} \rsh$ and $\DOM(s) = \FV{0}{\rsh}$.
  By definition of $\MSIM{\rsh}$ we know that for each $x,y \in \DOM(s)$ with $x \MSIM{\rsh} y$, we have $\stack(x) \sim \stack(y)$.
  Moreover, by definition of $\ALLOC{\rsh}$, $\stack(x) \in \DOM(\heap)$ iff $x \in \DOM(s) \cap \ALLOC{\rsh}$.
  By definition of $A$ and $\Lambda$ this means
  \begin{align*}
    \bigwedge_{x \sim y \in \Lambda} \stack(x) \sim \stack(y) ~\wedge~ \bigwedge_{x \in A, y \in (A \setminus \{x\}) \cup \{\NIL\}} \stack(x) \neq \stack(y).
  \end{align*}
  Thus, by Lemma~\ref{thm:zoo:track-congruence-semantics}, we obtain
  \begin{align*}
   & s,\{ \stack(x) \mapsto \NIL ~|~ x \in A\} \SAT{\emptyset} \SSIGMA{\CALLN{1}{}}{(A,\Lambda)}.  \qedhere
  \end{align*}
  %
  %
  \qed
\end{proof}
%
%
\section{Proof of Theorem~\ref{thm:zoo:sat:property}} \label{app:zoo:sat:property}
 A heap automaton $\HASAT$ accepting $\SATPROP(\alpha)$ is constructed as presented in Definition~\ref{def:zoo:track-automaton} except for the set of final states being set to
 $
  F_{\HASAT} ~\DEFEQ~ \{ (A,\Pi) ~|~ \NIL \neq \NIL \,\notin\, \Pi \}
 $.
 Since we already now that $\HATRACK$ and thus also $\HASAT$ satisfies the compositionality property (cf. Lemma~\ref{thm:zoo:track:compositional}), it suffices to show
 $L(\HASAT) = \SATPROP(\alpha)$.
 Let $\rsh \in \RSL{}{\SRDCLASSFV{\alpha}}$. Then
 \begin{align*}
                      & \rsh~\text{satisfiable} \\
    ~\Leftrightarrow~ & \left[ \text{Definition of satisfiability} \right] \\
                      & \exists (\stack,\heap) ~.~ \stack,\heap \SAT{\emptyset} \rsh \\
    ~\Leftrightarrow~ & \left[ \text{Lemma}~\ref{thm:symbolic-heaps:fv-coincidence} \right] \\
                      & \MODELS{\rsh} \neq \emptyset \\
    ~\Leftrightarrow~ & \big[ \big(\forall (\stack,\heap) \in \MODELS{\rsh} ~.~ \stack(\NIL) \neq \stack(\NIL)\big) \\
                      & \qquad ~\text{iff}~ \MODELS{\rsh} = \emptyset ] \\
                      & \exists \Pi ~.~ \forall x,y \in \FV{0}{\rsh} ~.~ \NIL \neq \NIL \notin \Pi \\
                      & \qquad \text{and}~ (x \sim y) \in \Pi \\
                      & \qquad \qquad \leftrightarrow \left( \forall (\stack,\heap) \in \MODELS{\rsh} ~.~ \stack(x^{\rsh}) \sim \stack(y^{\rsh}) \right) \\
    ~\Leftrightarrow~ & \left[ \text{Definition}~ x \MSIM{\rsh} y  \right] \\
                      & \exists \Pi ~.~ \forall x,y \in \FV{0}{\rsh} ~.~ \NIL \neq \NIL \notin \Pi \\
                      & \qquad \text{and}~ (x \sim y) \in \Pi \leftrightarrow x^{\rsh} \MSIM{\rsh} y^{\rsh} \\
    ~\Leftrightarrow~ & \left[ \text{Definition of}~F_{\HASAT} \right] \\
                      & \exists (A,\Pi) \in F_{\HASAT} ~.~ \forall x,y \in \FV{0}{\rsh} ~.~ x \in A \leftrightarrow x^{\rsh} \in \ALLOC{\rsh} \\
                      & \qquad \text{and}~ (x \sim y) \in \Pi \leftrightarrow x^{\rsh} \MSIM{\rsh} y^{\rsh} \\
    ~\Leftrightarrow~ & \left[ \text{Definition}~\ref{def:zoo:track-automaton}:~\SHRINK{\rsh,\EMPTYSEQ} = \rsh \right] \\
                      & \exists (A,\Pi) \in F_{\HASAT} ~.~ \forall x,y \in \FV{0}{\SHRINK{\rsh,\EMPTYSEQ}} ~.~ \\
                      & \qquad \qquad x \in A \leftrightarrow x^{\rsh} \in \ALLOC{\SHRINK{\rsh,\EMPTYSEQ}} \\
                      & \qquad \text{and}~ (x \sim y) \in \Pi \leftrightarrow x^{\rsh} \MSIM{\SHRINK{\rsh,\EMPTYSEQ}} y^{\rsh} \\
    ~\Leftrightarrow~ & \left[ \text{Definition}~\Delta_{\HASAT} \right] \\
                      & \exists (A,\Pi) \in F ~.~ \OMEGA{\HASAT}{(A,\Pi)}{\rsh} \\
    ~\Leftrightarrow~ & \left[ \text{Definition}~L(\HASAT) \right] \\
                      & \rsh \in L(\HASAT).
 \end{align*}
 Thus, $\rsh$ is satisfiable if and only if $\rsh \in L(\HASAT)$.
 \qed
 %
%
\section{Satisfiability is in \CCLASS{NP}} \label{app:zoo:sat:np}
Brotherston et al.~\cite{brotherston2014decision} already showed that the satisfiability problem
is decidable in \CCLASS{NP} if the maximal number of free variables $\alpha$ is bounded.
In this section we briefly show that such an \CCLASS{NP}--decision procedure naturally emerged from our heap
automaton $\HASAT$ accepting $\SATPROP(\alpha)$ (see Theorem~\ref{thm:zoo:sat:property}).
\begin{lemma}
  \DPROBLEM{SL-SAT} is in \CCLASS{NP} if the maximal number $\alpha$ of free variables is bounded.
\end{lemma}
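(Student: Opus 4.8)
The plan is to exploit the heap automaton $\HASAT$ from Theorem~\ref{thm:zoo:sat:property} together with the refinement/emptiness machinery of Section~\ref{sec:compositional}, but instead of building the full (exponential-size) refined SID, I would \emph{guess} a small witness and verify it in polynomial time. Concretely, suppose $\alpha$ is bounded by a constant. Then $\HASAT$ has at most $k \DEFEQ 2^{2\alpha^2+\alpha}$ states, which is also a constant. An instance $(\SRD,\sh)$ of $\DPROBLEM{SL-SAT}$ is a yes-instance iff $\sh$ is satisfiable, which by Lemma~\ref{thm:symbolic-heaps:semantics} holds iff some unfolding $\rsh \in \CALLSEM{\sh}{\SRD}$ is satisfiable, i.e. (Theorem~\ref{thm:zoo:sat:property}) iff $\CALLSEM{\sh}{\SRD} \cap L(\HASAT) \neq \emptyset$.

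First I would invoke a pumping argument: if there is \emph{any} unfolding tree $t \in \UTREES{\SRD}{\sh}$ with $\UNFOLD{t} \in L(\HASAT)$, then there is one of height at most $k$. Indeed, run $\HASAT$ on $t$, assigning each node the (unique-per-subtree, by compositionality) state of the unfolding of the subtree rooted there; if some root-to-leaf path has length exceeding $k$, two nodes $\T{u}_1 \sqsubset \T{u}_2$ on that path carry the same state, and replacing the subtree at $\T{u}_1$ by the subtree at $\T{u}_2$ yields a strictly smaller unfolding tree whose unfolding is still assigned the same (final) state at the root, again by compositionality. Iterating gives a tree $t'$ of height $\le k$ with $\UNFOLD{t'} \in L(\HASAT)$. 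Since every rule of $\SRD$ has at most $M \le N \DEFEQ \SIZE{\SRD}+\SIZE{\sh}$ predicate calls, such a $t'$ has at most $M^{k} \le N^{k}$ nodes, i.e. size polynomial in $N$ (as $k$ is constant); and the state labelling $\omega : \DOM(t') \to Q_{\HASAT}$ is likewise of polynomial size.

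The \CCLASS{NP} procedure is then: guess an unfolding tree $t' \in \UTREES{\SRD}{\sh}$ of height $\le k$ together with a labelling $\omega$ with $\omega(\EMPTYSEQ) \in F_{\HASAT}$; verify in polynomial time that $t'$ is a legal unfolding tree (Definition~\ref{def:sl:unfolding-trees}: each child of a node labelled $\sh'$ corresponds to a predicate call of $\sh'$ via a matching rule of $\SRD$) and that $\omega$ is a consistent run, i.e. for every $\T{u} \in \DOM(t')$ with children $\T{u}1,\dots,\T{u}n$ ($n \ge 0$), the transition $\MOVE{\HASAT}{\omega(\T{u})}{t'(\T{u})}{\omega(\T{u}1)\cdots\omega(\T{u}n)}$ holds. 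Each transition check is, by Remark~\ref{rem:closure}, polynomial in $N$ (it amounts to computing the completion of $\SHRINK{t'(\T{u}),\omega(\T{u}1)\cdots\omega(\T{u}n)}$ and reading off the definite allocation/(in)equalities), and there are polynomially many checks. If some check fails, reject; otherwise accept. By compositionality of $\HASAT$, a consistent run on $t'$ with a final state at the root certifies $\OMEGA{\HASAT}{\omega(\EMPTYSEQ)}{\UNFOLD{t'}}$, hence $\UNFOLD{t'} \in L(\HASAT)$, hence $\sh$ is satisfiable; conversely, by the pumping argument any satisfiable instance admits such a small certificate.

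The main obstacle is making the pumping step fully rigorous, in particular ensuring that the ``cut-and-paste'' on unfolding trees produces another legal unfolding tree — this is where the arity condition on predicate replacement ($\SIZE{\FV{i}{}} = \NOFV{\rsh}$) and the compositionality property of $\HASAT$ must be combined carefully — and in confirming that the per-node transition check is genuinely polynomial even though $\SHRINK{\cdot,\cdot}$ substitutes the (constant-size, by Definition~\ref{def:zoo:track:kernel}) kernels $\SSIGMA{\PS_i\FV{i}{}}{\T{q}[i]}$ into a symbolic heap appearing in $\SRD$. Both points are routine given the results already established (Lemma~\ref{thm:zoo:track:compositional}, Remark~\ref{rem:closure}), so I would relegate the detailed verification to the appendix.
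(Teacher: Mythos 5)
Your proposal is correct and follows essentially the same route as the paper's own proof in Appendix~\ref{app:zoo:sat:np}: bound the height of a witness unfolding tree by the (constant) number of states $k = 2^{2\alpha^2+\alpha}$ of $\HASAT$ via a pumping argument, guess such a tree together with a state labelling $\omega$, and verify the run node-by-node in polynomial time using Remark~\ref{rem:closure} and compositionality. If anything, your existentially quantified acceptance condition ($\exists t\,.\,\exists\omega$ with a consistent final run) is stated more cleanly than the paper's displayed formula, which appears to carry over the universally quantified condition from the \CCLASS{coNP} argument for establishment.
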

\begin{proof}
  Let $(\SRD,\sh) \in \DPROBLEM{SL-SAT}$ and $N = \SIZE{\SRD} + \SIZE{\sh}$.
  %
  Moreover, let $n \leq N$ be the maximal number of predicate calls occurring in $\sh$ and any rule of $\SRD$.
  Since $\alpha$ is a constant, the number of states of
  heap automaton $\HASAT$ (cf. the proof of Theorem~\ref{thm:zoo:sat:property})
  is a constant,
  namely $k = 2^{2\alpha^2 + \alpha}$.
  Now, let $\UTREES{\SRD}{\sh}^{\leq k}$ denote the set of all unfolding trees $t \in \UTREES{\SRD}{\sh}$ of height at most $k$.
  Clearly, each of these trees is of size $\SIZE{t} \leq n^{k} \leq N^{k}$, i.e., polynomial in $N$.
  Moreover, let $\omega: \DOM(t) \to Q_{\HASAT}$ be a function mapping each node of $t$ to a state of
  $\HASAT$. 
  Again, $\omega$ is of size polynomial in $N$; as such $\SIZE{\omega} \leq k \cdot N^{k}$.
  Let $\Omega_{t}$ denote the set of all of these functions $\omega$ for a given
  unfolding tree $t$ with $\omega(\EMPTYSEQ) \in F_{\HASAT}$.
  Now, given an unfolding tree $t \in \UTREES{\SRD}{\sh}^{\leq k}$ and $\omega \in \Omega_{t}$, we can easily decide whether $\OMEGA{A}{\omega(\EMPTYSEQ)}{\UNFOLD{t}}$ holds:
  For each $u,u1,\ldots,un \in \DOM(t)$, $u(n+1) \notin \DOM(t)$, $n \geq 0$ it suffices to check whether $\MOVE{A}{\omega(u)}{t(u)}{\omega(u1) \ldots \omega(un)}$.
  Since, by Remark~\ref{rem:closure}, 
  each of these checks can be performed in time polynomial in $N$,
  the whole procedure is feasible in polynomial time.

  %
  Then, our decision procedure for $\DPROBLEM{SL-SAT}$ answers yes on input $(\SRD,\sh)$ if and only if
  \begin{align*}
    \exists t \in \UTREES{\SRD}{\sh}^{\leq k} ~.~ \forall \omega \in \Omega_{t} ~.~  \text{not}~ \OMEGA{A}{\omega(\EMPTYSEQ)}{\UNFOLD{t}}.
  \end{align*}
  Since $t$ and $\omega$ are both of size polynomial in $N$, this procedure is in $\CCLASS{NP}$.
  Regarding correctness, we first note that
  $\CALLSEM{\sh}{\SRD} \cap \SATPROP(\alpha) \neq \emptyset$ holds iff $\UNFOLD{t} \in \SATPROP(\alpha)$
  for some $t \in \UTREES{\SRD}{\sh}$.
  Furthermore, by a standard pumping argument, it suffices to consider trees in $\UTREES{\SRD}{\sh}^{\leq k}$:
  If there exists a taller tree $t$ with $\UNFOLD{t} \in \SATPROP(\alpha)$ then there is some path of length greater $k$ in $t$ on which two nodes are assigned the same state by
  a function $\omega \in \Omega_{t}$ proving membership of $t$ in $\SATPROP(\alpha)$.
  Thus, this path can be shortened to obtain a tree of smaller height whose unfolding is satisfiable.
  \qed
  \end{proof}
%
\section{
Compositionality of $\HAEST$, $\HAGARBAGE$, $\HACYCLE$
}
\label{app:zoo:scheme:compositionality}
Some of our heap automata presented in Section~\ref{sec:zoo}
follow a common construction scheme.
In particular this holds for
\begin{itemize}
  \item $\HAEST$ (cf. Lemma~\ref{thm:zoo:establishment} and Appendix~\ref{app:zoo:establishment}),
  \item $\HAGARBAGE$ (cf. Lemma~\ref{thm:zoo:garbage:property} and Appendix~\ref{app:zoo:garbage:property}), and
  \item $\HACYCLE$ (cf. Lemma~\ref{thm:zoo:acyclicity:property} and Appendix~\ref{app:zoo:acyclicity:property}).
\end{itemize}
Intuitively, each of these automata evaluates a predicate
$\CHECK : \SL{}{\SRDCLASSFV{\alpha}} \times Q_{\HA{A}}^{*} \to \{0,1\}$
while running a heap automaton $\HA{A}$
-- which is either $\HATRACK$ (cf. Definition~\ref{def:zoo:track-automaton})
or $\HAREACH$ (cf. Lemma~\ref{thm:zoo:reachability:property})  --
in parallel to collect required knowledge about the relationships (equalities, allocation, reachability)
between free variables.
Due to these similarities, we prove the compositionality property for a general construction
scheme as described above.
More formally,
\begin{definition} \label{def:zoo:scheme-automaton}
  Let $\HA{A} \in \{ \HASAT, \HAREACH \}$
  be either the tracking automaton
  $\HATRACK$ (cf. Definition~\ref{def:zoo:track-automaton})
  or the reachability automaton
  $\HAREACH$ (cf. Lemma~\ref{thm:zoo:reachability:property}).
  Moreover, let
  $F \subseteq Q_{\HA{A}} \times \{0,1\}$.
  Further, let $\CHECK : \SL{}{\SRDCLASSFV{\alpha}} \times Q_{\HA{A}}^{\star} \to \{0,1\}$
  be a Boolean predicate such that for each
  $\sh \in \SL{}{\SRDCLASSFV{\alpha}}$ with $\NOCALLS{\sh} = m$
  $\rsh_1,\ldots,\rsh_m \in \RSL{}{\SRDCLASSFV{\alpha}}$,
  $\MOVE{A}{p_0}{\sh}{p_1 \ldots p_m}$, and $\OMEGA{A}{p_i}{\rsh_i}$,
  we have
  \begin{align*}
   \CHECK(\sh[\PS_1^{\sh} / \rsh_1, \ldots, \PS_m^{\sh} / \rsh_m],\EMPTYSEQ) = 1
  \end{align*}
  if and only if such that
  \begin{align*}
    \CHECK(\sh, p_1 \ldots p_m) = \CHECK(\rsh_1,\EMPTYSEQ) = \ldots = \CHECK(\rsh_m,\EMPTYSEQ) = 1.
  \end{align*}
  Then the heap automaton $\HASCHEME{\HA{A}}{\CHECK}{F}$ is given by
  \begin{align*}
   & \HASCHEME{\HA{A}}{\CHECK}{F} = (Q,\SRDCLASSFV{\alpha},\Delta,F)~,\text{where} \\
   & Q ~\DEFEQ~ Q_{\HA{A}} \times \{0,1\} \\
   &  \Delta ~~:~~ \MOVE{B}{(p_0,q_0)}{\sh}{(p_1,q_1) \ldots (p_m,q_m)} \\
   & ~\text{iff}~ \MOVE{A}{p_0}{\sh}{p_1\ldots p_m} \\
   & \qquad ~\text{and}~ q_0 = \min \{q_1,\ldots,q_m,\CHECK(\sh,p_1 \ldots p_m)\}~,
  \end{align*}
  where $\NOCALLS{\sh} = m \geq 0$.
\end{definition}
\begin{lemma} \label{thm:zoo:scheme:compositionality}
  The heap automaton $\HA{B} = \HASCHEME{\HA{A}}{\CHECK}{F}$ satisfies the
  compositionality property.
\end{lemma}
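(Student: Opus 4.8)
The plan is to verify the compositionality property of Definition~\ref{def:refinement:automaton} directly for $\HA{B} = \HASCHEME{\HA{A}}{\CHECK}{F}$, by unfolding the definition of $\Delta_{\HA{B}}$ from Definition~\ref{def:zoo:scheme-automaton} and reducing the claim to two ingredients: (i) the compositionality of the underlying automaton $\HA{A} \in \{\HATRACK,\HAREACH\}$, which is already available since both are heap automata, and (ii) the distributivity-style hypothesis imposed on $\CHECK$ in Definition~\ref{def:zoo:scheme-automaton}. I fix a state $(p_0,q_0) \in Q = Q_{\HA{A}} \times \{0,1\}$, a symbolic heap $\sh \in \SHCLASSFV{\alpha}$ with predicate calls $\CALLN{1}{} \SEP \ldots \SEP \CALLN{m}{}$, reduced symbolic heaps $\rsh_1,\ldots,\rsh_m \in \RSHCLASSFV{\alpha}$, and set $\rsha = \sh[\PS_1/\rsh_1,\ldots,\PS_m/\rsh_m]$. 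First I would record the two base-case facts used throughout: for a reduced $\rsh$ the minimum in the definition of $\Delta_{\HA{B}}$ ranges over the empty set of children, so $\OMEGA{B}{(p,q)}{\rsh}$ holds iff $\OMEGA{A}{p}{\rsh}$ and $q = \CHECK(\rsh,\EMPTYSEQ)$; and since $\NOCALLS{\rsha} = 0$, the assertion $\OMEGA{B}{(p_0,q_0)}{\rsha}$ is itself such a base-case transition.

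For the forward direction I would assume there is $\T{s} \in Q^m$ with $\MOVE{B}{(p_0,q_0)}{\sh}{\T{s}}$ and $\OMEGA{B}{\PROJ{\T{s}}{i}}{\rsh_i}$ for all $i$. Writing $\PROJ{\T{s}}{i} = (p_i,q_i)$, the transition in $\Delta_{\HA{B}}$ gives $\MOVE{A}{p_0}{\sh}{p_1 \ldots p_m}$ and $q_0 = \min\{q_1,\ldots,q_m,\CHECK(\sh,p_1 \ldots p_m)\}$, while the base-case transitions give $\OMEGA{A}{p_i}{\rsh_i}$ and $q_i = \CHECK(\rsh_i,\EMPTYSEQ)$. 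Compositionality of $\HA{A}$ then yields $\OMEGA{A}{p_0}{\rsha}$, and the hypothesis on $\CHECK$, applied to exactly this tuple, states that $\CHECK(\rsha,\EMPTYSEQ) = 1$ iff $\CHECK(\sh,p_1 \ldots p_m)$ and every $\CHECK(\rsh_i,\EMPTYSEQ)$ equal $1$; as all these values are Boolean this is precisely $\CHECK(\rsha,\EMPTYSEQ) = \min\{q_1,\ldots,q_m,\CHECK(\sh,p_1 \ldots p_m)\} = q_0$, so $\OMEGA{B}{(p_0,q_0)}{\rsha}$. The converse direction runs the same chain backwards: from $\OMEGA{B}{(p_0,q_0)}{\rsha}$ I obtain $\OMEGA{A}{p_0}{\rsha}$ and $q_0 = \CHECK(\rsha,\EMPTYSEQ)$; compositionality of $\HA{A}$ supplies $p_1,\ldots,p_m$ with $\MOVE{A}{p_0}{\sh}{p_1 \ldots p_m}$ and $\OMEGA{A}{p_i}{\rsh_i}$; setting $q_i := \CHECK(\rsh_i,\EMPTYSEQ)$ and $\PROJ{\T{s}}{i} := (p_i,q_i)$ makes each $\OMEGA{B}{(p_i,q_i)}{\rsh_i}$ a valid base-case transition, and the $\CHECK$-hypothesis again identifies $q_0 = \CHECK(\rsha,\EMPTYSEQ)$ with $\min\{q_1,\ldots,q_m,\CHECK(\sh,p_1 \ldots p_m)\}$, giving $\MOVE{B}{(p_0,q_0)}{\sh}{\T{s}}$.

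I would additionally remark that $\HA{B}$ is indeed a heap automaton: $Q$ is finite and $\Delta_{\HA{B}}$ is decidable because $\Delta_{\HA{A}}$ is and $\CHECK$ is a computable Boolean predicate. The only genuinely delicate point is the bookkeeping, namely translating the ``iff all components equal $1$'' formulation of the $\CHECK$-hypothesis into the $\min$ that appears in $\Delta_{\HA{B}}$ and handling the degenerate empty minimum for reduced heaps; neither is difficult, but this is where a careless write-up would slip. Finally, the lemma then specializes to $\HAEST$, $\HAGARBAGE$ and $\HACYCLE$ simply by checking, in each case, that the concrete $\CHECK$ predicate used there satisfies the hypothesis of Definition~\ref{def:zoo:scheme-automaton}; those are separate routine verifications that I would leave to the respective appendices.
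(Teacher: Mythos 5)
Your proposal is correct and follows essentially the same route as the paper's proof: both reduce the claim to the compositionality of the underlying automaton $\HA{A}$ together with the distributivity hypothesis on $\CHECK$ from Definition~\ref{def:zoo:scheme-automaton}, choose $q_i = \CHECK(\rsh_i,\EMPTYSEQ)$ for the intermediate states, and identify the $\min$ in $\Delta_{\HA{B}}$ with the conjunction of the $\CHECK$ values. The paper writes the argument as a single chain of equivalences rather than two separate directions, but the content is the same.
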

\begin{proof}
  Let  $\sh \in \SL{}{\SRDCLASSFV{\alpha}}$ with $\NOCALLS{\sh} = m$
  and $\rsh_1,\ldots,\rsh_m \in \RSL{}{\SRDCLASSFV{\alpha}}$.
  Moreover, let $\rsh = \sh[\PS_1^{\sh} / \rsh_1, \ldots, \PS_m^{\sh} / \rsh_m]$.
  We have to show that for each $(p_0,q_0) \in Q_{\HA{B}}$ it holds that
  \begin{align*}
    \OMEGA{B}{(p_0,q_0)}{\rsh} \quad\text{iff}\quad & \exists (p_1,q_1),\ldots,(p_m,q_m) \in Q_{\HA{B}} ~.~ \\
    & \quad\MOVE{B}{(p_0,q_0)}{\sh}{(p_1,q_1)\ldots(p_m,q_m)} \\
    & \quad\text{and}~ \forall 1 \leq i \leq m ~.~ \OMEGA{B}{(p_i,q_i)}{\rsh_i}.
  \end{align*}
  Assume $\OMEGA{B}{(p_0,q_0)}{\rsh}$.
  By definition of $\Delta_{\HA{B}}$
  this is the case if and only if
  $ 
     \OMEGA{A}{p_0}{\rsh} ~\text{and}~ q_0 = \CHECK(\rsh,\EMPTYSEQ).
  $ 
  For each $1 \leq i \leq m$, we set $q_i = \CHECK(\tau_i,\EMPTYSEQ) \in \{0,1\}$.
  Moreover, since $\HA{A}$ is known to satisfy the compositionality property by Lemma~\ref{thm:zoo:track:property} (or Lemma~\ref{thm:zoo:reachability:property}), this is equivalent to
  \begin{align*}
                  & q_0 = \CHECK(\rsh,\EMPTYSEQ)
                    ~\text{and}~ \forall 1 \leq i \leq m ~.~ q_i = \CHECK(\rsh_i,\EMPTYSEQ) \\
     ~\text{and}~ &\exists p_1,\ldots,p_m \in Q_{\HA{A}} ~.~ \MOVE{A}{p_0}{\sh}{p_1 \ldots p_m} \\
     ~\text{and}~ &\forall 1 \leq i \leq m ~.~ \OMEGA{A}{p_0}{\rsh_i}.
  \end{align*}
  By construction of $\HA{B}$ and since each $\rsh_i$ contains no predicate calls this is equivalent to
  \begin{align*}
                  & q_0 = \CHECK(\rsh,\EMPTYSEQ) \tag{$\dag$} \\
     ~\text{and}~ &\exists (p_1,q_1),\ldots,(p_m,q_m) \in Q_{\HA{B}} ~.~ \MOVE{A}{p_0}{\sh}{p_1 \ldots p_m} \\
     ~\text{and}~ &\forall 1 \leq i \leq m ~.~ \OMEGA{B}{(p_0,q_0)}{\rsh_i}.
  \end{align*}
  Now, by Definition~\ref{def:zoo:scheme-automaton},
  $\CHECK(\rsh,\EMPTYSEQ) = 1$ if and only if
  $\CHECK(\sh,p_1 \ldots p_m) = 1$ and
  $\CHECK(\rsh_1,\EMPTYSEQ) = \ldots = \CHECK(\rsh_m,\EMPTYSEQ) = 1$.
  Thus
  \begin{align*}
   q_0 ~=~ \CHECK(\rsh,\EMPTYSEQ) ~=~ & \min \{ \CHECK(\sh,p_1\ldots p_m), \\
                                      & \qquad \quad
                                         \CHECK(\rsh_1,\EMPTYSEQ), \ldots \CHECK(\rsh_m,\EMPTYSEQ) \} \\
                         ~=~ & \min \{ \CHECK(\sh,p_1\ldots p_m), q_1, \ldots, q_m \}.
  \end{align*}
  Putting this equation into $(\dag)$, we obtain the equivalent statement
  \begin{align*}
                  & \exists (p_1,q_1),\ldots,(p_m,q_m) \in Q_{\HA{B}} ~.~ \MOVE{A}{p_0}{\sh}{p_1 \ldots p_m} \\
     ~\text{and}~ & q_0 = \min \{ \CHECK(\sh,p_1 \ldots p_m), q_1, \ldots, q_m \} \\
     ~\text{and}~ & \forall 1 \leq i \leq m ~.~ \OMEGA{B}{(p_0,q_0)}{\rsh_i}.
  \end{align*}
  By definition of $\Delta_{\HA{B}}$, this is equivalent to
  \begin{align*}
                  & \exists (p_1,q_1),\ldots,(p_m,q_m) \in Q_{\HA{B}} ~.~ \\
                  & \qquad \MOVE{B}{(p_0,q_0)}{\sh}{(p_1,q_1) \ldots (p_m,q_m)} \\
                  & \qquad \text{and}~ \forall 1 \leq i \leq m ~.~ \OMEGA{B}{(p_0,q_0)}{\rsh_i}.
  \end{align*}
  Hence, $\HA{B} = \HASCHEME{\HA{A}}{\CHECK}{F}$ satisfies the compositionality property.
  \qed
\end{proof}
%
%
\section{Proof of Theorem~\ref{thm:zoo:establishment}} \label{app:zoo:establishment}
We have to construct a heap automaton
$\HAEST$ over $\SHCLASSFV{\alpha}$ that satisfies the compositionality property
and accepts $\ESTPROP(\alpha)$.
In order to highlight the necessary proof obligations, the actual construction of $\HAEST$
and its correctness proof are splitted into several definitions and lemmas that are
provided afterwards.
The construction of $\HAEST$ was already sketched in the paper.
A fully formal construction of $\HAEST$ is found in Definition~\ref{def:zoo:establishment:automaton}.
It then remains to show the correctness of our construction of $\HAEST$:
\begin{itemize}
  \item Lemma~\ref{thm:zoo:establishment:language} establishes that $\HAEST$ indeed accepts $\ESTPROP(\alpha)$, i.e., $L(\HAEST) = \ESTPROP(\alpha)$.
  \item To prove the compositionality property, we show that $\HAEST$ is an instance of a more general
        construction scheme whose compositionality property is shown in
        Lemma~\ref{thm:zoo:scheme:compositionality}.
        In order to apply Lemma~\ref{thm:zoo:scheme:compositionality}, we have to show that
        \begin{align*}
                          & \CHECK(\sh[\PS_1^{\sh} / \rsh_1, \ldots, \PS_m^{\sh} / \rsh_m], \EMPTYSEQ) = 1 \\
             ~\text{iff}~ & \CHECK(\sh,p_1 \ldots p_m) = 1 \\
                          & \text{and}~ \CHECK(\rsh_1,\EMPTYSEQ) = \ldots = \CHECK(\rsh_m,\EMPTYSEQ) = 1.
         \end{align*}
         This is verified in Lemma~\ref{thm:zoo:establishment:compositionality}.
         Then, by Lemma~\ref{thm:zoo:scheme:compositionality}, we know that 
         $\HAEST = \HASCHEME{\HATRACK}{\CHECK}{F}$ satisfies the compositionality property.
\end{itemize}
Putting both together, we obtain a heap automaton $\HAEST$ over $\SHCLASSFV{\alpha}$
that satisfies the compositionality property and accepts $\ESTPROP(\alpha)$.
\qed
The remainder of this section fills the gaps in the proof from above.
\begin{definition} \label{def:zoo:establishment:automaton}
$\HAEST = (Q,\SL{}{\SRDCLASSFV{\alpha}},\Delta,F)$ is given by
  \begin{align*}
   & Q ~\DEFEQ~ Q_{\HATRACK} \times \{0,1\}, \qquad F ~\DEFEQ~ Q_{\HATRACK} \times \{1\}, \\
   &  \Delta ~~:~~ \MOVE{\HAEST}{(p_0,q_0)}{\sh}{(p_1,q_1) \ldots (p_m,q_m)} \\
   & ~\text{iff}~ \MOVE{\HATRACK}{p_0}{\sh}{p_1\ldots p_m} \\
   & \qquad \text{and}~ q_0 = \min \{q_1,\ldots,q_m,\CHECK(\sh,p_1 \ldots p_m)\}.
  \end{align*}
  Here, $\CHECK : \SHCLASSFV{\alpha} \times Q_{\HATRACK}^{*} \to \{0,1\}$ is a predicate given by
  \begin{align*}
   \CHECK(\sh,\T{p}) ~\DEFEQ~ \begin{cases}
                                   1 &, ~\text{if}~ \forall y \in \VAR(\sh) ~.~  y \in \ALLOC{\SHRINK{\sh,\T{p}}} \\
                                    & \qquad\text{or}~ \exists x \in \FV{0}{\sh} ~.~ x \MEQ{\SHRINK{\sh,\T{p}}} y \\
                                   0 &, ~\text{otherwise}~,
                                 \end{cases}
  \end{align*}
  where $\SHRINK{\sh,\T{p}}$ is the reduced symbolic heap obtained from the tracking property as in Definition~\ref{def:zoo:track-automaton}.
\end{definition}
\begin{lemma} \label{thm:zoo:establishment:language}
  $L(\HAEST) = \ESTPROP(\alpha)$.
\end{lemma}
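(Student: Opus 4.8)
The plan is to expand both sides directly from the definitions, reducing the statement to the fact that $\CHECK(\rsh,\EMPTYSEQ)=1$ captures exactly the establishment condition on $\rsh$.

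First I would fix an arbitrary $\rsh \in \RSHCLASSFV{\alpha}$. Since $\rsh$ is reduced, $\NOCALLS{\rsh}=0$, so any transition of $\HAEST$ on $\rsh$ has the form $\OMEGA{\HAEST}{(p_0,q_0)}{\rsh}$ with an empty tuple of input states. By Definition~\ref{def:zoo:establishment:automaton}, such a transition exists if and only if $\OMEGA{\HATRACK}{p_0}{\rsh}$ and $q_0 = \min\{\CHECK(\rsh,\EMPTYSEQ)\}=\CHECK(\rsh,\EMPTYSEQ)$, the minimum being over a singleton because there are no predicate calls. Using $F_{\HAEST}=Q_{\HATRACK}\times\{1\}$, this gives
\[
\rsh \in L(\HAEST) \quad\Leftrightarrow\quad \big(\exists p_0 \in Q_{\HATRACK} ~.~ \OMEGA{\HATRACK}{p_0}{\rsh}\big) ~\text{and}~ \CHECK(\rsh,\EMPTYSEQ)=1.
\]

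Next I would observe that $\HATRACK$ always admits a run on a reduced symbolic heap: inspecting $\Delta_{\HATRACK}$ from Definition~\ref{def:zoo:track-automaton} together with $\SHRINK{\rsh,\EMPTYSEQ}=\rsh$, one sees that $\OMEGA{\HATRACK}{(B,\Lambda)}{\rsh}$ holds for the state $(B,\Lambda)$ with $B=\{\,\PROJ{\FV{0}{}}{i}\mid \PROJ{\FV{0}{\rsh}}{i}\in\ALLOC{\rsh}\,\}$ and $\Lambda=\{\,\PROJ{\FV{0}{}}{i}\sim\PROJ{\FV{0}{}}{j}\mid \PROJ{\FV{0}{\rsh}}{i}\MSIM{\rsh}\PROJ{\FV{0}{\rsh}}{j}\,\}$; since the definite relationships are well defined, at least one such state exists. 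Hence the existential quantifier over $p_0$ is vacuously satisfied and
\[
\rsh \in L(\HAEST) \quad\Leftrightarrow\quad \CHECK(\rsh,\EMPTYSEQ)=1.
\]

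Finally I would unfold $\CHECK$: again using $\SHRINK{\rsh,\EMPTYSEQ}=\rsh$, Definition~\ref{def:zoo:establishment:automaton} yields $\CHECK(\rsh,\EMPTYSEQ)=1$ iff for every $y\in\VAR(\rsh)$ we have $y\in\ALLOC{\rsh}$ or there is $x\in\FV{0}{\rsh}$ with $x\MEQ{\rsh}y$, which is verbatim the membership condition of $\ESTPROP(\alpha)$ from Theorem~\ref{thm:zoo:establishment}. Chaining the three equivalences gives $L(\HAEST)=\ESTPROP(\alpha)$. There is no genuine obstacle: the argument is a routine expansion of definitions, and the only point needing a moment's care is the totality remark for $\HATRACK$ on reduced heaps, which is what lets us discard the existential over $p_0$.
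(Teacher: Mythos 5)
Your proof is correct and follows essentially the same route as the paper's: unfold $L(\HAEST)$, the final states, $\Delta_{\HAEST}$, and $\CHECK$, then use $\SHRINK{\rsh,\EMPTYSEQ}=\rsh$ to land on the defining condition of $\ESTPROP(\alpha)$. In fact you are slightly more careful than the paper, which silently drops the existential quantifier over the $\HATRACK$-state in its last step, whereas you explicitly justify that $\HATRACK$ admits a run on every reduced symbolic heap.
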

\begin{proof}
Let $\rsh \in \RSL{}{\SRDCLASSFV{\alpha}}$. Then:
\begin{align*}
                   & \rsh \in L(\HAEST) \\
 ~\Leftrightarrow~ & \left[ \text{Definition of}~L(\HAEST) \right] \\
                   & \exists q \in F_{\HAEST} ~.~ \OMEGA{\HAEST}{q}{\rsh} \\
 ~\Leftrightarrow~ & \left[ \text{Definition of}~F_{\HAEST},~q = (p,1) \right] \\
                   & \exists p \in Q_{\HATRACK} ~.~ \OMEGA{\HAEST}{(p,1)}{\rsh} \\
 ~\Leftrightarrow~ & \left[ \text{Definition of}~\Delta_{\HAEST} \right] \\
                   & \exists p \in Q_{\HATRACK} ~.~ \OMEGA{\HATRACK}{p}{\rsh} ~\text{and}~ \CHECK(\rsh,\EMPTYSEQ) = 1 \\
 ~\Leftrightarrow~ & \left[ \text{Definition of}~\CHECK(\rsh,\EMPTYSEQ) \right] \\
                   & \exists p \in Q_{\HATRACK} ~.~ \OMEGA{\HATRACK}{p}{\rsh} \\
                   & \qquad \text{and}~ \forall y \in \VAR(\rsh) ~.~ y \in \ALLOC{\SHRINK{\rsh,\EMPTYSEQ}} \\
                   & \qquad \qquad \text{or}~ \exists x \in \FV{0}{\rsh} ~.~ x \MEQ{\SHRINK{\rsh,\EMPTYSEQ}} y \\
 ~\Leftrightarrow~ & \left[ \NOCALLS{\rsh} = 0 ~\text{implies}~ \rsh = \SHRINK{\rsh,\EMPTYSEQ} \right] \\
                   & \forall y \in \VAR(\rsh) ~.~ y \in \ALLOC{\rsh} ~\text{or}~ \exists x \in \FV{0}{\rsh} ~.~ x \MEQ{\rsh} y \\
 ~\Leftrightarrow~ & \left[ \text{Definition of}~\ESTPROP(\alpha) \right] \\
                   & \rsh \in \ESTPROP(\alpha).
\end{align*}
\qed
\end{proof}
\begin{lemma} \label{thm:zoo:establishment:compositionality}
  Let  $\sh \in \SL{}{\SRDCLASSFV{\alpha}}$ with $\NOCALLS{\sh} = m$
  and $\rsh_1,\ldots,\rsh_m \in \RSL{}{\SRDCLASSFV{\alpha}}$. Then
  \begin{align*}
                   & \CHECK(\sh[\PS_1^{\sh} / \rsh_1, \ldots, \PS_m^{\sh} / \rsh_m], \EMPTYSEQ) = 1 \\
      ~\text{iff}~ & \CHECK(\sh,\T{p}) \\
                   & \text{and}~ \CHECK(\rsh_1,\EMPTYSEQ) = \ldots = \CHECK(\rsh_m,\EMPTYSEQ) = 1. 
  \end{align*}
\end{lemma}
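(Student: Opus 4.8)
\textbf{Proof plan for Lemma~\ref{thm:zoo:establishment:compositionality}.}
The statement asserts that the $\CHECK$ predicate defined in Definition~\ref{def:zoo:establishment:automaton} ``distributes'' over the substitution of predicate calls by reduced symbolic heaps, when all kernels are computed via $\SHRINK{\cdot,\EMPTYSEQ}$ on the already-substituted heap versus via the state tuple $\T{p}$ on the original heap. The plan is to unfold the definition of $\CHECK$ on both sides and reduce the claim to a statement purely about the definite relationships $\ALLOC{\cdot}$ and $\MEQ{\cdot}$, which can then be discharged by the congruence lemma for the tracking kernel (Lemma~\ref{thm:zoo:track:congruence}).

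First I would fix notation: let $\rsh \DEFEQ \sh[\PS_1^{\sh}/\rsh_1,\ldots,\PS_m^{\sh}/\rsh_m]$, and for each $1 \le i \le m$ let $p_i$ be the (unique, by correctness of $\HATRACK$) state with $\OMEGA{\HATRACK}{p_i}{\rsh_i}$, i.e. $p_i = (A_i,\Pi_i)$ where $A_i = \{y \in \FV{0}{} \mid y^{\rsh_i} \in \ALLOC{\rsh_i}\}$ and $\Pi_i = \{x \sim y \mid x^{\rsh_i} \MSIM{\rsh_i} y^{\rsh_i}\}$; set $\T{p} = p_1 \ldots p_m$. By construction, $\SHRINK{\rsh,\EMPTYSEQ} = \rsh$ since $\rsh$ is reduced, while $\SHRINK{\sh,\T{p}} = \sh[\PS_1/\SSIGMA{\CALLN{1}{}}{p_1},\ldots,\PS_m/\SSIGMA{\CALLN{m}{}}{p_m}]$. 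Now Lemma~\ref{thm:zoo:track:congruence} — applied exactly with these $A_i,\Pi_i$ — gives, for every pair of variables $x,y \in \VAR(\sh)$:
\begin{align*}
  x \in \ALLOC{\rsh} &\iff x \in \ALLOC{\SHRINK{\sh,\T{p}}}, \\
  x \MEQ{\rsh} y &\iff x \MEQ{\SHRINK{\sh,\T{p}}} y.
\end{align*}
Hence $\CHECK(\sh,\T{p}) = 1$ iff every $y \in \VAR(\sh)$ is definitely allocated in $\rsh$ or definitely equal in $\rsh$ to some free variable of $\sh$.

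Next, the key bookkeeping step: I would observe that $\VAR(\rsh)$ is partitioned (up to variable renaming, as per Definition~\ref{def:symbolic-heaps:unfolding}) into $\VAR(\sh)$ together with the existentially quantified ``fresh'' variables contributed by each $\rsh_i$, and that by the definition of predicate replacement the free variables of $\rsh_i$ become identified with the parameters $\FV{i}{\sh}$, which lie in $\VAR(\sh)$. So the condition ``$\CHECK(\rsh,\EMPTYSEQ) = 1$'', i.e. every variable of $\rsh$ is definitely allocated or definitely equal to a free variable of $\rsh$, splits as: (a) the condition restricted to variables of $\sh$ — which, using the congruence equivalences above and noting $\FV{0}{\rsh} = \FV{0}{\sh}$, is precisely $\CHECK(\sh,\T{p}) = 1$; and (b) for each $i$, the condition restricted to the internal variables of $\rsh_i$. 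For (b) I would argue that a variable local to $\rsh_i$ is definitely allocated / definitely equal to a free variable \emph{in $\rsh$} iff the same holds \emph{within $\rsh_i$} — the ``$\Leftarrow$'' direction is monotonicity of definite relationships under adding more spatial/pure content via $\SEP$, and the ``$\Rightarrow$'' direction uses that the only way $\rsh$ can force such a relationship on an internal variable of $\rsh_i$ is through $\rsh_i$ itself (the fresh internal variables of $\rsh_i$ do not occur outside the sub-heap $\SPATIAL{\rsh_i}$ and its pure part). This yields exactly $\CHECK(\rsh_i,\EMPTYSEQ) = 1$. Conjoining (a) and all the (b)'s gives the claimed equivalence.

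\textbf{Main obstacle.} The delicate point is step (b): carefully justifying that definite allocation and definite equality of an \emph{internal} variable of $\rsh_i$ are unaffected by embedding $\rsh_i$ inside the larger context $\rsh$. This is where I would lean on Lemma~\ref{thm:symbolic-heaps:fv-coincidence} (coincidence on free variables) and on the structural characterizations of the definite relationships from Lemma~\ref{thm:zoo:relationships:characterizations} (via completions): since the fresh internal variables of $\rsh_i$ occur in no pure formula or points-to assertion of $\rsh$ outside the copy of $\rsh_i$, the closure computation that determines $\MEQ{}$, $\MNEQ{}$, and $\ALLOC{}$ for those variables sees exactly the same constraints in $\rsh$ as in $\rsh_i$ (modulo constraints that can only \emph{add} new equalities/allocations, consistent with the $\Leftarrow$ direction). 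Making this ``locality of definite relationships'' argument precise — ideally by an induction mirroring the iterated application in the proof of Lemma~\ref{thm:zoo:track:congruence}, replacing one predicate call at a time and tracking both the $\sh$-variables and the just-introduced internal variables — is the bulk of the work; everything else is routine unfolding of $\CHECK$ and the final propositional rearrangement.
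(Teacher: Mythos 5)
Your overall route is the same as the paper's: unfold $\CHECK$, split $\VAR(\rsh)$ into $\VAR(\sh)$ and the internal variables of the $\rsh_i$, discharge the $\VAR(\sh)$ part with the tracking congruence lemma (Lemma~\ref{thm:zoo:track:congruence}), and handle the internal variables by a propagation-through-parameters argument (which the paper isolates as Lemma~\ref{obs:zoo:establishment:equality-propagation}). However, your step (b) is stated as an unconditional biconditional, and in that form it is false in the direction you justify by ``monotonicity'': a variable internal to $\rsh_i$ that is established \emph{within $\rsh_i$} need not be established \emph{in $\rsh$}. Concretely, take $\sh = \exists z \,.\, \PS_1(z)$ and $\rsh_1 = \exists w \,.\, \EMP : \{ w = \IFV{1} \}$. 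Then $\CHECK(\rsh_1,\EMPTYSEQ) = 1$, since $w$ is definitely equal to the free variable $\IFV{1}$ of $\rsh_1$; but in $\rsh = \sh[\PS_1/\rsh_1] = \exists z\, w \,.\, \EMP : \{ w = z \}$ the variable $w$ is equal only to the dangling existential $z$, so it is neither allocated nor definitely equal to any free variable of $\rsh$, and your (b)$_1$ fails while its purported equivalent holds. (The lemma itself is unharmed because $\CHECK(\sh,\T{p}) = 0$ here, so both sides of the overall ``iff'' are false --- but your decomposition into independent equivalences breaks.)

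The repair --- which is exactly what the paper's single chain of equivalences does --- is to establish (b)$_i \Leftrightarrow \CHECK(\rsh_i,\EMPTYSEQ)=1$ only \emph{in conjunction with} (a): once the parameters $\FV{i}{\sh}$ are known to be established in $\SHRINK{\sh,\T{p}}$ (the paper's bracketed step ``$\FV{0}{\rsh_i}$ is substituted by $\FV{i}{\sh} \in \VAR(\sh)$ which are all established due to $\CHECK(\SHRINK{\sh,\T{p}}) = 1$''), definite equality of an internal variable to a parameter does lift to establishment in $\rsh$; and conversely any establishment of an internal variable in $\rsh$ must factor through a parameter by the propagation lemma. So you cannot simply ``conjoin (a) and all the (b)'s''; the $\Leftarrow$ direction of each (b)$_i$ must be argued under hypothesis (a). With that adjustment your plan coincides with the paper's proof: the $\Rightarrow$ direction of (b), which you justify by locality of the closure computation, is sound and is precisely the content of Lemma~\ref{obs:zoo:establishment:equality-propagation} together with the analogous observation for $\ALLOC{\cdot}$.
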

The proof of Lemma~\ref{thm:zoo:establishment:compositionality} relies
on a technical observation, which intuitively states that
equalities between variables belonging to different nodes of an unfolding tree have to be propagated through parameters.
Formally,
%
\begin{lemma} \label{obs:zoo:establishment:equality-propagation}
 Let  $\sh \in \SL{}{\SRDCLASSFV{\alpha}}$ with $\NOCALLS{\sh} = m$,  $\rsh_1,\ldots,\rsh_m \in \RSL{}{\SRDCLASSFV{\alpha}}$ and
 $\rsh = \sh[\PS_1^{\sh} / \rsh_1, \ldots, \PS_m^{\sh} / \rsh_m]$.
 Moreover, for some $1 \leq i \leq m$, let $x \in \VAR(\rsh_i[\FV{0}{\rsh_i} / \FV{i}{\sh}])$ and $y \in \VAR(\rsh) \setminus \VAR(\rsh_i[\FV{0}{\rsh_i} / \FV{i}{\sh}])$. Then
 $
    x \MEQ{\rsh} y \quad\text{iff}\quad \exists z \in \FV{i}{\sh} ~.~ x \MEQ{\rsh_i[\FV{0}{\rsh_i} / \FV{i}{\sh}]} z ~\text{and}~ z \MEQ{\rsh} y.
 $
\end{lemma}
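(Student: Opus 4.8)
\textbf{Proof proposal for Lemma~\ref{obs:zoo:establishment:equality-propagation}.}

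The plan is to unfold the definition of $\MEQ{\cdot}$ in terms of tight models (via $\STRIP{\cdot}$) and to exploit the structural shape of $\rsh = \sh[\PS_1^{\sh}/\rsh_1,\ldots,\PS_m^{\sh}/\rsh_m]$. Recall that predicate replacement glues $\STRIP{\rsh_i[\FV{0}{\rsh_i}/\FV{i}{\sh}]}$ into $\sh$ so that the only variables shared between the $i$-th replaced block and the rest of $\rsh$ are exactly the parameters in $\FV{i}{\sh}$; all existentially quantified variables of $\rsh_i$ are renamed apart by Definition~\ref{def:symbolic-heaps:unfolding}. Thus, syntactically, $\STRIP{\rsh}$ decomposes as a separating conjunction of the ``frame'' coming from $\sh$ (together with the other replaced blocks) and the block $\STRIP{\rsh_i[\FV{0}{\rsh_i}/\FV{i}{\sh}]}$, sharing variables only through $\FV{i}{\sh}$, and the pure part of $\STRIP{\rsh}$ is the union of the pure parts of these pieces. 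Since $x$ lies in the $i$-th block and $y$ lies strictly outside it, any model-level equality between $\stack(x)$ and $\stack(y)$ must be ``routed'' through a shared parameter.

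The $(\Leftarrow)$ direction is the easy one: if $x \MEQ{\rsh_i[\FV{0}{\rsh_i}/\FV{i}{\sh}]} z$ for some $z \in \FV{i}{\sh}$ and $z \MEQ{\rsh} y$, then in every tight model of $\STRIP{\rsh}$ the stack restricted to the $i$-th block is (an extension of) a tight model of $\STRIP{\rsh_i[\FV{0}{\rsh_i}/\FV{i}{\sh}]}$, hence $\stack(x) = \stack(z)$ there, and $\stack(z) = \stack(y)$ in $\rsh$, so $\stack(x) = \stack(y)$; this gives $x \MEQ{\rsh} y$. For $(\Rightarrow)$, I would argue contrapositively using a model-surgery / model-composition argument: suppose no such $z$ exists, i.e. for every $z \in \FV{i}{\sh}$ either $x$ is not definitely equal to $z$ in the $i$-th block, or $z$ is not definitely equal to $y$ in $\rsh$. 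Pick, for each parameter $z$, a tight model of the relevant subformula witnessing the inequality, then compose these partial models into a single tight model of $\STRIP{\rsh}$ (choosing fresh values for the renamed existentials of the $i$-th block, using the fact that heaps with disjoint domains can be combined) in which $\stack(x) \neq \stack(y)$. The key point enabling the composition is that the $i$-th block interacts with the rest of $\rsh$ only via $\FV{i}{\sh}$, so a model of the block and a model of the frame that agree on those parameters can always be merged; and the equalities forced inside the block are exactly those captured by $\MEQ{\rsh_i[\FV{0}{\rsh_i}/\FV{i}{\sh}]}$ by definition.

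The main obstacle I expect is making the model-composition argument fully rigorous: one needs a clean lemma saying that for $\rsh = \sh[\PS_j/\rsh_j]_j$, a tight model of $\STRIP{\rsh}$ restricted to the parameters $\FV{i}{\sh}$ always extends to a tight model of $\STRIP{\rsh_i[\FV{0}{\rsh_i}/\FV{i}{\sh}]}$ (and conversely, compatible models glue), which is essentially Lemma~\ref{thm:symbolic-heaps:semantics} plus the coincidence property (Appendix~\ref{app:symbolic-heaps:fv-coincidence}) applied at the level of $\STRIP{}$. A subtlety is the handling of $\NIL$ and of existentially quantified variables of $\rsh_i$ that happen to be definitely equal to a parameter: such variables are still ``inside'' the block, so the case split $y \in \VAR(\rsh)\setminus\VAR(\rsh_i[\FV{0}{\rsh_i}/\FV{i}{\sh}])$ correctly excludes them, and one must be careful that $z$ ranging over $\FV{i}{\sh}$ is the right index set (it includes $\NIL$ as a shortcut, which is harmless since $\NIL$ is shared globally). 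Once this gluing lemma is in place, the two directions follow by transitivity of $\MEQ{\cdot}$ and a routine case analysis.
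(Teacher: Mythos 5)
Your right-to-left direction is fine and coincides with the paper's (which dismisses it as ``straightforward''). For the forward direction you take a genuinely different, semantic route, whereas the paper argues purely syntactically: by Lemma~\ref{thm:zoo:relationships:characterizations}, $x \MEQ{\rsh} y$ is equivalent to $x = y$ lying in the (reflexive, symmetric, transitive) closure of the union of the completed pure formulas contributed by $\sh$ and by the blocks $\rsh_j[\FV{0}{\rsh_j}/\FV{j}{\sh}]$. A derivation of $x = y$ in that closure is a finite chain of atomic equalities, each coming from one component; since $x$ lies inside the $i$-th block, $y$ outside it, and the only variables shared between the block and the rest are the parameters $\FV{i}{\sh}$ (with $\NIL$ included by convention), the first step of the chain that uses a non-block equality must start at a parameter $z$, which immediately yields $x \MEQ{\rsh_i[\FV{0}{\rsh_i}/\FV{i}{\sh}]} z$ and $z \MEQ{\rsh} y$. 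No model construction is required.

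The concrete gap in your contrapositive is the step ``pick, for each parameter $z$, a tight model of the relevant subformula witnessing the inequality, then compose these partial models into a single tight model of $\STRIP{\rsh}$.'' For distinct parameters $z$ and $z'$ these witnesses are two \emph{different} models of the \emph{same} subformula (the $i$-th block, respectively $\rsh$ itself), and there is no composition operation that merges several models of one formula into a model realizing all the witnessed disequalities at once; the gluing lemma you invoke only combines one model of the block with one model of the frame along the shared parameters, it does not combine several models of the block with each other. What your argument actually needs is a single \emph{maximally separated} model of $\STRIP{\rsh}$, in which two variables take the same value iff they are forced equal, and whose restriction to the block is itself maximally separated relative to the parameter identifications forced by the whole of $\rsh$. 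The existence of such a model is exactly the content of the closure characterization in Lemma~\ref{thm:zoo:relationships:characterizations}, so either you prove it explicitly (at which point you have reconstructed the paper's syntactic argument in semantic clothing) or you should switch to the chain argument on the closure directly. A minor shared caveat, orthogonal to your approach: both proofs silently assume $\rsh$ is consistent, since otherwise $\MEQ{\rsh}$ is total while the block's relation need not be.
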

\begin{proof}[sketch]
  The direction from right to left is straightforward.
  For the converse direction, observe that, by Lemma~\ref{thm:zoo:relationships:characterizations},
  a pure formula $x = y$ is an element of the closure of pure formulas of $\rsh$.
  However, such an equality cannot be an element of the closures of $\PURE{\sh},\PURE{\rsh_1},\ldots,\PURE{\rsh_m}$, because, by assumption,
  $x$ and $y$ are not both contained in the set of variables of these symbolic heaps.
  Thus, since the only variables shared by $\sh$ and $\rsh_1,\ldots,\rsh_m$ are the parameters of the predicate calls of $\sh$,
  there exists a parameter $z$ of a suitable predicate call such that $x = z$
  is contained in one of the aforementioned closures of pure formulas.
  \qed
\end{proof}
Note that $\NIL$ is always assumed to be a free variable and no other constants occur in our fragment of symbolic heaps.
Otherwise, the observation from above would be wrong.
%

%
\begin{proof}[of Lemma~\ref{thm:zoo:establishment:compositionality}]
  Let $\rsh = \sh[\PS_1^{\sh} / \rsh_1, \ldots, \PS_m^{\sh} / \rsh_m]$.
  Recall that $\SHRINK{\rsh,\EMPTYSEQ}$ denotes the reduced symbolic heap
  introduced in Definition~\ref{def:zoo:track-automaton}.
  Then:
  \begin{align*}
                       & \CHECK(\rsh,\EMPTYSEQ) = 1 \\
     ~\Leftrightarrow~ & \left[ \text{Definition of}~\CHECK,~\SHRINK{\rsh,\EMPTYSEQ} = \rsh \right] \\
                       & \forall y \in \VAR(\rsh) ~.~  y \in \ALLOC{\rsh} ~\text{or}~ \exists x \in \FV{0}{\rsh} ~.~ x \MEQ{\rsh} y \\
     ~\Leftrightarrow~ & \left[ \VAR(\rsh) = \VAR(\sh) \cup \bigcup_{1 \leq i \leq m} \VAR(\rsh_i\left[ \FV{0}{\rsh_i} / \FV{i}{\sh} \right]) \right] \\
                       & \forall y \in \VAR(\sh) ~.~  y \in \ALLOC{\rsh} ~\text{or}~ \exists x \in \FV{0}{\rsh} ~.~ x \MEQ{\rsh} y \\
                       & \quad \text{and}~ \forall 1 \leq i \leq m ~.~ \forall y \in \VAR(\rsh_i\left[ \FV{0}{\rsh_i} / \FV{i}{\sh} \right])  ~.~ \\
                       & \qquad y \in \ALLOC{\rsh} ~\text{or}~ \exists x \in \FV{0}{\rsh} ~.~ x \MEQ{\rsh} y \\
     ~\Leftrightarrow~ & \left[ \text{Lemma}~\ref{thm:zoo:track:congruence}~\text{applied to}~y \in \VAR(\sh) \text{and}~ x \in \VAR(\FV{0}{\rsh}) \right] \\
                       & \forall y \in \VAR(\sh) ~.~  y \in \ALLOC{\SHRINK{\sh,\T{p}}} \\
                       & \qquad \text{or}~ \exists x \in \FV{0}{\sh} ~.~ x \MEQ{\SHRINK{\sh,\T{p}}} y \\
                       & \quad \text{and}~ \forall 1 \leq i \leq m ~.~ \forall y \in \VAR(\rsh_i\left[ \FV{0}{\rsh_i} / \FV{i}{\sh} \right]) ~.~ \\
                       & \qquad y \in \ALLOC{\rsh} ~\text{or}~ \exists x \in \FV{0}{\rsh} ~.~ x \MEQ{\rsh} y \\
     ~\Leftrightarrow~ & \left[ \text{Definition of}~\CHECK \right] \\
                       & \CHECK(\SHRINK{\sh,\T{p}}) = 1 \\
                       & ~\text{and}~ \forall 1 \leq i \leq m ~.~ \forall y \in \VAR(\rsh_i\left[ \FV{0}{\rsh_i} / \FV{i}{\sh} \right]) ~.~ \\
                       & \qquad y \in \ALLOC{\rsh} ~\text{or}~ \exists x \in \FV{0}{\rsh} ~.~ x \MEQ{\rsh} y \\
     ~\Leftrightarrow~ & \big[ y \in \ALLOC{\rsh} ~\text{iff}~ y \in \ALLOC{\rsh_i\left[ \FV{0}{\rsh_i} / \FV{i}{\sh} \right]} \\
                       & \qquad \text{or}~ \exists x \in \ALLOC{\rsh} ~.~ y \MEQ{\rsh} x \big] \\
                       & \CHECK(\SHRINK{\sh,\T{p}}) = 1 \\
                       & ~\text{and}~ \forall 1 \leq i \leq m ~.~ \forall y \in \VAR(\rsh_i\left[ \FV{0}{\rsh_i} / \FV{i}{\sh} \right]) ~.~ \\
                       & \qquad \left( y \in \ALLOC{\rsh_i\left[ \FV{0}{\rsh_i} / \FV{i}{\sh} \right]} ~\text{or}~ \exists x \in \ALLOC{\rsh} ~.~ y \MEQ{\rsh} x\right)  \\
                       & \qquad \text{or}~ \exists x \in \FV{0}{\rsh} ~.~ x \MEQ{\rsh} y \\
     ~\Leftrightarrow~ & \left[ \text{Lemma}~\ref{obs:zoo:establishment:equality-propagation} \right] \\
                       & \CHECK(\SHRINK{\sh,\T{p}}) = 1 \\
                       & ~\text{and}~ \forall 1 \leq i \leq m ~.~ \forall y \in \VAR(\rsh_i\left[ \FV{0}{\rsh_i} / \FV{i}{\sh} \right]) ~.~ \\
                       & \qquad y \in \ALLOC{\rsh_i} \\
                       & \qquad \quad \text{or}~ \exists z \in \FV{i}{\sh} . \exists x \in \ALLOC{\rsh} ~.~ y \MEQ{\rsh_i} z ~\text{and}~ z \MEQ{\rsh} x  \\
                       & \qquad \text{or}~ \exists x \in \FV{i}{\sh} . \exists z \in \FV{0}{\rsh} ~.~ y \MEQ{\rsh_i} x ~\text{and}~ x \MEQ{\rsh} z \\
     ~\Leftrightarrow~ & \big[ \FV{0}{\rsh_i}~\text{is substituted by}~\FV{i}{\sh} \in \VAR(\sh)~\text{in}~\rsh \\
                       & \qquad \text{which are all established due to $\CHECK(\SHRINK{\sh,\T{p}}) = 1$} \big] \\
                       & \CHECK(\SHRINK{\sh,\T{p}}) = 1 \\
                       & ~\text{and}~ \forall 1 \leq i \leq m ~.~ \forall y \in \VAR(\rsh_i) ~.~ \\
                       & \qquad y \in \ALLOC{\rsh_i} ~\text{or}~ \exists z \in \FV{0}{\rsh_i} ~.~ y \MEQ{\rsh_i} z \\
     ~\Leftrightarrow~ & \left[ \text{Definition of}~\CHECK \right] \\
                       & \CHECK(\SHRINK{\sh,\T{p}}) = 1 \\
                       & ~\text{and}~ \CHECK(\rsh_1,\EMPTYSEQ) = \ldots = \CHECK(\rsh_m,\EMPTYSEQ) = 1. 
  \end{align*}
  \qed
\end{proof}
%
%
%
\section{Proof of Lemma~\ref{thm:zoo:establishment:lower}} \label{app:zoo:establishment:lower}
 Let $(\SRD,\PS) \in \COMPLEMENT{SL-RSAT}$ be an instance of the complement of the reduced satisfiability problem.
 Moreover, consider the instance $(\SRD,\sh)$ of the establishment problem, where
 \begin{align*}
   \sh ~\DEFEQ~ \exists \T{z} z' ~.~ \PS\T{z} : \{ x = \NIL, z' \neq \NIL \}
 \end{align*}
 and $x$ is the single free variable (other than $\NIL$) of $\sh$.
 Then
 \begin{align*}
                      & \CALLSEM{\sh}{\SRD} \subseteq \ESTPROP(\alpha) \\
    ~\Leftrightarrow~ & \left[ A \subseteq B ~\text{iff}~ \forall x \in A . x \in B \right] \\
                      & \forall \rsh \in \CALLSEM{\sh}{\SRD} ~.~ \rsh \in \ESTPROP(\alpha) \\
    ~\Leftrightarrow~ & \left[ \text{Definition of}~\ESTPROP(\alpha) \right] \\
                      & \forall \rsh \in \CALLSEM{\sh}{\SRD} ~.~ \forall y \in \VAR(\rsh) ~.~ \\
                      & \qquad y \in \ALLOC{\rsh} ~\text{or}~ \exists x \in \FV{0}{\rsh} ~.~ y \MEQ{\rsh} x \\
    ~\Leftrightarrow~ & \left[ \SRD~\text{contains no points-to assertions} \right] \\
                      & \forall \rsh \in \CALLSEM{\sh}{\SRD} ~.~ \forall y \in \VAR(\rsh) ~.~ \exists x \in \FV{0}{\rsh} ~.~ y \MEQ{\rsh} x \\
    ~\Leftrightarrow~ & \left[ \PROJ{\FV{0}{\rsh}}{1} \MEQ{\rsh} \NIL \right] \\
                      & \forall \rsh \in \CALLSEM{\sh}{\SRD} ~.~ \forall y \in \VAR(\rsh) ~.~ y \MEQ{\rsh} \NIL \\
    ~\Leftrightarrow~ & \left[ z' \MNEQ{\rsh} \NIL \right] \\
                      & \forall \rsh \in \CALLSEM{\sh}{\SRD} ~.~ \NIL \MNEQ{\rsh} \NIL \\
    ~\Leftrightarrow~ & \left[ \NIL \MNEQ{\rsh} \NIL ~\text{iff}~ \rsh~\text{unsatisfiable} \right] \\
                      & \forall \rsh \in \CALLSEM{\sh}{\SRD} ~.~ \rsh~\text{is unsatisfiable} \\
    ~\Leftrightarrow~ & \left[ \forall x . \neg A \equiv \neg \exists x . A \right] \\
                      & \text{not}~\exists \rsh \in \CALLSEM{\sh}{\SRD} ~.~ \rsh~\text{if satisfiable} \\
    ~\Leftrightarrow~ & \left[ \text{Definition of satisfiability} \right] \\
                      & \sh~\text{is unsatisfiable}.
 \end{align*}
 Then it remains to show that $\sh$ is unsatisfiable if and only if $\PS$ is unsatisfiable:
 \begin{align*}
                     & \exists (\stack,\heap) \in \STATES ~.~ (\stack,\heap) \SAT{\SRD} \sh \\
   ~\Leftrightarrow~ & \left[ \text{Applying}~(\spadesuit)~\text{see below} \right] \\
                     & \exists (\stack,\heap) \in \STATES ~.~ \exists \T{u} \in \VAL^{\SIZE{\T{z}}} ~.~ \\
                     & \qquad (\restr{\stack[\T{z} \mapsto \T{u}]}{\T{z}},\heap) \SAT{\SRD} \PS\T{z} \\
                     & \qquad \text{and}~ \stack(x) = \NIL ~\text{and}~ \stack(z') \neq \NIL \\
   ~\Leftrightarrow~ & \left[ \DOM(\restr{\stack[\T{z} \mapsto \T{z}]}{\T{z}}) = \T{z} \right] \\
                     & \exists (\stack,\heap) \in \STATES ~.~ (\stack,\heap) \SAT{\SRD} \PS\T{z}.
 \end{align*}
 Here, the missing step marked with ($\spadesuit$) corresponds to the following property:
 \begin{align*}
                      & \stack,\heap \SAT{\SRD} \sh \\
   ~\Leftrightarrow~  & \exists \T{u} \in \VAL^{\SIZE{\T{z}}} ~.~ \tag{$\spadesuit$} \\
                      & \qquad (\restr{\stack[\T{z} \mapsto \T{u}]}{\T{z}}),\heap \SAT{\SRD} \PS\T{z} \\
                      & \qquad \text{and}~ \stack(x) = \NIL ~\text{and}~ \stack(z') \neq \NIL.
 \end{align*}
 To complete the proof, let $(\stack,\heap) \in \STATES$. Then:
 \begin{align*}
                      & \stack,\heap \SAT{\SRD} \sh \\
    ~\Leftrightarrow~ & \left[ \text{Construction of}~\sh \right] \\
                      & \stack,\heap \SAT{\SRD}  \exists \T{z} z' ~.~ \PS\T{z} : \{ x = \NIL, z' \neq \NIL \} \\
    ~\Leftrightarrow~ & \left[ \text{SL semantics} \right] \\
                      & \exists \T{u} \in \VAL^{\SIZE{\T{z}}} . \exists v \in \VAL ~.~ \\
                      & \qquad \stack[\T{z} \mapsto \T{u}, z' \mapsto v],\heap \SAT{\SRD} \PS\T{z} \\
                      & \qquad \text{and}~ \stack[\T{z} \mapsto \T{u}, z' \mapsto v],\heap \SAT{\SRD} x = \NIL \\
                      & \qquad \text{and}~ \stack[\T{z} \mapsto \T{u}, z' \mapsto v],\heap \SAT{\SRD} z' \neq \NIL \\
    ~\Leftrightarrow~ & \left[ \text{Lemma}~\ref{thm:symbolic-heaps:fv-coincidence} \right] \\
                      & \exists \T{u} \in \VAL^{\SIZE{\T{z}}} . \exists v \in \VAL ~.~ \\
                      & \qquad (\restr{\stack[\T{z} \mapsto \T{u}, z' \mapsto v]}{\T{z}}),\heap \SAT{\SRD} \PS\T{z} \\
                      & \qquad \text{and}~ (\restr{\stack[\T{z} \mapsto \T{u}, z' \mapsto v]}{x}),\heap \SAT{\SRD} x = \NIL \\
                      & \qquad \text{and}~ (\restr{\stack[\T{z} \mapsto \T{u}, z' \mapsto v]}{z'}),\heap \SAT{\SRD} z' \neq \NIL \\
    ~\Leftrightarrow~ & \left[ \text{SL semantics} \right] \\
                      & \exists \T{u} \in \VAL^{\SIZE{\T{z}}} ~.~ \\
                      & \qquad (\restr{\stack[\T{z} \mapsto \T{u}]}{\T{z}}),\heap \SAT{\SRD} \PS\T{z} \\
                      & \qquad \text{and}~ \stack(x) = \NIL ~\text{and}~ \stack(z') \neq \NIL,
 \end{align*}
 which coincides with $(\spadesuit)$.
 \qed
%
%
\section{Proof of Theorem~\ref{thm:zoo:reachability:property}}
\label{app:zoo:reachability:property}
We have to construct a heap automaton
$\HAREACH$ over $\SHCLASSFV{\alpha}$ that satisfies the compositionality property
and accepts $\RPROP(\alpha,R)$.
In order to highlight the necessary proof obligations, the actual construction of $\HAREACH$
and its correctness proof are splitted into several definitions and lemmas that are
provided afterwards.
The construction of $\HAREACH$ was already sketched in the paper.
%
A formal construction is provided in Definition~\ref{def:zoo:reachability-automaton}.
It then remains to show the correctness of our construction of $\HAREACH$:
\begin{itemize}
  \item Lemma~\ref{thm:zoo:reachability:language} shows that $\HAREACH$ indeed
        accepts $\RPROP(\alpha,R)$, i.e., $L(\HAREACH) = \RPROP(\alpha,R)$.
  \item In order to prove that $\HAREACH$ satisfies the compositionality property,
        we lift Lemma~\ref{thm:zoo:track:congruence} to cover reachability as well.
        After that the compositionality property of $\HAREACH$ is verified analogously
        to the compositionality property of $\HATRACK$ (cf. Lemma~\ref{thm:zoo:track:compositional}).

        The lifting of Lemma~\ref{thm:zoo:track:congruence} is presented in Lemma~\ref{thm:zoo:reachability:congruence}.
        Similar to the proof of Lemma~\ref{thm:zoo:track:congruence:auxiliary},
        the proof of Lemma~\ref{thm:zoo:reachability:congruence} relies on an
        auxiliary property showing that
        \[ \REACH{x}{y}{\sh[\PS_1 / \rsh]} \quad\text{iff} \quad  \REACH{x}{y}{\sh\left[ \PS_1 / \SSIGMA{\CALLN{1}{}}{(B,\Lambda,S)} \right]} \]
        holds for all $x,y \in \VAR(\sh)$ and symbolic heaps $\sh$ containing a single predicate call $\PS_1$.
        This is formalized in Lemma~\ref{thm:zoo:reachability:congruence:auxiliary}.
\end{itemize}
Putting both together, we obtain a heap automaton $\HAREACH$ over $\SHCLASSFV{\alpha}$ accepting
$\RPROP(\alpha,R)$. 
\qed
%
%

%
%
\begin{definition} \label{def:zoo:reachability-automaton}
 Let $\FV{0}{}$ be a tuple of variables with $\NOFV{} = \alpha \in \POSN$.
 Then $\HAREACH = (Q,\SRDCLASSFV{\alpha},\Delta,F)$ is given by
%
  \begin{align*}
      Q ~\DEFEQ~ & Q_{\HAREACH} ~\times~ 2^{\FV{0}{} \times \FV{0}{}}, \qquad \quad F ~\DEFEQ~ Q_{\HATRACK} \times \{ R \} \\
      \Delta ~~:~~ & \MOVE{\HAREACH}{(q_0,S_0)}{\sh}{\T{p}},~ \T{p} = (q_1,S_1) \ldots (q_m,S_m) \\
    ~\text{iff}~ & \MOVE{\HATRACK}{q_0}{\varphi}{q_1 \ldots q_m} \\
                 & \text{and}~ \forall u,v \in \FV{0}{} ~.~
                     (u,v) \in S_0 \leftrightarrow \REACH{u^{\sh}}{v^{\sh}}{\SHRINK{\sh,\T{p}}}~,
  \end{align*}
  where
  \begin{align*}
     \SHRINK{\sh,\T{p}} ~\DEFEQ~ & \sh\left[\PS_1 / \SSIGMA{\CALLN{1}{}}{\T{p}[1]}, \ldots, \PS_m / \SSIGMA{\CALLN{m}{}}{\T{p}[m]} \right].
  \end{align*}
Here, $m = \NOCALLS{\sh}$ stands for the number of predicate calls occurring in $\sh$ and $u^{\sh}$ denotes the free variable of $\sh$ corresponding to $u \in \FV{0}{}$.\footnote{formally if $u = \PROJ{\FV{0}{}}{i}$ then $u^{\sh} = \PROJ{\FV{0}{\sh}}{i}$}
Moreover, $q_i = (A_i,\Pi_i) \in Q_{\HATRACK}$ for each $0 \leq i \leq m$.
\end{definition}
\begin{lemma} \label{thm:zoo:reachability:language}
  $L(\HAREACH) = \RPROP(\alpha,R)$.
\end{lemma}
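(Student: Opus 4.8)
The plan is to unfold the definition of $L(\HAREACH)$ and the final-state and transition conditions of $\HAREACH$ from Definition~\ref{def:zoo:reachability-automaton}, and to exploit the fact that reduced symbolic heaps have no predicate calls, so that $\SHRINK{\rsh,\EMPTYSEQ} = \rsh$. Concretely, for an arbitrary reduced symbolic heap $\rsh \in \RSHCLASSFV{\alpha}$ I would argue the following chain of equivalences:
\begin{align*}
                   & \rsh \in L(\HAREACH) \\
 ~\Leftrightarrow~ & \exists (q,S) \in F_{\HAREACH} ~.~ \OMEGA{\HAREACH}{(q,S)}{\rsh} \\
 ~\Leftrightarrow~ & \exists q \in Q_{\HATRACK} ~.~ \OMEGA{\HAREACH}{(q,R)}{\rsh} \\
 ~\Leftrightarrow~ & \exists q \in Q_{\HATRACK} ~.~ \OMEGA{\HATRACK}{q}{\rsh} ~\text{and}~ \forall i,j ~.~ (\PROJ{\FV{0}{}}{i},\PROJ{\FV{0}{}}{j}) \in R \leftrightarrow \REACH{\PROJ{\FV{0}{\rsh}}{i}}{\PROJ{\FV{0}{\rsh}}{j}}{\SHRINK{\rsh,\EMPTYSEQ}} \\
 ~\Leftrightarrow~ & \exists q \in Q_{\HATRACK} ~.~ \OMEGA{\HATRACK}{q}{\rsh} ~\text{and}~ \forall i,j ~.~ (\PROJ{\FV{0}{}}{i},\PROJ{\FV{0}{}}{j}) \in R \leftrightarrow \REACH{\PROJ{\FV{0}{\rsh}}{i}}{\PROJ{\FV{0}{\rsh}}{j}}{\rsh}.
\end{align*}
The first step is the definition of the accepted language; the second uses $F_{\HAREACH} = Q_{\HATRACK} \times \{R\}$; the third unfolds $\Delta_{\HAREACH}$ for a reduced symbolic heap (where the tuple of input states is empty and $m = 0$); and the fourth uses $\SHRINK{\rsh,\EMPTYSEQ} = \rsh$, which holds since $\rsh$ contains no predicate calls.

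The remaining gap is the existential quantifier over $q \in Q_{\HATRACK}$: I need that there always \emph{exists} some tracking state $q$ with $\OMEGA{\HATRACK}{q}{\rsh}$. This follows because $\HATRACK$ (more precisely, its transition relation restricted to reduced symbolic heaps) is total in the sense that every reduced symbolic heap is assigned \emph{some} state --- indeed the state $(B,\Lambda)$ where $B = \{ \PROJ{\FV{0}{}}{i} \mid \PROJ{\FV{0}{\rsh}}{i} \in \ALLOC{\rsh} \}$ and $\Lambda = \{ \PROJ{\FV{0}{}}{i} \sim \PROJ{\FV{0}{}}{j} \mid \PROJ{\FV{0}{\rsh}}{i} \MSIM{\rsh} \PROJ{\FV{0}{\rsh}}{j} \}$ satisfies $\OMEGA{\HATRACK}{(B,\Lambda)}{\rsh}$ by the definition of $\Delta_{\HATRACK}$. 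Hence the existential is vacuous and the last line above is equivalent to $\rsh \in \RPROP(\alpha,R)$ by Definition of the reachability property in Theorem~\ref{thm:zoo:reachability:property}.

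I would then remark that this argument only establishes $L(\HAREACH) = \RPROP(\alpha,R)$; the compositionality property of $\HAREACH$ is the substantive part and is handled separately (via the lifting of Lemma~\ref{thm:zoo:track:congruence} to reachability, i.e., Lemma~\ref{thm:zoo:reachability:congruence} and its auxiliary Lemma~\ref{thm:zoo:reachability:congruence:auxiliary}, proved elsewhere in this appendix section). The main obstacle here is not the language equality --- which is essentially a definition-chasing exercise --- but rather keeping the bookkeeping straight between abstract free-variable names in $\FV{0}{}$ and their concrete counterparts $x^{\rsh} = \PROJ{\FV{0}{\rsh}}{i}$ in the reduced symbolic heap, and making sure the totality claim for $\HATRACK$ on reduced heaps is invoked cleanly so the $\exists q$ genuinely disappears.

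\qed
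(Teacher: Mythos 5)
Your proof is correct and follows essentially the same chain of equivalences as the paper's own proof: unfold $L(\HAREACH)$, use $F_{\HAREACH} = Q_{\HATRACK} \times \{R\}$, unfold $\Delta_{\HAREACH}$ on a reduced symbolic heap, and apply $\SHRINK{\rsh,\EMPTYSEQ} = \rsh$. Your explicit justification that the existential over the tracking state is always witnessed (so it can be dropped) is a point the paper's proof silently elides, but it is exactly the right observation and does not change the argument.
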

\begin{proof}
  Let $\rsh \in \RSL{}{\SRDCLASSFV{\alpha}}$. Then
  \begin{align*}
                        & \rsh \in L(\HAREACH) \\
      ~\Leftrightarrow~ & \left[ \text{Definition of}~L(\HAREACH) \right] \\
                        & \exists q \in F_{\HAREACH} ~.~  \OMEGA{\HAREACH}{q}{\rsh} \\
      ~\Leftrightarrow~ & \left[ \text{Definition of}~F_{\HAREACH} \right] \\
                        & \exists p \in Q_{\HATRACK} ~.~ \OMEGA{\HAREACH}{(p,R)}{\rsh} \\
      ~\Leftrightarrow~ & \left[ \text{Definition of}~\Delta_{\HAREACH} \right] \\
                        & \exists p \in Q_{\HATRACK} ~.~ \OMEGA{\HATRACK}{p}{\rsh} \\
                        & \qquad \text{and}~ \forall u,v \in \FV{0}{} ~.~ (u,v) \in R \leftrightarrow \REACH{u^{\rsh}}{v^{\rsh}}{\SHRINK{\rsh,\EMPTYSEQ}} \\
      ~\Leftrightarrow~ & \left[ \text{Definition}~\ref{def:zoo:reachability-automaton}:~\SHRINK{\rsh,\EMPTYSEQ} = \rsh \right] \\
                        & \forall u,v \in \FV{0}{} ~.~ (u,v) \in R \leftrightarrow \REACH{u^{\rsh}}{v^{\rsh}}{\rsh} \\
      ~\Leftrightarrow~ & \left[ \text{Definition of}~\RPROP(\alpha,R) \right] \\
                        & \rsh \in \RPROP(\alpha,R). 
  \end{align*}
  \qed
\end{proof}
\begin{lemma} \label{thm:zoo:reachability:congruence:auxiliary}
 Let $\sh = \exists \BV{} . \SPATIAL{} \SEP \CALLN{1}{} : \PURE{} \in \SL{}{\SRDCLASSFV{\alpha}}$.
 Moreover, let $\rsh \in \RSL{}{}$ with $\SIZE{\FV{0}{\rsh}} = \SIZE{\FV{1}{\sh}}$, $B = \{ y \in \FV{0}{} ~|~ y^{\rsh} \in \ALLOC{\rsh} \}$, $\Lambda = \{ x \sim y ~|~ x^{\rsh} \MSIM{\rsh} y^{\rsh} \}$, and
 $S = \{ (x,y) \in \FV{0}{} \times \FV{0}{} ~|~ \REACH{x^{\rsh}}{y^{\rsh}}{\rsh} \}$.
 Then, for each $x,y \in \VAR(\sh)$, we have
 \begin{align*}
    x \MSIM{\sh[\PS_1 / \rsh]} y \quad\text{iff}&\quad x \MSIM{\sh\left[ \PS_1 / \SSIGMA{\CALLN{1}{}}{(B,\Lambda,S)} \right]} y, \quad \text{and}~  \\
    x \in \ALLOC{\sh[\PS_1 / \rsh]} \quad\text{iff}&\quad x \in \ALLOC{\sh\left[ \PS_1 / \SSIGMA{\CALLN{1}{}}{(B,\Lambda,S)} \right]}, \quad \text{and}~  \\
    \REACH{x}{y}{\sh[\PS_1 / \rsh]} \quad\text{iff} &\quad  \REACH{x}{y}{\sh\left[ \PS_1 / \SSIGMA{\CALLN{1}{}}{(B,\Lambda,S)} \right]}~,
 \end{align*}
 where $\rsha_{\PS_1\FV{1}{},(B,\Lambda,S)}$ has been defined at the beginning of this section.
\end{lemma}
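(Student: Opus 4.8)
The plan is to mirror the proof of Lemma~\ref{thm:zoo:track:congruence:auxiliary}, isolating the reachability equivalence as the only genuinely new ingredient and reducing the other two equivalences to that lemma. First I would observe that $\SSIGMA{\CALLN{1}{}}{(B,\Lambda,S)}$ differs from the track kernel $\SSIGMA{\CALLN{1}{}}{(B,\Lambda)}$ of Definition~\ref{def:zoo:track:kernel} only in the targets attached to its points-to assertions and by the extra fresh, existentially quantified, unallocated variable $z$, which occurs in no pure formula. By Lemma~\ref{thm:zoo:relationships:characterizations}, definite (in)equality and definite allocation of a reduced symbolic heap depend only on the pure part of its completion and on the left-hand sides of its points-to assertions, neither of which is affected by these differences. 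Hence for $x,y \in \VAR(\sh)$ the heaps $\sh\left[\PS_1 / \SSIGMA{\CALLN{1}{}}{(B,\Lambda,S)}\right]$ and $\sh\left[\PS_1 / \SSIGMA{\CALLN{1}{}}{(B,\Lambda)}\right]$ induce the same definite (in)equalities and the same definitely allocated variables among $\VAR(\sh)$, and the first two claimed equivalences follow directly from Lemma~\ref{thm:zoo:track:congruence:auxiliary}.

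For the reachability equivalence I would first dispose of the unsatisfiable case: if $\rsh$ is unsatisfiable then, by Lemma~\ref{thm:zoo:relationships:characterizations}, $B = \FV{0}{}$, $\NIL \neq \NIL \in \Lambda$ and $S = \FV{0}{} \times \FV{0}{}$, so both $\sh[\PS_1 / \rsh]$ and $\sh\left[\PS_1 / \SSIGMA{\CALLN{1}{}}{(B,\Lambda,S)}\right]$ are unsatisfiable and every definite relationship holds in both. Assuming $\rsh$ satisfiable, the core claim is that for $x,y \in \VAR(\sh)$ the relation $\REACH{x}{y}{\psi}$ --- for $\psi$ either of the two heaps --- holds iff $y$ is reachable from $x$ in the directed ``compressed'' graph on $\VAR(\sh)$ whose edges are the definite points-to steps read off from the points-to assertions of $\SPATIAL{\sh}$, together with an edge $\PROJ{\FV{1}{\sh}}{i} \to \PROJ{\FV{1}{\sh}}{j}$ whenever $(i,j) \in S$. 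For $\psi = \sh\left[\PS_1 / \SSIGMA{\CALLN{1}{}}{(B,\Lambda,S)}\right]$ this is immediate from the definition of the kernel: its points-to assertion for an allocated $\PROJ{\FV{1}{\sh}}{i}$ has $\PROJ{\FV{1}{\sh}}{j}$ among its targets precisely when $(i,j) \in S$, its only further target is the fresh $z$, which is unallocated and not definitely equal to any variable, and its remaining points-to assertions are exactly those of $\SPATIAL{\sh}$. For $\psi = \sh[\PS_1 / \rsh]$ I would use that the only variables shared up to definite equality between $\SPATIAL{\sh}$ and the freshly renamed copy $\rsh\subst{\FV{0}{\rsh}}{\FV{1}{}}$ are the parameters $\FV{1}{\sh}$ --- an argument of the same shape as Lemma~\ref{obs:zoo:establishment:equality-propagation}, since the only pure formulas bridging the two parts pass through a parameter. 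Consequently any chain of definite points-to steps witnessing $\REACH{x}{y}{\sh[\PS_1 / \rsh]}$ splits into maximal segments that either stay among $\SPATIAL{\sh}$-steps or stay inside the copy of $\rsh$, each $\rsh$-segment entering and leaving at parameters $\PROJ{\FV{1}{\sh}}{i}$ and $\PROJ{\FV{1}{\sh}}{j}$; such a segment exists iff $\REACH{\PROJ{\FV{0}{\rsh}}{i}}{\PROJ{\FV{0}{\rsh}}{j}}{\rsh}$, i.e.\ iff $(i,j) \in S$ by the definition of $S$. This yields the graph characterization for both heaps and hence the equivalence.

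The main obstacle is this last rerouting step for $\psi = \sh[\PS_1 / \rsh]$: one must argue carefully that every witnessing chain decomposes at parameters and that an $\rsh$-internal segment between $\PROJ{\FV{1}{\sh}}{i}$ and $\PROJ{\FV{1}{\sh}}{j}$ corresponds exactly to $\REACH{\PROJ{\FV{0}{\rsh}}{i}}{\PROJ{\FV{0}{\rsh}}{j}}{\rsh}$, keeping track of the fact that definite reachability is taken modulo the pure closure of $\sh[\PS_1 / \rsh]$ --- so that an internal existential of $\rsh$ can be forced equal to a variable of $\sh$ only via a parameter, which is precisely where the analogue of Lemma~\ref{obs:zoo:establishment:equality-propagation} is needed. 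Once this auxiliary lemma is available, iterating it over the predicate calls of a general symbolic heap (exactly as Lemma~\ref{thm:zoo:track:congruence} follows from Lemma~\ref{thm:zoo:track:congruence:auxiliary}) will give the reachability analogue Lemma~\ref{thm:zoo:reachability:congruence} used in the construction of $\HAREACH$.
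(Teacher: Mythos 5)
Your proposal is correct, and for the heart of the lemma---the reachability equivalence---it follows essentially the same route as the paper: dispose of the unsatisfiable case first, then decompose a witnessing chain of definite points-to steps at the parameters $\FV{1}{\sh}$ of the predicate call, using the propagation property (Lemma~\ref{obs:zoo:establishment:equality-propagation} and its reachability analogue, Lemma~\ref{obs:zoo:reachability:propagation}) to argue that any crossing between the $\SPATIAL{\sh}$-part and the copy of $\rsh$ must pass through a parameter, and matching each $\rsh$-internal segment with a pair in $S$. The paper packages this decomposition as an induction on the number of definite points-to steps rather than as your ``compressed graph'' characterization, but the content is the same. The one genuine divergence is your treatment of the first two equivalences: the paper simply re-runs the proof of Lemma~\ref{thm:zoo:track:congruence:auxiliary} ``analogously,'' whereas you reduce to that lemma by observing that $\SSIGMA{\CALLN{1}{}}{(B,\Lambda,S)}$ and $\SSIGMA{\CALLN{1}{}}{(B,\Lambda)}$ share the same pure formulas and the same points-to left-hand sides, so by Lemma~\ref{thm:zoo:relationships:characterizations} they induce identical completions on $\VAR(\sh)$; this reduction is sound and arguably cleaner. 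One detail to keep explicit when writing it out: in the kernel-substituted heap an $S$-edge is realized by a \emph{single} points-to assertion, whereas in $\sh[\PS_1/\rsh]$ it corresponds to \emph{multi-step} definite reachability---the equivalence survives only because both sides are closed transitively, which is exactly why the paper's converse direction yields $\REACH{x}{y}{\sh[\PS_1/\rsh]}$ rather than $x \MPT{\sh[\PS_1/\rsh]} y$ in its base case; your graph formulation absorbs this asymmetry correctly.
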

\begin{proof}
 The proof of the first two equivalences is completely analogous to the proof of Lemma~\ref{thm:zoo:track:congruence:auxiliary}.
 Hence, it remains to prove that for each pair of variables $x,y \in \VAR(\sh)$, we have
 \begin{align*}
  \REACH{x}{y}{\sh[\PS_1 / \rsh]} ~\text{iff}~  \REACH{x}{y}{\sh\left[ \PS_1 / \SSIGMA{\CALLN{1}{}}{(B,\Lambda,S)} \right]}.
 \end{align*}
 Since definite reachability always holds for unsatisfiable reduced symbolic heaps
 and, by the first equivalence, either both symbolic heaps are satisfiable or both are unsatisfiable,
 assume without loss of generality that $\sh[\PS_1 / \rsh]$ is satisfiable.
 Both directions of the proposition from above are shown by induction on the number $n$ of definite points-to assertions $\MPT{\rsh}$
 needed to reach $y$ from $x$ in $\REACH{x}{y}{\sh[\PS_1 / \rsh]}$
 and $\REACH{x}{y}{\sh\left[ \PS_1 / \SSIGMA{\CALLN{1}{}}{(B,\Lambda,S)} \right]}$, respectively.
 We first show that $\REACH{x}{y}{\sh[\PS_1 / \rsh]}$ implies $\REACH{x}{y}{\sh\left[ \PS_1 / \SSIGMA{\CALLN{1}{}}{(B,\Lambda,S)} \right]}$.
 \emph{I.B.}\qquad
 For the base case assume that $\REACH{x}{y}{\sh[\PS_1 / \rsh]}$ holds and $y$ is reachable by taking $n = 1$ definite points-to assertions, i.e., $x \MPT{\sh[\PS_1 / \rsh]} y$ holds.
 By Lemma~\ref{thm:zoo:relationships:characterizations}, this is equivalent to
 \begin{align*}
   & \exists z_1,z_2 ~.~ x \MEQ{\sh[\PS_1 / \rsh]} z_1 ~\text{and}~ y \MEQ{\sh[\PS_1 / \rsh]} z_2 \\
   & \qquad \qquad ~\text{and}~ \PT{z_1}{(\_,z_2,\_)} \in \SPATIAL{\sh[\PS_1 / \rsh]}.
 \end{align*}
 Then, by definition of predicate replacement, two cases arise:
 \begin{enumerate}
    \item Case: $\PT{z_1}{(\_,z_2,\_)} \in \SPATIAL{\sh}$.
          Then $z_1,z_2 \in \VAR(\sh)$ and thus, by the already known first property of Lemma~\ref{thm:zoo:reachability:congruence:auxiliary}, we have
          $x \MEQ{\sh\left[ \PS_1 / \SSIGMA{\CALLN{1}{}}{(B,\Lambda,S)} \right]} z_1$ and $y \MEQ{\sh\left[ \PS_1 / \SSIGMA{\CALLN{1}{}}{(B,\Lambda,S)} \right]} z_2$.
          Since $\SPATIAL{\sh}$ is contained in the symbolic heap $\SPATIAL{\sh\left[ \PS_1 / \SSIGMA{\CALLN{1}{}}{(B,\Lambda,S)} \right]}$,
          this is means that
          \[ x \MPT{\sh\left[ \PS_1 / \SSIGMA{\CALLN{1}{}}{(B,\Lambda,S)} \right]} y. \]
    %
    \item Case: $\PT{z_1}{(\_,z_2,\_)} \in \SPATIAL{\rsh}$.
          Then $z_1,z_2 \in \VAR(\rsh)$.
          By Lemma~\ref{obs:zoo:establishment:equality-propagation}, there exist two variables $u,v \in \FV{1}{\sh}$ such that
          $x \MEQ{\sh[\PS_1 / \rsh]} u$, $u \MEQ{\rsh} z$, $y \MEQ{\sh[\PS_1 / \rsh]} v$, and $v \MEQ{\rsh} z'$.
          Thus, $u \MPT{\rsh} v$ holds.
          In particular, this means that $(u,v) \in S$ and thus $\PT{u}{(\_,v,\_)} \in \SPATIAL{\sh\left[ \PS_1 / \SSIGMA{\CALLN{1}{}}{(B,\Lambda,S)} \right]}$.
          Now, by the already known first property of Lemma~\ref{thm:zoo:reachability:congruence:auxiliary}, we also have
          $x \MEQ{\sh\left[ \PS_1 / \SSIGMA{\CALLN{1}{}}{(B,\Lambda,S)} \right]} u$ and $y \MEQ{\sh\left[ \PS_1 / \SSIGMA{\CALLN{1}{}}{(B,\Lambda,S)} \right]} v$.
          Hence, $x \MPT{\sh\left[ \PS_1 / \SSIGMA{\CALLN{1}{}}{(B,\Lambda,S)} \right]} y$.
 \end{enumerate}
 Thus, $\REACH{x}{y}{\sh\left[ \PS_1 / \SSIGMA{\CALLN{1}{}}{(B,\Lambda,S)} \right]}$ holds in both cases.
 \emph{I.H.}\qquad
 Assume for an arbitrary, but fixed, natural number $n$ that $\REACH{x}{y}{\sh[\PS_1 / \rsh]}$, where at most $n$ points-to assertions are used to reach $y$ from $x$, implies $\REACH{x}{y}{\sh\left[ \PS_1 / \SSIGMA{\CALLN{1}{}}{(B,\Lambda,S)} \right]}$.
 \\
 \emph{I.S.}\qquad
 Assume $n+1$ definite points-to assertions are needed to reach $y$ from $x$ in $\sh[\PS_1 / \rsh]$.
 By Lemma~\ref{thm:zoo:relationships:characterizations},
 there exists $z \in \VAR(\sh[\PS_1 / \rsh])$ such that
 $x \MPT{\sh[\PS_1 / \rsh]} z$ and $\REACH{z}{y}{\sh[\PS_1 / \rsh}$,
 where at most $n$ definite points-to assertions are needed to reach $y$ from $z$.
 Thus, by I.H. we know that $\REACH{z}{y}{\sh\left[ \PS_1 / \SSIGMA{\CALLN{1}{}}{(B,\Lambda,S)} \right]}$.
 Furthermore, by the same argument as in the base case, we obtain
 $x \MPT{\sh\left[ \PS_1 / \SSIGMA{\CALLN{1}{}}{(B,\Lambda,S)} \right]} z$.
 %
 Putting both together yields $\REACH{x}{y}{\sh\left[ \PS_1 / \SSIGMA{\CALLN{1}{}}{(B,\Lambda,S)} \right]}$.
The proof of the converse direction is analogous.
The only difference is that we obtain $\REACH{x}{y}{\sh[\PS_1 / \rsh]}$ instead of
$x \MPT{\sh[\PS_1 / \rsh]} y$
in the case analysis of the base case, because $\PT{u}{(\_,v\_,)} \in \SPATIAL{\SSIGMA{\CALLN{1}{}}{(B,\Lambda,S)}}$
implies $\REACH{u}{v}{\rsh}$ only.
\qed
\end{proof}
\begin{lemma} \label{thm:zoo:reachability:congruence}
 Let $\sh \in \SL{}{\SRDCLASSFV{\alpha}}$ with $\NOCALLS{\sh} = m \geq 0$.
 For each $1 \leq i \leq m$,
 let $\rsh_i \in \RSL{}{\SRDCLASSFV{\alpha}}$ with $\NOFV{\rsh_i} = \SIZE{\FV{i}{\sh}}$,
 $A_i = \{ y \in \FV{0} ~|~ y^{\rsh_i} \in \ALLOC{\rsh_i} \}$,
 $\PURE{}_i = \{ x \sim y ~|~ x^{\rsh_i} \MSIM{\rsh_i} y^{\rsh_i} \}$
 and $S_i  = \{ (x,y) \in \FV{0}{} \times \FV{0}{} ~|~ \REACH{x^{\rsh_i}}{y^{\rsh_i}}{\rsh_i} \}$.
 Moreover, let
 \begin{align*}
  \rsh ~\DEFEQ~ & \sh\left[\PS_1 / \rsh_1, \ldots, \PS_m / \rsh_m\right], \\
  \T{p} ~\DEFEQ~ & (A_1,\Pi_1,S_1) \ldots (A_m,\Pi_m,S_m), ~\text{and} \\
  \SHRINK{\sh,\T{p}} ~\DEFEQ~ & \sh\left[\PS_1 / \SSIGMA{\CALLN{1}{}}{\T{p}[1]},
                   \ldots,
                   \PS_m / \SSIGMA{\CALLN{m}{}}{\T{p}[m]}
                   \right].
 \end{align*}
 Then, for each $x,y \in \VAR(\sh)$, we have
 \begin{align*}
   & x ~\MSIM{\rsh}~ y ~\text{iff}~  x ~\MSIM{\SHRINK{\sh,\T{p}}}~ y, \\ 
   & x \in \ALLOC{\rsh} ~\text{iff}~ x \in \ALLOC{\SHRINK{\sh,\T{p}}}, \\
   & \REACH{x}{y}{\rsh} ~\text{iff}~ \REACH{x}{y}{\SHRINK{\sh,\T{p}}}. 
 \end{align*}
\end{lemma}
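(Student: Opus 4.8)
The plan is to prove Lemma~\ref{thm:zoo:reachability:congruence} by a telescoping argument that mirrors the proof of Lemma~\ref{thm:zoo:track:congruence}: I would replace the predicate calls of $\sh$ by their kernels one at a time and invoke the single-call statement Lemma~\ref{thm:zoo:reachability:congruence:auxiliary} at each step. Note that, although the equality and allocation equivalences are analogous to those already handled by Lemma~\ref{thm:zoo:track:congruence:auxiliary}, the reachability part of Lemma~\ref{thm:zoo:reachability:congruence:auxiliary} relies on the equality equivalence; so all three statements must be carried through the iteration together rather than separately.

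First I would introduce the chain of reduced symbolic heaps
\begin{align*}
 \psi_i ~\DEFEQ~ \sh\big[\PS_1 / \SSIGMA{\CALLN{1}{}}{\T{p}[1]}, \ldots, \PS_i / \SSIGMA{\CALLN{i}{}}{\T{p}[i]}, ~ \PS_{i+1} / \rsh_{i+1}, \ldots, \PS_m / \rsh_m\big], \qquad 0 \le i \le m,
\end{align*}
so that $\psi_0 = \rsh$ and $\psi_m = \SHRINK{\sh,\T{p}}$. Each $\psi_i$ is reduced; since predicate replacement introduces no new free variables and $\sh \in \SHCLASSFV{\alpha}$, we get $\psi_i \in \RSL{}{\SRDCLASSFV{\alpha}}$, and moreover $\VAR(\sh) \subseteq \VAR(\psi_i)$ for every $i$, because every variable of $\sh$ occurs in its spatial part, in a pure formula, or as a parameter of a call, and all of these survive the replacements.

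Next, for the step from $\psi_{i-1}$ to $\psi_i$ I would view $\psi_{i-1}$ as $\sh_i'\subst{\PS}{\rsh_i}$, where $\sh_i'$ is the symbolic heap obtained from $\sh$ by replacing calls $1,\dots,i-1$ by their kernels and calls $i+1,\dots,m$ by $\rsh_{i+1},\dots,\rsh_m$, leaving the $i$-th call untouched (so $\PS$ denotes that single remaining call); then $\psi_i = \sh_i'\subst{\PS}{\SSIGMA{\CALLN{i}{}}{\T{p}[i]}}$. By construction $\sh_i'$ has exactly one predicate call, lies in $\SL{}{\SRDCLASSFV{\alpha}}$, and satisfies $\NOFV{\rsh_i} = \SIZE{\FV{i}{\sh}}$; and $A_i$, $\PURE{}_i$, $S_i$ from the lemma statement are precisely the sets of definitely allocated free variables, definite (in)equalities between free variables, and definitely reachable pairs of free variables of $\rsh_i$, i.e.\ exactly the data $\SSIGMA{\CALLN{i}{}}{\T{p}[i]}$ is built from. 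Hence Lemma~\ref{thm:zoo:reachability:congruence:auxiliary} applies with $\sh := \sh_i'$ and $\rsh := \rsh_i$ and yields, for all $x,y \in \VAR(\sh_i') \supseteq \VAR(\sh)$,
\begin{align*}
 x \MSIM{\psi_{i-1}} y ~\Leftrightarrow~ x \MSIM{\psi_i} y, \qquad
 x \in \ALLOC{\psi_{i-1}} ~\Leftrightarrow~ x \in \ALLOC{\psi_i}, \qquad
 \REACH{x}{y}{\psi_{i-1}} ~\Leftrightarrow~ \REACH{x}{y}{\psi_i}.
\end{align*}
Chaining these equivalences over $i = 1,\dots,m$ and using $\psi_0 = \rsh$, $\psi_m = \SHRINK{\sh,\T{p}}$ then gives the three claimed equivalences for all $x,y \in \VAR(\sh)$.

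At the level of this lemma the only care needed is bookkeeping: one must fix fresh variables — in particular the auxiliary variable $z$ in each reachability kernel and the bound variables of the $\rsh_j$ — so that all replacements in the $\psi_i$ are well-defined and composable, and verify that each $\sh_i'$ really has the shape $\exists \BV{} . \SPATIAL{} \SEP \CALLN{1}{} : \PURE{}$ demanded by Lemma~\ref{thm:zoo:reachability:congruence:auxiliary}. The genuinely substantial work — the induction on the number of definite points-to steps witnessing reachability, the case split on whether a points-to assertion comes from $\SPATIAL{\sh}$ or from a replaced sub-heap, and the rerouting of reachability witnesses through call parameters via the equality-propagation property (Lemma~\ref{obs:zoo:establishment:equality-propagation}) — is already encapsulated in Lemma~\ref{thm:zoo:reachability:congruence:auxiliary}. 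So the main obstacle is in fact discharged there, and for Lemma~\ref{thm:zoo:reachability:congruence} itself the argument is a routine iteration exactly parallel to the proof of Lemma~\ref{thm:zoo:track:congruence}.
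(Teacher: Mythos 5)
Your proposal is correct and follows essentially the same route as the paper: the paper proves this lemma by the identical telescoping chain $\psi_0,\dots,\psi_m$ used for Lemma~\ref{thm:zoo:track:congruence}, substituting Lemma~\ref{thm:zoo:reachability:congruence:auxiliary} for the single-call step. Your added remark that all three equivalences must be carried through the iteration together (since the reachability step of the auxiliary lemma uses the equality equivalence) is a sensible point of care that the paper leaves implicit.
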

\begin{proof}
  Analogous to the proof of Lemma~\ref{thm:zoo:track:congruence}
  except for the use of
  Lemma~\ref{thm:zoo:reachability:congruence:auxiliary} instead of
  Lemma~\ref{thm:zoo:track:congruence:auxiliary}.
  \qed
\end{proof}
%
%
\section{Proof of Theorem~\ref{thm:zoo:reachability:complexity}} \label{app:zoo:reachability:complexity}
   Upper bounds are obtained as in the proof of Theorem~\ref{thm:zoo:establishment:complexity}.
   Thus,  we only show that the reduction provided in the paper proving the lower bounds is correct.
   We first briefly recall how \COMPLEMENT{SL-RSAT} is reduced to \DPROBLEM{SL-REACH}.
   Let $(\SRD,\PS)$ be an instance of \COMPLEMENT{SL-RSAT}.
   Then an instance of \DPROBLEM{SL-REACH} is given by $(\SRD,\sh,\PROJ{\FV{0}{}}{1},\PROJ{\FV{0}{}}{2})$,  where

   $ 
    \sh ~\DEFEQ~ \exists \BV{} ~.~
        \PT{\PROJ{\FV{0}{}}{1}}{\NIL} \SEP \PS\BV{} ~:~ \{ \PROJ{\FV{0}{}}{2} \neq \NIL \}.
   $ 
   Now,
   \begin{align*}
                      & \forall \rsh \in \CALLSEM{\sh}{\SRD} ~.~
                            \REACH{\PROJ{\FV{0}{}}{1}}{\PROJ{\FV{0}{}}{2}}{\rsh} \\
    ~\Leftrightarrow~ & \left[ \text{Construction of}~\sh,
                       (\rsh~\text{contains one points-to assertion}) \right] \\
                      & \forall \rsh \in \CALLSEM{\sh}{\SRD} ~.~
                          \PROJ{\FV{0}{}}{1} \MPT{\rsh} \PROJ{\FV{0}{}}{2} \\
    ~\Leftrightarrow~ & \left[ \text{Definition of}~\MPT{\rsh} \right] \\
                      & \forall \rsh \in \CALLSEM{\sh}{\SRD} ~.~
                          \forall (\stack,\heap) \in \MODELS{\rsh} ~.~
                                \stack(\PROJ{\FV{0}{}}{2}) \in \heap(\stack(\PROJ{\FV{0}{}}{1})) \\
    ~\Leftrightarrow~ & \left[
                          \forall (\stack,\heap) \in \MODELS{\rsh} ~.~
                            \heap(\stack(\PROJ{\FV{0}{}}{1})) = \NIL
                        \right] \\
                      & \forall \rsh \in \CALLSEM{\sh}{\SRD} ~.~
                          \forall (\stack,\heap) \in \MODELS{\rsh} ~.~ \\
                      & \qquad \stack(\PROJ{\FV{0}{}}{2}) \in \heap(\stack(\PROJ{\FV{0}{}}{1}))
                            ~\text{and}~ \stack(\PROJ{\FV{0}{}}{2}) = \NIL \\
    ~\Leftrightarrow~ & \left[ \PROJ{\FV{0}{}}{2} \neq \NIL \in \PURE{\sh} \right] \\
                      & \forall \rsh \in \CALLSEM{\sh}{\SRD} ~.~
                          \forall (\stack,\heap) \in \MODELS{\rsh} ~.~ \\
                      & \qquad \stack(\PROJ{\FV{0}{}}{2}) \in \heap(\stack(\PROJ{\FV{0}{}}{1}))
                            ~\text{and}~ \stack(\PROJ{\FV{0}{}}{2}) = \NIL \\
                      & \qquad\text{and}~ \stack(\PROJ{\FV{0}{}}{2}) \neq \NIL \\
    ~\Leftrightarrow~ & \left[ \NIL = \stack(\PROJ{\FV{0}{}}{2}) \neq \NIL
                         ~\text{iff}~ \MODELS{\rsh} = \emptyset \right] \\
                      & \forall \rsh \in \CALLSEM{\sh}{\SRD} ~.~
                            \MODELS{\rsh} = \emptyset \\
    ~\Leftrightarrow~ & \left[
                         \text{Lemma}~\ref{thm:symbolic-heaps:fv-coincidence}
                        \right] \\
                      & \MODELS{\sh} = \emptyset \\
    ~\Leftrightarrow~ & \left[ \text{Definition of satisfiability} \right] \\
                      & \sh~\text{unsatisfiable}.
   \end{align*}
   Since $\sh$ is satisfiable if and only if $\PS$ is satisfiable by an analogous argument as in the proof of Lemma~\ref{thm:zoo:establishment:lower},
   it follows that
   $\REACH{\PROJ{\FV{0}{}}{1}}{\PROJ{\FV{0}{}}{2}}{\sh}$ holds for each $\rsh \in \CALLSEM{\sh}{\SRD}$
   if and only if $\PS$ is unsatisfiable.
   \qed
%
%
\section{Proof of Lemma~\ref{thm:zoo:garbage:property}} \label{app:zoo:garbage:property}
We have to construct a heap automaton
$\HAGARBAGE$ over $\SHCLASSFV{\alpha}$ that satisfies the compositionality property
and accepts $\GARBAGEPROP(\alpha)$.
In order to highlight the necessary proof obligations, the actual construction of $\HAGARBAGE$
and its correctness proof are splitted into several definitions and lemmas that are
provided afterwards.
The construction of $\HAGARBAGE$ is similar to the construction of $\HAEST$
as presented in Theorem~\ref{thm:zoo:establishment}.
The main difference is that the reachability automaton $\HAREACH$,
formally introduced in Definition~\ref{def:zoo:reachability-automaton},
is used instead of the tracking automaton $\HATRACK$
and that the predicate $\CHECK$ is adapted.
A fully formal construction of $\HAGARBAGE$ is found in Definition~\ref{def:zoo:garbage:automaton}.
It then remains to show the correctness of our construction of $\HAGARBAGE$:
\begin{itemize}
  \item Lemma~\ref{thm:zoo:garbage:language} establishes that $\HAGARBAGE$ indeed accepts $\GARBAGEPROP(\alpha)$, i.e., $L(\HAGARBAGE) = \GARBAGEPROP(\alpha)$.
  \item To prove the compositionality property, we show that $\HAGARBAGE$ is an instance of a more general
        construction scheme whose compositionality property is shown in
        Lemma~\ref{thm:zoo:scheme:compositionality}.
        In order to apply Lemma~\ref{thm:zoo:scheme:compositionality}, we have to show that
        \begin{align*}
                          & \CHECK(\sh[\PS_1^{\sh} / \rsh_1, \ldots, \PS_m^{\sh} / \rsh_m], \EMPTYSEQ) = 1 \\
             ~\text{iff}~ & \CHECK(\sh,p_1 \ldots p_m) = 1 \\
                          & \text{and}~ \CHECK(\rsh_1,\EMPTYSEQ) = \ldots = \CHECK(\rsh_m,\EMPTYSEQ) = 1.
         \end{align*}
         This is verified in Lemma~\ref{thm:zoo:garbage:compositionality}.
         Then, by Lemma~\ref{thm:zoo:scheme:compositionality}, we know that 
         $\HAGARBAGE = \HASCHEME{\HAREACH}{\CHECK}{F}$ satisfies the compositionality property.
\end{itemize}
Putting both together, we obtain a heap automaton $\HAGARBAGE$ over $\SHCLASSFV{\alpha}$
that satisfies the compositionality property and accepts $\GARBAGEPROP(\alpha)$.
\qed
\begin{definition} \label{def:zoo:garbage:automaton}
$\HAGARBAGE = (Q,\SHCLASSFV{\alpha},\Delta,F)$ is given by
\begin{align*}
   & Q ~\DEFEQ~ Q_{\HAREACH} \times \{0,1\} \qquad F ~\DEFEQ~ Q_{\HAREACH} \times \{1\} \\
   &  \Delta ~~:~~ \MOVE{\HAGARBAGE}{(p_0,q_0)}{\sh}{(p_1,q_1) \ldots (p_m,q_m)} \\
   & ~\text{iff}~ \MOVE{\HAREACH}{p_0}{\sh}{p_1\ldots p_m} \\
   & \qquad \text{and}~ q_0 = \min \{q_1,\ldots,q_m,\CHECK(\sh,p_1 \ldots p_m)\}~,
\end{align*}
where $\HAREACH$ is as in Definition~\ref{def:zoo:reachability-automaton}.
Furthermore, the predicate $\CHECK : \SHCLASSFV{\alpha} \times Q_{\HAREACH}^{*} \to \{0,1\}$
verifies that each variable of a symbolic heap $\sh$ is established in $\SHRINK{\sh,\T{p}}$,
where $\SHRINK{\sh,\T{p}}$ is the same as in the construction of $\HAREACH$.
Hence,
\begin{align*}
\CHECK(\sh,\T{p}) ~\DEFEQ~
 \begin{cases}
        1 &, ~\text{if}~ \forall y \in \VAR(\sh) \,.\, \exists x \in \FV{0}{\sh} ~.~ \\
          & \qquad        x \MEQ{\SHRINK{\sh,\T{p}}} y ~\text{or}~ \REACH{x}{y}{\SHRINK{\sh,\T{p}}} \\
        0 &, ~\text{otherwise}~,
 \end{cases}
\end{align*}
\end{definition}
\begin{lemma} \label{thm:zoo:garbage:language}
  $L(\HAGARBAGE) = \GARBAGEPROP(\alpha)$.
\end{lemma}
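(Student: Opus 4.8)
The proof is a direct analogue of that of Lemma~\ref{thm:zoo:establishment:language}: it consists of a single chain of equivalences characterising when an arbitrary reduced symbolic heap lies in each of the two sets. The plan is the following. Fix $\rsh \in \RSHCLASSFV{\alpha}$. Since $\rsh$ is reduced we have $\NOCALLS{\rsh} = 0$, so $\HAGARBAGE$ can assign a state to $\rsh$ only through a single transition $\OMEGA{\HAGARBAGE}{(p_0,q_0)}{\rsh}$. First I would unfold the definition of $L(\HAGARBAGE)$ together with $F_{\HAGARBAGE} = Q_{\HAREACH} \times \{1\}$, reducing $\rsh \in L(\HAGARBAGE)$ to the existence of some $p \in Q_{\HAREACH}$ with $\OMEGA{\HAGARBAGE}{(p,1)}{\rsh}$. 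Next, by the definition of $\Delta_{\HAGARBAGE}$ from Definition~\ref{def:zoo:garbage:automaton}, and using that here $m = 0$ so that $\min\{q_1,\ldots,q_m,\CHECK(\rsh,\EMPTYSEQ)\}$ collapses to $\CHECK(\rsh,\EMPTYSEQ)$, this is equivalent to: there is some $p \in Q_{\HAREACH}$ with $\OMEGA{\HAREACH}{p}{\rsh}$ and $\CHECK(\rsh,\EMPTYSEQ) = 1$.

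At this point I would observe that $\Delta_{\HAREACH}$ assigns a state to every reduced symbolic heap, since on a reduced heap the output state is uniquely determined by its definite allocation, its definite (in)equalities and its definite reachability relation restricted to free variables (cf. Definition~\ref{def:zoo:reachability-automaton}). Hence the inner existential over $p$ is always witnessed and can be dropped, leaving $\rsh \in L(\HAGARBAGE)$ iff $\CHECK(\rsh,\EMPTYSEQ) = 1$. Finally, since $\SHRINK{\rsh,\EMPTYSEQ} = \rsh$ holds for every reduced symbolic heap, unfolding the definition of $\CHECK$ turns this last condition into
\[
  \forall y \in \VAR(\rsh) ~.~ \exists x \in \FV{0}{\rsh} ~.~ x \MEQ{\rsh} y ~\text{or}~ \REACH{x}{y}{\rsh},
\]
which is precisely the membership condition defining $\GARBAGEPROP(\alpha)$. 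Concatenating these equivalences yields $L(\HAGARBAGE) = \GARBAGEPROP(\alpha)$.

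There is no genuine obstacle here; the work is the bookkeeping in the equivalence chain. The only non-cosmetic point is the completeness of $\Delta_{\HAREACH}$ on reduced heaps, which is used to discharge the existential over $Q_{\HAREACH}$ and is immediate from the construction of $\HAREACH$. I would also stress that the compositionality property of $\HAGARBAGE$ is \emph{not} established in this lemma: it is obtained separately by exhibiting $\HAGARBAGE$ as the instance $\HASCHEME{\HAREACH}{\CHECK}{F}$ of the general construction scheme and invoking Lemma~\ref{thm:zoo:scheme:compositionality}.
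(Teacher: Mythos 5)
Your proof is correct and follows essentially the same chain of equivalences as the paper's own proof in Appendix~\ref{app:zoo:garbage:property}: unfold $L(\HAGARBAGE)$ and $F_{\HAGARBAGE}$, apply the definition of $\Delta_{\HAGARBAGE}$ with $m=0$, use $\SHRINK{\rsh,\EMPTYSEQ}=\rsh$, and unfold $\CHECK$. You even make explicit the one step the paper leaves implicit (that $\Delta_{\HAREACH}$ assigns a state to every reduced symbolic heap, so the existential over $Q_{\HAREACH}$ can be discharged), and you correctly defer compositionality to the scheme lemma, exactly as the paper does.
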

\begin{proof}
Let $\rsh \in \RSL{}{\SRDCLASSFV{\alpha}}$. Then:
\begin{align*}
                   & \rsh \in L(\HAGARBAGE) \\
 ~\Leftrightarrow~ & \left[ \text{Definition of}~L(\HAGARBAGE) \right] \\
                   & \exists q \in F_{\HAGARBAGE} ~.~ \OMEGA{\HAGARBAGE}{q}{\rsh} \\
 ~\Leftrightarrow~ & \left[ \text{Definition of}~F_{\HAGARBAGE},~q = (p,1) \right] \\
                   & \exists p \in Q_{\HAREACH} ~.~ \OMEGA{\HAGARBAGE}{(p,1)}{\rsh} \\
 ~\Leftrightarrow~ & \left[ \text{Definition of}~\Delta_{\HAGARBAGE} \right] \\
                   & \exists p \in Q_{\HAREACH} ~.~ \OMEGA{\HAREACH}{p}{\rsh}
                     ~\text{and}~ \CHECK(\SHRINK{\rsh,\EMPTYSEQ}) = 1 \\
 ~\Leftrightarrow~ & \left[ \text{Definition of}~\CHECK(\SHRINK{\rsh,\EMPTYSEQ}) \right] \\
                   & \exists p \in Q_{\HAREACH} ~.~ \OMEGA{\HAREACH}{p}{\rsh} \\
                   & \qquad \text{and}~ \forall y \in \VAR(\rsh) ~.~ \exists x \in \FV{0}{\rsh} ~.~ x \MEQ{\SHRINK{\rsh,\EMPTYSEQ}} y \\
                   & \qquad \quad \text{or}~ \REACH{x}{y}{\SHRINK{\rsh,\EMPTYSEQ}} \\
 ~\Leftrightarrow~ & \left[ \NOCALLS{\rsh} = 0 ~\text{implies}~ \rsh = \SHRINK{\rsh,\EMPTYSEQ} \right] \\
                   & \forall y \in \VAR(\rsh) ~.~ \exists x \in \FV{0}{\rsh} ~.~ x \MEQ{\rsh} y ~\text{or}~ \REACH{x}{y}{\rsh} \\
 ~\Leftrightarrow~ & \left[ \text{Definition of}~\GARBAGEPROP(\alpha) \right] \\
                   & \rsh \in \GARBAGEPROP(\alpha).
\end{align*}
\qed
\end{proof}
\begin{lemma} \label{thm:zoo:garbage:compositionality}
  Let  $\sh \in \SL{}{\SRDCLASSFV{\alpha}}$ with $\NOCALLS{\sh} = m$
  and $\rsh_1,\ldots,\rsh_m \in \RSL{}{\SRDCLASSFV{\alpha}}$. Then
  \begin{align*}
                   & \CHECK(\sh[\PS_1^{\sh} / \rsh_1, \ldots, \PS_m^{\sh} / \rsh_m], \EMPTYSEQ) = 1 \\
      ~\text{iff}~ & \CHECK(\sh,\T{p}) \\
                   & \text{and}~ \CHECK(\rsh_1,\EMPTYSEQ) = \ldots = \CHECK(\rsh_m,\EMPTYSEQ) = 1. 
  \end{align*}
\end{lemma}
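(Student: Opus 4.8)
The plan is to follow the structure of the proof of Lemma~\ref{thm:zoo:establishment:compositionality} almost verbatim, replacing the predicate ``definitely allocated or definitely equal to a free variable'' by ``definitely equal to a free variable or definitely reachable from a free variable'', and replacing the congruence lemma for $\HATRACK$ (Lemma~\ref{thm:zoo:track:congruence}) by the congruence lemma for $\HAREACH$ (Lemma~\ref{thm:zoo:reachability:congruence}). Write $\rsh \DEFEQ \sh[\PS_1^{\sh}/\rsh_1,\ldots,\PS_m^{\sh}/\rsh_m]$ and let $\T{p} = p_1 \ldots p_m$ with $p_i = (A_i,\Pi_i,S_i)$ the triples recording definite allocation, definite (in)equalities and definite reachabilities among the free variables of $\rsh_i$, so that $\SHRINK{\sh,\T{p}}$ is the reduced symbolic heap from Definition~\ref{def:zoo:reachability-automaton}. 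Since $\SHRINK{\rsh,\EMPTYSEQ} = \rsh$, unfolding the definition of $\CHECK$ reduces the claim to a statement purely about the definite equality, allocation and reachability relations of $\rsh$, $\SHRINK{\sh,\T{p}}$, and the $\rsh_i$.

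First I would split $\VAR(\rsh) = \VAR(\sh) \cup \bigcup_{i=1}^{m} \VAR(\rsh_i[\FV{0}{\rsh_i}/\FV{i}{\sh}])$. For a variable $y \in \VAR(\sh)$, Lemma~\ref{thm:zoo:reachability:congruence} gives $x \MEQ{\rsh} y \Leftrightarrow x \MEQ{\SHRINK{\sh,\T{p}}} y$ and $\REACH{x}{y}{\rsh} \Leftrightarrow \REACH{x}{y}{\SHRINK{\sh,\T{p}}}$ for all $x$ in $\VAR(\sh)$; hence the part of ``$\CHECK(\rsh,\EMPTYSEQ)=1$'' quantifying over $\VAR(\sh)$ is exactly ``$\CHECK(\sh,\T{p})=1$''. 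For a variable $y$ living in the copy of $\rsh_i$, the key observation is a reachability/equality \emph{propagation} fact: any definite equality or definite $\MPT{}$-path witnessing ``$y$ is equal to or reachable from some free variable of $\rsh$'' must, because the points-to assertions of $\rsh$ are those of $\sh$ together with those of the disjoint copies of the $\rsh_j$, enter the copy of $\rsh_i$ through one of its parameters $z \in \FV{i}{\sh}$; moreover $z$ is (a renaming of) a free variable of $\rsh_i$. This is the exact analogue of Lemma~\ref{obs:zoo:establishment:equality-propagation}, now read off Lemma~\ref{thm:zoo:relationships:characterizations} for both the closure of pure formulas and the definite points-to relation, using that the only variables shared between $\sh$ and the copy of $\rsh_i$ are the parameters $\FV{i}{\sh}$ (and that $\NIL$ is treated as a free variable, so no other constants break the sharing argument). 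Using this, the $\rsh_i$-part of ``$\CHECK(\rsh,\EMPTYSEQ)=1$'', \emph{assuming} $\CHECK(\sh,\T{p})=1$, becomes equivalent to ``every $y\in\VAR(\rsh_i)$ is equal to or reachable from a free variable of $\rsh_i$'', i.e.\ ``$\CHECK(\rsh_i,\EMPTYSEQ)=1$''.

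Concretely, the forward direction of this equivalence is handled by the propagation fact above (the witnessing parameter is already a free variable of $\rsh_i$). The converse direction uses transitivity of definite reachability together with propagation of definite equality: if $y$ is equal to / reachable from a free variable $z$ of $\rsh_i$, then in $\rsh$ the variable $z$ is identified with the parameter $\PROJ{\FV{i}{\sh}}{\ell} \in \VAR(\sh)$, which by $\CHECK(\sh,\T{p})=1$ (and the $\VAR(\sh)$ transfer established above) is itself equal to / reachable from a free variable of $\rsh$, so concatenating the two chains gives the desired statement for $y$. Putting the $\VAR(\sh)$-part and all the $\rsh_i$-parts together yields the two-way equivalence, and the lemma then follows; combined with Lemma~\ref{thm:zoo:scheme:compositionality} this shows $\HAGARBAGE = \HASCHEME{\HAREACH}{\CHECK}{F}$ is compositional. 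The main obstacle I expect is the careful statement and proof of the reachability-propagation analogue of Lemma~\ref{obs:zoo:establishment:equality-propagation}: unlike pure equality, a definite $\MPT{}$-path may interleave equalities (through the closure) with genuine pointer steps, so one must argue by induction on the length of the path that each ``crossing'' of the $\sh$/$\rsh_i$ boundary occurs at a shared parameter, while simultaneously keeping track of where each intermediate variable resides.
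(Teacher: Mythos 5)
Your proposal is correct and follows essentially the same route as the paper: split $\VAR(\rsh)$ into $\VAR(\sh)$ and the copies of the $\rsh_i$, handle the former via the congruence lemma for $\HAREACH$ (Lemma~\ref{thm:zoo:reachability:congruence}), and handle the latter via a reachability/equality propagation lemma through the shared parameters $\FV{i}{\sh}$ — which is precisely the paper's Lemma~\ref{obs:zoo:reachability:propagation}, itself only sketched there as ``a lengthy complete induction on the number of definite points-to assertions.'' You correctly identify that propagation lemma as the one genuinely delicate ingredient, so nothing is missing.
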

The following technical observation, similar to Lemma~\ref{obs:zoo:establishment:equality-propagation}, is essential.
Intuitively, it states that a variable in one part of an unfolding is definitely reachable by a variable in a different part of an unfolding
only through a parameter (including $\NIL$) of one or more suitable predicate calls.
This is illustrated in Figure~\ref{fig:reach:propagation}.
\begin{figure}
\begin{center}
\begin{tikzpicture}[shorten >=1pt,->]
  \node (top) at (-0.3,0) {$\sh = \exists \BV{} ~.~ \SPATIAL{} ~ \SEP$};
  \node (call1) at (1,0) {$\CALLN{1}{}$};
  \node (sep)   at (2,0) {$\SEP~\ldots~\SEP$};
  \node (callm) at (3,0) {$\CALLN{m}{}$};

  \node (rsh1)  at (1, -2.3) {$\rsh_1$};
  \draw [fill=gray!25]  (1,-0.5) --  (0.3,-2) --  (1.7,-2) -- cycle;

  \node (rsh2)  at (3, -2.3) {$\rsh_m$};
  \draw [fill=gray!25]  (3,-0.5) --  (2.3,-2) --  (3.7,-2) -- cycle;

  \node (u) at (1,-1.5) {$u$};
  \node (v) at (3,-1.5) {$v$};

  \draw[->,decorate,decoration=snake] (u) -- (call1);
  \draw[->,decorate,decoration=snake] (callm) -- (v);
  \path (call1) edge[->,decorate,decoration=snake,bend left=30] (callm);
  \draw[->,decorate,decoration=snake,thick] (u) -- (v);
\end{tikzpicture}
\end{center}
\caption{Propagation of reachability through parameters in a symbolic heap $\rsh = \sh[\PS_1 / \rsh_1, \ldots, \PS_m / \rsh_m]$.
         Here, $\REACH{u}{v}{\rsh}$ holds (thick arrow). 
         Since $u \in \VAR(\rsh_1)$ and $v \in \VAR(\rsh_m)$, there exist parameters $x,y$ of $\PS_1$ and $\PS_m$ that
         $\REACH{u}{x}{\rsh_1}$, $\REACH{x}{y}{\rsh}$, and $\REACH{y}{v}{\rsh_m}$ (thin arrows).
        }
 \label{fig:reach:propagation}
\end{figure}

\begin{lemma} \label{obs:zoo:reachability:propagation}
 Let  $\sh \in \SL{}{\SRDCLASSFV{\alpha}}$ with $\NOCALLS{\sh} = m$,  $\rsh_1,\ldots,\rsh_m \in \RSL{}{\SRDCLASSFV{\alpha}}$ and
 $\rsh = \sh[\PS_1^{\sh} / \rsh_1, \ldots, \PS_m^{\sh} / \rsh_m]$.
 Moreover, for some $1 \leq i \leq m$, let $x \in \VAR(\rsh_i[\FV{0}{\rsh_i} / \FV{i}{\sh}])$ and $y \in \VAR(\rsh) \setminus \VAR(\rsh_i[\FV{0}{\rsh_i} / \FV{i}{\sh}])$.
 Then $\REACH{x}{y}{\rsh}$ holds if and only if there exists $z \in \FV{i}{\sh}$ such that
 \begin{itemize}
   \item $x \MEQ{\rsh_i[\FV{0}{\rsh_i} / \FV{i}{\sh}]} z$ and and $\REACH{z}{y}{\rsh}$, or
   \item $\REACH{x}{z}{\rsh_i[\FV{0}{\rsh_i} / \FV{i}{\sh}]}$ and $z \MEQ{\rsh} y$, or
   \item $\REACH{x}{z}{\rsh_i[\FV{0}{\rsh_i} / \FV{i}{\sh}]}$ and $\REACH{z}{y}{\rsh}$.
 \end{itemize}
 The same holds for the converse direction, i.e., if
 $y \in \VAR(\rsh_i[\FV{0}{\rsh_i} / \FV{i}{\sh}])$ and $x \in \VAR(\rsh) \setminus \VAR(\rsh_i[\FV{0}{\rsh_i} / \FV{i}{\sh}])$.
\end{lemma}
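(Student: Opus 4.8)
The plan is to prove Lemma~\ref{obs:zoo:reachability:propagation} by analyzing how the definite reachability relation $\REACH{x}{y}{\rsh}$ decomposes along the structure of the substitution $\rsh = \sh[\PS_1^{\sh}/\rsh_1,\ldots,\PS_m^{\sh}/\rsh_m]$. First I would recall from Lemma~\ref{thm:zoo:relationships:characterizations} that, after passing to the completion $\textrm{complete}(\rsh)$, a step $u \MPT{\rsh} v$ corresponds to an explicit points-to assertion $\PT{z_1}{(\ldots,z_2,\ldots)}$ in $\SPATIAL{\textrm{complete}(\rsh)}$ together with $u = z_1$ and $v = z_2$ in $\PURE{\textrm{complete}(\rsh)}$. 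Since $\SPATIAL{\rsh}$ is the disjoint concatenation $\SPATIAL{\sh} \SEP \SPATIAL{\rsh_1[\ldots]} \SEP \ldots \SEP \SPATIAL{\rsh_m[\ldots]}$, every points-to assertion of $\rsh$ lives entirely inside exactly one of the summands $\sh$, $\rsh_1[\FV{0}{\rsh_1}/\FV{1}{\sh}]$, \ldots, $\rsh_m[\FV{0}{\rsh_m}/\FV{m}{\sh}]$. Moreover, the only variables shared between the summand $\rsh_i[\FV{0}{\rsh_i}/\FV{i}{\sh}]$ and the rest of $\rsh$ are the parameters $\FV{i}{\sh}$ of the $i$-th predicate call (including $\NIL$), because $\rsh$ substitutes $\FV{0}{\rsh_i}$ by $\FV{i}{\sh}$ and the other free/bound variables of $\rsh_i$ are renamed fresh. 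This is the structural observation that drives the whole argument, and it is essentially the same reasoning already used in the proof of Lemma~\ref{obs:zoo:establishment:equality-propagation} for equalities.

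Next I would prove the easy direction: each of the three bulleted disjuncts on the right-hand side implies $\REACH{x}{y}{\rsh}$. This follows immediately from transitivity of $\REACH{\cdot}{\cdot}{\rsh}$ together with the fact that $\SPATIAL{\rsh_i[\ldots]} \subseteq \SPATIAL{\rsh}$ (so any reachability chain valid inside $\rsh_i[\ldots]$ is valid in $\rsh$) and that $x \MEQ{\rsh_i[\ldots]} z$ entails $x \MEQ{\rsh} z$ (since $\MODELS{\STRIP{\rsh}}$ projects onto models of the summand). For the harder direction, suppose $\REACH{x}{y}{\rsh}$ with $x \in \VAR(\rsh_i[\ldots])$ and $y \notin \VAR(\rsh_i[\ldots])$. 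I would take a witnessing chain $x = w_0 \MPT{\rsh} w_1 \MPT{\rsh} \cdots \MPT{\rsh} w_n = y$ of minimal length (after completion, so that $\MPT{\rsh}$ is "explicit"). Walking along this chain, I would look at the first index $k$ at which $w_k \notin \VAR(\rsh_i[\ldots])$ while $w_{k-1} \in \VAR(\rsh_i[\ldots])$ -- such an index exists because $w_0 = x$ is in and $w_n = y$ is out. The points-to assertion realizing $w_{k-1} \MPT{\rsh} w_k$ sits in some summand; a short case distinction on which summand, combined with the "shared variables are parameters" fact, forces $w_{k-1}$ (or $w_k$) to be definitely equal to some parameter $z \in \FV{i}{\sh}$. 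This gives the decomposition into a prefix $\REACH{x}{z}{\rsh_i[\ldots]}$ (or $x \MEQ{\rsh_i[\ldots]} z$ if the prefix is trivial) and a suffix $\REACH{z}{y}{\rsh}$ (or $z \MEQ{\rsh} y$), which is exactly one of the three bullets. Finally, the converse statement (with the roles of $x$ and $y$ swapped) is symmetric: a reachability chain entering $\VAR(\rsh_i[\ldots])$ for the first time must pass through a parameter $z \in \FV{i}{\sh}$, yielding the analogous decomposition.

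The main obstacle I anticipate is the bookkeeping around the three-way disjunction: the chain from $x$ to $y$ may enter and leave $\VAR(\rsh_i[\ldots])$ several times, and one must argue that it suffices to consider the \emph{first} crossing, packaging everything before it as an $\rsh_i[\ldots]$-internal reachability (or a bare equality when $x$ is already a parameter) and everything after it as $\rsh$-level reachability. Handling the boundary cases cleanly -- in particular when $x$ itself already equals a parameter (so the $\rsh_i[\ldots]$-internal prefix is empty, giving the first bullet) versus when $x$ reaches a parameter via at least one points-to step inside $\rsh_i$ (the third bullet), and likewise when $z$ equals $y$ directly (second bullet) -- is where the argument is most fiddly, but it is purely combinatorial and parallels the structure of Lemma~\ref{obs:zoo:establishment:equality-propagation}. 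I would refer the reader to Figure~\ref{fig:reach:propagation} to make the crossing picture intuitive and then spell out the case analysis.
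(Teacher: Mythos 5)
Your proposal is correct and follows essentially the same route as the paper, whose own proof is only a two-line sketch stating that the right-to-left direction is straightforward and the left-to-right direction goes by a lengthy complete induction on the number of definite points-to steps from $x$ to $y$ --- which is exactly your minimal-chain decomposition at the boundary of $\VAR(\rsh_i[\FV{0}{\rsh_i}/\FV{i}{\sh}])$, using the fact that the summands share only the call parameters. The one detail to watch in the fiddly part you already flag is that the ``first crossing'' should be located on the points-to assertion \emph{realizing} a step rather than on variable membership of the $w_k$, since a step $w_{j-1} \MPT{\rsh} w_j$ with both endpoints in $\VAR(\rsh_i[\FV{0}{\rsh_i}/\FV{i}{\sh}])$ can be witnessed by an assertion in a different summand, in which case Lemma~\ref{obs:zoo:establishment:equality-propagation} already forces an earlier passage through a parameter.
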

\begin{proof}[sketch]
  The implication from right to left is straightforward.
  The implication from left to right is shown by a lengthy complete induction on the number of
  definitive points-to assertions to reach $y$ from $x$.
  \qed
\end{proof}
Note that this observation relies on the fact that the reachability relation $\REACH{}{}{\sh}$ is defined with respect to \emph{all} models of $\sh$, not just a single one. Otherwise, the observation is wrong for symbolic heaps that are not established.
\begin{proof}[of Lemma~\ref{thm:zoo:garbage:compositionality}]
Let $\rsh = \CHECK(\sh[\PS^{\sh}_1 / \rsh_1, \ldots, \PS^{\sh}_m / \rsh_m])$.
Then:
\begin{align*}
                    & \CHECK(\rsh,\EMPTYSEQ) = 1 \\
  ~\Leftrightarrow~ & \left[ \text{Definition of}~\CHECK,\SHRINK{\rsh,\EMPTYSEQ} = \rsh \right] \\
                    & \forall y \in \VAR(\rsh) ~.~ \exists x \in \FV{0}{\rsh} ~.~ x \MEQ{\rsh} y ~\text{or}~ \REACH{x}{y}{\rsh} \\
  ~\Leftrightarrow~ & \left[
                        \VAR(\rsh) = \VAR(\sh) \cup
                        \bigcup_{1 \leq i \leq m} \VAR(\rsh_i\left[ \FV{0}{\rsh_i} / \FV{i}{\sh}  \right])
                      \right] \\
                    & \forall y \in \VAR(\sh) . \exists x \in \FV{0}{\rsh} ~.~ x \MEQ{\rsh} y ~\text{or}~ \REACH{x}{y}{\rsh} \\
                    & \quad \text{and}~ \forall 1 \leq i \leq m ~.~
                      \forall y \in \VAR\left(\rsh_i\left[ \FV{0}{\rsh_i} / \FV{i}{\sh}  \right]\right) ~. \\
                    & \qquad \qquad \qquad \exists x \in \FV{0}{\rsh} ~.~ x \MEQ{\rsh} y ~\text{or}~ \REACH{x}{y}{\rsh} \\
  ~\Leftrightarrow~ & \left[ y \in \VAR(\sh),
                             ~\text{Lemma}~\ref{thm:zoo:reachability:congruence} \right] \\
                    & \forall y \in \VAR(\sh) ~.~ \exists x \in \FV{0}{\sh} ~.~ x \MEQ{\SHRINK{\sh,\T{p}}} y \\
                    & \qquad \quad ~\text{or}~ \REACH{x}{y}{\SHRINK{\sh,\T{p}}} \\
                    & \quad \text{and}~ \forall 1 \leq i \leq m ~.~
                      \forall y \in \VAR\left(\rsh_i\left[ \FV{0}{\rsh_i} / \FV{i}{\sh}  \right]\right) ~. \\
                    & \qquad \qquad \qquad \exists x \in \FV{0}{\rsh} ~.~ x \MEQ{\rsh} y ~\text{or}~ \REACH{x}{y}{\rsh} \\
  ~\Leftrightarrow~ & \left[ \text{Lemma}~\ref{obs:zoo:establishment:equality-propagation},~\text{Lemma}~\ref{obs:zoo:reachability:propagation} \right] \\
                    & \forall y \in \VAR(\sh) ~.~ \exists x \in \FV{0}{\sh} ~.~ x \MEQ{\SHRINK{\sh,\T{p}}} y ~\text{or}~ \REACH{x}{y}{\SHRINK{\sh,\T{p}}} \\
                    & \quad \text{and}~ \forall 1 \leq i \leq m ~.~
                      \forall y \in \VAR\left(\rsh_i\left[ \FV{0}{\rsh_i} / \FV{i}{\sh}  \right]\right) ~. \\
                    & \qquad  \exists x \in \FV{0}{\rsh} ~.~ \exists z \in \FV{i}{\sh}  ~.~
                      \left( x \MEQ{\rsh} z ~\text{or}~ \REACH{x}{z}{\rsh} \right)  \\
                    & \qquad \qquad ~\text{and}~
                      \left( \REACH{z}{y}{\rsh_i\left[ \FV{0}{\rsh_i} / \FV{i}{\sh} \right]} ~\text{or}~
                      z \MEQ{\rsh_i\left[ \FV{0}{\rsh_i} / \FV{i}{\sh} \right]} y \right)  \\
  ~\Leftrightarrow~ & \left[ \exists a \exists b \equiv \exists b \exists a,
                            ~x,y \in \VAR(\sh) \right] \\
                    & \forall y \in \VAR(\sh) . \exists x \in \FV{0}{\sh} ~.~ \REACH{x}{y}{\SHRINK{\sh,\T{p}}} \\
                    & \quad \text{and}~ \forall 1 \leq i \leq m ~.~
                      \forall y \in \VAR\left(\rsh_i\left[ \FV{0}{\rsh_i} / \FV{i}{\sh}  \right]\right) ~. \\
                    & \qquad \exists z \in \FV{i}{\sh}  ~.~
                      \left( \REACH{z}{y}{\rsh_i\left[ \FV{0}{\rsh_i} / \FV{i}{\sh} \right]}
                      ~\text{or}~ z \MEQ{\rsh_i\left[ \FV{0}{\rsh_i} / \FV{i}{\sh} \right]} y \right)  \\
  ~\Leftrightarrow~ & \left[ \text{Definition of}~\CHECK \right] \\
                    & \CHECK(\sh,\T{p}) = 1 ~\text{and}~ \forall 1 \leq i \leq m ~.~
                      \CHECK(\rsh_i[\FV{0}{\rsh_i} / \FV{i}{\sh}],\EMPTYSEQ) = 1 \\
  ~\Leftrightarrow~ & \left[ \CHECK(\rsh_i[\FV{0}{\rsh_i} / \FV{i}{\sh}],\EMPTYSEQ) = 1
                             ~\text{iff}~ \CHECK(\rsh_i,\EMPTYSEQ) = 1 \right] \\
                    & \CHECK(\sh,\T{p}) = 1 ~\text{and}~ \forall 1 \leq i \leq m ~.~
                      \CHECK(\rsh_i,\EMPTYSEQ) = 1. \qedhere
\end{align*}
\qed
\end{proof}
%

%
%
\section{Proof of Theorem~\ref{thm:zoo:garbage:complexity}} \label{app:zoo:garbage:complexity}
   Since upper bounds are obtained analogously to Theorem~\ref{thm:zoo:establishment:complexity},
   we only show that the reduction provided in the paper proving the lower bounds is correct.
   We first briefly recall how \COMPLEMENT{SL-RSAT} is reduced to \DPROBLEM{SL-GF}.
   Let $(\SRD,\PS)$ be an instance of \COMPLEMENT{SL-RSAT}.
   Then a corresponding instance of \DPROBLEM{SL-GF} is given by
   $(\SRD,\sh)$, where $\FV{0}{\sh} = x$ and
    $\sh ~\DEFEQ~ \exists \BV{} z' ~.~ \SEP \PS\BV{} ~:~ \{ x = \NIL, z' \neq \NIL \}$.
   Now,
   \begin{align*}
                        & \CALLSEM{\sh}{\SRD} \subseteq \GARBAGEPROP(\alpha) \\
      ~\Leftrightarrow~ & \left[ A \subseteq B ~\text{iff}~ \forall x \in A ~.~ x \in B \right] \\
                        & \forall \rsh \in \CALLSEM{\sh}{\SRD} ~.~ \rsh \in \GARBAGEPROP(\alpha) \\
      ~\Leftrightarrow~ & \left[ \text{Definition of}~\GARBAGEPROP(\alpha,R) \right] \\
                        & \forall \rsh \in \CALLSEM{\sh}{\SRD} ~.~
                          \forall u \in \VAR(\rsh) ~.~ \exists v \in \FV{0}{\rsh} ~.~ u \MEQ{\rsh} v ~\text{or}~ \REACH{u}{v}{\rsh} \\
      ~\Leftrightarrow~ & \left[ \sh~\text{contains no points-to assertions} \right] \\
                        & \forall \rsh \in \CALLSEM{\sh}{\SRD} ~.~
                          \forall u \in \VAR(\rsh) ~.~ \exists v \in \FV{0}{\rsh} ~.~ u \MEQ{\rsh} v \\
      ~\Leftrightarrow~ & \left[ \FV{0}{\rsh} = x \right] \\
                        & \forall \rsh \in \CALLSEM{\sh}{\SRD} ~.~
                          \forall u \in \VAR(\rsh) ~.~ u \MEQ{\rsh} \NIL \\
      ~\Leftrightarrow~ & \left[ z' \neq \NIL \in \PURE{\rsh} \right] \\
                        & \sh~\text{unsatisfiable}.
   \end{align*}
   Since $\sh$ is satisfiable if and only if $\PS$ is satisfiable by an analogous argument as in the proof of Lemma~\ref{thm:zoo:establishment:lower},
   it follows that $\CALLSEM{\sh}{\SRD} \subseteq \GARBAGEPROP(\alpha)$ holds if and only if $\PS$ is unsatisfiable.
   \qed
%
%
\section{Proof of Lemma~\ref{thm:zoo:acyclicity:property}} \label{app:zoo:acyclicity:property}
We have to construct a heap automaton
$\HACYCLE$ over $\SHCLASSFV{\alpha}$ that satisfies the compositionality property
and accepts $\CYCLEPROP(\alpha)$.
In order to highlight the necessary proof obligations, the actual construction of $\HACYCLE$
and its correctness proof are splitted into several definitions and lemmas that are
provided afterwards.
The construction of $\HACYCLE$ is similar to the construction of $\HAEST$
as presented in Theorem~\ref{thm:zoo:establishment}.
The main difference is that the reachability automaton $\HAREACH$,
formally introduced in Definition~\ref{def:zoo:reachability-automaton},
is used instead of the tracking automaton $\HATRACK$
and that the predicate $\CHECK$ is adapted.
A fully formal construction of $\HACYCLE$ is found in Definition~\ref{def:zoo:acyclicity:automaton}.
It then remains to show the correctness of our construction of $\HACYCLE$:
\begin{itemize}
  \item Lemma~\ref{thm:zoo:acyclicity:language} establishes that $\HACYCLE$ indeed accepts $\CYCLEPROP(\alpha)$, i.e., we have $L(\HACYCLE) = \CYCLEPROP(\alpha)$.
  \item To prove the compositionality property, we show that $\HACYCLE$ is an instance of a more general
        construction scheme whose compositionality property is shown in
        Lemma~\ref{thm:zoo:scheme:compositionality}.
        In order to apply Lemma~\ref{thm:zoo:scheme:compositionality}, we have to show that
        \begin{align*}
                          & \CHECK(\sh[\PS_1^{\sh} / \rsh_1, \ldots, \PS_m^{\sh} / \rsh_m], \EMPTYSEQ) = 1 \\
             ~\text{iff}~ & \CHECK(\sh,p_1 \ldots p_m) = 1 \\
                          & \text{and}~ \CHECK(\rsh_1,\EMPTYSEQ) = \ldots = \CHECK(\rsh_m,\EMPTYSEQ) = 1.
         \end{align*}
         This is verified in Lemma~\ref{thm:zoo:acyclicity:compositionality}.
         Then, by Lemma~\ref{thm:zoo:scheme:compositionality}, we know that
         $\HACYCLE = \HASCHEME{\HAREACH}{\CHECK}{F}$ satisfies the compositionality property.
\end{itemize}
Putting both together, we obtain a heap automaton $\HACYCLE$ over $\SHCLASSFV{\alpha}$
that satisfies the compositionality property and accepts $\CYCLEPROP(\alpha)$.
\qed
\begin{definition} \label{def:zoo:acyclicity:automaton}
  $\HACYCLE = (Q,\SHCLASSFV{\alpha},\Delta,F)$ is given by
\begin{align*}
   & Q ~\DEFEQ~ Q_{\HAREACH} \times \{0,1\} \\
   & F ~\DEFEQ~ Q_{\HAREACH} \times \{1\} \\
   & \qquad \quad ~\cup~ \{ (A,\Pi,S) \in Q_{\HAREACH} ~|~ \NIL \neq \NIL \in \Pi \} \times \{0\}  \\
   &  \Delta ~~:~~ \MOVE{\HACYCLE}{(p_0,q_0)}{\sh}{(p_1,q_1) \ldots (p_m,q_m)} \\
   & ~\text{iff}~ \MOVE{\HAREACH}{p_0}{\sh}{p_1\ldots p_m} \\
   & \qquad \text{and}~ q_0 = \min \{q_1,\ldots,q_m,\CHECK(\sh,p_1 \ldots p_m)\}~,
\end{align*}

where $\HAREACH$ is is as in Definition~\ref{def:zoo:reachability-automaton}.
Furthermore, the predicate $\CHECK : \SHCLASSFV{\alpha} \times Q_{\HAREACH}^{*} \to \{0,1\}$
verifies that each variable of a symbolic heap $\sh$ is not reachable from itself in $\SHRINK{\sh,\T{p}}$,
where $\SHRINK{\sh,\T{p}}$ is the same as in the construction of $\HAREACH$.
Hence,
\begin{align*}
\CHECK(\sh,\T{p}) ~\DEFEQ~
 \begin{cases}
        1 &, ~\text{if}~ \forall y \in \VAR(\sh) ~.~ \text{not}~\REACH{x}{x}{\SHRINK{\sh,\T{p}}} \\
        0 &, ~\text{otherwise}.
 \end{cases}
\end{align*}
\end{definition}
\begin{lemma} \label{thm:zoo:acyclicity:language}
  $L(\HACYCLE) = \CYCLEPROP(\alpha)$.
\end{lemma}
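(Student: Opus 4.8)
The plan is to establish $L(\HACYCLE) = \CYCLEPROP(\alpha)$ by a chain of equivalences in the same style as Lemma~\ref{thm:zoo:establishment:language} and Lemma~\ref{thm:zoo:garbage:language}, but with one extra case distinction reflecting the two kinds of final states in $F_{\HACYCLE}$. Fix a reduced symbolic heap $\rsh \in \RSL{}{\SRDCLASSFV{\alpha}}$. Since $\NOCALLS{\rsh} = 0$ we have $\SHRINK{\rsh,\EMPTYSEQ} = \rsh$, so a run of $\HACYCLE$ on $\rsh$ is a single transition: by Definition~\ref{def:zoo:acyclicity:automaton} it ends in a state $(p,b)$ iff $\OMEGA{\HAREACH}{p}{\rsh}$ holds and $b = \CHECK(\rsh,\EMPTYSEQ)$ (for $m = 0$ the minimum $\min\{q_1,\ldots,q_m,\CHECK(\rsh,\EMPTYSEQ)\}$ reduces to $\CHECK(\rsh,\EMPTYSEQ)$). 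As $\HAREACH$, like $\HATRACK$, assigns a unique state to every reduced symbolic heap, $\rsh$ determines exactly one pair $(p,b)$, and $\rsh \in L(\HACYCLE)$ iff $(p,b) \in F_{\HACYCLE}$.

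I would then split on $b$. In the case $b = 1$, we have $(p,1) \in F_{\HACYCLE}$ unconditionally, and unfolding $\CHECK(\rsh,\EMPTYSEQ) = 1$ (again using $\SHRINK{\rsh,\EMPTYSEQ} = \rsh$) gives exactly $\forall x \in \VAR(\rsh) ~.~ \text{not}~\REACH{x}{x}{\rsh}$, the second disjunct of $\CYCLEPROP(\alpha)$. In the case $b = 0$, we have $(p,0) \in F_{\HACYCLE}$ iff $p$ has the shape $(A,\Pi,S)$ with $\NIL \neq \NIL \in \Pi$. Since $\OMEGA{\HAREACH}{(A,\Pi,S)}{\rsh}$ holds and the transition relation of $\HAREACH$ extends that of $\HATRACK$, the component $\Pi$ records precisely the definite (in)equalities between the free variables of $\SHRINK{\rsh,\EMPTYSEQ} = \rsh$; hence $\NIL \neq \NIL \in \Pi$ iff $\NIL \MNEQ{\rsh} \NIL$, and by the characterization of definite relationships (Remark~\ref{rem:closure}) this holds iff $\rsh$ is unsatisfiable, which is the first disjunct of $\CYCLEPROP(\alpha)$.

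Combining the two cases (exactly one applies, since $\HAREACH$ together with $\CHECK$ fixes $(p,b)$) yields $\rsh \in L(\HACYCLE)$ iff $\NIL \MNEQ{\rsh} \NIL$ or $\forall x \in \VAR(\rsh) ~.~ \text{not}~\REACH{x}{x}{\rsh}$, i.e.\ $\rsh \in \CYCLEPROP(\alpha)$, which is the claim. The step I expect to require the most care in the full write-up is the $b = 0$ case: one must argue that the $\Pi$ in the reached $\HAREACH$-state faithfully mirrors the definite inequalities of $\rsh$ (so that containment of $\NIL \neq \NIL$ coincides with unsatisfiability), and one must check the boundary situation where $\rsh$ is unsatisfiable but violates the acyclicity predicate---there the flag is $b = 0$, yet the reached state still lies in $F_{\HACYCLE}$, matching the disjunctive definition of weak acyclicity. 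Everything else is routine bookkeeping with the definitions of $L(\cdot)$, $\Delta_{\HACYCLE}$, $F_{\HACYCLE}$, and $\CHECK$, following the pattern of the proofs of Lemma~\ref{thm:zoo:establishment:language} and Lemma~\ref{thm:zoo:garbage:language}.
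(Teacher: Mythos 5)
Your proposal is correct and follows essentially the same route as the paper's own proof: both unfold $L(\HACYCLE)$, split on the two kinds of final states in $F_{\HACYCLE}$, use $\SHRINK{\rsh,\EMPTYSEQ}=\rsh$ to reduce $\CHECK$ to the literal acyclicity condition, and translate $\NIL\neq\NIL\in\Pi$ into $\NIL\MNEQ{\rsh}\NIL$ via the transition relation inherited from $\HATRACK$. Your framing via determinism of the reached state $(p,b)$ and your extra detour through unsatisfiability are harmless cosmetic variations on the paper's chain of equivalences.
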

\begin{proof}
Let $\rsh \in \RSL{}{\SRDCLASSFV{\alpha}}$. Then:
\begin{align*}
                   & \rsh \in L(\HACYCLE) \\
 ~\Leftrightarrow~ & \left[ \text{Definition of}~L(\HACYCLE) \right] \\
                   & \exists q \in F_{\HACYCLE} ~.~ \OMEGA{\HACYCLE}{q}{\rsh} \\
 ~\Leftrightarrow~ & \left[ \text{Definition of}~F_{\HACYCLE}, \right] \\
                   & \exists p \in Q_{\HAREACH} ~.~ \OMEGA{\HACYCLE}{(p,1)}{\rsh} \\
                   & \text{or}~ \exists p=(A,\Pi,S) \in Q_{\HAREACH} ~.~ \exists r \in \{0,1\} ~.~  \\
                   & \qquad \NIL \neq \NIL \in \Pi ~\text{and}~ \OMEGA{\HACYCLE}{(p,r)}{\rsh} \\
 ~\Leftrightarrow~ & \left[ \text{Definition of}~\Delta_{\HACYCLE} \right] \\
                   & \exists p \in Q_{\HAREACH} ~.~ \OMEGA{\HAREACH}{p}{\rsh}
                     ~\text{and}~ \CHECK(\rsh, \EMPTYSEQ) = 1 \\
                   & \text{or}~ \exists p=(A,\Pi,S) \in Q_{\HAREACH} ~.~ \exists r \in \{0,1\} ~.~  \\
                   & \qquad \NIL \neq \NIL \in \Pi ~\text{and}~ \OMEGA{\HAREACH}{p}{\rsh}
                     ~\text{and}~ \CHECK(\rsh, \EMPTYSEQ) = r \\
 ~\Leftrightarrow~ & \left[ \CHECK(\rsh, \EMPTYSEQ) \in \{0,1\} \right] \\
                   & \exists p \in Q_{\HAREACH} ~.~ \OMEGA{\HAREACH}{p}{\rsh}
                     ~\text{and}~ \CHECK(\rsh, \EMPTYSEQ) = 1 \\
                   & \text{or}~ \exists p=(A,\Pi,S) \in Q_{\HAREACH} ~.~
                        \NIL \neq \NIL \in \Pi ~\text{and}~ \OMEGA{\HAREACH}{p}{\rsh} \\
 ~\Leftrightarrow~ & \left[ \text{Definition of}~\CHECK \right] \\
                   & \exists p \in Q_{\HAREACH} ~.~ \OMEGA{\HAREACH}{p}{\rsh} \\
                   & \qquad \text{and}~ \forall x \in \VAR(\rsh) ~.~ \text{not}~\REACH{x}{x}{\SHRINK{\rsh,\EMPTYSEQ}} \\
                   & \text{or}~ \exists p=(A,\Pi,S) \in Q_{\HAREACH} ~.~
                        \NIL \neq \NIL \in \Pi ~\text{and}~ \OMEGA{\HAREACH}{p}{\rsh} \\
 ~\Leftrightarrow~ & \left[ \NOCALLS{\rsh} = 0 ~\text{implies}~ \rsh = \SHRINK{\rsh,\EMPTYSEQ} \right] \\
                   & \forall x \in \VAR(\rsh) ~.~ \text{not}~\REACH{x}{x}{\rsh} \\
                   & \text{or}~ \exists p=(A,\Pi,S) \in Q_{\HAREACH} ~.~
                        \NIL \neq \NIL \in \Pi ~\text{and}~ \OMEGA{\HAREACH}{p}{\rsh} \\
 ~\Leftrightarrow~ & \left[ \text{Definition of}~\Delta_{\HAREACH} \right] \\
                   & \forall x \in \VAR(\rsh) ~.~ \text{not}~\REACH{x}{x}{\rsh}
                        ~\text{or}~ \NIL \MNEQ{\rsh} \NIL \\
 ~\Leftrightarrow~ & \left[ \text{Definition of}~\CYCLEPROP(\alpha) \right] \\
                   & \rsh \in \CYCLEPROP(\alpha).
\end{align*}
\qed
\end{proof}
\begin{lemma} \label{thm:zoo:acyclicity:compositionality}
  Let  $\sh \in \SL{}{\SRDCLASSFV{\alpha}}$ with $\NOCALLS{\sh} = m$
  and $\rsh_1,\ldots,\rsh_m \in \RSL{}{\SRDCLASSFV{\alpha}}$. Then
  \begin{align*}
                   & \CHECK(\sh[\PS_1^{\sh} / \rsh_1, \ldots, \PS_m^{\sh} / \rsh_m], \EMPTYSEQ) = 1 \\
      ~\text{iff}~ & \CHECK(\sh,\T{p}) \\
                   & \text{and}~ \CHECK(\rsh_1,\EMPTYSEQ) = \ldots = \CHECK(\rsh_m,\EMPTYSEQ) = 1. 
  \end{align*}
\end{lemma}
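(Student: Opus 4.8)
The plan is to prove this as a compositionality lemma for the predicate $\CHECK$ used in the acyclicity automaton $\HACYCLE$, exactly paralleling the proof of Lemma~\ref{thm:zoo:garbage:compositionality} in Appendix~\ref{app:zoo:garbage:property}. Writing $\rsh = \sh[\PS_1^{\sh}/\rsh_1,\ldots,\PS_m^{\sh}/\rsh_m]$, the goal is to show $\CHECK(\rsh,\EMPTYSEQ) = 1$ iff $\CHECK(\sh,\T{p}) = 1$ and $\CHECK(\rsh_i,\EMPTYSEQ) = 1$ for all $1 \le i \le m$. Since $\CHECK(\rsh,\EMPTYSEQ)$ unfolds (using $\SHRINK{\rsh,\EMPTYSEQ} = \rsh$) to ``$\forall x \in \VAR(\rsh) .\ \text{not}\ \REACH{x}{x}{\rsh}$'', and the variables of $\rsh$ decompose as $\VAR(\rsh) = \VAR(\sh) \cup \bigcup_{1\le i\le m}\VAR(\rsh_i[\FV{0}{\rsh_i}/\FV{i}{\sh}])$, I would split the universally quantified statement over these two groups of variables.

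For variables $x \in \VAR(\sh)$, I would apply the reachability congruence result Lemma~\ref{thm:zoo:reachability:congruence} to replace $\REACH{x}{x}{\rsh}$ by $\REACH{x}{x}{\SHRINK{\sh,\T{p}}}$, which is exactly what $\CHECK(\sh,\T{p})$ talks about. For variables $x \in \VAR(\rsh_i[\FV{0}{\rsh_i}/\FV{i}{\sh}])$ coming from a substituted subheap, the crux is to reduce a self-reachability cycle $\REACH{x}{x}{\rsh}$ to a statement purely about $\rsh_i$: either $x$ reaches itself already inside $\rsh_i$, or the cycle passes through the parameters $\FV{i}{\sh}$ and then must involve a cycle detectable at the $\sh$-level (or in another $\rsh_j$). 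This is where Lemma~\ref{obs:zoo:reachability:propagation} (reachability propagates through parameters) together with Lemma~\ref{obs:zoo:establishment:equality-propagation} (equality propagates through parameters) do the work: a path from $x$ back to $x$ that leaves $\VAR(\rsh_i[\FV{0}{\rsh_i}/\FV{i}{\sh}])$ must pass a parameter $z \in \FV{i}{\sh}$ on the way out and a parameter $z'$ on the way back in, and since establishment guarantees parameters behave coherently, $z$ reaching $z$ (or $z$ and $z'$ being identified) in $\SHRINK{\sh,\T{p}}$ is already forced by $\CHECK(\sh,\T{p})=1$; conversely any inner cycle is a cycle in $\rsh_i$. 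I would then massage the quantifiers (using $\exists a \exists b \equiv \exists b \exists a$ and $\CHECK(\rsh_i[\FV{0}{\rsh_i}/\FV{i}{\sh}],\EMPTYSEQ)=1$ iff $\CHECK(\rsh_i,\EMPTYSEQ)=1$) to arrive at the conjunction in the statement, following the same chain-of-equivalences format used throughout the appendix.

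The main obstacle I expect is the ``only if'' direction on the subheap variables: carefully arguing that a self-loop $\REACH{x}{x}{\rsh}$ with $x \in \VAR(\rsh_i[\FV{0}{\rsh_i}/\FV{i}{\sh}])$ that is \emph{not} witnessed inside $\rsh_i$ alone forces a self-loop at a parameter (hence violates $\CHECK(\sh,\T{p})=1$ or $\CHECK(\rsh_j,\EMPTYSEQ)=1$ for some $j$). This requires tracking the path as it crosses the ``boundary'' between $\sh$ and the $\rsh_i$'s possibly multiple times, and invoking Lemma~\ref{obs:zoo:reachability:propagation} in its full generality (including the case where the path re-enters the same $\rsh_i$). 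The reverse direction is routine: if every variable of $\sh$ and of each $\rsh_i$ avoids a self-loop, then by the propagation lemmas any cycle in $\rsh$ would have to localize to one of the pieces, contradiction. Once the case analysis on boundary-crossings is laid out cleanly, the argument is a mechanical equivalence chain of the kind already appearing in Lemma~\ref{thm:zoo:garbage:compositionality}.
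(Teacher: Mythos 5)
Your proposal follows essentially the same route as the paper's proof: decompose $\VAR(\rsh)$ into $\VAR(\sh)$ and the variables of the substituted $\rsh_i$'s, use the reachability congruence lemma to reduce the $\VAR(\sh)$ part to $\CHECK(\sh,\T{p})$, handle the easy direction by observing that a cycle inside some $\rsh_i$ is a cycle in $\rsh$, and prove the hard direction by contradiction, using Lemmas~\ref{obs:zoo:establishment:equality-propagation} and~\ref{obs:zoo:reachability:propagation} to show that a cycle through $x \in \VAR(\rsh_i[\FV{0}{\rsh_i}/\FV{i}{\sh}])$ not witnessed inside $\rsh_i$ must induce a self-loop at a parameter in $\SHRINK{\sh,\T{p}}$. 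The only cosmetic quibble is that what you label the ``only if'' direction is really the contrapositive of the ``if'' direction, but the case analysis you describe (including the equality-mediated versus reachability-mediated sub-cases) matches the paper's argument.
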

\begin{proof}
Let $\rsh = \CHECK(\sh[\PS^{\sh}_1 / \rsh_1, \ldots, \PS^{\sh}_m / \rsh_m])$.
Then
\begin{align*}
                   & \CHECK(\rsh,\EMPTYSEQ) = 1 \\
 ~\Leftrightarrow~ & \left[ \text{Definition of}~\CHECK, \rsh = \SHRINK{\rsh,\EMPTYSEQ} \right] \\
                   & \forall x \in \VAR(\rsh) ~.~ \text{not}~ \REACH{x}{x}{\rsh} \\
 ~\Leftrightarrow~ & \left[
                        \VAR(\rsh) = \VAR(\sh) \cup
                        \bigcup_{1 \leq i \leq m} \VAR(\rsh_i\left[ \FV{0}{\rsh_i} / \FV{i}{\sh}  \right])
                      \right] \\
                   & \forall x \in \VAR(\sh) ~.~ \text{not}~ \REACH{x}{x}{\rsh} \\
                   & \text{and}~ \forall 1 \leq i \leq m ~.~
                         \forall x \in \VAR(\rsh_i\left[ \FV{0}{\rsh_i} / \FV{i}{\sh}  \right])
                             ~.~ \text{not}~ \REACH{x}{x}{\rsh} \\
 ~\Leftrightarrow~ & \left[ \text{Lemma}~\ref{thm:zoo:reachability:congruence:auxiliary}  \right] \\
                   & \forall x \in \VAR(\sh) ~.~ \text{not}~ \REACH{x}{x}{\SHRINK{\sh,\T{p}}} \\
                   & \text{and}~ \forall 1 \leq i \leq m ~.~
                        \forall x \in \VAR(\rsh_i\left[ \FV{0}{\rsh_i} / \FV{i}{\sh}  \right])
                             ~.~ \text{not}~ \REACH{x}{x}{\rsh}.
\end{align*}
Assume towards a contradiction that $\CHECK(\rsh_i,\EMPTYSEQ) = 0$ for some $1 \leq i \leq m$.
By definition and $\SHRINK{\rsh_i,\EMPTYSEQ} = \rsh_i$, this means that $\REACH{x}{x}{\rsh_i}$.
However, since $\rsh_i$ is contained in $\rsh$, this means that $\REACH{x}{x}{\rsh}$
and thus $\CHECK(\rsh,\EMPTYSEQ) = 0$.
Conversely, assume $\CHECK(\rsh,\EMPTYSEQ) = 0$, but $\CHECK(\sh,\T{p}) = 1$
and $\CHECK(\rsh_i,\EMPTYSEQ) = 1$ for each $1 \leq i \leq m$.
Thus,
\begin{align*}
                   & \forall x \in \VAR(\sh) ~.~ \text{not}~ \REACH{x}{x}{\SHRINK{\sh,\T{p}}} \tag{$\clubsuit$} \\
                   & \text{and}~ \forall 1 \leq i \leq m ~.~  \tag{$\spadesuit$}
                     \forall x \in \VAR(\rsh_i\left[ \FV{0}{\rsh_i} / \FV{i}{\sh}  \right])
                         ~.~ \text{not}~ \REACH{x}{x}{\rsh_i}
\end{align*}
Then, there exists $x \in \VAR(\rsh)$ such that $\REACH{x}{x}{\rsh}$.
We proceed by case distinction:
\begin{enumerate}
  \item Case: $x \in \VAR(\sh)$.
        Then we immediately obtain a contradiction due to $(\clubsuit)$.
  \item Case:
        $\REACH{x}{x}{\rsh_i\left[ \FV{0}{\rsh_i} / \FV{i}{\sh} \right]}$ holds.
        Since
        $\REACH{x}{x}{\rsh_i\left[ \FV{0}{\rsh_i} / \FV{i}{\sh} \right]}$
        holds if and only if
        $\REACH{x}{x}{\rsh_i}$ holds, we immediately obtain a contradiction due to $(\spadesuit)$.
  \item Case:
        For some $1 \leq i \leq m$,
        $x \in \VAR(\rsh_i\left[ \FV{0}{\rsh_i} / \FV{i}{\sh} \right])$
        and
        $\REACH{x}{x}{\rsh_i\left[ \FV{0}{\rsh_i} / \FV{i}{\sh} \right]}$
        does not hold.
        Then there exists
        $y \in \VAR(\rsh) \setminus \VAR(\rsh_i\left[ \FV{0}{\rsh_i} / \FV{i}{\sh} \right])$
        such that one of the following two cases holds:
        \begin{enumerate}
            \item $x \MEQ{\rsh} y$ and $\REACH{y}{y}{\rsh}$.
                  Then, by Lemma~\ref{obs:zoo:establishment:equality-propagation},
                  there exists $z \in \FV{i}{\sh}$ such that
                  $x \MEQ{\rsh_i\left[ \FV{0}{\rsh_i} / \FV{i}{\sh} \right]} z$
                  and
                  $z \MEQ{\rsh} y$.
                  However, by construction of $\SHRINK{\sh,\T{p}}$, this means that $\REACH{z}{z}{\SHRINK{\sh,\T{p}}}$ holds, which
                  contradicts $(\clubsuit)$.
            \item $\REACH{x}{y}{\rsh}$ and $\REACH{y}{x}{\rsh}$.
                  Then, by Lemma~\ref{obs:zoo:reachability:propagation},
                  there exist $u,v \in \FV{i}{\sh}$ such that
                  $\REACH{u}{v}{\rsh}$ and $\REACH{v}{u}{\rsh}$.
                  However, by construction of $\SHRINK{\sh,\T{p}}$, this means that $\REACH{u}{u}{\SHRINK{\sh,\T{p}}}$ holds.
                  Thus $\CHECK(\sh,\T{p}) = 0$, which contradicts $(\clubsuit)$.
        \end{enumerate}
\end{enumerate}
Since each case leads to a contradiction, we conclude $\CHECK(\rsh,\T{p}) = 1$.
\qed
\end{proof}
%
%
\section{Proof of Theorem~\ref{thm:zoo:acyclicity:complexity}} \label{app:zoo:acyclicity:complexity}
   Upper bounds are obtained analogously to Theorem~\ref{thm:zoo:establishment:complexity}.
   Thus, we only show that the reduction provided in the paper proving the lower bounds is correct.
   We first briefly recall how \COMPLEMENT{SL-RSAT} is reduced to \DPROBLEM{SL-AC}.
   Let $(\SRD,\PS)$ be an instance of \COMPLEMENT{SL-RSAT}.
   Then an instance of \DPROBLEM{SL-AC} is given by $(\SRD,\sh)$,
   where $\FV{0}{\sh} = x$ and
   $\sh = \exists \BV{} . \PT{x}{x} \SEP \PS\BV{}$.
   Then
   \begin{align*}
                        & \CALLSEM{\sh}{\SRD} \subseteq \CYCLEPROP(\alpha) \\
      ~\Leftrightarrow~ & \left[ A \subseteq B ~\text{iff}~ \forall x \in A ~.~ x \in B \right] \\
                        & \forall \rsh \in \CALLSEM{\sh}{\SRD} ~.~  \rsh \in \CYCLEPROP(\alpha) \\
      ~\Leftrightarrow~ & \left[ \text{Definition of}~\CYCLEPROP(\alpha) \right] \\
                        & \forall \rsh \in \CALLSEM{\sh}{\SRD} ~.~
                            \NIL \MNEQ{\rsh} \NIL ~\text{or}~
                            \forall y \in \VAR(\rsh) ~.~ \text{not}~ \REACH{y}{y}{\rsh} \\
      ~\Leftrightarrow~ & \left[ \sh~\text{contains exactly one points-to assertion} \right] \\
                        & \forall \rsh \in \CALLSEM{\sh}{\SRD} ~.~
                            \NIL \MNEQ{\rsh} \NIL ~\text{or}~
                            \text{not}~ \REACH{x}{x}{\rsh} \\
      ~\Leftrightarrow~ & \left[ \REACH{x}{x}{\rsh} ~\text{always hold by construction of}~ \sh \right] \\
                        & \forall \rsh \in \CALLSEM{\sh}{\SRD} ~.~
                            \NIL \MNEQ{\rsh} \NIL \\
      ~\Leftrightarrow~ & \left[ \NIL \MNEQ{\rsh} \NIL ~\text{iff}~ \MODELS{\rsh} = \emptyset \right] \\
                        & \forall \rsh \in \CALLSEM{\sh}{\SRD} ~.~ \MODELS{\rsh} = \emptyset \\
      ~\Leftrightarrow~ & \left[ \text{Definition}~\ref{def:symbolic-heaps:models} \right] \\
                        & \sh~\text{unsatisfiable}.
   \end{align*}
   Now, by an analogous argument as in the proof of Theorem~\ref{thm:zoo:establishment:complexity},
   we obtain that $\sh$ is satisfiable iff $\PS$ is satisfiable.
   Then $\CALLSEM{\sh}{\SRD} \subseteq \CYCLEPROP(\alpha)$ holds iff $\PS$ is unsatisfiable.
   Hence, $\DPROBLEM{SL-AC}$ holds for $(\SRD,\sh)$ iff $\COMPLEMENT{SL-RSAT}$ holds for $(\SRD,\PS)$.
   \qed
%
%


%
%
\section{Proof of Lemma~\ref{thm:entailment:predicates}} \label{app:entailment:predicates}
The crux of the proof relies on the fact that each unfolding of $\PS_1\T{x}$ has exactly one canonical model up to isomorphism, i.e.,
\begin{align*}
 \forall \rsha \in \CALLSEM{\PS_1\T{x}}{\SRD} \,.\, \SIZE{\MODELS{\rsha}} = 1 \tag{$\bigstar$}.
\end{align*}
This property is a direct consequence of two properties:
By Definition~\ref{def:entailment:determined}, $\rsha$ has at most one canonical model up to isomorphism, because $\SRD$ is determined.
Furthermore, each symbolic heap $\rsha \in \CALLSEM{\PS_1\T{x}}{\SRD}$ has at least one canonical model, because $\SRD$ is well--determined.
Then
  \begin{align*}
                       & \PS_1\T{x} \SAT{\SRD} \PS_2\T{x} \\
     ~\Leftrightarrow~ & \left[ \text{Definition of entailments} \right] \\
                       & \forall (\stack,\heap) \in \STATES \,.\, \stack,\heap \SAT{\SRD} \PS_1\T{x} ~\text{implies}~ \stack,\heap \SAT{\SRD} \PS_2\T{x} \\
     ~\Leftrightarrow~ & \left[ \text{Lemma}~\ref{thm:symbolic-heaps:fv-coincidence},~\STATES_{\T{x}} \DEFEQ \{ (\stack,\heap) \in \STATES ~|~ \DOM(\stack) = \T{x} \} \right] \\
                       & \forall (\stack,\heap) \in \STATES_{\T{x}} \,.\, \stack,\heap \SAT{\SRD} \PS_1\T{x} ~\text{implies}~ \stack,\heap \SAT{\SRD} \PS_2\T{x} \\
     ~\Leftrightarrow~ & \left[ \text{Lemma}~\ref{thm:symbolic-heaps:semantics} \right] \\
                       & \forall (\stack,\heap) \in \STATES_{\T{x}} \,.\, \\
                       & \qquad \exists \rsha \in \CALLSEM{\PS_1\T{x}}{\SRD} \,.\, \stack,\heap \SAT{\emptyset} \rsha \\
                       & \qquad \text{implies}~ \exists \rsh \in \CALLSEM{\PS_2\T{x}}{\SRD} \,.\, \stack,\heap \SAT{\emptyset} \rsh \\
     ~\Leftrightarrow~ & \left[ A~\rightarrow~B \equiv \neg A \vee B \right] \\
                       & \forall (\stack,\heap) \in \STATES_{\T{x}} \,.\, \\
                       & \qquad \neg \left( \exists \rsha \in \CALLSEM{\PS_1\T{x}}{\SRD} \,.\, \stack,\heap \SAT{\emptyset} \rsha \right) \\
                       & \qquad \text{or}~ \exists \rsh \in \CALLSEM{\PS_2\T{x}}{\SRD} \,.\, \stack,\heap \SAT{\emptyset} \rsh \\
     ~\Leftrightarrow~ & \left[ \neg \exists x . A \equiv \forall x . \neg A \right] \\
                       & \forall (\stack,\heap) \in \STATES_{\T{x}} \,.\, \\
                       & \qquad \forall \rsha \in \CALLSEM{\PS_1\T{x}}{\SRD} \,.\, \stack,\heap \not \SAT{\emptyset} \rsha \\
                       & \qquad \text{or}~ \exists \rsh \in \CALLSEM{\PS_2\T{x}}{\SRD} \,.\, \stack,\heap \SAT{\emptyset} \rsh \\
     ~\Leftrightarrow~ & \left[ \forall x \forall y . A \equiv \forall y \forall x . A \right] \\
                       & \forall \rsha \in \CALLSEM{\PS_1\T{x}}{\SRD} \,.\, \forall (\stack,\heap) \in \STATES_{\T{x}} \,.\, \\
                       & \qquad \stack,\heap \not \SAT{\emptyset} \rsha ~\text{or}~ \exists \rsh \in \CALLSEM{\PS_2\T{x}}{\SRD} \,.\, \stack,\heap \SAT{\emptyset} \rsh \\
     ~\Leftrightarrow~ & \left[ A~\rightarrow~B \equiv \neg A \vee B \right] \\
                       & \forall \rsha \in \CALLSEM{\PS_1\T{x}}{\SRD} \,.\, \forall (\stack,\heap) \in \STATES_{\T{x}} \,.\, \\
                       & \qquad \stack,\heap \SAT{\emptyset} \rsha ~\text{implies}~ \exists \rsh \in \CALLSEM{\PS_2\T{x}}{\SRD} \,.\, \stack,\heap \SAT{\emptyset} \rsh \\
     ~\Leftrightarrow~ & \left[ \forall x . A \rightarrow B \equiv \forall x \in A . B,~\text{Def.}~\ref{def:symbolic-heaps:models} \right] \\
                       & \forall \rsha \in \CALLSEM{\PS_1\T{x}}{\SRD} \,.\, \forall (\stack,\heap) \in \MODELS{\rsha} \,.\,
                         \exists \rsh \in \CALLSEM{\PS_2\T{x}}{\SRD} \,.\, \stack,\heap \SAT{\emptyset} \rsh \\
     ~\Leftrightarrow~ & \left[ \SIZE{\MODELS{\rsha}} = 1~\text{by}~(\bigstar)~\text{and}~ \forall x \in \{y\} . A \equiv \exists x \in \{y\} . A  \right] \\
                       & \forall \rsha \in \CALLSEM{\PS_1\T{x}}{\SRD} \,.\, \exists (\stack,\heap) \in \MODELS{\rsha} \,.\,
                         \exists \rsh \in \CALLSEM{\PS_2\T{x}}{\SRD} \,.\, \stack,\heap \SAT{\emptyset} \rsh \\
     ~\Leftrightarrow~ & \left[ \exists x \exists y . A \equiv \exists y \exists x . A  \right] \\
                       & \forall \rsha \in \CALLSEM{\PS_1\T{x}}{\SRD} \,.\, \exists \rsh \in \CALLSEM{\PS_2\T{x}}{\SRD} \,.\, \exists (\stack,\heap) \in \MODELS{\rsha} \,.\, \stack,\heap \SAT{\emptyset} \rsh \\
     ~\Leftrightarrow~ & \left[ \SIZE{\MODELS{\rsha}} = 1~\text{by}~(\bigstar)~\text{and}~ \forall x \in \{y\} . A \equiv \exists x \in \{y\} . A  \right] \\
                       & \forall \rsha \in \CALLSEM{\PS_1\T{x}}{\SRD} \,.\, \exists \rsh \in \CALLSEM{\PS_2\T{x}}{\SRD} \,.\, \forall (\stack,\heap) \in \MODELS{\rsha} \,.\, \stack,\heap \SAT{\emptyset} \rsh \\
     ~\Leftrightarrow~ & \left[ \forall x . A \rightarrow B \equiv \forall x \in A . B \right] \\
                       & \forall \rsha \in \CALLSEM{\PS_1\T{x}}{\SRD} \,.\, \exists \rsh \in \CALLSEM{\PS_2\T{x}}{\SRD} \,.\, \forall (\stack,\heap) \in \STATES \,.\, \\
                       & \qquad (\stack,\heap) \in \MODELS{\rsha} ~\text{implies}~ \stack,\heap \SAT{\emptyset} \rsh \\
     ~\Leftrightarrow~ & \left[ \text{Def.}~\ref{def:symbolic-heaps:models}:~(\stack,\heap) \in \MODELS{\rsha} ~\Leftrightarrow~ \DOM(\stack) = \T{x} ~\text{and}~ \stack,\heap \SAT{\emptyset} \rsha \right] \\
                       & \forall \rsha \in \CALLSEM{\PS_1\T{x}}{\SRD} \,.\, \exists \rsh \in \CALLSEM{\PS_2\T{x}}{\SRD} \,.\, \forall (\stack,\heap) \in \STATES \,.\, \\
                       & \qquad \DOM(\stack) = \T{x} ~\text{and}~ \stack,\heap \SAT{\emptyset} \rsha ~\text{implies}~ \stack,\heap \SAT{\emptyset} \rsh \\
     ~\Leftrightarrow~ & \left[ \forall x . (A \wedge B) \rightarrow C \equiv \forall x \in A . B \rightarrow C \right] \\
                       & \forall \rsha \in \CALLSEM{\PS_1\T{x}}{\SRD} \,.\, \exists \rsh \in \CALLSEM{\PS_2\T{x}}{\SRD} \,.\, \forall (\stack,\heap) \in \STATES_{\T{x}} \,.\, \\
                       & \qquad \stack,\heap \SAT{\emptyset} \rsha ~\text{implies}~ \stack,\heap \SAT{\emptyset} \rsh \\
     ~\Leftrightarrow~ & \left[ \text{Lemma}~\ref{thm:symbolic-heaps:fv-coincidence} \right] \\
                       & \forall \rsha \in \CALLSEM{\PS_1\T{x}}{\SRD} \,.\, \exists \rsh \in \CALLSEM{\PS_2\T{x}}{\SRD} \,.\, \forall (\stack,\heap) \in \STATES \,.\, \\
                       & \qquad \stack,\heap \SAT{\emptyset} \rsha ~\text{implies}~ \stack,\heap \SAT{\emptyset} \rsh \\
     ~\Leftrightarrow~ & \left[ \text{Definition of entailments} \right] \\
                       & \forall \rsha \in \CALLSEM{\PS_1\T{x}}{\SRD} \,.\, \exists \rsh \in \CALLSEM{\PS_2\T{x}}{\SRD} \,.\, \rsha \SAT{\emptyset} \rsh. 
  \end{align*}
  \qed
\subsection{Proof of Lemma~\ref{thm:entailment:general-upper-bound}}\label{app:entailment:general-upper-bound}
  Our previous complexity analysis of Algorithm~\ref{alg:on-the-fly-refinement} reveals that
  $\CALLSEM{\PS\T{x}}{\SRDALT} \cap L(\overline{\HASH{\sha}}) = \emptyset$ is decidable in
  \begin{align*}
    \BIGO{\SIZE{\SRDALT} \cdot \SIZE{Q_{\overline{\HASH{\sha}}}}^{M+1}
    \cdot \SIZE{\Delta_{\overline{\HASH{\sha}}}}}. \tag{$\clubsuit$}
  \end{align*}
  Regarding $\SIZE{\SRDALT}$, applying 
  the Refinement Theorem (Theorem~\ref{thm:compositional:refinement})
  to $\SRD \cup \{ \SRDRULE{\PS}{\sh} \}$ and $\HASAT$ (cf. Theorem~\ref{thm:zoo:sat:property})
  yields an SID $\SRDALT$ of size
  \begin{align*}
    \SIZE{\SRDALT} \leq c \cdot \SIZE{\SRD} \cdot 2^{\SIZE{\sh}^2} \cdot 2^{2\alpha^2 + \alpha} \leq 2^{\text{poly}(k)}~,
  \end{align*}
  for some positive constant $c$.
  Then $\SRDALT$ is computable in $\BIGO{2^{\text{poly}(k)}}$.
  Furthermore, $\overline{\HASH{\sha}}$ is obtained from complementation of $\HASH{\sha}$.
  Thus, by the construction to prove Lemma~\ref{thm:refinement:boolean}, we obtain that
  $\SIZE{Q_{\overline{\HASH{\sha}}}} \leq 2^{\SIZE{Q_{\HASH{\sha}}}}$ and
  that $\Delta_{\overline{\HASH{\sha}}}$ is decidable in
  $\left(2^{\SIZE{Q_{\HASH{\sha}}}}\right)^{M+1} \cdot \SIZE{\Delta_{\HASH{\sha}}}$.
  Putting both into $(\clubsuit)$ yields the result.
  \qed
\subsection{Proof of Lemma~\ref{thm:entailment:top-level-complexity}}
\label{app:entailment:top-level-complexity}
  By induction on the structure of symbolic heaps.
  The base cases are:
  \emph{The empty heap $\EMP$}:
  In Section~\ref{sec:entailment:emp}, we construct a heap automaton $\HASH{\EMP}$ accepting $\USET{\EMP}{\SRD}{\CENTAIL{\alpha}}$ with
  \begin{align*}
    \SIZE{Q_{\HA{\EMP}}} \leq 1 + (1+\alpha)^0 = 2 \leq 2^{\text{poly}(\alpha)}.
  \end{align*}
  Moreover, $\Delta_{\HASH{\EMP}}$ is decidable in polynomial time by Remark~\ref{rem:closure}.
  \emph{The points-to assertion $\PT{x}{\T{y}}$}:
  In Section~\ref{sec:entailment:points-to}, we construct a heap automaton $\HASH{\PT{x}{\T{y}}}$ accepting $\USET{\PT{x}{\T{y}}}{\SRD}{\CENTAIL{\alpha}}$ with
  \begin{align*}
    \SIZE{Q_{\HASH{\PT{x}{\T{y}}}}} \leq 1 + (1+\alpha)^{\SIZE{\T{y}}} \leq 1 + (1+\alpha)^{\gamma} \leq 2^{\text{poly}(\alpha)}~,
  \end{align*}
  because $\SIZE{\T{y}} \leq \gamma$ is considered to be a constant.
  Moreover, $\Delta_{\HASH{\PT{x}{\T{y}}}}$ is decidable in polynomial time by Remark~\ref{rem:closure}.

  \emph{The predicate call $\PS\T{x}$}:
  By assumption, there is nothing to show.
  Thus, it remains to consider the composite cases:
  \emph{The separating conjunction $\sha_1 \SEP \sha_2$}:
  By I.H. there exist heap automata $\HASH{\sha_1}$, $\HASH{\sha_2}$
  $\USET{\sha_1}{\SRD}{\CENTAIL{\alpha}}$ and $\USET{\sha_2}{\SRD}{\CENTAIL{\alpha}}$
  such that $\Delta_{\HASH{\sha_1}}$ and $\Delta_{\HASH{\sha_2}}$ are decidable in
  $\BIGO{2^{\text{poly}(k_1)}}$ and $\BIGO{2^{\text{poly}(k_2)}}$, where
  $k_1 = \SIZE{\SRD} + \SIZE{\sha_1}$ and $k_2 = \SIZE{\SRD} + \SIZE{\sha_2}$.
  Moreover, $\SIZE{Q_{\HASH{\sha_1}}} \leq 2^{\text{poly}(\alpha)}$
  and $\SIZE{Q_{\HASH{\sha_2}}} \leq 2^{\text{poly}(\alpha)}$, respectively.

  In Section~\ref{sec:entailment:sepcon}, we construct a heap automaton $\HASH{\sha_1 \SEP \sha_2}$
  accepting $\USET{\sha_1 \SEP \sha_2}{\SRD}{\CENTAIL{\alpha}}$ with
  \begin{align*}
    \SIZE{Q_{\HA{\sha_1 \SEP \sha_2}}} & \leq 2^{\alpha} \cdot \SIZE{Q_{\HA{\sha_1}}} \cdot \SIZE{Q_{\HA{\sha_2}}} \\
                      & \leq 2^{\alpha} \cdot 2^{\text{poly}(\alpha)} \cdot 2^{\text{poly}(\alpha)} \\
                      & =    2^{\alpha + \text{poly}(\alpha) + \text{poly}(\alpha)} \\
                      & \leq 2^{\text{poly}(\alpha)}.
  \end{align*}
  Moreover, the time to decide $\Delta_{\HASH{\sha_1 \SEP \sha_2}}$ depends on two factors:
  First, the number of subformulas occurring in $\sha_1$, $\sha_2$ and $\SRD$, which is bounded by
  $2^{\SIZE{\SRD}+\SIZE{\sha_1}+\SIZE{\sha_2}} \leq 2^{\text{poly}(k)}$.
  Second, each of these subformulas is applied to $\Delta_{\HASH{\sha_1}}$ and $\Delta_{\HASH{\sha_2}}$.
  Thus, $\Delta_{\HASH{\sha_1 \SEP \sha_2}}$ is decidable in
  $\BIGO{2^{\text{poly}(k)} \cdot \left(\SIZE{\Delta_{\HASH{\sha_1}}} + \SIZE{\Delta_{\HASH{\sha_2}}}\right)}$.
  By I.H. this means that $\Delta_{\HASH{\sha_1 \SEP \sha_2}}$ is decidable in
  \begin{align*}
    \BIGO{2^{\text{poly}(k)} \cdot (2^{\text{poly}(k_1)} + 2^{\text{poly}(k_2)})} = \BIGO{2^{\text{poly}(k)}}.
  \end{align*}

  \emph{The pure formula $\sh : \Pi$}:
  By I.H. there exists a heap automaton $\HASH{\sh}$ accepting
  $\USET{\sh}{\SRD}{\CENTAIL{\alpha}}$
   such that $\Delta_{\HA{\sh}}$ is decidable in $\BIGO{2^{\text{poly}(k_1)}}$, where
  $k_1 = \SIZE{\SRD} + \SIZE{\sh}$.
  Moreover, $\SIZE{Q_{\HASH{\sh}}} \leq 2^{\text{poly}(\alpha)}$.
  In Section~\ref{sec:entailment:pure}, we construct a heap automaton $\HASH{\sh : \Pi}$ accepting $\USET{\sh : \Pi}{\SRD}{\CENTAIL{\alpha}}$ with
  \begin{align*}
    \SIZE{Q_{\HASH{\sh : \Pi}}} \leq 2 \cdot \SIZE{Q_{\HASH{\sh}}} \leq 2 \cdot 2^{\text{poly}(\alpha)} \leq 2^{\text{poly}(\alpha)}.
  \end{align*}
  Moreover, $\Delta_{\HASH{\sh : \Pi}}$ boils down to deciding $\Delta_{\HASH{\sh}}$ plus some at most polynomial overhead.
  Thus, by I.H., $\Delta_{\HASH{\sh : \Pi}}$ is decidable in $\BIGO{2^{\text{poly}(k)}}$.
  \emph{The existential quantification $\exists z . \sh$}:
  By I.H. there exists a heap automaton $\HASH{\sh}$ accepting
  $\USET{\exists x . \sh}{\SRD}{\CENTAIL{\alpha}}$
   such that $\Delta_{\HASH{\sh}}$ is decidable in $\BIGO{2^{\text{poly}(k_1)}}$, where
  $k_1 = \SIZE{\SRD} + \SIZE{\sh}$.
  Moreover, $\SIZE{Q_{\HASH{\sh}}} \leq 2^{\text{poly}(\alpha)}$.

  In Section~\ref{sec:entailment:points-to}, we construct a heap automaton $\HASH{\exists z . \sh}$ accepting $\USET{\exists x . \sh}{\SRD}{\CENTAIL{\alpha}}$ with
  \begin{align*}
    \SIZE{Q_{\HASH{\exists z . \sh}}} & \leq (\alpha+1) \cdot \SIZE{Q_{\HASH{\sh}}} \cdot 2^{2\alpha^2+\alpha} \\
                      & \leq (\alpha+1) \cdot 2^{\text{poly}(\alpha)} \cdot 2^{2\alpha^2+\alpha} \\
                      & \leq 2^{\text{poly}(\alpha)}.
  \end{align*}
  Moreover, $\Delta_{\HASH{\exists z . \sh}}$ boils down to deciding $\Delta_{\HASH{\sh}}$ plus some overhead which is at most polynomial due to Remark~\ref{rem:closure}.
  Thus, by I.H., $\Delta_{\HASH{\exists z . \sh}}$ is decidable in $\BIGO{2^{\text{poly}(k)}}$.
  \qed
\section{Omitted calculations in proof of Theorem~\ref{thm:entailment:double-exptime}}
\label{app:entailment:double-exptime}
  \begin{align*}
        & \BIGO{\SIZE{\SRDALT} \cdot \SIZE{Q_{\overline{\HASH{\sha}}}}^{M+1} \cdot \SIZE{\Delta_{\overline{\HASH{\sha}}}}} \\
    ~=~ &  \left[ \SIZE{\SRDALT} \leq 2^{\text{poly}(k)}, \SIZE{Q_{\overline{\HASH{\sha}}}} \leq 2^{\SIZE{Q_{\HASH{\sha}}}}, M \leq 2k \right] \\
        & \BIGO{ 2^{\text{poly}(k)} \cdot \left(2^{\SIZE{Q_{\HASH{\sha}}}}\right)^{2k} \cdot \SIZE{\Delta_{\overline{\HASH{\sha}}}}} \\
    ~=~ & \left[ \SIZE{\Delta_{\overline{\HASH{\sha}}}} \leq \left(2^{\SIZE{Q_{\HASH{\sha}}}}\right)^{M+1} \cdot \SIZE{\Delta_{\HASH{\sha}}} \right] \\
        & \BIGO{ 2^{\text{poly}(k)} \cdot \left(2^{\SIZE{Q_{\HASH{\sha}}}}\right)^{4k} \cdot \SIZE{\Delta_{\HASH{\sha}}}} \\
    ~=~ &  \left[ \SIZE{Q_{\HASH{\sha}}} \leq 2^{\text{poly}(k)} \right] \\
        & \BIGO{ 2^{\text{poly}(k)} \cdot \left(2^{2^{\text{poly}(k)}}\right)^{4k} \cdot \SIZE{\Delta_{\HASH{\sha}}}} \\
    ~=~ &  \left[ (a^b)^c = a^{bc},~ \SIZE{\Delta_{\HASH{\sha}}} \in \BIGO{2^{\text{poly}(k)}}  \right] \\
        & \BIGO{ 2^{\text{poly}(k)} \cdot 2^{4k \cdot 2^{\text{poly}(k)}} \cdot 2^{\text{poly}(k)}} \\
    ~=~ & \BIGO{ 2^{2^{\text{poly}(k)}} }
  \end{align*}
\section{Proof of Lemma~\ref{thm:entailment:exptime}}\label{app:entailment:exptime}
By Lemma~\ref{thm:entailment:top-level-complexity},
$\SIZE{Q_{\HASH{\sha}}} \leq 2^{\text{poly}(\alpha)}$ and $\Delta_{\HASH{\sha}}$ is decidable in
$\BIGO{2^{\text{poly}(k)}}$.
Since $\alpha$ is bounded by a constant, so is $\SIZE{Q_{\HASH{\sha}}}$.
Then, by Lemma~\ref{thm:entailment:general-upper-bound},
$\DENTAIL{\CENTAIL{\alpha}}{\SRD}$ is decidable in
  \begin{align*}
   \BIGO{ 2^{\text{poly}(k)} \cdot \left(2^{\SIZE{Q_{\HASH{\sha}}}}\right)^{2(M+1)}
         \cdot 2^{\text{poly}(k)} }
   ~=~
   \BIGO{ 2^{\text{poly}(k)} }, ~
  \end{align*}
which clearly is in \CCLASS{ExpTime}.
\qed
\section{Lower bound for entailments}
\label{app:entailment:lower}
The proof of the \CCLASS{ExpTime}--lower bound in~\cite{antonopoulos2014foundations}
is by reducing the inclusion problem for nondeterministic finite tree automata (NFTA, cf. \cite{comon2007tree}) to
the entailment problem.
Their proof requires a constant (or free variable) for each symbol in the tree automatons alphabet.
In contrast, we prove their result by encoding the alphabet in a null-terminated singly-linked list.
Thus, a tree $a(b,a(b,b)$ is encoded by a reduced symbolic heap
\begin{align*}
  & \exists z_1 z_2 z_3 z_4 z_5 z_6 z_7 ~.~  \\
  & \quad  \PT{x}{z_1~z_2~\NIL}  \\
  & \quad  \SEP \PT{z_1}{\NIL~\NIL~z_3} \SEP \PT{z_3}{\NIL~\NIL~\NIL} \\
  & \quad  \SEP \PT{z_2}{z_4~z_5~\NIL}  \\
  & \quad  \SEP \PT{z_4}{\NIL~\NIL~z_6} \SEP \PT{z_6}{\NIL~\NIL~\NIL} \\
  & \quad  \SEP \PT{z_5}{\NIL~\NIL~ z_7} \SEP \PT{z_7}{\NIL~\NIL~\NIL},
\end{align*}
where the symbol $a$ is encoded by having $\NIL$ as third component in a points-to assertion and
symbol $b$ by a $\NIL$ terminated list of length one.
Now, given some NFTA $\HA{T} = (Q,\Sigma,\Delta,F)$ with $\Sigma = \{a_1,\ldots,a_n\}$,
we construct a corresponding $\SRD$.
Without less of generality, we assume that $\HA{T}$ contains no unreachable or unproductive states.
We set $\PRED(\SRD) \DEFEQ Q \cup \Sigma \cup \{I\}$, where each predicate symbol is of arity one.
Then, for each symbol $a_i \in \Sigma$ one rule of the form $\SRDRULE{a_{1}}{\PROJ{\FV{0}{}}{1}=\NIL}$
or, for $1 < i \leq n$,
\begin{align*}
  a_{i} ~\SRDARROW~ & \exists z_1~z_2~\ldots~z_{i-1} ~.~
                        \PT{\PROJ{\FV{0}{}{1}}}{\NIL~\NIL~z_1} \\
                      & \qquad \SEP \bigstar_{1 \leq j < i} \PT{z_j}{\NIL~\NIL~z_{j+1}} : \{ z_{i-1} = \NIL \}
\end{align*}
is added to $\SRD$.
Furthermore, for each $(p_1 \ldots p_m, a_i, p_0) \in \Delta$, $1 \leq i \leq n$, we add a rule
\begin{align*}
  p_0 ~\SRDARROW~ & \exists z_1 \ldots z_{m+1} ~.~
                        \PT{\PROJ{\FV{0}{}}{1}}{z_1~\ldots~z_{m+1}}  \\
                    & \qquad    ~\SEP~ a_i(z_{m+1})
                        ~\SEP~ \bigstar_{1 \leq i \leq m} p_i(z_i).
\end{align*}
Finally, we add rules $\SRDRULE{I}{p\PROJ{\FV{0}}{1} : \{\PROJ{\FV{0}{}}{1} \neq \NIL \}}$ for each $p \in F$.
Clearly $\SRD$ is established.
Moreover, it is easy to verify that, given two NFTAs $\HA{T}_1$ and $\HA{T}_2$ with distinct sets of states,
we have
\begin{align*}
  I_1 x \ENTAIL{\SRD_1 \cup \SRD_2} I_2x ~\text{iff}~ L(\HA{T}_1) \subseteq L(\HA{T}_2).
\end{align*}
Thus, following~\cite{antonopoulos2014foundations},
if $\USET{Ix}{\SRD}{\CENTAIL{1}}$ can be accepted by a heap automaton,
the entailment problem $\DENTAIL{\SRD}{\CENTAIL{\alpha}}$ is \CCLASS{ExpTime}--hard for certain SIDs $\SRD$
fixed $\alpha = 1$, and a fixed arity of points-to assertions $\gamma = 3$.
Such a heap automaton can easily be constructed.
Formally, let $\HA{T} = (Q,\Sigma,\Delta,F)$ be an NFTA as above and $Q = \{p_1,\ldots,p_k\}$ for some $k > 0$.
Furthermore, for each state $p_i$, let $t_i$ be some fixed finite tree that is accepted
by the tree automaton $\HA{T}_{i} = (Q,\Sigma,\Delta,\{p_i\})$ and $\rsh_i$ be the corresponding
encoding as a reduced symbolic heap.
One possible (not necessarily efficient)
 heap automaton $\HA{A} = (Q_{\HA{A}},\SHCENTAIL{1},\Delta_{\HA{A}},F_{\HA{A}})$
is given by:
\begin{align*}
  & Q_{\HA{A}} ~\DEFEQ~  \{ \rsh_i ~|~ 1 \leq i \leq k\} \cup \{ a_i ~|~ 1 \leq i \leq n \} \\
  & F_{\HA{A}} ~\DEFEQ~  F \\
  & \MOVE{A}{q_1 \ldots q_m}{\sh}{q_0} ~\text{iff}~ \sh\left[ \PS_1 / q_1, \ldots, \PS_m / q_m\right] \ENTAIL{\SRD} \PS\PROJ{\FV{0}{}}{1}~,
\end{align*}
where each $a_i$ corresponds to the reduced symbolic heap encoding symbol $a_i$ and $\PS$ is the predicate $p_i$ corresponding to reduced symbolic heap $\rsh_i$ as previously described.


%
\section{Constructing Well-Determined SIDs} \label{app:entailment:free-vars}


%
%
In this section, we show that an established SID  -- and analogously an established symbolic heap --
can be automatically transformed into a well-determined one.
Intuitively, the main difference between an established symbolic heap $\sh$ and a determined one $\sha$
lies in their free variables:
Since each existentially quantified variable of $\sh$ is eventually allocated or equal to a free variable,
all variables apart from free variables are definitely unequal in $\sh$.
The same holds for $\sha$, but each free variable is additionally either definitely equal or definitely unequal
to any other variable of $\sha$ 
(cf. Section~\ref{sec:entailment}).
%
Thus, the main idea to transform $\sh$ into a determined symbolic heap is to add explicit pure formulas
between each free variable and each other variable.
More formally, let
$\textrm{Cmp}(\T{y},\sh)$ be the set of all sets $\PURE{}$ consisting of exactly one pure formula
$x \sim y$ for each $x \in \VAR(\sh)$ and each $y \in \T{y}$,
where $\sh \in \SL{}{}$ and $\T{y}$ is some tuple of variables.
Then, the following construction is essential to transform symbolic heaps into determined ones.
\begin{definition} \label{def:entailment:det}
  Let $\sh = \SYMBOLICHEAP{}$ be an established symbolic heap with $\NOCALLS{\sh} = m$ .
  Moreover, let $\T{y}$ be a tuple of variables of length $k \geq 0$
  and $\Lambda \in \textrm{Cmp}(\T{y},\sh)$.
  Then,
  \begin{align*}
    \textrm{det}(\sh,\T{y},\Lambda) ~\DEFEQ~ &
        \exists \BV{} ~.~ \SPATIAL{} \SEP \textrm{propagate}(\sh,\T{y}) : \PURE{} \cup \Lambda, \\
    \textrm{propagate}(\sh,\T{y}) ~\DEFEQ~ &
         (\PS_1,k)\FV{1}{}\T{y} \SEP \ldots (\PS_m,k)\FV{m}{}\T{y},
  \end{align*}
  where $(\PS_i^{\sh},\beta)$ is a predicate symbol of arity $\ARITY(\PS_i^{\sh}) + \beta$.
\end{definition}
\begin{example}
  Let $\sh$ denote the symbolic heap in the lower rule of predicate $\texttt{sll}$
  in Example~\ref{ex:srd}. Then $\textrm{det}(\sh,\FV{0}{\sh}, \emptyset)$ is given by
  \begin{align*}
    &\exists z ~.~
      \PT{\IFV{1}}{z} ~\SEP~ (\texttt{sll},2)\,z\,\IFV{2}\,\FV{0}{\sh} ~:~ \{ \IFV{1} \neq \IFV{2}\}.
  \end{align*}
\end{example}
\begin{lemma} \label{thm:entailment:free-vars}
 Let $\SRD \in \SETSRD{}$ be established. Then one can construct an SID $\SRDALT$ with $\PRED(\SRD) \subseteq \PRED(\SRDALT)$ such that for each $\PS \in \PRED(\SRD)$, we have:
 \begin{itemize}
  \item $\rsh$ is well-determined for each $\rsh \in \CALLSEM{\PS\FV{0}{}}{\SRDALT}$,
  \item $\rsh \ENTAIL{\SRD} \PS\FV{0}{}$ holds for each $\rsh \in \CALLSEM{\PS\FV{0}{}}{\SRDALT}$, and
  \item $\rsh \ENTAIL{\SRDALT} \PS\FV{0}{}$ holds for each $\rsh \in \CALLSEM{\PS\FV{0}{}}{\SRD}$. 
 \end{itemize}
\end{lemma}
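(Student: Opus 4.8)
\textbf{Proof plan for Lemma~\ref{thm:entailment:free-vars}.}
The plan is to construct $\SRDALT$ by "enriching" every predicate of $\SRD$ with a fresh batch of extra parameters that are threaded unchanged through all recursive calls, together with explicit (in)equalities relating these new parameters to every variable appearing locally in a rule. Concretely, for each $\PS \in \PRED(\SRD)$ and each $\beta \in \{0,\ldots,\ARITY(\SRD)\}$ I would introduce a new predicate symbol $(\PS,\beta)$ of arity $\ARITY(\PS) + \beta$, where the last $\beta$ parameters are "carrier" variables. For each rule $\SRDRULE{\PS}{\sh} \in \SRD$ with $\sh = \SYMBOLICHEAP{}$ and each choice $\Lambda \in \textrm{Cmp}(\T{y},\sh)$ of explicit pure formulas between the carriers $\T{y}$ and the variables of $\sh$, I would add the rule $\SRDRULE{(\PS,\beta)}{\textrm{det}(\sh,\T{y},\Lambda)}$ as in Definition~\ref{def:entailment:det}, where $\FV{0}{\textrm{det}(\sh,\T{y},\Lambda)} = \FV{0}{\sh}\,\T{y}$ and each recursive call $(\PS_i,\beta)$ receives the carriers $\FV{i}{}\,\T{y}$. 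Finally I would add, for the original $\PS$, the rules $\SRDRULE{\PS}{(\PS,\alpha)\FV{0}{}\FV{0}{}}$ (i.e. instantiate the carriers with the free variables themselves) so that $\PRED(\SRD) \subseteq \PRED(\SRDALT)$.

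First I would verify the bookkeeping: that $\SRDALT$ is a legitimate SID (arities match, rules are well-formed symbolic heaps), that it remains established (adding pure formulas and extra allocated-carried free variables preserves establishment — indeed the carrier variables are all free, and no existential quantifier is introduced), and that $\SRDALT$ again lies in $\SETSRD{}$. Next, I would prove by induction on the height of unfolding trees that every $\rsh \in \CALLSEM{(\PS,\beta)\FV{0}{}\T{y}}{\SRDALT}$ has, for every variable $x$ and every carrier $y \in \T{y}$, an explicit pure formula $x \sim y \in \PURE{\rsh}$; the base case is immediate from $\textrm{Cmp}$ and the inductive step uses that carriers are passed unchanged and that pure formulas accumulate under predicate replacement (cf. the definition of $\sh\subst{\PS_i}{\rsh}$). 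Combined with establishment of the underlying structure — which forces all non-free, non-carrier variables to be definitely unequal or definitely equal to a free variable — this gives that every such unfolding is determined; well-determinedness then follows since, setting $\T{y} = \FV{0}{}$, the explicit equalities $x = x$ are consistent so a model exists. For the two entailment clauses, I would show by a parallel induction on unfolding trees that an unfolding $\rsh \in \CALLSEM{\PS\FV{0}{}}{\SRDALT}$ is obtained from an unfolding of $\CALLSEM{\PS\FV{0}{}}{\SRD}$ by (i) substituting carriers by free variables and (ii) adding pure formulas that are \emph{true in every model}; hence $\MODELS{\STRIP{\rsh}} \subseteq \MODELS{\STRIP{\rsh'}}$ for the corresponding $\SRD$-unfolding $\rsh'$, which combined with Lemma~\ref{thm:symbolic-heaps:semantics} yields $\rsh \ENTAIL{\SRD} \PS\FV{0}{}$. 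Conversely, given $\rsh' \in \CALLSEM{\PS\FV{0}{}}{\SRD}$, for each model $(\stack,\heap)$ I would read off the (in)equalities that actually hold and pick the matching $\Lambda \in \textrm{Cmp}$ to build an unfolding $\rsh \in \CALLSEM{\PS\FV{0}{}}{\SRDALT}$ with $(\stack,\heap) \models \rsh$, giving $\rsh' \ENTAIL{\SRDALT} \PS\FV{0}{}$.

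The main obstacle I anticipate is the third clause, $\rsh' \ENTAIL{\SRDALT} \PS\FV{0}{}$: it requires showing that \emph{some} unfolding of the enriched SID captures \emph{each} model of the original unfolding. The subtlety is that the choice of $\Lambda$ must be made \emph{consistently at every node} of the unfolding tree — the (in)equalities guessed at a child must agree, on the shared carrier/parameter variables, with those guessed at the parent — and one must argue such a globally consistent choice always exists because it is induced by a single fixed model. Handling this cleanly will require defining, for a fixed model $(\stack,\heap)$, the canonical completion map sending each node's local variables to the equivalence classes they fall into under $\stack$, and checking it propagates correctly along the recursion; I expect this to be the most delicate part of the argument, with the remaining steps being routine inductions. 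Full details are deferred to the technical report \TECHNICALREPORT.
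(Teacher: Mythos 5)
Your construction is essentially the paper's: the paper likewise introduces enriched predicates $(\PS,k)$ that carry a tuple of extra parameters unchanged through the recursion and, for every rule $\SRDRULE{\PS}{\sh}$ and every $\Lambda \in \textrm{Cmp}(\T{y},\sh)$, adds $\SRDRULE{(\PS,k)}{\textrm{det}(\sh,\T{y},\Lambda)}$ (it ties in the original predicate by the rule $\SRDRULE{\PS}{\textrm{det}(\sh,\FV{0}{\sh},\Lambda)}$ rather than your wrapper $\SRDRULE{\PS}{(\PS,\alpha)\FV{0}{}\FV{0}{}}$, which is an inessential difference). Your treatment of the second and third clauses also matches the paper: the obstacle you flag for the third clause is resolved exactly as you propose, by fixing a model $(\stack,\heap)$ of the $\SRD$-unfolding and choosing $\Lambda \DEFEQ \{\, y \sim \FV{0}{\rsh} \mid \stack(y) \sim \stack(\FV{0}{\rsh}) \,\}$, which makes the per-node guesses globally consistent by construction.

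There is, however, a genuine gap in your argument for the first clause. Since you add a rule for \emph{every} $\Lambda \in \textrm{Cmp}(\T{y},\sh)$, the enriched SID has many unfoldings whose accumulated pure formulas are inconsistent: a single $\Lambda$ may assert $x = y$ for two distinct allocated variables, or (after the carriers are instantiated with $\FV{0}{}$) contain $z \neq z$, and the $\Lambda$'s chosen at different nodes of one unfolding tree may contradict each other. Such unfoldings are determined (vacuously, having no tight models) but \emph{not} well-determined, since well-determinedness additionally requires satisfiability, and the lemma demands this for \emph{each} $\rsh \in \CALLSEM{\PS\FV{0}{}}{\SRDALT}$. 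Your justification — that with $\T{y} = \FV{0}{}$ the equalities $x = x$ are consistent, so a model exists — only establishes that \emph{some} unfolding is satisfiable, not all of them. The paper closes this gap by a further step you omit: after building the determined SID $\Omega$, it applies the Refinement Theorem (Theorem~\ref{thm:compositional:refinement}) with the satisfiability automaton $\HASAT$ (Theorem~\ref{thm:zoo:sat:property}) to filter out all unsatisfiable unfoldings, and only the resulting SID is taken as $\SRDALT$. Without this refinement (or an equivalent pruning of inconsistent rule combinations) the first bullet of the lemma fails.
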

Here, the last two points ensure that models of predicate calls remain unchanged,
although their actual unfoldings might differ.
\begin{proof}
 We first construct a determined SID $\Omega$.
 For each $Y,Z \in \PRED(\SRD)$, we introduce a fresh predicate symbol
 $(Y,\ARITY(Z))$, i.e.
 \begin{align*}
   & \PRED(\Omega) ~\DEFEQ~ \PRED(\SRD) ~\cup~ \{ (Y,\ARITY(Z))  ~|~ Y,Z \in \PRED(\SRD) \}.
 \end{align*}
 Furthermore, for each arity $k$ of some predicate symbol in $\SRD$,
 $\Omega$ contains two kinds of rules:
 First, there is a rule that propagates the free variables of a symbolic heap
 through all of its predicate calls while comparing it to every variable.
 Formally, we add a rule
 $\SRDRULE{\PS}{\textrm{det}(\sh,\FV{0}{\sh},\Lambda)}$
 for each $\SRDRULE{\PS}{\sh} \in \SRD$ and each $\Lambda \in \textrm{Cmp}(\FV{0}{\sh},\sh)$.
 The second kind of rule continues the propagation and comparison of a selected tuple of variables initiated
 by the first rule.
 Formally, we add a rule
 $\SRDRULE{(\PS,k)}{\textrm{det}(\sh,\T{y},\Lambda)}$, 
 where $\T{y}$ is a tuple of $k$ fresh variables,
 for each $\SRDRULE{\PS}{\sh} \in \SRD$ and each $\Lambda \in \textrm{Cmp}(\T{y},\sh)$.
 Then $\Omega$ is determined.
 To obtain a \emph{well}-determined SID,
 we apply the Refinement Theorem (Theorem~\ref{thm:compositional:refinement}) to $\Omega$
 and the heap automaton $\HASAT$ (cf. Theorem~\ref{thm:zoo:sat:property}) accepting all satisfiable reduced symbolic heaps.
 This yields the desired SID $\SRDALT$ in which every unfolding of each predicate call is satisfiable and determined.
 It remains to prove that our construction of $\SRDALT$ is correct.
 \paragraph{$\rsh$ is well--determined for each $\rsh \in \CALLSEM{\PS\FV{0}{}}{\SRDALT}$.}
 Let $\rsh \in \CALLSEM{\PS\FV{0}{}}{\SRDALT}$.
 Since each rule in $\SRDALT$ is obtained from $\SRD$ by adding pure formulas and free variables only and $\SRD$ is established, $\SRDALT$ is established as well.
 Then models of $\rsh$ differ in the interpretation of free variables only.
 By construction of $\SRDALT$ we know that for each free variable $x \in \FV{0}{}$ and each variable $y \in \VAR(\rsh)$ there exists either an equality $x = y$ or an inequality $x \neq y$ in $\PURE{\rsh}$.
 Hence, each variable is either allocated or definitely equal or unequal to each other variable by a pure formula.
 Then $\rsh$ is determined 
 (cf. Section~\ref{sec:entailment}).
 %
 Furthermore, by Theorem~\ref{thm:zoo:sat:property}, the Refinement Theorem ensures that every unfolding of each predicate symbol is satisfiable without changing the unfoldings themselves.
 Thus $\SRDALT$ is well-determined.
 \paragraph{$\rsh \ENTAIL{\SRDALT} \PS\FV{0}{}$ for each $\rsh \in \CALLSEM{\PS\FV{0}{}}{\SRD}$.}
 Let $\rsh \in \CALLSEM{\PS\FV{0}{}}{\SRD}$.
 By construction of $\Omega$, we have 
 $\rsha = \textrm{det}(\rsh,\FV{0}{\rsh},\Lambda) \in \CALLSEM{\PS\FV{0}{}}{\SRDALT}$
 for each $\Lambda \in \textrm{Cmp}(\FV{0}{\rsh},\rsh)$ such that $\rsha$ is satisfiable.
 %
 Then, for each stack-heap pair $\stack,\heap$, we have
 \begin{align*}
                     & \stack,\heap \SAT{\emptyset} \rsh \\
       ~\Rightarrow~ & \left[ \text{set}~\Lambda \DEFEQ \{ y ~\sim~ \FV{0}{\rsh} ~|~ s(y) \sim s(\FV{0}{\rsh}) \} \right] \\
                     & \stack,\heap \SAT{\emptyset} \rsh ~\wedge~ \stack,\heap \SAT{\emptyset} \Lambda \\
       ~\Rightarrow~ & \left[ \text{Definition of}~\rsha = \textrm{det}(\rsh,\FV{0}{\rsh},\Lambda) \in \CALLSEM{\PS\FV{0}{}}{\SRDALT} \right] \\
                     & \exists \rsha \in \CALLSEM{\PS\FV{0}{}}{\SRDALT} ~.~ \stack,\heap \SAT{\emptyset} \rsha \\
       ~\Rightarrow~ & \left[ \text{SL semantics} \right] \\
                     & \stack,\heap \SAT{\SRDALT} \PS\FV{0}{}.
 \end{align*}
 Hence, $\rsh \ENTAIL{\SRDALT} \PS\FV{0}{}$.
 \paragraph{$\rsh \ENTAIL{\SRD} \PS\FV{0}{}$ for each $\rsh \in \CALLSEM{\PS\FV{0}{}}{\SRDALT}$.}
 Let $\rsh \in \CALLSEM{\PS\FV{0}{}}{\SRDALT}$.
 By construction of $\SRDALT$, there exists $\rsha \in \CALLSEM{\PS\FV{0}{}}{\SRD}$ and 
 $\Lambda \in \textrm{Cmp}(\FV{0}{\rsha},\rsha)$ such that 
 $\rsh = \textrm{det}(\rsha,\FV{0}{\rsha},\Lambda)$.
 Then, for each stack-heap pair $\stack,\heap$, we have
 \begin{align*}
                  & \stack,\heap \SAT{\emptyset} \rsh \\
    ~\Rightarrow~ & \left[ \text{Definition of}~\rsh = \textrm{det}(\rsha,\FV{0}{\rsha},\Lambda) \right] \\ 
                  & \stack,\heap \SAT{\emptyset} \rsha ~\text{and}~ \stack,\heap \SAT{\emptyset} \Lambda \\
    ~\Rightarrow~ & \left[ A \wedge B \rightarrow A \right] \\
                  & \stack,\heap \SAT{\emptyset} \rsha \\
    ~\Rightarrow~ & \left[ \rsha \in \CALLSEM{\PS\FV{0}{}}{\SRD} \right] \\
                  & \exists \rsha \in \CALLSEM{\PS\FV{0}{}}{\SRD} ~.~ \stack,\heap \SAT{\emptyset} \rsha \\
    ~\Rightarrow~ & \left[ \text{SL semantics} \right] \\
                  & \stack,\heap \SAT{\SRD} \PS\FV{0}{}.
 \end{align*}
 Hence, $\rsh \ENTAIL{\SRD} \PS\FV{0}{}$
 \qed
\end{proof}
%
%
\section{Proof of Theorem~\ref{thm:entailment:top-level}} \label{app:entailment:top-level}
Recall from Definition~\ref{def:entailment:heapmodels} the set of all reduced symbolic heaps in $\SRDCLASS$ (over the same free variables as $\sh$) entailing an unfolding of $\sh$:
 \begin{align*}
  & \USET{\sh}{\SRD}{\SRDCLASS} 
  \DEFEQ \{ \rsha \in \RSL{}{\SRDCLASS} ~|~ \NOFV{\rsha} = \NOFV{\sh} \,\text{and}\, \exists \rsh \in \CALLSEM{\sh}{\SRD} \,.\, \rsha \ENTAIL{\emptyset} \rsh \}
 \end{align*}
\begingroup
\def\thetheorem{\ref{thm:entailment:top-level}}
\begin{theorem}
 Let $\alpha \in \N$ and $\SRD \in \SETSRD{\SRDCLASSFV{\alpha}}$ be established.
 Moreover, for each predicate symbol $\PS \in \PRED(\SRD)$, 
 let there be a heap automaton over $\SHCENTAIL{\alpha}$ accepting 
 $\USET{\PS\T{x}}{\SRD}{\CENTAIL{\alpha}}$.
 %
 Then, for every well-determined symbolic heap $\sh \in \SL{\SRD}{}$, 
 there is a heap automaton over $\SHCENTAIL{\alpha}$ accepting
 $\USET{\sh}{\SRD}{\CENTAIL{\alpha}}$.
 %
 %
 %
 %
 %
\end{theorem}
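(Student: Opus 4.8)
The plan is to prove the theorem by structural induction on the syntax of the well-determined symbolic heap $\sh$, constructing in each case a heap automaton $\HASH{\sh}$ over $\SHCENTAIL{\alpha}$ and then verifying two things about it: that $L(\HASH{\sh}) = \USET{\sh}{\SRD}{\CENTAIL{\alpha}}$, and that it satisfies the compositionality property of Definition~\ref{def:refinement:automaton}. Throughout, the decisive structural fact is that the inputs to these automata are well-determined reduced symbolic heaps, so each such $\rsha$ has a unique tight model up to isomorphism; this turns the question ``does $\rsha$ entail some unfolding of $\sh$?'' into a question about that single model, and it makes the case splits that the constructions rely on (how to partition a heap, how to pin down an existential witness) essentially canonical. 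Establishment of $\SRD$ is inherited by all subcomputations and guarantees that unfoldings of the subformulas of $\sh$ have no ``floating'' existentially quantified variables, which is what keeps the state spaces finite.

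For the base cases, the heap automaton for a predicate call $\PS\T{x}$ is provided by the hypothesis of the theorem. For $\sh = \EMP$ one checks that a well-determined reduced symbolic heap $\rsha$ with $\NOFV{\rsha}=0$ lies in $\USET{\EMP}{\SRD}{\CENTAIL{\alpha}}$ iff its spatial part contains no points-to assertion, equivalently $\ALLOC{\rsha} = \emptyset$; since allocation is a definite relationship computable in polynomial time (Remark~\ref{rem:closure}), a small variant of $\HASAT$ from Theorem~\ref{thm:zoo:sat:property} recognises this. For $\sh = \PTS{x}{\T{y}}$ one checks that $\rsha \ENTAIL{\emptyset} \PTS{x}{\T{y}}$ iff, up to the definite equalities of $\rsha$, its spatial part consists of exactly one points-to assertion, from a variable definitely equal to $x$, whose record has the length of $\T{y}$ and is componentwise definitely equal to $\T{y}$; this is read off from the tracking information of a $\HATRACK$-style automaton (Definition~\ref{def:zoo:track-automaton}) together with the shape of $\SPATIAL{\rsha}$.

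For the composite cases I would proceed as follows. For a pure formula $\sh = \sh' : \Pi$, one has $\USET{\sh' : \Pi}{\SRD}{\CENTAIL{\alpha}} = \{ \rsha \in \USET{\sh'}{\SRD}{\CENTAIL{\alpha}} \mid \Pi \text{ holds in the model of } \rsha \}$, and the latter condition is a conjunction of definite (in)equalities; hence $\HASH{\sh' : \Pi}$ is obtained by intersecting $\HASH{\sh'}$, which exists by the induction hypothesis, with a $\HATRACK$-based automaton recognising ``$\Pi$ holds'', appealing to the closure properties of Theorem~\ref{thm:refinement:boolean}. For an existential $\sh = \exists z . \sh'$, the automaton guesses, with a $\HATRACK$ component, the identification of the witness for $z$ in the model of $\rsha$ --- it is $\NIL$, a free variable of $\rsha$, an allocated location, or a fresh value, and well-determinedness of $\sh$ makes the guess canonical --- and then runs $\HASH{\sh'}$ with $z$ adjoined as an extra free variable, reconciling the guessed identity of $z$ with the rest of the heap. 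At every node of an unfolding tree at most $\alpha$ free variables are in play (or $\NOFV{\sh}+1$ at the root), so the state spaces remain finite.

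The main obstacle is the separating conjunction $\sh = \sha_1 \SEP \sha_2$. Here $\rsha$ entails some $\rsh' \in \CALLSEM{\sha_1 \SEP \sha_2}{\SRD}$ iff the unique model $(\stack,\heap)$ of $\rsha$ decomposes as $\heap = \heap_1 \HEAPUNION \heap_2$ with $(\stack,\heap_i)$ a model of some $\rsh'_i \in \CALLSEM{\sha_i}{\SRD}$, and well-determinedness is exactly what keeps this decomposition manageable, since the model of $\rsha$ is fixed up to isomorphism. The automaton for $\sha_1 \SEP \sha_2$ therefore processes the unfolding tree of $\rsha$ and, at each node, nondeterministically distributes the points-to assertions of the current symbolic heap between a ``left'' copy fed to $\HASH{\sha_1}$ and a ``right'' copy fed to $\HASH{\sha_2}$, with a $\HATRACK$ component propagating shared free variables and ownership of allocated locations so that the two sub-runs stay mutually consistent. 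The bulk of the argument is proving that this construction accepts precisely $\USET{\sha_1 \SEP \sha_2}{\SRD}{\CENTAIL{\alpha}}$ --- in particular, that every semantic split of the model of $\rsha$ can be realised by a syntactic split of the spatial atoms along an unfolding tree of $\rsha$ --- and that the resulting product-with-split automaton still satisfies the compositionality property. Collecting the six cases with the induction hypothesis then yields the theorem.
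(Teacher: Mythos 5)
Your proposal is correct and follows essentially the same route as the paper: structural induction on the syntax of $\sh$ with the same case decomposition, using $\HATRACK$-style tracking as an auxiliary, a product construction for pure formulas, a nondeterministic split of the spatial atoms (with propagation of variable ownership and allocation) for $\SEP$, and a guessed identification of the existential witness threaded through predicate-call parameters for $\exists z$. The only slip is listing ``a fresh value'' as a possible witness in the existential case --- establishment of the right-hand side rules this out (this is exactly the content of the paper's auxiliary lemma for that case), so the guess ranges only over variables of the unfolding --- but this does not affect the construction.
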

\endgroup
  The proof is by induction on the syntax of symbolic heaps.
  Since each case requires the construction of a suitable heap automaton and a corresponding correctness proof, we split the proof across multiple lemmas, which are shown subsequently in the remainder of this section.
  Furthermore, we make some preliminary remarks to reduce the technical effort.
  \begin{remark}
  \label{remark:entailment:top-level:track}
  We assume without loss of generality
  that every predicate call occurring in a symbolic heap $\sh \in \SHCENTAIL{\alpha}$
  is annotated with a state of the tracking automaton $\HATRACK$ introduced in
  Definition~\ref{def:zoo:track-automaton}.
  Thus, for every predicate call $\CALLN{i}{\sh}$, we assume the availability of 
  \begin{itemize}
    \item a set $A_i \subseteq \FV{i}{\sh}$ capturing all allocated variables in a (fixed)
          unfolding of $\CALLN{i}{\sh}$, and
    \item a finite set of pure formulas $\PURE{}_i$ over $\FV{i}{\sh}$ capturing all definite
          equalities and inequalities between variables in $\FV{i}{\sh}$. 
  \end{itemize}
  This assumption is an optimization that prevents us from applying 
  Lemma~\ref{thm:zoo:track:property} again and again to keep track
  of the aforementioned relationships between parameters of predicate calls.
  \end{remark}  
  \begin{remark}
  \label{remark:entailment:top-level:alpha}
  Furthermore, by Definition~\ref{def:entailment:centail}, 
  every predicate call has at most $\alpha$ parameters.
  %
  %
  However, the number of free variables of the symbolic heap itself, 
  i.e. the formula at the root of an unfolding tree, 
  is not necessarily bounded by $\alpha$.
  To avoid additional case distinctions in correctness proofs, we present
  constructions where the number of free variables is bounded by $\alpha$ as well.
  In each of our constructions,
  this corner case can be dealt with by adding one dedicated final state $q_{>\alpha}$
  such that $(\EMPTYSEQ,\rsh,q_{> \alpha}) \in \Delta$ if and only if
  the number of free variables $\NOFV{\rsh}$ is greater than $\alpha$
  and the property in question is satisfied by $\rsh$.
  General symbolic heaps with more than $\alpha$ free variables are treated analogously.
  Once the final state $q_{> \alpha}$ has been reached no further transition is possible.
  Thus, only symbolic heaps at the root of an unfolding tree may enter $q_{>\alpha}$.
  \end{remark}
\begin{proof}[Proof of Theorem~\ref{thm:entailment:top-level}]
  The remainder of the proof is summarized as follows:
  \begin{itemize}
   \item The base cases $\PT{x}{\T{y}}$ and $\EMP$ are shown in Section~\ref{sec:entailment:points-to} and Section~\ref{sec:entailment:emp}, respectively.
   \item For predicate calls $\PS\FV{0}{}$,
         we already know by assumption that $\USET{\PS\T{x}}{\SRD}{\CENTAIL{\alpha}}$
         is accepted by a heap automaton over $\SHCENTAIL{\alpha}$.
   %
   \item Section~\ref{sec:entailment:sepcon} shows that $\USET{\sh \SEP \sha}{\SRD}{\CENTAIL{\alpha}}$ is 
         accepted by a heap automaton over $\SHCENTAIL{\alpha}$  
         if $\sh,\sha$ are defined over the same set of free variables.
         This is sufficient, because we do not require that these variables actually occur in both symbolic heaps.
   \item Section~\ref{sec:entailment:pure} shows that $\USET{\sh : \Pi}{\SRD}{\CENTAIL{\alpha}}$ is 
         accepted by a heap automaton over $\SHCENTAIL{\alpha}$,
         where $\Pi$ is a finite set of pure formulas over the free variables of $\sh$.
   \item Section~\ref{sec:entailment:existential} shows that $\USET{\exists z ~.~ \sh}{\SRD}{\CENTAIL{\alpha}}$ is 
         accepted by a heap automaton over $\SHCENTAIL{\alpha}$,
         where $z$ is a single variable.
         Repeated application then completes the proof.
  \end{itemize}
  Hence, for each symbolic heap $\sh \in \SL{\SRD}{\CENTAIL{\alpha}}$, a corresponding heap automaton over $\CENTAIL{\alpha}$ accepting $\USET{\sh}{\SRD}{\CENTAIL{\alpha}}$ can be constructed.
  %
  %
  \qed
\end{proof}
%
%
\subsubsection{The Points-to Assertion} \label{sec:entailment:points-to}
Let $\T{u} \in U_n = \{1, \ldots,\alpha\} \times \{0,1, \ldots,\alpha\}^{n-1}$ for some fixed natural number $n \geq 1$.
Moreover, we associate the points-to assertion
\begin{align*}
 \rsha_{\T{u}} ~\DEFEQ~ \PT{\PROJ{\FV{0}{}}{\PROJ{\T{u}}{1}}}{
    \PROJ{\FV{0}{}}{\PROJ{\T{u}}{2}} \ldots \PROJ{\FV{0}{}}{\PROJ{\T{u}}{n}}
 }
\end{align*}
with each tuple $\T{u} \in U_n$.\footnote{Recall that $\PROJ{\FV{0}{}}{0} = \NIL$.}
We construct a heap automaton $\HASH{\rsha_{\T{v}}} = (Q,\SHCENTAIL{\alpha},\Delta,F)$
accepting $\USET{\rsha_{\T{v}}}{\SRD}{\CENTAIL{\alpha}}$ for some fixed tuple $\T{v} \in U_n$ as follows:
\begin{align*}
  Q ~\DEFEQ~ \{ \rsha_{\T{u}} ~|~ \T{u} \in U_n \} \cup \{\EMP\}
  \qquad
  F ~\DEFEQ~ \{ \rsha_{\T{v}} \}
\end{align*}
Moreover, for $\sh \in \SL{}{\CENTAIL{\alpha}}$ with $\NOCALLS{\sh} = m$,
 the transition relation $\Delta$ is given by:
\begin{align*}
              & \MOVE{\HASH{\rsha_{\T{v}}}}{q_0}{\sh}{q_1 \ldots q_m}
 \quad\text{iff}\quad 
                    \REDUCE{\sh, q_1 \ldots q_m} \ENTAIL{\emptyset} q_0
\end{align*}
where $\REDUCE{\sh, q_1 \ldots q_m} \DEFEQ \sh[\PS_1 / q_1 : \Pi_i, \ldots, \PS_m / q_m : \Pi_m]$ and each set $\Pi_i$ denotes the set of pure formulas obtained from the tracking automaton (see Definition~\ref{def:zoo:track}),
which is assumed to be readily available by 
 Remark~\ref{remark:entailment:top-level:track}. 
%
\begin{lemma} \label{thm:entailment:points-to:compositionality}
 $\HASH{\rsha_{\T{v}}}$ satisfies the compositionality property.
\end{lemma}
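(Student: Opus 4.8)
The plan is to verify the compositionality property for $\HASH{\rsha_{\T{v}}}$ directly from the definition, namely: for every $q_0 \in Q$, every $\sh \in \SHCENTAIL{\alpha}$ with $m$ predicate calls, and all reduced symbolic heaps $\rsh_1, \ldots, \rsh_m$, we have $(\EMPTYSEQ, \sh[P_1/\rsh_1, \ldots, P_m/\rsh_m], q_0) \in \Delta$ if and only if there exists $\T{q} \in Q^m$ with $(\T{q}, \sh, q_0) \in \Delta$ and $(\EMPTYSEQ, \rsh_i, \PROJ{\T{q}}{i}) \in \Delta$ for all $i$. The key observation is that each state $q \in Q$ is itself a reduced symbolic heap (either a single points-to assertion $\rsha_{\T{u}}$ or $\EMP$), and a transition $(\EMPTYSEQ, \rsh, q)$ holds iff $\rsh \ENTAIL{\emptyset} q$, since $\REDUCE{\rsh, \EMPTYSEQ} = \rsh$. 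So the left-hand side of the compositionality biconditional reads $\sh[P_1/\rsh_1,\ldots,P_m/\rsh_m] \ENTAIL{\emptyset} q_0$, and on the right-hand side choosing $\PROJ{\T{q}}{i}$ amounts to picking a reduced symbolic heap $q_i$ with $\rsh_i \ENTAIL{\emptyset} q_i$, together with the requirement $\REDUCE{\sh, q_1 \ldots q_m} \ENTAIL{\emptyset} q_0$.

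First I would show the "only if" direction. Assume $\sh[P_1/\rsh_1,\ldots,P_m/\rsh_m] \ENTAIL{\emptyset} q_0$. Here I would exploit that the only candidate states are the $\rsha_{\T{u}}$ and $\EMP$: for each predicate call $\CALLN{i}{\sh}$ in $\sh$, I would argue that $\rsh_i$ must entail exactly one of these "canonical" reduced symbolic heaps over the parameters $\FV{i}{\sh}$ — this is where the fact that $\sh$ is well-determined and established is used, since it pins down, for each call, a unique points-to-or-empty shape (modulo the pure-formula annotations carried by the tracking automaton, cf. Remark~\ref{remark:entailment:top-level:track}). Setting $q_i$ to be that canonical heap, I get $\rsh_i \ENTAIL{\emptyset} q_i$. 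Then I would need to show $\REDUCE{\sh, q_1 \ldots q_m} \ENTAIL{\emptyset} q_0$; this follows because replacing each $\rsh_i$ by the weaker $q_i$ inside $\sh$ still produces a symbolic heap whose (determined) models are exactly those of $\sh[P_1/\rsh_1,\ldots]$ as far as the single points-to / empty-heap shape is concerned, so entailment of $q_0$ is preserved — here I would invoke Lemma~\ref{thm:symbolic-heaps:semantics} and the coincidence lemma (Lemma~\ref{thm:symbolic-heaps:fv-coincidence}) to reduce everything to tight models.

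For the "if" direction, assume we have $q_1, \ldots, q_m \in Q$ with $\rsh_i \ENTAIL{\emptyset} q_i$ and $\REDUCE{\sh, q_1 \ldots q_m} \ENTAIL{\emptyset} q_0$. I would then show $\sh[P_1/\rsh_1,\ldots,P_m/\rsh_m] \ENTAIL{\emptyset} q_0$ by a monotonicity argument: substituting the $\rsh_i$ (which entail the $q_i$) into $\sh$ yields a symbolic heap at least as strong as $\REDUCE{\sh, q_1 \ldots q_m}$ with respect to the single-points-to property encoded by $q_0$. Concretely, any tight model of $\sh[P_1/\rsh_1,\ldots]$ is, by Lemma~\ref{thm:symbolic-heaps:semantics}, a model of some unfolding, and by transitivity of entailment through the substitution lemma it is a model of $q_0$. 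This step requires care about fresh existential variables introduced by the substitutions and about the pure annotations $\Pi_i$, which is why the tracking-automaton bookkeeping is built in.

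The main obstacle I anticipate is the "only if" direction: showing that each $\rsh_i$ is forced to entail one of the finitely many canonical states rather than some incomparable reduced symbolic heap. This is precisely where well-determinedness of $\sh$ has to be leveraged — a determined symbolic heap has (up to isomorphism) a unique tight model, and from that model one can read off, for each parameter tuple $\FV{i}{\sh}$, a unique shape among $\{\EMP\} \cup \{\rsha_{\T{u}} : \T{u} \in U_n\}$ that $\rsh_i$ must be consistent with; combined with the establishment assumption (ruling out "hidden" aliasing of unallocated variables) this identifies $q_i$ uniquely. I would state this as an auxiliary claim and discharge it using the characterizations of definite relationships from Remark~\ref{rem:closure} together with the coincidence and semantics lemmas, then feed it into the two directions above. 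The remaining bookkeeping — checking $\SIZE{\T{q}} = \NOCALLS{\sh}$, decidability of $\Delta$ (which follows from decidability of reduced entailment $\ENTAIL{\emptyset}$, itself decidable since reduced symbolic heaps have computable canonical models), and the corner case $\NOFV{\rsh} > \alpha$ handled via the dedicated state of Remark~\ref{remark:entailment:top-level:alpha} — is routine.
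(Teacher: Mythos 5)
Your proposal is correct and follows essentially the same route as the paper: a case split on whether the output state is $\EMP$ or some $\rsha_{\T{u}}$, the observation that the composed heap can then contain at most one points-to assertion so that exactly one component (either $\SPATIAL{\sh}$ or a single $\rsh_k$) is nonempty, the assignment of $\EMP$ to all other components and the canonical $\rsha_{\T{w}}$ to that one, and the tracking annotations $\Pi_i$ (Lemma~\ref{thm:zoo:track:property}) to transfer equalities between $\rsh$ and $\REDUCE{\sh,\T{q}}$. The only cosmetic difference is that you justify the key step via well-determinedness and unique tight models, whereas the paper argues directly that $\rsh \ENTAIL{\emptyset} \rsha_{\T{u}}$ forces $\rsh$ to contain at most one points-to assertion whose variables are definitely equal to free variables --- both arguments land in the same place.
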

\begin{proof}
 Let $\sh \in \SL{}{\CENTAIL{\alpha}}$ with $\NOCALLS{\sh} = m$.
 Moreover, for each $1 \leq i \leq m$, let $\rsh_i \in \RSL{}{\CENTAIL{\alpha}}$ and $\rsh \DEFEQ \sh[\PS_1^{\sh} / \rsh_1, \ldots, \PS_m^{\sh} / \rsh_m]$.
 Assume $\OMEGA{\HASH{\rsha_{\T{v}}}}{q_0}{\rsh}$.
 Three cases arise depending on the value of $q_0 \in Q$.
 \\
 \emph{Case 1:} $q_0 = \EMP$\qquad
 \begin{align*}
                         & \OMEGA{\HASH{\rsha_{\T{v}}}}{\EMP}{\rsh} \\
       ~\Leftrightarrow~ & \left[ \text{Definition of}~\Delta,~\NOCALLS{\rsh} = 0 \right] \\
                         & \rsh \ENTAIL{\SRD} \EMP \\
       ~\Leftrightarrow~ & \left[ \text{choose}~q_1 = \ldots = q_m = \EMP \right] \\
                         & \bigwedge_{1 \leq i \leq m} q_i = \EMP ~\text{and}~ \rsh \ENTAIL{\SRD} \EMP \\
       ~\Leftrightarrow~ & \left[ \rsh = \sh[\PS_1^{\sh} / \rsh_1, \ldots, \PS_m^{\sh} / \rsh_m] ~\text{contains no points-to assertion} \right] \\
                         & \bigwedge_{1 \leq i \leq m} q_i = \EMP ~\text{and}~ \sh \ENTAIL{\SRD} \EMP \\
                         & \qquad \text{and}~ \forall 1 \leq i \leq m ~.~ \rsh_i \ENTAIL{\SRD} \EMP \\
       ~\Leftrightarrow~ & \left[ \text{Definition of}~\Delta \right] \\
                         & \exists q_1,\ldots,q_m \in Q ~.~ \MOVE{\HASH{\rsha_{\T{v}}}}{q_0}{\sh}{q_1 \ldots q_m} \\
                         & \qquad \text{and}~ \forall 1 \leq i \leq m ~.~ \OMEGA{\HASH{\rsha_{\T{v}}}}{q_i}{\rsh_i}.
 \end{align*}
 \emph{Case 2:} $q_0 = \rsha_{\T{u}}$\qquad
  \begin{align*}
                         & \OMEGA{\HASH{\rsha_{\T{v}}}}{\rsha_{\T{u}}}{\rsh} \\
       ~\Leftrightarrow~ & \left[ \text{Definition of}~\Delta,~\NOCALLS{\rsh} = 0 \right] \\
                         & \rsh \ENTAIL{\SRD} \rsha_{\T{u}} \tag{$\dag$}
  \end{align*}
  Now, $\rsh \ENTAIL{\SRD} \rsha_{\T{u}}$ holds if and only if $\rsh$ contains at most one points-to assertion $\rsha_{\T{w}}$, $\T{w} \in U_n$, where each variable is free.
  If this points-to assertion is contained in $\sh$ then choose $q_1 = \ldots = q_m = \EMP$.
  Otherwise, if exactly $\rsh_k$, $1 \leq k \leq m$ contains a points-to assertion, choose $q_k = \rsha_{\T{w}}$ and $q_i = \EMP$ for each $i \neq k$.
  With these choices, our computation continues as follows:
  \begin{align*}
     (\dag) ~\Leftrightarrow~ & \left[ \text{choice of}~q_1,\ldots,q_m \right] \\
                              & \exists q_1,\ldots,q_m \in Q ~.~ \rsh \ENTAIL{\SRD} \rsha_{\T{u}} \\
                              & \qquad \text{and}~ \forall 1 \leq i \leq m ~.~ \rsh_i \ENTAIL{\SRD} q_i \\
            ~\Leftrightarrow~ & \left[ \text{Definition of}~\Delta \right] \\
                              & \exists q_1,\ldots,q_m \in Q ~.~ \rsh \ENTAIL{\SRD} \rsha_{\T{u}} \\
                              & \qquad \text{and}~ \forall 1 \leq i \leq m ~.~ \OMEGA{\HASH{\rsha_{\T{v}}}}{q_i}{\rsh_i}
  \end{align*}
  Since each variable of points-to assertion $\rsha_{\T{w}}$ is equal to a free variable and, by Lemma~\ref{thm:zoo:track:property}, two free variables are equal in $\rsh$ if and only if they are equal in $\REDUCE{\sh, q_1 \ldots q_m}$, we have $\REDUCE{\sh, q_1 \ldots q_m} \ENTAIL{\SRD} \rsha_{\T{u}}$ if and only if $\rsh \ENTAIL{\sh} \rsha_{\T{u}}$. Thus:
  \begin{align*}
            ~\Leftrightarrow~ & \left[ \REDUCE{\sh, q_1 \ldots q_m} \ENTAIL{\SRD} \rsha_{\T{u}} ~\text{iff}~ \rsh \ENTAIL{\sh} \rsha_{\T{u}} \right] \\
                              & \exists q_1,\ldots,q_m \in Q ~.~ \REDUCE{\sh, q_1 \ldots q_m} \ENTAIL{\SRD} \rsha_{\T{u}} \\
                              & \qquad \text{and}~ \forall 1 \leq i \leq m ~.~ \OMEGA{\HASH{\rsha_{\T{v}}}}{q_i}{\rsh_i} \\
            ~\Leftrightarrow~ & \left[ \text{Definition of}~\Delta \right] \\
                              & \exists q_1,\ldots,q_m \in Q ~.~ \MOVE{\HASH{\rsha_{\T{v}}}}{q_0}{\sh}{q_1 \ldots q_m} \\
                              & \qquad \text{and}~ \forall 1 \leq i \leq m ~.~ \OMEGA{\HASH{\rsha_{\T{v}}}}{q_i}{\rsh_i}. 
  \end{align*}
  \qed
\end{proof}
\begin{lemma}
 $L(\HASH{\rsha_{\T{v}}}) = \USET{\rsha_{\T{v}}}{\SRD}{\CENTAIL{\alpha}}$.
\end{lemma}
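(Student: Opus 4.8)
The plan is to show the two inclusions $L(\HASH{\rsha_{\T{v}}}) \subseteq \USET{\rsha_{\T{v}}}{\SRD}{\CENTAIL{\alpha}}$ and $\USET{\rsha_{\T{v}}}{\SRD}{\CENTAIL{\alpha}} \subseteq L(\HASH{\rsha_{\T{v}}})$ by unwinding the definitions. Since $\HASH{\rsha_{\T{v}}}$ is compositional (Lemma~\ref{thm:entailment:points-to:compositionality}), its language consists of the reduced symbolic heaps reaching a final state in a single transition, i.e.\ $\rsh \in L(\HASH{\rsha_{\T{v}}})$ iff $\OMEGA{\HASH{\rsha_{\T{v}}}}{\rsha_{\T{v}}}{\rsh}$, which by definition of $\Delta$ (with $\NOCALLS{\rsh}=0$, so $\REDUCE{\rsh,\EMPTYSEQ} = \rsh$) is exactly $\rsh \ENTAIL{\emptyset} \rsha_{\T{v}}$. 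On the other side, $\USET{\rsha_{\T{v}}}{\SRD}{\CENTAIL{\alpha}}$ is, by Definition~\ref{def:entailment:heapmodels}, the set of $\rsh \in \RSL{}{\CENTAIL{\alpha}}$ with $\NOFV{\rsh} = n$ and $\rsh \ENTAIL{\emptyset} \rsh'$ for some $\rsh' \in \CALLSEM{\rsha_{\T{v}}}{\SRD}$. Here the key observation is that $\rsha_{\T{v}}$ is already reduced (it is a single points-to assertion with no predicate calls), so $\CALLSEM{\rsha_{\T{v}}}{\SRD} = \{\rsha_{\T{v}}\}$, and the existential over $\rsh'$ collapses to $\rsh \ENTAIL{\emptyset} \rsha_{\T{v}}$. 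Both descriptions thus coincide, modulo the side condition $\NOFV{\rsh} = n$, which is handled by the state space (every state $\rsha_{\T{u}}$ lives over $n$ free variables, as does $\EMP$) together with Remark~\ref{remark:entailment:top-level:alpha} for the corner case $\NOFV{\rsh} > \alpha$.

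First I would record the simplification $\CALLSEM{\rsha_{\T{v}}}{\SRD} = \{\rsha_{\T{v}}\}$, citing the remark after Definition~\ref{def:sl:unfolding-trees} that $\UTREES{\SRD}{\rsh} = \{t\}$ for reduced $\rsh$, so $\USET{\rsha_{\T{v}}}{\SRD}{\CENTAIL{\alpha}} = \{\rsh \in \RSL{}{\CENTAIL{\alpha}} \mid \NOFV{\rsh} = n \text{ and } \rsh \ENTAIL{\emptyset} \rsha_{\T{v}}\}$. Next I would spell out the chain of equivalences
\begin{align*}
  \rsh \in L(\HASH{\rsha_{\T{v}}})
  &\iff \exists q \in F \,.\, \OMEGA{\HASH{\rsha_{\T{v}}}}{q}{\rsh} \\
  &\iff \OMEGA{\HASH{\rsha_{\T{v}}}}{\rsha_{\T{v}}}{\rsh} \\
  &\iff \REDUCE{\rsh,\EMPTYSEQ} \ENTAIL{\emptyset} \rsha_{\T{v}} \\
  &\iff \rsh \ENTAIL{\emptyset} \rsha_{\T{v}},
\end{align*}
where the second step uses $F = \{\rsha_{\T{v}}\}$, the third is the definition of $\Delta$ instantiated at $m = 0$, and the fourth uses that $\REDUCE{\rsh,\EMPTYSEQ} = \rsh[\,] = \rsh$ by convention. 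Then I would check that $\rsh \ENTAIL{\emptyset} \rsha_{\T{v}}$ already forces $\NOFV{\rsh} = n$ up to the conventions in play (or, if one prefers to be careful, note that the only states of $\HASH{\rsha_{\T{v}}}$, and hence the only heaps that can be processed into a final state, are over exactly $n$ free variables, while heaps with more free variables are dealt with by the dedicated state $q_{>\alpha}$ of Remark~\ref{remark:entailment:top-level:alpha}); combining with the displayed simplification of $\USET{\cdot}{\cdot}{\cdot}$ yields the claimed set equality.

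The main obstacle I anticipate is purely bookkeeping: matching the free-variable side condition $\NOFV{\rsh} = n$ in Definition~\ref{def:entailment:heapmodels} against the implicit normalization built into the automaton's state space and against the treatment of $\NOFV{\rsh} > \alpha$ promised by Remark~\ref{remark:entailment:top-level:alpha}. Everything else is a one-line consequence of the fact that the target $\rsha_{\T{v}}$ is already reduced, so neither the unfolding machinery nor the compositionality argument does any nontrivial work here; the real content of this case was the construction of $\Delta$ and the verification of compositionality, both of which are already in hand.
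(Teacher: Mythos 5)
Your proposal is correct and follows essentially the same route as the paper: the paper's proof is exactly the chain of equivalences $\rsh \in L(\HASH{\rsha_{\T{v}}}) \Leftrightarrow \OMEGA{\HASH{\rsha_{\T{v}}}}{\rsha_{\T{v}}}{\rsh} \Leftrightarrow \REDUCE{\rsh,\EMPTYSEQ} \ENTAIL{\emptyset} \rsha_{\T{v}} \Leftrightarrow \rsh \ENTAIL{\emptyset} \rsha_{\T{v}}$, followed by an appeal to the definition of $\USET{\rsha_{\T{v}}}{\SRD}{\CENTAIL{\alpha}}$. You are in fact slightly more explicit than the paper in spelling out that $\CALLSEM{\rsha_{\T{v}}}{\SRD}=\{\rsha_{\T{v}}\}$ collapses the existential in Definition~\ref{def:entailment:heapmodels} and in addressing the $\NOFV{\rsh}$ side condition, both of which the paper leaves implicit.
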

\begin{proof}
  Let $\rsh \in \RSL{}{\CENTAIL{\alpha}}$. Then
  \begin{align*}
                        & \rsh \in L(\HASH{\rsha_{\T{v}}}) \\
      ~\Leftrightarrow~ & \left[ \text{Definition of}~L(\HASH{\rsha_{\T{v}}}) \right] \\
                        & \exists q \in F ~.~ \OMEGA{\HASH{\rsha_{\T{v}}}}{q}{\rsh} \\
      ~\Leftrightarrow~ & \left[ \text{Definition of}~F \right] \\
                        & \OMEGA{\HASH{\rsha_{\T{v}}}}{\rsha_{\T{v}}}{\rsh} \\
      ~\Leftrightarrow~ & \left[ \text{Definition of}~\Delta \right] \\
                        & \REDUCE{\sh, \EMPTYSEQ} \ENTAIL{\emptyset} \rsha_{\T{v}} \\
      ~\Leftrightarrow~ & \left[ \NOCALLS{\rsh} = 0 ~\text{implies}~ \rsh = \REDUCE{\sh, \EMPTYSEQ} \right] \\
                        & \rsh \ENTAIL{\emptyset} \rsha_{\T{v}} \\
      ~\Leftrightarrow~ & \left[ \text{Definition of}~\USET{\rsha_{\T{v}}}{\SRD}{\CENTAIL{\alpha}} \right] \\
                        & \rsh \in \USET{\rsha_{\T{v}}}{\SRD}{\CENTAIL{\alpha}}. 
  \end{align*}
  \qed
\end{proof}
%
\subsubsection{The Empty Heap} \label{sec:entailment:emp}
A heap automaton $\HASH{\EMP}$ accepting $\USET{\EMP}{\SRD}{\CENTAIL{\alpha}}$ is constructed analogously to
$\HASH{\rsha_{\T{v}}}$, the heap automaton constructed for the points-to assertion in Section~\ref{sec:entailment:points-to}.
The only exception is that $F_{\HASH{\EMP}} \DEFEQ \{ \EMP \}$ is chosen as set of final states.\footnote{The set of points-to assertions $U_n$ in this construction can even be chosen to be the empty set.} Since the compositionality property for this automaton already has been shown in Lemma~\ref{thm:entailment:points-to:compositionality}, it suffices to show
that $\HASH{\EMP}$ accepts the correct language.
\begin{lemma}
  $L(\HASH{\EMP}) = \USET{\EMP}{\SRD}{\CENTAIL{\alpha}}$.
\end{lemma}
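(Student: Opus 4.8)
The plan is to mirror the structure of the proof of the points-to case in Section~\ref{sec:entailment:points-to}, since $\HASH{\EMP}$ is obtained from $\HASH{\rsha_{\T{v}}}$ merely by changing the set of final states to $F_{\HASH{\EMP}} = \{\EMP\}$. The compositionality property is already established by Lemma~\ref{thm:entailment:points-to:compositionality} (the construction of $\Delta$ is unchanged, so the same argument applies verbatim), so the only remaining obligation is $L(\HASH{\EMP}) = \USET{\EMP}{\SRD}{\CENTAIL{\alpha}}$.

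First I would fix an arbitrary reduced symbolic heap $\rsh \in \RSL{}{\CENTAIL{\alpha}}$ and unfold the definition of acceptance. Since $\rsh$ is reduced we have $\NOCALLS{\rsh} = 0$, so the only relevant transitions are of the form $\OMEGA{\HASH{\EMP}}{q}{\rsh}$ with $q \in F_{\HASH{\EMP}} = \{\EMP\}$. By the definition of $\Delta$, $\OMEGA{\HASH{\EMP}}{\EMP}{\rsh}$ holds iff $\REDUCE{\rsh,\EMPTYSEQ} \ENTAIL{\emptyset} \EMP$. Because $\rsh$ has no predicate calls, $\REDUCE{\rsh,\EMPTYSEQ} = \rsh$, so this is just $\rsh \ENTAIL{\emptyset} \EMP$. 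Finally, by Definition~\ref{def:entailment:heapmodels}, $\rsh \ENTAIL{\emptyset}\EMP$ together with $\NOFV{\rsh} = \NOFV{\EMP}$ (which here is the trivial arity condition, handled as usual by the $q_{>\alpha}$-state remark, Remark~\ref{remark:entailment:top-level:alpha}) is exactly the statement $\rsh \in \USET{\EMP}{\SRD}{\CENTAIL{\alpha}}$ — note that $\EMP \in \CALLSEM{\EMP}{\SRD}$ is its own unique unfolding, so the existential over unfoldings of $\EMP$ collapses. Chaining these equivalences gives the claim.

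Concretely, I would present this as a short equivalence chain, in the same style as the points-to lemma:
\begin{align*}
                        & \rsh \in L(\HASH{\EMP}) \\
      ~\Leftrightarrow~ & \left[ \text{Definition of}~L(\HASH{\EMP}) \right] \\
                        & \exists q \in F_{\HASH{\EMP}} ~.~ \OMEGA{\HASH{\EMP}}{q}{\rsh} \\
      ~\Leftrightarrow~ & \left[ F_{\HASH{\EMP}} = \{\EMP\} \right] \\
                        & \OMEGA{\HASH{\EMP}}{\EMP}{\rsh} \\
      ~\Leftrightarrow~ & \left[ \text{Definition of}~\Delta \right] \\
                        & \REDUCE{\rsh,\EMPTYSEQ} \ENTAIL{\emptyset} \EMP \\
      ~\Leftrightarrow~ & \left[ \NOCALLS{\rsh} = 0 ~\text{implies}~ \rsh = \REDUCE{\rsh,\EMPTYSEQ} \right] \\
                        & \rsh \ENTAIL{\emptyset} \EMP \\
      ~\Leftrightarrow~ & \left[ \CALLSEM{\EMP}{\SRD} = \{\EMP\},~\text{Definition}~\ref{def:entailment:heapmodels} \right] \\
                        & \rsh \in \USET{\EMP}{\SRD}{\CENTAIL{\alpha}}.
\end{align*}

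There is essentially no obstacle here: the content of the argument was already done in the points-to case, and the empty-heap case is strictly simpler because the target formula $\EMP$ has no variables and exactly one unfolding (itself). The only point requiring a word of care is the arity/free-variable bookkeeping — ensuring $\NOFV{\rsh} = \NOFV{\EMP}$ in the definition of $\USET{\EMP}{\SRD}{\CENTAIL{\alpha}}$ — but this is handled uniformly for all base and composite cases by the $q_{>\alpha}$ device described in Remark~\ref{remark:entailment:top-level:alpha}, so I would simply invoke that remark rather than re-argue it. Hence the lemma follows, completing the empty-heap base case of Theorem~\ref{thm:entailment:top-level}.
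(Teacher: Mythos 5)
Your proof is correct and matches the paper's own argument essentially step for step: both reduce the claim to the already-established compositionality of the points-to construction (Lemma~\ref{thm:entailment:points-to:compositionality}) and then verify $L(\HASH{\EMP}) = \USET{\EMP}{\SRD}{\CENTAIL{\alpha}}$ by the same chain of equivalences through the definitions of $F$, $\Delta$, and $\REDUCE{\cdot}$. The only cosmetic difference is the final step, where the paper passes through $\rsh \ENTAIL{\SRD} \EMP$ while you collapse the existential over unfoldings via $\CALLSEM{\EMP}{\SRD} = \{\EMP\}$ — these are the same observation.
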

\begin{proof}
  Let $\rsh \in \RSL{}{\CENTAIL{\alpha}}$. Then
  \begin{align*}
                      & \rsh \in L(\HASH{\EMP}) \\
    ~\Leftrightarrow~ & \left[ \text{Definition of}~L(\HASH{\EMP}) \right] \\
                      & \exists q_0 \in F ~.~ \OMEGA{\HASH{\EMP}}{q_0}{\rsh} \\
    ~\Leftrightarrow~ & \left[ \text{Definition of}~F \right] \\
                      & \OMEGA{\HASH{\EMP}}{\EMP}{\rsh} \\
    ~\Leftrightarrow~ & \left[ \text{Definition of}~\Delta,~\REDUCE{\sh, \EMPTYSEQ} = \rsh \right] \\
                      & \rsh \ENTAIL{\emptyset} \EMP \\
    ~\Leftrightarrow~ & \left[ \rsh \in \RSL{}{\CENTAIL{\alpha}} \right] \\
                      & \rsh \ENTAIL{\SRD} \EMP \\
    ~\Leftrightarrow~ & \left[ \text{Definition of}~\USET{\EMP}{\SRD}{\CENTAIL{\alpha}} \right] \\
                      & \rsh \in \USET{\EMP}{\SRD}{\CENTAIL{\alpha}}. 
  \end{align*}
  \qed
\end{proof}
\subsubsection{The Separating Conjunction}  \label{sec:entailment:sepcon}
Some additional notation is needed.
First, we write $\sh \equiv \sha$ if two symbolic heaps $\sh,\sha$ are identical up to the ordering of spatial formulas.
Given symbolic heaps $\sh$,$\sha$ and some variable $x$, we define
\begin{align*}
  [x]_{\sh,\sha} ~\DEFEQ~
    \begin{cases}
      x & ~\text{if}~x \in \VAR(\sh) \cup \VAR(\sha) \\
      \EMPTYSEQ & ~\text{otherwise}.
    \end{cases}
\end{align*}
Moreover, given two tuples $\T{x}$,$\T{y}$ taken from a finite set of variables $\{z_1,\ldots,z_n\}$, let $\T{x} \uplus \T{y} \DEFEQ [z_1]_{\sh,\sha} \,.\, \ldots \,.\, [z_n]_{\sh,\sha}$.
Similar to splitting a heap $\heap = \heap_1 \uplus \heap_2$ into two heaps, we define an operation to split \emph{symbolic} \emph{heaps}.
\begin{definition} \label{def:entailment:uplus}
 Let $\sh,\sha \in \SL{}{}$ with free variables $\FV{0}{\sh}$, $\FV{0}{\sha}$ and
 \begin{align*}
   \Gamma^{\sh} ~=~ \PS_1\FV{1}{\sh} \SEP \ldots \SEP \PS_m\FV{m}{\sh}
   ~\text{and}~
   \Gamma^{\sha} ~=~ \PS_1\FV{1}{\sha} \SEP \ldots \SEP \PS_m\FV{m}{\sha}.
 \end{align*}
 Moreover, let $\BV{\sh},\BV{\sha} \subseteq \{z_1,\ldots,z_n\}$ for some $n \in \N$ and for each $0 \leq i \leq m$, let $\FV{i}{\sh},\FV{i}{\sh} \subseteq \FV{i}{}$, where $\FV{i}{}$ is some finite set of variables.
 Then $\sh \uplus \sha$ is defined as
 \begin{align*}
   \sh \uplus \sha ~\DEFEQ~ & \exists (\BV{\sh} \uplus \BV{\sha}) ~.~ \SPATIAL{\sh} \SEP \SPATIAL{\sha} \SEP \Gamma : \PURE{\sh} \cup \PURE{\sha}, \text{where} \\
   \Gamma ~\DEFEQ~ & \PS_1(\FV{1}{\sh} \uplus\FV{1}{\sha}) \SEP \ldots \SEP \PS_m(\FV{m}{\sh} \uplus\FV{m}{\sha})
 \end{align*}
 with free variables $\FV{0}{\sh} \uplus \FV{0}{\sha}$.%
 \footnote{Note that technically, the predicate symbols $\PS_i$ may have different arities for each of the aforementioned symbolic heaps due to different numbers of parameters. This can easily be fixed by renaming predicate symbols. However, we chose to ignore the arity here to improve readability.}
 Otherwise, $\sh \uplus \sha$ is undefined.
\end{definition}
Since inequalities between allocated variables of a symbolic heap $\sh$ are redundant, let $\KERNEL(\sh)$ denote the corresponding symbolic heap in which such inequalities are removed.
Formally, $\KERNEL(\sh)$ is given by:
\begin{align*}
 \SYMBOLICHEAP{\sh} \setminus \{ a \neq b ~|~ a,b \in \ALLOC{\sh} \cup \{\NIL\} \}.
\end{align*}
Note that $\KERNEL(\sh)$ is computable in polynomial time if $\sh$ is reduced or each predicate is attached with variable tracking information (see Definition~\ref{def:zoo:track-automaton}).

We are now in a position to deal with the separating conjunction of two symbolic heaps.
Let $\sha_1,\sha_2 \in \SL{}{\CENTAIL{\alpha}}$ with $\VAR(\sha_1) = \VAR(\sha_2) = \FV{0}{\sha_1} = \FV{0}{\sha_2}$.
Moreover, let $\HASH{\sha_1}$, $\HASH{\sha_2}$ be heap automata over $\SHCENTAIL{\alpha}$ accepting $\USET{\sha_1}{\SRD}{\CENTAIL{\alpha}}$ and $\USET{\sha_2}{\SRD}{\CENTAIL{\alpha}}$, respectively.
We construct a heap automaton $\HASH{\sha_1 \SEP \sha_2} = (Q,\SHCENTAIL{\alpha},\Delta,F)$ accepting $\USET{\sha_1 \SEP \sha_2}{\SRD}{\CENTAIL{\alpha}}$ as follows:
\begin{align*}
   Q ~\DEFEQ~ & 2^{\T{u}} \times Q_{\HASH{\sha_1}} \times 2^{\T{u}} \times Q_{\HASH{\sha_2}} \times 2^{\T{u}}
   \\
   F ~\DEFEQ~ & 2^{\T{u}} \times F_{\HASH{\sha_1}} \times 2^{\T{u}} \times F_{\HASH{\sha_2}} \times 2^{\T{u}},~
\end{align*}
where $\T{u}$ denotes a tuple of $\alpha$ variables.
The main idea of choosing this state space is that each state $(U,p,V,q,H)$ stores the current states $p,q$ of heap automata $\HASH{\sha_1}$ and $\HASH{\sha_2}$ running on two separated symbolic heaps that are obtained from splitting the originally given symbolic heap into two parts.
These splitted symbolic heaps require some synchronization information encoded in sets $U,V,K$.
More precisely, $U,V$ store which free variables of the original symbolic heap belong to the symbolic heap fed into $\HASH{\sha_1}$ and $\HASH{\sha_2}$, respectively.
Finally, $H$ stores which variables are \emph{allocated} in the symbolic heap fed into $\HASH{\sha_1}$.
Correspondingly, all other allocated variables must be allocated in the symbolic heap fed into $\HASH{\sha_2}$.
These sets are checked in the transition relation to ensure compositionality.
Formally, for $\sh \in \SL{}{\CENTAIL{\alpha}}$ with $\NOCALLS{\sh} = m$, the transition relation $\Delta$ is given by:
\begin{align*}
                & (U_1,p_1,V_1,q_1,H_1) \ldots (U_m,p_m,V_m,q_m,H_m) \\
                & \qquad \xrightarrow{\sh}_{\HASH{\sha_1 \SEP \sha_2}} (U_0,p_0,V_0,q_0,H_0) \\
   ~\text{iff}~ & \exists \sha_p,\sha_q ~.~ \KERNEL(\sh) \equiv \sha_p \uplus \sha_q \\
                & \quad \text{and}~ \MOVE{\HASH{\sha_1}}{p_0}{\sha_p}{p_1 \ldots p_m} ~\text{and}~ \MOVE{\HASH{\sha_2}}{q_0}{\sha_q}{q_1 \ldots q_m} \\
                & \quad \text{and}~ \forall 0 \leq i \leq m \,.\, \forall 1 \leq j \leq \alpha ~.~ \\
                & \qquad \quad~~ \left( \PROJ{\T{u}}{j} \in U_i ~\text{iff}~ \PROJ{\FV{i}{\sh}}{j} \in \FV{i}{\sha_p}  \right) \\
                & \qquad \text{and}~ \left( \PROJ{\T{u}}{j} \in V_i ~\text{iff}~ \PROJ{\FV{i}{\sh}}{j} \in \FV{i}{\sha_q}  \right) \\
                & \qquad \text{and}~ \left( \PROJ{\T{u}}{j} \in H_i ~\text{iff}~ \PROJ{\FV{i}{\sh}}{j} \in \ALLOC{\sha_p}  \right).
\end{align*}
\begin{lemma}
 $\HASH{\sha_1 \SEP \sha_2}$ satisfies the compositionality property.
\end{lemma}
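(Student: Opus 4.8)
The plan is to verify the compositionality property of $\HASH{\sha_1 \SEP \sha_2}$ directly from its construction, reducing it to the compositionality properties of $\HASH{\sha_1}$ and $\HASH{\sha_2}$, which hold by the induction hypothesis. So let $\sh \in \SL{}{\CENTAIL{\alpha}}$ with $\NOCALLS{\sh} = m$, let $\rsh_1,\ldots,\rsh_m \in \RSL{}{\CENTAIL{\alpha}}$, and set $\rsh \DEFEQ \sh[\PS_1^{\sh}/\rsh_1,\ldots,\PS_m^{\sh}/\rsh_m]$. I need to show that $\OMEGA{\HASH{\sha_1 \SEP \sha_2}}{(U_0,p_0,V_0,q_0,H_0)}{\rsh}$ holds if and only if there are states $(U_i,p_i,V_i,q_i,H_i)$, $1 \le i \le m$, with $\MOVE{\HASH{\sha_1 \SEP \sha_2}}{(U_0,p_0,V_0,q_0,H_0)}{\sh}{(U_1,p_1,V_1,q_1,H_1)\ldots(U_m,p_m,V_m,q_m,H_m)}$ and $\OMEGA{\HASH{\sha_1 \SEP \sha_2}}{(U_i,p_i,V_i,q_i,H_i)}{\rsh_i}$ for each $i$.

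The key observation driving both directions is that splitting and substitution commute, in the sense that every decomposition $\KERNEL(\rsh) \equiv \shb_p \uplus \shb_q$ of the full reduced heap arises from a decomposition $\KERNEL(\sh) \equiv \sha_p \uplus \sha_q$ at the root together with compatible decompositions $\KERNEL(\rsh_i) \equiv \rshb_{p,i} \uplus \rshb_{q,i}$ of the children, and vice versa. The bookkeeping sets do exactly this matching: $U_i$ records which parameters of the $i$-th predicate call go to the $\sha_p$-side, $V_i$ which go to the $\sha_q$-side, and $H_i$ which are allocated on the $\sha_p$-side; these are precisely the data needed so that the local split at each node glues consistently with the neighbouring splits. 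First I would establish, in the forward direction, that from the global split of $\KERNEL(\rsh)$ one extracts the root split by restricting to the spatial and call parts coming from $\sh$, reads off $U_0,V_0,H_0$ from which parameters and allocated variables landed on which side, and likewise for each child; then invoke compositionality of $\HASH{\sha_1}$ (applied to $\sha_p$ and the children $\rshb_{p,i}$) and of $\HASH{\sha_2}$ to peel off the $p_i$ and $q_i$. For the converse, given the child runs and a root transition I assemble $\shb_p$ from $\sha_p$ with each $\PS_i$ replaced by $\rshb_{p,i}$ and similarly $\shb_q$, check $\KERNEL(\rsh) \equiv \shb_p \uplus \shb_q$ using the agreement conditions on $U_i,V_i,H_i$, and apply compositionality of $\HASH{\sha_1}, \HASH{\sha_2}$ in the other direction to obtain the runs on $\shb_p,\shb_q$, hence on $\rsh$.

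The main obstacle I anticipate is the careful handling of $\KERNEL(\cdot)$ and the $\uplus$ operation under substitution: one must check that removing inequalities between allocated variables is compatible with splitting (an inequality $a \neq b$ with $a,b$ allocated in $\rsh$ may have its two endpoints allocated on opposite sides of the split, so it is genuinely redundant in neither $\shb_p$ nor $\shb_q$ alone, yet removing it globally is sound because the separation $\shb_p \uplus \shb_q$ already forces $a \neq b$), and that the free-variable tagging via $[x]_{\sh,\sha}$ behaves correctly when variables are existentially quantified in children versus free at the root. I would isolate these points as small auxiliary claims — essentially that $\KERNEL(\sh[\ldots]) \equiv \KERNEL(\sh)[\ldots]$ modulo the redundant inequalities, and that $\uplus$ distributes over predicate replacement — and then the main equivalence follows by a routine, if lengthy, chain of rewriting steps exactly as in the compositionality proofs for $\HAEST$, $\HAGARBAGE$, $\HACYCLE$. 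Full details would be deferred to the appendix. \qed
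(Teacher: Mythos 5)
Your proposal is correct and follows essentially the same route as the paper: both reduce the claim to the compositionality of $\HASH{\sha_1}$ and $\HASH{\sha_2}$ via the key fact that the split $\uplus$ distributes over predicate replacement (the paper's step ``$\rsh_p \uplus \rsh_q \equiv \rsh$''), with the sets $U_i,V_i,H_i$ enforcing that the root split and the child splits glue consistently. Your explicit flagging of the $\KERNEL(\cdot)$ subtlety (inequalities whose endpoints are allocated on opposite sides of the split) is a point the paper handles only informally, by declaring that it will not distinguish $\KERNEL(\rsh)$ from $\rsh$, so your treatment is if anything slightly more careful.
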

\begin{proof}
Let $\sh \in \SL{}{\CENTAIL{\alpha}}$ with $\NOCALLS{\sh} = m$.
Moreover, for each $1 \leq i \leq m$, let $\rsh_i \in \RSL{}{\CENTAIL{\alpha}}$ and $\rsh \DEFEQ \sh[\PS_1^{\sh} / \rsh_1, \ldots, \PS_m^{\sh} / \rsh_m]$.
To improve readability, we write $U_i,p_i,V_i,q_i,H_i$ to denote the respective component of an automaton state
$S_i = (U_i,p_i,V_i,q_i,H_i) \in Q$.
Let $S_0 \in Q$. Then:
\begin{align*}
                    & \exists S_1,\ldots,S_m \in Q ~.~ \\
                    & \quad \MOVE{\HASH{\sha_1 \SEP \sha_2}}{S_0}{\sh}{S_1 \ldots S_m} ~ \text{and}~ \forall 1 \leq k \leq m ~.~ \OMEGA{\HASH{\sha_1 \SEP \sha_2}}{S_k}{\rsh_i} \\
  ~\Leftrightarrow~ & \left[ \text{Definition of}~\Delta \right] \\
                    & \exists S_1,\ldots,S_m \in Q ~.~ \\
                    & \quad \exists \sha_p,\sha_q ~.~ \KERNEL(\sh) \equiv \sha_p \uplus \sha_q \\
                    & \qquad \text{and}~ \MOVE{\HASH{\sha_1}}{p_0}{\sha_p}{p_1 \ldots p_m} ~\text{and}~ \MOVE{\HASH{\sha_2}}{q_0}{\sha_q}{q_1 \ldots q_m} \\
		    & \quad \text{and}~ \forall 0 \leq i \leq m \,.\, \forall 1 \leq j \leq \alpha ~.~ \\
		    & \qquad \quad~~ \left( \PROJ{\T{u}}{j} \in U_i ~\text{iff}~ \PROJ{\FV{i}{\sh}}{j} \in \FV{i}{\sha_p}  \right) \\
		    & \qquad \text{and}~ \left( \PROJ{\T{u}}{j} \in V_i ~\text{iff}~ \PROJ{\FV{i}{\sh}}{j} \in \FV{i}{\sha_q}  \right) \\
		    & \qquad \text{and}~ \left( \PROJ{\T{u}}{j} \in H_i ~\text{iff}~ \PROJ{\FV{i}{\sh}}{j} \in \ALLOC{\sha_p}  \right) \\
                    & \quad \text{and}~ \forall 1 \leq k \leq m \,.\, \exists \rsh_{k,p}, \rsh_{k,q} ~.~ \\
                    & \qquad \KERNEL(\rsh_k) \equiv \rsh_{k,p} \uplus \rsh_{k,q} \\
                    & \qquad \text{and}~ \OMEGA{\HASH{\sha_1}}{p_0}{\rsh_{k,p}} ~\text{and}~ \OMEGA{\HASH{\sha_2}}{q_0}{\rsh_{k,q}} \\
		    & \qquad \text{and}~ \forall 1 \leq j \leq \alpha ~.~ \\
		    & \qquad \quad~~ \left( \PROJ{\T{u}}{j} \in U_k ~\text{iff}~ \PROJ{\FV{0}{\rsh_k}}{j} \in \FV{0}{\rsh_{k,p}}  \right) \\
		    & \qquad \text{and}~ \left( \PROJ{\T{u}}{j} \in V_k ~\text{iff}~ \PROJ{\FV{0}{\rsh_k}}{j} \in \FV{0}{\rsh_{k,q}}  \right) \\
		    & \qquad \text{and}~ \left( \PROJ{\T{u}}{j} \in H_k ~\text{iff}~ \PROJ{\FV{0}{\rsh_k}}{j} \in \ALLOC{\rsh_{k,p}}  \right)
\end{align*}
To avoid introducing further notation, 
we do not distinguish between $\KERNEL(\rsh)$ and $\rsh$ in the following.\footnote{recall that $\KERNEL$ just removes inequalities that are contained implicitly in a symbolic heap due to points-to assertions}
Now, for each $1 \leq k \leq m$, the conditions on $U_k$ and $V_k$ guarantee that the free variables of each $\rsh_{k,p}$ and $\rsh_{k,q}$, $1 \leq k \leq m$ match with the parameters of each predicate call $\CALLN{k}{\sha_p}$ and $\CALLN{k}{\sha_q}$, respectively.
Hence, the symbolic heaps
\begin{align*}
  \rsh_p ~\DEFEQ~ \sha_{p}[\PS_1 / \rsh_{1,p}, \ldots, \PS_m / \rsh_{m,p}], \\
  \rsh_q ~\DEFEQ~ \sha_{q}[\PS_1 / \rsh_{1,q}, \ldots, \PS_m / \rsh_{m,q}]
\end{align*}
exist.
Assume, for the moment, that $\rsh_p \uplus \rsh_q \equiv \rsh$ holds under the conditions on sets $U_k,V_k,H_k$.
Then, the calculations from above continue as follows:
\begin{align*}
  ~\Leftrightarrow~ & \left[ \text{Compositionality of}~\HASH{\sha_1},\HASH{\sha_2} \right] \\
                    & \exists S_1,\ldots,S_m \in Q ~.~ \\
                    & \quad \exists \sha_p,\sha_q ~.~ \KERNEL(\sh) \equiv \sha_p \uplus \sha_q \\
                    & \qquad \text{and}~ \OMEGA{\HASH{\sha_1}}{p_0}{\rsh_p} ~\text{and}~ \OMEGA{\HASH{\sha_2}}{q_0}{\rsh_q} \\
		    & \quad \text{and}~ \forall 0 \leq i \leq m \,.\, \forall 1 \leq j \leq \alpha ~.~ \\
		    & \qquad \quad~~ \left( \PROJ{\T{u}}{j} \in U_i ~\text{iff}~ \PROJ{\FV{i}{\sh}}{j} \in \FV{i}{\sha_p}  \right) \\
		    & \qquad \text{and}~ \left( \PROJ{\T{u}}{j} \in V_i ~\text{iff}~ \PROJ{\FV{i}{\sh}}{j} \in \FV{i}{\sha_q}  \right) \\
		    & \qquad \text{and}~ \left( \PROJ{\T{u}}{j} \in H_i ~\text{iff}~ \PROJ{\FV{i}{\sh}}{j} \in \ALLOC{\sha_p}  \right) \\
                    & \quad \text{and}~ \forall 1 \leq k \leq m \,.\, \exists \rsh_{k,p}, \rsh_{k,q} ~.~ \\
                    & \qquad \KERNEL(\rsh_k) \equiv \rsh_{k,p} \uplus \rsh_{k,q} \\
		    & \qquad \text{and}~ \forall 1 \leq j \leq \alpha ~.~ \\
		    & \qquad \quad~~ \left( \PROJ{\T{u}}{j} \in U_k ~\text{iff}~ \PROJ{\FV{0}{\rsh_k}}{j} \in \FV{0}{\rsh_{k,p}}  \right) \\
		    & \qquad \text{and}~ \left( \PROJ{\T{u}}{j} \in V_k ~\text{iff}~ \PROJ{\FV{0}{\rsh_k}}{j} \in \FV{0}{\rsh_{k,q}}  \right) \\
		    & \qquad \text{and}~ \left( \PROJ{\T{u}}{j} \in H_k ~\text{iff}~ \PROJ{\FV{0}{\rsh_k}}{j} \in \ALLOC{\rsh_{k,p}}  \right) \\
  ~\Leftrightarrow~ & \left[ \text{Assumption:}~\rsh_p \uplus \rsh_q \equiv \rsh \right] \\
                    & \exists \rsh_p,\rsh_q ~.~ \KERNEL(\sh) \equiv \rsh_p \uplus \rsh_q \\
                    & \qquad \text{and}~ \OMEGA{\HASH{\sha_1}}{p_0}{\rsh_p} ~\text{and}~ \OMEGA{\HASH{\sha_2}}{q_0}{\rsh_q} \\
		    & \quad \text{and}~ \forall 1 \leq j \leq \alpha ~.~ \\
		    & \qquad \quad~~ \left( \PROJ{\T{u}}{j} \in U_0 ~\text{iff}~ \PROJ{\FV{0}{\sh}}{j} \in \FV{0}{\rsh_p}  \right) \\
		    & \qquad \text{and}~ \left( \PROJ{\T{u}}{j} \in V_0 ~\text{iff}~ \PROJ{\FV{0}{\sh}}{j} \in \FV{0}{\rsh_q}  \right) \\
		    & \qquad \text{and}~ \left( \PROJ{\T{u}}{j} \in H_0 ~\text{iff}~ \PROJ{\FV{0}{\sh}}{j} \in \ALLOC{\rsh_p}  \right) \\
  ~\Leftrightarrow~ & \left[ \text{Definition of}~\Delta \right] \\
                    & \OMEGA{\HASH{\sha_1 \SEP \sha_2}}{S_0}{\rsh}.
\end{align*}
It remains to prove the assumption $\rsh_p \uplus \rsh_q \equiv \rsh$.
Since both symbolic heaps are subformulas of $\rsh$ and, by the condition on sets $H_k$, no variable is allocated twice, we have:
\begin{align*}
           & \rsh_p \uplus \rsh_q \\
       ~=~ & \left[ ~\text{Definition}~\rsh_p,\rsh_q \right] \\
           & \sha_{p}[\PS_1 / \rsh_{1,p}, \ldots, \PS_m / \rsh_{m,p}] \uplus \sha_{q}[\PS_1 / \rsh_{1,q}, \ldots, \PS_m / \rsh_{m,q}] \\
      ~=~  & \left[ \text{Definition of predicate replacement} \right] \\
           & \exists \BV{\sha_p} \cdot \BV{\rsh_{1,p}} \cdot \ldots \cdot \BV{\rsh_{m,p}} ~.~ \\
           & \qquad \SPATIAL{\sha_p} \SEP \SPATIAL{\rsh_{1,p}}[\FV{0}{\rsh_{1,p}} / \FV{1}{\sha_p}] \SEP \SPATIAL{\rsh_{m,p}}[\FV{0}{\rsh_{m,p}} / \FV{m}{\sha_p}] \\
           & \qquad : \PURE{\sha_p} \cup \PURE{\rsh_{1,p}}[\FV{0}{\rsh_{1,p}} / \FV{1}{\sha_p}] \cup {\rsh_{m,p}}[\FV{0}{\rsh_{m,p}} / \FV{m}{\sha_p}] \\
           & \uplus \\
           & \exists \BV{\sha_q} \cdot \BV{\rsh_{1,q}} \cdot \ldots \cdot \BV{\rsh_{m,q}} ~.~ \\
           & \qquad \SPATIAL{\sha_q} \SEP \SPATIAL{\rsh_{1,q}}[\FV{0}{\rsh_{1,q}} / \FV{1}{\sha_q}] \SEP \SPATIAL{\rsh_{m,q}}[\FV{0}{\rsh_{m,q}} / \FV{m}{\sha_q}] \\
           & \qquad : \PURE{\sha_q} \cup \PURE{\rsh_{1,q}}[\FV{0}{\rsh_{1,q}} / \FV{1}{\sha_q}] \cup {\rsh_{m,q}}[\FV{0}{\rsh_{m,q}} / \FV{m}{\sha_q}] \\
       ~=~ & \left[ \text{Definition}~\ref{def:entailment:uplus} \right] \\
           & \exists \left( \BV{\sha_p} \cdot \BV{\rsh_{1,p}} \cdot \ldots \cdot \BV{\rsh_{m,p}} \uplus \BV{\sha_q} \cdot \BV{\rsh_{1,q}} \cdot \ldots \cdot \BV{\rsh_{m,q}} \right) ~.~ \\
           & \qquad \SPATIAL{\sha_p} \SEP \SPATIAL{\rsh_{1,p}}[\FV{0}{\rsh_{1,p}} / \FV{1}{\sha_p}] \SEP \SPATIAL{\rsh_{m,p}}[\FV{0}{\rsh_{m,p}} / \FV{m}{\sha_p}] \\
           & \qquad \SEP \SPATIAL{\sha_q} \SEP \SPATIAL{\rsh_{1,q}}[\FV{0}{\rsh_{1,q}} / \FV{1}{\sha_q}] \SEP \SPATIAL{\rsh_{m,q}}[\FV{0}{\rsh_{m,q}} / \FV{m}{\sha_q}] \\
           & \qquad : \PURE{\sha_p} \cup \PURE{\rsh_{1,p}}[\FV{0}{\rsh_{1,p}} / \FV{1}{\sha_p}] \cup {\rsh_{m,p}}[\FV{0}{\rsh_{m,p}} / \FV{m}{\sha_p}] \\
           & \qquad \cup \PURE{\sha_q} \cup \PURE{\rsh_{1,q}}[\FV{0}{\rsh_{1,q}} / \FV{1}{\sha_q}] \cup {\rsh_{m,q}}[\FV{0}{\rsh_{m,q}} / \FV{m}{\sha_q}]
\end{align*}
Now, by definition, we know that $\BV{\rsh_{k,p} \uplus \rsh_{k,q}} = \BV{\rsh_{k}}$ and $\BV{\sha_p \uplus \sha_q} = \BV{\sh}$, i.e., the existentially quantified variables reduce to $\BV{\sh} \cdot \BV{\rsh_1} \cdot \ldots \cdot \BV{\rsh_m}$.
Analogously, for each $1 \leq k \leq m$, we have
\begin{align*}
 & \SPATIAL{\rsh_{k}}[\FV{0}{\rsh_{k}} / \FV{k}{\sh}] \equiv \SPATIAL{\rsh_{k,p}}[\FV{0}{\rsh_{k,p}} / \FV{k}{\sha_p}] \SEP \SPATIAL{\rsh_{k,q}}[\FV{0}{\rsh_{k,q}} / \FV{k}{\sha_q}], ~\text{and} \\
 & \PURE{\rsh_k}[\FV{0}{\rsh_{k}} / \FV{k}{\sh}] = \PURE{\rsh_{k,q}}[\FV{0}{\rsh_{k,q}} / \FV{k}{\sha_q}].
\end{align*}
%
Thus
\begin{align*}
  ~\equiv~ & \left[ \text{comment from above} \right] \\
           & \exists \BV{\sh} \cdot \BV{\rsh_1} \cdot \ldots \cdot \BV{\rsh_m} ~.~ \\
           & \qquad \SPATIAL{\sha_p} \SEP \SPATIAL{\sha_q} \\
           & \qquad \SEP \SPATIAL{\rsh_{1}}[\FV{0}{\rsh_{1}} / \FV{1}{\sh}]
                    \SEP \ldots
                    \SEP \SPATIAL{\rsh_{m}}[\FV{0}{\rsh_{m}} / \FV{m}{\sh}] \\
           & \qquad : \PURE{\sha_p} \cup \PURE{\sha_q} \\
           & \qquad \cup \PURE{\rsh_1}[\FV{0}{\rsh_{1}} / \FV{1}{\sh}]
                    \cup \ldots
                    \cup \PURE{\rsh_m}[\FV{0}{\rsh_{m}} / \FV{m}{\sh}] \\
  ~\equiv~ & \left[ \SPATIAL{\sh} \equiv \SPATIAL{\sha_p} \SEP \SPATIAL{\sha_q}, \PURE{\sh} = \PURE{\sha_p} \cup \PURE{\sha_q} \right] \\
           & \exists \BV{\sh} \cdot \BV{\rsh_1} \cdot \ldots \cdot \BV{\rsh_m} ~.~ \\
           & \qquad \SPATIAL{\sh} \SEP \SPATIAL{\rsh_{1}}[\FV{0}{\rsh_{1}} / \FV{1}{\sh}]
                    \SEP \ldots
                    \SEP \SPATIAL{\rsh_{m}}[\FV{0}{\rsh_{m}} / \FV{m}{\sh}] \\
           & \qquad : \PURE{\sh} \cup \PURE{\rsh_1}[\FV{0}{\rsh_{1}} / \FV{1}{\sh}]
                    \cup \ldots
                    \cup \PURE{\rsh_m}[\FV{0}{\rsh_{m}} / \FV{m}{\sh}] \\
  ~\equiv~ & \left[ \text{Definition of predicate replacement} \right] \\
           & \sh\left[\CALLN{1}{\sh} / \rsh_1, \ldots \CALLN{m}{\sh} / \rsh_m\right] \\
  ~\equiv~ & \left[ \text{Definition of}~\rsh \right] \\
           & \rsh.
\end{align*}
Hence, $\HASH{\sha_1 \SEP \sha_2}$ satisfies the compositionality property.
\qed
\end{proof}
\begin{lemma}
 $L(\HASH{\sha_1 \SEP \sha_2}) = \USET{\sha_1 \SEP \sha_2}{\SRD}{\CENTAIL{\alpha}}$.
\end{lemma}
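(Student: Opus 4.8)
The plan is to verify the two directions of $L(\HASH{\sha_1 \SEP \sha_2}) = \USET{\sha_1 \SEP \sha_2}{\SRD}{\CENTAIL{\alpha}}$ by unfolding the definitions and using the key structural fact that a reduced symbolic heap entails $\sha_1 \SEP \sha_2$ exactly when its heap can be split into two disjoint pieces, one entailing $\sha_1$ and the other $\sha_2$. Concretely, for a reduced $\rsh \in \RSL{}{\CENTAIL{\alpha}}$, I would first argue the \emph{splitting lemma}: $\rsh \ENTAIL{\emptyset} \sha_1 \SEP \sha_2$ holds iff there exist reduced symbolic heaps $\rsh_p, \rsh_q$ with $\KERNEL(\rsh) \equiv \rsh_p \uplus \rsh_q$, $\rsh_p \ENTAIL{\emptyset} \sha_1$, and $\rsh_q \ENTAIL{\emptyset} \sha_2$. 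The forward direction uses that since $\sha_1 \SEP \sha_2$ has a subheap decomposition in every model and $\rsh$ is determined (well-determined after removing unsatisfiable unfoldings), the decomposition of the unique model of $\rsh$ induces a syntactic split of the points-to assertions of $\rsh$ into $\rsh_p$ and $\rsh_q$; one then reconstructs the pure parts of $\rsh_p, \rsh_q$ from the definite relationships so that $\rsh_p$, $\rsh_q$ are themselves in $\RSL{}{\CENTAIL{\alpha}}$ over the same free variables. The $\KERNEL$ operation matters here: inequalities between allocated variables are implicit, so $\KERNEL(\rsh) \equiv \rsh_p \uplus \rsh_q$ rather than literal equality. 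The reverse direction of the splitting lemma is routine from the semantics of $\SEP$.

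Given the splitting lemma, the main calculation is a chain of equivalences. Starting from $\rsh \in L(\HASH{\sha_1 \SEP \sha_2})$, by definition of the language this is $\exists q \in F ~.~ \OMEGA{\HASH{\sha_1 \SEP \sha_2}}{q}{\rsh}$; by definition of $F$ this means $\OMEGA{\HASH{\sha_1 \SEP \sha_2}}{(U_0,p_0,V_0,q_0,H_0)}{\rsh}$ with $p_0 \in F_{\HASH{\sha_1}}$ and $q_0 \in F_{\HASH{\sha_2}}$. By the definition of $\Delta$ (with $\NOCALLS{\rsh} = 0$, so the input-state tuple is empty and $\REDUCE{\cdot}$ collapses), this unfolds to: there exist $\sha_p, \sha_q$ with $\KERNEL(\rsh) \equiv \sha_p \uplus \sha_q$, $\OMEGA{\HASH{\sha_1}}{p_0}{\sha_p}$, $\OMEGA{\HASH{\sha_2}}{q_0}{\sha_q}$, plus the consistency conditions on $U_0,V_0,H_0$ which are trivially satisfiable at a reduced heap since we may simply take the sets that record which free variables/allocated variables lie in $\sha_p$. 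Since $\HASH{\sha_1}$ accepts $\USET{\sha_1}{\SRD}{\CENTAIL{\alpha}}$ and $\HASH{\sha_2}$ accepts $\USET{\sha_2}{\SRD}{\CENTAIL{\alpha}}$, this is equivalent to $\sha_p \in \USET{\sha_1}{\SRD}{\CENTAIL{\alpha}}$ and $\sha_q \in \USET{\sha_2}{\SRD}{\CENTAIL{\alpha}}$, i.e.\ $\sha_p \ENTAIL{\emptyset} \rho_1$ and $\sha_q \ENTAIL{\emptyset} \rho_2$ for some unfoldings $\rho_1 \in \CALLSEM{\sha_1}{\SRD}$, $\rho_2 \in \CALLSEM{\sha_2}{\SRD}$. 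Since $\rho_1 \SEP \rho_2 \in \CALLSEM{\sha_1 \SEP \sha_2}{\SRD}$ (unfoldings distribute over $\SEP$, as noted in Appendix~\ref{app:symbolic-heaps:semantics}) and $\KERNEL(\rsh) \equiv \sha_p \uplus \sha_q$, the splitting lemma gives $\rsh \ENTAIL{\emptyset} \rho_1 \SEP \rho_2$, hence $\rsh \in \USET{\sha_1 \SEP \sha_2}{\SRD}{\CENTAIL{\alpha}}$. Each step is reversible, so the two sets coincide.

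I expect the main obstacle to be the splitting lemma — specifically its forward direction, where one must show that an entailment $\rsh \ENTAIL{\emptyset} \sha_1 \SEP \sha_2$ between reduced (well-)determined symbolic heaps can be realized by a \emph{syntactic} partition of $\rsh$'s spatial part together with an appropriate redistribution of pure formulas and free variables, yielding genuine members of $\RSL{}{\CENTAIL{\alpha}}$. This is where determinedness is essential: without it, different models of $\rsh$ could force incompatible splits, and no single syntactic decomposition would work. I would handle this by taking the unique tight model $(\stack,\heap)$ of $\STRIP{\rsh}$, using the decomposition $\heap = \heap_1 \HEAPUNION \heap_2$ witnessing the $\SEP$ on the $\sha_1,\sha_2$ side, pulling back $\heap_1,\heap_2$ to the points-to assertions of $\rsh$ (each location in $\DOM(\heap)$ corresponds to a unique points-to assertion since $\rsh$ is reduced), and then closing each half under the inherited definite pure constraints so the halves are well-determined and over the original free-variable tuple; a final appeal to the coincidence lemma (Lemma~\ref{thm:symbolic-heaps:fv-coincidence}) and to Definition~\ref{def:entailment:uplus} confirms $\KERNEL(\rsh) \equiv \rsh_p \uplus \rsh_q$. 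The bookkeeping around $\KERNEL$ (dropping redundant inequalities so that $\uplus$ does not double-count them) and around the auxiliary synchronization sets $U,V,H$ is tedious but mechanical once this structural picture is in place.
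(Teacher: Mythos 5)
Your proposal is correct and follows essentially the same route as the paper: a chain of equivalences unfolding the language and transition-relation definitions at a reduced heap (where the synchronization sets $U,V,H$ are trivially dischargeable), reduced to the key splitting equivalence $\rsh \ENTAIL{\SRD} \sha_1 \SEP \sha_2$ iff $\KERNEL(\rsh) \equiv \sha_p \uplus \sha_q$ with $\sha_p \ENTAIL{\SRD} \sha_1$ and $\sha_q \ENTAIL{\SRD} \sha_2$, whose hard direction is established exactly as you describe — using well-determinedness to extract a syntactic partition of the points-to assertions from the heap decomposition of the unique tight model and redistributing the pure constraints. Your identification of where determinedness is essential and of the role of $\KERNEL$ matches the paper's argument.
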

\begin{proof}
 Let $\rsh \in \RSL{}{\CENTAIL{\alpha}}$. Then
 \begin{align*}
                      & \rsh \in L(\HASH{\sha_1 \SEP \sha_2}) \\
    ~\Leftrightarrow~ & \left[ \text{Definition of}~L(\HASH{\sha_1 \SEP \sha_2}) \right] \\
                      & \exists S_0 = (U_0,p_0,V_0,q_0,H_0) \in F ~.~ \OMEGA{\HASH{\sha_1 \SEP \sha_2}}{S_0}{\rsh} \\
    ~\Leftrightarrow~ & \left[ \text{Definition of}~\Delta \right] \\
                      & \exists S_0 = (U_0,p_0,V_0,q_0,H_0) \in F ~.~ \\
                      & \quad \exists \sha_p,\sha_q ~.~ \KERNEL(\rsh) \equiv \sha_p \uplus \sha_q \\
                      & \qquad \text{and}~ \OMEGA{\HASH{\sha_1}}{p_0}{\sha_p} ~\text{and}~ \OMEGA{\HASH{\sha_2}}{q_0}{\sha_q} \\
		      & \quad \text{and}~ \forall 1 \leq j \leq \alpha ~.~ \\
		      & \qquad \quad~~ \left( \PROJ{\T{u}}{j} \in U_0 ~\text{iff}~ \PROJ{\FV{0}{\rsh}}{j} \in \FV{0}{\sha_p}  \right) \\
		      & \qquad \text{and}~ \left( \PROJ{\T{u}}{j} \in V_0 ~\text{iff}~ \PROJ{\FV{0}{\rsh}}{j} \in \FV{0}{\sha_q}  \right) \\
		      & \qquad \text{and}~ \left( \PROJ{\T{u}}{j} \in H_0 ~\text{iff}~ \PROJ{\FV{0}{\rsh}}{j} \in \ALLOC{\sha_p}  \right) \\
    ~\Leftrightarrow~ & \left[ p_0 \in F_{\HASH{\sha_1}}, q_0 \in F_{\HASH{\sha_2}}, \text{Definition of}~L(\HASH{\sha_1}),L(\HASH{\sha_2}) \right] \\
                      & \exists \sha_p,\sha_q ~.~ \KERNEL(\rsh) \equiv \sha_p \uplus \sha_q \\
                      & \quad \text{and}~ \sha_p \ENTAIL{\SRD} \sha_1 ~\text{and}~ \sha_q \ENTAIL{\SRD} \sha_2 \\
    ~\Leftrightarrow~ & \left[ \KERNEL(\rsh) \ENTAIL{\emptyset} \rsh ~\text{and}~ \rsh \ENTAIL{\emptyset} \KERNEL(\rsh) \right] \\
                      & \exists \sha_p,\sha_q ~.~ \sha_p \uplus \sha_q \ENTAIL{\emptyset} \rsh ~\text{and}~ \rsh \ENTAIL{\emptyset} \sha_p \uplus \sha_q \\
                      & \quad \text{and}~ \sha_p \ENTAIL{\SRD} \sha_1 ~\text{and}~ \sha_q \ENTAIL{\SRD} \sha_2.
 \end{align*}
 To complete the proof we show
 \begin{align*}
                      &\exists \sha_p,\sha_q ~.~ \sha_p \uplus \sha_q \ENTAIL{\emptyset} \rsh ~\text{and}~ \rsh \ENTAIL{\emptyset} \sha_p \uplus \sha_q \\
                      & \quad \text{and}~ \sha_p \ENTAIL{\SRD} \sha_1 ~\text{and}~ \sha_q \ENTAIL{\SRD} \sha_2 \\
         ~\text{iff}~ & \rsh \ENTAIL{\SRD} \sha_1 \SEP \sha_2.
 \end{align*}
 This, together with the previously shown equivalences, immediately yields
 \begin{align*}
   \rsh \in L(\HASH{\sha_1 \SEP \sha_2}) ~\text{iff}~ \rsh \ENTAIL{\SRD} \sha_1 \SEP \sha_2 ~\text{iff}~ \rsh \in \HEAPMODELS{\sha_1 \SEP \sha_2}{\SRD}.
 \end{align*}

 For the first direction, we have
 \begin{align*}
                     & \stack,\heap \SAT{\emptyset} \rsh \\
       ~\Rightarrow~ & \left[ \rsh \ENTAIL{\emptyset} \sha_p \uplus \sha_q \right] \\
                     & \stack,\heap \SAT{\emptyset} \sha_p \uplus \sha_q \\
       ~\Rightarrow~ & \left[ \text{Definition}~\ref{def:entailment:uplus},~\BV{\rsh} = \BV{\sha_p} \uplus \BV{\sha_q} \right] \\
                     & \stack,\heap \SAT{\emptyset} \exists \BV{\rsh} ~.~ \SPATIAL{\sha_p} \SEP \SPATIAL{\sha_q} ~:~ \PURE{\sha_p} \cup \PURE{\sha_q} \\
                     & \quad \text{and}~ \stack,\heap \SAT{\SRD} \sha_1 \SEP \sha_2 \\
       ~\Rightarrow~ & \left[ \text{SL semantics}~ \right] \\
                     & \exists \T{v} \in \VAL^{\SIZE{\BV{\rsh}}} \,.\, \exists \heap_1,\heap_2 ~.~ \heap = \heap_1 \uplus \heap_2 \\
                     & \quad \text{and}~ \stack[\BV{\rsh} \mapsto \T{v}],\heap_1 \SAT{\emptyset} \SPATIAL{\sha_p}
                            ~\text{and}~ \stack[\BV{\rsh} \mapsto \T{v}],\heap_2 \SAT{\emptyset} \SPATIAL{\sha_q} \\
                     & \quad \text{and}~ \stack[\BV{\rsh} \mapsto \T{v}],\heap \SAT{\emptyset} \PURE{\sha_p} \cup \PURE{\sha_q} \\
       ~\Rightarrow~ & \left[ \text{elementary logic},~\text{SL semantics}~ \right] \\
                    & \exists \heap_1,\heap_2 ~.~ \heap = \heap_1 \uplus \heap_2 \\
                     & \quad \text{and}~ \stack,\heap_1 \SAT{\emptyset} \sha_p ~\text{and}~ \stack,\heap_2 \SAT{\emptyset} \sha_q \\
       ~\Rightarrow~ & \left[ \sha_p \ENTAIL{\SRD} \sha_1, \sha_q \ENTAIL{\SRD} \sha_2 \right] \\
                   & \exists \heap_1,\heap_2 ~.~ \heap = \heap_1 \uplus \heap_2 \\
                     & \quad \text{and}~ \stack,\heap_1 \SAT{\SRD} \sha_1 ~\text{and}~ \stack,\heap_2 \SAT{\SRD} \sha_2 \\
       ~\Rightarrow~ & \left[ \text{SL semantics}~ \right] \\
                     & \stack,\heap \SAT{\SRD} \sha_1 \SEP \sha_2.
 \end{align*}
 Hence, $\rsh \ENTAIL{\SRD} \sha_1 \SEP \sha_2$.
 For the converse direction, we have
 \begin{align*}
                     & \rsh \ENTAIL{\SRD} \sha_1 \SEP \sha_2 \\
   ~\Rightarrow~ & \left[ \text{Definition entailment} \right] \\
                     & \forall (\stack,\heap) ~.~ \stack,\heap \SAT{\SRD} \rsh ~\text{implies}~ \stack,\heap \SAT{\SRD} \sha_1 \SEP \sha_2 \\
   ~\Rightarrow~ & \left[ \rsh~\text{reduced and well-determined} \right] \\
                     & \exists (\stack,\heap) ~.~ \stack,\heap \SAT{\emptyset} \rsh ~\text{and}~ \stack,\heap \SAT{\SRD} \sha_1 \SEP \sha_2 \\
   ~\Rightarrow~ & \left[ \rsh = \exists \BV{} . \SPATIAL{} : \PURE{},~\text{SL semantics} \right] \\
                     & \exists (\stack,\heap) . \exists \heap_1,\heap_2 ~.~ \heap = \heap_1 \uplus \heap_2 \\
                     & \quad \text{and}~ \exists \BV{} \in \VAL^{\SIZE{\BV{}}} \,.\, \stack[\BV{} \mapsto \T{v}], \heap \SAT{\emptyset} \SPATIAL{} \\
                     & \qquad \text{and}~ \stack[\BV{} \mapsto \T{v}], \heap \SAT{\emptyset} \PURE{} \\
                     & \quad \text{and}~ \stack,\heap_1 \SAT{\SRD} \sha_1
                            ~\text{and}~ \stack,\heap_2 \SAT{\SRD} \sha_2 \\
   ~\Rightarrow~ & \left[ \text{SL semantics} \right] \\
                     & \exists (\stack,\heap) . \exists \heap_1,\heap_2 ~.~ \heap = \heap_1 \uplus \heap_2 ~\text{and}~ \exists \BV{} \in \VAL^{\SIZE{\BV{}}} \,.\, \\
                     & \quad \heap = \{ \stack[\BV{} \mapsto \T{v}](x) \mapsto \stack[\BV{} \mapsto \T{v}](\T{y}) ~|~ \PT{x}{\T{y}} ~\text{in}~ \SPATIAL{} \} \\
                     & \qquad \text{and}~ \stack[\BV{} \mapsto \T{v}], \heap \SAT{\emptyset} \PURE{} \\
                     & \quad \text{and}~ \stack,\heap_1 \SAT{\SRD} \sha_1
                            ~\text{and}~ \stack,\heap_2 \SAT{\SRD} \sha_2.
 \end{align*}
 Now, we choose $\sha_p, \sha_q$ such that $\KERNEL(\rsh) = \sha_p \uplus \sha_q$ by adding exactly those points-to assertions $\PT{x}{\T{y}}$ to $\SPATIAL{\sha_p}$, where $\stack[\BV{} \mapsto \T{v}](x) \in \DOM(\heap_1)$. All other points-to assertions are added to $\sha_q$.
 The existentially quantified variables and pure formulas are chosen such that both symbolic heaps are well-determined. This is always possible, because $\sha_1$ and $\sha_2$ are well-determined on their own.
 Then $\stack,\heap_1 \SAT{\emptyset} \sha_p$ and $\stack,\heap_2 \SAT{\emptyset} \sha_q$.
 Since both $\sha_p$ and $\sha_q$ are well-determined, this implies $\sha_p \ENTAIL{\SRD} \sha_1$ and $\sha_q \ENTAIL{\SRD} \sha_q$.
 Moreover, since each points-to assertion and each pure formula in the core occurs in $\sha_p$ or $\sha_q$, we have $\KERNEL(\rsh) = \sha_p \uplus \sha_q$.
 Hence, $\rsh \ENTAIL{\SRD} \sha_1 \SEP \sha_2$ implies
 \begin{align*}
                      &\exists \sha_p,\sha_q ~.~ \sha_p \uplus \sha_q \ENTAIL{\emptyset} \rsh ~\text{and}~ \rsh \ENTAIL{\emptyset} \sha_p \uplus \sha_q \\
                      & \quad \text{and}~ \sha_p \ENTAIL{\SRD} \sha_1 ~\text{and}~ \sha_q \ENTAIL{\SRD} \sha_2. 
 \end{align*}
 \qed
\end{proof}
\subsubsection{The Pure Formula}  \label{sec:entailment:pure}
Dealing with pure formulas is rather straightforward, because we already know how to track equalities and inequalities using the construction from Definition~\ref{def:zoo:track-automaton}.
Let $\shb \in \SL{}{\CENTAIL{\alpha}}$ such that $\USET{\shb}{\SRD}{\CENTAIL{\alpha}}$ 
there exists a heap automaton, say $\HASH{\shb}$, accepting $\USET{\shb}{\SRD}{\CENTAIL{\alpha}}$.
Moreover, let $\PURE{}$ be some finite set of pure formulas over the set of free variables of $\shb$.
We construct a heap automaton $\HASH{\shb : \PURE{}} = (Q,\SHCENTAIL{\alpha},\Delta,F)$ accepting $\USET{\sha}{\SRD}{\CENTAIL{\alpha}}$, where $\sha \DEFEQ \SYMBOLICHEAP{\shb} \cup \PURE{}$, as follows:
\begin{align*}
 Q ~\DEFEQ~ Q_{\HASH{\shb}} \times \{0,1\} \qquad F ~\DEFEQ~ F_{\HASH{\shb}} \times \{1\}.
\end{align*}
Moreover, for $\sh \in \SL{}{\CENTAIL{\alpha}}$ with $\NOCALLS{\sh} = m$, the transition relation $\Delta$ is given by:
\begin{align*}
              & \MOVE{\HASH{\shb : \PURE{}}}{(p_0,q_0)}{\sh}{(p_1,q_1) \ldots (p_m,q_m)} \\
 ~\text{iff}~ & \MOVE{\HASH{\shb}}{p_0}{\sh}{p_1 \ldots p_m} ~\text{and}~
                \bigwedge_{0 \leq i \leq m} q_i = 1 ~\text{iff}~ \PURE{} \subseteq \PURE{}_i,
\end{align*}
where $\PURE{}_i$ is the set of pure formulas obtained from the tracking automaton (see Definition~\ref{def:zoo:track-automaton}) that is available by Remark~\ref{remark:entailment:top-level:track}. 
%
\begin{lemma}
 $\HASH{\shb : \PURE{}}$ satisfies the compositionality property.
\end{lemma}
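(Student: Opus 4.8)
The statement to prove is that the heap automaton $\HASH{\shb : \PURE{}}$ constructed for the pure-formula case satisfies the compositionality property. This is the standard proof obligation for each step in the structural induction of Theorem~\ref{thm:entailment:top-level}, and the construction here is a product-type construction layering a Boolean flag (tracking whether $\PURE{} \subseteq \PURE{}_i$) on top of the already-established heap automaton $\HASH{\shb}$.

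The plan is to fix a symbolic heap $\sh \in \SL{}{\CENTAIL{\alpha}}$ with $\NOCALLS{\sh} = m$ and reduced symbolic heaps $\rsh_1,\ldots,\rsh_m \in \RSL{}{\CENTAIL{\alpha}}$, set $\rsh = \sh[\PS_1^{\sh}/\rsh_1,\ldots,\PS_m^{\sh}/\rsh_m]$, pick an arbitrary state $(p_0,q_0) \in Q$, and prove the two-way implication relating $\OMEGA{\HASH{\shb : \PURE{}}}{(p_0,q_0)}{\rsh}$ with the existence of input states $(p_1,q_1)\ldots(p_m,q_m)$ such that $\MOVE{\HASH{\shb : \PURE{}}}{(p_0,q_0)}{\sh}{(p_1,q_1)\ldots(p_m,q_m)}$ and $\OMEGA{\HASH{\shb : \PURE{}}}{(p_i,q_i)}{\rsh_i}$ for each $i$. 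First I would unfold the definition of $\Delta_{\HASH{\shb : \PURE{}}}$: the transition on $\rsh$ splits into the $\HASH{\shb}$-transition $\OMEGA{\HASH{\shb}}{p_0}{\rsh}$ and the flag condition $q_0 = 1 \Leftrightarrow \PURE{} \subseteq \PURE{\rsh}$, where $\PURE{\rsh}$ is the set of definite pure formulas of $\rsh$ over its free variables (as supplied by the tracking automaton, which coincides with $\{x \sim y \mid x \MSIM{\rsh} y\}$ since $\rsh$ is reduced). For the $p$-component, I invoke the compositionality property of $\HASH{\shb}$, which is available by the induction hypothesis, to rewrite $\OMEGA{\HASH{\shb}}{p_0}{\rsh}$ as the existence of $p_1,\ldots,p_m$ with $\MOVE{\HASH{\shb}}{p_0}{\sh}{p_1\ldots p_m}$ and $\OMEGA{\HASH{\shb}}{p_i}{\rsh_i}$ for all $i$. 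For each $i$ I then fix the flag $q_i$ to be $1$ iff $\PURE{} \subseteq \PURE{\rsh_i}$, which by construction is exactly the condition making $\OMEGA{\HASH{\shb : \PURE{}}}{(p_i,q_i)}{\rsh_i}$ hold.

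The crux of the argument — and the one genuinely non-trivial step — is showing that the flag conditions are consistent across the composition, i.e. that
\[
  \PURE{} \subseteq \PURE{\rsh}
  \quad\Longleftrightarrow\quad
  \bigl( \PURE{} \subseteq \PURE{\REDUCE{\sh,\,\T{p}}} \bigr)
  \text{ and }
  \forall\, 1 \le i \le m.~\PURE{} \subseteq \PURE{\rsh_i},
\]
where $\T{p} = p_1 \ldots p_m$ encodes the tracking information and $\REDUCE{\sh,\T{p}}$ is the compressed reduced symbolic heap from Definition~\ref{def:zoo:track-automaton}. Since every pure formula in $\PURE{}$ is over the free variables of $\sh$ (equivalently of $\rsh$), the forward direction uses that any definite (in)equality between two free variables of $\sh$ that holds in $\rsh$ already holds in $\rsh$'s subformula $\sh$, hence in $\REDUCE{\sh,\T{p}}$, by Lemma~\ref{thm:zoo:track:congruence} applied to variables in $\VAR(\sh)$; and it also holds in each $\rsh_i$ because the parameters of $\sh$'s predicate calls are free variables of $\rsh_i[\FV{0}{\rsh_i}/\FV{i}{\sh}]$ whose relationships are inherited from $\rsh$. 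The backward direction is the more delicate half: given that $\PURE{}$ holds in $\REDUCE{\sh,\T{p}}$ and in each $\rsh_i$, one concludes $\PURE{} \subseteq \PURE{\rsh}$ — but here I only need it for free variables of $\sh$, so Lemma~\ref{thm:zoo:track:congruence} suffices directly, since it guarantees $x \MSIM{\rsh} y \Leftrightarrow x \MSIM{\REDUCE{\sh,\T{p}}} y$ for all $x,y \in \VAR(\sh)$, and $\FV{0}{\rsh} = \FV{0}{\sh} \subseteq \VAR(\sh)$. Thus the flag condition for $\rsh$ reduces purely to the flag condition for $\REDUCE{\sh,\T{p}}$ — which is precisely what the transition relation checks via $\PURE{} \subseteq \PURE{}_0$ — and the conjuncts over the $\rsh_i$ in the composed statement are then automatically implied (or rather, they hold because the parameters of $\sh$ carry the same relationships). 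I would state this observation as a short auxiliary claim, prove it via a two-line calculation citing Lemma~\ref{thm:zoo:track:congruence} and Lemma~\ref{thm:zoo:relationships:characterizations}, and then assemble the equivalence chain; the remaining bookkeeping — threading the $p$-components through the compositionality of $\HASH{\shb}$ and combining with the flag equivalence — is routine and follows the pattern of Lemma~\ref{thm:refinement:intersection}.

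Finally I would note the two edge conditions: when $m = 0$ the statement degenerates to $\OMEGA{\HASH{\shb : \PURE{}}}{(p_0,q_0)}{\rsh} \Leftrightarrow \OMEGA{\HASH{\shb : \PURE{}}}{(p_0,q_0)}{\sh}$ with $\rsh = \sh$ reduced, which is immediate since $\REDUCE{\sh,\EMPTYSEQ} = \sh$; and the case $\NOFV{\sh} > \alpha$ is handled by the dedicated final-state mechanism of Remark~\ref{remark:entailment:top-level:alpha}, requiring no separate compositionality check. The main obstacle, as indicated above, is nailing down the flag-consistency equivalence cleanly — everything else is a mechanical product-automaton verification.
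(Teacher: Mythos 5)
Your overall architecture matches the paper's: a product construction whose correctness reduces to the compositionality of $\HASH{\shb}$ plus consistency of the tracking information. But the step you single out as the crux is wrong. The transition relation of $\HASH{\shb : \PURE{}}$ pins \emph{each} flag $q_i$ ($0 \le i \le m$) independently to its own local condition $\PURE{} \subseteq \PURE{}_i$; the output flag $q_0$ is \emph{not} the conjunction of the input flags with a local check (that is the $q_0 = \min\{\ldots\}$ pattern of $\HAEST$, $\HAGARBAGE$ and $\HACYCLE$, which is not used here). Consequently the only bridge the proof needs is that $\PURE{}_0$ computed directly from $\rsh$ agrees with $\PURE{}_0$ computed from $\REDUCE{\sh,\T{p}}$, which is exactly Lemma~\ref{thm:zoo:track:congruence}; the conditions on $q_1,\ldots,q_m$ are discharged by simply \emph{choosing} $q_i = 1$ iff $\PURE{} \subseteq \PURE{\rsh_i}$, and they impose no constraint whatsoever on $q_0$.

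Your proposed biconditional
\[
  \PURE{} \subseteq \PURE{\rsh}
  ~\Longleftrightarrow~
  \PURE{} \subseteq \PURE{\REDUCE{\sh,\T{p}}}
  ~\text{and}~
  \forall\, 1 \le i \le m.~ \PURE{} \subseteq \PURE{\rsh_i}
\]
is therefore not what is required, and it is false. Take $\sh = \PS_1\,\IFV{1}\,\IFV{2} : \{\IFV{1}=\IFV{2}\}$, $\PURE{} = \{\IFV{1}=\IFV{2}\}$, and $\rsh_1 = \EMP$ with two free variables and no pure formulas: the equality $\IFV{1} \MEQ{\rsh} \IFV{2}$ holds in the composed heap because it comes from $\sh$'s own pure part, so the left-hand side is true, while $\PURE{} \not\subseteq \PURE{\rsh_1}$. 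Your justification for the forward direction---that relationships between free variables of $\rsh$ are ``inherited'' by each $\rsh_i$---has the direction of propagation backwards: definite (in)equalities flow from the $\rsh_i$ up into $\rsh$ (cf.\ Lemma~\ref{obs:zoo:establishment:equality-propagation}), not down. If you drop the spurious conjunct over the $\rsh_i$ and keep only the $\REDUCE{\sh,\T{p}}$ part, which you do correctly obtain from Lemma~\ref{thm:zoo:track:congruence}, the remaining bookkeeping goes through and coincides with the paper's proof.
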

\begin{proof}
 Let $\sh \in \SL{}{\CENTAIL{\alpha}}$ with $\NOCALLS{\sh} = m$.
 Moreover, for each $1 \leq i \leq m$, let $\rsh_i \in \RSL{}{\CENTAIL{\alpha}}$ and $\rsh \DEFEQ \sh[\PS_1^{\sh} / \rsh_1, \ldots, \PS_m^{\sh} / \rsh_m]$.
 For each $1 \leq i \leq m$, let $\PURE{}_i$ be the set of pure formulas obtained by the tracking automaton (cf. Definition~\ref{def:zoo:track-automaton}) for $\rsh_i$.
 Note that this information is available without running the full automaton again 
due to Remark~\ref{remark:entailment:top-level:track}. 
 %
 Then, for each $(p_0,q_0) \in Q$, we have
 \begin{align*}
                      & \OMEGA{\HASH{\shb : \PURE{}}}{(p_0,q_0)}{\rsh} \\
    ~\Leftrightarrow~ & \left[ \text{Definition of}~\Delta \right] \\
                      & \OMEGA{\HASH{\shb}}{p_0}{\rsh} ~\text{and}~ q_0 = 1 ~\text{iff}~ \PURE{} \subseteq \PURE{}_0 \\
                      & \left[ \text{for each $1 \leq i \leq m$ set}~q_i = 1~\text{iff}~ \PURE{} \subseteq \Pi_i \right] \\
                      & \exists q_1,\ldots,q_m \in \{0,1\} ~.~ \\
                      & \qquad \OMEGA{\HASH{\shb}}{p_0}{\rsh} ~\text{and}~ \bigwedge_{0 \leq i \leq m} q_i = 1 ~\text{iff}~ \PURE{} \subseteq \PURE{}_i \\
    ~\Leftrightarrow~ & \left[ \text{Compositionality of}~\HASH{\shb} \right] \\
                      & \exists q_1,\ldots,q_m \in \{0,1\} ~.~ \exists p_1,\ldots,p_m \in Q_{\HASH{\shb}} ~.~ \\
                      & \qquad \MOVE{\HASH{\shb}}{p_0}{\sh}{p_1 \ldots p_m} ~\text{and}~ \bigwedge_{0 \leq i \leq m} q_i = 1 ~\text{iff}~ \PURE{} \subseteq \PURE{}_i \\
                      & \qquad \text{and}~ \forall 1 \leq i \leq m ~.~ \OMEGA{\HASH{\shb}}{p_i}{\rsh_i} \\
    ~\Leftrightarrow~ & \left[ \text{Definition of}~Q,~\text{regrouping} \right] \\
                      & \exists (p_1,q_1),\ldots,(p_m,q_m) \in Q ~.~ \\
                      & \qquad \MOVE{\HASH{\shb}}{p_0}{\sh}{p_1 \ldots p_m} ~\text{and}~ \bigwedge_{0 \leq i \leq m} q_i = 1 ~\text{iff}~ \PURE{} \subseteq \PURE{}_i \\
                      & \qquad \text{and}~ \forall 1 \leq i \leq m ~.~ \OMEGA{\HASH{\shb}}{p_i}{\rsh_i} ~\text{and}~ q_i = 1 ~\text{iff}~ \PURE{} \subseteq \PURE{}_i  \\
    ~\Leftrightarrow~ & \left[ \text{Definition of}~\Delta \right] \\
                      & \exists (p_1,q_1),\ldots,(p_m,q_m) \in Q ~.~ \\
                      & \qquad \MOVE{\HASH{\shb : \PURE{}}}{(p_0,q_0)}{\sh}{(p_1,q_1) \ldots (p_m,q_m)} \\
                      & \qquad \text{and}~ \forall 1 \leq i \leq m ~.~ \OMEGA{\HASH{\shb : \PURE{}}}{(p_i,q_i)}{\rsh_i}. 
 \end{align*}
 \qed
\end{proof}
\begin{lemma}
 $L(\HASH{\shb : \PURE{}}) = \USET{\sha}{\SRD}{\CENTAIL{\alpha}}$, where $\sha \DEFEQ \SYMBOLICHEAP{\shb} \cup \PURE{}$ and $\shb \in \SL{\SRD}{\CENTAIL{\alpha}}$.
\end{lemma}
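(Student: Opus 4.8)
The plan is to establish $L(\HASH{\shb : \PURE{}}) = \USET{\sha}{\SRD}{\CENTAIL{\alpha}}$ by a straightforward chain of equivalences on an arbitrary $\rsh \in \RSL{}{\CENTAIL{\alpha}}$, unfolding the definition of the accepted language, then the transition relation $\Delta_{\HASH{\shb : \PURE{}}}$ from the construction above, and finally reducing to the already-known correctness of $\HASH{\shb}$. The key observation is that on a \emph{reduced} symbolic heap $\rsh$ (so $\NOCALLS{\rsh} = 0$), the tuple of input states is empty, $\SHRINK{\rsh,\EMPTYSEQ} = \rsh$, and the set $\PURE{}_0$ of definite equalities and inequalities tracked by the automaton on $\rsh$ coincides with $\{ x \sim y \mid x \MSIM{\rsh} y \}$ by Lemma~\ref{thm:zoo:track:property}.

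Concretely, I would write: $\rsh \in L(\HASH{\shb : \PURE{}})$ iff there is a final state $(p_0,1) \in F = F_{\HASH{\shb}} \times \{1\}$ with $\OMEGA{\HASH{\shb : \PURE{}}}{(p_0,1)}{\rsh}$; by the definition of $\Delta_{\HASH{\shb : \PURE{}}}$ (with $m = 0$) this holds iff $\OMEGA{\HASH{\shb}}{p_0}{\rsh}$ for some $p_0 \in F_{\HASH{\shb}}$ \emph{and} $\PURE{} \subseteq \PURE{}_0$, where $\PURE{}_0$ is the set of definite (in)equalities between free variables of $\rsh$. The first conjunct is by definition exactly $\rsh \in L(\HASH{\shb}) = \USET{\shb}{\SRD}{\CENTAIL{\alpha}}$, i.e.\ there is $\rsh' \in \CALLSEM{\shb}{\SRD}$ with $\rsh \ENTAIL{\emptyset} \rsh'$. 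The second conjunct $\PURE{} \subseteq \PURE{}_0$ says precisely that every pure formula in $\PURE{}$ is a definite (in)equality of $\rsh$, i.e.\ $\rsh \ENTAIL{\emptyset} (\EMP : \PURE{})$. I would then argue that $\rsh \ENTAIL{\emptyset} \rsh'$ and $\rsh \ENTAIL{\emptyset} (\EMP : \PURE{})$ together are equivalent to $\rsh$ entailing the symbolic heap $\rsh'$ enriched with the pure constraints $\PURE{}$, and since $\sha = \SYMBOLICHEAP{\shb} \cup \PURE{}$ this enriched heap is an element of $\CALLSEM{\sha}{\SRD}$ (and, conversely, every unfolding of $\sha$ arises this way from an unfolding of $\shb$). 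Hence $\rsh \in \USET{\sha}{\SRD}{\CENTAIL{\alpha}}$.

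The step I expect to require the most care is the equivalence between ``$\rsh \ENTAIL{\emptyset} \rsh'$ for some $\rsh' \in \CALLSEM{\shb}{\SRD}$ and $\PURE{} \subseteq \PURE{}_0$'' on the one hand, and ``$\rsh \ENTAIL{\emptyset} \rsh''$ for some $\rsh'' \in \CALLSEM{\sha}{\SRD}$'' on the other. The ``$\Rightarrow$'' direction uses that $\CALLSEM{\sha}{\SRD} = \{ \exists \BV{\rsh'} . \SPATIAL{\rsh'} : \PURE{\rsh'} \cup \PURE{} \mid \rsh' \in \CALLSEM{\shb}{\SRD} \}$ — this is exactly the fact ($\dag$) used in the proof of Lemma~\ref{thm:symbolic-heaps:semantics} in Appendix~A.3 — so adding $\PURE{}$ to $\rsh'$ yields an unfolding of $\sha$, and since $\rsh \SAT{\emptyset} \rsh'$ and $\rsh$ makes every formula of $\PURE{}$ true (as $\PURE{} \subseteq \PURE{}_0$), one gets $\rsh \ENTAIL{\emptyset} (\rsh' : \PURE{})$ using Lemma~\ref{thm:symbolic-heaps:fv-coincidence}. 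The ``$\Leftarrow$'' direction is symmetric: an unfolding $\rsh''$ of $\sha$ is of the form $\rsh' : \PURE{}$ for some unfolding $\rsh'$ of $\shb$, entailing $\rsh''$ implies entailing both $\rsh'$ (weakening) and $\PURE{}$, and the latter gives $\PURE{} \subseteq \PURE{}_0$ by the definition of definite (in)equality.

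Putting the chain together gives $\rsh \in L(\HASH{\shb : \PURE{}})$ iff $\rsh \in \USET{\sha}{\SRD}{\CENTAIL{\alpha}}$, which completes the proof; the resource-bound on the number of free variables is inherited from $\HASH{\shb}$ so $\HASH{\shb : \PURE{}}$ is still a heap automaton over $\SHCENTAIL{\alpha}$, and Remark~\ref{remark:entailment:top-level:alpha} handles the corner case of a root heap with more than $\alpha$ free variables. $\qed$
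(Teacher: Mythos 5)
Your proposal is correct and follows essentially the same route as the paper's proof: unfold the definitions of $L$, $F$, and $\Delta$ on a reduced heap (so $m=0$ and the flag must be $1$), reduce the first conjunct to $L(\HASH{\shb}) = \USET{\shb}{\SRD}{\CENTAIL{\alpha}}$, identify $\PURE{} \subseteq \PURE{}_0$ with entailment of the added pure constraints via the tracking information, and combine; your explicit detour through $\CALLSEM{\sha}{\SRD} = \{\exists \BV{\rsh'}.\SPATIAL{\rsh'} : \PURE{\rsh'} \cup \PURE{} \mid \rsh' \in \CALLSEM{\shb}{\SRD}\}$ just spells out the paper's final two equivalence steps. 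One small notational slip: writing the second conjunct as $\rsh \ENTAIL{\emptyset} (\EMP : \PURE{})$ is not quite right since $\EMP$ forces an empty heap domain — you mean only that every model of $\rsh$ satisfies each formula in $\PURE{}$, which is how you in fact use it.
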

\begin{proof}
 Let $\rsh \in \RSL{}{\CENTAIL{\alpha}}$. Then
 \begin{align*}
                       & \rsh \in L(\HASH{\shb : \PURE{}}) \\
     ~\Leftrightarrow~ & \left[ ~\text{Definition of}~L(\HASH{\shb : \PURE{}}) \right] \\
                       & \exists (p_0,q_0) \in F ~.~ \OMEGA{\HASH{\shb : \PURE{}}}{(p_0,q_0)}{\rsh} \\
     ~\Leftrightarrow~ & \left[ ~\text{Definition of}~F \right] \\
                       & \exists p_0 \in F_{\HASH{\shb}} ~.~ \OMEGA{\HASH{\shb : \PURE{}}}{(p_0,1)}{\rsh} \\
     ~\Leftrightarrow~ & \left[ ~\text{Definition of}~\Delta, q_0 = 1 \right] \\
                       & \exists p_0 \in F_{\HASH{\shb}} ~.~ \OMEGA{\HASH{\shb}}{p_0}{\rsh} \\
                       & \qquad \text{and}~ \bigwedge_{0 \leq i \leq \NOCALLS{\rsh}} q_i = 1 ~\text{iff}~ \PURE{} \subseteq \PURE{}_i \\
~\Leftrightarrow~ & \left[ ~\NOCALLS{\rsh} = 0, q_0 = 1 \right] \\
                       & \exists p_0 \in F_{\HASH{\shb}} ~.~ \OMEGA{\HASH{\shb}}{p_0}{\rsh} ~\text{and}~ \PURE{} \subseteq \PURE{}_0 \\
     ~\Leftrightarrow~ & \left[ ~\text{Definition of}~L(\HASH{\shb}) \right] \\
                       & \rsh \in L(\HASH{\shb}) ~\text{and}~ \PURE{} \subseteq \PURE{}_0 \\
     ~\Leftrightarrow~ & \left[ L(\HASH{\shb}) = \USET{\shb}{\SRD}{\CENTAIL{\alpha}} \right] \\
                       & \rsh \in \USET{\shb}{\SRD}{\CENTAIL{\alpha}} ~\text{and}~ \PURE{} \subseteq \PURE{}_0 \\
     ~\Leftrightarrow~ & \left[ ~\text{Definition of}~\Pi_0~\text{(see Definition~\ref{def:zoo:track})} \right] \\
                       & \rsh \in \USET{\shb}{\SRD}{\CENTAIL{\alpha}} ~\text{and}~ \PURE{} \subseteq \{ x \sim y ~|~ x \MSIM{\rsh} y \} \\
     ~\Leftrightarrow~ & \left[ \text{Definition}~\USET{\shb}{\SRD}{\CENTAIL{\alpha}},~\MSIM{\rsh} \right] \\
                       & \rsh \ENTAIL{\SRD} \shb ~\text{and}~ \rsh \ENTAIL{\emptyset} \PURE{} \\
     ~\Leftrightarrow~ & \left[ \sha \DEFEQ \SYMBOLICHEAP{\shb} \cup \PURE{} \right] \\
                       & \rsh \ENTAIL{\SRD} \sha \\
     ~\Leftrightarrow~ & \left[ \text{Definition}~\USET{\sha}{\SRD}{\CENTAIL{\alpha}} \right] \\
                       & \rsh \in \USET{\sha}{\SRD}{\CENTAIL{\alpha}}. 
 \end{align*}
 \qed
\end{proof}
\subsubsection{The Existential Quantification}  \label{sec:entailment:existential}
In order to deal with existential quantifiers, we exploit the following structural property.
\begin{lemma} \label{thm:entailment:existential:structural-auxiliary}
  For $\SRD \in \SETSRD{}$, let $\rsh \in \RSL{}{\alpha}$ be a well--determined symbolic heap and $\sha \in \SL{\SRD}{\alpha}$.
  Moreover, let $x \notin \VAR(\rsh)$ and $\FV{0}{\sha} = \FV{0}{\rsh} \cdot x$.
  Then $\rsh \ENTAIL{\SRD} \exists x \,.\, \sha$ holds if and only if there exists $y \in \VAR(\rsh)$ such that
  %
    $\exists \BV{\rsh} . \SPATIAL{\rsh} : \PURE{\rsh} \cup \{ x = y \} \ENTAIL{\SRD} \sha$.
  %
\end{lemma}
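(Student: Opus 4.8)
The plan is to prove both directions of the equivalence by unfolding the definition of entailment and exploiting the semantics of existential quantification together with the well-determinedness of $\rsh$. First I would fix notation: write $\rsh = \exists \BV{\rsh} . \SPATIAL{\rsh} : \PURE{\rsh}$, and note that since $x \notin \VAR(\rsh)$ and $\FV{0}{\sha} = \FV{0}{\rsh} \cdot x$, an unfolding $\rsh' \in \CALLSEM{\exists x . \sha}{\SRD}$ corresponds to an unfolding $\rsha \in \CALLSEM{\sha}{\SRD}$ with $\exists x . \rsha$ playing the role of the unfolding of $\exists x . \sha$. By Lemma~\ref{thm:symbolic-heaps:semantics} (and the coincidence Lemma~\ref{thm:symbolic-heaps:fv-coincidence}), the entailment $\rsh \ENTAIL{\SRD} \exists x . \sha$ unfolds to: for every $(\stack,\heap)$ satisfying $\rsh$, there is some unfolding $\rsha$ of $\sha$ and some value $v$ such that $\stack\remap{x}{v},\heap \SAT{\emptyset} \rsha$.

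For the direction from right to left, suppose $y \in \VAR(\rsh)$ witnesses $\exists \BV{\rsh} . \SPATIAL{\rsh} : \PURE{\rsh} \cup \{x = y\} \ENTAIL{\SRD} \sha$. Given any model $(\stack,\heap)$ of $\rsh$, I would extend $\stack$ by setting $\stack(x) := \stack(y)$; then $(\stack\remap{x}{\stack(y)},\heap)$ is a model of $\exists \BV{\rsh} . \SPATIAL{\rsh} : \PURE{\rsh} \cup \{x=y\}$ (the added pure formula $x=y$ holds by construction, and everything else is unaffected since $x$ does not occur in $\rsh$). Hence it satisfies $\sha$, so $(\stack,\heap)$ satisfies $\exists x . \sha$, giving $\rsh \ENTAIL{\SRD} \exists x . \sha$.

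For the direction from left to right, suppose $\rsh \ENTAIL{\SRD} \exists x . \sha$. Here is where well-determinedness of $\rsh$ is essential: by the argument in the proof of Lemma~\ref{thm:entailment:predicates} (property $(\bigstar)$ there), $\rsh$ has a unique tight model up to isomorphism, call it $(\stack,\heap)$ with $\DOM(\stack) = \FV{0}{\rsh}$. Applying the entailment to this model yields a value $v$ and an unfolding $\rsha$ of $\sha$ with $\stack\remap{x}{v},\heap \SAT{\emptyset} \rsha$. I would then argue that $v$ must coincide with $\stack(y)$ for some $y \in \VAR(\rsh)$: indeed, since $\sha$ is well-determined and $\FV{0}{\sha} = \FV{0}{\rsh} \cdot x$, the value of $x$ in any model of any unfolding of $\sha$ must be equal to the value of some other free variable of $\sha$ (i.e.\ of $\rsh$) or be definitely equal to $\NIL$ — in either case $v = \stack(y)$ for a suitable $y \in \VAR(\rsh)$ (taking $y = \NIL$ in the latter case). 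With this $y$, every tight model $(\stack,\heap)$ of $\rsh$ extends uniquely (up to isomorphism) to a model of $\exists \BV{\rsh}. \SPATIAL{\rsh} : \PURE{\rsh} \cup \{x = y\}$, and that model satisfies $\rsha$, hence $\sha$; unwinding gives $\exists \BV{\rsh}. \SPATIAL{\rsh} : \PURE{\rsh} \cup \{x=y\} \ENTAIL{\SRD} \sha$.

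The main obstacle I anticipate is the left-to-right direction, specifically pinning down rigorously that the value assigned to $x$ in the satisfying model of $\rsha$ is necessarily the $\stack$-image of a variable of $\rsh$ (including $\NIL$). This requires carefully combining (i) well-determinedness of $\sha$ — so that in every unfolding, $x$ is forced into a fixed equivalence class of variables — with (ii) the fact that $\sha$'s free variables are exactly those of $\rsh$ plus $x$, and (iii) uniqueness of the tight model of $\rsh$ so that "works for the canonical model" upgrades to "works for all models" via isomorphism. I would handle the $\NIL$ case by recalling the paper's convention that $\NIL$ is always a free variable passed to predicate calls, so $\NIL \in \VAR(\rsh)$, making "$x$ definitely equal to $\NIL$" a special case of "$x = y$ for $y \in \VAR(\rsh)$". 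The remaining bookkeeping — matching existentially quantified variables, applying the coincidence lemma to restrict stacks, and the routine $\exists$-shuffling — follows the pattern already used in the proof of Lemma~\ref{thm:entailment:predicates} and should go through without surprises.
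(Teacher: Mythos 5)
Your right-to-left direction is fine and essentially the paper's. The gap is in the left-to-right direction, at precisely the step you identify as the main obstacle: the argument that the existential witness $v$ must equal $\stack(y)$ for some $y \in \VAR(\rsh)$ does not go through as you justify it. You derive this from \emph{well-determinedness} of $\sha$, claiming $x$ must be definitely equal to some other free variable of $\sha$ or to $\NIL$. That dichotomy is wrong: a determined unfolding of $\sha$ may instead \emph{allocate} $x$ (e.g.\ contain $\PTS{x}{\NIL}$ with $x$ definitely unequal to all other free variables), in which case $v$ has to be a location in the domain of the heap of $\rsh$'s model, and the variable of $\rsh$ naming that location is in general an \emph{existentially bound} variable of $\rsh$ --- a case your analysis omits, even though the lemma quantifies over all of $\VAR(\rsh)$ precisely to cover it. Worse, well-determinedness alone does not even make the lemma true: take $\rsh = \EMP : \{\IFV{1} \neq \NIL\}$ and $\sha = \EMP : \{x \neq \IFV{1},\, x \neq \NIL,\, \IFV{1} \neq \NIL\}$. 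Both are well-determined (all (in)equalities explicit, both satisfiable), and $\rsh \ENTAIL{\SRD} \exists x . \sha$ holds since $\VAL$ is infinite, yet neither $y = \IFV{1}$ nor $y = \NIL$ makes $\EMP : \{\IFV{1}\neq\NIL,\, x = y\} \ENTAIL{\SRD} \sha$. So the property you invoke cannot be the one doing the work.

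The property the paper actually uses is \emph{establishment}: because the unfolding of $\sha$ satisfied by the model is established (once $x$ is bound by the quantifier), $x$ is definitely equal to a free variable or definitely allocated, which pins $v$ to $\stack'(\FV{0}{\rsh}) \cup \DOM(\heap')$; and because $\rsh$ itself is established, every location in $\DOM(\heap')$ is the image under the extended stack of some (possibly bound) variable of $\rsh$, yielding the required $y \in \VAR(\rsh)$ with $v = \stack'(y)$. From there your final step --- one common model of the well-determined $\rsh_y$ and of $\sha$ upgrades to the entailment $\rsh_y \ENTAIL{\SRD} \sha$ --- is correct and matches the paper. The fix is therefore to replace the appeal to well-determinedness of $\sha$ by an appeal to establishment of $\sha$'s unfoldings and of $\rsh$, and to add the ``$x$ definitely allocated'' branch (with $y$ possibly in $\BV{\rsh}$) to your case analysis.
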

\begin{proof}
 To improve readability, we write $\rsh_y$ as a shortcut for $\exists \BV{\rsh} . \SPATIAL{\rsh} : \PURE{\rsh} \cup \{ x = y \}$.
 \\
 \emph{``only-if''}\qquad
 Towards a contradiction, assume that $\rsh \ENTAIL{\SRD} \exists x . \sha$ holds, but for each $y \in \VAR(\rsh)$, we have $\rsh_y \not \ENTAIL{\SRD} \sha$.
  Since $\rsh$ is established and well--determined and $x \notin \VAR(\rsh)$, we know that
  $\rsh_y$ is established and well--determined as well.
  Thus, let $(\stack,\heap) \in \MODELS{\rsh_y}$ be the unique model of $\rsh_y$ up to isomorphism.
  Clearly, $\stack(x) = \stack(y)$ due to the pure formula $x = y$.
  Then, by assumption, $\stack,\heap \not \SAT{\SRD} \sha$.

  Now, let $(\stack',\heap') \in \MODELS{\rsh}$ be the unique model of $\rsh$ up to isomorphism.
  Since $\rsh \ENTAIL{\SRD} \exists x . \sha$, applying the SL semantics yields that there exists $v \in \VAL$ such that $\stack'[x \mapsto v],\heap' \SAT{\SRD} \sha$.
  Moreover, $v$ is drawn from the image (or codomain) of $\stack'$ and $\DOM(\heap')$, because $\sha$ is established.
  By Lemma~\ref{thm:symbolic-heaps:fv-coincidence} and $x \notin \VAR(\rsh)$, we know that $\stack'[x \mapsto v],\heap' \SAT{\emptyset} \rsh$.
  Observe that for each of these values $v$ there exists at least one variable $y \in \VAR(\tau)$ such that $y$ is evaluated to $v$.%
  \footnote{For $v \in co\DOM(\stack')$ this is clear, because there exists $y \in \DOM(\stack')$ with $\stack'(y) = v$. Otherwise, if $v \in \DOM(\heap')$, applying the SL semantics yields that there exists $\T{u} \in co\DOM(\stack') \cup \DOM(\heap')$ such that $\stack'[\BV{\rsh} \mapsto \T{u}],\heap' \SAT{\emptyset} \SPATIAL{\rsh} : \PURE{\rsh}$. Then, since $\DOM(\heap') = \{\stack'[\BV{\rsh} \mapsto \T{u}](z) ~|~ \PT{z}{\_} ~\text{occurs in}~ \SPATIAL{\rsh}\}$, $v$ is contained in $co\DOM(\stack'[\BV{\rsh} \mapsto \T{u}])$. Thus there exists some variable $y \in \DOM(\stack'[\BV{\rsh} \mapsto \T{u}]) = \VAR(\rsh)$ with $\stack'[\BV{\rsh} \mapsto \T{u}](y) = v$.}
  Then $\stack'[x \mapsto v],\heap' \SAT{\emptyset} \rsh_{y}$ holds for some $y \in \VAR(\rsh)$ that is evaluated to $v$.
  However, since $\stack'[x \mapsto v],\heap' \SAT{\SRD} \sha$, this means that $\rsh_y \ENTAIL{\SRD} \psi$.
  This contradicts our assumption $\rsh_y \not \ENTAIL{\SRD} \sha$ for each $y \in \VAR(\rsh)$.%
  \footnote{Note that $\rsh_y$ is well--determined. Thus, one common model between $\sha$ and $\rsh_y$ is sufficient to prove the entailment $\rsh_y \ENTAIL{\SRD} \psi$.}
 \\
 \emph{``if''}\qquad
 Assume there exists $y \in \VAR(\rsh)$ such that $\rsh_y \ENTAIL{\SRD} \sha$.
 Thus, for each $\stack,\heap \SAT{\emptyset} \rsh_y$, we have $\stack,\heap \SAT{\SRD} \sha$.
 Furthermore, since $\rsh$ is established and $x$ occurs in $\rsh_y$ in the pure formula $x = y$ only, we know that
 \begin{align*}
  \stack(x) \in \stack(\FV{0}{\rsh}) \cup \DOM(\heap). \tag{$\dag$}
 \end{align*}
 In particular, $\stack,\heap \SAT{\emptyset} \rsh$ holds, because only one pure formula is added by $\rsh_y$.
 Moreover, since $x \notin \VAR(\rsh)$, $\restr{\stack}{\FV{0}{\rsh}},\heap \SAT{\emptyset} \rsh$ holds 
 by Lemma~\ref{thm:symbolic-heaps:fv-coincidence}.
 Then
 \begin{align*}
                 & \stack,\heap \SAT{\emptyset} \rsh_y \\
   ~\Rightarrow~ & \left[ \text{Assumption:}~\rsh_y \ENTAIL{\SRD} \sha \right] \\
                 & \stack,\heap \SAT{\emptyset} \sha \\
   ~\Rightarrow~ & \left[ \text{applying}~(\dag), v = \stack(x)  \right] \\
                 & \exists v \in \stack(\FV{0}{\rsh}) \cup \DOM(\heap) ~.~ (\restr{\stack}{\FV{0}{\rsh}})[x \mapsto v] = \stack ~\text{and}~ \stack,\heap \SAT{\emptyset} \sha \\
   ~\Rightarrow~ & \left[ (\restr{\stack}{\FV{0}{\rsh}})[x \mapsto v] = \stack \right] \\
                 & \exists v \in \stack(\FV{0}{\rsh}) \cup \DOM(\heap) ~.~ (\restr{\stack}{\FV{0}{\rsh}})[x \mapsto v],\heap \SAT{\emptyset} \sha \\
   ~\Rightarrow~ & \left[ \stack(\FV{0}{\rsh}) \cup \DOM(\heap) \subseteq \VAL \right] \\
                 & \exists v \in \VAL ~.~ (\restr{\stack}{\FV{0}{\rsh}})[x \mapsto v],\heap \SAT{\emptyset} \sha \\
   ~\Rightarrow~ & \left[ \text{SL semantics} \right] \\
                 & (\restr{\stack}{\FV{0}{\rsh}}),\heap \SAT{\emptyset} \exists x . \sha. \\
 \end{align*}
 Hence, each model of $\rsh$ is also a model of $\exists x . \sha$, i.e., $\rsh \ENTAIL{\SRD} \exists x \,.\, \sha$.
 \qed
\end{proof}
Note that the proof of the lemma from above works analogously if the additional free variable $x$ occurs at a different position in the tuple of free variables.
Following the previous lemma, the main idea is to nondeterministically guess some
variable $y$ of an unfolding $\rsh$ and verify that $\rsh : \{ x = y\}$ belongs
to $\USET{\sha}{\SRD}{\CENTAIL{\alpha}}$.
Since this construction increases the number of free variables, we remark that $y$ has to be chosen carefully such that a symbolic heap in $\SL{}{\alpha}$ is obtained.
In particular, if $y$ does not belong to the root node of an unfolding tree, the corresponding predicate must either have an arity smaller than $\alpha$, or $y$ is equal to some other free variable.
However, each $y$ occurring in the root node of an unfolding tree may be chosen.
Before we present a formal construction, we first define how a symbolic heap $\sh$ is modified in the following.
\begin{definition}
  Let $\T{x}$ be a tuple of variables with $\SIZE{\T{x}} = n$.
  Moreover, let $1 \leq k \leq n+1$. Then, the tuple of variables in which a fresh variable $x$ is placed at position $k$ is given by:
  \begin{align*}
     \T{x}\SQUEEZE{x}{k} ~\DEFEQ~ \PROJ{\T{x}}{1}\, \ldots \, \PROJ{\T{x}}{k-1} ~ x ~ \PROJ{\T{x}}{k+1} \, \ldots \, \PROJ{\T{x}}{n}.
  \end{align*}
  Now, let $\sh$ be a symbolic heap with $\NOFV{\sh} = \beta < \alpha$.
  Then, the symbolic heap $\sh\SQUEEZE{x}{k}$ is defined as $\sh$ except for the tuple of free variables being $\FV{0}{\sh}\SQUEEZE{x}{k}$.
  Furthermore, for some $1 \leq i \leq \NOCALLS{\sh}$ and $1 \leq \ell \leq \SIZE{\FV{i}{\sh}} + 1$,
  the symbolic heap $\sh\SQUEEZE{x}{i,\ell}$ is defined as $\sh$ except for the tuple of parameters of the $i$-th predicate call being $\FV{i}{\sh\SQUEEZE{x}{i,\ell}} = \FV{i}{\sh}\SQUEEZE{x}{\ell}$.
  Otherwise, $\sh\SQUEEZE{x}{k}$ as well as $\sh\SQUEEZE{x}{i,k}$ are undefined.
\end{definition}
Now, let $\HASH{\sha}$ be a heap automaton accepting $\USET{\sha}{\SRD}{\CENTAIL{\alpha}}$ and $\HATRACK$ be the tracking automaton introduced in Definition~\ref{def:zoo:track-automaton}.
We construct a heap automaton $\HASH{\exists z.\sha} = (Q,\SHCENTAIL{\alpha},\Delta,F)$ accepting $\USET{\exists x . \sha}{\SRD}{\CENTAIL{\alpha}}$, where $x$ is the $k$-th free variable of $\sha$, as follows:
\begin{align*}
   Q ~\DEFEQ~ Q_{\HASH{\sha}} \times \{0,1,\ldots,\alpha\} \times Q_{\HATRACK} 
   \qquad
   F ~\DEFEQ~ F_{\HASH{\sha}} \times \{k\} \times Q_{\HATRACK} 
\end{align*}
Moreover, for $\sh \in \SL{}{\CENTAIL{\alpha}}$ with $\NOCALLS{\sh} = m$, the transition relation $\Delta$ is given by:
\begin{align*}
                       & \MOVE{\HASH{\exists z. \sha}}{(p_0,q_0,r_0)}{\sh}{(p_1,q_1,r_m) \ldots (p_m,q_m,r_m)} \\
  \quad\text{iff}\quad &
                         \MOVE{\HASH{\sha}}{p_0}{\shb}{p_1 \ldots p_m}
                         ~\text{and}~
                         \MOVE{\HATRACK}{r_0}{\shb}{r_1 \ldots r_m}~,
\end{align*}
where $\shb$ adheres to one of the following cases:
\begin{enumerate}
 \item $q_0 = q_1 = \ldots = q_m = 0$ and $\shb = \sh$, or
 \item $q_0 = \ell > 0$ and there exists exactly one $1 \leq j \leq m$ such that $q_j = \ell' > 0$ and $\shb = (\sh\SQUEEZE{x}{\ell})\SQUEEZE{x}{j,\ell'}$, or
 \item $q_0 = \ell > 0$ and $\sum_{1 \leq i \leq m} q_i = 0$ and $\shb = (\sh\SQUEEZE{x}{\ell}) : \{ x = y \}$ for some $y \in \VAR(\sh)$.
\end{enumerate}
Here, $\sh : \{ y = x \}$ is a shortcut for the symbolic heap $\SYMBOLICHEAP{\sh} \cup \{ x = y \}$.
Moreover, the annotations of each predicate call $\CALLN{i}{\shb}$ are set to the sets contained in $r_i = (A_i,\Pi_i)$ instead of using existing annotations.
We remark that this construction is highly non-deterministic.
Furthermore, note that the automata rejects a symbolic heap if some $\shb$ does not belong to $\SL{}{\CENTAIL{\alpha}}$, because it is not a top-level formula and has more than $\alpha$ free variables.
\begin{lemma}
 $\HASH{\exists z.\sha}$ satisfies the compositionality property.
\end{lemma}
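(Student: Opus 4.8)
The plan is to verify the compositionality property for $\HASH{\exists z . \sha}$ following exactly the same pattern as the proofs for $\HASH{\shb : \PURE{}}$ and $\HASH{\sha_1 \SEP \sha_2}$ given earlier in this appendix: fix a symbolic heap $\sh \in \SL{}{\CENTAIL{\alpha}}$ with $\NOCALLS{\sh} = m$, fix reduced symbolic heaps $\rsh_1,\ldots,\rsh_m \in \RSL{}{\CENTAIL{\alpha}}$, set $\rsh = \sh[\PS_1^{\sh}/\rsh_1,\ldots,\PS_m^{\sh}/\rsh_m]$, and show for each state $(p_0,q_0,r_0) \in Q$ that $\OMEGA{\HASH{\exists z.\sha}}{(p_0,q_0,r_0)}{\rsh}$ holds if and only if there exist states $(p_i,q_i,r_i)$, $1 \le i \le m$, with $\MOVE{\HASH{\exists z.\sha}}{(p_0,q_0,r_0)}{\sh}{(p_1,q_1,r_1)\ldots(p_m,q_m,r_m)}$ and $\OMEGA{\HASH{\exists z.\sha}}{(p_i,q_i,r_i)}{\rsh_i}$ for every $i$.

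First I would unfold the definition of $\Delta_{\HASH{\exists z.\sha}}$ on the left-hand side. Since $\rsh$ is reduced, the transition $\OMEGA{\HASH{\exists z.\sha}}{(p_0,q_0,r_0)}{\rsh}$ requires a witness $\shb'$ obtained from $\rsh$ by one of the three cases; because $\rsh$ has no predicate calls, case~(1) collapses to $\shb' = \rsh$ with all the auxiliary counters zero, and cases~(2)--(3) reduce to picking a position $\ell$ (the value $q_0$) and either adjusting a parameter of a predicate call---impossible here since there are none, forcing the corresponding branch to be empty---or adding a pure formula $x = y$ for some $y \in \VAR(\rsh)$. Thus on a reduced symbolic heap the transition says precisely: $\OMEGA{\HASH{\sha}}{p_0}{\rsh}$ if $q_0 = 0$, or $q_0 = \ell > 0$ and $\OMEGA{\HASH{\sha}}{p_0}{\rsh\SQUEEZE{x}{\ell}:\{x = y\}}$ for some $y$, together in both cases with $\OMEGA{\HATRACK}{r_0}{\shb'}$. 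The right-hand side is then expanded similarly, and the proof proceeds by first choosing $q_i = 0$ and $r_i$ to be the $\HATRACK$-state of $\rsh_i$ for each $i$, so that all the non-root $\shb$'s are the $\rsh_i$ themselves, and then applying the compositionality of $\HASH{\sha}$ (induction hypothesis) and of $\HATRACK$ (Lemma~\ref{thm:zoo:track:property}) to the common witness $\shb = \sh$ (case~1) or $\shb = (\sh\SQUEEZE{x}{\ell}):\{x=y\}$ (case~3). The key algebraic observation to carry through is that $(\sh:\{x=y\})[\PS_1/\rsh_1,\ldots,\PS_m/\rsh_m] \equiv \rsh:\{x=y\}$ and that squeezing a fresh variable into the free-variable tuple commutes with predicate replacement, so the $\shb$ built from the composite heap agrees with the composite of the $\shb$'s built from the components.

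The main obstacle I expect is the bookkeeping around the counter component $\{0,1,\ldots,\alpha\}$ and the ``squeeze'' operation: one has to check that in the composite transition the fresh variable $x$ is inserted consistently---at position $\ell$ in the root and, when $x$ is to be identified with a variable occurring deep inside some $\rsh_j$, propagated through the corresponding parameter of $\CALLN{j}{\sh}$ via the value $q_j$---and that the side condition keeping $\shb \in \SL{}{\CENTAIL{\alpha}}$ (no more than $\alpha$ free variables unless $\shb$ is at the root) is preserved. This is essentially the argument already packaged in Lemma~\ref{thm:entailment:existential:structural-auxiliary}, so I would cite that lemma for the semantic content and restrict the compositionality proof to the syntactic matching. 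A minor additional point is that the definition overrides the predicate-call annotations in $\shb$ with the sets from $r_i$; since $\HATRACK$ is run in parallel precisely to supply these, choosing $r_i$ as the true $\HATRACK$-state of $\rsh_i$ makes the overridden annotations correct, and correctness of $L(\HASH{\exists z.\sha}) = \USET{\exists z.\sha}{\SRD}{\CENTAIL{\alpha}}$ then follows from Lemma~\ref{thm:entailment:existential:structural-auxiliary} together with the induction hypothesis $L(\HASH{\sha}) = \USET{\sha}{\SRD}{\CENTAIL{\alpha}}$, exactly as in the preceding subsections.
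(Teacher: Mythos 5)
Your proposal follows essentially the same route as the paper's proof: a case split on the counter component $q_0$, unfolding the three cases of $\Delta_{\HASH{\exists z.\sha}}$ on the reduced composite $\rsh$, and reducing to the compositionality of $\HASH{\sha}$ and $\HATRACK$ via the algebraic facts that adding a pure formula and squeezing a fresh free variable commute with predicate replacement. The one place where your sketch is materially thinner than the paper is the case $q_0 = \ell > 0$: your main line of argument fixes $q_1 = \cdots = q_m = 0$ and takes the witness $\shb = (\sh\SQUEEZE{x}{\ell}) : \{x = y\}$, which only works when the witness variable $y$ lies in $\VAR(\sh)$. When $y \in \VAR(\rsh_j) \setminus \VAR(\sh)$ for some $j$, the pure formula $x = y$ cannot be placed at the root, and the paper instead invokes the second case of $\Delta$ there, choosing exactly one $q_j = \ell' > 0$ and the witness $\shb = (\sh\SQUEEZE{x}{\ell})\SQUEEZE{x}{j,\ell'}$, so that $x$ is threaded as an extra parameter into the $j$-th call and the identification $x = y$ happens at the leaf $\rsh_j\SQUEEZE{x}{\ell'}$. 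You do flag this propagation mechanism in your closing paragraph, so the idea is present, but it should be promoted from a remark about bookkeeping to a full second sub-case with its own chain of equivalences (the paper also handles the converse direction separately, observing that by construction of $\Delta$ either all $q_i$ vanish or exactly one is positive, which routes each composite transition back to exactly one of the two sub-cases).
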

\begin{proof}
 Let $\sh \in \SL{}{\CENTAIL{\alpha}}$ with $\NOCALLS{\sh} = m$.
 Moreover, for each $1 \leq i \leq m$, let $\rsh_i \in \RSL{}{\CENTAIL{\alpha}}$ and $\rsh \DEFEQ \sh[\PS_1^{\sh} / \rsh_1, \ldots, \PS_m^{\sh} / \rsh_m]$.
 Assume $\OMEGA{\HATRACK}{(p_0,q_0,r_0)}{\rsh}$.
 By definition of $\Delta$, we either have $q_0 = 0$ or $q_0 > 0$.
 We proceed by case distinction.
 \emph{The case} $q_0 = 0$\qquad
 \begin{align*}
                      & \OMEGA{\HATRACK}{(p_0,0,r_0)}{\rsh} \\
    ~\Leftrightarrow~ & \left[ \text{Definition of}~\Delta~\text{(only first case applicable)} \right] \\
                      & \OMEGA{\HASH{\sha}}{p_0}{\rsh} ~\text{and}~ \OMEGA{\HATRACK}{r_0}{\rsh} \\
    ~\Leftrightarrow~ & \left[ \text{Compositionality of}~\HASH{\sha},\HATRACK \right] \\
                      & \exists p_1,\ldots,p_m \in Q_{\HASH{\sha}} ~.~ \exists r_1,\ldots,r_m \in Q_{\HATRACK} ~.~ \\
                      & \qquad \MOVE{\HASH{\sha}}{p_0}{\sh}{p_1 \ldots p_m} ~\text{and}~ \forall 1 \leq i \leq m ~.~ \OMEGA{\HASH{\sha}}{p_i}{\rsh_i} \\
                      & \qquad \text{and}~ \MOVE{\HATRACK}{r_0}{\sh}{r_1 \ldots r_m} ~\text{and}~ \forall 1 \leq i \leq m ~.~ \OMEGA{\HATRACK}{r_i}{\rsh_i} \\
    ~\Leftrightarrow~ & \left[ \text{choose}~q_1 = \ldots = q_m = 0 \right] \\
                      & \exists p_1,\ldots,p_m \in Q_{\HASH{\sha}} ~.~ \exists r_1,\ldots,r_m \in Q_{\HATRACK} ~.~ \\
                      & \qquad \MOVE{\HASH{\sha}}{p_0}{\sh}{p_1 \ldots p_m} ~\text{and}~ \forall 1 \leq i \leq m ~.~ \OMEGA{\HASH{\sha}}{p_i}{\rsh_i} \\
                      & \qquad \text{and}~ \MOVE{\HATRACK}{r_0}{\sh}{r_1 \ldots r_m} ~\text{and}~ \forall 1 \leq i \leq m ~.~ \OMEGA{\HATRACK}{r_i}{\rsh_i} \\
                      & \qquad \text{and}~ q_0 = q_1 = \ldots = q_m = 0 \\
    ~\Leftrightarrow~ & \left[ \text{Definition of}~\Delta \right] \\
                      & \exists (p_1,q_1,r_1),\ldots,(p_m,q_m,r_m) \in Q ~.~ \\
                      & \qquad \text{and}~ \MOVE{\HASH{\exists z. \sha}}{(p_0,q_0,r_0)}{\sh}{(p_1,q_1,r_1) \ldots (p_m,q_m,r_1)} \\
                      & \qquad \text{and}~ \forall 1 \leq i \leq m ~.~ \OMEGA{\HATRACK}{(p_i,q_i,r_i)}{\rsh_i}.
 \end{align*}
 \emph{The case} $q_0 = \ell > 0$\qquad
 \begin{align*}
                      & \OMEGA{\HASH{\exists z. \sha}}{(p_0,\ell,r_0)}{\rsh} \\
    ~\Leftrightarrow~ & \left[ \text{Definition of}~\Delta~\text{(only third case applicable)} \right] \\
                      &  \exists y \in \VAR(\rsh) ~.~ \OMEGA{\HASH{\sha}}{p_0}{(\rsh\SQUEEZE{x}{\ell}) : \{ x = y \}} \tag{$\dag$} \\
                      &  \qquad \text{and}~ \OMEGA{\HATRACK}{r_0}{(\rsh\SQUEEZE{x}{\ell}) : \{ x = y \}} \\
 \end{align*}
 Now, since $\VAR(\rsh)$ can be partitioned into $\VAR(\sh)$ and, for each $1 \leq k \leq m$, $\VAR(\rsh_k) \setminus \VAR(\sh)$, variable $y$ occurs in exactly one of these sets.
 First, assume $y \in \VAR(\sh)$. Then
 \begin{align*}
    (\dag) ~\Leftrightarrow~ & \left[ y \in \VAR(\sh) \right] \\
                             &  \exists y \in \VAR(\sh) ~.~ \OMEGA{\HASH{\sha}}{p_0}{(\rsh\SQUEEZE{x}{\ell}) : \{ x = y \}} \\
                             &  \qquad \text{and}~ \OMEGA{\HATRACK}{r_0}{(\rsh\SQUEEZE{x}{\ell}) : \{ x = y \}} \\
           ~\Leftrightarrow~ & \left[ \text{Compositionality of}~\HASH{\sha},\HATRACK \right] \\
                             &  \exists y \in \VAR(\sh) ~.~ \exists p_1,\ldots,p_m \in Q_{\HASH{\sha}} ~.~ \exists r_1, \ldots, r_m \in Q_{\HATRACK} ~.~ \\
                             &  \qquad \MOVE{\HASH{\sha}}{p_0}{(\sh\SQUEEZE{x}{\ell}) : \{ x = y \}}{p_1 \ldots p_m} \\
                             &  \qquad \qquad \text{and}~ \forall 1 \leq i \leq m ~.~ \OMEGA{\HASH{\sha}}{p_i}{\rsh_i} \\
                             &  \qquad \text{and}~ \MOVE{\HATRACK}{r_0}{(\sh\SQUEEZE{x}{\ell}) : \{ x = y \}}{r_1 \ldots r_m} \\
                             &  \qquad \qquad \text{and}~ \forall 1 \leq i \leq m ~.~ \OMEGA{\HATRACK}{r_i}{\rsh_i} \\
           ~\Leftrightarrow~ & \left[ \text{choose}~q_1 = \ldots = q_m = 0 \right] \\
                             &  \exists y \in \VAR(\sh) ~.~ \exists p_1,\ldots,p_m \in Q_{\HASH{\sha}} ~.~ \exists r_1, \ldots, r_m \in Q_{\HATRACK} ~.~ \\
                             &  \qquad \MOVE{\HASH{\sha}}{p_0}{(\sh\SQUEEZE{x}{\ell}) : \{ x = y \}}{p_1 \ldots p_m} \\
                             &  \qquad \qquad \text{and}~ \forall 1 \leq i \leq m ~.~ \OMEGA{\HASH{\sha}}{p_i}{\rsh_i} \\
                             &  \qquad \text{and}~ \MOVE{\HATRACK}{r_0}{(\sh\SQUEEZE{x}{\ell}) : \{ x = y \}}{r_1 \ldots r_m} \\
                             &  \qquad \qquad \text{and}~ \forall 1 \leq i \leq m ~.~ \OMEGA{\HATRACK}{r_i}{\rsh_i} \\
                             &  \qquad \text{and}~ q_1 = \ldots = q_m = 0 \\
           ~\Leftrightarrow~ & \left[ \exists x \exists y \equiv \exists y \exists x,~\text{Definition of}~Q \right] \\
                             &  \exists (p_1,q_1,r_1),\ldots,(p_m,q_m,r_m) \in Q ~.~ \exists y \in \VAR(\sh) ~.~ \\
                             &  \qquad \MOVE{\HASH{\sha}}{p_0}{(\sh\SQUEEZE{x}{\ell}) : \{ x = y \}}{p_1 \ldots p_m} \\
                             &  \qquad \qquad \text{and}~ \forall 1 \leq i \leq m ~.~ \OMEGA{\HASH{\sha}}{p_i}{\rsh_i} \\
                             &  \qquad \text{and}~ \MOVE{\HATRACK}{r_0}{(\sh\SQUEEZE{x}{\ell}) : \{ x = y \}}{r_1 \ldots r_m} \\
                             &  \qquad \qquad \text{and}~ \forall 1 \leq i \leq m ~.~ \OMEGA{\HATRACK}{r_i}{\rsh_i} \\
                             &  \qquad \text{and}~ q_1 = \ldots = q_m = 0 \\
           ~\Leftrightarrow~ & \left[ \text{Definition}~\Delta,~y \in \VAR(\sh) \right] \\
                             &  \exists (p_1,q_1,r_1),\ldots,(p_m,q_m,r_m) \in Q ~.~ \tag{$\clubsuit$} \\
                             &  \qquad \MOVE{\HASH{\exists z. \sha}}{(p_0,q_0,r_0)}{\sh}{(p_1,q_1,r_1) \ldots (p_m,q_m,r_m)} \\
                             &  \qquad \text{and}~ \forall 1 \leq i \leq m ~.~ \OMEGA{\HASH{\exists z. \sha}}{(p_i,r_i,q_i)}{\rsh_i} \\
                             &  \qquad \text{and}~ q_1 = \ldots = q_m = 0 \\
               ~\Rightarrow~ & \left[ A \wedge B \rightarrow A \right] \\
                             &  \exists (p_1,q_1,r_1),\ldots,(p_m,q_m,r_m) \in Q ~.~ \\
                             &  \qquad \MOVE{\HASH{\exists z. \sha}}{(p_0,q_0,r_0)}{\sh}{(p_1,q_1,r_1) \ldots (p_m,q_m,r_m)} \\
                             &  \qquad \text{and}~ \forall 1 \leq i \leq m ~.~ \OMEGA{\HASH{\exists z. \sha}}{(p_i,r_i,q_i)}{\rsh_i} \\
                             &  \qquad \text{and}~ q_1 = \ldots = q_m = 0.
 \end{align*}
 Note that equivalence holds up to the step marked with $(\clubsuit)$ only. The last step works is only in one direction.
 We deal with the backwards direction after considering the second case.
 Thus assume $y \in \VAR(\rsh_{j}) \setminus \VAR(\sh)$ for some $1 \leq j \leq m$. Then
 \begin{align*}
    (\dag) ~\Leftrightarrow~ & \left[ y \in \VAR(\rsh_{j}) \setminus \VAR(\sh) \right] \\
                             &  \exists y \in \VAR(\rsh_{j}) \setminus \VAR(\sh) ~.~ \OMEGA{\HASH{\sha}}{p_0}{(\rsh\SQUEEZE{x}{\ell}) : \{ x = y \}} \\
                             &  \qquad \text{and}~ \OMEGA{\HATRACK}{r_0}{(\rsh\SQUEEZE{x}{\ell}) : \{ x = y \}} \\
           ~\Leftrightarrow~ & \left[ \text{Compositionality of}~\HASH{\sha},\HATRACK \right] \\
                             &  \exists \ell' . \exists y \in \VAR(\rsh_{j}) \setminus \VAR(\sh) ~.~ \exists p_1,\ldots,p_m \in Q_{\HASH{\sha}} ~.~ \\
                             &  \exists r_1,\ldots,r_m \in Q_{\HATRACK} ~.~ \\
                             &  \qquad \MOVE{\HASH{\sha}}{p_0}{(\sh\SQUEEZE{x}{\ell})\SQUEEZE{x}{j,\ell'}}{p_1 \ldots p_m} \\
                             &  \qquad \text{and}~ \MOVE{\HATRACK}{r_0}{(\sh\SQUEEZE{x}{\ell})\SQUEEZE{x}{j,\ell'}}{r_1 \ldots r_m} \\
                             &  \qquad \text{and}~ \forall 1 \leq i \leq m ~.~ i \neq j ~\text{implies}~ \\
                             &  \qquad \qquad \qquad \OMEGA{\HASH{\sha}}{p_i}{\rsh_i} ~\text{and}~ \OMEGA{\HATRACK}{r_i}{\rsh_i} \\
                             &  \qquad \text{and}~ \OMEGA{\HASH{\sha}}{p_j}{\rsh_j\SQUEEZE{x}{\ell'}} ~\text{and}~ \OMEGA{\HASH{\sha}}{r_j}{\rsh_j\SQUEEZE{x}{\ell'}} \\
           ~\Leftrightarrow~ & \left[ \text{choose}~q_j ~\text{and for each}~i \neq j~\text{choose}~q_i = 0 \right] \\
                             &  \exists y \in \VAR(\rsh_{j}) \setminus \VAR(\sh) ~.~ \exists p_1,\ldots,p_m \in Q_{\HASH{\sha}} ~.~ \\
                             &  \exists r_1,\ldots,r_m \in Q_{\HATRACK} ~.~ \exists q_1,\ldots,q_m \in \{0,\ldots,\alpha\} \\
                             &  \qquad \MOVE{\HASH{\sha}}{p_0}{(\sh\SQUEEZE{x}{\ell})\SQUEEZE{x}{j,q_j}}{p_1 \ldots p_m} \\
                             &  \qquad \text{and}~ \MOVE{\HATRACK}{r_0}{(\sh\SQUEEZE{x}{\ell})\SQUEEZE{x}{j,q_j}}{r_1 \ldots r_m} \\
                             &  \qquad \text{and}~ \forall 1 \leq i \leq m ~.~ i \neq j ~\text{implies}~ \\
                             &  \qquad \qquad \qquad \OMEGA{\HASH{\sha}}{p_i}{\rsh_i} ~\text{and}~ \OMEGA{\HATRACK}{r_i}{\rsh_i} \\
                             &  \qquad \text{and}~ \OMEGA{\HASH{\sha}}{p_j}{\rsh_j\SQUEEZE{x}{q_j}} ~\text{and}~ \OMEGA{\HASH{\sha}}{r_j}{\rsh_j\SQUEEZE{x}{q_j}} \\
                             &  \qquad \text{and}~ q_j > 0 ~\text{and}~ \forall i \neq j ~.~ q_i = 0 \\
           ~\Leftrightarrow~ & \left[ \exists x \exists y \equiv \exists y \exists x,~\text{Definition of}~Q \right] \\
                             &  \exists (p_1,q_1,r_1),\ldots,(p_m,q_m,r_m) \in Q ~.~ \\
                             &  \exists y \in \VAR(\rsh_{j}) \setminus \VAR(\sh) ~.~ \\
                             &  \qquad \MOVE{\HASH{\sha}}{p_0}{(\sh\SQUEEZE{x}{\ell})\SQUEEZE{x}{j,q_j}}{p_1 \ldots p_m} \\
                             &  \qquad \text{and}~ \MOVE{\HATRACK}{r_0}{(\sh\SQUEEZE{x}{\ell})\SQUEEZE{x}{j,q_j}}{r_1 \ldots r_m} \\
                             &  \qquad \text{and}~ \forall 1 \leq i \leq m ~.~ i \neq j ~\text{implies}~ \\
                             &  \qquad \qquad \qquad \OMEGA{\HASH{\sha}}{p_i}{\rsh_i} ~\text{and}~ \OMEGA{\HATRACK}{r_i}{\rsh_i} \\
                             &  \qquad \text{and}~ \OMEGA{\HASH{\sha}}{p_j}{\rsh_j\SQUEEZE{x}{q_j}} ~\text{and}~ \OMEGA{\HASH{\sha}}{r_j}{\rsh_j\SQUEEZE{x}{q_j}} \\
                             &  \qquad \text{and}~ q_j > 0 ~\text{and}~ \forall i \neq j ~.~ q_i = 0 \\
           ~\Leftrightarrow~ & \left[ \text{Definition of}~\Delta~\text{(second case for $\sh$)} \right] \\
                             &  \exists (p_1,q_1,r_1),\ldots,(p_m,q_m,r_m) \in Q ~.~ \tag{$\spadesuit$}\\
                             &  \qquad \MOVE{\HASH{\exists z. \sha}}{(p_0,q_1,r_1)}{\sh}{(p_1,q_1,r_1) \ldots (p_m,q_m,r_m)} \\
                             &  \qquad \text{and}~ \forall 1 \leq i \leq m ~.~ \OMEGA{\HASH{\exists z. \sha}}{(p_i,q_i,r_i)}{\rsh_i} \\
                             &  \qquad \text{and}~ q_j > 0 ~\text{and}~ \forall i \neq j ~.~ q_i = 0 \\
           ~\Rightarrow~     & \left[ A \wedge B \rightarrow A \right] \\
                             &  \exists (p_1,q_1,r_1),\ldots,(p_m,q_m,r_m) \in Q ~.~ \\
                             &  \qquad \MOVE{\HASH{\exists z. \sha}}{(p_0,q_1,r_1)}{\sh}{(p_1,q_1,r_1) \ldots (p_m,q_m,r_m)} \\
                             &  \qquad \text{and}~ \forall 1 \leq i \leq m ~.~ \OMEGA{\HASH{\exists z. \sha}}{(p_i,q_i,r_i)}{\rsh_i}.
 \end{align*}
 Again, the last step is in one direction only.
 It remains to show the converse direction.
 Assume, for $q_0 = \ell > 0$,
 \begin{align*}
                             &  \exists (p_1,q_1,r_1),\ldots,(p_m,q_m,r_m) \in Q ~.~ \\
                             &  \qquad \MOVE{\HASH{\exists z. \sha}}{(p_0,q_1,r_1)}{\sh}{(p_1,q_1,r_1) \ldots (p_m,q_m,r_m)} \\
                             &  \qquad \text{and}~ \forall 1 \leq i \leq m ~.~ \OMEGA{\HASH{\exists z. \sha}}{(p_i,q_i,r_i)}{\rsh_i}.
 \end{align*}
 Then, by construction of $\Delta$ either $q_1 = q_2 = \ldots = q_m = 0$ or exactly one $q_j > 0$, for some $1 \leq j \leq m$.
 If $q_1 = q_2 = \ldots = q_m = 0$ then only the third case of $\Delta$ is applicable to $\sh$ (and there exists some suitable $y \in \VAR(\sh)$).
 Thus, $(\clubsuit)$ holds. As shown before, this is equivalent to $\OMEGA{\HATRACK}{q_0}{\rsh}$.
 If there exists exactly one $q_j$ with $q_j > 0$ then $(\spadesuit)$ holds.
 As shown above, this case is equivalent to $\OMEGA{\HATRACK}{q_0}{\rsh}$.
 Hence, in each case, $\HASH{\exists z.\sha}$ satisfies the compositionality property.
 \qed
\end{proof}
\begin{lemma}
 $L(\HASH{\exists z.\sha}) ~=~ \USET{\exists x . \sha}{\SRD}{\CENTAIL{\alpha}}$, where $x = \PROJ{\FV{0}{\sha}}{k}$ for some fixed position $k$.
\end{lemma}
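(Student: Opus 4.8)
The plan is to fix an arbitrary reduced symbolic heap $\rsh \in \RSL{}{\CENTAIL{\alpha}}$ with $\NOFV{\rsh} = \NOFV{\exists x.\sha}$ and to establish the chain of equivalences
$\rsh \in L(\HASH{\exists z.\sha}) \Leftrightarrow \rsh \ENTAIL{\SRD} \exists x.\sha \Leftrightarrow \rsh \in \USET{\exists x.\sha}{\SRD}{\CENTAIL{\alpha}}$.
The last equivalence is immediate: since $\rsh$ is reduced it is well-determined and has a unique tight model up to isomorphism, so by Lemma~\ref{thm:symbolic-heaps:semantics} (and the one-model argument underlying Lemma~\ref{thm:entailment:predicates}) the "for all models there is an unfolding" reading of $\rsh \ENTAIL{\SRD} \exists x.\sha$ collapses to "there is an unfolding $\rsh' \in \CALLSEM{\exists x.\sha}{\SRD}$ with $\rsh \ENTAIL{\emptyset} \rsh'$", which together with $\NOFV{\rsh} = \NOFV{\exists x.\sha}$ is exactly membership in $\USET{\exists x.\sha}{\SRD}{\CENTAIL{\alpha}}$ by Definition~\ref{def:entailment:heapmodels}. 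Hence the substance is the first equivalence.

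First I would exploit that $\rsh$ is reduced, i.e. $\NOCALLS{\rsh} = 0$, to prune the transition relation of $\HASH{\exists z.\sha}$: the tuple of input states is empty, so the second case in the definition of $\Delta$ (which requires a predicate call carrying a positive position component) is vacuous, and the first case forces the position component $q_0 = 0$, which by $k \geq 1$ can never be a final state. Therefore $\rsh$ is accepted exactly when the third case applies with position component $k$, i.e. there is a variable $y \in \VAR(\rsh)$, a final state $p_0 \in F_{\HASH{\sha}}$ and some $r_0 \in Q_{\HATRACK}$ with $\OMEGA{\HASH{\sha}}{p_0}{(\rsh\SQUEEZE{x}{k}):\{x=y\}}$ and $\OMEGA{\HATRACK}{r_0}{(\rsh\SQUEEZE{x}{k}):\{x=y\}}$. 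The tracking transition always exists, since for every reduced symbolic heap the tracking automaton has a transition into the state determined by its definitely allocated variables and its definite (in)equalities (cf. Definition~\ref{def:zoo:track-automaton}); and $(\rsh\SQUEEZE{x}{k}):\{x=y\}$ is a reduced, well-determined symbolic heap over $\NOFV{\sha}$ free variables (well-determinedness is inherited from $\rsh$ because the fresh variable $x$ is forced equal to $y$), hence lies in $\RSL{}{\CENTAIL{\alpha}}$. Consequently $\rsh \in L(\HASH{\exists z.\sha})$ holds iff there is $y \in \VAR(\rsh)$ with $(\rsh\SQUEEZE{x}{k}):\{x=y\} \in L(\HASH{\sha})$, and by the inductive assumption $L(\HASH{\sha}) = \USET{\sha}{\SRD}{\CENTAIL{\alpha}}$ this is iff there is $y \in \VAR(\rsh)$ such that $(\rsh\SQUEEZE{x}{k}):\{x=y\}$ entails some unfolding of $\sha$.

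Then I would close the loop using the structural Lemma~\ref{thm:entailment:existential:structural-auxiliary}. Since $\rsh$ is reduced and well-determined and $x$ is fresh, that lemma — applied with the new free variable inserted at position $k$ rather than appended, which the remark following the lemma licenses — gives that $\rsh \ENTAIL{\SRD} \exists x.\sha$ iff there is $y \in \VAR(\rsh)$ with $\exists \BV{\rsh}.\SPATIAL{\rsh}:\PURE{\rsh}\cup\{x=y\} \ENTAIL{\SRD} \sha$; and $(\rsh\SQUEEZE{x}{k}):\{x=y\}$ differs from $\exists \BV{\rsh}.\SPATIAL{\rsh}:\PURE{\rsh}\cup\{x=y\}$ only in declaring $x$ free, so by Lemma~\ref{thm:symbolic-heaps:fv-coincidence} the two entail exactly the same symbolic heaps. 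Combining the two reductions yields $\rsh \in L(\HASH{\exists z.\sha})$ iff $\rsh \ENTAIL{\SRD} \exists x.\sha$, and thus the claimed identity of languages. I expect the main obstacle to be bookkeeping rather than mathematics: matching the $\USET{\sha}{\SRD}{\CENTAIL{\alpha}}$-membership of $(\rsh\SQUEEZE{x}{k}):\{x=y\}$ precisely to the hypothesis of Lemma~\ref{thm:entailment:existential:structural-auxiliary} (in particular tracking the position of the inserted free variable and replaying the "entails an unfolding" versus "$\ENTAIL{\SRD}\sha$" equivalence via Lemma~\ref{thm:symbolic-heaps:semantics}), and disposing of the corner case $\NOFV{\rsh} \geq \alpha$ through the $q_{>\alpha}$ device of Remark~\ref{remark:entailment:top-level:alpha}; the semantic content is entirely carried by the already-established structural lemma.
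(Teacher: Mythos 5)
Your proposal is correct and follows essentially the same route as the paper's proof: for a reduced $\rsh$ only the third case of $\Delta_{\HASH{\exists z.\sha}}$ can reach a final state (position component $k\geq 1$), the tracking component is always satisfiable, the induction hypothesis $L(\HASH{\sha})=\USET{\sha}{\SRD}{\CENTAIL{\alpha}}$ reduces acceptance to $\exists y\in\VAR(\rsh).\,\rsh\SQUEEZE{x}{k}:\{x=y\}\ENTAIL{\SRD}\sha$, and Lemma~\ref{thm:entailment:existential:structural-auxiliary} (with the positional remark) closes the loop. Your extra care in justifying the final step from $\rsh\ENTAIL{\SRD}\exists x.\sha$ to membership in $\USET{\exists x.\sha}{\SRD}{\CENTAIL{\alpha}}$ via well-determinedness is a harmless elaboration of what the paper leaves implicit.
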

\begin{proof}
  Let $\rsh \in \RSL{}{\CENTAIL{\alpha}}$. Then
  \begin{align*}
                       & \rsh \in L(\HASH{\exists z.\sha}) \\
     ~\Leftrightarrow~ & \left[ \text{Definition}~L(\HASH{\exists z.\sha}) \right] \\
                       & \exists (p_0,q_0,r_0) \in F ~.~ \OMEGA{\HASH{\exists z. \sha}}{(p_0,q_0,r_0)}{\rsh} \\
     ~\Leftrightarrow~ & \left[ \text{Definition of}~F \right] \\
                       & \exists p_0 \in F_{\HASH{\sha}} . \exists r_0 \in Q_{\HATRACK} ~.~ \OMEGA{\HATRACK}{(p_0,k,r_0)}{\rsh} \\
     ~\Leftrightarrow~ & \left[ \text{Definition of}~\Delta~\text{(only third case is applicable)} \right] \\
                       & \exists p_0 \in F_{\HASH{\sha}} ~.~ \exists r_0 \in Q_{\HATRACK} ~.~ \exists y \in \VAR(\rsh) ~.~ \\
                       & \qquad \OMEGA{\HASH{\sha}}{p_0}{\rsh\SQUEEZE{x}{k} : \{y = x\}} ~\text{and}~ \OMEGA{\HATRACK}{r_0}{\rsh\SQUEEZE{x}{k} : \{y = x\}} \\
     ~\Leftrightarrow~ & \left[ \exists x \exists y \equiv \exists y \exists x \right] \\
                       & \exists y \in \VAR(\rsh) ~.~ \exists p_0 \in F_{\HASH{\sha}} ~.~ \exists r_0 \in Q_{\HATRACK} ~.~ \\
                       & \qquad \OMEGA{\HASH{\sha}}{p_0}{\rsh\SQUEEZE{x}{k} : \{y = x\}} ~\text{and}~ \OMEGA{\HATRACK}{r_0}{\rsh\SQUEEZE{x}{k} : \{y = x\}} \\
     ~\Leftrightarrow~ & \left[ \text{for each}~\rsha~\text{there exists a}~r_0 \in Q_{\HATRACK} \right] \\
                       & \exists y \in \VAR(\rsh) ~.~ \exists p_0 \in F_{\HASH{\sha}} ~.~ \OMEGA{\HASH{\sha}}{p_0}{\rsh\SQUEEZE{x}{k} : \{y = x\}} \\
     ~\Leftrightarrow~ & \left[ \text{Definition of}~L(\HASH{\sha})  \right] \\
                       & \exists y \in \VAR(\rsh) ~.~ \left(\rsh\SQUEEZE{x}{k} : \{y = x\}\right) \in L(\HASH{\sha})  \\
     ~\Leftrightarrow~ & \left[ L(\HASH{\sha}) = \USET{\sha}{\SRD}{\CENTAIL{\alpha}}  \right] \\
                       & \exists y \in \VAR(\rsh) ~.~ \rsh\SQUEEZE{x}{k} : \{y = x\} \ENTAIL{\SRD} \sha  \\
     ~\Leftrightarrow~ & \left[ \text{Lemma}~\ref{thm:entailment:existential:structural-auxiliary} \right] \\
                       & \rsh \ENTAIL{\SRD} \exists x . \sha \\
     ~\Leftrightarrow~ & \left[ \text{Definition of}~\USET{\exists x .\sha}{\SRD}{\CENTAIL{\alpha}}  \right] \\
                       & \rsh \in \USET{\exists x . \sha}{\SRD}{\CENTAIL{\alpha}}. 
  \end{align*}
  \qed
\end{proof}


%
%
%
%
%
%
%
\section{Complexity of Entailment} \label{app:complexity}

This section provides a detailed complexity analysis of 
Algorithm~\ref{alg:entailment:decision-procedure}.
In particular, this includes a proof of Theorem~\ref{thm:entailment:complexity}

\subsection{Upper Complexity Bound for Entailments}

For the remainder of this section, we fix some notation used in
Algorithm~\ref{alg:entailment:decision-procedure}:
Let $\sh$, $\HASH{\sha}$, etc., be as in
Algorithm~\ref{alg:entailment:decision-procedure}.
Further, let $\SRD$ be an SID
such that the unfoldings of each predicate call can be accepted by a heap automaton over $\CENTAIL{\alpha}$. 
Moreover, let $k \DEFEQ \SIZE{\SRD} + \SIZE{\sh} + \SIZE{\sha}$
and $M \leq k$ denote the maximal number of predicate calls 
in $\sh$ and any symbolic heap in $\SRD$. 

We first analyze run-time of Algorithm~\ref{alg:entailment:decision-procedure}
for arbitrary SIDs.
\begin{lemma} \label{thm:entailment:general-upper-bound}
  Algorithm~\ref{alg:entailment:decision-procedure}
  decides whether $\sh \ENTAIL{\SRD} \sha$ holds in%
  \begin{align*}
   & \BIGO{ 2^{\text{poly}(k)} \cdot \left(2^{\SIZE{Q_{\HASH{\sha}}}}\right)^{2(M+1)}
         \cdot \SIZE{\Delta_{\HASH{\sha}}} },
  \end{align*}
  where $\text{poly}(k)$ denotes some polynomial function in $k$.
\end{lemma}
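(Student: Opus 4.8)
The plan is to trace through Algorithm~\ref{alg:entailment:decision-procedure} line by line and charge each step to the parameters already fixed, namely $k = \SIZE{\SRD} + \SIZE{\sh} + \SIZE{\sha}$, the state count $\SIZE{Q_{\HASH{\sha}}}$, the transition-cost $\SIZE{\Delta_{\HASH{\sha}}}$, and $M \leq k$. Steps 1 and 2 are cheap: adding a fresh rule is linear, and by the Refinement Theorem (Theorem~\ref{thm:compositional:refinement}) combined with $\HASAT$ (Theorem~\ref{thm:zoo:sat:property}), which has $2^{2\alpha^2+\alpha}$ states, the refined SID $\SRDALT$ has size bounded by $\SIZE{\SRD}\cdot\SIZE{Q_{\HASAT}}^{M+1}$, which is $2^{\text{poly}(k)}$ since $\alpha \le k$ and $M \le k$; moreover $\SRDALT$ is computable within that same bound. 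Step 3, constructing $\HASH{\sha}$ via Theorem~\ref{thm:entailment:top-level}, is assumed to deliver an automaton whose parameters we simply carry symbolically as $\SIZE{Q_{\HASH{\sha}}}$ and $\SIZE{\Delta_{\HASH{\sha}}}$.

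The heart of the bookkeeping is step~4, complementation via Lemma~\ref{thm:refinement:boolean}. The construction in the proof of that lemma (specifically Lemma~\ref{thm:refinement:complement}) uses the powerset $2^{Q_{\HASH{\sha}}}$ as the new state space, so $\SIZE{Q_{\overline{\HASH{\sha}}}} \le 2^{\SIZE{Q_{\HASH{\sha}}}}$, and deciding the complement transition relation requires, for each of the up to $M$ children, ranging over subsets of states and invoking $\Delta_{\HASH{\sha}}$, giving $\SIZE{\Delta_{\overline{\HASH{\sha}}}} \le (2^{\SIZE{Q_{\HASH{\sha}}}})^{M+1}\cdot\SIZE{\Delta_{\HASH{\sha}}}$. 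Then step~5 runs Algorithm~\ref{alg:on-the-fly-refinement} on $\SRDALT$ (as the SID), the fresh predicate $\PS$, and $\overline{\HASH{\sha}}$; the complexity bound established for that algorithm in the main text is $\BIGO{\SIZE{\SRDALT}\cdot\SIZE{Q_{\overline{\HASH{\sha}}}}^{M+1}\cdot\SIZE{\Delta_{\overline{\HASH{\sha}}}}}$. Substituting the three bounds $\SIZE{\SRDALT} \le 2^{\text{poly}(k)}$, $\SIZE{Q_{\overline{\HASH{\sha}}}} \le 2^{\SIZE{Q_{\HASH{\sha}}}}$, and $\SIZE{\Delta_{\overline{\HASH{\sha}}}} \le (2^{\SIZE{Q_{\HASH{\sha}}}})^{M+1}\cdot\SIZE{\Delta_{\HASH{\sha}}}$ and collecting powers of $2^{\SIZE{Q_{\HASH{\sha}}}}$ yields exactly
\begin{align*}
 \BIGO{2^{\text{poly}(k)}\cdot\left(2^{\SIZE{Q_{\HASH{\sha}}}}\right)^{2(M+1)}\cdot\SIZE{\Delta_{\HASH{\sha}}}},
\end{align*}
where the $2(M+1)$ exponent comes from one factor of $(M+1)$ in the state-space power and another in the transition-relation cost of the complemented automaton. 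Correctness of the algorithm itself — that it returns \enquote{yes} iff $\sh \ENTAIL{\SRD} \sha$ — is Theorem~\ref{thm:entailment:main}, which I may invoke; here I only need the running-time claim.

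The main obstacle I expect is being careful about \emph{which} $M$ appears: Algorithm~\ref{alg:on-the-fly-refinement} is run on $\SRDALT$, not $\SRD$, and $\SRDALT$ contains the extra rule for $\PS$ plus the refined copies; its maximal number of predicate calls per rule is still bounded by the original $M \le k$ (refinement by $\HASAT$ does not add predicate calls, and the fresh rule for $\PS$ has the same calls as $\sh$). I should state this explicitly so the exponent $2(M+1)$ is justified with the $M$ of the lemma statement rather than some larger quantity. A second minor point to handle cleanly is that $\SIZE{\sh}, \SIZE{\sha} \le k$ and $\alpha$ is at most the maximal arity occurring in $\SRD \cup \{\SRDRULE{\PS}{\sh}\}$, hence $\alpha \le k$, so every $\text{poly}(\alpha)$ or $2^{\text{poly}(\alpha)}$ quantity is absorbed into $2^{\text{poly}(k)}$; once these inequalities are recorded the rest is the routine arithmetic of combining exponential terms, which I would relegate to a short displayed calculation rather than belabor.
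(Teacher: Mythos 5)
Your proposal is correct and follows essentially the same route as the paper's proof: bound $\SIZE{\SRDALT}$ via the Refinement Theorem with $\HASAT$ to get $2^{\text{poly}(k)}$, bound the complemented automaton's state space by $2^{\SIZE{Q_{\HASH{\sha}}}}$ and its transition cost by $(2^{\SIZE{Q_{\HASH{\sha}}}})^{M+1}\cdot\SIZE{\Delta_{\HASH{\sha}}}$ via the powerset construction, and substitute into the running-time bound of Algorithm~\ref{alg:on-the-fly-refinement}. Your explicit remarks that $M$ is unchanged by the refinement and that $\alpha \le k$ are sound points the paper leaves implicit.
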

\begin{proof}
  Our previous complexity analysis of Algorithm~\ref{alg:on-the-fly-refinement} reveals that
  $\CALLSEM{\PS\T{x}}{\SRDALT} \cap L(\overline{\HASH{\sha}}) = \emptyset$ is decidable in
  \begin{align*}
    \BIGO{\SIZE{\SRDALT} \cdot \SIZE{Q_{\overline{\HASH{\sha}}}}^{M+1}
    \cdot \SIZE{\Delta_{\overline{\HASH{\sha}}}}}. \tag{$\clubsuit$}
  \end{align*}
  Regarding $\SIZE{\SRDALT}$, applying 
  the Refinement Theorem (Theorem~\ref{thm:compositional:refinement})
  to $\SRD \cup \{ \SRDRULE{\PS}{\sh} \}$ and $\HASAT$ (cf. Theorem~\ref{thm:zoo:sat:property})
  yields an SID $\SRDALT$ of size
  \begin{align*}
    \SIZE{\SRDALT} \leq c \cdot \SIZE{\SRD} \cdot 2^{\SIZE{\sh}^2} \cdot 2^{2\alpha^2 + \alpha} \leq 2^{\text{poly}(k)}~,
  \end{align*}
  for some positive constant $c$.
  Then $\SRDALT$ is computable in $\BIGO{2^{\text{poly}(k)}}$.
  Furthermore, $\overline{\HASH{\sha}}$ is obtained from complementation of $\HASH{\sha}$.
  Thus, by the construction to prove Lemma~\ref{thm:refinement:boolean}, we obtain that
  $\SIZE{Q_{\overline{\HASH{\sha}}}} \leq 2^{\SIZE{Q_{\HASH{\sha}}}}$ and
  that $\Delta_{\overline{\HASH{\sha}}}$ is decidable in
  $\left(2^{\SIZE{Q_{\HASH{\sha}}}}\right)^{M+1} \cdot \SIZE{\Delta_{\HASH{\sha}}}$.
  Putting both into $(\clubsuit)$ yields the result.
\qed
\end{proof}
Towards a more fine-grained analysis, recall from Definition~\ref{def:alpha-bounded} our assumption that SIDs are $\alpha$--bounded.
Further, we assume the arity of points-to assertions
$\PTS{x}{\T{y}}$, i.e., $\SIZE{\T{y}}$, to be bounded by some 
$\gamma \geq 0$.
Our next observation is that heap automata constructed 
for arbitrary determined symbolic heaps according to
Theorem~\ref{thm:entailment:top-level}
satisfy the same constraints.
Formally,
\begin{lemma} \label{thm:entailment:top-level-complexity}
  Let $\SRD$ be an $\alpha$--bounded SID and $\sha \in \SL{\SRD}{\CENTAIL{\alpha}}$.
  Then a heap automaton $\HASH{\sha}$ accepting $\USET{\sha}{\SRD}{\CENTAIL{\alpha}}$
  can be constructed
  such that $\Delta_{\HASH{\sha}}$ is decidable in $\BIGO{2^{\text{poly}(k)}}$
  and $\SIZE{Q_{\HASH{\sha}}} \leq 2^{\text{poly}(\alpha)}$.
\end{lemma}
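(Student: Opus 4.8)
The plan is to proceed by structural induction on the syntax of the determined symbolic heap $\sha$, following exactly the case split already used in the proof of Theorem~\ref{thm:entailment:top-level} (Appendix~\ref{app:entailment:top-level}), and to carry the two quantitative invariants --- namely $\SIZE{Q_{\HASH{\sha}}} \leq 2^{\text{poly}(\alpha)}$ and ``$\Delta_{\HASH{\sha}}$ decidable in $\BIGO{2^{\text{poly}(k)}}$ where $k$ is the combined size of $\SRD$, $\sh$, $\sha$'' --- through every case. The induction hypothesis is applied to the immediate syntactic subformulas; for each subformula $\sha'$ we write $k' = \SIZE{\SRD} + \SIZE{\sh} + \SIZE{\sha'} \leq k$, so any bound of the form $2^{\text{poly}(k')}$ is absorbed into $2^{\text{poly}(k)}$, and likewise $2^{\text{poly}(\alpha)}$ bounds compose under products and fixed-degree powers.

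First I would handle the base cases. For a predicate call $\PS\T{x}$, the required automaton exists by the $\alpha$--boundedness assumption (Definition~\ref{def:alpha-bounded}), which directly gives $\SIZE{Q_{\HA{A}_\PS}} \leq 2^{\text{poly}(\alpha)}$ and $\Delta_{\HA{A}_\PS}$ decidable in $\BIGO{2^{\text{poly}(\SIZE{\SRD})}} \subseteq \BIGO{2^{\text{poly}(k)}}$. For the empty heap $\EMP$ and the points-to assertion $\PTS{x}{\T{y}}$, the constructions in Sections~\ref{sec:entailment:emp} and~\ref{sec:entailment:points-to} produce automata whose state space is essentially the set of candidate reduced ``shapes'' $\{\rsha_{\T{u}}\} \cup \{\EMP\}$ over $\alpha$ free variables; its size is at most $1 + (1+\alpha)^{\SIZE{\T{y}}} \leq 1 + (1+\alpha)^{\gamma}$, polynomial in $\alpha$ since $\gamma$ is a constant, hence $\leq 2^{\text{poly}(\alpha)}$; the transition relation is an entailment check between small reduced symbolic heaps, decidable in polynomial time by Remark~\ref{rem:closure}.

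Then I would treat the three composite cases, in each case reading off the state-space blow-up and the $\Delta$-decision cost from the explicit product constructions in Appendix~\ref{app:entailment:top-level}. For $\sha_1 \SEP \sha_2$ (Section~\ref{sec:entailment:sepcon}), the state space is a product $2^{\T{u}} \times Q_{\HASH{\sha_1}} \times 2^{\T{u}} \times Q_{\HASH{\sha_2}} \times 2^{\T{u}}$, of size $\leq 2^{3\alpha} \cdot 2^{\text{poly}(\alpha)} \cdot 2^{\text{poly}(\alpha)} = 2^{\text{poly}(\alpha)}$ by the induction hypothesis; and a transition boils down to guessing a split $\KERNEL(\sh) \equiv \sha_p \uplus \sha_q$ among the $\leq 2^{\SIZE{\sh}}$ sub-heaps, each feeding one call to $\Delta_{\HASH{\sha_1}}$ and one to $\Delta_{\HASH{\sha_2}}$, yielding cost $\BIGO{2^{\text{poly}(k)} \cdot (2^{\text{poly}(k_1)} + 2^{\text{poly}(k_2)})} = \BIGO{2^{\text{poly}(k)}}$. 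For a pure-formula extension $\sh : \Pi$ (Section~\ref{sec:entailment:pure}) the state space only gains a factor of $2$ and $\Delta$ adds polynomial overhead (comparing $\Pi$ with the tracked pure-formula set, cf.\ Remark~\ref{rem:closure}). For $\exists z\,.\,\sh$ (Section~\ref{sec:entailment:existential}) the state space is $Q_{\HASH{\sh}} \times \{0,\ldots,\alpha\} \times Q_{\HATRACK}$, bounded by $(\alpha+1) \cdot 2^{\text{poly}(\alpha)} \cdot 2^{2\alpha^2+\alpha} = 2^{\text{poly}(\alpha)}$ (using that $\HATRACK$ has $2^{2\alpha^2+\alpha}$ states), and $\Delta$ reduces to a call to $\Delta_{\HASH{\sh}}$ on a slightly modified symbolic heap $\sh\SQUEEZE{x}{k}$ plus polynomial tracking overhead. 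Repeated application of the existential case clears nested quantifiers.

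The main obstacle is not any single estimate but the bookkeeping that makes the induction close uniformly: one must verify that in \emph{every} composite case the parameter measuring $\Delta$-cost is exactly $k = \SIZE{\SRD}+\SIZE{\sh}+\SIZE{\sha}$ (not something that grows with the recursion depth), and that the accumulated polynomial-in-$\alpha$ exponents from the chain of product constructions --- in particular along a deeply nested $\SEP$/$\exists$ tree --- still collapse to a single $\text{poly}(\alpha)$. The key point that rescues this is that the \emph{depth} of the structural recursion is bounded by $\SIZE{\sha} \leq k$, and each composite step multiplies the state count by at most $2^{\text{poly}(\alpha)}$, so the total is $\big(2^{\text{poly}(\alpha)}\big)^{\SIZE{\sha}}$ --- which is \emph{not} $2^{\text{poly}(\alpha)}$ in general; hence the construction must be organized so that the exponent in $\alpha$ does not multiply with depth but only adds the newly introduced synchronization bits ($O(\alpha)$ per $\SEP$-node, the $\{0,\ldots,\alpha\}$ counter per $\exists$-node), keeping the cumulative exponent polynomial in $\alpha$ and the number of nodes, hence $\leq 2^{\text{poly}(\alpha) \cdot \text{poly}(k)}$ --- and one must check against the intended application (Theorem~\ref{thm:entailment:complexity}) that this is the bound actually needed. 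I would therefore state the invariant carefully as ``$\SIZE{Q_{\HASH{\sha}}} \leq 2^{\text{poly}(\alpha)}$'' only when $\sha$ itself has bounded size relative to $\SRD$ --- which is exactly the regime of $\SHCENTAIL{\alpha}$ and the entailment algorithm --- and otherwise fall back to $2^{\text{poly}(k)}$; reconciling these two readings of the statement with what Lemma~\ref{thm:entailment:top-level-complexity} is used for downstream is the delicate part.
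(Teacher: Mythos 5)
Your proposal follows the paper's proof essentially verbatim: the paper also argues by structural induction on $\sha$, with exactly your case split (predicate calls discharged by $\alpha$--boundedness; $\EMP$ and $\PTS{x}{\T{y}}$ with state count $1+(1+\alpha)^{\gamma}$; products for $\SEP$, a factor $2$ for pure formulas, and $(\alpha+1)\cdot\SIZE{Q_{\HASH{\sh}}}\cdot 2^{2\alpha^2+\alpha}$ for $\exists$), and with the same absorption of $\Delta$-costs into $\BIGO{2^{\text{poly}(k)}}$. So as a reconstruction of the intended argument it is on target.

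The worry you raise in your final paragraph is, however, not something you need to apologize for --- it points at a slippage that is present in the paper's own proof. Each inductive step concludes with an inequality of the shape $2^{\alpha+\text{poly}(\alpha)+\text{poly}(\alpha)}\leq 2^{\text{poly}(\alpha)}$, where the polynomial on the right is silently allowed to grow at every application; unrolled over the syntax tree of $\sha$, the product construction honestly yields $\SIZE{Q_{\HASH{\sha}}}\leq 2^{c\cdot\text{poly}(\alpha)\cdot\SIZE{\sha}}$ for a fixed polynomial, not $2^{\text{poly}(\alpha)}$, since the number of $\SEP$- and $\exists$-nodes in $\sha$ is unbounded in $\alpha$. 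This matters downstream: Lemma~\ref{thm:entailment:exptime} derives the \CCLASS{ExpTime} bound precisely from the claim that $\SIZE{Q_{\HASH{\sha}}}$ is constant when $\alpha$ is, and your fallback bound $2^{\text{poly}(k)}$ would push the complementation step to $2^{2^{\text{poly}(k)}}$ and only recover \CCLASS{2-ExpTime}. So your instinct to restate the invariant as ``$2^{\text{poly}(\alpha)}$ per syntactic node, hence $2^{\text{poly}(\alpha)\cdot\SIZE{\sha}}$ overall'' is the correct reading of what the construction actually delivers; the paper's stronger claim would need an additional argument (e.g., a bound on the number of connectives in $\sha$ in terms of $\alpha$, or a quotienting of the product state space) that neither you nor the paper supplies.
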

\begin{proof}
  By induction on the structure of symbolic heaps, we show $\alpha$--boundedness
  for the heap automata constructed in the proof of
  Theorem~\ref{thm:entailment:top-level}.
  \qed
\end{proof}

%
We are now in a position to derive
an upper complexity bound on the entailment problem for a permissive symbolic heap fragment of separation logic with inductive predicate definitions.
\begin{lemma} \label{thm:entailment:double-exptime}
  \DENTAIL{\CENTAIL{\alpha}}{\SRD} is decidable in \CCLASS{2-ExpTime}
  for each $\alpha$--bounded SID $\SRD$.
\end{lemma}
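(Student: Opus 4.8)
\textbf{Proof plan for Lemma~\ref{thm:entailment:double-exptime}.}
The plan is to trace through Algorithm~\ref{alg:entailment:decision-procedure} line by line and bound the cost of each step in terms of $k = \SIZE{\SRD} + \SIZE{\sh} + \SIZE{\sha}$, using the assumption that $\SRD$ is $\alpha$--bounded. First I would invoke Lemma~\ref{thm:entailment:general-upper-bound}, which already packages the overall running time of the algorithm as
\[
  \BIGO{ 2^{\text{poly}(k)} \cdot \left(2^{\SIZE{Q_{\HASH{\sha}}}}\right)^{2(M+1)} \cdot \SIZE{\Delta_{\HASH{\sha}}} }.
\]
So the task reduces to supplying bounds on the two remaining quantities $\SIZE{Q_{\HASH{\sha}}}$ and $\SIZE{\Delta_{\HASH{\sha}}}$, where $\HASH{\sha}$ is the heap automaton over $\SHCENTAIL{\alpha}$ accepting $\USET{\sha}{\SRD}{\CENTAIL{\alpha}}$ that is produced by Theorem~\ref{thm:entailment:top-level} (and whose construction is carried out in the proof of that theorem).

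The key input is Lemma~\ref{thm:entailment:top-level-complexity}: for an $\alpha$--bounded SID, the heap automaton $\HASH{\sha}$ can be taken with $\SIZE{Q_{\HASH{\sha}}} \le 2^{\text{poly}(\alpha)}$ and $\Delta_{\HASH{\sha}}$ decidable in $\BIGO{2^{\text{poly}(k)}}$. Since $\alpha \le k$, we have $\SIZE{Q_{\HASH{\sha}}} \le 2^{\text{poly}(k)}$ and $\SIZE{\Delta_{\HASH{\sha}}} \in \BIGO{2^{\text{poly}(k)}}$, and $M \le k$ by definition. Substituting these into the bound from Lemma~\ref{thm:entailment:general-upper-bound}, I would carry out the (routine) arithmetic deferred to Appendix~\ref{app:entailment:double-exptime}:
\begin{align*}
  & \BIGO{ 2^{\text{poly}(k)} \cdot \left(2^{2^{\text{poly}(k)}}\right)^{2(k+1)} \cdot 2^{\text{poly}(k)} } \\
  ~=~ & \BIGO{ 2^{\text{poly}(k)} \cdot 2^{2(k+1)\cdot 2^{\text{poly}(k)}} \cdot 2^{\text{poly}(k)} } \\
  ~=~ & \BIGO{ 2^{2^{\text{poly}(k)}} },
\end{align*}
using $(a^b)^c = a^{bc}$ and absorbing the polynomial factor $2(k+1)$ into the exponent $2^{\text{poly}(k)}$. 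This last bound is by definition \CCLASS{2-ExpTime}, which yields the claim.

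I expect the main obstacle to be not this lemma itself — which is essentially bookkeeping on top of already-established results — but rather the supporting Lemma~\ref{thm:entailment:top-level-complexity}, i.e.\ verifying that every automaton construction in the structural induction of Theorem~\ref{thm:entailment:top-level} (empty heap, points-to, separating conjunction, pure formula, existential quantifier) preserves the two invariants $\SIZE{Q} \le 2^{\text{poly}(\alpha)}$ and $\SIZE{\Delta} \le 2^{\text{poly}(k)}$. The delicate cases are the separating conjunction, where the state space is a product $2^{\alpha} \times Q_{\HASH{\sha_1}} \times 2^{\alpha} \times Q_{\HASH{\sha_2}} \times 2^{\alpha}$ whose polynomial-in-$\alpha$ exponents must be shown to add rather than multiply, and the existential quantifier, where a factor $(\alpha+1)$ and an extra copy of the tracking automaton enter — but since $2^{p(\alpha)} \cdot 2^{q(\alpha)} = 2^{p(\alpha)+q(\alpha)}$ and $(\alpha+1) \le 2^{\alpha}$, these combine cleanly, and the $\Delta$ side only incurs an additional factor bounded by the number of subformulas, i.e.\ $2^{\text{poly}(k)}$. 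Once Lemma~\ref{thm:entailment:top-level-complexity} is in hand, the present lemma follows immediately by the substitution above.
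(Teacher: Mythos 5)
Your proposal is correct and follows essentially the same route as the paper: invoke Lemma~\ref{thm:entailment:general-upper-bound} for the packaged running-time bound, plug in the bounds $\SIZE{Q_{\HASH{\sha}}} \leq 2^{\text{poly}(\alpha)}$ and $\SIZE{\Delta_{\HASH{\sha}}} \in \BIGO{2^{\text{poly}(k)}}$ from Lemma~\ref{thm:entailment:top-level-complexity}, and carry out the exponent arithmetic to land in $\BIGO{2^{2^{\text{poly}(k)}}}$. The only cosmetic difference is that the paper redoes the substitution starting from the intermediate bound involving the complemented automaton $\overline{\HASH{\sha}}$ (obtaining exponent $4k$ instead of your $2(k+1)$), which changes nothing.
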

\begin{proof}
By Lemma~\ref{thm:entailment:general-upper-bound}, we know that an entailment $\sh \ENTAIL{\SRD} \sha$
 can be discharged in
  $
     \BIGO{ 2^{\text{poly}(k)} \cdot \left(2^{\SIZE{Q_{\HASH{\sha}}}}\right)^{2(M+1)}
     \cdot \SIZE{\Delta_{\HASH{\sha}}} }.
  $
%
By Lemma~\ref{thm:entailment:top-level-complexity},
$\Delta_{\HASH{\sha}}$ is decidable in $\BIGO{2^{\text{poly}(k)}}$  and $\SIZE{Q_{\HASH{\sha}}} \leq 2^{\text{poly}(\alpha)}$.
Then it is easy to verify that $\sh \ENTAIL{\SRD} \sha$ is decidable in
$\BIGO{2^{2^{\text{poly}(k)}}}$:
  \begin{align*}
        & \BIGO{\SIZE{\SRDALT} \cdot \SIZE{Q_{\overline{\HASH{\sha}}}}^{M+1} \cdot \SIZE{\Delta_{\overline{\HASH{\sha}}}}} \\
    ~=~ &  \left[ \SIZE{\SRDALT} \leq 2^{\text{poly}(k)}, \SIZE{Q_{\overline{\HASH{\sha}}}} \leq 2^{\SIZE{Q_{\HASH{\sha}}}}, M \leq 2k \right] \\
        & \BIGO{ 2^{\text{poly}(k)} \cdot \left(2^{\SIZE{Q_{\HASH{\sha}}}}\right)^{2k} \cdot \SIZE{\Delta_{\overline{\HASH{\sha}}}}} \\
    ~=~ & \left[ \SIZE{\Delta_{\overline{\HASH{\sha}}}} \leq \left(2^{\SIZE{Q_{\HASH{\sha}}}}\right)^{M+1} \cdot \SIZE{\Delta_{\HASH{\sha}}} \right] \\
        & \BIGO{ 2^{\text{poly}(k)} \cdot \left(2^{\SIZE{Q_{\HASH{\sha}}}}\right)^{4k} \cdot \SIZE{\Delta_{\HASH{\sha}}}} \\
    ~=~ &  \left[ \SIZE{Q_{\HASH{\sha}}} \leq 2^{\text{poly}(k)} \right] \\
        & \BIGO{ 2^{\text{poly}(k)} \cdot \left(2^{2^{\text{poly}(k)}}\right)^{4k} \cdot \SIZE{\Delta_{\HASH{\sha}}}} \\
    ~=~ &  \left[ (a^b)^c = a^{bc},~ \SIZE{\Delta_{\HASH{\sha}}} \in \BIGO{2^{\text{poly}(k)}}  \right] \\
        & \BIGO{ 2^{\text{poly}(k)} \cdot 2^{4k \cdot 2^{\text{poly}(k)}} \cdot 2^{\text{poly}(k)}} \\
    ~=~ & \BIGO{ 2^{2^{\text{poly}(k)}} }
  \end{align*}
Hence, \DENTAIL{\CENTAIL{\alpha}}{\SRD} is in \CCLASS{2--ExpTime}.
\qed
\end{proof}
Since the maximal arity $\alpha$ of predicate symbols is fixed for any given SID,
we also analyze Algorithm~\ref{alg:entailment:decision-procedure} under the assumption that $\alpha$ is bounded by a constant.
This is a common assumption (cf.~\cite{brotherston2014decision,brotherston2016model}) that was considered in
Section~\ref{sec:zoo} already.
%
\begin{lemma} \label{thm:entailment:exptime}
  Let $\SRD$ be an $\alpha$--bounded SID for some constant $\alpha \geq 1$.
  Then the entailment problem $\DENTAIL{\CENTAIL{\alpha}}{\SRD}$ is in \CCLASS{ExpTime}.
\end{lemma}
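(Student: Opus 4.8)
The plan is to re-run the complexity computation of Lemma~\ref{thm:entailment:double-exptime}, but this time exploit that $\alpha$ is a constant so that the heap automaton $\HASH{\sha}$ produced by Theorem~\ref{thm:entailment:top-level} has a \emph{constant-size} state space rather than a merely singly-exponential one. Concretely, by Lemma~\ref{thm:entailment:top-level-complexity} we have $\SIZE{Q_{\HASH{\sha}}} \leq 2^{\textnormal{poly}(\alpha)}$ and $\Delta_{\HASH{\sha}}$ decidable in $\BIGO{2^{\textnormal{poly}(k)}}$; since $\alpha$ is now a fixed constant, $2^{\textnormal{poly}(\alpha)}$ is a constant $c$, and so $\SIZE{Q_{\HASH{\sha}}} \leq c = \BIGO{1}$.

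First I would recall, from Lemma~\ref{thm:entailment:general-upper-bound}, that Algorithm~\ref{alg:entailment:decision-procedure} decides $\sh \ENTAIL{\SRD} \sha$ in time
\begin{align*}
  \BIGO{ 2^{\textnormal{poly}(k)} \cdot \bigl(2^{\SIZE{Q_{\HASH{\sha}}}}\bigr)^{2(M+1)} \cdot \SIZE{\Delta_{\HASH{\sha}}} },
\end{align*}
where $M \leq k$ bounds the number of predicate calls per rule. Substituting $\SIZE{Q_{\HASH{\sha}}} = \BIGO{1}$ gives $\bigl(2^{\SIZE{Q_{\HASH{\sha}}}}\bigr)^{2(M+1)} = \bigl(2^{\BIGO{1}}\bigr)^{\BIGO{k}} = 2^{\BIGO{k}}$, which is singly exponential in $k$. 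The factor $\SIZE{\Delta_{\HASH{\sha}}}$ is $\BIGO{2^{\textnormal{poly}(k)}}$ by Lemma~\ref{thm:entailment:top-level-complexity}, and the leading factor $2^{\textnormal{poly}(k)}$ is likewise singly exponential. Multiplying three singly-exponential quantities yields a bound of the form $\BIGO{2^{\textnormal{poly}(k)}}$, hence $\DENTAIL{\CENTAIL{\alpha}}{\SRD} \in \CCLASS{ExpTime}$. This is essentially the displayed chain of equalities shown in the excerpt for Lemma~\ref{thm:entailment:exptime}, and I would reproduce it, the key simplification step being $\SIZE{Q_{\HASH{\sha}}} \leq k$ (indeed $\leq$ a constant) because $\alpha$ is constant.

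The main obstacle — really the only non-routine point — is making precise that $2^{\textnormal{poly}(\alpha)}$ collapses to a constant uniformly: one must be careful that the hidden polynomial in Lemma~\ref{thm:entailment:top-level-complexity} does not depend on $k$, so that once $\alpha$ is fixed the state count is genuinely bounded independent of the input. Granting that (which follows from the inductive construction in the proof of Theorem~\ref{thm:entailment:top-level}, where each composite case multiplies state spaces by factors that are themselves $2^{\textnormal{poly}(\alpha)}$, and the base cases $\EMP$, $\PT{x}{\T{y}}$ with $\SIZE{\T{y}} \leq \gamma$ also constant-bounded), the rest is the arithmetic above. No new automata or lemmas are needed; the argument is a specialization of the $\alpha$-bounded analysis already carried out for the \CCLASS{2-ExpTime} bound.
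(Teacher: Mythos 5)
Your proposal is correct and follows essentially the same route as the paper: invoke Lemma~\ref{thm:entailment:top-level-complexity} to bound $\SIZE{Q_{\HASH{\sha}}}$ by a constant once $\alpha$ is fixed, then substitute into the running-time bound of Lemma~\ref{thm:entailment:general-upper-bound} to collapse the doubly-exponential factor to $2^{\textnormal{poly}(k)}$. Your added remark that the polynomial in $\alpha$ must be independent of $k$ is a sound (if implicit in the paper) sanity check and does not change the argument.
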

\begin{proof}
By Lemma~\ref{thm:entailment:top-level-complexity},
$\SIZE{Q_{\HASH{\sha}}} \leq 2^{\text{poly}(\alpha)}$ and $\Delta_{\HASH{\sha}}$ is decidable in
$\BIGO{2^{\text{poly}(k)}}$.
Since $\alpha$ is bounded by a constant, so is $\SIZE{Q_{\HASH{\sha}}}$.
Then, by Lemma~\ref{thm:entailment:general-upper-bound}, we know that
$\DENTAIL{\CENTAIL{\alpha}}{\SRD}$ is decidable in
  \begin{align*}
   \BIGO{ 2^{\text{poly}(k)} \cdot \left(2^{\SIZE{Q_{\HASH{\sha}}}}\right)^{2(M+1)}
         \cdot 2^{\text{poly}(k)} }
   ~=~
   \BIGO{ 2^{\text{poly}(k)} }, ~
  \end{align*}
which clearly is in \CCLASS{ExpTime}.
\qed
\end{proof}
Then, the upper complexity bounds provided in Theorem~\ref{thm:entailment:complexity}
hold by Lemma~\ref{thm:entailment:double-exptime} and Lemma~\ref{thm:entailment:exptime}.
Further, $\CCLASS{ExpTime}$--completeness follows
directly from\cite[Theorem 5]{antonopoulos2014foundations}
and Appendix~\ref{app:entailment:lower}.

\subsection{Lower Complexity Bound for Entailments} \label{app:entailment:lower}
The proof of the \CCLASS{ExpTime}--lower bound in~\cite{antonopoulos2014foundations}
is by reducing the inclusion problem for nondeterministic finite tree automata (NFTA, cf. \cite{comon2007tree}) to
the entailment problem.
Their proof requires a constant (or free variable) for each symbol in the tree automatons alphabet.
In contrast, we prove their result by encoding the alphabet in a null-terminated singly-linked list.
Thus, a tree $a(b,a(b,b)$ is encoded by a reduced symbolic heap
\begin{align*}
  & \exists z_1 z_2 z_3 z_4 z_5 z_6 z_7 ~.~  \\
  & \quad  \PT{x}{z_1~z_2~\NIL}  \\
  & \quad  \SEP \PT{z_1}{\NIL~\NIL~z_3} \SEP \PT{z_3}{\NIL~\NIL~\NIL} \\
  & \quad  \SEP \PT{z_2}{z_4~z_5~\NIL}  \\
  & \quad  \SEP \PT{z_4}{\NIL~\NIL~z_6} \SEP \PT{z_6}{\NIL~\NIL~\NIL} \\
  & \quad  \SEP \PT{z_5}{\NIL~\NIL~ z_7} \SEP \PT{z_7}{\NIL~\NIL~\NIL},
\end{align*}
where the symbol $a$ is encoded by having $\NIL$ as third component in a points-to assertion and
symbol $b$ by a $\NIL$ terminated list of length one.
Now, given some NFTA $\HA{T} = (Q,\Sigma,\Delta,F)$ with $\Sigma = \{a_1,\ldots,a_n\}$,
we construct a corresponding $\SRD$.
Without less of generality, we assume that $\HA{T}$ contains no unreachable or unproductive states.
We set $\PRED(\SRD) \DEFEQ Q \cup \Sigma \cup \{I\}$, where each predicate symbol is of arity one.
Then, for each symbol $a_i \in \Sigma$ one rule of the form $\SRDRULE{a_{1}}{\PROJ{\FV{0}{}}{1}=\NIL}$
or, for $1 < i \leq n$,
\begin{align*}
  a_{i} ~\SRDARROW~ & \exists z_1~z_2~\ldots~z_{i-1} ~.~
                        \PT{\PROJ{\FV{0}{}{1}}}{\NIL~\NIL~z_1} \\
                      & \qquad \SEP \bigstar_{1 \leq j < i} \PT{z_j}{\NIL~\NIL~z_{j+1}} : \{ z_{i-1} = \NIL \}
\end{align*}
is added to $\SRD$.
Furthermore, for each $(p_1 \ldots p_m, a_i, p_0) \in \Delta$, $1 \leq i \leq n$, we add a rule
\begin{align*}
  p_0 ~\SRDARROW~ & \exists z_1 \ldots z_{m+1} ~.~
                        \PT{\PROJ{\FV{0}{}}{1}}{z_1~\ldots~z_{m+1}}  \\
                    & \qquad    ~\SEP~ a_i(z_{m+1})
                        ~\SEP~ \bigstar_{1 \leq i \leq m} p_i(z_i).
\end{align*}
Finally, we add rules $\SRDRULE{I}{p\PROJ{\FV{0}}{1} : \{\PROJ{\FV{0}{}}{1} \neq \NIL \}}$ for each $p \in F$.
Clearly $\SRD$ is established.
Moreover, it is easy to verify that, given two NFTAs $\HA{T}_1$ and $\HA{T}_2$ with distinct sets of states,
we have
\begin{align*}
  I_1 x \ENTAIL{\SRD_1 \cup \SRD_2} I_2x ~\text{iff}~ L(\HA{T}_1) \subseteq L(\HA{T}_2).
\end{align*}
Thus, following~\cite{antonopoulos2014foundations},
if $\USET{Ix}{\SRD}{\CENTAIL{1}}$ can be accepted by a heap automaton,
the entailment problem $\DENTAIL{\SRD}{\CENTAIL{\alpha}}$ is \CCLASS{ExpTime}--hard for certain SIDs $\SRD$
fixed $\alpha = 1$, and a fixed arity of points-to assertions $\gamma = 3$.
Such a heap automaton can easily be constructed.
Formally, let $\HA{T} = (Q,\Sigma,\Delta,F)$ be an NFTA as above and $Q = \{p_1,\ldots,p_k\}$ for some $k > 0$.
Furthermore, for each state $p_i$, let $t_i$ be some fixed finite tree that is accepted
by the tree automaton $\HA{T}_{i} = (Q,\Sigma,\Delta,\{p_i\})$ and $\rsh_i$ be the corresponding
encoding as a reduced symbolic heap.
One possible (not necessarily efficient)
 heap automaton $\HA{A} = (Q_{\HA{A}},\SHCENTAIL{1},\Delta_{\HA{A}},F_{\HA{A}})$
is given by:
\begin{align*}
  & Q_{\HA{A}} ~\DEFEQ~  \{ \rsh_i ~|~ 1 \leq i \leq k\} \cup \{ a_i ~|~ 1 \leq i \leq n \} \\
  & F_{\HA{A}} ~\DEFEQ~  F \\
  & \MOVE{A}{q_1 \ldots q_m}{\sh}{q_0} ~\text{iff}~ \sh\left[ \PS_1 / q_1, \ldots, \PS_m / q_m\right] \ENTAIL{\SRD} \PS\PROJ{\FV{0}{}}{1}~,
\end{align*}
where each $a_i$ corresponds to the reduced symbolic heap encoding symbol $a_i$ and $\PS$ is the predicate $p_i$ corresponding to reduced symbolic heap $\rsh_i$ as previously described.

\section{Construction of Heap Automata for Entailment}
\label{app:case-study}
This section presents a systematic way to construct heap automata to discharge entailments.
Further, we provide two example constructions.
\subsection{Systematic Construction of $\USET{\sh}{\SRD}{\CENTAIL{\alpha}}$}\label{sec:entailment:systematic}
Using heap automata to discharge entailments as presented in Theorem~\ref{thm:entailment:main}
requires the construction of suitable heap automata for each predicate symbol
of an SID.
%
We briefly present a systematic construction of such heap automata that is
similar to the well-known Myhill--Nerode Theorem for regular (tree) languages~\cite{comon2007tree,nerode1958linear}:
We partition the set of all reduced symbolic heaps into equivalence classes, where two formulas belong to the
same equivalence class if they can be extended in the same way to formulas entailing a predicate $\PS$ of interest.
%
%
\begin{definition} \label{def:subst-eq}
   Let $\PS \in \PRED(\SRD)$, $\SRD \in \SETSRD{}$.
   Then two symbolic heaps $\rsh,\rsha \in \RSL{}{\CENTAIL{\alpha}}$ with $\NOFV{\rsh} = \NOFV{\rsha} = \beta$
   are \emph{$\PS$--equivalent}, written $\rsh \EE{\PS} \rsha$,
   if for all $\sh \in \SL{}{}$ with exactly one predicate call $\PI$ of arity $\beta$,
   we have $\sh[\PI / \rsh] \ENTAIL{\emptyset} \PS\FV{0}{}$ iff
   $\sh[\PI / \rsha] \ENTAIL{\emptyset} \PS\FV{0}{}$.
\end{definition}
For example, all non-empty singly linked list segments
of the \texttt{sll} predicate from Section~\ref{sec:introduction} 
with $\IFV{1}$ as head and $\IFV{2}$ as tail are $\texttt{sll}$--equivalent.
\begin{theorem}
   \label{thm:eq:myhill}
   Let $\PS$ be a predicate symbol of an SID $\SRD$.
   Then there exists a heap automaton $\HASH{\PS}$ with
   $L(\HASH{\PS}) = \USET{\PS\FV{0}{}}{\SRD}{\CENTAIL{\alpha}}$ 
   iff
   the number of equivalence classes of $\EE{\PS}$ is finite.
\end{theorem}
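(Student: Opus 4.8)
The statement has the form of a Myhill--Nerode characterization, so I would prove the two directions separately, following the classical automata-theoretic template adapted to heap automata.

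For the ``if'' direction (finitely many equivalence classes $\Rightarrow$ heap automaton exists), I would take as states the equivalence classes of $\EE{\PS}$, together with one designated sink state for reduced symbolic heaps whose number of free variables does not match $\ARITY(\PS)$ (or, more precisely, a separate block of states indexed by $\beta = \NOFV{\cdot}$, all of which are non-final unless $\beta = \ARITY(\PS)$; this echoes Remark on the corner case of unbounded free variables in the proof of Theorem~\ref{thm:entailment:top-level}). The transition relation is defined by $\MOVE{\HASH{\PS}}{[\rsha]_{\EE{\PS}}}{\sh}{[\rsh_1]_{\EE{\PS}} \ldots [\rsh_m]_{\EE{\PS}}}$ iff $\sh[\PS_1/\rsh_1,\ldots,\PS_m/\rsh_m] \EE{\PS} \rsha$. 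The key observations to check are: (1) this is well-defined, i.e. independent of the chosen representatives $\rsh_i$ and $\rsha$ --- this is exactly where $\EE{\PS}$ being a \emph{congruence} with respect to predicate replacement is used, which follows by unfolding Definition~\ref{def:subst-eq} and composing the single-predicate-call contexts; (2) the transition relation is decidable, since $\rsh_1 \EE{\PS} \rsh_2$ is decidable for any two given reduced symbolic heaps (one needs to argue this, e.g. by noting that $\EE{\PS}$ has finitely many classes and that $\USET{\PS\FV{0}{}}{\SRD}{\CENTAIL{\alpha}}$ membership is itself decidable via the argument in the other direction, or by a direct pumping bound on the contexts $\sh$ that need to be tested); (3) the compositionality property of Definition~\ref{def:refinement:automaton} holds --- this is almost immediate from the way $\Delta$ is defined through the congruence, since $(\EMPTYSEQ, \sh[P_1/\rsh_1,\ldots], [\rsha]) \in \Delta$ unfolds to $\sh[P_1/\rsh_1,\ldots] \EE{\PS} \rsha$, which matches the right-hand side of the compositionality biconditional once one observes $\OMEGA{\HASH{\PS}}{[\rsh_i]}{\rsh_i}$ always holds. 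Finally, choosing $F \DEFEQ \{ [\rsh]_{\EE{\PS}} \mid \rsh \in \USET{\PS\FV{0}{}}{\SRD}{\CENTAIL{\alpha}} \}$ --- which is well-defined because $\EE{\PS}$ refines membership in $\USET{\PS\FV{0}{}}{\SRD}{\CENTAIL{\alpha}}$ (take the trivial context $\sh = \PI\FV{0}{}$ in Definition~\ref{def:subst-eq}) --- gives $L(\HASH{\PS}) = \USET{\PS\FV{0}{}}{\SRD}{\CENTAIL{\alpha}}$ by the definition of the accepted language and a short induction on unfolding trees.

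For the ``only if'' direction (heap automaton exists $\Rightarrow$ finitely many classes), I would show that $\EE{\PS}$ is refined by the indistinguishability relation induced by $\HASH{\PS}$: if $\OMEGA{\HASH{\PS}}{q}{\rsh}$ and $\OMEGA{\HASH{\PS}}{q}{\rsha}$ for the same state $q \in Q_{\HASH{\PS}}$ (and $\NOFV{\rsh} = \NOFV{\rsha}$), then $\rsh \EE{\PS} \rsha$. This is where the compositionality property is the workhorse: given any context $\sh$ with a single predicate call $\PI$, running $\HASH{\PS}$ on $\sh[\PI/\rsh]$ and on $\sh[\PI/\rsha]$ yields the same output state, because by compositionality the output depends on $\sh$ and on the state assigned to the substituted formula only, and both $\rsh$ and $\rsha$ reach $q$. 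Since $L(\HASH{\PS}) = \USET{\PS\FV{0}{}}{\SRD}{\CENTAIL{\alpha}}$, having the same output state is equivalent to both $\sh[\PI/\rsh]$ and $\sh[\PI/\rsha]$ being (or not being) in $\USET{\PS\FV{0}{}}{\SRD}{\CENTAIL{\alpha}}$ simultaneously, i.e. entailing $\PS\FV{0}{}$ simultaneously. Hence each class of $\EE{\PS}$ is a union of classes of the $\HASH{\PS}$-indistinguishability relation, of which there are at most $\SIZE{Q_{\HASH{\PS}}}$ (times a factor for the arity bookkeeping, still finite since arities are bounded by $\alpha$); so $\EE{\PS}$ has finitely many classes.

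I expect the main obstacle to be the careful handling of the congruence/well-definedness in the ``if'' direction, specifically verifying that $\EE{\PS}$ respects substitution into contexts with \emph{several} predicate calls --- Definition~\ref{def:subst-eq} only speaks of contexts with exactly one call, so one must argue that multi-hole substitution can be simulated by iterated single-hole substitution (replacing predicate calls one at a time, as in the proof of Lemma~\ref{thm:zoo:track:congruence}) and that $\EE{\PS}$ is preserved at each step. The other delicate point, decidability of $\EE{\PS}$ (needed to make $\Delta$ a legitimate decidable transition relation), I would either obtain from a pumping argument bounding the size of contexts that must be checked, or sidestep by first establishing the ``only if'' direction and using the resulting finite automaton to decide membership and hence the equivalence; I would flag this as a point requiring care but not a serious difficulty.
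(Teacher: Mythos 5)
Your overall strategy is the same as the paper's: in the ``if'' direction, take the $\EE{\PS}$-classes as states with the transition relation induced by substitution and the final states given by the classes of heaps entailing $\PS\FV{0}{}$; in the ``only if'' direction, show that the partition induced by the automaton refines $\EE{\PS}$. Your treatment of the ``if'' direction is in fact more careful than the paper's (which dismisses well-definedness, compositionality and decidability of $\Delta$ with ``it is easy to verify''), and your plan for reducing multi-hole congruence to iterated single-hole substitution is sound: composing an outer single-call context with the partially filled inner context again yields a single-call context, so Definition~\ref{def:subst-eq} applies at each step.

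There is, however, one genuine gap in your ``only if'' direction. Heap automata are nondeterministic, and you argue that $\OMEGA{\HASH{\PS}}{q}{\rsh}$ and $\OMEGA{\HASH{\PS}}{q}{\rsha}$ for a \emph{single} common state $q$ already force $\rsh \EE{\PS} \rsha$, ``because by compositionality the output depends on $\sh$ and on the state assigned to the substituted formula only.'' That is not what compositionality gives you: the set of output states of $\sh[\PI/\rsh]$ is $\{\,p \mid \exists q'.\ (q',\sh,p) \in \Delta \text{ and } \OMEGA{\HASH{\PS}}{q'}{\rsh}\,\}$, so it depends on the \emph{entire set} of states that $\rsh$ can reach. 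If $\rsh$ reaches $\{q,p\}$ and $\rsha$ reaches only $\{q\}$, then $\sh[\PI/\rsh]$ may be accepted via $p$ while $\sh[\PI/\rsha]$ is rejected, even though both share the state $q$. Your bound of ``at most $\SIZE{Q_{\HASH{\PS}}}$'' classes confirms you are implicitly assuming each heap reaches exactly one state. The paper closes this hole by first determinizing $\HASH{\PS}$ via a subset construction (as in the complementation proof of Lemma~\ref{thm:refinement:complement}) and only then defining the languages $L_q$; equivalently, you could define indistinguishability by ``same \emph{set} of reachable states,'' which still yields finitely many (at most $2^{\SIZE{Q_{\HASH{\PS}}}}$) classes and repairs the argument. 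The fix is routine, but the step as you stated it is false for nondeterministic heap automata.
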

\begin{proof}
  Assume there are only finitely many equivalence classes of $\EE{\PS}$.
  Furthermore, let $\EC{\rsh}{\PS}$ denote the equivalence class containing formula $\rsh$.
  Then a heap automaton $\HASH{\PS} = (Q,\SHCENTAIL{\alpha},\Delta,F)$ accepting
  $\USET{\PS\FV{0}{}}{\SRD}{\CENTAIL{\alpha}}$
  is given by
  $Q = \{ \EC{\rsh}{\PS} ~|~ \rsh \in \RSL{}{\CENTAIL{\alpha}} \}$,
  $F = \{ \EC{\rsh}{\PS} ~|~ \rsh \in \RSL{}{\CENTAIL{\alpha}}, \rsh \ENTAIL{\SRD} \PS\FV{0}{} \}$
  and
  \begin{align*}
    \MOVE{\HASH{\PS}}{\EC{\sh[\CALLN{1}{} / \rsh_1, \CALLN{m}{} / \rsh_m]}{\PS}}{\sh}{\EC{\rsh_1}{\PS} \ldots \EC{\rsh_m}{\PS}}~,
  \end{align*}
  for each symbolic heap $\sh \in \SHCENTAIL{\alpha}$ with $\NOCALLS{\sh} = m$.
  Then it is easy to verify that $\HA{A}$ satisfies the compositionality property
  and $L(\HASH{\PS}) = \USET{\PS\FV{0}{}}{\SRD}{\CENTAIL{\alpha}}$.
  The converse direction is straightforward.
  A full proof is found in \TECHNICALREPORT.
\qed
\end{proof}
Note that it suffices to represent each equivalence class by a single
reduced symbolic heap, e.g. the smallest one.
%
%
%
While the construction principle from above is generally applicable, it is, however,
often preferable to exploit other properties of the predicates of interest,
e.g., acyclicity (see Section~\ref{sec:zoo:acyclicity}), to reduce the number of equivalence classes.

\subsection{Example: Singly-linked List Segments} \label{app:more-data-structures}

Recall the SID for acyclic singly-linked list segments from
Example~\ref{ex:srd}. A heap automaton $\HA{A}$ for \texttt{sll} is
defined in Fig.~\ref{fig:sll-ha}.

\begin{figure}
$Q \DEFEQ \{q_{\qeq}, q_{\diff}, q_{\qrev}, q_{\qfst}, q_{\qsnd},
q_{\bot} \} \qquad F \DEFEQ \{ q_{\qeq}, q_{\diff}, q_{\qfst} \}$

\vspace{2mm}
\textbf{Transitions:}
\begin{align*}
  & (\EMPTYSEQ, \rsh, q_{\qeq})\in \Delta &\text{ iff }& \rsh \models \EMP : \{ \IFV{1} = \IFV{2}) \} \\
 &&& \text{or } \rsh \models \EMP \wedge \NOFV{\sh} = 2\\
  & (\EMPTYSEQ, \rsh, q_{\diff})\in \Delta &\text{ iff }& \rsh
    \models \PCDS{sll}{\IFV{1}\IFV{2}} : \{ \IFV{2} \neq \NIL \} \\
  & (\EMPTYSEQ, \rsh, q_{\qrev})\in \Delta &\text{ iff }& \rsh
    \models \PCDS{sll}{\IFV{2}\IFV{1}} : \{ \IFV{1} \neq \NIL \} \\
  & (\EMPTYSEQ, \rsh, q_{\qfst})\in \Delta &\text{ iff }& \rsh
    \models \PCDS{sll}{\IFV{1}\NIL} \\
  & (\EMPTYSEQ, \rsh, q_{\qsnd})\in \Delta &\text{ iff }& \rsh \models \PCDS{sll}{\IFV{2}\NIL}  \\
  & (\EMPTYSEQ, \rsh, q_{\bot}) \in \Delta &\text{ iff }& \rsh \not\models \PCDS{sll}{\FV{0}{}} \\
  & (\T{q}, \sh, q) \in \Delta & \text{ iff } &
    (\EMPTYSEQ, \sh[P_1/\rho_{\T{q}[1]},\ldots,P_m/\rho_{\T{q}[m]}], q) \in \Delta
\end{align*}

\vspace{3mm}
\textbf{Representations:}

$\begin{array}{lll}
 \rho_{q_{\qeq}} & \DEFEQ & \EMP : \{\IFV{1} = \IFV{2} \}\\
 \rho_{q_{\diff}} & \DEFEQ & \PT{\IFV{1}}{\IFV{2}}: \{ \IFV{2} \neq
                             \NIL, \IFV{2} \neq \IFV{1} \} \\
 \rho_{q_{\qrev}} & \DEFEQ & \PT{\IFV{2}}{\IFV{1}} : \{ \IFV{1} \neq
                             \NIL, \IFV{2} \neq \IFV{1} \} \\
 \rho_{q_{\qfst}} & \DEFEQ & \PT{\IFV{1}}{\NIL} \\
 \rho_{q_{\qsnd}} & \DEFEQ & \PT{\IFV{2}}{\NIL} \\
 \rho_{q_{\bot}} & \DEFEQ & \IFV{1} \neq \IFV{1} \\
\end{array}$

\caption{A heap automaton $\HA{A} = (Q,\CENTAIL{\alpha},\Delta,F)$ with $L(\HA{A}) = \USET{\PCDS{sll}{\T{x}}}{\SRD}{\CENTAIL{\alpha}}$, for acyclic singly-linked list fragments \texttt{sll} as defined in Ex.~\ref{ex:srd}; plus canonical representations $\rho_q$ for each state $q$.}
\label{fig:sll-ha}
\end{figure}

Observe that $\Delta$ is compositional.
Note further that we have
defined $\Delta$ in such way that $(\EMPTYSEQ, \tau, q) \in \Delta$
for $q \in F$ iff $\tau \models \PCDS{sll}{\FV{0}{}}$, i.e.,
$L(\HA{A}) = \USET{\PCDS{sll}{\T{x}}}{\SRD}{\CENTAIL{\alpha}}$.

Figure~\ref{fig:sll-ha} also shows the canonical representations of
each state, i.e., the minimal unfoldings of each state's
formula. These are the symbolic heaps that are substituted into the
predicate calls in symbolic heaps $\sh$ to obtain simple entailment
problems for deciding transitions $(\T{q}, \sh, q)$.

\subsection{Example: Trees with Linked Leaves}
We consider an example of the systematic construction of heap automata:
We determine the equivalence
classes $\EE{\mathtt{tll}}$, where the SID \texttt{tll} is defined as
in Example~\ref{ex:srd}, to obtain a heap automaton for the set of
well-determined reduced symbolic heaps that entail some tree with
linked leaves (TLL). Note that this SID is outside the scope of previous
decision procedures for entailment with (at most) exponential-time
complexity~\cite{berdine2004decidable,iosif2014entailment}.

For simplicity of presentation, we assume that 
all parameters of the \texttt{tll} predicate are different from $\NIL$
and consider only acyclic TLLs.\footnote{The only parameter that can
  actually be null is $\IFV{3}$, as the other two parameters are
  always equal to a variable that appears on the left-hand side of a
  points-to assertion. Likewise, $\IFV{3}$ is the only parameter that
  can introduce a cycle---it can be equal to $\IFV{1}$ or $\IFV{2}$.}
Recall that we can always check whether these conditions are satisfied
(see Sections~\ref{sec:zoo:track}, \ref{sec:zoo:acyclicity}).
%
%
%

We will represent each $\EE{\mathtt{tll}}$--equivalence class by an SID
whose unfoldings are exactly the reduced symbolic heaps contained in the respective class.
To this end, let \texttt{core} be a predicate specifying TLLs that lack the outgoing pointer of the right-most leaf:
\[
\begin{array}{l}
\PCDS{core} ~\SRDARROW~ \EMP : \{\IFV{1} = \IFV{2}\} \\
\PCDS{core} ~\SRDARROW~ \exists \ell\,r\,z ~.~ \PT{
  \IFV{1} }{ \ell\,r\,\NIL } \\ \qquad\qquad\quad \SEP ~ \PCDS{tll}{(\ell\,\IFV{2}\,z)} \SEP \PCDS{core}{(r\,z\,\IFV{3})}
\end{array}
\]
Here, the omitted pointer is reflected by the missing points-to assertion in the base case.
%
In the following, let $\PERMS$ be the set of all permutations of
$\FV{0}{}$.
  \newcommand{\rhTfv}{\T{y}_{0}}
  \newcommand{\rhfv}{\PIFV} 
\newcommand{\tllclasses}{
  \begin{enumerate}
  \item $\ITLLC{1}\ ~\SRDARROW~ \PT{ \rhfv{1} }{
      \rhfv{2}\,\rhfv{3}\,\NIL }$
  \item $\ITLLC{2}\ ~\SRDARROW~ \PT{ \rhfv{1} }{ \NIL\,\NIL\,\rhfv{3}
    } ~:~ \{ \rhfv{1} = \rhfv{2} \}$
  \item $\ITLLC{3}\ ~\SRDARROW~ \PT{ \rhfv{1} }{ \NIL\,\NIL\,\rhfv{3}
    } ~:~ \{ \rhfv{2} = \rhfv{3} \}$
  \item $\ITLLC{4}\ ~\SRDARROW~ \PT{ \rhfv{1} }{ \NIL\,\NIL\,\rhfv{3}
    } ~:~ \PUREA{}$

  where $\PUREA{} \DEFEQ{} \{ \rhfv{1} \neq \rhfv{2}, \rhfv{2} \neq \rhfv{3} \}$

  \item $\ITLLC{5}\ ~\SRDARROW~
    \PCDS{core}{\rhTfv}$ 
  \item $
      \ITLLC{6}\ ~\SRDARROW~ \exists \ell \, r \, z ~.~ \PT{
        \rhfv{1} }{ \ell\, r\, \NIL }$

      $\qquad\qquad ~\SEP\, \PCDS{tll}{(\ell\, \rhfv{2}\, z)} \SEP
      \PCDS{tll}{(r\, z\, \rhfv{3})}$

  \item  $\ITLLC{7}\ ~\SRDARROW~ \exists u \, \ell \, r \, z ~.~ \PTS{\rhfv{2}}{u} \SEP \PT{
        \rhfv{1} }{ \ell\, r\, \NIL }$

     $\qquad\qquad ~\SEP\, \PCDS{tll}{(\ell\,u\,z)} \SEP
      \PCDS{core}{(r\, z\, \rhfv{3})}$


  \item  $\ITLLC{8}\ ~\SRDARROW~ \exists u ~.~ \PT{\rhfv{2}}{(\NIL\,
      \NIL\, u)} \SEP \PCDS{tll}{(\rhfv{1}\, u\, \rhfv{3})}$
  \end{enumerate}
}
\begin{example} \label{ex:tll-classes} For each $\PERM \in \PERMS$,
  the relation $\EE{\mathtt{tll}}$ has 8 equivalence classes
  that can be extended to entail an unfolding of
  \texttt{tll}:

  \tllclasses{}
  We refer the reader to Figure~\ref{fig:fixed:tll} for an illustration of the unfoldings covered by each predicate. 
  Due to lack of space, this figure is found in the appendix.
  In addition, $\EE{\mathtt{tll}}$ has one equivalence class of
  symbolic heaps that cannot be extended to entail \texttt{tll}
  unfoldings, defined as the complement of the other classes:
  $\RSHCLASSFV{\alpha} \setminus \bigcup_{1 \leq i \leq 8, \PERM \in
    \PERMS} \ITLLC{i}$ 
\end{example}

\begin{figure*}
\begin{tikzpicture}[->,>=stealth',shorten >=1pt,auto,node distance=1.5cm,
                    semithick]
  \tikzstyle{every state}=[fill=gray!20,draw=none,text=black]
\matrix[column sep=0.1cm,row sep=5mm,column 1/.style={anchor=base west}] {
\node { $\ITLLC{1}\ ~\SRDARROW~ \PT{ \rhfv{1} }{
      \rhfv{2}\,\rhfv{3}\,\NIL }$};
\pgfmatrixnextcell \node {
\begin{tikzpicture}[->,>=stealth',shorten >=1pt,auto,node distance=1.5cm,
                    semithick]
  \tikzstyle{every state}=[fill=gray!20,draw=none,text=black]

  \node[state]  (1) {$1$};
  \node[state]  (2) [below left of = 1] {$2$};
  \node[state]  (3) [below right of = 1] {$3$};
  \path (1) edge (2)
            edge (3);
\end{tikzpicture}
}; \\
\node {$\ITLLC{2}\ ~\SRDARROW~ \PT{ \rhfv{1} }{ \NIL\,\NIL\,\rhfv{3}
    } ~:~ \{ \rhfv{1} = \rhfv{2} \}$};
\pgfmatrixnextcell \node {
\begin{tikzpicture}[->,>=stealth',shorten >=1pt,auto,node distance=1.5cm,
                    semithick]
  \tikzstyle{every state}=[fill=gray!20,draw=none,text=black]

  \node[state]  (1) {$1,2$};
  \node[state]  (3) [right of = 1] {$3$};
  \path (1) edge (3);
\end{tikzpicture}
}; \\
\node {$\ITLLC{3}\ ~\SRDARROW~ \PT{ \rhfv{1} }{ \NIL\,\NIL\,\rhfv{3}
    } ~:~ \{ \rhfv{2} = \rhfv{3} \}$};
\pgfmatrixnextcell \node {
\begin{tikzpicture}[->,>=stealth',shorten >=1pt,auto,node distance=1.5cm,
                    semithick]
  \tikzstyle{every state}=[fill=gray!20,draw=none,text=black]

  \node[state]  (1) {$1$};
  \node[state]  (3) [right of = 1] {$2,3$};
  \path (1) edge (3);
\end{tikzpicture}
}; \\
\node {$\ITLLC{4}\ ~\SRDARROW~ \PT{ \rhfv{1} }{ \NIL\,\NIL\,\rhfv{3}
    } ~:~ \PUREA{}$};
\pgfmatrixnextcell \node {
\begin{tikzpicture}[->,>=stealth',shorten >=1pt,auto,node distance=1.5cm,
                    semithick]
  \tikzstyle{every state}=[fill=gray!20,draw=none,text=black]

  \node[state]  (1) {$1$};
  \node[state]  (3) [right of = 1] {$3$};
  \node[state]  (2) [right of = 3] {$2$};
  \path (1) edge (3);
\end{tikzpicture}
}; \\
\node {$\ITLLC{5}\ ~\SRDARROW~
    \PCDS{core}{\rhTfv}$};
\pgfmatrixnextcell \node {
\begin{tikzpicture}[->,>=stealth',shorten >=1pt,auto,node distance=1.5cm,
                    semithick]
  \tikzstyle{every state}=[fill=gray!20,draw=none,text=black]

  \node[state]  (1) {$1$};
  \node[state]  (2) [below left of = 1] {$2$};
  \node[state]  (3) [below right of = 1] {$3$};
  \path (1) edge (2)
            edge (3)
        (2) edge (3);
\end{tikzpicture}
}; \\
\node {$\ITLLC{6}\ ~\SRDARROW~ \exists \ell \, r \, z ~.~ \PT{
        \rhfv{1} }{ \ell\, r\, \NIL }$};
\pgfmatrixnextcell \node {
\begin{tikzpicture}[->,>=stealth',shorten >=1pt,auto,node distance=1.2cm,
                    semithick]
  \tikzstyle{every state}=[fill=gray!20,draw=none,text=black]

  \node[state]  (1) {$1$};
  \node[state]  (2) [below left of = 1] {$2,\ell$};
  \node[state]  (3) [below right of = 1] {$r,z$};
  \node[state]  (4) [right of = 3] {$3$};
  \path (1) edge (2)
            edge (3)
        (2) edge (3)
        (3) edge (4);
\end{tikzpicture}
}; \\
\node {$\ITLLC{7}\ ~\SRDARROW~ \exists u \, \ell \, r \, z ~.~ \PT{\rhfv{2}}{u} \SEP \PT{
        \rhfv{1} }{ \ell\, r\, \NIL }$};
\pgfmatrixnextcell \node {
\begin{tikzpicture}[->,>=stealth',shorten >=1pt,auto,node distance=1.2cm,
                    semithick]
  \tikzstyle{every state}=[fill=gray!20,draw=none,text=black]

  \node[state]  (1) {$1$};
  \node[state]  (2) [below left of = 1] {$u,\ell$};
  \node[state]  (3) [below right of = 1] {$r,3$};
  \node[state]  (4) [left of = 2] {$2$};
  \path (1) edge (2)
            edge (3)
        (2) edge (3)
        (4) edge (2);
\end{tikzpicture}
}; \\
\node {$\ITLLC{8}\ ~\SRDARROW~ \exists u ~.~ \PT{\rhfv{2}}{\NIL\,
      \NIL\, u} \SEP \PCDS{tll}{\rhfv{1}\, u\, \rhfv{3}}$};
\pgfmatrixnextcell \node {
\begin{tikzpicture}[->,>=stealth',shorten >=1pt,auto,node distance=1.5cm,
                    semithick]
  \tikzstyle{every state}=[fill=gray!20,draw=none,text=black]

  \node[state]  (1) {$1,u$};
  \node[state]  (2) [left of = 1] {$2$};
  \node[state]  (3) [right of = 1] {$3$};
  \path (1) edge (3);
  \path (2) edge (1);
\end{tikzpicture}
}; \\
};
\end{tikzpicture}
\caption{Equivalence classes with graphical representatives of small unfoldings of each predicate.
         Here, numbers correspond to the index of tuple $\T{y}$  and letters to the respective variables in the predicates.}
\label{fig:fixed:tll}
\end{figure*}
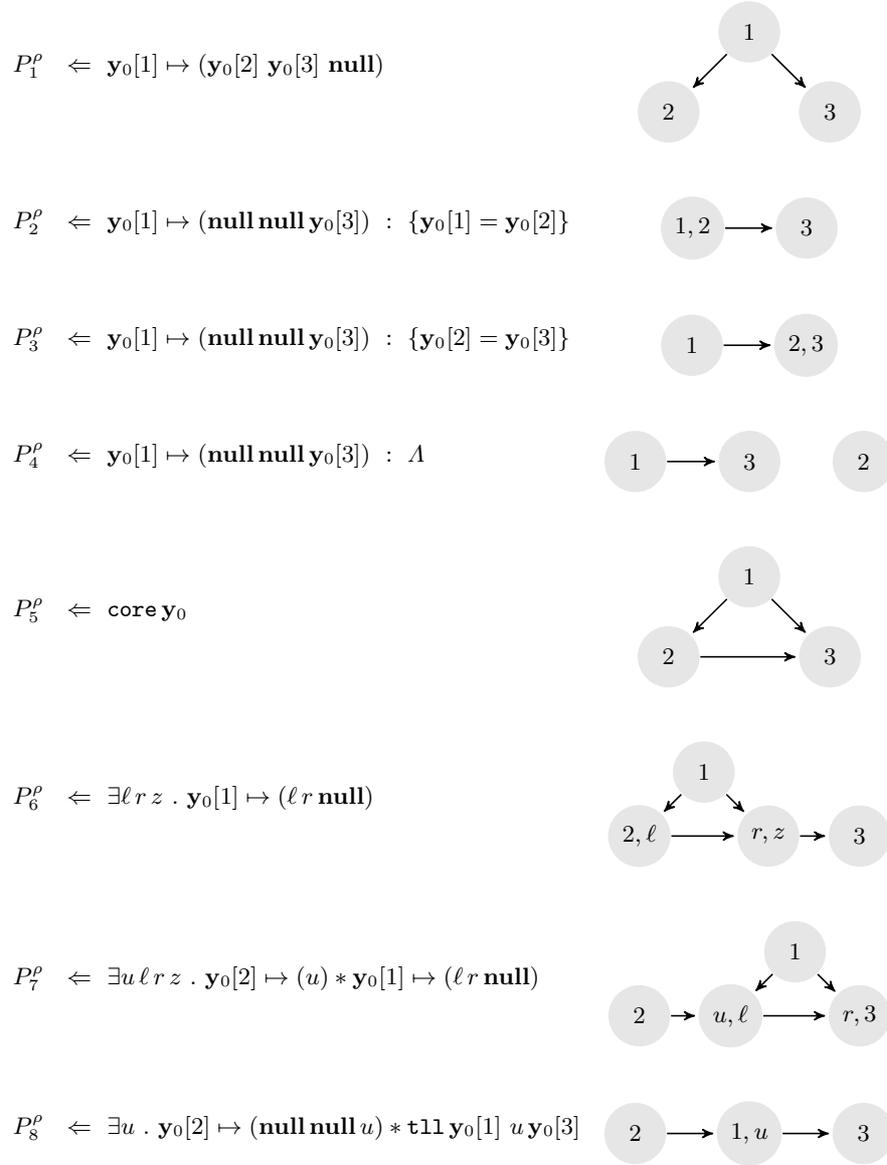

For example, $\ITLLC{7}$ specifies all trees with linked leaves that
consist of more than one pointer, lack the successor of the rightmost
leaf, but have an additional element at the head of the list of linked
leaves. All such symbolic heaps can be extended to heaps that entail
an unfolding of $\PCDS{tll}{\FV{0}{}}$ in the same way; for
example by adding the missing pointer from the last leaf to its successor
and adding a new root node that points to the additional leaf at the head
of the list as well as to the root of the
$\ITLLC{7}$--unfolding. Formally, for the identity permutation
$\fid \in \PERMS$, let
\[
  \begin{array}{l}
\sh^{\fid} ~\DEFEQ{}~ \exists u \, v ~.~ \PT{\IFV{1}}{\IFV{2}\, u\,
    \NIL} \SEP \ \APP{P_7^{\fid}}{(u\,\IFV{2}\, v)} 
    \SEP \ \PT{v}{\NIL\,\NIL\, \IFV{3 }}.
  \end{array}
\]
Then $\sh^{\fid}\,\FV{0}{} \ENTAIL{} \PCDS{tll}{\FV{0}{}}$. Note that if we
change the predicate call from $P_7^{\fid}$ to $\ITLLC{7}$ for a
different $\PERM$, the parameters in the predicate call have to
be reordered in accordance with $\PERM$ to obtain a valid entailment
$\sh^{\PERM} \ENTAIL{} \PCDS{tll}{\FV{0}{}}$; hence different permutations
induce different equivalence classes.
Further details are provided in \TECHNICALREPORT.
As there are six permutations of
$\FV{0}{}$, we conclude 
\begin{corollary}\label{cor:tll-automaton}
  Let $\SRD$ be the SID defining \textnormal{\texttt{tll}} from
  Example~\ref{ex:srd}.  There is a heap automaton
  $\HA{A}_{\mathtt{tll}}$ with $8 \cdot 6 + 1 = 49$ states that
  accepts the set $\USET{\mathtt{tll}\,\T{x}}{\SRD}{\CENTAIL{3}}$.
\end{corollary}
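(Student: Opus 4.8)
The plan is to instantiate Theorem~\ref{thm:eq:myhill}: it suffices to show that the relation $\EE{\mathtt{tll}}$ on $\RSHCLASSFV{3}$ has finitely many equivalence classes, and then to count them. First I would observe that, under the simplifying assumptions spelled out just before Example~\ref{ex:tll-classes} (all parameters distinct from $\NIL$, acyclic TLLs), the heap automata for tracking (Lemma~\ref{thm:zoo:track:property}) and for weak acyclicity (Lemma~\ref{thm:zoo:acyclicity:property}) can be intersected with $\HA{A}_{\mathtt{tll}}$ via Theorem~\ref{thm:refinement:boolean}, so the ``assumption'' step is harmless for the upper bound and the extra state-count is absorbed into the final constant. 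The core of the argument is Example~\ref{ex:tll-classes}: I would verify that the eight predicates $\ITLLC{1},\dots,\ITLLC{8}$ (parametrised by a permutation $\PERM \in \PERMS$) together with the complement class
\[
  \RSHCLASSFV{3} \setminus \bigcup_{1 \leq i \leq 8,\ \PERM \in \PERMS} \LTLLC{i}
\]
exhaust $\RSHCLASSFV{3}$ and are pairwise $\EE{\mathtt{tll}}$--inequivalent.

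For \emph{coverage}, I would argue that any well-determined reduced symbolic heap $\rsh$ that can be extended to entail some unfolding of $\mathtt{tll}$ must, up to the pure information carried by $\HATRACK$, look like one of the fragments depicted in Figure~\ref{fig:fixed:tll}: it is either a single points-to assertion encoding a leaf in one of the four possible configurations of its three free variables ($\ITLLC{1}$--$\ITLLC{4}$), a full \texttt{core}-shaped prefix ($\ITLLC{5}$), a genuine two-subtree node whose leftmost leaf is a free variable ($\ITLLC{6}$), a \texttt{core}-prefixed fragment with an extra leaf dangling at the head of the leaf list ($\ITLLC{7}$), or a \texttt{tll} with one extra list element prepended to the second parameter ($\ITLLC{8}$). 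Everything else goes into the complement class. I would make this precise by a case analysis on which of $\IFV{1},\IFV{2},\IFV{3}$ are allocated and which definite (in)equalities hold, using Lemma~\ref{thm:entailment:predicates} to reduce ``can be extended to entail $\mathtt{tll}\,\FV{0}{}$'' to the existence of a single unfolding $\rsh'$ of $\mathtt{tll}$ with $\rsha \ENTAIL{\emptyset} \rsh'$ for the extended heap $\rsha$.

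For \emph{distinctness}, for each pair of classes I would exhibit a context $\sh$ with one predicate call $\PI$ such that $\sh[\PI/\rho_1] \ENTAIL{\emptyset} \mathtt{tll}\,\FV{0}{}$ but $\sh[\PI/\rho_2] \not\ENTAIL{\emptyset} \mathtt{tll}\,\FV{0}{}$, where $\rho_1,\rho_2$ are canonical representatives of the two classes; the context $\sh^{\PERM}$ from the text (adding the missing rightmost-leaf pointer and a fresh root) is the generic separator for the $\ITLLC{i}$ among themselves, and permuting its parameters separates the six permutation-variants of each $\ITLLC{i}$ because the entailment is sensitive to the order in which $\IFV{1},\IFV{2},\IFV{3}$ are threaded through the predicate call. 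Once coverage and distinctness are established, $\EE{\mathtt{tll}}$ has exactly $8 \cdot 6 + 1 = 49$ classes, so Theorem~\ref{thm:eq:myhill} yields a heap automaton $\HA{A}_{\mathtt{tll}}$ over $\SHCENTAIL{3}$ with $49$ states accepting $\USET{\mathtt{tll}\,\T{x}}{\SRD}{\CENTAIL{3}}$, which is the claim.

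The main obstacle I anticipate is the coverage argument: showing that \emph{every} extensible well-determined reduced symbolic heap falls into one of the finitely many classes, rather than merely that the listed classes are extensible. This requires a careful bookkeeping of how a generic context can ``complete'' a fragment into a TLL—essentially a normal-form lemma for partial TLLs—and the acyclicity and tracking side-conditions are what keep this finite (without them, dangling aliases could create unboundedly many inequivalent fragments). The permutation bookkeeping, while tedious, is routine once the eight ``shapes'' are fixed; I would relegate the full verification to the technical report (\TECHNICALREPORT) and in the paper only sketch the shape analysis and the separating contexts.
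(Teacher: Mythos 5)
Your overall route is the same as the paper's: instantiate Theorem~\ref{thm:eq:myhill}, take the $8$ shapes of Example~\ref{ex:tll-classes} for each of the $6$ permutations plus the complement class, and prove coverage and pairwise distinctness via separating contexts. However, there are two genuine gaps. First, coverage plus pairwise inequivalence of the sets $\LTLLC{i}$ only shows that $\EE{\mathtt{tll}}$ has \emph{at least} $49$ classes; to conclude it has \emph{exactly} $49$ (which is what the state count requires) you must also show that each $\LTLLC{i}$ is contained in a \emph{single} $\EE{\mathtt{tll}}$-class, i.e., that any two unfoldings of the same $\ITLLC{i}$ are equivalent to one another. The paper proves this as a separate ``Equivalence'' step by coinduction on the rules of the SIDs defining the $\ITLLC{i}$; without it your inference ``once coverage and distinctness are established, $\EE{\mathtt{tll}}$ has exactly $8\cdot 6+1$ classes'' does not follow --- a priori one of your sets could split into infinitely many classes, which would defeat the construction entirely.

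Second, you correctly flag completeness as the main obstacle but do not supply the idea that discharges it. The paper does not enumerate allocation/equality patterns of the three free variables directly; it introduces \emph{shared locations} and proves (i) a partial model whose unique tight model has $\alpha$ shared locations needs at least $\alpha$ free variables, and (ii) every partial \texttt{tll} model outside $\bigcup_{i,\PERM}\LTLLC{i}$ has at least four shared locations. Together these confine all partial models in $\RSHCLASSFV{3}$ to the listed classes. Your proposed case split on which of $\IFV{1},\IFV{2},\IFV{3}$ are allocated does not obviously terminate in finitely many shapes without such a counting device, since the fragments themselves can be arbitrarily large and the case analysis must bound the \emph{interface} to any completing context, not the fragment. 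A minor further point: intersecting with the tracking and acyclicity automata via Theorem~\ref{thm:refinement:boolean} would multiply the state space, so those side conditions cannot be ``absorbed into the final constant'' of $49$; the paper instead restricts attention to calls whose parameters are non-null and pairwise definitely unequal and notes that these conditions can be checked separately beforehand.
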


\begin{remark}
  One might be tempted to think that we have only moved the complexity
  into the transition relation and thus gained nothing:
  Instead of checking entailment w.r.t.~\texttt{tll}, we now have to
  verify entailments w.r.t.~several new SIDs.
  However, the states of our heap automaton coincide with
  the $\EE{\texttt{tll}}$-equivalence classes.
  Thus, it always suffices to evaluate the transition relation w.r.t.~ to small \emph{canonical
   representations} of each equivalence class.
  %
  Hence, instead of solving entailments of the
  form $P\, \T{x} \ENTAIL{} \mathtt{tll}\, \T{x}$ 
  it suffices to consider entailments
  $\rsh \ENTAIL{} \ITLLC{i}\, \T{x}$,
  where $\rsh$ is a \emph{reduced} symbolic heap.
  Moreover, since we assume well-determined symbolic heaps, each such
  $\rsh$ has a single tight model up to isomorphism.
  Then verifying $\rsh \ENTAIL{} \ITLLC{i}\, \T{x}$ boils down to the
  model-checking problem for fixed SIDs $\ITLLC{i}$. It follows by
  Remark~\ref{rem:model-checking} that
  $\Delta_{\HA{A}_{\mathtt{tll}}}$ is decidable in time
  $\BIGO{2^{\text{poly}(\SIZE{\SRD})}}$ (where $\SRD$ is the SID for
  $\mathtt{tll}$ from Example~\ref{ex:srd}).
  We thus apply
  Lemma~\ref{thm:entailment:exptime} to conclude that Algorithm~\ref{alg:entailment:decision-procedure}
  -- if fed with $\HA{A}_{\mathtt{tll}}$ -- is an \CCLASS{ExpTime} decision procedure for
  entailments containing \texttt{tll} predicates.
\end{remark}

\begin{remark}
  Heap automata allow for easy integration of additional syntactic conditions
  in order to obtain more compact state spaces. 
  For instance, one observation is that many states of $\HA{A}_{\mathtt{tll}}$ collapse if
  reordering of parameters is restricted.

  Another approach is to fix the SID in advance and construct an
  automaton
  $\HA{A}^{\SRD} = (Q, \SL{\SRD}{\CENTAIL{\alpha}}, F, \Delta)$ that
  is only defined on symbolic heaps in
  $\SL{\SRD}{\CENTAIL{\alpha}}$. Depending on $\SRD$, there may be
  many $\EE{\PS}$--equivalence classes $\EC{\tau}{\PS}$ with
  $\EC{\tau}{\PS} \cap \SL{\SRD}{\CENTAIL{\alpha}} = \emptyset$.
  Dropping such classes can lead to significantly smaller
  automata. 
  By restricting heap automata in this way, we obtain decision procedures
  for (established) symbolic heaps with fixed (as opposed to user-defined)
  SIDs, such as the original decidable symbolic heap
  fragment introduced by Berdine et al.~\cite{berdine2004decidable}.
\end{remark}


\subsection{Proof of Theorem~\ref{thm:eq:myhill}} \label{app:eq:myhill}

\begingroup
\def\thetheorem{\ref{thm:eq:myhill}}
\begin{theorem}[Myhill--Nerode~\cite{nerode1958linear}]
   Let $\PS$ be a predicate symbol of an SID $\SRD$.
   Then there exists a heap automaton $\HASH{\PS}$ with
   $L(\HASH{\PS}) = \USET{\PS\FV{0}{}}{\SRD}{\CENTAIL{\alpha}}$ 
   iff
   the number of equivalence classes of $\EE{\PS}$ is finite.
\end{theorem}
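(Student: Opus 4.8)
The statement is a Myhill--Nerode style characterization, so the proof naturally splits into the two implications. For the direction ``finitely many equivalence classes implies a heap automaton exists'', the plan is to take the construction already sketched after the theorem statement: set $Q \DEFEQ \{ \EC{\rsh}{\PS} \mid \rsh \in \RSL{}{\CENTAIL{\alpha}} \}$, $F \DEFEQ \{ \EC{\rsh}{\PS} \mid \rsh \ENTAIL{\SRD} \PS\FV{0}{} \}$, and transitions $\MOVE{\HASH{\PS}}{\EC{\sh[\CALLN{1}{}/\rsh_1,\ldots,\CALLN{m}{}/\rsh_m]}{\PS}}{\sh}{\EC{\rsh_1}{\PS}\ldots\EC{\rsh_m}{\PS}}$. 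First I would check that $\Delta$ is well-defined, i.e.\ that the target equivalence class depends only on the classes $\EC{\rsh_i}{\PS}$ and not on the chosen representatives. This is exactly where Definition~\ref{def:subst-eq} ($\PS$-equivalence) is used: replacing $\rsh_i$ by an $\EE{\PS}$-equivalent $\rsha_i$ inside $\sh$ yields a symbolic heap that is again $\EE{\PS}$-equivalent to $\sh[\ldots]$, which follows by iterated substitution (viewing $\sh[\ldots]$ with one predicate call restored as the ``context'' $\sh'$ required in Definition~\ref{def:subst-eq}). Decidability of $\Delta$ follows because $\EE{\PS}$ is decidable (entailment between reduced symbolic heaps is decidable, and one only needs to range over finitely many representative contexts up to the relevant syntactic bounds) and the class index can be computed.

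Next I would establish the compositionality property of $\HASH{\PS}$ (Definition~\ref{def:refinement:automaton}). This is almost immediate from the definition of $\Delta$: the ``if'' direction (from component transitions to a transition on $\sh[\ldots]$) is built into how $\Delta$ assigns the target state, and the ``only if'' direction uses that every reduced symbolic heap $\rsh_i$ admits the single-step transition $\OMEGA{\HASH{\PS}}{\EC{\rsh_i}{\PS}}{\rsh_i}$, together with well-definedness of $\Delta$. Then I would verify $L(\HASH{\PS}) = \USET{\PS\FV{0}{}}{\SRD}{\CENTAIL{\alpha}}$: by compositionality, $\OMEGA{\HASH{\PS}}{q}{\rsh}$ holds iff $q = \EC{\rsh}{\PS}$, so $\rsh \in L(\HASH{\PS})$ iff $\EC{\rsh}{\PS} \in F$ iff $\rsh \ENTAIL{\SRD} \PS\FV{0}{}$, which by Definition~\ref{def:entailment:heapmodels} (combined with Lemma~\ref{thm:symbolic-heaps:semantics}) is exactly membership in $\USET{\PS\FV{0}{}}{\SRD}{\CENTAIL{\alpha}}$. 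Finiteness of $Q$ is the hypothesis.

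For the converse, suppose a heap automaton $\HA{A} = (Q,\SHCENTAIL{\alpha},\Delta,F)$ accepts $\USET{\PS\FV{0}{}}{\SRD}{\CENTAIL{\alpha}}$. The standard argument applies: map each reduced symbolic heap $\rsh$ to the set of states $\{ q \in Q \mid \OMEGA{A}{q}{\rsh} \}$ (or, if $\HA{A}$ happens to be deterministic, to the unique such state). I would show that if two reduced symbolic heaps $\rsh, \rsha$ are assigned the same set of states, then $\rsh \EE{\PS} \rsha$: for any context $\sh$ with a single predicate call $\PI$ of the right arity, the compositionality property lets one compute $\OMEGA{A}{\,\cdot\,}{\sh[\PI/\rsh]}$ from $\sh$ together with the state assigned to $\rsh$, and likewise for $\rsha$; since the two assigned state sets coincide, $\sh[\PI/\rsh]$ and $\sh[\PI/\rsha]$ reach exactly the same states, hence one is accepted iff the other is, i.e.\ $\sh[\PI/\rsh] \ENTAIL{\emptyset} \PS\FV{0}{}$ iff $\sh[\PI/\rsha] \ENTAIL{\emptyset} \PS\FV{0}{}$. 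Thus the number of $\EE{\PS}$-classes is bounded by $2^{\SIZE{Q}}$, which is finite. The main obstacle in the whole argument is the well-definedness of $\Delta$ in the forward direction: one has to be careful that Definition~\ref{def:subst-eq} quantifies over contexts with exactly one predicate call, so replacing several predicate calls simultaneously must be reduced to a sequence of single-call replacements, checking at each step that the intermediate symbolic heaps stay within $\CENTAIL{\alpha}$ (well-determinedness and the parameter bound), which is why the restriction to $\SHCENTAIL{\alpha}$ and the closure properties established earlier are needed.
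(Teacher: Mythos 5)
Your proposal is correct and follows essentially the same route as the paper's proof: the same Myhill--Nerode construction with equivalence classes as states for the forward direction, and for the converse the observation that the (determinized) state assignment refines $\EE{\PS}$, bounding the number of classes by $2^{\SIZE{Q}}$. The only differences are presentational --- you make the well-definedness/congruence check for $\Delta$ explicit (the paper dismisses it as "easy to verify") and you work with sets of reachable states directly instead of first invoking the subset construction --- but these amount to the same argument.
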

\begin{proof}
  Assume there are only finitely many equivalence classes of $\EE{\PS}$.
  Furthermore, let $\EC{\rsh}{\PS}$ denote the equivalence class containing formula $\rsh$.
  Then a heap automaton $\HASH{\PS} = (Q,\SHCENTAIL{\alpha},\Delta,F)$ accepting
  $\USET{\PS\FV{0}{}}{\SRD}{\CENTAIL{\alpha}}$
  is given by
  $Q = \{ \EC{\rsh}{\PS} ~|~ \rsh \in \RSL{}{\CENTAIL{\alpha}} \}$,
  $F = \{ \EC{\rsh}{\PS} ~|~ \rsh \in \RSL{}{\CENTAIL{\alpha}}, \rsh \ENTAIL{\SRD} \PS\FV{0}{} \}$
  and
  \begin{align*}
    \MOVE{\HASH{\PS}}{\EC{\sh[\CALLN{1}{} / \rsh_1, \CALLN{m}{} / \rsh_m]}{\PS}}{\sh}{\EC{\rsh_1}{\PS} \ldots \EC{\rsh_m}{\PS}}~,
  \end{align*}
  for each symbolic heap $\sh \in \SHCENTAIL{\alpha}$ with $\NOCALLS{\sh} = m$.
  Now, a straightforward induction on the height of unfolding trees reveals for each
  unfolding tree $t$ and each $q \in Q$, we have
  \begin{align*}
    \OMEGA{\HASH{\PS}}{q}{\UNFOLD{t}} ~\text{iff}~ \UNFOLD{t} \in q.
  \end{align*}
  Then, by definition of $F$, we have
  \begin{align*}
    \rsh \ENTAIL{\SRD} \PS\FV{0}{}
    ~\text{iff}~ \bigvee_{q \in F} \rsh \in q
    ~\text{iff}~ \bigvee_{q \in F} \OMEGA{\HASH{\PS}}{q}{\rsh}
    ~\text{iff}~ \rsh \in L(\HASH{\PS}).
  \end{align*}
  For the converse direction, let $\HASH{\PS}$ be a heap automaton
  with $L(\HASH{\PS}) = \USET{\PS\FV{0}{}}{\SRD}{\CENTAIL{\alpha}}$.
  Without loss of generality, we assume that $\HASH{\PS}$ is \emph{deterministic},
  i.e., for each $\rsh \in \RSL{}{\CENTAIL{\alpha}}$ there exists exactly
  one $q \in Q_{\HASH{\PS}}$ such that $\OMEGA{\HASH{\PS}}{q}{\rsh}$ holds.
  Otherwise, we may apply the (subset) construction similar to the proof
  of Lemma~\ref{thm:refinement:complement} to obtain such an automaton (with an exponentially larger state space).
  Now, for each $q \in Q_{\HASH{\PS}}$ let
  \begin{align*}
    L_q \DEFEQ \{ \rsh \in \RSL{}{\CENTAIL{\alpha}} ~|~ \OMEGA{\HASH{\PS}}{q}{\rsh} \}.
  \end{align*}
  Then, by assumption, we have $L_{p} \cap L_{q} = \emptyset$ for each $p,q \in Q_{\HASH{\PS}}$ with $p \neq q$.
  Moreover,
  \begin{align*}
     \RSL{}{\CENTAIL{\alpha}} ~\DEFEQ~ \bigcup_{q \in Q} L_{q}.
  \end{align*}
  Hence, $\textit{Part}(\HASH{\PS}) \DEFEQ \{ L_{q} ~|~ q \in Q \}$ is a partition of $\RSL{}{\CENTAIL{\alpha}}$.
  Further, $\rsh,\rsha \in L_{q}$ clearly implies $\rsh \EE{\PS} \rsha$, i.e., our partition $\textit{Part}(\HASH{\PS})$
  refines the partition induced by $\EE{\PS}$.
  Thus, $\SIZE{\textit{Part}(\HASH{\PS}}) \leq \SIZE{Q_{\HASH{\PS}}}$ is an upper bound on the number of equivalence class
  of $\EE{\PS}$.
  Since $Q_{\HASH{\PS}}$ is a finite set, this means that $\EE{\PS}$ has only finitely many equivalence classes.
  \qed
\end{proof}
\addtocounter{theorem}{-1}
\endgroup

\section{Details on Heap Automata for Entailment}

\subsection*{Correctness Proof for
  the Construction in Example~\ref{ex:tll-classes}}\label{app:tll-classes}

\paragraph{Induced automata.}
Recall that formulas can only be $\PS$-equivalent if they have
the same number of free variables $\beta$. A heap automaton based on
the Myhill--Nerode construction (cf.~Theorem~\ref{thm:eq:myhill}) for
$\SHCLASSFV{\alpha}$ therefore has states for the equivalence classes
for each $\beta \leq \alpha$.  In Example~\ref{ex:tll-classes} we,
however, restricted ourselves to the classes with exactly $\alpha = 3$
free variables.
We first justify this
choice by noting that the equivalence classes of formulas with exactly
$\alpha$ free variables induce all classes for $\beta < \alpha$. This
is true in general, not only for the \texttt{tll} case study.
\begin{remark}
  Let $\HA{A}^{=\alpha}$ be a heap automaton obtained by the
  Myhill--Nerode construction of Theorem~\ref{thm:eq:myhill}
  restricted to symbolic heaps with exactly $\alpha$ parameters
  and restricted to acyclic models.
  Moreover, let each equivalence class be represented by an SID.
  Then $\HA{A}^{=\alpha}$ induces an
  automaton $\HA{A}^{=\beta}$ for symbolic heaps with exactly $\beta$
  variables for each $\beta < \alpha$.
\end{remark}


\begin{proof}[Proof sketch]
  $\HA{A}^{=\beta}$ is obtained by
  \begin{enumerate}
  \item removing all classes that are defined by SIDs that use more
    than $\beta$ parameters\footnote{Or, more precisely, more than
      $\beta$ parameters that are pairwise not definitely
      equal} in points-to assertions, and
  \item for the remaining classes, defining new SIDs by adding the
    closure of the pure formulas, dropping those parts of the pure
    formulas that do not refer to variables that occur in points-to
    assertions, and then projecting the remaining parameters onto the
    first $\beta$ free variables.\footnote{Note that this is only
      sufficient when we assume acyclicity. Otherwise, variables that
      only occur on the right-hand side of points-to assertions can
      also be identified with variables on the left-hand side, thus
      also reducing the number of free variables in the symbolic
      heap.} 
  \end{enumerate}
  \qed
\end{proof}

\begin{example}
  The automaton derived from Example~\ref{ex:tll-classes} is the
  automaton $\HA{A}_{\mathtt{tll}}^{=3}$ for acyclic symbolic heaps
  with exactly $3$ free variables. The only classes of
  $\HA{A}_{\mathtt{tll}}^{=3}$ that do not use all $3$ parameters in
  points-to assertions are the classes $\ITLLC{2}$, $\ITLLC{3}$, and
  $\ITLLC{4}$.

  For each of these classes, the projection yields the same new
  classes, defined by the SIDs
\[
  \begin{array}{l}
R_1\,\IFV{1}\,\IFV{2} ~\SRDARROW~ \PT{ \IFV{1} }{
    \NIL\,\NIL\,\IFV{2}} \\
R_2\,\IFV{1}\,\IFV{2} ~\SRDARROW~ \PT{ \IFV{2} }{ \NIL\,\NIL\,\IFV{1}}.
  \end{array}
\]

Thus, $\HA{A}_{tll}^{=2}$ has only three states: The two states
defined by $R_1$ and $R_2$ and one state for all other RSHs with two free
variables.

The automaton $\HA{A}_{tll}^{=1}$ has only a single state, because all
SIDs in the definition of $\HA{A}_{tll}^{=3}$ use at least two
variables in points-to assertions.
\end{example}

As $\HA{A}^{=\alpha}_{\mathtt{tll}}$ induces $\HA{A}^{=\beta}_\mathtt{tll}$ for $\beta<\alpha$, it
is sufficient to prove the correctness of the automaton $\HA{A}^{=3}_{\mathtt{tll}}$, which we will do in the following.


\paragraph*{Overview of the correctness proof.}

We first recall the definition of \texttt{tll} from Section~\ref{ex:srd}.
 \begin{align*}
   \textnormal{\texttt{tll}} ~\SRDARROW~ & \PT{\PROJ{\FV{0}{}}{1}}{\NIL\,\NIL\,\PROJ{\FV{0}{}}{3}} : \{ \PROJ{\FV{0}{}}{1} = \PROJ{\FV{0}{}}{2} \} \\
   \textnormal{\texttt{tll}} ~\SRDARROW~ & \exists \ell\,r\,z ~.~ \PT{\PROJ{\FV{0}{}}{1}}{\ell\,r\,\NIL} \\
                                           & \qquad \SEP \textnormal{\texttt{tll}}\,\ell\,\PROJ{\FV{0}{}}{2}\,z \SEP \textnormal{\texttt{tll}}\,r\,z\,\PROJ{\FV{0}{}}{3}. 
 \end{align*}

 Towards a correctness proof for Example~\ref{ex:tll-classes}, we have
 to show that the predicates defined there (and repeated below)
 correspond exactly to the $\EE{\mathtt{tll}}$-equivalence classes
 that contain all reduced symbolic heaps $\rsh$ that can be extended
 to reduced symbolic heaps $\rsh'$ that entail a \texttt{tll}
 unfolding, i.e., $\rsh' \ENTAIL{} \mathtt{tll}\,\T{x}$. For brevity,
 we will henceforth call such symbolic heaps \emph{partial models} of
 the \texttt{tll} predicate.\footnote{As we are in a well-determined
   setting, there is a one-to-one correspondence between \emph{reduced
     symbolic heap} and \emph{model}, so we sometimes use these terms
   interchangeably. Further, let $\rsh \in \RSL{}{}$ with
   $\SIZE{\FV{0}{\rsh}} = \beta$ and $\PS \in \PRED(\SRD)$. $\rsh$ is
   a \emph{partial model} of $\PS$ if there exists a
   $\sh \in \SL{\SRD}{}$ with exactly one predicate call $\PI$ of
   arity $\beta$, such that $\sh[\PI / \rsh] \ENTAIL{\SRD} P\FV{0}{}$.
 } In the following, recall that $\PIFV{i} \DEFEQ{} \PERM(\IFV{i})$.


\tllclasses{}

An illustration of each predicate, together with a graphical representation of a corresponding
unfolding, is found in Figure~\ref{fig:fixed:tll}.

The correctness proof consists of the following steps.
\begin{enumerate}
\item \emph{Partition.} For each pair of predicates
  $\ITLLC{i}, P_j^{\PERMA}$,
  $1 \leq i,j \leq \NUMTLL, \PERM,\PERMA\in \PERMS$, we show that the sets
  $\CALLSEM{\ITLLC{i}\,\T{x}}{}$ and $\CALLSEM{P_j^{\PERMA}\,\T{x}}{}$ are
  disjoint.
\item \emph{Equivalence.} We prove for each $\PERM$ and each predicate
  $\ITLLC{i}, 1 \leq i \leq \NUMTLL$, that for all
  $\rsh,\rsha \in \CALLSEM{\ITLLC{i}\,\T{x}}{}$, $\rsh \EE{\mathtt{tll}}
  \rsha$ holds.
\item \emph{Completeness.} We show that the predicates $\ITLLC{i}$
  define all equivalence classes of partial models, i.e., for each
  partial model $\rsh$ of a \texttt{tll} unfolding there exist an $i$
  and a $\PERM \in \PERMS$ such that $\rsh \in \ITLLC{i}$.
\end{enumerate}
Once we have established these properties, we immediately obtain a
heap automaton for \texttt{tll}-entailment due to
Theorem~\ref{thm:eq:myhill}.

\paragraph{Partition.}
We first show that the sets $\CALLSEM{\ITLLC{i}}{}$ partition the set
of all partial models of \texttt{tll}.
To this end, we define for each $1 \leq i \leq \NUMTLL$ and $\PERM \in
\PERMS$ a set of
formulas $\{\CTXT{i}{1}, \ldots \CTXT{i}{n_i} \}$ with a single predicate
call $\PI$ of arity $3$ such that
\begin{enumerate}
\item For all $1 \leq k \leq n_i$ and for all
  $\rsh \in \CALLSEM{\ITLLC{i}\,\T{x}}{}$,
  $\CTXT{i}{k}[\PI/\rsh] \ENTAIL{} \PCDS{tll}{\FV{0}{}}$,
\item For all elements $\rsha \in \CALLSEM{P_j^{\PERMA}\,\T{x}}{}$ with $\ARITY(P_j^{\PERMA}) = \NOFV{\PCTXT{\PERM}{i}{k}}$,
  where $j \neq i$ and/or $\PERM \neq \PERMA$, there exists a
  $1 \leq k \leq n_i$ such that
  $\PCTXT{\PERM}{i}{k}[\PI/\rsha] \not\ENTAIL{} \PCDS{tll}{\FV{0}{}}$.
\end{enumerate}
In other words, we provide a \emph{distinguishing context} for each $\ITLLC{i}$.
Permuting the free variables according to $\PERM$ will lead to a
permutation of the parameters of the parameter calls in the
distinguishing context. To express this in a uniform way, we write
$\LIFT{u_1\, u_2\, u_3}$ to denote a call to $\PI$ where the
parameters $u_1\, u_2\, u_3$ are reordered in accordance with
$\PERM$.\footnote{E.g., if $\PERM(\IFV{i}) = 3 - i$, then
  $\LIFT{u_1 u_2 u_3} \DEFEQ \PI\,u_3\,u_2\,u_1$.}
\begin{itemize}



\item The (single) member of $\LTLLC{1}$ is a binary tree with three
  nodes\footnote{Because we assume acyclicity, the three nodes are
    definitely different.} and without linked leaves. The smallest
  formula yielding a valid \texttt{tll} thus needs to add the two link
  edges from the left child and the right child and from the right child to
  its successor.

$\CTXT{1}{1}\,\T{x} \DEFEQ{} \exists u ~.~
  \PT{\PIFV{2}}{\NIL\, \NIL\, u} \SEP \PT{u}{\NIL\, \NIL\, \PIFV{3}}
  \SEP \LIFT{\PIFV{1}\, \PIFV{2}\, u}$

\item The (single) member of $\LTLLC{2}$ is the smallest valid
  \texttt{tll}, consisting only of a root (which is thus the leftmost
  leaf) and its successor.
  The smallest context yielding a
  \texttt{tll} is thus the symbolic heap that contains nothing but the
  predicate call. To distinguish $\ITLLC{2}$ from
  $\ITLLC{3}$ and $\ITLLC{4}$, we add the pure formula $\PIFV{1} = \PIFV{2}$.

  $\CTXT{2}{1}\,\T{x} \DEFEQ{} \LIFT{\T{x}}  : \{ \PIFV{1} = \PIFV{2} \}$

  Alternatively, we obtain a \texttt{tll} by using the pointer
  allocated in $\ITLLC{3}$ as the pointer from the leftmost source to an
  inner leaf.

  $\CTXT{2}{2}\,\T{x} \DEFEQ{} \exists u \, v ~.~ \PT{\PIFV{1}}{\PIFV{2}\, v\,
    \NIL} \SEP \LIFT{\PIFV{2}\, u\, v} \SEP \PT{v}{ \NIL\,\NIL\,
    \PIFV{3} } $

  Note that the pure formula in the definition of $\ITLLC{2}$ enforces
  $u = \PIFV{2}$. The only reason to quantify over $u$ (rather than
  just writing $\PIFV{2}$) is reusability of the formula in
  the treatment of $P_3$ and $P_4$ below 

\item We get the context for $\LTLLC{3}$ and $\LTLLC{4}$ in the
  same way as for
  $\LTLLC{2}$, i.e.,

  $\CTXT{3}{1}\,\T{x} = \LIFT{\T{x}}  : \{ \PIFV{2} = \PIFV{3} \}$

  $\CTXT{4}{1}\,\T{x} = \LIFT{\T{x}}  : \{ \PIFV{1} \neq \PIFV{2},
  \PIFV{2} \neq \PIFV{3} \}$

  $\CTXT{3}{2} = \CTXT{4}{2} \DEFEQ{} \CTXT{2}{2}$

\item The partial models generated by $\ITLLC{5}$ differ from
  \texttt{tll}s only in the absent pointer from the rightmost leaf to
  its successor.

$\CTXT{5}{1}\,\T{x} \DEFEQ{} \exists u ~.~ \LIFT{\PIFV{1}\, \PIFV{2}\, u} \SEP
  \PT{u}{\NIL\,\NIL\, \PIFV{3}}$
\item $\ITLLC{6}$ describes all \texttt{tll}s that are obtained by applying
  the second rule of the \texttt{tll}-SID at least once. All $\rsh \in
  \LTLLC{6}$ thus already entail a \texttt{tll} unfolding
  without extension.

$\CTXT{6}{1}\,\T{x} \DEFEQ{} \LIFT{\T{x}}$ 
\item $\ITLLC{7}$ describes all \texttt{tll}-like heaps that consist of more
  than one pointer, lack the successor of the rightmost leaf, but
  have an additional predecessor of the leftmost leaf. To get a
  \texttt{tll}, we thus have to both add that last pointer and add a
  tree-structure that points to the additional leaf at the left as
  well as the $\ITLLC{7}$-root.

  $\CTXT{7}{1}\,\T{x} \DEFEQ{} \exists u \, v ~.~ \PT{\PIFV{1}}{\PIFV{2}\, u\,
    \NIL} \SEP \LIFT{u\,\PIFV{2}\,v} \SEP \PT{v}{ \NIL\,\NIL\,
    \PIFV{3} }$

\item $\ITLLC{8}$ describes \texttt{tll}-like graphs with an additional
  predecessor of the leftmost leaf.

$\CTXT{8}{1}\,\T{x} \DEFEQ{} \exists y ~.~ \PT{\PIFV{1}}{\PIFV{2}\, y\,
    \NIL} \SEP \LIFT{y\,\PIFV{2}\, \PIFV{3}}$
\end{itemize}
First observe that for a fixed permutation $\PERM$, each pair of
classes is separated by at least one formula. Indeed, this separation
already occurs for the minimal models of each $\ITLLC{i}$ and can thus
easily be verified by hand; we omit these tedious
constructions.
For fixed $\PERM$, this shows the pairwise disjointness of
the $\ITLLC{i}$ classes.

Now let
$\rsha \in \LTLLC{i}, \rshb \in \CALLSEM{P_i^{\PERMA}\,\T{x}}{}$,
for $\PERM \neq \PERMA \in \PERMS$. We have
$\PCTXT{\PERM}{i}{1}[\PI/\rsha] \ENTAIL{}
\PCDS{tll}{\PERM(\FV{0}{})}$,
and
$\PCTXT{\PERMA}{i}{1}[\PI/\rshb] \ENTAIL{}
\PCDS{tll}{\PERMA(\FV{0}{})}$,
but
$\CALLSEM{\mathtt{tll}\,\PERM(\T{x})}{} \cap
\CALLSEM{\mathtt{tll}\,\PERMA(\T{x})}{} = \emptyset$.
Thus we also have that, for fixed $i$ and $\PERM \neq \PERMA \in \PERMS$,
$\LTLLC{i} \cap \CALLSEM{P_i^{\PERMA}\,\T{x}}{} = \emptyset$:




Putting this together with the arguments above, we obtain a proof
for each $i, j \in \{1, \ldots, 8\}, \PERM,\PERMA \in \PERMS$, that if
$i \neq j$ or $\PERM \neq \PERMA$, then
$\LTLLC{i} \cap \CALLSEM{P_{j}^{\PERMA}\,\T{x}}{} = \emptyset$.

\paragraph{$\EE{\textnormal{\texttt{tll}}}$-Equivalence.}

For $1 \leq i \leq 4$, the set $\CALLSEM{P_i^{\rho}}{}$ is a singleton, i.e. there is nothing to show.
Thus, let $i > 4$.
Formally, the remaining cases are proven individually by coinduction:
We assume for each $1 \leq i \leq 8$ that all reduced symbolic heaps in
$\CALLSEM{P_i^{\rho}}{}$ are already known to be equivalent and then, for each rule with left-hand side $P_i^{\rho}$,
 show that the unfolding obtained from replacing each predicate call by one of these reduced symbolic heaps
is again equivalent.
We omit these tedious calculations and only
provide a rough intuition for each case.
Let $\rsh \in \CALLSEM{P_i^{\rho}}{}$ and $\sh$ be a symbolic heap with one predicate call $I\T{x}$ such that
$\sh[I / \rsh] \ENTAIL{\SRD} \texttt{tll}\FV{0}{}$.
Such a symbolic heap always exists, because each unfolding can be extended to a TLL.
We proceed by case distinction to see that $\sh$ and $\rsh$ are always of the same form -- except for the size of the respective TLL.
\begin{itemize}
  \item For $i=5$, it is easy to verify that exactly one pointer -- the one to the rightmost leaf -- is missing in $\rsh$.
        Thus, $\sh$ is always a TLL with an additional pointer to the rightmost leaf in which the rightmost subtree,
        $\rsh$, is missing (except for the additional pointer),
        e.g.  $\PCTXT{\PERM}{i}{5}$.
  \item For $i=6$, $\rsh$ is a non-empty TLL. Thus, $\sh$ always corresponds to one of the $\texttt{tll}$ rules with the single predicate being $\texttt{tll}\T{x}$,
        e.g.  $\PCTXT{\PERM}{i}{6}$.
  \item For $i=7$, $\rsh$ is a TLL that lacks the pointer to the rightmost leaf and has an additional pointer to the leftmost leaf.
        Thus, $\sh$ always corresponds to a tree-structure lacking a pointer to the rightmost leaf and an additional isolated pointer,
        e.g.  $\PCTXT{\PERM}{i}{7}$.
  \item For $i=8$, $\rsh$  is a TLL with an additional incoming pointer at the leftmost leaf.
        Thus, $\sh$ always corresponds to a TLL lacking the rightmost subtree,
        e.g.  $\PCTXT{\PERM}{i}{8}$.

\end{itemize}
%
%
%
%


\paragraph{Completeness.}

We now show that the $8 \cdot 6$ predicates
$\ITLLC{1} \ldots \ITLLC{\NUMTLL}$, $\PERM \in \PERMS$, cover all
partial models of \texttt{tll} unfoldings.

The basic idea is to show that every partial model of \texttt{tll} is
either in one of the sets $\LTLLC{i}$ or has at least four free
variables.\footnote{Where, as before, we do not count $\IFV{0}$ as
  free variable.}  To this end, we first develop a sufficient
condition for concluding that a partial model needs at least four free
variables.
%
%
Recall that we assume reduced symbolic heaps under consideration to be well-determined.
Since there exists exactly one tight model, i.e. a stack-heap pair, up to isomorphism for each such symbolic heap,
we will argue about the structure of such a stack-heap pair instead.

Intuitively, a location that occurs in two separated heaps, i.e. $\heap \uplus \heap'$, is called a \emph{shared location}.
Formally, for a heap $\heap$, we write $\LOC(\heap)$ to denote all locations allocated or referenced -- pointed to -- to in $h$, i.e. $\LOC(\heap) = \DOM(\heap) \cup co\DOM(\heap)$.
\begin{definition}[Shared Location]
  Let $(\stack,\heap) \in \STATES$ be a stack-heap pair
  and $\heap'$ be a heap such that $\heap \uplus \heap'$ is well-defined.
  Then, a location $\ell \in \LOC(\heap)$ is a \emph{shared location} of $\heap$ w.r.t. $\heap'$ if $\ell \in \LOC(\heap')$ or $\ell \in co\DOM(\stack)$.
\end{definition}
%


\begin{lemma}
  Let $(\stack,\heap_1 \uplus \heap_2)$ be the unique tight model up to isomorphism
  of $\sh[I / \rsh]$ for some partial model $\rsh$ of $\PS$
  such that $(\stack,\heap_1)$ is a model of $\rsh$.
  If $\heap_1$ has $\alpha$ shared locations w.r.t. $\heap_2$
  then $\rsh$ has at least $\alpha$ free variables.
\end{lemma}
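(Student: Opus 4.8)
The plan is to prove the contrapositive, or rather a direct structural argument: the free variables of $\rsh$ are precisely the "interface" through which $\heap_1$ can interact with its environment, so every shared location must be the image under $\stack$ of some free variable of $\rsh$. First I would recall that by Lemma~\ref{thm:symbolic-heaps:fv-coincidence} (the coincidence lemma), the satisfaction of $\rsh$ by $(\stack,\heap_1)$ depends only on $\restr{\stack}{\FV{0}{\rsh}}$; and since $\rsh$ is well-determined, $(\stack,\heap_1)$ is the unique tight model of $\rsh$ up to isomorphism relative to that restricted stack. The key claim is: if $\ell$ is a shared location of $\heap_1$ w.r.t.\ $\heap_2$, then $\ell = \stack(x)$ for some $x \in \FV{0}{\rsh}$. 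Once this claim is established, the map sending each shared location $\ell$ to a chosen free variable $x_\ell$ with $\stack(x_\ell) = \ell$ is injective (distinct locations have distinct preimages), so $\alpha \leq \NOFV{\rsh}$, which is the assertion.

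To prove the key claim I would argue by contradiction. Suppose $\ell \in \LOC(\heap_1)$ is shared w.r.t.\ $\heap_2$ but $\ell \notin \stack(\FV{0}{\rsh})$. Since $\rsh$ is established (recall the whole section on entailment assumes established SIDs, and well-determined symbolic heaps are in particular established), every location in $\LOC(\heap_1)$ that is not in $\DOM(\heap_1)$ must nonetheless be "eventually allocated" or equal to a free variable's image—more precisely, by establishment every existentially quantified variable of $\rsh$ is definitely equal to a free variable or definitely allocated. The point is that $\ell$, being shared, is either in $co\DOM(\stack)$ (hence $\ell = \stack(y)$ for some program variable $y$ of $\sh[I/\rsh]$) or in $\LOC(\heap_2)$. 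In the first case, because $\VAR(\sh) \cap \VAR(\rsh) \subseteq \FV{0}{\rsh}$ by the side condition on predicate replacement (see the definition of $\sh\subst{\PS_i}{\rsh}$), the only stack variables whose images can simultaneously "belong to" the part of the heap carved out by $\rsh$ are the free variables of $\rsh$; a value $\stack(y)$ with $y \notin \FV{0}{\rsh}$ that lands in $\LOC(\heap_1)$ would, by well-determinedness and isomorphism-freedom, force a renaming collapsing $y$ to a free variable, contradicting $\ell \notin \stack(\FV{0}{\rsh})$. In the second case, $\ell \in \LOC(\heap_2) \cap \LOC(\heap_1)$: since $\DOM(\heap_1) \cap \DOM(\heap_2) = \emptyset$, $\ell$ cannot be allocated in both; say $\ell \in \DOM(\heap_1)$ and $\ell \in co\DOM(\heap_2) \setminus \DOM(\heap_2)$ (or symmetrically). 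Then $\ell$ is referenced from outside $\heap_1$, so $\ell$ is an allocated location of $\rsh$ that must be "reachable from the interface"; by establishment and the structure of $\rsh$, $\ell$ is definitely equal to a free variable of $\rsh$, i.e.\ $\ell \in \stack(\FV{0}{\rsh})$—again a contradiction.

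The main obstacle I expect is making the second case fully rigorous without re-deriving a lot of the machinery: one must carefully track how a shared location that is \emph{allocated} in $\heap_1$ but \emph{referenced} from $\heap_2$ (rather than being a stack value) is forced to coincide with a free-variable image. The clean way is to exploit well-determinedness directly: if such an $\ell$ were not a free-variable image, one could build a model of $\sh[I/\rsh]$ isomorphic on $\heap_1$ but with $\ell$ renamed to a fresh location disjoint from $\LOC(\heap_2)$ and from $co\DOM(\stack)$ (legal because renamings of existential witnesses of $\rsh$ are free, cf.\ the unfolding definition with fresh-variable substitution), contradicting either the uniqueness of the tight model up to isomorphism of $\sh[I/\rsh]$ or the fact that $\heap_1 \uplus \heap_2$ must actually fit together as a model. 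I would formalize this as: pick a bijection on $\VAL$ fixing everything except $\ell$, apply it to $\heap_1$ only; the result $(\stack, \heap_1' \uplus \heap_2)$ is still a model of $\sh[I/\rsh]$ by the coincidence lemma and the semantics of $\SEP$, but is \emph{not} isomorphic to $(\stack,\heap_1\uplus\heap_2)$ as a state because the sharing structure changed—unless $\ell$ was pinned down by a program/free variable. This pinning-down argument, spelled out, is the technical heart; the injectivity bookkeeping and the appeal to Lemma~\ref{thm:symbolic-heaps:fv-coincidence} are routine.
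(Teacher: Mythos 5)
Your proposal is correct in substance, but it takes a genuinely different route from the paper. The paper's own proof is a short sketch that argues \emph{syntactically}: since $\rsh$ is established, any definite equality or reachability relation that crosses the boundary between $\rsh$ and its context must factor through a parameter of the predicate call (citing Lemma~\ref{obs:zoo:establishment:equality-propagation} and Lemma~\ref{obs:zoo:reachability:propagation}); each shared location witnesses such a cross-boundary relation, and distinct locations need distinct parameters, so $\NOFV{\rsh} \geq \alpha$. You instead argue \emph{semantically}: every shared location must equal $\stack(x)$ for some $x \in \FV{0}{\rsh}$, because otherwise one could rename that location inside $\heap_1$ (re-choosing the existential witnesses of $\rsh$, which is legitimate by the coincidence lemma and the semantics of $\SEP$) and obtain a tight model of $\sh[\PI/\rsh]$ with different sharing structure, contradicting well-determinedness; injectivity of $\ell \mapsto x_\ell$ then gives the bound. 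Both arguments rest on the same insight that $\FV{0}{\rsh}$ is the complete interface of $\rsh$. The paper's version buys reuse of machinery already proved for the robustness automata and sidesteps the model-surgery; yours is more self-contained and makes the ``interface'' intuition explicit, at the cost of having to verify carefully that the renamed pair $(\stack,\heap_1'\uplus\heap_2)$ is still a model and is genuinely non-isomorphic to the original --- which you correctly identify as the technical heart, and which in your first case (a stack value landing in $\LOC(\heap_1)$) is currently only gestured at; closing it cleanly in fact requires exactly the equality-propagation lemma the paper cites, or an explicit repetition of your renaming argument for that case. One small point worth adding: a shared location can never be $\stack(\NIL)$ since $\NIL \notin \LOC$, so the convention that $\NIL$ is not counted among the free variables does not hurt the injection.
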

\begin{proof}[Proof sketch]
  Since $\rsh$ is established, equalities and reachability between symbolic heaps
  and unfoldings of their predicate calls have to be propagated through
  predicate calls
  (cf. Lemma~\ref{obs:zoo:establishment:equality-propagation}
   and Lemma~\ref{obs:zoo:reachability:propagation}).
  Thus, if $\heap_1$ has $\alpha$ shared locations w.r.t. $\heap_2$, $\rsh$ has at least $\alpha$
  free variables.
\end{proof}

To show the completeness of the equivalence classes $\LTLLC{i}$, it
thus suffices to show that every well-determined partial model is
either in one of the classes or its unique tight model
has at least four shared locations.

\begin{lemma} \label{lem:four-extensions} Every
  $\rsh \in \RSL{}{} \setminus \bigcup_{1 \leq i \leq \NUMTLL, \PERM
    \in \PERMS} \ITLLC{i}$
  that is a partial model of \textnormal{\texttt{tll}} has a unique tight model with at least four shared locations.
\end{lemma}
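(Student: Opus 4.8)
<br>

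The plan is to prove Lemma~\ref{lem:four-extensions} by a case analysis on the structure of the unique tight model of $\rsh$, showing that whenever $\rsh$ is \emph{not} captured by one of the $49$ equivalence classes, its unique tight model is ``too irregular'' to be completed to a tree with linked leaves using only three boundary locations. The starting point is that $\rsh$ is a partial model of \texttt{tll}, so there is a context $\sh$ with a single predicate call $\PI$ of arity $3$ such that $\sh[\PI/\rsh] \ENTAIL{} \texttt{tll}\,\FV{0}{}$; fix a tight model $(\stack,\heap_1 \uplus \heap_2)$ of $\sh[\PI/\rsh]$ (unique up to isomorphism by well-determinedness), where $(\stack,\heap_1)$ is the tight model of $\rsh$. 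Since every model of a \texttt{tll}-unfolding is a binary tree whose internal nodes carry a third ($\NIL$-valued) field and whose leaves are chained into a $\NIL$-terminated singly-linked list, the combined structure $\heap_1 \uplus \heap_2$ has this fixed shape. The key structural observation is that $\heap_1$ is a ``connected fragment'' of this tree-with-linked-leaves picture: deleting $\heap_1$ from the global structure leaves $\heap_2$, and the shared locations of $\heap_1$ w.r.t. $\heap_2$ are exactly the locations where the fragment $\heap_1$ touches the rest of the structure (plus those named by $\stack$).

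The heart of the argument is a classification of the possible connected fragments of a TLL that have \emph{at most three} shared locations. First I would argue that such a fragment, viewed inside the global tree, is determined by: (i) a subtree rooted at some node $a$ (the ``tree part''), possibly with some of its leaves' successor-pointers removed, together with (ii) at most one extra incoming link-edge and (iii) at most one pending points-to assertion into the surrounding structure. Each of these degrees of freedom corresponds to a shared location, so bounding the number of shared locations by three forces the fragment into one of finitely many shapes. Enumerating these shapes and matching them against the definitions of $\ITLLC{1}$ through $\ITLLC{8}$ (recall Figure~\ref{fig:fixed:tll}: $\ITLLC{5}$ is a full \texttt{core} i.e. a TLL missing the rightmost successor pointer; $\ITLLC{6}$ is a non-empty \texttt{tll}; $\ITLLC{7}$ is a \texttt{core} with an extra incoming pointer at the leftmost leaf; $\ITLLC{8}$ is a \texttt{tll} with an extra incoming pointer at the leftmost leaf; $\ITLLC{1}$--$\ITLLC{4}$ are the degenerate single-pointer cases) shows that \emph{every} fragment with at most three shared locations is an unfolding of some $\ITLLC{i}\,\T{x}$ for an appropriate permutation $\PERM \in \PERMS$ of the free variables --- the permutation accounts for the freedom in which of the three shared locations plays the role of root, leftmost leaf, and rightmost successor. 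Taking the contrapositive: if $\rsh \notin \bigcup_{i,\PERM} \ITLLC{i}$, then its tight model must have at least four shared locations w.r.t. $\heap_2$, and then the preceding lemma (the ``propagation'' lemma building on Lemma~\ref{obs:zoo:establishment:equality-propagation} and Lemma~\ref{obs:zoo:reachability:propagation}) gives that $\rsh$ has at least four free variables, which is what the statement asserts.

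The main obstacle will be making the informal notion of ``connected fragment of a TLL with at most three attachment points'' precise and then carrying out the enumeration rigorously rather than by picture. Concretely, I expect to need: a clean definition of the restriction of the global heap $\heap_1 \uplus \heap_2$ to $\heap_1$ together with an induction (on the structure of the unfolding tree of the \texttt{tll}-unfolding witnessing the entailment) establishing that $\heap_1$ decomposes as ``a union of complete subtrees plus partial leaf-chains plus at most one stray edge''; and a careful bookkeeping argument that each ``broken'' place --- a missing leaf-successor, a dangling points-to, an extra incoming edge, or a $\stack$-named location --- contributes a distinct shared location, so that three such places already exhaust the budget. The permutation handling is a secondary nuisance: I would dispatch it by noting that the three shared locations are precisely the three parameters of the predicate call in the distinguishing contexts $\CTXT{i}{k}$, so any fragment with exactly three attachment points fixes, up to the choice of $\PERM$, which attachment point is $\PIFV{1}$, $\PIFV{2}$, $\PIFV{3}$, and a direct comparison of the resulting shapes with the SIDs $\ITLLC{i}$ closes the case. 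Once this classification is in hand, combining it with the shared-location lemma is immediate.
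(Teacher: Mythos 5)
Your strategy is sound and lands on the same core argument as the paper --- a case analysis on the graph structure of the unique tight model, concluding in each case either membership in some $\ITLLC{i}$ or at least four shared locations --- but you organize it as a contrapositive: classify all fragments with at most three attachment points and check that they are exhausted by the $8 \cdot 6$ classes. The paper instead works in the forward direction with a different counting device: it records six local observations about \texttt{tll} unfoldings (e.g.\ at most one inner source is not an inner target; every inner target that is not an inner source is a leaf source; each maximal chain of linked leaves contributes two shared locations; each tree root contributes one), and then case-splits on which kinds of points-to assertions occur (only leaf pointers, only inner pointers, or both), using those observations to count ``defects'' and hence lower-bound the number of shared locations. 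The practical difference is that the paper never needs your structural decomposition claim (i)--(iii) --- that a small-boundary fragment is ``a subtree plus at most one extra incoming edge plus at most one dangling pointer'' --- which you correctly identify as the hardest step to make rigorous; the defect-counting observations are more modular and each is immediate from the \texttt{tll} SID. Your route would work but you would in effect have to prove that decomposition, which is at least as much work as the paper's six observations. One small slip at the end: the lemma's conclusion is ``at least four \emph{shared locations},'' not ``at least four free variables''; the passage from shared locations to free variables is the separate preceding lemma and is only invoked in the corollary that follows, so your final step is an unnecessary (though harmless) addition to the proof of this particular statement.
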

\begin{proof}
  Note that \texttt{tll}-unfoldings contain only two types of
  points-to assertions: $\PT{x}{y\,z\,\NIL}$ and
  $\PT{x}{\NIL\,\NIL\,y}$ (for some variables $x,y,z$).
  Let $(\stack,\heap)$ be the unique tight model of some partial model $\rsh$ of $\textnormal{\texttt{tll}}$.
  Due to the one-to-one correspondence between $\rsh$ and $(\stack,\heap)$,
  we will not distinguish between $(\stack,\heap)$ and $\rsh$ in the remainder of the proof.

  We refer to heap entries corresponding to the first kind of points-to assertion, i.e. $\heap(u) = (\_,\_,\NIL)$,
  as $i$-pointers (short for inner pointers).
  Moreover, we refer to heap entries corresponding to the second kind of points-to assertion, i.e. $\heap(u) = (\NIL,\NIL,\_)$,
  as $\ell$-pointers (short for leaf pointers).
  In both cases, given $\heap(u) = (v_1,v_2,v_3)$, we refer to $u$ as $i$-source and $\ell$-source
  and to $v_1,v_2,v_3$ as $i$-target and $\ell$-target, respectively.


  Recall that we restricted our attention to those calls of
  $\PCDS{tll}{\T{x}}$ whose parameters are non-null and pairwise
  definitely unequal, where the latter point is a consequence of the
  assumption of acyclicity.\footnote{Unless we are dealing with the
    base case, the \texttt{tll} that consists of a single pointer,
    where $\IFV{1} = \IFV{2}$ holds.}

  We make several observations regarding the shared locations of
  partial \texttt{tll} models $\rsh$ based on the properties of
  \texttt{tll} unfoldings.
\begin{enumerate}
\item In a \texttt{tll}, there is at most one $i$-source that is no $i$-target (the
  root).
  Thus, if $\rsh$ contains $n$ distinct $i$-sources that are not
  $i$-targets, this requires $n-1$ distinct shared locations.

\item In a \texttt{tll}, every $i$-target that is not an $i$-source
  is an $\ell$-source.
  Thus, if $\rsh$ contains $n$ distinct $i$-targets that are neither
  $i$-sources nor $\ell$-sources, this requires $n$ shared locations.

\item In a \texttt{tll}, if there is at least one $i$-location, then all
$\ell$-locations are $i$-targets.
Thus, in such $\rsh$, all $\ell$-sources that are not $i$-targets
are shared locations.

\item In a \texttt{tll}, there is only one $\ell$-source that is no $\ell$-target (the
  leftmost leaf), and there is only one $\ell$-target that is no $\ell$-location (the
  successor of the rightmost leaf)
Thus, if $\rsh$ contains $n$ distinct $\ell$-sources that are not
  $\ell$-targets, this requires $n-1$ distinct shared locations.
Likewise, if $\rsh$ contains $n$ distinct $\ell$-targets that are not
  $\ell$-sources, this requires $n-1$ distinct shared locations.
(Note that we have to take care not to count these shared locations
twice, because of possible overlap with the previous point.)

\item In a \texttt{tll}, all roots of the trees induced by considering only $i$-sources
  must be referenced by free variables (the locations in the corresponding model belong to $co\DOM(\stack)$).
Thus, $\rsh$ has one shared location per such $i$-tree.

\item In a \texttt{tll}, in a sequence of linked $\ell$-sources, both the first
  $\ell$-source and the last $\ell$-target must be referenced by
  free variables.

Thus, the leftmost $\ell$-source and the successor of the
  rightmost $\ell$-source are shared locations for every maximal
  linked list of $\ell$-sources in $\rsh$.
\end{enumerate}

Based on these observations, we make a case distinction over the
possible structure the unique tight model of a partial model $\rsh$
interpreted as a graph and
show that each graph either corresponds to a $\rsh \in \LTLLC{i}$ for
some $i, \PERM$ or corresponds only to $\rsh$ with at least four
shared locations .
\begin{itemize}
\item $\rsh$ contains only $\ell$-sources.
  \begin{itemize}
  \item If it contains only a single source, then $\rsh \in
    \CALLSEM{P_i}{}$ for $i \in \{ 2,3,4 \}$.
  \item If it consists of two sources, then $\rsh \in
  \CALLSEM{P_7}{}$.
  \item If it contains $n \geq 3$ sources, then $\rsh$ has $n$
    shared locations plus the one for the successor of
    the last leaf,
    i.e, at least four.
  \end{itemize}
\item $\rsh$ contains only $i$-sources
  \begin{itemize}
  \item If it contains only a single source, $\rsh \in
    \CALLSEM{P_1}{}$
  \item If it contains at least two sources, then it contains at
    least three $i$-targets that are neither $i$-sources nor
    $\ell$-sources; in addition, it contains at least one root,
    which also has to be a shared location; yielding a total of at
    least four shared locations
  \end{itemize}
\item If it contains both $i$- and $\ell$-sources, it needs one
  shared location per root and two shared locations per maximal list
  of linked leaves.\footnote{This is only true because we assumed the last node in a TLL must not be equal to null nor create a cycle.}
  \begin{itemize}
  \item If it contains two roots, it thus has at least four
    shared locations
  \item If it contains two unconnected lists of linked
    $\ell$-sources, it has at least five shared locations
  \item If it contains only one list of linked leaves, but one of the
    inner nodes of these leaves is no $i$-target, it has at least four
    shared locations
  \item If it has at most three shared locations, it thus has exactly
    one root, and at most its first and its last $\ell$-source are
    not $i$-targets (but each of them can be). Thus, $\rsh \in
    \CALLSEM{P_i}{}$ for some $5 \leq i \leq 8$. 
  \end{itemize}
\end{itemize}
\qed
\end{proof}

\begin{corollary}
  The set $\bigcup_{1 \leq \leq \NUMTLL, \PERM \in \PERMS}
  \CALLSEM{\ITLLC{i}}{}$ contains all well-determined partial
  \texttt{tll} models that can be expressed with at most three free
  variables.
\end{corollary}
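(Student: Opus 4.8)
The plan is to obtain this corollary as a direct contrapositive of the completeness lemma (Lemma~\ref{lem:four-extensions}), once combined with the (unnumbered) shared-location lemma stated just above it. Recall that that lemma asserts the following: if $(\stack,\heap_1 \HEAPUNION \heap_2)$ is the unique tight model up to isomorphism of an extension $\sh[I/\rsh]$ with $(\stack,\heap_1)$ a model of $\rsh$, and $\heap_1$ has $k$ shared locations with respect to $\heap_2$, then $\NOFV{\rsh} \ge k$. Since we restrict throughout to well-determined symbolic heaps, every such $\rsh$ has a unique tight model up to isomorphism, so ``the'' tight model is well defined; and a witnessing extension context always supplies the separating decomposition $\heap_1 \HEAPUNION \heap_2$ required to speak of shared locations.

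First I would fix a reduced symbolic heap $\rsh$ with $\NOFV{\rsh} \le 3$ that is a partial $\mathtt{tll}$ model, and assume towards a contradiction that $\rsh \in \RSL{}{} \setminus \bigcup_{1 \le i \le \NUMTLL,\ \PERM \in \PERMS} \ITLLC{i}$. Then Lemma~\ref{lem:four-extensions} applies and tells us that the unique tight model of $\rsh$ has at least four shared locations. Plugging this into the shared-location lemma above yields $\NOFV{\rsh} \ge 4$, contradicting $\NOFV{\rsh} \le 3$. Hence every partial $\mathtt{tll}$ model expressible with at most three free variables lies in some $\ITLLC{i}$, i.e.\ in $\bigcup_{1 \le i \le \NUMTLL,\ \PERM \in \PERMS} \CALLSEM{\ITLLC{i}\,\T{x}}{}$, which is exactly the claimed containment. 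I would additionally address the degenerate cases explicitly: an unsatisfiable candidate is excluded because it is not a well-determined partial model, and the base $\mathtt{tll}$ consisting of a single pointer with $\IFV{1} = \IFV{2}$ already lies in $\ITLLC{2}$.

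Since the corollary is this short, the substantive work is entirely inside Lemma~\ref{lem:four-extensions}; the only thing that needs care in the present step is that the hypothesis ``expressible with at most three free variables'' is translated faithfully into ``the tight model has at most three shared locations''. This hinges on the \emph{direction} of the shared-location lemma — more shared locations force more free variables, ultimately through the equality- and reachability-propagation observations Lemma~\ref{obs:zoo:establishment:equality-propagation} and Lemma~\ref{obs:zoo:reachability:propagation} — and on the acyclicity and non-null restrictions on $\mathtt{tll}$ parameters that are already baked into the statement of Lemma~\ref{lem:four-extensions}. No further calculation is required.
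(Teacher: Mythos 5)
Your proposal is correct and matches the paper's intended derivation: the corollary is stated without explicit proof precisely because it follows immediately by combining Lemma~\ref{lem:four-extensions} with the preceding shared-location lemma in contrapositive form, exactly as you describe. Your extra care about the degenerate cases and the direction of the shared-location bound is sound but does not change the argument.
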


\begin{corollary}
  The set
  $\{ \ITLLC{i} \mid 1 \leq i \leq \NUMTLL, \PERM \in \PERMS, \PUREA
  \in \PURES \}$
 of equivalence classes of well-determined partial \texttt{tll} models is complete.
\end{corollary}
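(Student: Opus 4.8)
The plan is to prove that the listed family of sets $\{\ITLLC{i} \mid 1 \le i \le \NUMTLL,\ \PERM \in \PERMS,\ \PUREA \in \PURES\}$ is exactly the collection of those $\EE{\mathtt{tll}}$-equivalence classes that contain a partial model of $\mathtt{tll}$; adding the single further class of reduced symbolic heaps that admit no extension entailing any $\mathtt{tll}$ unfolding then makes the set of classes complete, and in particular finite, which is what feeds Theorem~\ref{thm:eq:myhill}. Concretely I would establish three facts and assemble them: (i) each $\LTLLC{i}$ is contained in a single $\EE{\mathtt{tll}}$-class; (ii) distinct sets $\LTLLC{i}$ (over all choices of index, permutation and pure part) are disjoint; (iii) every well-determined partial model of $\mathtt{tll}$ with at most three free variables lies in one of them.

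For (i), fix a triple $(i,\PERM,\PUREA)$. When $i \le 4$ the set $\LTLLC{i}$ is a singleton up to the pure part, so there is nothing to do; when $i \ge 5$ I would argue by coinduction on the definition of $\ITLLC{i}$, as in the Equivalence paragraph: each rule of $P_i^{\PERM}$ only substitutes already-equivalent reduced heaps into its predicate calls, and for $\rsh \in \LTLLC{i}$ every extension context $\sh$ with $\sh[\PI/\rsh] \ENTAIL{} \mathtt{tll}\,\FV{0}{}$ is forced, up to the size of the surrounding tree-with-linked-leaves, into one of the shapes $\CTXT{i}{k}$, hence treats all members of $\LTLLC{i}$ alike. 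For (ii), if $(i,\PERM,\PUREA)\neq(j,\PERMA,\PUREA')$ I would separate the unfolding sets: for a common permutation the separation is already visible on the minimal unfoldings, which differ in their allocated pointers or in the pure part; for distinct permutations one uses the distinguishing contexts $\CTXT{i}{k}$ together with the fact that $\CALLSEM{\mathtt{tll}\,\PERM(\T{x})}{} \cap \CALLSEM{\mathtt{tll}\,\PERMA(\T{x})}{} = \emptyset$, so an extension witnessing membership in $\LTLLC{i}$ via $\PERM$ rules out membership in $\CALLSEM{P_i^{\PERMA}\T{x}}{}$ for $\PERMA\neq\PERM$.

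For (iii) — the crux — I would invoke Lemma~\ref{lem:four-extensions} together with the auxiliary observation that a reduced symbolic heap whose unique tight model has $\alpha$ shared locations must have at least $\alpha$ free variables; since we work inside $\SHCENTAIL{3}$, every predicate call, hence every unfolding, has at most three parameters, so four shared locations is impossible and the partial model must realize one of the eight graph patterns, i.e.\ lie in some $\LTLLC{i}$. Assembling: each $\LTLLC{i}$ is then a full $\EE{\mathtt{tll}}$-class, their union is precisely the set of partial models of $\mathtt{tll}$ in $\RSL{}{\CENTAIL{3}}$, and the non-extensible reduced heaps form one further class by definition of $\EE{\mathtt{tll}}$, so the family is complete. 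The main obstacle is Lemma~\ref{lem:four-extensions}: its proof requires an exhaustive case analysis over the shapes of the tight model viewed as a graph — separating "inner" pointers $\PT{x}{y\,z\,\NIL}$ from "leaf" pointers $\PT{x}{\NIL\,\NIL\,y}$, counting which locations are necessarily shared (roots of the induced inner subtrees, first source and last target of each maximal linked-leaf list, inner-targets that are neither inner- nor leaf-sources, and so on), and showing that any configuration forcing at most three shared locations is exactly one of $\ITLLC{1},\dots,\ITLLC{8}$ up to reordering of free variables. The permutation bookkeeping in step (ii) — matching reorderings of predicate-call parameters to reorderings of free variables through the $\PI$-lifting — is a secondary point needing care.
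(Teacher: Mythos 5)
Your proposal is correct and follows essentially the same route as the paper: the decomposition into (i) each $\LTLLC{i}$ lying in a single $\EE{\mathtt{tll}}$-class (coinduction on the rules), (ii) pairwise disjointness via the distinguishing contexts $\CTXT{i}{k}$ together with $\CALLSEM{\mathtt{tll}\,\PERM(\T{x})}{} \cap \CALLSEM{\mathtt{tll}\,\PERMA(\T{x})}{} = \emptyset$, and (iii) completeness via Lemma~\ref{lem:four-extensions} and the bound relating shared locations to the number of free variables, is exactly the Partition/Equivalence/Completeness structure the paper uses, and you correctly identify the four-shared-locations case analysis as the crux. No gaps.
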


\subsection*{Heap Automata for Singly-linked List Segments} \label{app:more-data-structures}

Recall the SID for acyclic singly-linked list segments from
Example~\ref{ex:srd}. A heap automaton $\HA{A}$ for \texttt{sll} is
defined in Fig.~\ref{fig:sll-ha}.

\begin{figure}
$Q \DEFEQ \{q_{\qeq}, q_{\diff}, q_{\qrev}, q_{\qfst}, q_{\qsnd},
q_{\bot} \} \qquad F \DEFEQ \{ q_{\qeq}, q_{\diff}, q_{\qfst} \}$

\vspace{2mm}
\textbf{Transitions:}
\begin{align*}
  & (\EMPTYSEQ, \rsh, q_{\qeq})\in \Delta &\text{ iff }& \rsh \models \EMP : \{ \IFV{1} = \IFV{2}) \} \\
 &&& \text{or } \rsh \models \EMP \wedge \NOFV{\sh} = 2\\
  & (\EMPTYSEQ, \rsh, q_{\diff})\in \Delta &\text{ iff }& \rsh
    \models \PCDS{sll}{\IFV{1}\IFV{2}} : \{ \IFV{2} \neq \NIL \} \\
  & (\EMPTYSEQ, \rsh, q_{\qrev})\in \Delta &\text{ iff }& \rsh
    \models \PCDS{sll}{\IFV{2}\IFV{1}} : \{ \IFV{1} \neq \NIL \} \\
  & (\EMPTYSEQ, \rsh, q_{\qfst})\in \Delta &\text{ iff }& \rsh
    \models \PCDS{sll}{\IFV{1}\NIL} \\
  & (\EMPTYSEQ, \rsh, q_{\qsnd})\in \Delta &\text{ iff }& \rsh \models \PCDS{sll}{\IFV{2}\NIL}  \\
  & (\EMPTYSEQ, \rsh, q_{\bot}) \in \Delta &\text{ iff }& \rsh \not\models \PCDS{sll}{\FV{0}{}} \\
  & (\T{q}, \sh, q) \in \Delta & \text{ iff } &
    (\EMPTYSEQ, \sh[P_1/\rho_{\T{q}[1]},\ldots,P_m/\rho_{\T{q}[m]}], q) \in \Delta
\end{align*}

\vspace{3mm}
\textbf{Representations:}

$\begin{array}{lll}
 \rho_{q_{\qeq}} & \DEFEQ & \EMP : \{\IFV{1} = \IFV{2} \}\\
 \rho_{q_{\diff}} & \DEFEQ & \PT{\IFV{1}}{\IFV{2}}: \{ \IFV{2} \neq
                             \NIL, \IFV{2} \neq \IFV{1} \} \\
 \rho_{q_{\qrev}} & \DEFEQ & \PT{\IFV{2}}{\IFV{1}} : \{ \IFV{1} \neq
                             \NIL, \IFV{2} \neq \IFV{1} \} \\
 \rho_{q_{\qfst}} & \DEFEQ & \PT{\IFV{1}}{\NIL} \\
 \rho_{q_{\qsnd}} & \DEFEQ & \PT{\IFV{2}}{\NIL} \\
 \rho_{q_{\bot}} & \DEFEQ & \IFV{1} \neq \IFV{1} \\
\end{array}$

\caption{A heap automaton $\HA{A} = (Q,\CENTAIL{\alpha},\Delta,F)$ with $L(\HA{A}) = \USET{\PCDS{sll}{\T{x}}}{\SRD}{\CENTAIL{\alpha}}$, for acyclic singly-linked list fragments \texttt{sll} as defined in Ex.~\ref{ex:srd}; plus canonical representations $\rho_q$ for each state $q$.}
\label{fig:sll-ha}
\end{figure}

Observe that $\Delta$ is compositional.
Note further that we have
defined $\Delta$ in such way that $(\EMPTYSEQ, \tau, q) \in \Delta$
for $q \in F$ iff $\tau \models \PCDS{sll}{\FV{0}{}}$, i.e.,
$L(\HA{A}) = \USET{\PCDS{sll}{\T{x}}}{\SRD}{\CENTAIL{\alpha}}$.

Figure~\ref{fig:sll-ha} also shows the canonical representations of
each state, i.e., the minimal unfoldings of each state's
formula. These are the symbolic heaps that are substituted into the
predicate calls in symbolic heaps $\sh$ to obtain simple entailment
problems for deciding transitions $(\T{q}, \sh, q)$.


\subsection*{Optimizations when the Left-Hand Sides of Entailments are Restricted} \label{app:ha-size-fixed-srd}

\paragraph*{Restrictions on parameter reordering.}
First of all, recall why the automaton from
Corollary~\ref{cor:tll-automaton} had $8 \cdot 6 + 1$ states:
\begin{itemize}
\item For a fixed parameter ordering, there are $8$ equivalence
  classes of \emph{partial models}
\item For each class, we have to differentiate between all
  $3 \cdot 2 \cdot 1 = 6$ permutations of the three free variables
  $\IFV{1}$, $\IFV{2}$, $\IFV{3}$: While the permutation of free
  variables does not change \emph{that} a symbolic heap can be
  extended to entail a \texttt{tll} unfolding, it changes \emph{how}
  it can be extended, because the extension may need to reorder
  parameters based on the permutation.
\item Hence there are $8 \cdot 6$ $\EE{\texttt{tll}}$-equivalence classes
  (and thus $48$ states in $\HA{A}_{\mathtt{tll}}$) for partial
  models.
\item There is one $\EE{\texttt{tll}}$ equivalence class for symbolic
  heaps that cannot be extended to entail a \texttt{tll} unfolding,
  which corresponds to a sink state.
\end{itemize}
It will, however, often not be necessary to deal with all such
permutations in practice. For example, just by requiring that all
predicates used in the program that define tree-like structures use
their first parameter to identify the root, merely two possible
permutations remain.  We can thus specify a \texttt{tll} automaton
with just $8 \cdot 2 + 1 = 17$ states that guarantees a sound analysis
as long as the predicates used in the program satisfy the
aforementioned assumption.

\paragraph*{Fixing SIDs in advance.}

In practice, the fact that heap automata are independent of specific
SIDs may also lead to a needless blow-up of the state space.
On the one hand, this independence is a useful feature that enables
reusability of automata across analyses. On the other hand, every
program analysis on a fixed program will usually be performed w.r.t.~a
fixed SID $\SRD$.  It may thus make sense to perform SID-specific
reductions of heap automata to boost the performance of the analysis
on specific programs of interest.

Such optimizations are based on the observation that the set of reduced symbolic heaps
that can be derived from a fixed SRD $\SRD$ will often not intersect
every single $\EE{\PS}$ equivalence class.  Formally, given an
equivalence class $\EC{\rsh}{\PS}$, where we write $\EC{\rsh}{\PS}$ to
denote the equivalence class that contains $\rsh$
(cf.~Theorem~\ref{thm:eq:myhill}), it may turn out that
$\EC{\rsh}{\PS} \cap \SL{[\SRD]}{\SRDCLASS} = \emptyset$.

If we know that all formulas that occur on the left-hand side of
entailments only contain predicates from $\SRD$, we can drop all the
states that correspond to equivalence classes $\EC{\rsh}{\PS}$ with
$\EC{\rsh}{\PS} \cap \SL{[\SRD]}{\SRDCLASS} = \emptyset$.
A heap automaton will therefore often be much smaller if we build it
for entailment w.r.t.~a fixed SID $\SRD$ rather than for arbitrary
formulas in $\SHCENTAIL{\alpha}$.

This becomes particularly clear if we want to reimplement decision
procedures for fixed SIDs such as the one by Berdine et
al.~\cite{berdine2004decidable} within our framework.  If, for
example, formulas on the left-hand side of the entailment may only
refer to the \texttt{sll} predicate (as is the case in the work by
Berdine et al.~\cite{berdine2004decidable}), we can simplify the
automaton $\HA{A}$ from Figure~\ref{fig:sll-ha} as follows.  We can
drop the states $q_{\qrev}$ and $q_{\qsnd}$, because we know in
advance that there is no predicate in our SID that defines
singly-linked lists in reverse order.
Even in this simple case, we can thus reduce the size of the state
space by a third. In more complex cases, even larger reductions are
possible.


%
\end{document}